\newtheorem{theorem}{Theorem}
\newtheorem{corollary}[theorem]{Corollary}
\newtheorem{lemma}[theorem]{Lemma}
\newtheorem{proposition}[theorem]{Proposition}
\newtheorem{definition}[theorem]{Definition}
\newtheorem{remark}[theorem]{Remark}
\newtheorem{assumption}{Assumption}
\newtheorem{example}{Example}
\newtheorem{problem}{Problem}
\numberwithin{equation}{section}
\numberwithin{theorem}{section}
\newcommand{\ol}[1]{\overline{#1}}                                            % Overline
\newcommand{\ul}[1]{\underline{#1}}                                           % Underline
\newcommand{\mf}[1]{\mathfrak{#1}}                                            % Mathfrak
\newcommand{\mc}[1]{\mathcal{#1}}                                             % Mathcal
\newcommand{\ms}[1]{\mathsf{#1}}                                              % Math Sans serif
\newcommand{\mi}[1]{\mathscr{#1}}                                             % Mathscr
\newcommand{\N}{\mathbb{N}}                                                   % Natural numbers
\newcommand{\R}{\mathbb{R}}                                                   % Real numbers
\newcommand{\Lie}{\mc{L}}
\newcommand{\paren}[1]{\left(#1\right)}                                       % Resized parentheses
\newcommand{\brak}[1]{\left[#1\right]}                                        % Resized brackets
\newcommand{\brac}[1]{\left\{#1\right\}}                                      % Resized braces
\newcounter{proofpart}[section]
\newcommand{\proofpart}[1]{%
  \par
  \addvspace{\medskipamount}%
  \stepcounter{proofpart}%
  \noindent\textbf{\theproofpart. #1}\par\nobreak\smallskip
  \@afterheading
}
\title{Near-Boundary Asymptotics and Unique Continuation for the AdS--Einstein--Maxwell System}
\author{Simon Guisset}
\date{}
\begin{document}

\maketitle

\begin{abstract}
In this article, we extend the results of both Shao \cite{shao:aads_fg} and Holzegel--Shao \cite{Holzegel22} to the AdS--Einstein--Maxwell system \((\mathcal{M}, g, F)\). In particular:  

\begin{enumerate}
    \item We study the asymptotics of the metric \(g\) and the Maxwell field \(F\) near the conformal boundary \(\mathcal{I}\) for the fully nonlinear coupled system, in a finite regularity setting. Furthermore, we characterise the holographic (boundary) data, which is used in the second part of this work.  
    \item We prove the local unique continuation property for solutions of the coupled Einstein equations from the conformal boundary. Specifically, the prescription of the coefficients \((\mathfrak{g}^{(0)}, \mathfrak{g}^{(n)})\) in the near-boundary expansion of \(g\), along with the boundary data for the Maxwell fields \((\mathfrak{f}^{0}, \mathfrak{f}^{1})\), on a domain \(\mathcal{D} \subset \mathcal{I}\) uniquely determines \((g, F)\) near \(\mathcal{D}\). Moreover, the geometric conditions required for unique continuation are identical to those in the vacuum case, regardless of the presence of the Maxwell fields.  
\end{enumerate}

This work is part of the author’s thesis \cite{thesis_Guisset}.  
\end{abstract}

\tableofcontents
\section{Introduction}\label{sec:intro}

Asymptotically Anti-de Sitter (aAdS) spacetimes \((\mathcal{M}, g)\) are solutions to the Einstein equations with a \emph{negative} cosmological constant \(\Lambda\):  
\begin{equation}\label{Einstein_intro}
    \operatorname{Rc}[g] - \frac{1}{2}\operatorname{Rs}[g] + \Lambda \cdot g = T, \quad \Lambda < 0,
\end{equation}  
where \(\operatorname{Rc}[g]\) and \(\operatorname{Rs}[g]\) denote the Ricci tensor and scalar curvature, respectively, and \(T\) represents the stress-energy tensor of a given collection of matter fields. The maximally symmetric solution of the vacuum case $(T=0)$ is given by the \emph{pure AdS} solution $(\mc{M}_{AdS}, g_{AdS})$, which can be represented as the following $(n+1)$--dimensional Lorentzian manifold: 
\begin{equation}\label{intro_pure_AdS}
    \mc{M}_{AdS} := \R_t \times [0,\infty)_r\times \mathbb{S}^{n-1}_\omega\text{,}\qquad g_{AdS} = -(1+r^2)dt^2 + \frac{1}{1+r^2}dr^2 +r^2 \slashed{g}(\omega)\text{,}
\end{equation}
where $\slashed{g}$ is the metric on the $(n-1)$--sphere. A \emph{conformal boundary} $\mc{I}$ can be attached to $(\mc{M}_{AdS}, g_{AdS})$ at $r=\infty$, which is of timelike nature. For general solutions to \eqref{Einstein_intro} which asymptotically behave like \eqref{intro_pure_AdS}, the existence of such a conformal timelike boundary is in fact a generic feature of such geometries \cite{Hawking:1973uf}. 

What sets aAdS spacetimes apart from their \(\Lambda \geq 0\) counterparts is precisely the presence of this timelike conformal boundary, which dramatically undermines global hyperbolicity. As such, one is forced to impose boundary conditions on \(\mathcal{I}\) to restore well-posedness for the standard Cauchy problem \cite{Warnick13, Holzegel12}.  

This unique feature of aAdS spacetimes gives rise to rich physical phenomena, including stable trapping \cite{Holzegel:2015swa} and superradiant instabilities \cite{Cardoso_2014, ganchev2016superradiantinstabilityads}, which depend heavily on the boundary conditions prescribed at \(\mathcal{I}\). Among these, the most natural boundary conditions are of reflective type, but their imposition typically leads to \emph{instabilities}. Examples include the instability of AdS--Einstein--null dust spacetimes \cite{moschidis2018}, AdS--Einstein--Vlasov spacetimes \cite{Moschidis_2020}, and the slow \((\log t)^{-1}\) decay-in-time of linear waves and the linearisation of \eqref{Einstein_intro} in Kerr--AdS spacetimes \cite{holzegel2013decaypropertieskleingordonfields, graf2024linearstabilityschwarzschildantidesitterI, graf2024linearstabilityschwarzschildantidesitterII, graf2024linearstabilityschwarzschildantidesitterIII}. It is also worth mentioning that if one prescribes \emph{optimally dissipative} boundary conditions on $\mc{I}$, one typically obtains superpolynomial decay-in-time for linear fields \cite{Holzegel:2015swa}. 

Another very interesting feature of aAdS solutions is so-called \emph{AdS/CFT correspondence}, which posits a one-to-one correspondence between (quantum) gravity on aAdS solutions in $n+1$ dimensions and (conformal) field theories, which do not contain gravity, in $n$ dimensions located on $\mc{I}$ \cite{Maldacena_1999, Brown:1986nw, witten1998anti}. This correspondence has, since its discovery, lead to many applications, from quantum gravity \cite{Polchinski:2010hw}, to condensed matter physics \cite{Sachdev_2011} and hydrodynamics \cite{Policastro_2002}. A mathematical formulation of such a correspondence, however, is far from being established, although much progress has been made in the previous years, as will be presented in the next lines. Let us first briefly justify how this correspondence, from a mathematical perspective, may lead to several issues. 

A poor man's linearisation of \eqref{Einstein_intro} is given by the scalar wave equation on a fixed manifold $(\mc{M}, g)$ solution to \eqref{Einstein_intro}: 
\begin{equation}
    \Box_g \phi + \alpha \phi = 0\text{,}\qquad \alpha \in \R\text{.}\label{wave_intro}
\end{equation}
For a one-to-one correspondence between the solution $\phi$ and its boundary data, it would be enough to have well-posedness of the Cauchy problem for \eqref{wave_intro} with data on the boundary: 
\begin{equation*}
    \mc{D}\phi\vert_\mc{I} = \phi^{0}\text{,}\qquad \mc{N}\phi\vert_{\mc{I}} = \phi^1\text{,}
\end{equation*}
with $\mc{D}\phi\vert_{\mc{I}}, \mc{N}\phi\vert_{\mc{I}}$ the Dirichlet and Neumann traces on $\mc{I}$. Due to the timelike nature of $\mc{I}$, such a problem is in fact \emph{ill-posed} \cite{GUISSET2024223}, and the question of a correspondence between $(\phi^0, \phi^1)$ and $\phi$ in a neighbourhood of $\mc{I}$ becomes far less trivial. 

In view of the impossibility of solving from $\mc{I}$, a more natural way of phrasing the problem would be in terms of a \emph{unique continuation} problem. During the last twenty years various authors have worked towards establishing unique continuation on various contexts related to the present one. In particular, in 2008, Biquard first showed that unique continuation holds in the Riemannian context, i.e. for asymptotically hyperbolic Einstein manifolds \cite{biq:uc_einstein}. This result was later extended by Chrusciel and Delay \cite{chru_delay:uc_killing}, showing that unique continuation holds for stationary vacuum spacetimes. In both cases, the equations are elliptic and one can make use of appropriate Carleman estimates derived by Mazzeo \cite{Mazzeo}, after writing the Einstein equations in the right gauge. The works of Holzegel-Shao in \cite{hol_shao:uc_ads, hol_shao:uc_ads_ns} mark the first unique continuation results for aAdS metric in a Lorentzian setting. Typically, they showed that unique continuation holds for (tensorial) wave equations assuming that the non-stationarity of $(\mc{M}, g)$ was not ``too strong", and a null--pseudoconvexity condition, allowing one to derive useful Carleman estimates. This was later extended by McGill-Shao in \cite{McGill20}, assuming more general boundary geometries, and relating the null-pseudoconvexity conditions to the inexistence of near--boundary null geodesics which remain arbitrarily close to $\mc{I}$. One could indeed use these null geodesics to construct counterexamples \emph{à la} Alinhac-Baouendi \cite{AB}, as has been done in \cite{GUISSET2024223} by the author and Shao. The final piece needed to obtain a full nonlinear unique continuation result for \eqref{Einstein_intro} was the result of Chatzikaleas--Shao \cite{Shao22}, generalising the null-pseudoconvexity condition in a gauge invariant way, as well as establishing appropriate Carleman estimates for the wave equations. All these works culminated in the work of Holzegel--Shao \cite{Holzegel22}, proving the following informal result: 

\begin{theorem}[Informal version of Theorem 6.7 of \cite{Holzegel22}]\label{informal_thm_vacuum}
    Let $(\mc{M}, g)$ and $({\mc{M}}, \check{g})$ be two near--boundary regions of aAdS solutions, with $g,\, \check{g}$ solving \eqref{Einstein_intro} with $T=0$, and let $\mi{D}\subset \mc{I}$. Assuming that $\mi{D}$ satisfies a certain null-pseudoconvexity condition, and if the boundary data for $(\mc{M}, g)$ and $({\mc{M}}, \check{g})$ on $\mc{I}$ are ``gauge--equivalent", one has necessarily, in a neighbourhood of $\mi{D}$ in $\mc{M}$: 
    \begin{equation*}
        g \equiv \check{g}\text{,}
    \end{equation*}
    up to a diffeomorphism preserving $\mc{I}$. 
\end{theorem}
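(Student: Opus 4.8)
The plan is to reduce the statement to a \emph{unique continuation problem for a coupled system of tensorial wave equations} on a fixed near-boundary aAdS geometry, and then to feed it into the Carleman estimates of Chatzikaleas--Shao \cite{Shao22} (see also \cite{McGill20, hol_shao:uc_ads}). First I would fix a boundary-adapted gauge: using the Fefferman--Graham normal form recalled above (cf.\ \cite{shao:aads_fg}), after a diffeomorphism preserving $\mc{I}$ both metrics can be written as
\begin{equation*}
    g = \rho^{-2}\paren{d\rho^2 + \gm_\rho}\text{,}\qquad \check g = \rho^{-2}\paren{d\rho^2 + \check{\gm}_\rho}\text{,}
\end{equation*}
with respect to a \emph{common} boundary defining function $\rho$, and --- using the gauge-equivalence of the boundary data --- with $\gm_0 = \check{\gm}_0 = \gm^{(0)}$ along $\mi{D}$. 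In this gauge the vacuum equations \eqref{Einstein_intro} (with $T=0$), \emph{together with the Fefferman--Graham gauge conditions}, reduce to the Fefferman--Graham evolution equation
\begin{equation*}
    \rho\,\partial_\rho^2\gm_\rho - (n-1)\,\partial_\rho\gm_\rho - \rho\,\operatorname{Rc}[\gm_\rho] + (\text{lower order in }\rho\,\partial) = 0
\end{equation*}
(together with the Hamiltonian and momentum constraints), whose principal part --- because $\gm_0$ is Lorentzian --- is of precisely the degenerate wave type for which \cite{Shao22} supplies Carleman estimates near $\mc{I}$.

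Next I would derive the equation satisfied by the difference $\wv = \gm_\rho - \check{\gm}_\rho$. Subtracting the two reduced systems and linearising the curvature terms shows that $\wv$ solves a \emph{homogeneous} coupled system of the schematic form $\Boxm\wv = \mc{A}\cdot(\wv,\,\rho\,\nablam\wv)$ with coefficients $\mc{A}$ controlled by $g$ and $\check g$. The delicate structural point is that this derivation uses that both metrics sit in the \emph{same} Fefferman--Graham gauge; since that gauge is itself dynamical, one must simultaneously track the vector field $V$ encoding the discrepancy between the two Fefferman--Graham coordinate systems --- which satisfies its own wave equation sourced by $\wv$ --- and carry the pair $(\wv, V)$ through the argument as a single coupled unique-continuation problem.

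Then, invoking the near-boundary asymptotics established in the first part of this work, I would show that the prescription of $(\gm^{(0)}, \gm^{(n)})$ on $\mi{D}$, together with the gauge-equivalence, forces $\wv$ (and likewise $V$) to vanish along $\mi{D}$ to the order demanded by the Carleman weight --- concretely $\wv = o(\rho^n)$ in the relevant weighted sense --- since the Fefferman--Graham coefficients $\gm^{(2)},\dots,\gm^{(n-1)}$ (and, for even $n$, the accompanying logarithmic terms) are local functionals of $\gm^{(0)}$. With this decay in hand, applying the Carleman estimate of \cite{Shao22} for $\Boxm$ --- valid under the null-pseudoconvexity condition assumed on $\mi{D}$ --- to the system for $(\wv, V)$ yields $\wv \equiv 0$ and $V \equiv 0$ in a neighbourhood of $\mi{D}$ in $\mc{M}$. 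Translating back through the gauge, this is precisely $g \equiv \check g$ up to a diffeomorphism preserving $\mc{I}$, reproducing Theorem 6.7 of \cite{Holzegel22}.

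The step I expect to be the main obstacle is the treatment of the gauge. Because \eqref{Einstein_intro} is diffeomorphism invariant, unique continuation cannot be run on $g - \check g$ directly, and the Fefferman--Graham gauge used to make the system hyperbolic is not a passive background choice: the gauge vector field $V$ must be propagated alongside $\wv$, and one has to check both that the enlarged system still meets the hypotheses of the Carleman estimate --- in particular that the null-pseudoconvexity of $\mi{D}$ survives the reduction --- and that the order of vanishing of $V$ extracted from the asymptotic expansion is high enough to close the estimate. This is exactly the point resolved in the vacuum analysis of \cite{Holzegel22}, and it provides the template on which the Einstein--Maxwell extension developed in the remainder of this paper is built.
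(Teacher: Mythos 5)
There is a genuine gap, and it sits exactly where the actual argument of \cite{Holzegel22} (recalled and extended in Section \ref{chap:UC} of this paper) does its real work. Your plan treats the Fefferman--Graham evolution equation for the metric itself as the carrier of the wave structure, claiming its principal part is of the degenerate wave type covered by the Carleman estimates of \cite{Shao22}. It is not: in the FG gauge the Einstein equations give a \emph{first-order transport} equation in $\rho$ for $\ms{m}=\Lie_\rho\ms{g}$ (Proposition \ref{prop_transport_m}), and the spatial second derivatives enter only through $\ms{Rc}[\ms{g}]$, whose principal symbol is degenerate in the vertical directions because the FG gauge fixes nothing tangential to the level sets of $\rho$. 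Consequently no Carleman estimate applies directly to $\delta\ms{g}$, and if you hyperbolise by imposing an extra vertical gauge you import precisely the gauge vector field $V$ you mention --- whose required vanishing to order $\rho^n$ at $\mc{I}$ cannot be read off the FG expansion and is essentially as hard as the original problem. The actual proof avoids $V$ altogether: after reducing the gauge freedom once and for all at the level of the holographic data (construction of a common FG radius from the boundary conformal factor, Proposition \ref{prop_gauge_equiv}), both solutions live in the same gauge on the same manifold, and the hyperbolic part of the system is carried not by the metric but by the Weyl curvature, which satisfies genuine tensorial wave equations via the Bianchi identities (Propositions \ref{wave_weyl}, \ref{prop_wave_w}).

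The second missing ingredient is the renormalisation that makes the difference system closed for the Carleman estimates. Even with the wave equations on the Weyl components, comparing the two wave operators produces the term $(\ol{\Box}-\check{\ol{\Box}})\check{\ms{A}}$, which contains $\ms{D}^2\delta\ms{g}$ and cannot be absorbed by either the wave or the transport Carleman estimate. The resolution in \cite{Holzegel22}, reproduced here in Definition \ref{def_renormalisation} and Proposition \ref{prop_wave_difference}, is to replace $\delta\ms{A}$ by the renormalised difference $\Delta\ms{A}\sim\delta\ms{A}+\ms{A}\cdot(\delta\ms{g}+\ms{Q})$, with $\ms{Q}$ an auxiliary antisymmetric field solving a transport equation and $\ms{B}\sim\operatorname{curl}\delta\ms{g}$ a further auxiliary field whose derivative controls the dangerous second derivatives; the $\ms{D}^2(\delta\ms{g}+\ms{Q})$ contributions then cancel identically in the wave equations for $\Delta\ms{A}$. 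Your proposal contains no mechanism playing this role, so the coupled system you would feed into \cite{Shao22} is not actually estimable. The parts of your sketch that do match the paper --- reduction to a common FG gauge using gauge-equivalence of the data, the $o(\rho^n)$ vanishing of the difference from the FG expansion, and the final application of the Carleman estimates on a pseudoconvex domain --- are correct in outline, but without the wave--transport splitting (Weyl/curl fields versus $\ms{g},\ms{m}$) and the $\ms{Q},\ms{B}$ renormalisation the argument does not close.
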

Such a result also allowed the authors to show that the symmetries on the boundary can be extended on a neighbourhood of $\mi{D}$. 

The goal of this present paper is to extend Theorem \ref{informal_thm_vacuum} to the AdS--Einstein--Maxwell system, i.e. where one couples \eqref{Einstein_intro} to the following matter field: 
\begin{gather}\label{intro_EM}
    T_{\alpha\beta} = F_{\alpha\gamma}F_\beta{}^\gamma - \frac{1}{4}g_{\alpha\beta}\abs{F}^2_g\text{,}\\
    d F = 0\text{,}\qquad d\star_g F = 0\text{,}\notag
\end{gather}
where $F$ is a two-form on $\mc{M}$, $\abs{F}^2_g := \paren{g^{-1}}^{\alpha\beta}\paren{g^{-1}}^{\gamma\delta}F_{\alpha\gamma}F^{\beta\delta}$, and $\star_g$ the Hodge--star operator with respect to $g$. In particular, this marks the first generalisation of \cite{Holzegel22} and \cite{shao:aads_fg} to nontrivial matter fields. 

In the rest of the introduction, we will present the two main results of this work, namely: 
\begin{itemize}
    \item The derivation of an appropriate near-boundary expansion for the fields of interest $(g, F)$ in terms of their boundary data, 
    \item The proof of the unique continuation for the system $(\mc{M}, g, F)$ from the conformal boundary in the full nonlinear setting. Such a proof will require the results of the first part.
\end{itemize}

\subsection{Fefferman-Graham expansion}

In this article, since we will only be interested in the asymptotic region near $\mc{I}$, we will work on an $n+1$ dimensional Lorentzian manifold $(\mc{M}, g)$ of the form:
\begin{definition}[FG-aAdS segment, informal]\label{FG_aAdS_informal_intro}
Let $(\mc{I}, \mf{g}^{(0)})$ be an $n$--dimensional Lorentzian manifold, and $\rho_0>0$. We say that the manifold $(\mc{M}, g)$ is an FG-aAdS segment if it takes the form:
\begin{equation}\label{FG_aAdS_informal}
    \mc{M} := \left( 0, \rho_0\right]_\rho \times \mc{I}\text{,}\qquad g = \rho^{-2}\paren{d\rho^2 + \ms{g}(\rho)}\text{,}\qquad \rho \in (0, \rho_0]\text{,}
\end{equation}
where $\ms{g}(\rho)$ is a smooth one-parameter family of Lorentzian metrics on $\mc{I}$ such that $\lim\limits_{\rho\searrow 0}\ms{g}(\rho)=\mf{g}^{(0)}$, in an appropriate sense.

\begin{itemize}
    \item The manifold $(\mc{I}, \mf{g}^{(0)})$ is defined as the \emph{conformal boundary}.
    \item If $(\mc{M}, g)$ solves the vacuum Einstein equations, we refer to it as a \emph{vacuum FG-aAdS segment}.
    \item $(\mc{M}, g, F)$ will be defined as a Maxwell-FG-aAdS segment if the triplet solve \eqref{Einstein_intro} coupled to \eqref{intro_EM}, and with $F$ satisfying the following boundary limits, in $C^0$: 
    \begin{equation*}
    \begin{cases}
        F_{ab} \rightarrow \mf{f}^{1,{(0)}}_{ab}\text{,}\qquad &n\geq 4\\
        F_{ab} \rightarrow \mf{f}^{1,(0)}_{ab}\text{,}\; F_{\rho a}\rightarrow\mf{f}^{0,(0)}_a\text{,}\qquad &n=3\text{,}\\
        \rho\cdot F_{\rho a}\rightarrow \mathfrak{f}_a^{0, (0)}\text{,}\qquad &n=2\text{,}
    \end{cases}
    \end{equation*}
    as $\rho\searrow 0$, and where $\mf{f}^{0,(0)}$ and $\mf{f}^{1,(0)}$ are, respectively, a one-form and a two-form on $\mc{I}$.
\end{itemize}
\end{definition}

\begin{example}
    A basic example is given by the pure $AdS$ solution \eqref{intro_pure_AdS}. Note that the following coordinate change $4r =: \rho^{-1}(2+\rho)(2-\rho)$ yields the following metric: 
    \begin{equation}\label{pure_ads_transformed}
        g_{AdS} = \rho^{-2}\paren{d\rho^2 + \ms{g}(\rho)}\text{,}\qquad \rho \in (0, 2]\text{,}
    \end{equation}
    where: 
    \begin{equation*}
        \ms{g} =  (-dt^2 +\slashed{g}(\omega)) -\frac{1}{2}\rho^2 (dt^2 + \slashed{g}(\omega)) + \frac{1}{16}\rho^4(-dt^2 + \slashed{g}(\omega))\text{.}
    \end{equation*}
    Note that $\rho\searrow 0$ corresponds to $r\nearrow \infty$, while $\rho\nearrow 2$ corresponds to the centre of the manifold $r\searrow 0$.
\end{example}

Observe that the metric \eqref{pure_ads_transformed} can be written as the following: 
\begin{equation*}
    g = \rho^{-2}\paren{d\rho^2 + \mf{g}^{(0)} + \rho^2 \mf{g}^{(2)} + \rho^4\mf{g}^{(4)}}\text{,}
\end{equation*}
with
\begin{gather*}
    \mf{g}^{(0)} = -dt^2 +\slashed{g}(\omega)\text{,}\qquad \mf{g}^{(2)}:=-\frac{1}{2} (dt^2 + \slashed{g}(\omega))\text{,}\qquad \mf{g}^{(4)} := \frac{1}{16}(-dt^2 + \slashed{g}(\omega))\text{.}
\end{gather*}
Such a near-boundary expansion in $\rho^2$ of the metric is in fact a general property of vacuum asymptotically Anti-de Sitter spacetimes. This feature has first been discovered by Fefferman and Graham \cite{fef_gra:conf_inv, fefferman2008ambientmetric}, who constructed vacuum solutions from conformal data on a given null hypersurface. It turns out that the Einstein equations, as well as the data being ``analytic" enough allows one to construct a formal series expansion, which converges in an appropriate sense to a genuine vacuum solution \cite{KICHENASSAMY2004268}. 

The same story unfolds in aAdS, for which one can construct a formal series expansion for a metric \eqref{FG_aAdS_informal} solving the vacuum Einstein equations of the form: 
\begin{equation}\label{informal_FG_expansion}
    \ms{g}(\rho) = 
    \begin{cases}
        \sum\limits_{k=0}^{\frac{n-1}{2}}\rho^{2k}\mf{g}^{(2k)} + \rho^n \mf{g}^{(n)} + \dots\text{,}\qquad &n\text{ odd,}\\
        \sum\limits_{k=0}^{\frac{n-2}{2}}\rho^{2k}\mf{g}^{(2k)} +\rho^n \log\rho \cdot \mf{g}^\star + \rho^n \mf{g}^{(n)} + \dots \text{,}\qquad &n\text{ even,}
    \end{cases}
\end{equation}
where:
\begin{itemize}
    \item The coefficients $\mf{g}^{(0)}$ and $\mf{g}^{(n)}$ are the ``free-data", which are not constrained by the vaccum Einstein equations\footnote{In fact, the $\mf{g}^{(0)}$--divergence and the $\mf{g}^{(0)}$--trace of $\mf{g}^{(n)}$ are constrained.}.
    \item The coefficients $\mf{g}^{(2k)}$, with $2\leq 2k<n$, and $\mf{g}^\star$ are fully constrained by $\mf{g}^{(0)}$.
    \item The coefficients arising after $\mf{g}^{(n)}$ are fully constrained by $\mf{g}^{(0)}$ and $\mf{g}^{(n)}$.
\end{itemize}
Once again, assuming enough ``analyticity", this formal expansion converges to a vacuum Einstein metric \cite{fefferman2008ambientmetric, fef_gra:conf_inv}. 

Observe that the first coefficient $\mf{g}^{(0)}$, corresponding to the Dirichlet data, is simply the boundary metric. On the other hand, one interprets the divergence-- and trace--free parts of $\mf{g}^{(n)}$ as the expectation value of the CFT stress-energy tensor \cite{Hubeny_2015, SKENDERIS_2001}, and corresponds to the Neumann boundary data. 

Let us return to the unique continuation problem, and assume that two FG-aAdS segments $(\mc{M}, g), (\mc{M}, \check{g})$ solving the vacuum Einstein equations turn out to have identical boundary data\footnote{As mentioned earlier, one will have to make sense of ``identical" and ``different" given the gauge freedom.}: 
\begin{equation*}
    \mf{g}^{(0)} = \check{\mf{g}}^{(0)}\text{,}\qquad \mf{g}^{(n)} = \check{\mf{g}}^{(n)}\text{.}
\end{equation*}
Given an expansion of the form \eqref{informal_FG_expansion} up to $\rho^n$, corresponding to the Neumann branch, one would immediately have: 
\begin{equation*}
    \ms{g} - \check{\ms{g}} = o(\rho^n)\text{,}
\end{equation*}
near the boundary, since all the coefficients would depend on $\mf{g}^{(0)}$ and $\mf{g}^{(n)}$ only. These asymptotics would certainly turn out to be extremely useful for the unique continuation problem. In the analytic case \footnote{Of course, the analyticity assumption has to take into account the polyhomogeneity of the metric for $n$ even.}, the problem is trivial since the coefficients for the Taylor expansion only depend on $\mf{g}^{(0)}$ and $\mf{g}^{(n)}$. Such an assumption on the metric, however, is a far too strong, since solutions to \eqref{Einstein_intro} arising from generic initial data are not generally analytic. 

In the smooth class of metric, however, one would still like to obtain a finite expansion of the form \eqref{informal_FG_expansion}. Such a finite expansion in this regularity class of vacuum metrics has been obtained in \cite{shao:aads_fg}, where the formal expansion takes a similar form to \eqref{informal_FG_expansion}: 
\begin{equation}\label{informal_FG_expansion_2}
    \ms{g}(\rho) = 
    \begin{cases}
        \sum\limits_{k=0}^{\frac{n-1}{2}}\rho^{2k}\mf{g}^{(2k)} + \rho^n \mf{g}^{(n)} + \rho^n \ms{r}_{\ms{g}}\text{,}\qquad &n\text{ odd,}\\
        \sum\limits_{k=0}^{\frac{n-2}{2}}\rho^{2k}\mf{g}^{(2k)} + +\rho^n \log\rho \cdot \mf{g}^\star + \rho^n \mf{g}^{(n)} + \rho^n \ms{r}_{\ms{g}} \text{,}\qquad &n\text{ even,}
    \end{cases}
\end{equation}
where, as before: 
\begin{itemize}
    \item The coefficients $\mf{g}^{(2k)}$, with $2\leq 2k<n$, and $\mf{g}^\star$ depend on $\mf{g}^{(0)}$ and its derivatives. 
    \item The coefficients $\mf{g}^{(0)}$, as well as the divergence-- and trace--free parts of $\mf{g}^{(n)}$ are ``free".
\end{itemize}
Furthermore, the remainder $\ms{r}_{\ms{g}}(\rho)$ is a continuous one-parameter family of tensor fields on $\mc{I}$ such that $\ms{r}_{\ms{g}}\rightarrow 0$, as $\rho\searrow 0$, appropriately.

To prove such a result, the author assumed finite regularity assumptions on the metric, namely
\begin{assumption}\label{assumption_g} The metric $\ms{g}$ satsifies the following regularity assumptions: 
    \begin{itemize}
    \item $\ms{g}$ locally bounded in $C^{n+4}$ in the directions given by $\mc{I}$,
    \item $\Lie_\rho \ms{g}$ satisfies a weak local bound in $C^0$\text{.}
\end{itemize}
\end{assumption}

In Section \ref{sec:FG_expansion}, we will prove that a Maxwell-FG-aAdS segment $(\mc{M}, g, F)$ satisfies an identical expansion in $\rho^2$. Namely, we will show:

\begin{theorem}[Theorem \ref{theorem_main_fg}, informal]\label{thm_main_fg_ads_informal}
    Let $(\mc{M}, g, F)$ be a Maxwell-FG-aAdS segment and define: 
    \begin{gather*}
        \ol{\ms{f}}^0_a := \begin{cases}
            \rho^{-1}F_{\rho a}\text{,}\qquad &n \geq 4\text{,}\\
            F_{\rho a}\text{,}\qquad &n = 3\text{,}\\
            \rho F_{\rho a}\text{,}\qquad &n =2\text{,}
        \end{cases}\qquad \ol{\ms{f}}^1_{ab} := 
            F_{ab}\text{.}
    \end{gather*}
    Assuming the same local boundedness for $\ms{g}$ as in Assumption \ref{assumption_g}, with $\ol{\ms{f}}^0$ and $\ol{\ms{f}}^1$ satisfying similar regularity assumptions, then $\ms{g}$ satisfies an expansion of the form \eqref{informal_FG_expansion_2} where\footnote{For $n=2$, however, we will show that the anomalous coefficient $\mf{g}^\star$ depends on $\mf{f}^{0,(0)}$.}: 
\begin{itemize}
    \item The coefficients $\mf{g}^{(k)}$, $2\leq 2k<n$ depend on the first $k$--derivatives of $\mf{g}^{(0)}$ and on the first $(k-4)$--derivatives of $\mf{f}^{1,(0)}$.
    \item The coefficient $\mf{g}^{(0)}$, as well as the divergence-- and trace--free parts of $\mf{g}^{(n)}$ are not constrained by the equations of motion.
\end{itemize}
Furthermore, the tensor fields $\ol{\ms{f}}^0$ and $\ol{\ms{f}}^1$ also satisfy a near-boundary expansion, with free coefficients $\mf{f}^{0, ((n-4)_+)}$\footnote{Here, $\mf{f}^{0, ((n-4)_+)}$ is the coefficient located at the power $(n-4)_+$ for $\rho$ in the near-boundary expansion of $\ol{\ms{f}}^0$. Observe that for $2\leq n\leq 4$, it simply corresponds to $\mf{f}^{0,(0)}$, already introduced in Definition \ref{FG_aAdS_informal_intro}.} and $\mf{f}^{1,(0)}$, respectively.
\end{theorem}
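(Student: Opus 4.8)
The plan is to adapt the finite-regularity Fefferman--Graham argument of Shao \cite{shao:aads_fg} to the coupled system, the one genuinely new ingredient being a careful accounting of the $\rho$-weights carried by the Maxwell stress-energy $T$ of \eqref{intro_EM} and by Maxwell's equations once everything is expressed through the rescaled unknowns $\ms{g}$, $\ol{\ms{f}}^0$, $\ol{\ms{f}}^1$. First I would decompose the equations of motion in the FG gauge \eqref{FG_aAdS_informal}. Writing $g = \rho^{-2}(d\rho^2 + \ms{g})$ and $\hat T := T - \tfrac{1}{n-1}(\operatorname{tr}_g T)\,g$, the $ab$-part of $\operatorname{Rc}[g] + n g = \hat T$ becomes a second-order Fuchsian system for $\ms{g}$, schematically
\[\rho\,\Lie_\rho^2\ms{g}_{ab} - (n-1)\,\Lie_\rho\ms{g}_{ab} - (\operatorname{tr}_{\ms{g}}\Lie_\rho\ms{g})\,\ms{g}_{ab} - 2\rho\operatorname{Rc}[\ms{g}]_{ab} + \rho\,\mc{Q}[\ms{g},\Lie_\rho\ms{g}]_{ab} = -2\rho\,\hat T_{ab},\]
with indicial exponents $0$ and $n$; the $\rho a$ and $\rho\rho$ components are the momentum and Hamiltonian constraints, propagated by the above together with the Bianchi identity. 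Similarly $dF = 0$ yields the relation $\Lie_\rho F_{ab} = 2\partial_{[a}F_{|\rho| b]}$ plus slicewise closedness of $F_{ab}$, while $d\star_g F = 0$ gives a first-order Fuchsian ODE $\Lie_\rho F_{\rho b} + \tfrac{3-n}{\rho}F_{\rho b} = -\ms{g}^{ac}\nabla^{\ms{g}}_a F_{cb} + \dots$ for $F_{\rho a}$, with indicial exponent $n-3$, together with the slice-constraint $\operatorname{div}_{\ms{g}}F_{\rho\,\cdot} = 0$. One also records that the a priori bounds on $\ms{g}$, $\ol{\ms{f}}^0$, $\ol{\ms{f}}^1$ (Assumption \ref{assumption_g} and its stated Maxwell analogue), together with these equations, bootstrap enough $\Lie_\rho$-regularity to make sense of the Fuchsian systems.

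Second, I would establish the crucial weights. Using $g^{ab} = \rho^2\ms{g}^{ab}$, $g^{\rho\rho} = \rho^2$ and the definitions of $\ol{\ms{f}}^0$, $\ol{\ms{f}}^1$, one computes $\abs{F}^2_g = O(\rho^4)$ and $\operatorname{tr}_g T = c_n\abs{F}^2_g = O(\rho^4)$ for a dimensional constant $c_n$ that vanishes precisely when $n=3$; hence for $n \geq 3$ one gets $\hat T_{ab} = O(\rho^2)$, with leading coefficient algebraic in $\ol{\ms{f}}^1$ (and in $\ol{\ms{f}}^0$ as well when $n=3$, where the scaling is $F_{\rho a} = \ol{\ms{f}}^0_a$ and the trace term drops). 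For $n=2$, since there $F_{\rho a} = \rho^{-1}\ol{\ms{f}}^0_a$ is unbounded, one gets instead $\hat T_{ab} = \ol{\ms{f}}^0_a\ol{\ms{f}}^0_b + O(\rho^2)$, i.e. $O(1)$. Therefore the matter source $-2\rho\,\hat T_{ab}$ is $O(\rho^3)$ (resp. $O(\rho)$ for $n=2$): matching powers in the recursion, the level-$\rho^{2k-1}$ equation determining $\mf{g}^{(2k)}$ receives a matter contribution from $\hat T$ at order $\rho^{2k-2}$, built from coefficients of $\ol{\ms{f}}^1$ up to order $\rho^{2k-4}$, hence from at most $2k-4$ derivatives of $\mf{f}^{1,(0)}$ — none at all for $2k < 4$, so $\mf{g}^{(2)}$ is exactly as in vacuum, whereas for $n=2$ the resonant $\rho^1$-equation (indicial root $2k = n = 2$) makes $\mf{g}^\star$ depend on $\mf{f}^{0,(0)}$.

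Third comes the finite-regularity integration, which I would carry out by a coupled induction mirroring \cite{shao:aads_fg}: assuming a partial expansion of $(\ms{g}, F_{\rho a}, F_{ab})$ through a given order with a remainder of controlled regularity, one substitutes into the Fuchsian ODEs, isolates the next indicial level, and integrates it off against the appropriate integrating factors ($\rho^{-2k}$, $\rho^{n-2k}$, $\rho^{n-3}$, \dots), using the a priori bounds to control the sources; each step spends a bounded number of derivatives, and the $C^{n+4}$-budget is tuned (as in vacuum) to reach order $\rho^n$, the Maxwell fields requiring fewer derivatives still by the weight computation above. The triangular structure — $\hat T = O(\rho^2)$, the $F_{\rho a}$-ODE sourced only by already-determined data — makes the induction close; the resonant levels $2k = n$ for $\ms{g}$, and the analogous levels for $F_{\rho a}$ when $n$ is even, produce the logarithmic terms $\mf{g}^\star$ and a Maxwell log, handled by the standard Frobenius modification of the integrating factor. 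In parallel, the Bianchi relation shows that every coefficient of $\ol{\ms{f}}^1 = F_{ab}$ past the leading one is fixed by the coefficients of $F_{\rho a}$, so the only free datum in $\ol{\ms{f}}^1$ is its (necessarily closed) boundary value $\mf{f}^{1,(0)}$, while integrating the $F_{\rho a}$-ODE places the free datum of $\ol{\ms{f}}^0$ at power $(n-4)_+$ — pushed down to $\rho^0$, i.e. $\mf{f}^{0,(0)}$, when $n \leq 4$, the dimension-dependent rescaling in the statement being chosen precisely for this — with all lower coefficients determined by $\mf{f}^{1,(0)}$ and $\ms{g}$; the propagated slice-constraint $\operatorname{div}_{\ms{g}}F_{\rho\,\cdot} = 0$ becomes at leading order conservation of the Maxwell ``current'' $\mf{f}^{0,((n-4)_+)}$. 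Finally, feeding this Maxwell data into the $\rho a$ and $\rho\rho$ Einstein constraints one checks, exactly as in the vacuum case, that only the trace- and divergence-free parts of $\mf{g}^{(n)}$ remain unconstrained, their complementary parts now carrying Maxwell source terms of Ward-identity and anomaly type.

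The main obstacle I anticipate is twofold: (i) the uniform $\rho$-weight bookkeeping of $\hat T$ and of the Maxwell ODEs across the borderline dimensions $n \in \{2,3,4\}$, where the scaling of $F_{\rho a}$ jumps, the trace part of $\hat T$ degenerates, and new resonances generate logarithms, so that a single statement of the form \eqref{informal_FG_expansion_2} holds in all cases; and (ii) verifying that Shao's finite-regularity Fuchsian machinery genuinely closes for the coupled system, i.e. that the $C^{n+4}$-regularity budget survives the back-and-forth between the Einstein and Maxwell equations. It is precisely the $O(\rho^2)$ weight of $\hat T$ that makes (ii) work — it returns the derivatives spent controlling $F$ — and that is simultaneously responsible for the fact, exploited in the second part of the paper, that the geometric conditions needed for unique continuation are insensitive to the Maxwell field.
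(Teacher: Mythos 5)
Your proposal is correct and follows essentially the same route as the paper: decompose the Einstein--Maxwell system in the FG gauge into Fuchsian transport equations in $\rho$ (your second-order system for $\ms{g}$ is just the paper's first-order equation for $\ms{m}=\Lie_\rho\ms{g}$, obtained there via the tracelessness of the Weyl tensor, and your $F_{\rho a}$-equation with indicial exponent $n-3$ is the paper's equation for $\ol{\ms{f}}^0$ with exponent $(n-4)_+$ after rescaling), track the $\rho$-weights of $\hat T$ to see that the matter enters $\mf{g}^{(k)}$ only through $k-4$ derivatives of $\mf{f}^{1,(0)}$ and sources $\mf{g}^\star$ when $n=2$, and then run Shao's finite-regularity induction with Frobenius-type logarithmic modifications at the resonant levels, reading off the trace/divergence constraints on $\mf{g}^{(n)}$ and the Maxwell data from the remaining ($\rho a$, $\rho\rho$ and Bianchi) components. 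The only minor imprecision is that the slice constraint is not exactly $\operatorname{div}_{\ms{g}}F_{\rho\,\cdot}=0$ but carries an $\ms{m}\cdot\ol{\ms{f}}^1$ source, which is precisely why the divergence of $\mf{f}^{0,((n-4)_+)}$ is fixed by a universal (nonzero for $n$ even) function of the other boundary data rather than vanishing identically.
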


This result allows us to define the \emph{holographic data} for the system AdS--Einstein--Maxwell system \eqref{Einstein_intro}, \eqref{intro_EM}, which consists in the following: 
\begin{equation*}
    (\mc{I}, \mf{g}^{(0)}, \mf{g}^{(n)}, \mf{f}^{1,(0)}, \mf{f}^{0, ((n-4)_+)})\text{.}
\end{equation*}
We also show that this data is subject to the following constraints, at least for $n\geq 3$: 
\begin{gather*}
    \mf{D}\cdot \mf{g}^{(n)} = \mf{G}^1_n\paren{\partial^{\leq n+1}\mf{g}^{(0)}, \partial^{\leq n-3}\mf{f}^{1,(0)}}\text{,}\qquad \mf{tr}_{\mf{g}^{(0)}}\mf{g}^{(n)} = \mf{G}_n^2\paren{\partial^{\leq n}\mf{g}^{(0)}, \partial^{\leq n-4}\mf{f}^{1,(0)}}\text{,}\\
    \star_{\mf{g}^{(0)}} \mf{d} \star_{\mf{g}^{(0)}} \mf{f}^{0, ((n-4)_+)} = \mf{F}_n^0\paren{\partial^{\leq n-2}\mf{g}^{(0)}, \partial^{\leq n-2}\mf{f}^{1,(0)}}\text{,}\qquad \mf{d}\mf{f}^{1,(0)} = 0\text{,}
\end{gather*}
where $\mf{D}$ is the $\mf{g}^{(0)}$--covariant derivative, $\mf{d}$ the $\mf{g}^{(0)}$--exterior derivative and $\star_{\mf{g}^{(0)}}$ is the Hodge--star operator with respect to $\mf{g}^{(0)}$. Also, $\mf{G}^0_n\text{, } \mf{G}^{1}_{n}\text{, } \mf{F}^{0}_n$ are universal functions depending on $n$ and vanishing for $n$ odd. 

The proof is based on a rather simple, although technical, ODE analysis for equations typically of the form: 
\begin{equation}\label{transport_intro}
    xf'(x) - c\cdot f(x) = h(x)\text{,}\qquad x\in (0, 1]\text{;}\qquad h \rightarrow h^0\text{,}\qquad \text{as }  x\rightarrow 0\text{,}
\end{equation}
where $f\text{, } h$ take value in some appropriate functional space, and $c>0$ is some constant. The basic observation, using the Frobenius method, is that $f$ takes the following asymptotic form: 
\begin{equation*}
    f = -\frac{1}{c}h^0 + o(1)\text{,}\qquad \text{as }x\rightarrow 0\text{.} 
\end{equation*}
Note that, by differentiating \eqref{transport_intro} with respect to $x$ $k$--times, one obtains the following transport equation: 
\begin{equation*}
    xf^{(k+1)}(x) - (c-k)f^{(k)}(x) = h^{(k)}(x)\text{,}
\end{equation*}
and one can repeat the same analysis as long as $h^{(k)}\rightarrow h^k$, for some $h^k$, and $c-k>0$. In particular, if $c=k_\star$ for some $k_\star$ integer, one will reach eventually the following: 
\begin{equation*}
    xf^{(k_\star +1)} = h^{(k_\star)}(x)\text{.}
\end{equation*}
One can show that if $h^{(k_\star)}$ converges to some $h^\star$ in an improved manner, then $f^{(k_\star)}$ has the following asymptotic behaviour: 
\begin{equation*}
    f^{(k_\star)} = f^\dag + \log x \cdot h^\star+o(1)\text{,}
\end{equation*}
where $f^\dag$ is free, i.e. not constrained by $h$ and its derivatives. 

When it comes to the Einstein and Maxwell equations \eqref{Einstein_intro}, \eqref{intro_EM}, one finds the following transport equations: 
\begin{align}
    \begin{aligned}
    &\rho\partial_\rho \ms{m} - (n-1)\ms{m} \sim \rho\ms{Rc} + \ms{tr}_{\ms{g}}\ms{m} \cdot \ms{g} + \rho\cdot (\ms{m})^2 + \rho^3(\bar{\ms{f}}^1)^2 + \rho^{2\alpha_n}(\bar{\ms{f}}^0)^2\text{,}\\
    &\rho\partial_\rho \ms{tr}_{\ms{g}}\ms{m} - (2n-1)\ms{tr}_{\ms{g}}\ms{m} \sim \rho\ms{Rs} + \rho\cdot (\ms{m})^2 + \rho^3(\bar{\ms{f}}^1)^2 + \rho^{2\alpha_n}(\bar{\ms{f}}^0)^2\text{,}\\
    &\rho\partial_\rho \bar{\ms{f}}^0- (n-4)_+\bar{\ms{f}}^0 \sim \rho\ms{m}\cdot \bar{\ms{f}}^0 + \rho^{1+\beta_n}\ms{D}\cdot \bar{\ms{f}}^1\text{,}\\
    &\partial_\rho\bar{\ms{f}}^1 \sim \ms{D}\bar{\ms{f}}^0\cdot \rho^{\gamma_n}\text{,}
    \end{aligned}
\end{align}
where $\ms{m}:= \partial_\rho \ms{g}$ and $\alpha_n$, $\beta_n$, $\gamma_n$ are some powers depending on the dimension; see Propositions \ref{prop_transport_m} and \ref{prop_transport_f} for more details. From the above considerations, one simply needs to analyse the right-hand side of these transport equations, and treat the constant $c$ associated to each equation carefully. We will typically find, as in \cite{shao:aads_fg}, that the limits will depend on the parity of the number of $\rho$--derivatives. 

In particular, note that for $k_\star=n-1$, $q_\star = n-4$, one finds equations of the form, for $n\geq 4$: 
\begin{equation*}
    \rho\partial^{k_\star +1}_\rho \ms{m} = \ms{G}_{n}\text{,}\qquad \rho\partial_\rho^{q_\star +1 }\bar{\ms{f}}^0 = \ms{H}_n
\end{equation*}
where $\ms{G}_n$, $\ms{H}_n$ are vertical tensor fields which satisfy the following limits: 
\begin{equation*}
    \ms{G}_n\rightarrow \begin{cases}
        0\text{,}\qquad &n \text{ even,}\\
        \mf{G}_n^\star\text{,}\qquad &n\text{ odd.}
    \end{cases}\qquad \ms{H}_n\rightarrow \begin{cases}
        \mf{H}_n^\star\text{,}\qquad &n \text{ even,}\\
        0\text{,}\qquad &n\text{ odd,}
    \end{cases}
\end{equation*}
as $\rho\searrow 0$. As a consequence, the polyhomogeneous terms in the Fefferman-Graham expansion for the metric $\ms{g}$, and in the near-boundary expansion for $\bar{\ms{f}}^0$, are nontrivial for $n$ even only. 

We also show, along the way, how one can recover constraints on the holographic data. This is because the vertical decomposition of the Bianchi identities with respect to $g$, as well as the Maxwell and Bianchi equations for $F$ will also lead to constraint equations of the form: 
\begin{gather*}
    \ms{D}\cdot \ms{m} = \ms{D}(\ms{tr}_{\ms{g}}\ms{m}) + \rho^{\alpha_n} \bar{\ms{f}}^0\cdot \bar{\ms{f}}^1\text{,}\\
    \ms{D}_{[a}\bar{\ms{f}}^1_{bc]} = 0\text{,}\qquad \ms{D}\cdot \bar{\ms{f}}^0 = \rho^{-\gamma_n}\ms{m}\cdot\bar{\ms{f}}^1\text{,}
\end{gather*}
and the constraints simply follow by computing the limits on the right-hand sides. 

The proof, mostly technical, allows for any dimension $n\geq 2$. In particular, since the free coefficients for the Maxwell fields and the metric are located at different powers of $\rho$, one has to treat the cases $n=2, 3$ and $4$ differently. Note also that since the free coefficient for $F_{ab}$ is located at order $0$, and for $F_{\rho a}$ at $n-3$, the case $n=3$ corresponds to the borderline case. For $n\geq 4$, the ``first" free coefficient is given by $\mf{f}^{1,(0)}$, while for $n=2$, it is given by $\mf{f}^{0,(0)}$. This feature of the Maxwell fields, which shows a strong dependency on the dimension, forces one to be indeed careful with the dimensions in the ODE analysis.

\subsection{Unique continuation}

\subsubsection{Setting-up the problem}

We have just seen that any solution $(\mc{M}, g, F)$ generates holographic data $(\mc{I}, \mf{g}^{(0)}, \mf{g}^{(n)}, \mf{f}^{0, ((n-4)_+)}, \mf{f}^{1, (0)})$. The converse question is natural: Does any holographic data on a domain $\mi{D}$ generate a unique solution $(g, F)$ on $\Omega\subset \mc{M}$ in a neighbourhood of $\lbrace 0 \rbrace \times \mc{I}$? From the previous considerations, and the counterexamples from \cite{GUISSET2024223}, solving \eqref{Einstein_intro}, in this context coupled with \eqref{intro_EM}, is generically not possible. As mentioned previously, this is due to the timelike nature of the conformal boundary $\mc{I}$ and the hyperbolic nature of the Einstein equations. As a consequence, one should formulate the correspondence between the holographic data and solutions as the following: 
\begin{problem}\label{problem_informal}
    Let $\mi{A} = (\mc{I}, \mf{g}^{(0)}, \mf{g}^{(n)}, \mf{f}^{0, (0)}, \mf{f}^{1, (0)})$ -- modulo gauge equivalence -- be given holograhic data on a domain $\mi{D}\subset \mc{I}$. Is there a unique aAdS solution $(\mc{M}, g, F)$ -- up to a diffeomorphism -- realising this data? 
\end{problem}
In this work, we prove a \emph{conditional} positive answer to Problem \ref{problem_informal}. In particular, the domain $\mi{D}$ has to satisfy a certain pseudoconvexity condition with respect to $\mf{g}^{(0)}$, referred to as the \emph{GNCC} -- \emph{Generalised Null Convexity Criterion}. Such a condition turns out to depend on the boundary metric $\mf{g}^{(0)}$ and the coefficient $\mf{g}^{(2)}$ in the Fefferman-Graham expansion \eqref{informal_FG_expansion_2}. In the vacuum case, this coefficient $\mf{g}^{(2)}$ is simply the Schouten tensor of $\mf{g}^{(0)}$. We show in particular that it remains the case the in AdS--Einstein--Maxwell setting, since the coefficient $\mf{g}^{(2)}$ for $n\geq 3$ will not depend on the Maxwell holographic data. In other words, it depends \emph{only} on the boundary geometry $(\mi{D}, \mf{g}^{(0)})$. 

In \cite{Holzegel22}, the authors prove also a positive answer to the above problem for the \emph{vacuum} case, i.e. \eqref{Einstein_intro} with $T=0$. Moreover, they prove an extension of symmetry result, namely that the bulk manifold $\mc{M}$ inherits the symmetries of $(\mc{I}, \mf{g}^{(0)}, \mf{g}^{(n)})$. 

Let us informally state the second result we will be proving in this work: 
\begin{theorem}[Theorem \ref{thm_UC_main}, informal version]\label{thm_UC_informal}
    Let $(\mc{M}, g, F)$, $(\check{\mc{M}}, \check{g}, \check{F})$ be two Maxwell-FG-aAdS segments solving \eqref{Einstein_intro}, coupled with \eqref{intro_EM}. Let also $\mi{A}$ and $\check{\mi{A}}$ be their corresponding holographic data, and assume that they are gauge--equivalent on a domain $\mi{D}\subset \mc{I}$ satisfying the GNCC. Then, $(\mc{M}, g, F)$ and $(\check{\mc{M}}, \check{g}, \check{F})$ must necessarily be isometric near $\mi{D}$. 
\end{theorem}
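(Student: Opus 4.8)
The plan is to adapt the vacuum argument of \cite{Holzegel22} to the coupled system, the decisive new ingredient being the near-boundary expansion of Theorem \ref{thm_main_fg_ads_informal}. First I would place both $(\mc{M}, g, F)$ and $(\check{\mc{M}}, \check{g}, \check{F})$ in Fefferman--Graham gauge, $g = \rho^{-2}(d\rho^2 + \ms{g}(\rho))$ and likewise for $\check g$. By hypothesis the holographic data $\mi{A}$ and $\check{\mi{A}}$ are gauge-equivalent on $\mi{D}$: there is a boundary-preserving diffeomorphism carrying $(\mf{g}^{(0)}, \mf{g}^{(n)}, \mf{f}^{1,(0)}, \mf{f}^{0,((n-4)_+)})$ to the checked data over $\mi{D}$. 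Pulling $(\check g, \check F)$ back by this diffeomorphism and re-imposing FG gauge --- which uses only the residual FG gauge freedom and so does not disturb the matching of the data on $\mi{D}$ --- reduces the theorem to the following: two Maxwell-FG-aAdS segments in FG gauge whose holographic data coincide on $\mi{D}$ are isometric near $\mi{D}$.

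By Theorem \ref{thm_main_fg_ads_informal}, every expansion coefficient of $\ms{g}$, $\bar{\ms{f}}^0$, $\bar{\ms{f}}^1$ up to and including the free coefficients $\mf{g}^{(n)}$, $\mf{f}^{0,((n-4)_+)}$, $\mf{f}^{1,(0)}$ --- together with the anomalous $\log\rho$ coefficient $\mf{g}^\star$ --- is a universal function of the holographic data and finitely many of its boundary derivatives. Since the data agree on $\mi{D}$, the difference unknowns $\ms{w} := \ms{g} - \check{\ms{g}}$ and the corresponding differences $\ms{v}$ of the Maxwell unknowns $(\bar{\ms{f}}^0, \bar{\ms{f}}^1)$, together with their $\rho$- and boundary derivatives, vanish as $o(\rho^n)$ (up to harmless $\log$ losses) as $\rho \searrow 0$ locally over $\mi{D}$; this is exactly the decay required to kill the boundary terms in the Carleman integration by parts. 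With this in hand I would derive, as in \cite{Holzegel22, Shao22} but retaining the matter terms, a closed system of wave equations for $(\ms{w}, \ms{v})$ with aAdS-regular coefficients, schematically $\Box_g \ms{w} = \mc{O}(\ms{w}, \partial\ms{w}, \ms{v}, \partial\ms{v})$ and $\Box_g \ms{v} = \mc{O}(\ms{v}, \partial\ms{v}, \ms{w}, \partial\ms{w})$, using the Einstein constraint and Bianchi equations together with $dF = 0$, $d\star_g F = 0$ to control the components fixed by FG gauge. The structural feature that makes the Maxwell coupling benign is that $T_{\alpha\beta}$ in \eqref{intro_EM} is quadratic in $F$, so $T[g,F] - T[\check g, \check F]$ is linear in $(\ms{w}, \ms{v})$ with bounded coefficients, and the Maxwell equations, after commuting one derivative through, yield a wave equation for $F$ whose difference form is linear in $(\ms{w}, \ms{v}, \partial\ms{w}, \partial\ms{v})$.

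The domain $\mi{D}$ satisfies the GNCC relative to $(\mf{g}^{(0)}, \mf{g}^{(2)})$; since for $n \geq 3$ the coefficient $\mf{g}^{(2)}$ is the Schouten tensor of $\mf{g}^{(0)}$ and is unaffected by the Maxwell data, the GNCC is exactly the geometric condition of the vacuum theory. I would then apply the Carleman estimates of \cite{Shao22} to each component of $(\ms{w}, \ms{v})$ with the $\rho$-weight adapted to its boundary order, and sum them, absorbing the cross terms into the left-hand side by taking the Carleman parameter large --- each coupling term being controlled by the other field's bulk norm with a gain (or, in the borderline dimensions, no loss) of a power of $\rho$. Together with the $o(\rho^n)$ vanishing above this forces $\ms{w} \equiv 0$ and $\ms{v} \equiv 0$ in a neighbourhood of $\mi{D}$; undoing the gauge reduction of the first step gives the asserted isometry near $\mi{D}$.

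The main obstacle is the \emph{simultaneous} Carleman estimate for the coupled Einstein--Maxwell system: the metric and the two Maxwell components sit at different $\rho$-weights --- reflecting the dimension-dependent location of the free coefficients flagged after Theorem \ref{thm_main_fg_ads_informal} --- so one must choose a single Carleman weight compatible with all of them and verify that every coupling term genuinely carries a favourable power of $\rho$. This is most delicate in the low dimensions $n = 2, 3$, where the Maxwell free data sits at order $\rho^0$ or in the borderline case, precisely the dimensions that already required separate treatment in the Fefferman--Graham analysis. A secondary point to be careful about is that the FG-gauge reduction and the propagation of the constraints genuinely close the system, so that no non-hyperbolic component escapes the Carleman argument.
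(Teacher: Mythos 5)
Your overall architecture (gauge reduction via the FG radius, using the coupled Fefferman--Graham expansion to make the differences vanish over $\mi{D}$, invariance of the GNCC because $\mf{g}^{(2)}$ is unaffected by the Maxwell data, then Carleman estimates) matches the paper, but the core of your argument has a genuine gap: you propose a closed system of \emph{wave} equations for the differences $\ms{w}=\ms{g}-\check{\ms{g}}$ and $\ms{v}$ of the Maxwell fields themselves. In FG gauge the Einstein equations give only \emph{transport} (Riccati-type) equations in $\rho$ for $\ms{g},\ms{m}$; there is no usable wave equation for the metric difference, and the hyperbolic part of the argument must instead be carried by the Weyl curvature components $\ms{w}^\star,\ms{w}^1,\ms{w}^2$ and, because the curvature wave equation contains $\nabla^2T\sim F\cdot\nabla^2F$, by the first-derivative Maxwell fields $\ul{\ms{h}}^0=\ms{D}\ms{f}^1$, $\ul{\ms{h}}^2=\ms{D}\ms{f}^0$, coupled to transport equations for $\delta\ms{g},\delta\ms{m},\delta\ms{f}^0,\delta\ms{f}^1$. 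Even then, the difference of wave operators produces $(\ol{\Box}-\check{\ol{\Box}})\check{\ms{A}}$, which contains $\ms{D}^2\delta\ms{g}$ — second derivatives of a transport variable that the Carleman estimates cannot control. Resolving this requires the renormalisation $\Delta\ms{A}\sim\delta\ms{A}+\ms{A}\cdot(\delta\ms{g}+\ms{Q})$ with the auxiliary field $\ms{Q}$ (Definition \ref{def_renormalisation}) and the curl field $\ms{B}$, so that the $\ms{D}^2(\delta\ms{g}+\ms{Q})$ terms cancel (Proposition \ref{prop_wave_difference}); your schematic system $\Box_g\ms{w}=\mc{O}(\ms{w},\partial\ms{w},\ms{v},\partial\ms{v})$ silently assumes this difficulty away, and without it the Carleman argument does not close.

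A second, smaller but real gap is the boundary terms: the vanishing supplied directly by matching data is only $\delta\ms{g}=\mc{O}(\rho^n)$, $\delta\ms{f}^0=\mc{O}(\rho^{n-2})$, $\delta\ms{f}^1=\mc{O}(\rho^{n-1})$, and $\mc{O}(\rho^{n-2})$ for the Weyl differences, while the Carleman boundary integrands carry weights $\rho^{-\kappa-1}$ with $\kappa$ chosen large (in the proof of Proposition \ref{prop_unique_continuation1}, $1\ll\kappa<M_0-5$). So $o(\rho^n)$ decay is \emph{not} ``exactly the decay required''; one must first iterate the difference transport equations to trade vertical regularity for powers of $\rho$ (Proposition \ref{prop_higher_order_vanishing} and Corollary \ref{corollary_asymptotics}, pushing the vanishing to order $\sim\rho^{M_0-4}$) before the boundary terms drop out. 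Your worry about low dimensions and mixed $\rho$-weights is legitimate but secondary; the missing renormalisation/auxiliary-field structure and the higher-order-of-vanishing step are the essential ingredients your proposal lacks.
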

In the remainder of this section, we will informally describe the proof of such a theorem. 

\subsubsection{Wave--transport system and renormalisation}

In order to make use of the Carleman estimates derived from linear theory \cite{hol_shao:uc_ads, hol_shao:uc_ads_ns, Shao22, McGill20}, it will be useful to write the Einstein equations as a wave--transport system. In the same way that \cite{ionescu2015rigidityresultsgeneralrelativity, Ionescu:2011wx} extends Killing vector fields along a transverse direction to a null hypersurface, we will extend the boundary data using the directions given by $\partial_\rho$, therefore exploiting the geometry of \eqref{FG_aAdS_informal}. In order to exploit it even further, we will write equations \eqref{Einstein_intro} and \eqref{intro_EM} in the so-called \emph{vertical formalism}, developed in \cite{hol_shao:uc_ads, hol_shao:uc_ads_ns, McGill20}, by working on level sets of $\rho$. 

In this formalism, spacetime tensor fields will be decomposed as tensor fields on level sets of $\rho$. As an example, we will define the following decomposition for the Weyl curvature, the metric, and the Maxwell field: 
\begin{gather*}
    \ms{w}^0_{abcd} := \rho^2 W_{abcd}\text{,}\qquad \ms{w}^1_{abc} := \rho^2W_{\rho abc}\text{,}\qquad \ms{w}^2_{ab} := \rho^2 W_{\rho a \rho b}\text{,}\\
    \ms{g}_{ab} := \rho^2g_{ab}\text{,}\qquad \ms{m}_{ab} = \Lie_\rho (\rho^2g)_{ab}\text{,}\\
    \ms{f}^0_a := \rho F_{\rho a}\text{,}\qquad \ms{f}^1_{ab} := \rho F_{ab}\text{.}
\end{gather*}
Note that the powers of $\rho$, although seemingly arbitrary, will turn out to be critical when obtaining a useful wave--transport system. 

The idea is, as in \cite{Holzegel22}, to couple the transport equations satisfied by the $\ms{g}, \ms{m}, \ms{f}^0, \ms{f}^1$, which are consequence of the vertical decomposition of the Einstein and Maxwell equations, to the wave equation satisfied by $W$ which, after careful decomposition yield:
\begin{equation*}
    \paren{\ol{\Box} + \sigma_i}\ms{w}^i = \ms{NL}_i\paren{\ms{w}^j, \ms{Dw}^j, \ms{f}^j, \ms{Df}^j, \ms{D}^2\ms{f}^j, \ms{m}, \ms{Dm}}\text{,}\qquad \ms{w}^i \in \lbrace \ms{w}^\star, \ms{w}^1, \ms{w}^2 \rbrace\text{,}
\end{equation*}
where $\ms{w}^\star$ is the traceless--part of $\ms{w}^0$, $\sigma_i\in \R$ and $\ms{NL}_i$ some nonlinearities. The operator $\ol{\Box}$ is an appropriate covariant wave operator for vertical tensor fields; for more details, see Definition \ref{def_nabla_bar}. For the Carleman estimates, however, one typically needs to bound the nonlinearities by: 
\begin{equation*}
    \abs{\ms{NL}_i}\lesssim \sum\limits_{\ms{A}\in \lbrace (\ms{w}^j), (\ms{f}^j), \ms{m}\rbrace}(\rho^{K_\ms{A}}\abs{\ms{A}} + \rho^{K'_{\ms{A}}}\abs{\ms{DA}})\text{,}\qquad {K}_{\ms{A}}, K_{\ms{A}}'>0\text{,}
\end{equation*}
which is not the case due to the presence of $\ms{D}^2\ms{f}$ in the nonlinearities. The solution is therefore to introduce new vertical fields: 
\begin{gather}\label{intro_h}
    \ul{\ms{h}}^0 := \ms{D}\ms{f}^1\text{,}\qquad \ul{\ms{h}}^2 := \ms{Df}^0\text{,}
\end{gather}
and to derive wave equations for such fields by vertically decomposing the wave equation satisfied by $F$, commuted with $\nabla$: 
\begin{equation}
    \Box_g \nabla F = \left[\Box_g, \nabla\right]F + \nabla(\Box_g F)\text{.}
\end{equation}
It turns out that the choice \eqref{intro_h} leads indeed to wave equations of the form: 
\begin{equation*}
    (\ol{\Box} + \sigma_i)\ul{\ms{h}}^i = \ms{NL}_i\paren{(\ms{w}^j), (\ms{Dw}^j), (\ms{f}^j), (\ul{\ms{h}}^j), (\ms{D}\ul{\ms{h}}^j), \ms{m}, \ms{Dm}}\text{,}\qquad i=0, 2\text{,}
\end{equation*}
and one eventually obtains the desired wave--transport system: 
\begin{equation}\label{WT_intro}
    \begin{cases}
        \partial_\rho \ms{B} + \sigma_{\ms{B}}\ms{B} = \mf{T}(\ms{B}, \ms{A})\text{,}\\
        \ol{\Box}\ms{A} + \sigma_{\ms{A}}\ms{A} = \mf{W}\paren{\ms{A}, \ms{DA}, \ms{B}, \ms{DB}}\text{,}
    \end{cases}
\end{equation}
where $\ms{B}\in \lbrace \ms{g}, \ms{m}, \ms{f}^0, \ms{f}^1\rbrace$ and $\ms{A}\in\lbrace \ms{w}^\star, \ms{w}^1, \ms{w}^2, \ul{\ms{h}}^0, \ul{\ms{h}}^2\rbrace$, and $\mf{T}\text{, }\mf{W}$ some nonlinearities exhibiting appropriate decay in $\rho$. 

Since one would want to compare two solutions $(g, F), (\check{g}, \check{F})$ on aAdS segments $\mc{M}, \check{\mc{M}}$ with gauge-equivalent data $\mi{A}\text{, } \check{\mi{A}}$ on a domain $\mi{D}\subset \mc{I}$, it will be necessary to look at differences of the vertical fields. Note that we can take, without loss of generality, $\mc{M} =  \check{\mc{M}}$\footnote{Note that $\mc{M}$ and $\check{\mc{M}}$ share the same conformal boundary $\mc{I}$} and $\mi{A} = \check{\mi{A}}$, by applying an appropriate gauge-transformation. As a consequence, the goal is to show that for $\mi{A} = \check{\mi{A}}$ on $\mi{D}$, one has necessarily:
\begin{equation}
    \ms{g} - \check{\ms{g}} \equiv 0\text{,}\qquad \ms{f}^i - \check{\ms{f}}^i \equiv 0\text{,}\qquad i=0, 1\text{,}
\end{equation}
in a neighbourhood of $\mi{D}$. The idea is thus to study the wave--transport system \eqref{WT_intro} in terms of the differences $\ms{B}- \check{\ms{B}}$ and $\ms{A}- \check{\ms{A}}$.  

First, one can immediately see that taking $\delta \ms{B} := \ms{B} - \check{\ms{B}}$, with $\ms{B}\in \lbrace \ms{g}, \ms{m}, \ms{f}^0, \ms{f}^1\rbrace$, one has a transport equation of the form: 
\begin{equation*}
    \partial_\rho \delta\ms{B} + \sigma_{\ms{B}}\delta\ms{B} = \delta \mf{T}\text{,}
\end{equation*}
and one can simply express the nonlinearity in terms of the difference of fields. An issue shows up however, when one wants to perform the same reasoning on the wave equation: 
\begin{equation}\label{intro_diff}
    \ol{\Box}\delta\ms{A} = \underbrace{\delta \mf{W}}_{\text{good}} -\underbrace{(\ol{\Box} - \check{\ol{\Box}})\check{\ms{A}}}_{\text{bad}}\text{.}
\end{equation}
Note that the ``good terms" can be treated as in the transport case, namely by expressing the nonlinearities in terms of the difference of the fields. The second one, on the other hand, hides problematic terms of the form $\ms{D}^2\delta\ms{g}$, which are not controlled in the Carleman estimates, for similar reasons as to why $\ms{D}^2\ms{f}$ was not controlled initially. Furthermore, one cannot naively apply the same trick as with the Maxwell fields, namely by considering an new vertical field $\ms{D}\delta{\ms{g}}$, since $\ol{\Box}\ms{D}\check{\ms{g}}$ obeys no particular wave equation. 

This problem, located at the level of the Einstein equations, is also found in the vacuum case, and has been treated in \cite{Holzegel22} by considering a renormalisation of the difference for the wave fields\footnote{A similar renormalisation can also be found in the work of Ionescu-Klainerman \cite{Ionescu:2011wx}, in the context of the rigidity of black holes.}: 
\begin{equation}\label{renormalisation_intro}
    \Delta \ms{A} \sim \delta \ms{A} + \ms{A}\cdot (\delta\ms{g} + \ms{Q})\text{,}
\end{equation}
where $\ms{Q}$ is an auxiliary field, satisfying a nice transport equation. The authors of \cite{Holzegel22} realised furthermore that curl--type derivatives of $\ms{g}$ can more easily be controlled. As such, they also introduced another auxiliary field $\ms{B}\sim \operatorname{curl} \delta \ms{g}$ such that $\ms{DB}$ controls some of the second derivatives of $\delta\ms{g}$. In turn, $\ms{DB}$ is controlled through the difference of the Weyl curvature. Furthermore, the ``bad terms" in \eqref{intro_diff} are such that the renormalisation \eqref{renormalisation_intro} essentially sends them on the left-hand side of \eqref{intro_diff}; see Section 1.4.2 of \cite{Holzegel22} for more details. One eventually obtains a new wave--transport system of the form:  
\begin{equation*}
\begin{cases}
    \paren{\ol{\Box}+\sigma_{\ms{A}}} \Delta\ms{A} \sim \rho^2\cdot \mc{O}\paren{\Delta\ms{A}} + \rho^3\cdot \mc{O}\paren{\ms{D}\Delta\ms{A}} + \rho^2 \cdot\mc{O}\paren{\delta\ms{B}', \ms{D}\delta\ms{B}'}\\
    \partial_\rho\delta\ms{B}' +\sigma_{\ms{B}}\delta\ms{B}' \sim \rho \cdot \mc{O}(\delta \ms{B}') +\rho\cdot \mc{O}\paren{\Delta\ms{A}}\text{,}
\end{cases}   
\end{equation*}
as $\rho\searrow 0$, where now $\ms{B}'\in \lbrace \ms{g}, \ms{m}, \ms{f}^0, \ms{f}^1, \ms{Q}, \ms{B}\rbrace$.

\subsubsection{Carleman estimates and unique continuation}

In \cite{Shao22}, the authors derived a Carleman estimate adapted to the AdS geometry for solutions to wave equations. Such estimates are particularly useful when the nonlinearities on the right-hand side of the wave equations exhibit sufficient decay in $\rho$, and typically take the following form, for $\mi{D}$ satisfying the GNCC: 
\begin{equation}
    \norm{f^{-p/2}(\ol{\Box} +  \sigma_{\ms{A}}) \ms{A}}_{L^2_\omega(\Omega_\star)}^2 + \lambda \cdot \text{boundary terms on $\mc{I}$} \geq C\lambda \norm{ \ms{A}}_{H^1_\omega(\Omega_\star)}^2\text{,}\qquad \lambda \geq \lambda_0>0\label{Carleman_intro}\text{,}
\end{equation}
where the weighted--Sobolev norms are taken on a region $\Omega_\star:= \lbrace f(\rho, x) \leq f_\star\rbrace$, with $f$ a function defining pseudoconvex level sets away from $\mc{I}$, and $f_\star>0$ small enough. One also requires the vertical tensor field $\ms{A}$ and its derivatives to vanish on $\lbrace f = f_\star\rbrace$. Furthermore, these norms are weighted by a Carleman weight $\omega(\lambda)$, as usual, and $0<2p<1$; see Section \ref{sec_carleman} for more details. Before obtaining the unique continuation result, one will need to treat the boundary terms.  

This is where Theorem \ref{thm_main_fg_ads_informal} plays a crucial role. Since the vertical fields of interest share indeed the same data on $\mi{D}$, one has necessarily: 
\begin{equation*}
    \delta \ms{g} = \mc{O}(\rho^n)\text{,}\qquad  \delta \ms{f}^0 = \mc{O}(\rho^{n-2})\text{,}\qquad \delta\ms{f}^1 = \mc{O}(\rho^{n-1})\text{,}
\end{equation*}
to some regularity along the vertical directions, and one can obtain similar asymptotics for the Weyl curvature and the fields $(\ul{\ms{h}}^0, \ul{\ms{h}}^2)$. Such an order of vanishing can in fact be improved, as long as one has enough regularity in the vertical directions, by integrating the Bianchi and Maxwell equations. We will show that this higher-order of decay will be enough to show that the boundary terms in \eqref{Carleman_intro} vanish. 

In order to show that the fields $\Delta\ms{A}$ actually vanish in a neighbourhood of $\mi{D}$, one simply need to use standard methods by using a cutoff $\chi$, allowing $\Delta\ms{A}$ and its derivatives to indeed vanish on $\lbrace f = f_\star\rbrace$; see the proof of Proposition \ref{prop_unique_continuation1}. Note that since the wave equations also involve the transport fields $\delta\ms{B}'$, one has to couple the above estimates to some transport Carleman estimates, derived in \cite{Holzegel22}. 

At this point, one has indeed showed that whenever $(\ms{g}, \check{\ms{g}})$ and $(F, \check{F})$ have identical data on $\mi{D}\subset \mc{I}$ satisfying the GNCC, one has necessarily: 
\begin{equation*}
    g - \check{g}\equiv 0\text{,}\qquad F- \check{F}\equiv 0\text{,}
\end{equation*}
in a neighbourhood $\Omega_\star$ of $\mi{D}$. It remains, however, to show that the result hold in the full gauge invariant setting. 

\subsubsection{Gauge freedom}

In Section \ref{sec:full_result}, we characterise gauge-equivalent boundary data for the full Einstein--Maxwell system. Namely, we show that the boundary data $\mi{A}$ and $\check{\mi{A}}$ are equivalent if there exists a smooth function $\mf{a}$ on $\mc{I}$ such that: 
\begin{gather}
    \label{check_g_n_intro}\check{\mf{g}}^{(0)} =e^{2\mf{a}}\mf{g}^{(0)}\text{,}\qquad \check{\mf{g}}^{(n)} = \mf{H}_n(\mf{g}^{(n)},\partial^{\leq n-2} \mf{g}^{(0)},\mf{D}^{\leq  n-4}\mf{f}^{1, (0)}, \mf{D}^{\leq n}\mf{a})\text{,}\\
    \label{check_f_intro}\begin{cases}
        \check{\mf{f}}^{1, (0)} = \mf{f}^{1, (0)}\text{,}\qquad \check{\mf{f}}^{0, ((n-4)_+)} = e^{-\mf{a}}\mf{f}^{0, ((n-4)_+)}\text{,}\qquad &n =3\text{,}\\
        \check{\mf{f}}^{1, (0)} =\mf{f}^{1, (0)}\text{,}\qquad \check{\mf{f}}^{0, ((n-4)_+)} = \mf{H}'_n\paren{\mf{f}^{0, ((n-4)_+)}, \partial^{\leq n-3}\mf{g}^{(0)}, \mf{D}^{\leq n-3}\mf{f}^{1, (0)},\mf{D}^{\leq n-3} \mf{a}}\text{,}\qquad &n\geq 4\text{,}
    \end{cases}
\end{gather}
where $\mf{H}_n$ and $\mf{H}'_n$ are universal functions depending on $n$ only. 

Furthermore, since the GNCC is also gauge--invariant, one can easily formulate from there a general covariant unique continuation result; see Theorem \ref{thm_UC_main}. To prove it, we follow the method of \cite{Holzegel22}, which shows how one can recover the setting in which the boundary data $\mi{A}$ and $\check{\mi{A}}$ are equal and on which $\mc{M} = \check{\mc{M}}$. 

\subsection{Future directions of study}

\subsubsection{Generalisations to nontrivial matter models}

Of course, the natural problem one may try to solve would be to look at matter models such as the Klein--Gordon field or the Vlasov fields, to cite only a few. 

In the first one, the Klein-Gordon mass introduces a non-trivial scale in the problem, and one will have to be careful when deriving the Fefferman-Graham expansion in order to avoid power mixing. It is believed that, in the well-posed range for the mass \cite{Warnick13, Holzegel12}, the GNCC should not depend on the scalar field and, furthermore, unique continuation should also be obtained. 

On the other hand, the treatment of the AdS--Einstein--Vlasov case should also be interesting. In particular, such a model may hide surprising features when it comes to the counterexamples, since these propagate along null geodesics, like the massless Vlasov field $f$. It is also interesting to note the connection between the Einstein--Vlasov system and high--frequency solutions, highlighted by the Burnett conjecture, as studied in\footnote{There is now a rich literature initiated by Luk, Huneau, Touati for these problems, presented in the following review \cite{huneau2024review}.} \cite{huneau2024, toua:go_eve}. Furthermore, the constructions of \cite{AB, GUISSET2024223} also involve a high-frequency geometric optics solution. 

\subsubsection{Optimality of the GNCC}

Another natural question is what happens when the GNCC fails. A (very) partial answer has been given in \cite{GUISSET2024223} where the authors show that the GNCC is in fact optimal for linear solutions to the wave equations for AdS and planar AdS. In the case of planar AdS, the counterexamples can be constructed for an ``arbitrarily long time". In the pure AdS setting \eqref{intro_pure_AdS}, however, the GNCC imposes time slabs of the form $(-T, T)\times \mathbb{S}^{n-1}\subset \mc{I}$ to satisfy $T>\pi/2$. In \cite{GUISSET2024223}, the authors show, at least in the linear case, that the unique continuation property for $T\leq \pi/2$ does not hold generically. In the more general setting, the authors also show that counterexamples can in general be constructed in the neighbourhood of some appropriate open regions of $\mc{I}$ which do not satisfy the GNCC. Once again these results only hold in the linear case. 

Showing, however, the following statement: 
\begin{center}
    \emph{Let $\mi{A}$, $\check{\mi{A}}$ be two gauge-equivalent boundary data on a domain $\mi{D}$ which does not satisfy the GNCC. Is it possible to construct solutions $(\mc{M}, g)$, $(\check{\mc{M}}, \check{g})$ to \eqref{Einstein_intro}, with $T=0$, and boundary data $\mi{A}$ and $\check{\mi{A}}$, respectively, such that for any boundary--preserving diffeomorphism $\phi:\mc{M}\rightarrow \check{\mc{M}}$:, the two metrics $g$ and $\phi_\star g$ are not isometric near $\mi{D}$ in $\mc{M}$?}
\end{center}
is far from being established. The construction will require completely novel techniques, on the condition that such a conjecture holds, since the usual Alinhac--Baouendi methods crucially rely on the linearity of the hyperbolic operator considered. One should note however the nonlinear counterexmaple of Métivier \cite{Mtivier1993CounterexamplesTH}, exploiting the linear construction of \cite{AB}. 

Furthermore, the Alinhac--Baounedi counterexamples are complex in nature, which seems to be incompatible with a result of the form of the above. The construction of purely real counterexample in general is also far from being obtained. 

\subsubsection{Global rigidity result}

The results proved here, as well as those in \cite{Holzegel22, mcgill2021}, are only local by essence. As a consequence, the holographic data only characterise in a unique way solutions to the Einstein equations near the conformal boundary. It would be interesting however, in the context of the black hole rigidity problem, for example, to obtain global uniqueness results. These would be certainly challenging to obtain and will typically require global assumptions on the manifold. 

It is worth noting that in \cite{hol_shao:uc_ads}, the authors show a global uniqueness result for AdS in the context of linearised gravity. They use in particular the so-called Poincaré patch, which allows one to recover Minkoswki from a portion of AdS via a conformal transformation. It is not known yet how one could obtain a global result in the nonlinear case.

\section{Preliminaries}

\subsection{Basic definitions}

In this section, we will give basic definitions and prove preliminary propositions.

The following definition, very general, describes what one should expect the near-boundary region to look like:
\begin{definition}[aAdS Region]
An aAdS region is a smooth manifold with boundary $\mc{M}$ of dimension $n+1, \, n\in \N$, that can be decomposed as: 
\begin{equation}\label{spacetime}
    \mc{M} := \left( 0, \rho_0\right]_\rho \times \mc{I}\text{,}
\end{equation}
with $\mc{I}$ a smooth manifold of dimension $n$, $\rho_0>0$ and $\rho \in C^\infty(\mc{M})$ denoting the the $\left(0, \rho_0\right]$-component of $\mathcal{M}$. 
\end{definition}
Such a splitting of the spacetime, as in \eqref{spacetime}, suggests working with a vertical formalism, where one decomposes the spacetime tensors into their different projections in the level sets of $\rho$. The object appearing in such a decomposition will be referred to as \emph{vertical tensor fields}, defined in the following sense: 

\begin{definition}[Vertical Bundle/Vertical Tensor Fields]\label{sec:vertical}
    Let $\mc{M}$ be an aAdS region. We will define the vertical bundle as the manifold containing tensors fields on $\mc{M}$ on a fixed level-set of $\rho$: 
    \begin{equation*}
        \ms{V}^k{}_l\mc{M} := \bigcup\limits_{\sigma \in \left(0, \rho_0 \right]}  T^k{}_l\lbrace \rho = \sigma \rbrace\text{.}
    \end{equation*}
    In particular, any smooth section of $\ms{V}^k{}_l\mc{M}$ will be defined as a \emph{vertical tensor field}. 
\end{definition}

\begin{remark}
    We shall use the same notations as in \cite{Holzegel22}, for simplicity. In particular, tensor fields $g$ in $\mc{M}$, vertical tensor fields $\ms{g}$, and tensor fields $\mathfrak{g}$ on $\mc{I}$ will be denoted by italic, sans script, and fraktur fonts, respectively. 
\end{remark}

\begin{remark}\label{rmk_def1}
    For any aAdS segment, the following can be noted: \begin{itemize}
        \item Note that any vertical tensor field $\ms{A}$ can be identified with a spacetime vector field by imposing vanishing contractions with $\partial_\rho$ and $d\rho$.
        \item Any vertical tensor field $\mc{A}$ can be seen as a smooth one-parameter family of tensor fields on $\mc{I}$.
        \item In particular, for a vertical tensor field $\ms{A}$, we will denote by $\ms{A}\vert_\sigma$ the tensor field on $\mc{I}$ obtained by setting $\rho = \sigma$, with $\sigma>0$ fixed. 
        \item Similarly, any tensor field $\mathfrak{A}$ on $\mc{I}$ can be extended to a vertical tensor field, also denoted $\mathfrak{A}$, independent of $\rho$. 
    \end{itemize}
    In the following, unless specified otherwise, we will work in the smooth class of tensor fields. 
\end{remark}
We will also use the following notation: 
\begin{definition} Let $\mc{M}$ be an aAdS segment.
    \begin{itemize}
        \item We will use latin letters $a, b, c, \dots$ for the vertical direction and greek letters $\alpha,\beta,\gamma,\dots$ for spacetime ones. 
        \item We will denote by $\partial_a$ coordinate derivatives along the vertical directions.  
        \item For any tensor field $T\in T^k{}_l\mc{M}$, and any vector $X\in T\mc{M}$, we denote by $\Lie_X T$ the Lie derivative of $T$ with respect to $X$. In the particular case of $X = \partial_\rho$, we write: $\Lie_\rho := \Lie_{\partial_\rho}$. 
        \item For any vertical tensor field $\ms{A}$ and $X\in T\mc{M}$, we define $\Lie_X \ms{A}$ as the Lie derivative of the spacetime tensor $A$ corresponding to $\ms{A}$, as noted in Remark \ref{rmk_def1}. 
    \end{itemize}
\end{definition}

\begin{definition}
    Let $\mc{M}$ be an aAdS segment with conformal boundary $\mc{I}$. Let $(U,\varphi)$ be a local chart of $\mc{I}$.  
    \begin{itemize}
        \item  We will assume in the rest of this work that $\bar{U}$ is compact in $\mc{I}$ and that $\varphi$ extends smoothly to $\partial U$. Such a coordinate system will be referred to as \emph{compact}.
        \item We denote by $\varphi_\rho$ the trivial extension of $\varphi$ from $U$ to $\left(0, \rho_0\right]\times U$, i.e. $\varphi_\rho := (\rho, \varphi)$. 
        \item We define, for any vertical tensor $\ms{A}\in \ms{V}^k{}_l\mc{M}$ and with respect to $(U,\varphi)$: 
        \begin{equation}
            \abs{\ms{A}}_{M, \varphi} := \sum\limits_{m=0}^M\sum\limits_{\substack{a_1,\dots, a_m\\b_1, \dots, b_k\\c_1,\dots, c_l}} \abs{\partial^{m}_{a_1\dots a_m}\ms{A}^{b_1, \dots, b_k}_{c_1, \dots, c_l}}\text{,}
        \end{equation}
        where $(\partial_a)$ are coordinate derivatives on $U$. 
        \item We say that a tensor $\ms{A}$ is locally bounded in $C^M$ if:
        \begin{equation}
        \norm{\ms{A}}_{M,\varphi}:=\sup\limits_{\left(0, \rho_0\right]\times U} \abs{\ms{A}}_{M, \varphi} <+\infty\text{.}
        \end{equation}
        \item A vertical tensor $\ms{A}$ converges to a tensor field $\mf{A}$ in $C^M$, denoted $\ms{A}\rightarrow^M \mf{A}$ if for any compact chart $(U,\varphi)$ on $\mc{I}$: 
        \begin{equation}
            \lim\limits_{\sigma\searrow 0} \sup\limits_{\lbrace \sigma \rbrace \times U} \abs{\ms{A}- \mf{A}}_{M,\varphi} = 0\text{.} 
        \end{equation}
        \item A vertical tensor $\ms{A}$ converges rapidly to a vector field $\mf{A}$ in $C^M$, denoted $\ms{A}\Rightarrow^M \mf{A}$, if for any compact chart $(U,\varphi)$ on $\mc{I}$: 
        \begin{equation}
            \ms{A}\rightarrow^M \mf{A}\text{,}\qquad \sup\limits_U \int_0^{\rho_0}\sigma^{-1}\abs{\ms{A}- \mf{A}}_{M,\varphi}\vert_\sigma d\sigma <+\infty \text{.}
        \end{equation}
    \end{itemize}
\end{definition}

Let us now define the type of geometries we will be interested in: 
\begin{definition}\label{def:FG_gauge}
    Let $\mc{M}$ be an aAdS region and $g$ a Lorentzian metric on $\mc{M}$. We define $(\mc{M}, g)$ to be an FG-aAdS segment if there exists $\mf{g}^{(0)}\in T^0{}_2\mc{I}$, a symmetric Lorentzian metric, such that $g$ takes the form: 
    \begin{equation}\label{FG-gauge}
        g = \rho^{-2}\paren{d\rho^2 + \ms{g}(\rho)}\text{,}\qquad \ms{g} \rightarrow^0 \mathfrak{g}^{(0)}\text{,} 
    \end{equation}
    with $\ms{g}(\rho)\in \ms{V}^0{}_2\mc{M}$ a Lorentzian metric. 
\end{definition}

\begin{definition}[Strongly FG-aAdS segment]\label{def_strongly_FG_aAdS}
    A FG-aAdS segment $(\mc{M}, g)$ is qualified as \emph{strongly FG-aAdS} if there exists a symmetric tensor $\mf{g}^{(2)}$ on $\mc{I}$ such that the following limits hold: 
    \begin{gather*}
        \ms{g} \rightarrow^3 \mf{g}^{(0)}\text{,}\qquad \Lie_\rho\ms{g}\rightarrow^2 0\text{,}\qquad \Lie_\rho^2\ms{g} \rightarrow^1 2\mf{g}^{(2)}\text{,}\qquad |\Lie_\rho^3\ms{g}|_{0, \varphi}\lesssim 1\text{.}
    \end{gather*}
\end{definition}

\begin{remark}
We will show, in Section \ref{sec:FG_expansion}, that imposing $g$ from Definition \ref{def:FG_gauge} to satisfy the Einstein equations \eqref{Einstein_intro}, with $T$ the stress-energy tensor from the electrovacuum theory, implies that $g$ must satisfy automatically Definition \ref{def_strongly_FG_aAdS}. This is true in particular in the vacuum case.
\end{remark} 

A gauge condition such as \eqref{FG-gauge} will be referred to in this work as a \emph{Fefferman-Graham gauge}. The manifold $(\mc{I}, \mathfrak{g}^{(0)})$ will be referred to as the \emph{conformal boundary} of $\paren{\mc{M},\, g,\, \rho}$. The Definition \ref{def_strongly_FG_aAdS} will turn out to be useful in Section \ref{sec_carleman}

Finally, we will need the following notations -- useful to settle the geometry we will be interested in: 
\begin{remark}
    Let $(\mc{M},\, g)$ be a FG-aAdS segment, we will write 
    \begin{itemize}
        \item associated to $g$: $g^{-1}$, $\nabla$, $\operatorname{R}$, $\operatorname{Ric}$, $\operatorname{Rs}$, 
        \item associated to $\ms{g}$: $\ms{g}^{-1}$, $\ms{D}$, $\ms{R}$, $\ms{Ric}$, $\ms{Rs}$,
        \item associated to $\mathfrak{g}^{(0)}$: $(\mathfrak{g}^{(0)})^{-1}$, $\mathfrak{D}^{(0)}$, $\mathfrak{R}^{(0)}$, $\mathfrak{Ric}^{(0)}$, $\mathfrak{Rs}^{(0)}$, 
    \end{itemize}
    as the metric inverse, the Levi-Civita connection, the Riemann tensor, the Ricci tensor and the Ricci scalar, respectively. We will also write: 
    \begin{equation*}
        \Box_{\ms{g}} := \ms{g}^{ab}\ms{D}^2_{ab}\text{.}
    \end{equation*}
    
    Finally, we will denote as $W$ the Weyl tensor associated to ${g}$. 
\end{remark}

\subsection{Mixed covariant formalism}\label{sec:mixed_formalism}

In this section, we will briefly remind the reader about the formalism used in \cite{Holzegel22}, convenient when it comes to deal with tensorial quantities on spacetimes such as \eqref{FG-gauge}. 

We will use the same notational conventions, we will give here a quick summary: 
\begin{definition}
    We will write multi-indices as overlined latin letters, $\bar{a}:= a_1\dots a_l$, as an example. Futhermore, for any $1\leq i\leq l$, we will write $\hat{a}_i$ as the multi-index $\bar{a}$ with index $a_i$ removed and $\hat{a}_i\brak{b}$ as the multi-index $\bar{a}$ with index $a_i$ replaced by $b$, e.g. 
        \[
        \hat{a}_i\brak{b} := a_1 a_2 \dots a_{i-1} b a_{i+1}\dots a_l\text{.}
        \] 
    The generalisation to the replacement of more indices is straightforward. 

    Furthermore, for any (vertical) tensor $A$ of rank $(0, l)$, we will use the following convention for the anti-symmetrisation and symmetrisation of indices: 
    \begin{gather*}
        A_{[a_1\dots a_l]} = \frac{1}{l!}\sum\limits_{p\in S_l} \operatorname{sgn}(p) \cdot A_{p(a_1)\dots p(a_l)}\text{,}\qquad 
        A_{(a_1\dots a_l)} = \frac{1}{l!}\sum\limits_{p\in S_l}  A_{p(a_1)\dots p(a_l)}\text{.}
    \end{gather*}
    We will also write $A_{[a_1\dots a_{i-1}|a_i \dots a_{j-1}|a_j\dots a_l]}$, where the permutation acts only on the indices $\lbrace a_1,\dots a_{i-1}a_j,\dots a_l\rbrace$, and similarly for the symmetrisation. 
\end{definition}

\begin{definition}
    Let $\ms{A}$ be a vertical tensor field of rank $(0, k)$. With respect to any compact coordinate system $(U,\varphi)$ on $\mc{I}$, we define the divergence of $\ms{A}$ with respect to $\ms{g}$, the following vertical tensor of rank $(0, k-1)$: 
    \begin{align*}
        (\ms{D} \cdot \ms{A})_{\bar{a}} = \ms{g}^{cd}\ms{D} \ms{A}_{\hat{a}_1[d]}\text{.}
    \end{align*}
    Similarly, for any $\mf{A}$ tensor field of rank $(0, l)$ on $\mc{I}$, we define with respect to $(U,\varphi)$:
    \begin{equation*}
        (\mf{D}\cdot \mf{A})_{\bar{a}} = (\mf{g}^{(0)})^{cd}\mf{D}_c \mf{A}_{\hat{a}_1[d]}\text{.}
    \end{equation*}
\end{definition}

The following proposition allows one to construct a connection on the vertical bundle, allowing in particular differentiating along spacetime directions in a covariant manner vertical tensor fields. More precisely, the usual Levi-Civita covariant derivative for $\ms{g}$ allows to differentiate only along vertical direction. The connection $\overline{\ms{D}}$ will also allow one to differentiate along $\rho$. 

\begin{proposition}\label{sec:aads_prop_covariant_D}
    Let $(\mc{M}, g)$ be a FG-aAdS segment. There exists a unique family of connections $\ol{\ms{D}}$ on the vertical bundle $\ms{V}^k{}_l\mc{M}$, for all $k,\, l$, which can be written, for any vertical tensor field $\ms{A}$, and on any compact coordinate chart $(U,\varphi)$: 
    \begin{align}\label{vertical_derivative}
        &\notag \ol{\ms{D}}_a \ms{A}^{\bar{b}}{}_{\bar{c}} = \ms{D}_a \ms{A}^{\bar{b}}{}_{\bar{c}}\text{,}\\
        &\ol{\ms{D}}_\rho \ms{A}^{\bar{b}}{}_{\bar{c}} = \Lie_\rho \ms{A}^{\bar{b}}{}_{\bar{c}} + \frac{1}{2}\sum\limits_{i=1}^k \ms{g}^{a_i d}\Lie_\rho \ms{g}_{de}\ms{A}^{\hat{b}_i[e]}{}_{\bar{c}} - \frac{1}{2}\sum\limits_{j=1}^l \ms{g}^{de}\Lie_\rho\ms{g}_{b_j d}\ms{A}^{\bar{b}}{}_{\hat{c}_j[e]}\text{.}
    \end{align}
    Furthermore, such a connection satisfies, for any spacetime vector field $X$ on $\mc{M}$:
    \begin{itemize}
        \item For any vertical tensors $\ms{A}, \, \ms{B}$: 
        \[
            \ol{\ms{D}}_X\paren{\ms{A}\otimes \ms{B}} =  \ol{\ms{D}}_X A \otimes \ms{B} + \ms{A} \otimes  \ol{\ms{D}}_X \ms{B}\text{.}
        \]
        \item For any vertical tensor field $\ms{A}$, any contraction $\mc{C}$: 
        \[
         \ol{\ms{D}}_X\paren{\mc{C} \ms{A}} = \mc{C}\paren{ \ol{\ms{D}}_X \ms{A}}\text{.}
        \]
        \item $\ol{\ms{D}}$ is compatible with $\ms{g}$: 
        \[
        \ol{\ms{D}}_X \ms{g} = \ol{\ms{D}}_X \ms{g}^{-1} = 0\text{.}
        \]
    \end{itemize}
\end{proposition}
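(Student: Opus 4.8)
The plan is to take the explicit formula \eqref{vertical_derivative} as the \emph{definition} of $\ol{\ms{D}}$ and then to verify, in turn, that it is well posed, that it defines a connection on each bundle $\ms{V}^k{}_l\mc{M}$, that it satisfies the three listed properties, and finally that these requirements leave no freedom, so that $\ol{\ms{D}}$ is unique.

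\textbf{Well-posedness.} First I would note that $\Lie_\rho$ maps vertical tensor fields to vertical tensor fields: in an admissible chart $\varphi_\rho = (\rho,\varphi)$ the spacetime tensor associated with a vertical tensor $\ms{A}$ has the components $\ms{A}^{\bar b}{}_{\bar c}$ in the purely vertical slots and $0$ in every slot involving $\partial_\rho$ or $d\rho$, and since $\partial_\rho$ is a coordinate vector field, $\Lie_\rho$ acts componentwise as $\partial_\rho$ and therefore preserves the vanishing of those slots. The only coordinate changes permitted between two admissible charts are of the form $\rho\mapsto\rho$, $x\mapsto\tilde x(x)$, whose Jacobians do not depend on $\rho$; hence $\Lie_\rho\ms{A}^{\bar b}{}_{\bar c}$ transforms, at each fixed $\rho$, exactly like a tensor on $\mc{I}$, and likewise $\ms{g}^{a_i d}\Lie_\rho\ms{g}_{de}$ is a genuine $(1,1)$ vertical tensor. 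Consequently the right-hand side of \eqref{vertical_derivative} is a chart-independent vertical tensor field. Together with the Levi-Civita connection $\ms{D}$ of $\ms{g}$ in the vertical directions, this defines $\ol{\ms{D}}_X\ms{A} := X^\rho\,\ol{\ms{D}}_\rho\ms{A} + X^a\,\ol{\ms{D}}_a\ms{A}$ for a general spacetime vector field $X = X^\rho\partial_\rho + X^a\partial_a$, which is manifestly $C^\infty(\mc{M})$-linear in $X$.

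\textbf{Connection axioms and the three properties.} Additivity in $\ms{A}$ is clear, and the Leibniz rule in the function slot follows from $\Lie_\rho(f\ms{A}) = (\partial_\rho f)\ms{A} + f\Lie_\rho\ms{A}$ together with the $C^\infty$-bilinearity of the correction terms, so $\ol{\ms{D}}$ is a connection. For the stated properties, the $\ol{\ms{D}}_a$ component is the Levi-Civita connection and so satisfies all of them along vertical directions; everything reduces to the $\ol{\ms{D}}_\rho$ component. The Leibniz rule over $\otimes$ holds because $\Lie_\rho$ is a derivation and the correction terms act one index at a time, so they distribute across a tensor product. Compatibility with contractions is the one genuinely algebraic point: when an upper index $a_i$ is contracted against a lower index $c_j$, the associated correction terms $+\tfrac12\ms{g}^{a_i d}\Lie_\rho\ms{g}_{de}$ and $-\tfrac12\ms{g}^{de}\Lie_\rho\ms{g}_{c_j d}$, after relabeling the summed indices and using the symmetry of $\ms{g}$ and of $\Lie_\rho\ms{g}$, become exact negatives of one another, leaving only the $\Lie_\rho$-term, which commutes with contraction. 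Finally $\ol{\ms{D}}_\rho\ms{g}_{ab} = \Lie_\rho\ms{g}_{ab} - \tfrac12\Lie_\rho\ms{g}_{ab} - \tfrac12\Lie_\rho\ms{g}_{ab} = 0$, and $\ol{\ms{D}}_\rho\ms{g}^{-1} = 0$ then follows by applying the Leibniz rule and contraction-compatibility to $\ol{\ms{D}}_\rho(\ms{g}\otimes\ms{g}^{-1})$.

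\textbf{Uniqueness and main obstacle.} Since a connection on the vertical bundles (acting on scalars as the directional derivative, and respecting the Leibniz rule over $\otimes$ and commutation with contractions) is completely determined by its action on vertical $(1,0)$ tensors in every admissible chart, and since \eqref{vertical_derivative} prescribes precisely $\ol{\ms{D}}_\rho$ and $\ol{\ms{D}}_a$ on such fields, any connection representable by \eqref{vertical_derivative} must coincide with the one just constructed; this gives uniqueness. (One can also motivate the formula intrinsically: $\ol{\ms{D}}$ is the unique $\ms{g}$-compatible connection restricting to $\ms{D}$ in vertical directions, acting as $\partial_\rho$ on scalars, and whose $\partial_\rho$-connection coefficient is symmetric --- metric compatibility forces its symmetric part to be $\tfrac12\ms{g}^{-1}\Lie_\rho\ms{g}$, and the symmetry condition kills the antisymmetric part.) I do not expect a serious obstacle: the argument is essentially bookkeeping. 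The step requiring the most care is the verification that $\ol{\ms{D}}_\rho$ commutes with contractions, since it is exactly there that the coefficient $\tfrac12$ and the index placement of the metric factors in \eqref{vertical_derivative} are forced, and the index relabeling must be carried out carefully.
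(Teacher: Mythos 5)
Your proposal is correct and follows essentially the same route as the paper: take \eqref{vertical_derivative} as the definition and verify the stated properties directly, the key computation being $\ol{\ms{D}}_\rho\ms{g}=\ol{\ms{D}}_\rho\ms{g}^{-1}=0$, which you carry out exactly as the paper does. The paper's proof only records the metric-compatibility check and declares the remaining properties (well-posedness, Leibniz rule, contraction compatibility, uniqueness) trivial consequences of the formula, so your additional verifications are just the routine bookkeeping the paper omits.
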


\begin{proof}
     See Appendix \ref{app:D_bar}
\end{proof}

We are now ready to define the mixed tensor tensor fields, namely, tensor fields that will contain both spacetime and vertical contributions. More precisely, 
\begin{definition}\label{def_mixed_connection}
    Let $(\mc{M}, g)$ be a FG-aAdS segment. We define the following bundle: 
    \begin{equation*}
        T^\kappa{}_\lambda\ms{V}^k{}_l\mc{M} := T^\kappa{}_\lambda\mc{M}\otimes \ms{V}^k{}_l\mc{M}\text{,}
    \end{equation*}
    as the \emph{mixed bundle} of rank $(\kappa, \lambda; k, l)$. Smooth sections of this tensor bundle are referred to as \emph{mixed tensor fields}. Moreover, we define the connection $\ol{\nabla}$ on $T^\kappa{}_\lambda\ms{V}^k{}_l\mc{M}$ as the following: for any mixed tensor field $T:=A\otimes \ms{B}$ and vector field $X$ in $T\mc{M}$, we define the following operator:
    \begin{equation*}
        \ol{\nabla}_XT := \nabla_X A \otimes \ms{B} + A \otimes \ol{\ms{D}}_X\ms{B}\text{,}
    \end{equation*}
    extended linearly on the mixed bundle. Namely $\ol{\nabla}$ is simply the tensor product of the connections $\nabla$ and $\ol{\ms{D}}$. 
\end{definition}

This definition will allow use to perform operations on both vertical and spacetime tensor fields simultaneously.
\begin{remark}\label{rmk_mixed}
    There is an obvious bijection between $T^\kappa{}_\lambda\mc{M}$ and $T^\kappa{}_\lambda\ms{V}^0{}_0\mc{M}$. Namely, any spacetime tensor field of order $(\kappa, \lambda)$ can be seen as a mixed tensor field of order $(\kappa, \lambda; 0, 0)$, and similarly for vertical tensor fields. 
\end{remark}

The following proposition shows that the connection $\ol{\nabla}$ defined above satisfies the right properties:

\begin{proposition}[Proposition 2.28 of \cite{McGill20}]
    Let $(\mc{M}, g)$ be a FG-aAdS segment and let $\mathbf{A}, \mathbf{B}$ be two mixed tensor fields. Then, for any spacetime vector field $X$:
    \begin{itemize}
        \item The connection $\ol{\nabla}$ satisfies the Leibniz rule: 
        \begin{gather*}
        \ol{\nabla}_X (\mathbf{A}\otimes \mathbf{B}) = (\ol{\nabla}_X \mathbf{A})\otimes \mathbf{B} +  \mathbf{A}\otimes (\ol{\nabla}_X\mathbf{B})\text{,}
        \end{gather*}
        \item The connection $\ol{\nabla}$ is compatible with both the vertical and the spacetime metric:
        \begin{gather*}
            \ol{\nabla}_X g = \ol{\nabla}_X g^{-1} = 0\text{,}\qquad \ol{\nabla}_X \ms{g} = \ol{\nabla}_X \ms{g}^{-1} = 0\text{.}
        \end{gather*}
    \end{itemize}
\end{proposition}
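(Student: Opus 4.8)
The statement to prove is the proposition asserting that the mixed connection $\ol{\nabla} = \nabla \otimes \ol{\ms{D}}$ satisfies the Leibniz rule with respect to tensor products of mixed tensor fields, and that it is compatible with both the spacetime metric $g$ (and its inverse) and the vertical metric $\ms{g}$ (and its inverse).

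The plan is to reduce everything to the corresponding properties of the two factor connections $\nabla$ (the Levi-Civita connection of $g$) and $\ol{\ms{D}}$ (the vertical connection constructed in Proposition \ref{sec:aads_prop_covariant_D}), exploiting the fact that $\ol{\nabla}$ is by Definition \ref{def_mixed_connection} precisely the tensor-product connection. First I would verify that $\ol{\nabla}$ is well-defined on the mixed bundle: a general mixed tensor field is a finite sum of decomposables $A \otimes \ms{B}$, so I must check that the defining formula $\ol{\nabla}_X(A \otimes \ms{B}) = \nabla_X A \otimes \ms{B} + A \otimes \ol{\ms{D}}_X \ms{B}$ is bilinear over the relevant function ring in $(A, \ms{B})$ and hence descends to the tensor product; this is the standard construction of a tensor-product connection and follows from $\nabla$ and $\ol{\ms{D}}$ each being connections (in particular $C^\infty(\mc{M})$-linear in the lower slot and Leibniz in the argument). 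It is also $C^\infty(\mc{M})$-linear in $X$ and satisfies the Leibniz rule $\ol{\nabla}_X(fT) = (Xf)T + f\ol{\nabla}_X T$, again inherited slot-by-slot.

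Next, for the Leibniz rule over tensor products $\ol{\nabla}_X(\mathbf{A} \otimes \mathbf{B})$: by linearity it suffices to take $\mathbf{A} = A_1 \otimes \ms{B}_1$ and $\mathbf{B} = A_2 \otimes \ms{B}_2$ decomposable. Then $\mathbf{A} \otimes \mathbf{B}$, after the canonical reshuffle of factors, is $(A_1 \otimes A_2) \otimes (\ms{B}_1 \otimes \ms{B}_2)$, and applying the definition gives $\nabla_X(A_1 \otimes A_2) \otimes (\ms{B}_1 \otimes \ms{B}_2) + (A_1 \otimes A_2) \otimes \ol{\ms{D}}_X(\ms{B}_1 \otimes \ms{B}_2)$. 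Now I expand each factor's Leibniz rule — $\nabla$ is a connection on the spacetime tensor algebra so it obeys Leibniz over $\otimes$, and $\ol{\ms{D}}$ does too by the first bullet of Proposition \ref{sec:aads_prop_covariant_D} — and regroup the four resulting terms into $(\ol{\nabla}_X \mathbf{A}) \otimes \mathbf{B} + \mathbf{A} \otimes (\ol{\nabla}_X \mathbf{B})$. This is a purely formal rearrangement once the two factor-Leibniz identities are in hand. For metric compatibility, observe that under the identifications of Remark \ref{rmk_mixed} the spacetime metric $g$ is a mixed tensor field of type $(0,2;0,0)$ and the vertical metric $\ms{g}$ is of type $(0,0;0,2)$. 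Then $\ol{\nabla}_X g = \nabla_X g \otimes (\text{scalar }1) = 0$ since $\nabla$ is the Levi-Civita connection of $g$, and similarly $\ol{\nabla}_X g^{-1} = 0$; while $\ol{\nabla}_X \ms{g} = 1 \otimes \ol{\ms{D}}_X \ms{g} = 0$ and $\ol{\nabla}_X \ms{g}^{-1} = 0$ by the third bullet (compatibility with $\ms{g}$) of Proposition \ref{sec:aads_prop_covariant_D}.

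I do not anticipate a serious obstacle here — this proposition is essentially bookkeeping, asserting that the tensor product of two connections is a connection and inherits compatibility with each factor metric. The one point requiring a little care is checking well-definedness on the (non-decomposable) mixed bundle, i.e. that the formula respects the bilinear relations defining $T^\kappa{}_\lambda\mc{M} \otimes \ms{V}^k{}_l\mc{M}$; but since $\ol{\ms{D}}$ is already known from Proposition \ref{sec:aads_prop_covariant_D} to be a genuine (tensorial, Leibniz) connection on the vertical bundle, and $\nabla$ on the spacetime bundle, this is the textbook argument for tensor-product connections and presents no real difficulty. One may also simply cite Proposition 2.28 of \cite{McGill20}, of which this is a restatement, and note that the proof is identical after the vertical connection $\ol{\ms{D}}$ has been established; I would include the short slot-by-slot verification above for completeness.
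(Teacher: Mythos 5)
Your proof is correct. The paper itself gives no argument for this proposition beyond citing Proposition 2.28 of \cite{McGill20}; your slot-by-slot verification — reducing to decomposable mixed tensors, invoking the Leibniz rules of $\nabla$ and $\ol{\ms{D}}$ (Proposition \ref{sec:aads_prop_covariant_D}) for the product rule, and using Levi-Civita compatibility of $\nabla$ with $g$ together with the $\ms{g}$-compatibility of $\ol{\ms{D}}$ under the identifications of Remark \ref{rmk_mixed} — is exactly the standard tensor-product-connection argument that the citation stands in for, with the well-definedness check correctly noted.
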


The above definition allows one to make sense of covariant derivatives of mixed tensors, and especially a covariant wave operator: 
\begin{definition}\label{def_nabla_bar}
    Let $(\mc{M}, g)$ be a FG-aAdS segment, $\mathbf{A}$ a mixed tensor field of order $(\kappa, \lambda; k, l)$ and $X$ a vector field in $\mc{M}$. We define: 
    \begin{itemize}
        \item The mixed covariant derivative of $\mathbf{A}$, $\ol{\nabla}\mathbf{A}$, as the mixed tensor field of order $(\kappa, \lambda + 1; k, l)$ mapping $X$ to $\nabla_X \mathbf{A}$. 
        \item The mixed Hessian $\ol{\nabla}^2\mathbf{A}$, a mixed tensor field of order $(\kappa, \lambda + 2; k,l)$ as the mixed covariant derivative of $\ol{\nabla}\mathbf{A}$. 
        \item The mixed d'Alembertian, or mixed wave operator, as the $g$--trace of the Hessian, namely, with respect to any system of coordinates, $\ol{\Box}\mathbf{A}:=g^{\alpha\beta}\ol{\nabla}^2_{\alpha\beta}\mathbf{A}$. 
    \end{itemize}
\end{definition}

\begin{remark}
    Observe that, although we will only apply $\ol{\Box}$ on vertical tensor fields, its definition is necessary since, for any vertical tensor field of order $(k,l)$, the vertical covariant derivative $\overline{\ms{D}}\ms{A}$ is a mixed tensor field of order $(0, 1; k,l)$.
\end{remark} 

Finally, we will need the expression of the connection coefficients: 
\begin{proposition}[Christoffel symbols]\label{prop_christoffel_FG}
    Let $(\mc{M}, g)$ be a FG-aAdS segment and let $(U, \varphi)$ be a local coordinate system on $\mc{I}$ and let $\Gamma$, $\ms{\Gamma}$ be the Christoffel symbols with respect to $\nabla$ and $\ol{\ms{D}}$, respectively: 
    \begin{gather*}
        \nabla_\alpha \partial_\beta := \Gamma^\gamma_{\alpha\beta}\partial_\gamma\text{,}\qquad \ol{\ms{D}}_\alpha\partial_\beta = \ms{\Gamma}_{\alpha\beta}^\gamma\partial_\gamma\text{.}
    \end{gather*}
    These symbols can be expressed as: 
    \begin{gather*}
        \Gamma^{\alpha}_{\rho\rho} = - \rho^{-1}\delta^\alpha_\rho\text{,}\qquad \Gamma^{\rho}_{a\rho} = 0\text{,}\\
        \Gamma_{ab}^\rho = \rho^{-1}\ms{g}_{ab} - \frac{1}{2}\Lie_\rho \ms{g}_{ab}\text{,}\qquad \ms{\Gamma}_{ab}^\rho = - \frac{1}{2}\Lie_\rho \ms{g}_{ab}\text{,}\\
        \Gamma^a_{b\rho} = -\rho^{-1}\delta^a_b + \frac{1}{2}\ms{g}^{ac}\Lie_\rho\ms{g}_{cb}\text{,}\qquad \ms{\Gamma}^a_{b\rho} = \frac{1}{2}\ms{g}^{ac}\Lie_\rho \ms{g}_{cb}\text{.}
    \end{gather*}
    Furthermore, for any mixed tensor field $\mathbf{A}$ of rank $(\kappa, \lambda; k, l)$, one has on $(U, \varphi)$: 
    \begin{align*}
        \ol{\nabla}_\gamma\mathbf{A}^{\bar{\alpha}}{}_{\bar{\beta}}{}^{\bar{a}}{}_{\bar{b}} = \partial_\gamma\mathbf{A}^{\bar{\alpha}}{}_{\bar{\beta}}{}^{\bar{a}}{}_{\bar{b}} &+ \sum\limits_{i=1}^{\kappa}\Gamma^{\alpha_i}_{\gamma \delta}\mathbf{A}^{\hat{\alpha}_i[\delta]}{}_{\bar{\beta}}{}^{\bar{a}}{}_{\bar{b}} - \sum\limits_{j=1}^{\lambda}\Gamma^{\delta}_{\gamma b_j}\mathbf{A}^{\bar{\alpha}}{}_{\hat{\beta}_j[\delta]}{}^{\bar{a}}{}_{\bar{b}} \\
        &+\sum\limits_{i=1}^{k}\ms{\Gamma}^{a_i}_{\gamma d}\mathbf{A}^{\bar{\alpha}}{}_{\bar{\beta}}{}^{\hat{a}_i[d]}{}_{\bar{b}}-\sum\limits_{j=1}^{l}\ms{\Gamma}^{d}_{\gamma b_j}\mathbf{A}^{\bar{\alpha}}{}_{\bar{\beta}}{}^{\bar{a}}{}_{\hat{b}_i[d]}\text{.}
    \end{align*}
\end{proposition}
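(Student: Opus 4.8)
The plan is to carry out a direct coordinate computation in the chart $(\rho, x^a)$ adapted to the splitting \eqref{spacetime}, where $(x^a)$ are the coordinates of $(U,\varphi)$. From the Fefferman--Graham form \eqref{FG-gauge} one reads off the only nonzero metric components
\[
g_{\rho\rho} = \rho^{-2}\text{,}\qquad g_{ab} = \rho^{-2}\ms{g}_{ab}\text{,}\qquad g^{\rho\rho} = \rho^{2}\text{,}\qquad g^{ab} = \rho^{2}\ms{g}^{ab}\text{,}
\]
with $g_{\rho a} = g^{\rho a} = 0$, together with the derivatives $\partial_\rho g_{\rho\rho} = -2\rho^{-3}$, $\partial_\rho g_{ab} = -2\rho^{-3}\ms{g}_{ab} + \rho^{-2}\Lie_\rho\ms{g}_{ab}$ (the coordinate components of $\Lie_\rho\ms{g}$ being $\partial_\rho\ms{g}_{ab}$), $\partial_a g_{\rho\rho} = 0$ and $\partial_c g_{ab} = \rho^{-2}\partial_c\ms{g}_{ab}$. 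First I would feed these into the Koszul formula $\Gamma^\gamma_{\alpha\beta} = \tfrac12 g^{\gamma\delta}(\partial_\alpha g_{\beta\delta} + \partial_\beta g_{\alpha\delta} - \partial_\delta g_{\alpha\beta})$ and evaluate it on each combination of $\rho$- and vertical indices. Each case is one or two lines: for $\alpha=\beta=\rho$ it gives $\Gamma^\alpha_{\rho\rho} = -\rho^{-1}\delta^\alpha_\rho$; for the pair $(a,\rho)$ it gives $\Gamma^\rho_{a\rho}=0$ and $\Gamma^a_{b\rho} = -\rho^{-1}\delta^a_b + \tfrac12\ms{g}^{ac}\Lie_\rho\ms{g}_{cb}$; for the pair $(a,b)$ it gives $\Gamma^\rho_{ab} = \rho^{-1}\ms{g}_{ab} - \tfrac12\Lie_\rho\ms{g}_{ab}$, while the purely vertical component $\Gamma^c_{ab}$ equals the Levi--Civita symbols of $\ms{g}$ since the conformal factor cancels. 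A convenient cross-check is to write $g = \rho^{-2}\bar{g}$ with $\bar{g} := d\rho^2 + \ms{g}(\rho)$ and apply the conformal-change rule with conformal factor $-\log\rho$: since $\bar{g}$ is block-diagonal with $\bar{g}^{\rho\rho}=1$, this reproduces the same formulas.

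Next I would read off the vertical Christoffel symbols $\ms{\Gamma}$ from the defining formula \eqref{vertical_derivative} of Proposition \ref{sec:aads_prop_covariant_D}. Applying $\ol{\ms{D}}_a$ to the coordinate frame reduces to $\ms{D}_a$, so $\ms{\Gamma}^c_{ab}$ are the Levi--Civita symbols of $\ms{g}$; applying $\ol{\ms{D}}_\rho$ to $\partial_b$, whose components $\delta^c_b$ are $\rho$-independent so that the $\Lie_\rho$-term of \eqref{vertical_derivative} is the only surviving one, leaves precisely $\ms{\Gamma}^a_{b\rho} = \tfrac12\ms{g}^{ac}\Lie_\rho\ms{g}_{cb}$. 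Equivalently, $\ms{\Gamma}$ coincides with the Christoffel symbols of the unphysical metric $\bar{g} = d\rho^2 + \ms{g}(\rho)$, which one computes immediately from its block-diagonal form; in particular the remaining coefficient $\ms{\Gamma}^\rho_{ab} = \bar{\Gamma}^\rho_{ab} = -\tfrac12\Lie_\rho\ms{g}_{ab}$ is obtained from the Koszul computation of the previous step, and satisfies the clean relation $\Gamma^\rho_{ab} = \ms{\Gamma}^\rho_{ab} + \rho^{-1}\ms{g}_{ab}$, which serves as a further sanity check.

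Finally, the coordinate expression for $\ol{\nabla}_\gamma\mathbf{A}$ follows from the definition of $\ol{\nabla}$ as the tensor-product connection $\nabla\otimes\ol{\ms{D}}$ (Definition \ref{def_mixed_connection}). Writing a mixed tensor $\mathbf{A}$ locally as a sum of tensor products of the coordinate vectors $\partial_{\alpha_i},\partial_{a_i}$ and coordinate covectors $dx^{\beta_j},dx^{b_j}$, one expands $\ol{\nabla}_\gamma$ by the Leibniz rule and linearity, uses $\nabla_\gamma\partial_{\alpha_i} = \Gamma^\delta_{\gamma\alpha_i}\partial_\delta$ and $\nabla_\gamma dx^{\beta_j} = -\Gamma^{\beta_j}_{\gamma\delta}dx^\delta$ on the spacetime factors, and $\ol{\ms{D}}_\gamma\partial_{a_i} = \ms{\Gamma}^d_{\gamma a_i}\partial_d$, $\ol{\ms{D}}_\gamma dx^{b_j} = -\ms{\Gamma}^{b_j}_{\gamma d}dx^d$ on the vertical factors, and then collects terms; because the only nonzero vertical Christoffel symbols appearing in these derivatives are $\ms{\Gamma}^c_{ab}$ and $\ms{\Gamma}^a_{b\rho}$, the vertical sums run only over vertical indices, which yields exactly the displayed formula.

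I do not expect a conceptual obstacle: the proposition is a bookkeeping exercise once the connections $\ol{\ms{D}}$ and $\ol{\nabla}$ are available from the earlier results. The only points requiring genuine care are (i) propagating the conformal weights $\rho^{\pm 2}$ correctly through the $\rho$-derivatives in the Koszul formula (this is where the two $\Lie_\rho\ms{g}$-type contributions to $\partial_\rho g_{ab}$ must not be confused), and (ii) keeping straight, in the mixed-tensor identity, which connection coefficients act on the spacetime slots and which on the vertical slots — a distinction dictated entirely by the tensor-product structure $\ol{\nabla} = \nabla\otimes\ol{\ms{D}}$.
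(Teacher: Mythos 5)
Your proposal is correct and takes essentially the same route as the paper's appendix proof: the spacetime symbols $\Gamma$ are read off directly from the FG gauge by a Koszul computation, the vertical symbols $\ms{\Gamma}$ come from the defining formula for $\ol{\ms{D}}$ (equivalently from the unphysical metric $d\rho^2+\ms{g}$), and the mixed-tensor identity follows from a Leibniz expansion of $\ol{\nabla}=\nabla\otimes\ol{\ms{D}}$ on coordinate frames and cobases. One trivial wording slip: in computing $\ol{\ms{D}}_\rho\partial_b$ it is the correction term $\tfrac12\ms{g}^{ac}\Lie_\rho\ms{g}_{cb}$ that survives (the $\Lie_\rho$ term vanishes because the components $\delta^c_b$ are $\rho$-independent), which is in any case the conclusion you state.
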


\begin{proof}
    See Appendix \ref{app:christoffel}
\end{proof}

\subsection{General formulas and notations}

This section, technical, will be dedicated to describing the useful notations and formulas for vertical tensor fields used throughout this work. 

\begin{definition}\label{def_remainder}
    Let $(\mc{M}, g)$ be a FG-aAdS segment. 
    \begin{itemize}
        \item Let $K\geq 1$, for any vertical tensor $\ms{A}$, we will write: 
        \begin{equation*}
            \ms{A}^K := \bigotimes^K \ms{A}\text{.}
        \end{equation*}
        Furthermore, we will write for the inverse metric $\ms{g}^{-K} := \left(\ms{g}^{-1}\right)^K$.
        \item For simplicity, we define the following symmetric vertical $(0,2)$-tensor: 
        \begin{equation*}
            \ms{m} := \Lie_\rho \ms{g}\text{.}
        \end{equation*}
    \end{itemize}
\end{definition}

In this work, it will be convenient to write lower order terms in a clever way, exhibiting the dependence on the different fields. As a consequence, we will follow the notations of \cite{Shao22} and \cite{Holzegel22}: 
\begin{definition}\label{sec:aads_def_errors}
    Let $(\mc{M}, g)$ be a FG-aAdS segment, $N\geq 1$, and let $(A_j)_{j=1}^{N}$ be a collection of non-metric vertical tensor fields on $\mc{M}$. We write $\mathcal{S}(\ms{g};\ms{A}_1, \dots, \ms{A}_N)$ as the following expression: 
    \begin{equation}\label{def_error}
        \sum\limits_{m_1, m_2=0}^M \sum\limits_{j=1}^J Q_j\paren{\ms{g}^{-m_1}\otimes \ms{g}^{+m_2}\otimes \ms{A}_1\otimes \dots \otimes \ms{A}_N}\text{,}\quad J\geq 0\text{,} \qquad M\geq 0\text{,}
    \end{equation}
    where $Q_j$, $1\leq j\leq J$, is a combination of zero or more of the following trivial operations: 
    \begin{enumerate}
        \item a non-metric contraction,
        \item a component permutation,
        \item a multiplication by a constant.
    \end{enumerate}
    If $M=0$, we will simply write $\mc{S}(\ms{A}_1, \dots, \ms{A}_N)$. 
\end{definition}

\begin{proposition}\label{sec:aads_commutation_Lie_D}
    Let $(\mc{M}, g)$ be an FG-aAdS segment, let $(U,\varphi)$ be a coordinate system on $\mc{I}$ and let $\ms{A}\in \ms{V}^k_l \mc{M}$. For any vertical tensor field $\ms{A}$:
    \begin{align}
        \label{commutation_L_D}&\left[\Lie_\rho, \ms{D}\right] \ms{A} = \mc{S}\paren{\ms{g}; \ms{D}\ms{m}, \ms{A}}\text{,}\\
        \label{commutation_D_D}&\left[\bar{\ms{D}}_\rho, \ms{D}\right]\ms{A} = \mc{S}\paren{\ms{g}; \ms{m}, \ms{DA}} + \mc{S}\paren{\ms{g}; \ms{Dm}, \ms{A}}\text{.}
    \end{align}
    For any $p\in \R$, the following identity holds: 
    \begin{equation}\label{wave_power}
        \ol{\Box}(\rho^p\ms{A}) = \rho^p \ol{\Box}\ms{A} + 2p\rho^{p+1}\ol{\ms{D}}_\rho\ms{A} - p(n-p)\rho^p\ms{A}+p\rho\mc{S}\paren{\ms{g}; \ms{m}, \rho^p\ms{A}}\text{.}
    \end{equation}
    Furthermore, the covariant vertical wave operator $\ol{\Box}$ can be written as: 
    \begin{align}\label{wave_op}
        \ol{\Box} \ms{A} =& \rho^2 \Lie_\rho^2 \ms{A} -(n-1)\rho\cdot \Lie_\rho \ms{A} + \rho^2 {\Box}_{\ms{g}} \ms{A} \\
        \notag&+ \rho^2\mc{S}\paren{\ms{g}; \ms{m}, \Lie_\rho \ms{A}} + \rho \mc{S}\paren{\ms{g}; \ms{m}, \ms{A}} + \rho^2 \mc{S}\paren{\ms{g}; \Lie_\rho \ms{m}, \ms{A}} + \rho^2 \mc{S}\paren{\ms{g}; \ms{m}^2, \ms{A}}\text{,}
    \end{align}
    where we wrote $\Box_{\ms{g}} := \ms{g}^{ab}\ms{D}^2_{ab}$.
\end{proposition}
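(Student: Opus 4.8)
The plan is to establish the four identities by direct computation in a compact coordinate system $(U,\varphi)$, exploiting the explicit formulas for $\ol{\ms{D}}$ and the Christoffel symbols from Propositions \ref{sec:aads_prop_covariant_D} and \ref{prop_christoffel_FG}. For the commutator \eqref{commutation_L_D}, I would write $\ms{D}_a\ms{A}$ in coordinates as $\partial_a\ms{A} + \ms{\Gamma}\cdot\ms{A}$ where the relevant Christoffel symbols are those of the vertical Levi-Civita connection, i.e. $\ms{\Gamma}^c_{ab} = \tfrac12\ms{g}^{cd}(\partial_a\ms{g}_{bd}+\partial_b\ms{g}_{ad}-\partial_d\ms{g}_{ab})$. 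Since $\Lie_\rho = \partial_\rho$ on coordinate components of a vertical tensor, the commutator $[\Lie_\rho,\ms{D}]\ms{A}$ reduces to the action of $\Lie_\rho$ on the Christoffel symbols, $\Lie_\rho\ms{\Gamma}^c_{ab}$, contracted against $\ms{A}$; differentiating the coordinate expression of $\ms{\Gamma}$ in $\rho$ produces exactly terms of the schematic form $\ms{g}^{-1}\otimes\ms{g}^{-1}\otimes\partial\Lie_\rho\ms{g} = \ms{g}^{-1}\otimes\partial\ms{m}$ (the $\partial\ms{g}$ piece from differentiating $\ms{g}^{cd}$ cancels against one of the partials since $\ms{D}\ms{g}=0$, or more cleanly one uses that $\Lie_\rho\ms{\Gamma}$ is tensorial and equals $\mc{S}(\ms{g};\ms{Dm})$). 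Contracting with $\ms{A}$ and summing over all indices gives $\mc{S}(\ms{g};\ms{Dm},\ms{A})$, which is \eqref{commutation_L_D}.

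For \eqref{commutation_D_D}, I would use $\ol{\ms{D}}_\rho = \Lie_\rho + \tfrac12\ms{g}^{-1}\Lie_\rho\ms{g}\cdot(-) = \Lie_\rho + \mc{S}(\ms{g};\ms{m})$ acting appropriately on upper/lower indices, as in \eqref{vertical_derivative}. Then $[\ol{\ms{D}}_\rho,\ms{D}]\ms{A} = [\Lie_\rho,\ms{D}]\ms{A} + [\tfrac12\ms{g}^{-1}\ms{m}\cdot,\ \ms{D}]\ms{A}$. The first bracket is handled by \eqref{commutation_L_D} and gives $\mc{S}(\ms{g};\ms{Dm},\ms{A})$; the second is a commutator of $\ms{D}$ with a zeroth-order multiplication operator built from $\ms{g}^{-1}\otimes\ms{m}$, and since $\ms{D}\ms{g}^{-1}=0$ the Leibniz rule produces only $\mc{S}(\ms{g};\ms{Dm},\ms{A}) + \mc{S}(\ms{g};\ms{m},\ms{DA})$. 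Adding these yields \eqref{commutation_D_D}.

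For \eqref{wave_power}, I would expand $\ol{\Box}(\rho^p\ms{A}) = g^{\alpha\beta}\ol{\nabla}^2_{\alpha\beta}(\rho^p\ms{A})$, using $g^{\rho\rho}=\rho^2$, $g^{ab}=\rho^2\ms{g}^{ab}$, and $g^{\rho a}=0$. The two derivatives of $\rho^p$ give the $-p(n-p)\rho^p\ms{A}$ term once the $\Gamma^\alpha_{\cdot\cdot}$ contributions to $\ol{\nabla}^2(\rho^p)$ are collected (the $\rho^{-1}\delta^\alpha_\rho$ in $\Gamma^\alpha_{\rho\rho}$ and the $\rho^{-1}\ms{g}_{ab}$ in $\Gamma^\rho_{ab}$ combine to produce the $n$); the cross term gives $2p\rho^{p+1}\ol{\ms{D}}_\rho\ms{A}$; and the leftover Christoffel pieces involving $\Lie_\rho\ms{g}$ assemble into $p\rho\,\mc{S}(\ms{g};\ms{m},\rho^p\ms{A})$. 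The identity \eqref{wave_op} is the special case $p=0$ organised differently: I would compute $\ol{\nabla}^2_{\rho\rho}\ms{A}$ (giving $\Lie_\rho^2\ms{A}$ plus $\mc{S}$-terms, with $\Gamma^\rho_{\rho\rho}$ contributing a $-\rho^{-1}$ that, contracted against $g^{\rho\rho}=\rho^2$, produces $-(n-1)\rho\Lie_\rho\ms{A}$ after combining with the $\ol{\nabla}^2_{ab}$ trace of the $\Gamma^\rho_{ab}\,\partial_\rho$ pieces) and $g^{ab}\ol{\nabla}^2_{ab}\ms{A} = \rho^2\Box_{\ms{g}}\ms{A}$ plus the stated $\mc{S}$-remainders collecting all $\ms{m}$, $\Lie_\rho\ms{m}$, $\ms{m}^2$ contributions from $\ms{\Gamma}^a_{b\rho}$ and $\Gamma^\rho_{ab}$.

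The main obstacle is purely bookkeeping: tracking which Christoffel contributions survive and verifying that all the terms not of the advertised schematic type cancel — in particular checking the precise numerical coefficients $(n-1)$ and $p(n-p)$, which requires carefully counting the index-range contractions $\delta^a_a = n$ coming from the vertical directions versus the single $\rho$-direction. I expect no conceptual difficulty; the care lies in not dropping or double-counting the $\Lie_\rho\ms{g} = \ms{m}$ terms that appear both in $\Gamma^\rho_{ab}$ and in $\Gamma^a_{b\rho}$ (and in $\ms{\Gamma}$), and in confirming that the $\mc{S}$-notation genuinely absorbs all of them with the stated powers of $\rho$.
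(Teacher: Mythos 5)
Your proposal is correct and follows essentially the same route as the paper's appendix proof: both reduce \eqref{commutation_L_D} to the variation formula $\Lie_\rho\ms{\Gamma}=\mc{S}\paren{\ms{g};\ms{Dm}}$, deduce \eqref{commutation_D_D} from $\ol{\ms{D}}_\rho\ms{A}=\Lie_\rho\ms{A}+\mc{S}\paren{\ms{g};\ms{m},\ms{A}}$, and obtain \eqref{wave_power} and \eqref{wave_op} by the Leibniz rule together with the explicit computation of $\Box\rho$ and the $\rho\rho$/vertical splitting of $\ol{\Box}$ via the Christoffel symbols of Proposition \ref{prop_christoffel_FG}. The only cosmetic slip is describing \eqref{wave_op} as the ``$p=0$ case'' of \eqref{wave_power} (that case is vacuous); the computation you then outline is the intended, and correct, one.
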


\begin{proof}
    See Appendix \ref{app:aads_commutation_Lie_D}
\end{proof}

\begin{proposition}\label{prop_higher_order_commutation}
    Let $(\mc{M}, g)$ be a FG-aAdS spacetime and let $k\geq 1$. Then, for any vertical tensor $\ms{A}$: 
    \begin{equation*}
        \left[ \Lie_\rho^k, \ms{D}\right]\ms{A} = \sum\limits_{\substack{j+j_0 + j_1+\dots +j_\ell=k\\j_p\geq 1,\, j<k}}\mc{S}\paren{\ms{g}; \Lie_\rho^j \ms{A}, \ms{D}\Lie_\rho^{j_0}\ms{g}, \Lie_\rho^{j_1}\ms{g}, \dots, \Lie^{j_\ell}_\rho \ms{g}}\text{.}
    \end{equation*}
\end{proposition}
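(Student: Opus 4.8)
The plan is to prove the higher-order commutator formula in Proposition \ref{prop_higher_order_commutation} by induction on $k$, using the base case $k=1$ which is exactly \eqref{commutation_L_D} from Proposition \ref{sec:aads_commutation_Lie_D}, namely $[\Lie_\rho, \ms{D}]\ms{A} = \mc{S}(\ms{g}; \ms{D}\ms{m}, \ms{A})$. This fits the claimed form with $j = 0$, $j_0 = 1$, $\ell = 0$: the term $\mc{S}(\ms{g}; \Lie_\rho^0 \ms{A}, \ms{D}\Lie_\rho^0 \ms{m})$, after writing $\ms{m} = \Lie_\rho \ms{g}$, becomes $\mc{S}(\ms{g}; \ms{A}, \ms{D}\Lie_\rho \ms{g})$, which is the $j=0, j_0 = 1$ summand.

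For the inductive step, suppose the formula holds at level $k$. I would write $[\Lie_\rho^{k+1}, \ms{D}]\ms{A} = \Lie_\rho\big([\Lie_\rho^k, \ms{D}]\ms{A}\big) + [\Lie_\rho, \ms{D}]\Lie_\rho^k \ms{A}$. The second term is handled immediately by the base case: $[\Lie_\rho, \ms{D}]\Lie_\rho^k \ms{A} = \mc{S}(\ms{g}; \ms{D}\Lie_\rho \ms{g}, \Lie_\rho^k \ms{A})$, which is of the claimed form with $j = k < k+1$, $j_0 = 1$, $\ell = 0$. For the first term, I substitute the inductive hypothesis and apply $\Lie_\rho$ to each summand $\mc{S}(\ms{g}; \Lie_\rho^j \ms{A}, \ms{D}\Lie_\rho^{j_0}\ms{g}, \Lie_\rho^{j_1}\ms{g}, \dots, \Lie_\rho^{j_\ell}\ms{g})$. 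The key point is that $\Lie_\rho$ acts as a derivation compatible with the $\mc{S}$-notation: applying $\Lie_\rho$ to such an expression produces (i) terms where $\Lie_\rho$ hits one of the $\Lie_\rho^{j_p}\ms{g}$ factors, raising its order by one; (ii) a term where $\Lie_\rho$ hits $\Lie_\rho^j \ms{A}$, giving $\Lie_\rho^{j+1}\ms{A}$; (iii) terms where $\Lie_\rho$ hits the $\ms{g}^{-1}$ or $\ms{g}$ factors hidden inside the $\mc{S}$-contractions, each of which produces a factor $\Lie_\rho \ms{g} = \ms{m}$, i.e. an extra $\ms{g}$-derivative factor $\Lie_\rho^{1}\ms{g}$ and increments $\ell$; and (iv) a term where $\Lie_\rho$ commutes past $\ms{D}$ in $\ms{D}\Lie_\rho^{j_0}\ms{g}$, which by the base case again only generates lower-order $\ms{D}\Lie_\rho \ms{g}$ times $\Lie_\rho^{j_0}\ms{g}$-type factors. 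In every case the total number of $\Lie_\rho$'s distributed among $\ms{A}$, the distinguished $\ms{D}\ms{g}$-factor, and the remaining $\ms{g}$-factors increases by exactly one, going from $k$ to $k+1$, while each $j_p$ stays $\geq 1$ and the index $j$ on $\ms{A}$ stays $< k+1$ (it is at most $j+1 \leq k < k+1$, or it is $k$ from the second term). This is precisely the claimed structure.

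Slightly more carefully for case (iv): when $\Lie_\rho$ meets $\ms{D}\Lie_\rho^{j_0}\ms{g}$ inside an $\mc{S}$, I commute it to get $\ms{D}\Lie_\rho^{j_0+1}\ms{g}$ plus a commutator $[\Lie_\rho, \ms{D}]\Lie_\rho^{j_0}\ms{g}$; the latter, by \eqref{commutation_L_D}, equals $\mc{S}(\ms{g}; \ms{D}\Lie_\rho\ms{g}, \Lie_\rho^{j_0}\ms{g})$, so again the distinguished factor is $\ms{D}\Lie_\rho^{1}\ms{g}$ and the old factor $\Lie_\rho^{j_0}\ms{g}$ joins the list of ordinary $\ms{g}$-derivative factors, with total order preserved and then incremented appropriately. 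One then observes that all resulting terms collapse into the single sum over partitions $j + j_0 + j_1 + \dots + j_{\ell'} = k+1$ with each $j_p \geq 1$ and $j < k+1$, absorbing constants, contractions and permutations into the $\mc{S}$-notation as permitted by Definition \ref{sec:aads_def_errors}.

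The main obstacle is purely bookkeeping: one must verify that the ``distinguished'' factor $\ms{D}\Lie_\rho^{j_0}\ms{g}$ can always be normalised to $\ms{D}\Lie_\rho^{1}\ms{g}$ whenever a new $\ms{D}\ms{g}$-factor is generated by a commutator, so that the final expression has exactly one $\ms{D}$ acting on the metric and the rest are bare $\Lie_\rho^{j_p}\ms{g}$ factors — equivalently, checking that no term with two $\ms{D}$'s on metric factors, or zero, escapes. This follows because each application of $[\Lie_\rho, \ms{D}]$ trades a $\ms{D}$ for nothing extra (it does not create a second derivative), and $\Lie_\rho$ itself never creates a $\ms{D}$; since we start ($k=1$) with exactly one $\ms{D}$ on a metric factor and the $\Lie_\rho$ from the second term $[\Lie_\rho, \ms{D}]\Lie_\rho^k\ms{A}$ also contributes exactly one such factor, the count is stable throughout the induction. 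I would close the argument by remarking that the index constraint $j < k$ (resp. $j<k+1$) holds automatically since $j_0 \geq 1$ forces $j \leq k - 1$ in every summand.
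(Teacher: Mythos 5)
Your induction is correct: the base case is exactly \eqref{commutation_L_D}, and the step $[\Lie_\rho^{k+1},\ms{D}]\ms{A}=\Lie_\rho\bigl([\Lie_\rho^{k},\ms{D}]\ms{A}\bigr)+[\Lie_\rho,\ms{D}]\Lie_\rho^{k}\ms{A}$, together with the bookkeeping you describe (in particular renormalising the distinguished $\ms{D}\Lie_\rho^{j_0}\ms{g}$ factor through the first-order commutator, and noting $j_0\geq 1$ forces $j<k$), yields precisely the claimed form. The paper itself gives no proof but defers to Proposition 2.33 of \cite{shao:aads_fg}, whose argument is this same induction, so your proposal matches the intended proof.
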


\begin{proof}
    See \cite{shao:aads_fg}, Proposition 2.33.
\end{proof}

In this work, we will very often decompose derivatives vertically. As such, the next proposition allows one to do so for any mixed tensor field. 
\allowdisplaybreaks{
\begin{proposition}\label{sec:aads_derivatives_vertical}[Proposition 2.34 in \cite{Holzegel22}]
    Let $A\in T^0{}_l\mc{M}$ be a tensor field on $\mc{M}$, with $l = l_1+l_2$, and let $\ms{A}\in \ms{V}^0{}_{l_2}\mc{M}$ be the vertical tensor field corresponding to $A$, where in any coordinate system $(U, \varphi)$: 
    \begin{equation*}
        \ms{A}_{\bar{a}} := A_{\bar{\rho}\bar{a}}\text{,} \qquad (\bar{\rho}, \bar{a}) := (\underbrace{\rho,\dots, \rho}_{l_1}, \underbrace{a_1, \dots, a_{l_2}}_{l_2})\text{.}
    \end{equation*}
    The following hold for any coordinate system $(U,\varphi)$: 
    \begin{align}
        &\nabla_\rho A_{\bar{\rho}\bar{a}} = \rho^{-l}\ol{\ms{D}}_{\rho}(\rho^l\ms{A})_{\bar{a}}\text{,}\\
        &\nabla_c A_{\bar{\rho}\bar{a}} = (\ms{D}_c \ms{A})_{\bar{a}} + \rho^{-1}\sum\limits_{i=1}^{l_1}(\ms{A}^\rho_i)_{c\bar{a}} - \rho^{-1}\sum\limits_{j=1}^{l_2}\ms{g}_{ca_i}(\ms{A}^\nu_j)_{\hat{a}_j}\\
        &\notag\hspace{50pt} + \sum\limits_{i=1}^{l_1}\mc{S}\paren{\ms{g}; \ms{m}, \ms{A}^\rho_i} + \sum\limits_{j=1}^{l_2}\mc{S}\paren{ \ms{m}, \ms{A}^\nu_j}\text{,}\\
        &\ol{\Box}A_{\bar{\rho}\bar{a}} = \rho^{-l}\ol{\Box}(\rho^l \ms{A})_{\bar{a}} + 2\rho\sum\limits_{i=1}^{l_1}\ms{g}^{bc}\ms{D}_b (\ms{A}^{\rho}_i)_{c\bar{a}} - 2\rho \sum\limits_{i=1}^{l_2}\ms{D}_{a_j}(\ms{A}^\nu_j)_{\hat{a}_j}-(nl_1+l_2)\ms{A}_{\bar{a}}\\
        &\notag\hspace{50pt} -2\sum_{i=1}^{l_1}\sum\limits_{j=1}^{l_2}(\ms{A}^{\rho, \nu}_{i,j})_{a_j\hat{a}_j} + 2\sum\limits_{1\leq i<j\leq l_1}\ms{g}^{bc}(\ms{A}^{\rho, \rho}_{i,j})_{bc\bar{a}} + 2\sum\limits_{1\leq i<j\leq l_2}\ms{g}_{a_ia_j}(\ms{A}^{\nu\nu}_{i,j})_{\hat{a}_{i,j}}\\
        &\notag \hspace{50pt}+ \rho\mc{S}\paren{\ms{g}; \ms{m}, \ms{A}}_{\bar{a}} + \rho^2 \mc{S}\paren{\ms{g}; \ms{m}^2, \ms{A}}_{\bar{a}} + \rho^2 \sum\limits_{i=1}^{l_1}\paren{\mc{S}\paren{\ms{g}; \ms{D}\ms{m}, \ms{A}^\rho_i)}_{\bar{a}} + \mc{S}\paren{\ms{g}; \ms{m}, \ms{D}\ms{A}^\rho_i)}_{\bar{a}}}\\
        &\notag\hspace{50pt} + \rho^2\sum\limits_{j=1}^{l_2}\paren{\mc{S}\paren{\ms{g}; \ms{m},\ms{D}\ms{A}^\nu_i)}_{\bar{a}} + \mc{S}\paren{\ms{g}; \ms{D}\ms{m},\ms{A}^\nu_i)}_{\bar{a}}}\\
        &\notag\hspace{50pt}+ \rho\sum\limits_{i\leq i<j\leq l_1}\paren{\mc{S}\paren{\ms{g}; \ms{m}, \ms{A}^{\rho\rho}_{i,j}}_{\bar{a}} + \rho \mc{S}\paren{\ms{g}; \ms{m}^2, \ms{A}^{\rho\rho}_{i,j}}_{\bar{a}}}\\
        &\notag\hspace{50pt}+\rho\sum\limits_{i\leq i<j\leq l_2} \paren{\mc{S}\paren{\ms{g}; \ms{m}, \ms{A}^{\nu\nu}_{i,j}}_{\bar{a}} + \rho\mc{S}\paren{\ms{g}; \ms{m}^2, \ms{A}^{\nu\nu}_{i,j}}_{\bar{a}}} + \rho\sum\limits_{i=1}^{l_1}\sum\limits_{j=1}^{l_2} \mc{S}\paren{\ms{g}; \ms{m}, \ms{A}^{\rho\nu}_{i,j}}_{\bar{a}} \\
        &\notag\hspace{50pt} + \rho^2 \sum\limits_{i=1}^{l_1}\sum\limits_{j=1}^{l_2}\mc{S}\paren{\ms{g}; \ms{m}^2, \ms{A}^{\rho\nu}_{i,j}}_{\bar{a}}\text{,}
    \end{align}
    where we wrote: 
    \begin{itemize}
        \item for $1\leq i\leq l_1$, we defined $\ms{A}^\rho_i$ to be the $(0, l_2+1)$ vertical tensor defined by: 
        \begin{equation*}
            (\ms{A}^\rho_i)_{c\bar{a}} := A_{\hat{\rho}_i[c]\bar{a}}\text{,}
        \end{equation*}
        \item for $1\leq j \leq l_2$, we defined $\ms{A}^\nu_j$ to be the $(0, l_2-1)$ vertical tensor defined by:
        \begin{equation*}
            (\ms{A}^\nu_j)_{\hat{a}_j} := A_{\bar{\rho}\hat{a}_j[\rho]}\text{,}
        \end{equation*}
        \item for $1\leq i, j \leq l_1$, and $i\neq j$, we defined $\ms{A}^{\rho\rho}_{i,j}$ to be the $(0, l_2+2)$ vertical tensor defined by: 
        \begin{equation*}
            (\ms{A}^{\rho\rho}_{i,j})_{cb\bar{a}} := A_{\hat{\rho}_{i,j}[c,b]\bar{a}}\text{,}
        \end{equation*}
        \item for $1\leq i \leq l_1$, $1\leq j\leq l_2$, we defined $\ms{A}^{\rho\nu}_{i,j}$ to be the $(0, l_2)$ vertical tensor defined by: 
        \begin{equation*}
            (\ms{A}^{\rho\nu}_{i,j})_{b\hat{a}_j} := A_{\hat{\rho}_i[b]\hat{a}_j[\rho]}\text{,}
        \end{equation*}
        \item for $1\leq i,j\leq l_2$, we defined $\ms{A}^{\nu\nu}_{ij}$ to be the $(0, l_2-2)$ vertical tensor defined by: 
        \begin{equation*}
            (\ms{A}^{\nu\nu}_{i,j})_{\hat{a}_{i,j}} := A_{\bar{\rho}\hat{a}_{i,j}[\rho,\rho]}\text{.}
        \end{equation*}
    \end{itemize}
\end{proposition}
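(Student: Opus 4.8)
The plan is to establish Proposition \ref{sec:aads_derivatives_vertical} by direct computation in a fixed compact coordinate chart $(U,\varphi)$, peeling off the cases by the number of covariant derivatives (one, then two, then the trace). The starting point is the ordinary coordinate expression of $\nabla_\gamma A_{\bar\rho\bar a}$ in terms of Christoffel symbols, for which we invoke Proposition \ref{prop_christoffel_FG}; the entire content is bookkeeping which $\Gamma$ and $\ms{\Gamma}$ coefficients hit which slot. For the first identity, $\nabla_\rho A_{\bar\rho\bar a}$, the only nonzero relevant symbols are $\Gamma^\alpha_{\rho\rho}=-\rho^{-1}\delta^\alpha_\rho$ acting on the $\rho$-slots and $\Gamma^a_{b\rho}$ acting on the vertical slots; collecting the $-\rho^{-1}$ factors from the $l_1$ radial slots and matching against the definition \eqref{vertical_derivative} of $\ol{\ms{D}}_\rho$ together with the rescaling by $\rho^l$ gives $\nabla_\rho A_{\bar\rho\bar a}=\rho^{-l}\ol{\ms{D}}_\rho(\rho^l\ms{A})_{\bar a}$ after the $\Lie_\rho$ term of $\ol{\ms{D}}_\rho$ absorbs the plain $\partial_\rho$. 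The $\frac12\ms{g}^{ac}\Lie_\rho\ms{g}_{cb}$ pieces of $\ol{\ms{D}}_\rho$ exactly reproduce the $\Gamma^a_{b\rho}$ corrections on the vertical indices, and the $\Gamma^\rho_{a\rho}=0$ identity ensures no cross terms survive.

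For the second identity one expands $\nabla_c A_{\bar\rho\bar a}$: here $\Gamma^\rho_{cb}=\rho^{-1}\ms{g}_{cb}-\tfrac12\ms{m}_{cb}$ converts a lower vertical index into a $\rho$ index (this produces the $-\rho^{-1}\sum_j\ms{g}_{ca_j}(\ms{A}^\nu_j)_{\hat a_j}$ term plus an $\mc S(\ms{m},\ms{A}^\nu_j)$ remainder from the $-\tfrac12\ms{m}$ part), $\Gamma^a_{c\rho}$-type symbols — wait, more precisely it is $\Gamma^a_{\rho c}$ which would act on a $\rho$ slot to produce a vertical index, giving the $+\rho^{-1}\sum_i(\ms{A}^\rho_i)_{c\bar a}$ term and an $\mc S(\ms{g};\ms{m},\ms{A}^\rho_i)$ remainder — and the purely vertical $\ms{\Gamma}$ pieces assemble into $(\ms{D}_c\ms{A})_{\bar a}$. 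One must be careful that the vertical connection $\ms{D}$ uses $\ms{g}$-Christoffel symbols that differ from the spacetime ones $\Gamma^a_{cb}$ by terms involving $\rho$; however Proposition \ref{prop_christoffel_FG} is arranged so that the mixed connection $\ol\nabla$ has $\ms{\Gamma}^a_{cb}=$ the $\ms{g}$-Levi-Civita symbols on vertical-vertical-vertical indices, so this is automatic. The $\ol\Box$ formula is then obtained by applying $\ol\nabla$ twice, contracting with $g^{\alpha\beta}=\rho^2(\,\partial_\rho^2 + \ms{g}^{ab}\partial_a\partial_b\,)$ schematically, and carefully tracking the $\rho^{-1}$ factors: the double application produces the first-order terms $2\rho\sum_i\ms{g}^{bc}\ms{D}_b(\ms{A}^\rho_i)_{c\bar a}$, $-2\rho\sum_j\ms{D}_{a_j}(\ms{A}^\nu_j)_{\hat a_j}$, the algebraic zeroth-order terms $-(nl_1+l_2)\ms{A}_{\bar a}$ and the index-swap terms $-2\sum_{i,j}(\ms{A}^{\rho\nu}_{i,j})$, $+2\sum_{i<j}\ms{g}^{bc}(\ms{A}^{\rho\rho}_{i,j})$, $+2\sum_{i<j}\ms{g}_{a_ia_j}(\ms{A}^{\nu\nu}_{i,j})$, with everything else falling into the $\ms{m}$-weighted $\mc S$-remainders, several of which carry an extra $\rho$ (from one $\Gamma^\rho_{ab}\sim\rho^{-1}\ms{g}$ meeting one $\rho^2$) or an extra $\rho^2$. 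Here the already-established identities \eqref{wave_power}, \eqref{wave_op} and the commutation relations \eqref{commutation_L_D}, \eqref{commutation_D_D} from Proposition \ref{sec:aads_commutation_Lie_D} can be used to shortcut the reduction of $\ol\Box(\rho^l\ms{A})$ and to commute $\Lie_\rho$ past $\ms{D}$ when collecting remainders.

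The main obstacle is purely combinatorial rather than conceptual: keeping the index-position bookkeeping straight when two derivatives act and each can hit either a $\rho$-slot or a vertical slot in either order, so that the various degenerate contractions ($\rho\rho$, $\rho\nu$, $\nu\nu$) are produced with the correct combinatorial coefficients $2$ and the correct sign, while simultaneously verifying that every leftover term genuinely carries a factor of $\ms{m}=\Lie_\rho\ms{g}$ (so that it lands in an $\mc S(\ms{g};\ms{m},\dots)$ or $\mc S(\ms{g};\ms{Dm},\dots)$ or $\mc S(\ms{g};\ms{m}^2,\dots)$ bucket) and the advertised power of $\rho$. A clean way to organise this is to first prove the $\ol\Box$ identity for $l_2=0$ (all $\rho$-slots) and for $l_1=0$ (all vertical slots, where it essentially reduces to \eqref{wave_op} applied to $\ms{A}$), and then handle the mixed case by an induction that peels one $\rho$-slot at a time using the first two identities of the proposition; at each step one commutes the freshly produced $\ms{D}$ or $\Lie_\rho$ into place using Proposition \ref{sec:aads_commutation_Lie_D}, and the new remainder terms are manifestly of the claimed schematic form. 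Since this is the tensorial analogue of the scalar computation underlying \eqref{wave_op}, and the statement is quoted verbatim from \cite{Holzegel22}, the cleanest presentation is to set up the coordinate expansion, indicate the three places where $\Gamma^\rho_{ab}$, $\Gamma^a_{\rho b}$ and the $\Lie_\rho\ms{g}$-corrections enter, and relegate the full expansion — which involves no new idea beyond Proposition \ref{prop_christoffel_FG} — to a remark or to the cited reference.
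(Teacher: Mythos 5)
The paper does not actually prove this proposition: its ``proof'' is the single line ``See \cite{Holzegel22}, Proposition 2.34'', so there is no in-paper argument to compare against, and your direct coordinate computation via Proposition \ref{prop_christoffel_FG} is exactly the derivation the cited reference carries out. The parts you describe in detail check out. For $\nabla_\rho A_{\bar\rho\bar a}$, the symbols $\Gamma^\rho_{\rho\rho}=-\rho^{-1}$ and the $-\rho^{-1}\delta^b_{a_j}$ part of $\Gamma^b_{a_j\rho}$ together contribute $l\rho^{-1}\ms{A}_{\bar a}$, which is what the conjugation by $\rho^{l}$ absorbs, while the $\tfrac12\ms{g}^{bc}\ms{m}_{ca_j}$ part reproduces the correction in \eqref{vertical_derivative}; note, though, that this $l\rho^{-1}$ comes from \emph{both} the $l_1$ radial and the $l_2$ vertical slots, not only from the radial ones as your wording suggests. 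For $\nabla_c A_{\bar\rho\bar a}$, the symbols $\Gamma^\rho_{ca}=\rho^{-1}\ms{g}_{ca}-\tfrac12\ms{m}_{ca}$ and $\Gamma^b_{c\rho}=-\rho^{-1}\delta^b_c+\tfrac12\ms{g}^{bd}\ms{m}_{dc}$ produce precisely the $-\rho^{-1}\ms{g}_{ca_j}(\ms{A}^\nu_j)_{\hat a_j}$, $\rho^{-1}(\ms{A}^\rho_i)_{c\bar a}$ and $\mc{S}(\ms{m},\cdot)$, $\mc{S}(\ms{g};\ms{m},\cdot)$ terms, with the purely vertical symbols assembling into $\ms{D}_c\ms{A}$, as you say.

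For the wave-operator identity, however, what you give is a programme rather than a proof: none of the combinatorial coefficients ($-(nl_1+l_2)$, the various factors of $2$, the powers of $\rho$ attached to each $\mc{S}(\ms{g};\ms{m},\cdot)$, $\mc{S}(\ms{g};\ms{Dm},\cdot)$, $\mc{S}(\ms{g};\ms{m}^2,\cdot)$ bucket) is actually produced, and you ultimately defer the expansion to the cited reference --- which is defensible only because the paper does the same. One organizational claim is also inaccurate: the case $l_1=0$ does \emph{not} ``essentially reduce to \eqref{wave_op}''. Equation \eqref{wave_op} decomposes $\ol{\Box}\ms{A}$ for a genuinely vertical field, whereas the proposition compares $\Box_g A_{\bar a}$ --- whose second covariant derivatives reach the $\rho$-components of $A$, which is exactly why $\ms{A}^\nu_j$ and $\ms{A}^{\nu\nu}_{i,j}$ appear --- with $\rho^{-l}\ol{\Box}(\rho^{l}\ms{A})_{\bar a}$; the difference consists of precisely the first- and zeroth-order correction terms you would still have to compute. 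The sound way to organize the calculation is the one you mention in passing: apply the already established first-order identities to the rank-$(0,l+1)$ tensor $\nabla A$, contract with $g^{\rho\rho}=\rho^2$ and $g^{ab}=\rho^2\ms{g}^{ab}$, and use Proposition \ref{sec:aads_commutation_Lie_D} to commute $\Lie_\rho$ and $\ms{D}$ when sorting the remainders; but as submitted, the third identity remains unverified.
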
}

\begin{proof}
    See \cite{Holzegel22}, Proposition 2.34.
\end{proof}

\subsection{Asymptotically Anti-de Sitter spacetimes}

\begin{definition}[Non-vacuum FG-aAdS Segment]
    Let $(\mc{M}, g)$ be an FG-aAdS segment and $\Phi$ a nonempty collection of fields. A triplet $(\mc{M}, g, \Phi)$ is called \emph{non-vacuum} if it satisfies the following system in $\mc{M}$: 
    \begin{equation}\label{aads_non-vacuum}
            \operatorname{Rc}[g] - \frac{1}{2}\operatorname{Rs}[g]\cdot g + \Lambda \cdot g = T\brak{g; \Phi}\text{,}\qquad \Lambda := -\frac{n(n-1)}{2}\text{,}
    \end{equation}
    where $T$ is the stress-energy tensor associated to the matter field $\Phi$.
\end{definition}

\begin{remark}
    Of course, the special case $\Phi=\emptyset$ corresponds to \emph{vacuum FG-aAdS segment} as defined in \cite{Holzegel22, Shao22, shao:aads_fg}. 
\end{remark}

\begin{example}(Vlasov-FG-aAdS segment)
    As an example, one could define the Vlasov-FG-aAdS segment by defining $\Phi = \lbrace f \rbrace$, with $f\, : \, \mc{TM}\rightarrow \R_{\geq 0}$ the Vlasov distribution (with its usual support, depending on the massive or massless case) and: 
    \begin{gather}
        T^V[f]|_x = \int_{\mc{T}_x\mc{M}} \abs{g}^{1/2} f p\otimes p\text{,} \qquad p^\mu \partial_\mu f - \Gamma^\alpha_{\beta\gamma}p^\beta p^\gamma \partial_{p^\alpha}f=0\text{,}
    \end{gather}
    the usual stress-energy tensor and transport equation of the Vlasov distribution $f$, where $\mc{T}_x\mc{M}$ is the fiber of $\mc{TM}$ on $x\in \mc{M}$. The Vlasov-FG-aAdS segment can therefore be defined as the triplet $(\mc{M}, g, \{f\})$ satisfying \eqref{aads_non-vacuum}. 
\end{example}

\begin{definition}\label{def_modif_T}
    Let $(\mc{M}, g, \Phi)$ be a non-vacuum FG-AdS segment. We define the following modifications of the stress-energy tensor: 
    \begin{gather*}
        \tilde{T} := T - \frac{1}{n}\operatorname{tr}_g T\cdot g\text{,}\qquad \hat{T} := T - \frac{\operatorname{tr}_g T}{n-1}\cdot g \text{.}
    \end{gather*}
\end{definition}

One immediately obtains the following: 
\begin{proposition}\label{prop_aads_einstein}
Let $(\mc{M}, g, \Phi)$ be a non-vacuum FG-aAdS segment. The following holds: 
\begin{gather}\label{aads_einstein}
    \operatorname{Rc}[g] = -n \cdot g +\hat{T}\text{,}\quad \operatorname{Rs}[g] = -n(n+1) - \frac{2\tr_g T}{n-1}\text{.}
\end{gather}
Furthermore, the Weyl tensor can be expressed as:
\begin{gather}\label{sec:aads_weyl_equation}
    W = \operatorname{R}[g] + \frac{1}{2} g\star g - \frac{1}{n-1}g\star \tilde{T}\text{\emph{,}}
\end{gather}
where the operation $\star$ is defined as: 
\begin{align*}
    (S \star T)_{\alpha\beta\gamma\delta} := S_{\alpha \gamma}T_{\beta\delta}- S_{\alpha\delta}T_{\beta\gamma} + S_{\beta\delta}T_{\alpha\gamma} - S_{\beta \gamma}T_{\alpha\delta}\text{,}
\end{align*}
for any $(0,2)-$tensors $S,\, T$. 
\end{proposition}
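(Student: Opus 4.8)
The plan is to derive all three identities from the single fact that the Einstein equation \eqref{aads_non-vacuum} can be algebraically inverted to express $\operatorname{Rc}[g]$ in terms of $g$ and $T$, after first computing the scalar curvature by taking the $g$-trace. First I would take the $g$-trace of \eqref{aads_non-vacuum}: since $\operatorname{tr}_g g = n+1$, the left-hand side becomes $\operatorname{Rs}[g] - \tfrac{1}{2}(n+1)\operatorname{Rs}[g] + (n+1)\Lambda = -\tfrac{n-1}{2}\operatorname{Rs}[g] + (n+1)\Lambda$, and this equals $\operatorname{tr}_g T$. Solving for $\operatorname{Rs}[g]$ and substituting $\Lambda = -\tfrac{n(n-1)}{2}$ gives $\operatorname{Rs}[g] = \tfrac{2}{n-1}\bigl((n+1)\Lambda - \operatorname{tr}_g T\bigr) = -n(n+1) - \tfrac{2\operatorname{tr}_g T}{n-1}$, which is the second identity in \eqref{aads_einstein}. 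Then I would plug this back into \eqref{aads_non-vacuum}, writing $\operatorname{Rc}[g] = \tfrac{1}{2}\operatorname{Rs}[g]\cdot g - \Lambda g + T$; substituting the expression just found for $\operatorname{Rs}[g]$, the $g$-coefficient collapses to $\tfrac{1}{2}(-n(n+1))\cdot g - \tfrac{\operatorname{tr}_g T}{n-1}g + \tfrac{n(n-1)}{2}g + T = -n\cdot g + \bigl(T - \tfrac{\operatorname{tr}_g T}{n-1}g\bigr) = -n\cdot g + \hat{T}$, using Definition \ref{def_modif_T}. This yields the first identity.

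For the Weyl identity \eqref{sec:aads_weyl_equation}, I would start from the standard decomposition of the Riemann tensor into Weyl plus Schouten/Ricci pieces. In $n+1 \geq 3$ dimensions, $\operatorname{R}[g] = W - \tfrac{1}{n-1}g \star \operatorname{Rc}[g] + \tfrac{\operatorname{Rs}[g]}{2n(n-1)}g\star g$ with the convention for $\star$ given in the statement (one should double-check the precise normalisation constants against that convention, which is the one bookkeeping point to get right). Rearranging, $W = \operatorname{R}[g] + \tfrac{1}{n-1}g\star \operatorname{Rc}[g] - \tfrac{\operatorname{Rs}[g]}{2n(n-1)}g\star g$. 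Now I substitute $\operatorname{Rc}[g] = -n g + \hat T$ and $\operatorname{Rs}[g] = -n(n+1) - \tfrac{2\operatorname{tr}_g T}{n-1}$ from \eqref{aads_einstein}. The pure-$g$ contributions from the Ricci term ($-n g$) and from the scalar term should combine to produce exactly $+\tfrac{1}{2}g\star g$, while the matter contributions $\tfrac{1}{n-1}g\star \hat T$ and the $\operatorname{tr}_g T$ piece of the scalar term should combine into $-\tfrac{1}{n-1}g\star \tilde T$, using that $\hat T$ and $\tilde T$ differ only by a multiple of $g$ and that $g\star g$ absorbs such multiples; here one uses $\tr_g T$-proportional shifts of the second tensor slot in $g\star(\cdot)$ are themselves proportional to $g\star g$. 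Collecting terms gives \eqref{sec:aads_weyl_equation}.

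The computation is entirely algebraic and the only real obstacle is bookkeeping: matching the normalisation constants in the Riemann–Weyl–Schouten decomposition to the specific $\star$-convention stated in the proposition, and correctly tracking how the trace-adjusted tensors $\hat T$ and $\tilde T$ interact with the $g\star g$ term (since both $g\star g$ and $g \star (\lambda g)$ for scalar $\lambda$ are proportional to $g\star g$, the freedom in choosing which trace-modification appears is exactly absorbed there). I would carry out the constant-matching once carefully — for instance by checking the identity on pure AdS where $W=0$, $\operatorname{R}[g] = -\tfrac12 g\star g$, and $T=0$ — and then the general case follows by linearity in $T$. No step should require more than elementary tensor algebra once the conventions are pinned down.
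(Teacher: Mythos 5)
Your proposal is correct and is essentially the paper's intended (and unwritten) immediate computation: trace the Einstein equation to get $\operatorname{Rs}[g]$, substitute back to get $\operatorname{Rc}[g]=-n\cdot g+\hat{T}$, and insert these into the standard Riemann--Weyl decomposition; your trace bookkeeping and the cancellation of the $\operatorname{tr}_g T$ pieces (turning $g\star\hat{T}$ plus the scalar term into $-\tfrac{1}{n-1}g\star\tilde{T}$) all check out. One caveat: the decomposition as you wrote it, $\operatorname{R}[g]=W-\tfrac{1}{n-1}g\star\operatorname{Rc}[g]+\tfrac{\operatorname{Rs}[g]}{2n(n-1)}g\star g$, has the Ricci and scalar terms with the wrong signs for the paper's conventions (in which $\operatorname{Rc}[g_{AdS}]=-n\,g_{AdS}$ and $\operatorname{R}[g_{AdS}]=-\tfrac12 g\star g$); the correct form is $W=\operatorname{R}[g]-\tfrac{1}{n-1}g\star\operatorname{Rc}[g]+\tfrac{\operatorname{Rs}[g]}{2n(n-1)}g\star g$, which is what makes the pure-$g$ coefficients sum to $+\tfrac12$ rather than $-\tfrac12$. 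Since you explicitly flag this normalisation as the point to pin down and your proposed pure-AdS check ($W=0$, $T=0$) would detect and correct the flipped sign, the argument as planned goes through.
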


Here, we will be looking at the AdS--electrovacuum system, which is given by:
\begin{definition}[Maxwell-FG-aAdS Segment]\label{def_Maxwell_FG_aAdS}
    Let $(\mc{M}, g, F)$ be a non-vacuum FG-aAdS with $F\in \Lambda^2\mathcal{M}$ a closed 2-form. We will define the triplet $(\mc{M}, g,F)$ to be a (source-free) \emph{Maxwell-FG-aAdS} segment if: 
    \begin{gather}\label{stress_energy_tensor}
        T\brak{g;F}_{\alpha\beta}dx^\alpha \otimes dx^\beta := \brac{g^{\mu \nu}F_{\alpha \mu}F_{\beta\nu} - \frac{1}{4}g_{\alpha \beta}g^{\mu\nu}g^{\sigma\tau}F_{\mu\sigma}F_{\nu\tau}}dx^\alpha\otimes dx^\beta\text{\emph{,}}\\
        \notag \text{\emph{div}}_g F=0\text{\emph{.}}
    \end{gather}
    Furthermore, we will assume the existence of an antisymmetric 2-form $\mf{f}^{1,{(0)}}_{ab}$  and a one-form $\mathfrak{f}^{0, (0)}_a$ on $\mc{I}$ such that:
    \begin{equation*}
    \begin{cases}
        F_{ab} \rightarrow^0 \mf{f}^{1,{(0)}}_{ab}\text{,}\qquad &n\geq 4\\
        F_{ab} \rightarrow^0 \mf{f}^{1,(0)}_{ab}\text{,}\; F_{\rho a}\rightarrow^0\mf{f}^{0,(0)}_a\text{,}\qquad &n=3\text{,}\\
        \rho\cdot F_{\rho a}\rightarrow^0 \mathfrak{f}_a^{0, (0)}\text{,}\qquad &n=2\text{.}
    \end{cases}
    \end{equation*}
\end{definition}

\begin{remark}
    Note that these asymptotics are consistent with \cite{Carranza:2018wkp}, at least for $n=3$. 
\end{remark}

As explained in Section \ref{sec:vertical}, any tensor field can be decomposed into its vertical components. In particular, we will chose the following definition: 
\begin{definition}\label{def_vert_f_w}
    Let $(\mc{M}, g, F)$ be a Maxwell-FG-aAdS segment, we define the vertical decomposition of $F$ and $W$ as: 
    \begin{gather*}
        \ms{f}^0_a := \rho F_{\rho a} \in \ms{V}^0{}_1 \mc{M}\text{,}\qquad \ms{f}^1_{ab} := \rho F_{ab} \in \ms{V}^0{}_2\mc{M}\text{,}\\
        \ms{w}^0_{abcd} := \rho^2 W_{abcd}\in \ms{V}^0{}_4\mc{M}\text{,}\qquad \ms{w}^1_{abc} := \rho^2 W_{\rho abc} \in \ms{V}^0{}_3\mc{M}\text{,}\qquad \ms{w}^2_{ab} := \rho^2 W_{\rho a\rho b}\in \ms{V}^0{}_2 \mc{M}\text{.}
    \end{gather*}
    Furthermore, for convenience, we will define the following vertical fields:
    \begin{gather*}
        \ol{\ms{f}}^0_a := \begin{cases}
            \rho^{-1}F_{\rho a}\text{,}\qquad &n \geq 4\text{,}\\
            F_{\rho a}\text{,}\qquad &n = 3\text{,}\\
            \rho F_{\rho a}\text{,}\qquad &n =2\text{,}
        \end{cases}\qquad \ol{\ms{f}}^1_{ab} := F_{ab}\text{.}
    \end{gather*}
\end{definition}

\begin{remark}
    It is important to notice that the vertical decomposition of $F$ into $\ms{f}^0$ and $\ms{f}^1$, being made along a spacelike direction, contains both electric and magnetic contributions. Therefore, the boundary conditions studied in \cite{Holzegel:2015swa}, based on the decomposition of the electromagnetic field into magnetic and electric parts, will mix components of both $\ms{f}^0$ and $\ms{f}^1$. 
\end{remark}

\begin{remark}
    In the next section, we will show that $\ol{\ms{f}}^0$ and $\ol{\ms{f}}^1$  satisfy some near-boundary expansions in powers of $\rho$. For $n\geq 3$, we will show, in particular,  that the coefficients located at order $0$ for $\ol{\ms{f}}^1$, written $\mf{f}^{1,(0)}$, and $(n-4)_+$ for $\ol{{\ms{f}}}^0$, written $\mf{f}^{0, (n-4)_+}$,  will not be fixed by the equations of motion \eqref{aads_einstein}. Since the coefficients $\mf{g}^{(0)}$ and $\mf{g}^{(n)}$, for the near-boundary expansion of $\ms{g}$, also correspond to the boundary data, we will describe the following quadruplet $\mc{D} = (\mf{g}^{(0)}, \mf{g}^{(n)}, \mf{f}^{0,(n-4)_+}, \mf{f}^{1, (0)})$ as the \emph{holographic data}\footnote{In fact, the boundary data will correspond to a subset of $\mc{D}$. Typically, one has to consider the divergence-- and trace--free parts of $\mf{g}^{(n)}$, the divergence--free part for $\mf{f}^{0,(n-4)_+}$ and the curl--free part of $\mf{f}^{1,(0)}$.}. 
    
    Heuristically, the coefficient $\mf{f}^{1,(0)}$ will be generated by the magnetic charge, while $\mf{f}^{0, (n-4)_+}$ will be generated by the electric charge. 

\end{remark}

\begin{remark}
    For the proof of Theorem \ref{theorem_main_fg}, describing the asymptotic limits of both $\ms{g}$ and $\ms{f}^0$, $\ms{f}^1$, it will be convenient to work with $\ol{\ms{f}}^0$ and $\ol{\ms{f}}^1$ instead of $\ms{f}^0$ and $\ms{f}^1$. The reasons for this are the following: 
    \begin{itemize}
        \item Regarding $\ms{f}^0$, the first nontrivial coefficient in the near-boundary expansion appears at order $\min(0, n-2)$. In particular, for $n\geq 4$, $\ms{f}^0$ has always zero boundary limit. Furthermore, for $n=2$, its first nontrivial coefficient in the expansion appears at order $-1$. Working with $\bar{\ms{f}}^0$ allows for a more systematic analysis. 

        \item For any $n>2$, the boundary limit of $\bar{\ms{f}}^1$ will correspond to its free boundary data. For $n=2$, one will have to take into account a logarithmic contribution.
    \end{itemize}
\end{remark}
\section{Near-boundary asymptotics}\label{sec:FG_expansion}

\subsection{Results}

In this section, we will describe the behaviours of the different fields used in this work near $\mc{I}$. In particular, we will write the fully coupled Fefferman-Graham expansion of $(g, F)$, assuming enough regularity in the vertical directions. 

First, we will state the results, and will discuss them. The proofs will be found in Section \ref{sec:proofs}. Let us first introduce the following notation : 
\begin{definition}
    Let $(\mc{M}, g)$ be a FG-aAdS segment and let $A = \lbrace \mf{A}_1,\dots, \mf{A}_m\rbrace$ be a collection of tensors on $\mc{I}$, and $M_1, \dots, M_m\geq 0$ given. We will write $ \mf{A}=\mi{D}^{(M_1, \dots, M_m)}( \mf{A}_1,\dots, \mf{A}_m)$ if: 
    \begin{equation*}
        \mf{A} = \sum\limits_{j=1}^N \mc{S}\paren{\mf{B}_{1}\text{,}\dots, \mf{B}_{N_j}}\text{,} \qquad N\geq 0\text{,}
    \end{equation*}
    where, for any $0\leq j \leq N$, one has $N_j\geq 1$, and:
    \begin{equation*}
        \mf{B}_j \in \bigcup_{i\in \lbrace 1, \dots m\rbrace} \lbrace \mf{A}_i,\partial \mf{A}_i, \dots, \partial^{M_i} \mf{A}_i\rbrace\text{.}
    \end{equation*}
    Furthermore, we will write, for $l\geq 0$:
    \begin{equation}\label{notation_D_even}
        \mi{D}^{(M_1, \dots, M_n)}\paren{\mf{A}_1,\dots, \mf{A}_m ; l} = \begin{cases}
            \mi{D}^{(M_1, \dots, M_m)}\paren{\mf{A}_1, \dots, \mf{A}_m}\text{,}\qquad &l\text{ even,}\\
            0\text{,}\qquad  &l\text{ odd.}
        \end{cases}
    \end{equation}
    
\end{definition}

\begin{example}
    A tensor field $\mf{A}$ on $\mc{I}$ satisfying: 
    \begin{equation*}
        \mf{A} = \mi{D}^{(k, j)}\paren{\mf{B}_1, \mf{B}_2}\text{,}
    \end{equation*}
    with $\mf{B}_1$, $\mf{B}_2$ tensor fields on $\mc{I}$,
    will depend on $\mf{B}_1$ and its first $k$--derivatives, as well as $\mf{B}_2$ and its first $j$--derivatives.
\end{example}

\begin{remark}
    This notation will be used to write the near-boundary expansion in $\rho$ of the different vertical fields. In particular, the notation \eqref{notation_D_even} will turn out to be convenient since the Fefferman-Graham expansion will typically involve an expansion in $\rho^2$, and some coefficients will be non-trivial for odd powers of $\rho$ only. 
\end{remark}

Let us define the type of regularity that we will need for the rest of this chapter:
\begin{definition}[$M_0$--regular segment]\label{def_regular_M}
    Let $(\mc{M}, g, F)$ be a Maxwell-FG-aAdS segment. We say $(\mc{M}, g, F)$ is regular to order $M_0\geq 0$ if, for any compact coordinate system $(U, \varphi)$ on $\mc{I}$: 
    \begin{itemize}
        \item The following uniform bounds are satisfied: 
        \begin{equation}\label{assumption_metric}
            \norm{\ms{g}}_{M_0+2, \varphi} \lesssim 1\text{,}\qquad \begin{cases}
                \norm{\ol{\ms{f}}^1}_{M_0+1, \varphi}\lesssim 1\text{,}\qquad &n \geq 4\text{,}\\
                \norm{\ol{\ms{f}}^0}_{M_0+1, \varphi}\lesssim 1\text{,}\qquad &n=2\text{.}
            \end{cases}
        \end{equation}
        \item The vertical field $\ms{m}$ and the Maxwell fields $\ms{f}^0, \, \ms{f}^1$ satisfy the following integrability conditions:
        \begin{equation}\label{assumption_m_f}
            \begin{aligned}
                &\sup\limits_U \int_0^{\rho_0} \abs{\ms{m}}_{0, \varphi}\vert_\sigma d\sigma<\infty\text{,}\qquad &&n\geq 2\text{,}\\
                &\sup\limits_U \int_0^{\rho_0}\paren{\abs{\ol{\ms{f}}^0}_{M_0+1, \varphi} + \abs{\ol{\ms{f}}^1}_{M_0+1, \varphi}}\vert_\sigma d\sigma <\infty\text{,} \qquad &&n=3\text{.}
            \end{aligned}
            \end{equation}
        \end{itemize}
\end{definition}

\begin{remark}
    Note how, in Assumption \eqref{assumption_metric}, the roles of $\ol{\ms{f}}^1$ and $\ol{\ms{f}}^0$ get exchanged from $n\geq 4$ to $n=2$, while $n=3$ will correspond to a borderline case, and it will be enough to impose weak $C^{M_0+1}$ boundedness on both fields. This is due to the fact that the near-boundary asymptotics of these fields will largely depend on the dimension and, for $n=3$, the free data will be located at the same powers. 
\end{remark}

We are now ready to state the main result of this chapter: 

\begin{theorem}\label{theorem_main_fg}
    Let $(\mc{M}, g, F)$ be a Maxwell-FG-aAdS segment, $n\geq 2$, fix $M_0\geq n +2$, and assume that $(\mc{M}, g, F)$ is regular to order $M_0$. Then, 
        \begin{itemize}
            \item The following bounds hold for $\ol{\ms{f}}^0$ and $\ol{\ms{f}}^1$: 
            \begin{gather}\label{thm_bounds_f0_f1}
                \begin{cases}
                    \norm{\ol{\ms{f}}^0}_{M_0, \varphi}\lesssim 1\text{,}\qquad &n> 4\text{,}\\
                    \norm{\rho\cdot \ol{\ms{f}}^0}_{M_0,\varphi}\lesssim 1\text{,}\qquad &n=4\text{,}\\
                    \norm{\ol{\ms{f}}^0}_{M_0, \varphi} + \norm{\ol{\ms{f}}^1}_{M_0, \varphi} \lesssim 1\text{,}\qquad &n=3\text{.}
                \end{cases}
            \end{gather}
            \item The following boundary limits are satisfied: 
            \begin{gather}
                \label{thm_first_limits_g}\ms{g}\Rightarrow^{M_0} \mathfrak{g}^{(0)}\text{,}\qquad \ms{g}^{-1}\Rightarrow^{M_0} (\mathfrak{g}^{(0)})^{-1} = \mi{D}^0(\mf{g}^{(0)})\text{,}\\
                \begin{cases}
                     \label{thm_first_limits_f}\ol{\ms{f}}^1\Rightarrow^{M_0-1} \mathfrak{f}^{1,(0)}\text{,} \qquad &n\geq 4\text{,}\\
                     \ol{\ms{f}}^0 \Rightarrow^{M_0-1} \mf{f}^{0,(0)}\text{,}\qquad \ol{\ms{f}}^1 \Rightarrow^{M_0-1} \mf{f}^{1,(0)}\text{,}\qquad &n =3\text{,}\\
                     \ol{\ms{f}}^0 \Rightarrow^{M_0-1}\mf{f}^{0,(0)}\text{,}\qquad &n =2\text{.}
                \end{cases} 
            \end{gather}
            \item For all $0\leq k<n\text{,} \; 0\leq q < n-4,\; 0\leq l < n-2$ there exist tensor fields $\mathfrak{g}^{(k)}\text{,}\; \mathfrak{f}^{0, (q)}\text{,} \; \mathfrak{f}^{1, (l)}\text{,}$ on $\mc{I}$ satisfying: 
            \begin{gather*}
                \mathfrak{g}^{(k)} = \begin{cases}
                    \mi{D}^{(k, k-4)}\paren{\mathfrak{g}^{(0)}, \mathfrak{f}^{1, (0)}; k}\text{,}\qquad &k\geq 4\\
                    \mi{D}^{k}\paren{\mathfrak{g}^{(0)}; k}\text{,}\qquad &k<4\text{,}
                \end{cases}\\
                \mathfrak{f}^{0,(q)} = \mi{D}^{(q+1, q+1)} \paren{\mathfrak{g}^{(0)},\mathfrak{f}^{1,(0)}; q}\text{,}\qquad \mathfrak{f}^{1,(l)}= \mi{D}^{(l, l)}\paren{\mathfrak{g}^{(0)}, \mathfrak{f}^{1,(0)}; l}\text{,}
            \end{gather*}
            such that the following limits hold: 
            \begin{gather}
                \label{thm_limits_g}\Lie_\rho^k \ms{g}\Rightarrow^{M_0-k} k!  \mathfrak{g}^{(k)}\text{,} \qquad \rho\Lie_\rho^{k+1} \ms{g}\Rightarrow^{M_0-k} 0\text{,}\\
                \label{thm_limits_f0}\Lie_\rho^q \ol{\ms{f}}^0\Rightarrow^{M_0-q-2} q! \, \mathfrak{f}^{0, (q)}\text{,}\qquad \rho\Lie_\rho^{q+1} \ol{\ms{f}}^0 \Rightarrow^{M_0-q-2} 0 \text{,}\\
                \label{thm_limits_f1}\Lie_\rho^{l}\ol{\ms{f}}^1 \Rightarrow^{M_0-l-1} l!\, \ol{\mathfrak{f}}^{1, (l)}\text{,}\qquad \rho \Lie_\rho^{l+1} \ol{\ms{f}}^1 \Rightarrow^{M_0-l-1} 0\text{.}  
            \end{gather}
            \item Furthermore, there exist tensor fields on $\mc{I}$ of the form:
            \begin{gather*}
                \begin{cases}
                    \mathfrak{g}^{\star} = \mi{D}^{(n, n-4)}\paren{\mathfrak{g}^{(0)}, \mathfrak{f}^{1, (0)}; n }\text{,}\\
                    \mathfrak{f}^{0, \star} = \mi{D}^{(n-3, n-3)}\paren{\mathfrak{g}^{(0)}, \mathfrak{f}^{1, (0)}; n}\text{,}\qquad \mf{f}^{1, \star} = \mi{D}^{(n-2, n-2)}(\mf{g}^{(0)}, \mf{f}^{1, (0)}; n)\text{,}\qquad  &n \geq3 \text{,}\\
                    \mathfrak{g}^\star = \mi{D}^{(2,0)}\paren{\mf{g}^{(0)}, \mf{f}^{0, (0)}}\text{,}\qquad 
                    \mathfrak{f}^{1, \star} = \mi{D}^{(1,1)}\paren{\mf{g}^{(0)}, \mf{f}^{0, (0)}}\text{,}\qquad &n=2\text{,}
                \end{cases}
            \end{gather*}
            such that the following anomalous limits hold: 
            \begin{gather}\label{thm_anomalous_g}
            \rho \Lie_\rho^{n+1} \ms{g} \Rightarrow^{M_0-n} n!\,  \mathfrak{g}^\star\text{,}\\
                \begin{cases}\label{anomalous_f_01}
                    \rho \Lie_\rho^{n-3}\ol{\ms{f}}^0 \Rightarrow^{M_0-(n-2)} (n-4)!\, \mathfrak{f}^{0, \star}\text{,}\quad \rho\Lie_\rho^{n-1}\ol{\ms{f}}^1\Rightarrow^{M_0-(n-1)} (n-2)!\,\mf{f}^{1, \star} \qquad &n>3\text{,}\\
                    \qquad\rho \Lie_\rho \ol{\ms{f}}^1 \Rightarrow^{M_0-2} \mf{f}^{1,\star}\text{,}\qquad &n=2\text{.}
                \end{cases}
            \end{gather}
            \item There exist $\mathfrak{g}^{\dag}\text{,}\; \mathfrak{f}^{0, \dag}\text{,}\; \mathfrak{f}^{1, \dag}$ tensor fields on $\mc{I}$ such that: 
            \begin{gather}
                \label{thm_critical_g}\Lie_\rho^n \ms{g} - n! \, \log\rho\cdot  \mathfrak{g}^\star \rightarrow^{M_0-n}n! \, \mathfrak{g}^{\dag}\text{,}\\
                    \label{thm_critical_f0}\Lie_\rho^{n-4}\ol{\ms{f}}^0 - (n-4)! \log\rho\cdot \mathfrak{f}^{0, \star}\rightarrow^{M_0-(n-2)} (n-4)!\,\mathfrak{f}^{0, \dag}\text{,}\qquad n\geq 4\text{,}\\
                \begin{cases}
                    \label{thm_critical_f1}\Lie^{n-2}_\rho \ol{\ms{f}}^1 - (n-2)!\rho^{n-2}\log\rho\cdot  \mf{f}^{1, \star}\rightarrow^{M_0-(n-1)} \mi{D}^{(n-2, n-2, 1)}\paren{\mf{g}^{(0)}, \mf{f}^{1, (0)}, \mf{f}^{0,\dag}}\text{,}&n \geq 4\text{ even,}\\
                    \Lie^{n-2}_\rho \ol{\ms{f}}^1\rightarrow^{M_0-(n-1)}\mi{D}^{(1,1)}\paren{\mf{g}^{(0)}, \mf{f}^{0,\dag}}\text{,}\qquad &n\geq 5\text{ odd,}\\
                    \ol{\ms{f}}^1 - \log \rho \cdot \mf{f}^{1, \star} \rightarrow^{M_0-2} \mathfrak{f}^{1, \dag}\text{,}\qquad &n = 2\text{.}
                \end{cases}
            \end{gather}
        \end{itemize}
\end{theorem}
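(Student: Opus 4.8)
The plan is to mimic the strategy of \cite{shao:aads_fg}, reducing the proof to the analysis of a hierarchy of scalar-type transport equations in $\rho$, which are then solved by a Frobenius-type bootstrap; the new ingredient is the coupling to the Maxwell sector, which forces one to track the powers of $\rho$ carefully so as to avoid power mixing between the metric coefficients $\mf{g}^{(k)}$ and the Maxwell coefficients $\mf{f}^{0,(q)},\mf{f}^{1,(l)}$.

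\textbf{Step 1: the transport system.} First I would insert the Fefferman--Graham gauge \eqref{FG-gauge} into the Einstein equations in the form \eqref{aads_einstein}, using Proposition \ref{prop_aads_einstein} to express the Maxwell stress--energy contribution, the Christoffel symbols of Proposition \ref{prop_christoffel_FG}, and the commutation identities of Propositions \ref{sec:aads_commutation_Lie_D} and \ref{prop_higher_order_commutation}. Decomposing likewise $dF = 0$ and $d\star_g F = 0$ in the vertical formalism, one obtains a closed system of transport equations for $\ms{m} := \Lie_\rho\ms{g}$, its $\ms{g}$-trace $\ms{tr}_{\ms{g}}\ms{m}$, and $\ol{\ms{f}}^0,\ol{\ms{f}}^1$, of the schematic form displayed in the introduction, in which every nonlinear term on the right-hand side carries an explicit positive power of $\rho$, with exponent determined by $n$ through the rescalings of Definition \ref{def_vert_f_w}. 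The contracted Bianchi identities, together with the remaining components of $dF=0$ and $d\star_g F = 0$, simultaneously produce constraint equations relating $\ms{D}\cdot\ms{m}$ to $\ms{D}(\ms{tr}_{\ms{g}}\ms{m})$ plus a Maxwell quadratic term carrying a positive power of $\rho$, together with $\ms{D}_{[a}\ol{\ms{f}}^1_{bc]} = 0$ and a divergence equation for $\ol{\ms{f}}^0$ whose right-hand side is a $\rho$-power multiple of $\ms{m}\cdot\ol{\ms{f}}^1$; these will yield the constraints on the holographic data at the end.

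\textbf{Step 2: the ODE lemma.} The engine is the elementary observation that a transport equation $x f'(x) - c f(x) = h(x)$ on $(0,1]$ with $c>0$ and $h$ converging (in the relevant $C^M$-norm) to $h^0$ forces $f = -c^{-1}h^0 + o(1)$, and moreover that \emph{rapid} convergence of $h$ (integrability against $\sigma^{-1}\,d\sigma$, i.e.\ the $\Rightarrow^M$-type convergence of the excerpt) is inherited by $f$. Differentiating $k$ times gives $x f^{(k+1)} - (c-k) f^{(k)} = h^{(k)}$, so the argument iterates as long as $c-k>0$; when $c$ equals a positive integer $k_\star$ one reaches $x f^{(k_\star+1)} = h^{(k_\star)}$, and an integration shows that if $h^{(k_\star)}\Rightarrow h^\star$ then $f^{(k_\star)} = f^\dagger + \log x\cdot h^\star + o(1)$, with $f^\dagger$ unconstrained by $h$. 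I would state and prove a tensor-valued, finite-regularity version of this, phrased in the $C^M$ and $\Rightarrow^M$ topologies, allowing $c\le 0$ as well (in which case one integrates directly and $\rho^{-c}f$ acquires a free coefficient).

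\textbf{Step 3: bootstrap and conclusion.} One then runs an induction on the number of $\Lie_\rho$-derivatives applied to the transport system. The base case is $\ms{g}\Rightarrow^{M_0}\mf{g}^{(0)}$ and $\ol{\ms{f}}^1\Rightarrow^{M_0-1}\mf{f}^{1,(0)}$ (and $\ol{\ms{f}}^0\Rightarrow^{M_0-1}\mf{f}^{0,(0)}$ when $n\le 3$), which follows from the regularity of Definition \ref{def_regular_M} and the built-in $\rho\Lie_\rho$-smallness. At step $k$ one differentiates the system $k$ times, uses Propositions \ref{sec:aads_commutation_Lie_D} and \ref{prop_higher_order_commutation} and the inductive hypothesis to see that every right-hand side term has a limit, and crucially that the limit of every nonlinear term built from an odd total number of $\rho$-derivatives vanishes --- which is exactly why the odd coefficients $\mf{g}^{(2k+1)}$, $\mf{f}^{0,(2k+1)}$, etc.\ are trivial, as recorded by the $\mi{D}(\,\cdot\,;l)$ notation --- and then invokes the ODE lemma with the appropriate constant: $c = n-1$ for $\ms{m}$, $c = 2n-1$ for $\ms{tr}_{\ms{g}}\ms{m}$, $c = (n-4)_+$ for $\ol{\ms{f}}^0$, $c = 0$ for $\ol{\ms{f}}^1$. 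The free coefficients $\mf{g}^{(n)}$, $\mf{f}^{0,((n-4)_+)}$, $\mf{f}^{1,(0)}$ enter precisely at the critical orders where $c$ first hits a non-positive integer, which is where the $\log\rho$ anomalies \eqref{thm_anomalous_g}--\eqref{thm_critical_f1} are produced, with $\mf{g}^\star,\mf{f}^{0,\star},\dots$ nonzero only for the appropriate parity of $n$. Passing to the limit in the constraint equations of Step 1 then yields the stated constraints on $(\mf{g}^{(0)},\mf{g}^{(n)},\mf{f}^{0,((n-4)_+)},\mf{f}^{1,(0)})$.

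\textbf{Main obstacle.} The principal difficulty is the bookkeeping of powers of $\rho$ in the nonlinearities: one must check that no term feeds a metric coefficient into a Maxwell coefficient, or conversely, at an order below that asserted in the dependency structure, and that the first ``free'' orders really are free. Since the Maxwell data sits at order $0$ for $\ol{\ms{f}}^1$ and order $(n-4)_+$ for $\ol{\ms{f}}^0$, while the metric data is at order $n$, the dimensions $n=2,3,4$ are genuinely exceptional --- the relevant orders collide, $n=3$ being the borderline case where $\ol{\ms{f}}^0$ and $\ol{\ms{f}}^1$ both carry free data --- and must be handled separately. A secondary subtlety is propagating the \emph{rapid} ($\Rightarrow$) convergence through the nonlinear transport equations, which is precisely what the integrability hypotheses \eqref{assumption_m_f} are there to guarantee, and which is needed to obtain the clean logarithmic expansions rather than merely $o(1)$ remainders.
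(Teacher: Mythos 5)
Your proposal follows essentially the same route as the paper: vertical decomposition of the Einstein--Maxwell system into transport equations for $\ms{m}$, $\ms{tr}_{\ms{g}}\ms{m}$, $\ol{\ms{f}}^0$, $\ol{\ms{f}}^1$, a finite-regularity Frobenius/ODE lemma in the $C^M$ and $\Rightarrow^M$ topologies (the paper's Proposition \ref{prop_integration}), an induction on the number of $\Lie_\rho$-derivatives with parity bookkeeping, separate treatment of $n=2,3,4$, and constraints obtained by passing to the limit in the Bianchi/Maxwell constraint equations. Two small points your sketch glosses over but which the argument accommodates: the induction can only start after uniform bounds on $\ms{m}$ and $\ol{\ms{f}}^0$ are first extracted by a Gr\"onwall argument from the regularity hypotheses, and the logarithm in $\ol{\ms{f}}^1$ at order $n-2$ (for $n\geq 4$) is inherited from $\mathfrak{f}^{0,\star}$ through the coupling $\Lie_\rho\ol{\ms{f}}^1\sim\rho\,\ms{D}\ol{\ms{f}}^0$ rather than produced by the anomalous case of the ODE lemma applied to $\ol{\ms{f}}^1$ itself.
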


\begin{remark}
    There are a couple of observations to draw from the above result.
    \begin{enumerate}
        \item First, observe that the limits $\mf{g}^{(k)}$ do not depend on $\mf{f}^{1, (0)}$ nor $\mf{f}^{0, (0)}$ for $k<4$. In particular, $\mf{g}^{(2)}$ depends only on $\mf{g}^{(0)}$ to order 2. 
        \item The free data for $F$ correspond roughly to the fields $(\mf{f}^{0, \dag}$, $\mf{f}^{1, (0)})$ for $n\geq 4$, to $(\mf{f}^{0,(0)}, \mf{f}^{1,(0)})$ for $n=3$ and to $(\mf{f}^{0, (0)}$, $\mf{f}^{1, \dag})$ for $n=2$. These coefficients are not fixed by the equations of motion. We will see, however, in Proposition \ref{prop_constraints}, that only parts of them are unconstrained. The next Corollary clarifies this by showing the explicit near-boundary expansion of the vertical fields. 
    \end{enumerate}
\end{remark}

\begin{corollary}\label{prop_expansion_vertical_fields}
    Let $(\mc{M}, g, F)$ be a Maxwell-FG-aAdS segment, $n:= \text{dim }\mc{I}\geq 2$ and fix $M_0\geq n+2$. Assume that $(\mc{M}, g, F)$ is regular to order $M_0$. Then $\ms{g}\text{,}\,  \ms{f}^0\text{ and } \ms{f}^1$ satisfy the following partial expansions in $\rho$: 
    \begin{gather}
        \label{expansion_g}\ms{g} = \begin{cases}
            \mf{g}^{(0)} + \rho^2 \mf{g}^{(2)} + \dots + \rho^{n-1}\mf{g}^{(n-1)} + \rho^n \mf{g}^{(n)} + \rho^n \ms{r}_{\ms{g}}\text{,}\qquad &n\text{ odd,}\\
            \mf{g}^{(0)} + \rho^2 \mf{g}^{(2)} + \dots + \rho^{n}\log\rho \cdot \mf{g}^{\star} + \rho^n \mf{g}^{(n)} + \rho^n \ms{r}_{\ms{g}}\text{,}\qquad &n\text{ even,}
        \end{cases}\\
        \label{expansion_f0}\ms{f}^0=\begin{cases}
            \rho^2\mf{f}^{0, (0)} + \rho^4 \mf{f}^{0, (2)} + \dots + \rho^{n-3}\mf{f}^{0, ((n-4)_+-1)} + \rho^{n-2} \mf{f}^{0, ((n-4)_+)} + \rho^{n-2} \ms{r}_{\ms{f}^0}\text{,}\qquad &n>4 \text{ odd}\text{,}\\
            \rho^2 \mf{f}^{0, (0)} + \rho^4 \mf{f}^{0, (2)} + \dots + \rho^{n-2}\log\rho \cdot \mf{f}^{0, \star} + \rho^{n-2} \mf{f}^{0, ((n-4)_+)} + \rho^{n-2} \ms{r}_{\ms{f}^0}\text{,}\qquad &n\geq 4\text{ even}\text{,}\\
            \rho\mf{f}^{0,(0)} + \rho\ms{r}_{\ms{f}^0}\text{,}\qquad &n=3\text{,}\\
            \mf{f}^{0,(0)} + \ms{r}_{\ms{f}^0}\text{,}\qquad &n=2\text{,}
        \end{cases}\\
        \label{expansion_f1}\ms{f}^1 = \begin{cases}
            \rho \mf{f}^{1, (0)} + \rho^{3} \mf{f}^{1, (2)} + \dots + \rho^{n-2} \mf{f}^{1, (n-3)}+\rho^{n-1}\mf{f}^{1, (n-2)}+\rho^{n-1}\ms{r}_{\ms{f}^1}\text{,}\qquad &n \text{ odd}\text{,}\\
            \rho \mf{f}^{1, (0)} + \rho^{3} \mf{f}^{1, (2)} + \dots + \rho^{n-1} \log\rho\cdot\mf{f}^{1, \star}+\rho^{n-1}\mf{f}^{1, (n-2)}+\rho^{n-1} \ms{r}_{\ms{f}^1}\text{,}\qquad &n\text{ even}\text{,}
        \end{cases}
    \end{gather}
    where the remainders $\ms{r}_{\ms{g}}\text{,}\, \ms{r}_{\ms{R}}\text{,}\, \ms{r}_{\ms{f}^0}\text{,}\, \ms{r}_{\ms{f}^1}$ are vertical tensor fields satisfying
    \begin{gather}
        \ms{r}_{\ms{g}} \rightarrow^{M_0-n} 0\text{,}\\
        \ms{r}_{\ms{f}^0}\begin{cases}
            \rightarrow^{M_0-(n-2)} 0\text{,} \qquad &n\geq 4\text{,}\\
            \rightarrow^{M_0-1} 0\text{,}\qquad &n = 2, 3
        \end{cases}\text{,}\qquad \ms{r}_{\ms{f}^1}\begin{cases}
            \rightarrow^{M_0-(n-1)} 0\text{,}\qquad &n\geq 4\text{,}\\
            \rightarrow^{M_0-1} 0\text{,}\qquad &n =2,3\text{.}
        \end{cases}
    \end{gather}.
    
    The coefficients $(\mf{g}^{(n)},\mf{f}^{0, ((n-4)_+)}, \mf{f}^{1,(0)})$ satisfy:
    \begin{gather*}
        \mf{g}^{(n)} = \mf{g}^\dag-C_n\mf{g}^\star\text{,}\\  
            \mf{f}^{0, (n-4)_+} =\begin{cases} \mf{f}^{0, \dag} - C_{n-4}\mf{f}^{0, \star}\text{,}\qquad &n >4 \text{,}\\
            \mf{f}^{0, \dag} \text{,}\qquad &n =4\text{,}
        \end{cases}\qquad \mf{f}^{1,(n-2)}=\begin{cases}
            \mf{f}^{1,\dag} - C_{n-2}\mf{f}^{1,\star}\text{,}\qquad &n>2\text{,}\\
            \mf{f}^{1,\dag}\text{,}\qquad &n=2\text{,}
        \end{cases}
    \end{gather*}
    with $C_k := \sum\limits_{m=1}^k m^{-1}$, for all $k\geq 1$. 
\end{corollary}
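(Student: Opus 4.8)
The proof is a direct consequence of Theorem \ref{theorem_main_fg}: once the boundary limits of the iterated Lie derivatives $\Lie_\rho^k\ms{g}$, $\Lie_\rho^q\ol{\ms{f}}^0$, $\Lie_\rho^l\ol{\ms{f}}^1$ are known, the partial expansions \eqref{expansion_g}--\eqref{expansion_f1} follow by repeated integration in $\rho$ from the conformal boundary, that is, by a Taylor-with-remainder argument applied to the one-parameter families $\ms{g}(\rho)$, $\ol{\ms{f}}^0(\rho)$, $\ol{\ms{f}}^1(\rho)$. The only structural point is that the fields are merely finitely regular at $\rho=0$, so I would work throughout at the level of the finite bounds and the rapid-convergence statements $\Rightarrow^M$ supplied by Theorem \ref{theorem_main_fg}: these provide precisely the integrability near $\rho=0$ (as in \eqref{assumption_m_f}) needed to integrate the relations $\partial_\rho(\Lie_\rho^{k}\ms{g})=\Lie_\rho^{k+1}\ms{g}$ etc. down to the boundary and to control the resulting remainders in the appropriate $C^{M_0-\cdot}$ norms.

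\textbf{The polynomial part.} I would treat $\ms{g}$ first, the other two fields being identical after relabelling. Arguing by finite induction on $k$ for $0\le k\le n-1$: integrating $\Lie_\rho^{k+1}\ms{g}\Rightarrow^{M_0-k-1}(k+1)!\,\mf{g}^{(k+1)}$ against $d\sigma$ from $0$ to $\rho$, and using $\rho\Lie_\rho^{k+1}\ms{g}\Rightarrow 0$ together with the inductive hypothesis for $\Lie_\rho^{k}\ms{g}$, one peels off successive coefficients and eventually obtains, at $k=0$, the sum $\sum_{k=0}^{n-1}\rho^k\mf{g}^{(k)}$ with a remainder of size $\rho^n$. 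The vanishing of the odd-order coefficients, recorded through the notation $\mi{D}^{(\cdot)}(\,\cdot\,;k)$, is inherited verbatim from Theorem \ref{theorem_main_fg}.

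\textbf{The critical order and the constant $C_n$.} At $k=n$ the limit $\Lie_\rho^n\ms{g}-n!\log\rho\cdot\mf{g}^\star\to^{M_0-n}n!\,\mf{g}^\dag$ forces a logarithmic term under integration. The key elementary identity is $\int_0^\rho\sigma^{j}\log\sigma\,d\sigma=\tfrac{\rho^{j+1}}{j+1}\log\rho-\tfrac{\rho^{j+1}}{(j+1)^2}$; iterating it $n$ times shows that integrating $A\log\rho+B+o(1)$ $n$ times produces $\tfrac{\rho^n}{n!}\bigl(A\log\rho+B-C_nA\bigr)+o(\rho^n)$ with $C_n=\sum_{m=1}^n m^{-1}$. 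Applied with $A=n!\,\mf{g}^\star$, $B=n!\,\mf{g}^\dag$ this yields
\begin{equation*}
    \ms{g}-\sum_{k=0}^{n-1}\rho^k\mf{g}^{(k)}=\rho^n\log\rho\cdot\mf{g}^\star+\rho^n\bigl(\mf{g}^\dag-C_n\,\mf{g}^\star\bigr)+o(\rho^n)\text{,}
\end{equation*}
which is \eqref{expansion_g} upon setting $\mf{g}^{(n)}:=\mf{g}^\dag-C_n\mf{g}^\star$ and $\ms{r}_{\ms{g}}:=\rho^{-n}\bigl(\ms{g}-(\text{finite sum})\bigr)$; the $o(\rho^n)$ holds in $C^{M_0-n}$ since every limit invoked above does. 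For $n$ odd, $\mf{g}^\star=\mi{D}^{(\cdot)}(\,\cdot\,;n)=0$, the log term disappears, and $\mf{g}^{(n)}=\mf{g}^\dag$.

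\textbf{The Maxwell fields and the remainders.} The same scheme applied to $\ol{\ms{f}}^1$ (critical order $n-2$) and $\ol{\ms{f}}^0$ (critical order $(n-4)_+$) gives expansions with logarithmic coefficients $\mf{f}^{1,\star}$, $\mf{f}^{0,\star}$ and harmonic-sum constants $C_{n-2}$, $C_{(n-4)_+}$, where $C_0=0$ — this is exactly why no $C$-correction appears for $\ol{\ms{f}}^1$ when $n=2$ and for $\ol{\ms{f}}^0$ when $n=4$, the critical order being $0$ and no integration being performed. Multiplying through by the powers of $\rho$ relating $\ms{f}^1=\rho\,\ol{\ms{f}}^1$ and $\ms{f}^0\in\{\rho^2\ol{\ms{f}}^0,\rho\,\ol{\ms{f}}^0,\ol{\ms{f}}^0\}$ according to $n\ge4$, $n=3$, $n=2$ (Definition \ref{def_vert_f_w}) gives \eqref{expansion_f0}--\eqref{expansion_f1} and the claimed identities for $\mf{f}^{0,((n-4)_+)}$ and $\mf{f}^{1,(n-2)}$. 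The stated decay of $\ms{r}_{\ms{f}^0}$, $\ms{r}_{\ms{f}^1}$ is read off from the regularity in the critical limits \eqref{thm_critical_f0}, \eqref{thm_critical_f1} (for $n=2,3$, directly from the rapid-convergence limits \eqref{thm_first_limits_f}), and the case distinctions $n=2,3,\ge4$ merely reflect where the free Maxwell coefficient sits relative to the critical order.

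\textbf{Main obstacle.} There is no genuine analytic difficulty once Theorem \ref{theorem_main_fg} is available; the work is pure bookkeeping, and the two points that require care are: (i) justifying the repeated integration all the way down to $\rho=0$ in the finite-regularity class, which is precisely what the rapid convergence $\Rightarrow^M$ and the integrability hypotheses of Definition \ref{def_regular_M} are designed to allow; and (ii) pinning down the harmonic constant $C_n$ and the exact logarithmic coefficient at the critical order, while simultaneously tracking the parity-dependent vanishing and the three distinct critical orders of $\ms{g}$, $\ms{f}^0$, $\ms{f}^1$ through the cases $n=2$, $n=3$, $n\ge4$.
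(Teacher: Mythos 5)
Your argument is correct and is in substance the same as the paper's: both deduce the expansions from the derivative limits of Theorem \ref{theorem_main_fg} by a Taylor-type argument in the finite-regularity class, with the harmonic constant $C_n$ arising from the elementary behaviour of $\rho^n\log\rho$. The only cosmetic difference is that the paper first subtracts the logarithmic term, working with the hatted fields $\hat{\ms{g}}=\ms{g}-\rho^n\log\rho\cdot\mf{g}^\star$ (and the analogous $\hat{\ms{f}}^0$, $\hat{\ms{f}}^1$), where $C_n$ enters through $\Lie_\rho^n(\rho^n\log\rho)=n!\,(\log\rho+C_n)$, and then applies Taylor's theorem, whereas you carry the logarithm through the $n$-fold integration from the boundary via $\int_0^\rho\sigma^{j}\log\sigma\,d\sigma$.
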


\begin{proof}
    See Appendix \ref{app:fg_expansion}. 
\end{proof}

\begin{proposition}\label{prop_constraints}
     Let $(\mc{M}, g , F)$ be a Maxwell-FG-aAdS segment, $n\geq 2$ and $(U, \varphi)$ a compact coordinate chart on $\mc{I}$. The following identities are satisfied for the coefficients of the  metric $\ms{g}$: 
    \begin{gather}
        \mathfrak{D}\cdot \mathfrak{g}^\star = \begin{cases}
        \mi{D}^{(n+1, n-3)}\paren{\mf{g}^{(0)}, \mf{f}^{1,(0)}; n}\text{,}\qquad &n\geq 3\\
        \mi{D}^{(3, 1)}\paren{\mf{g}^{(0)}, \mf{f}^{1,(0)}}\text{,}\qquad&n=2\text{.}
        \end{cases}\qquad 
        \mf{tr}_{\mathfrak{g}^{(0)}}\mathfrak{g}^\star = 0\text{,}\label{anomalous_tr}\\
        \mathfrak{D} \cdot \mathfrak{g}^{(n)} = \begin{cases}
            \mi{D}^{(n+1, n-3)}\paren{\mf{g}^{(0)}, \mf{f}^{1,(0)}; n}\text{,}\qquad &n\geq 3\text{,}\\
            \mi{D}^{(3, 1)}\paren{\mf{g}^{(0)}, \mf{f}^{0,(0)}}\text{,}\qquad &n=2\text{.}
        \end{cases}
        \label{critical_div}\text{,}\qquad 
        \mathfrak{tr}_{\mathfrak{g}^{(0)}}\mathfrak{g}^{(n)} = \begin{cases}
            \mi{D}^{(n, n-4 )}\paren{\mathfrak{g}^{(0)}, \mathfrak{f}^{1,(0)}; n}\text{,}\qquad &n\geq 3\text{,}\\
            \mi{D}^{(2,0)}\paren{\mf{g}^{(0)}, \mf{f}^{0,(0)}}\text{,}\qquad &n=2\text{,}
        \end{cases}
    \end{gather}
    as well as the following explicit expressions: 
    \begin{gather}\label{constraints_g}
        \begin{cases}
            -\mathfrak{g}^\star =\mf{f}^{0, (0)}\otimes \mf{f}^{0,(0)} - \frac{1}{2}(\mathfrak{f}^{0, (0)})^2_{\mathfrak{g}^{(0)}}\cdot \mathfrak{g}^{(0)}\text{,}\qquad &n =2\text{,}\\
            -\mathfrak{g}^{(2)} = \frac{1}{n-2}\paren{\mathfrak{Rc}^{(0)}- \frac{1}{2(n-1)}\mathfrak{Rs}^{(0)}\cdot \mathfrak{g}^{(0)}}\text{,}\qquad &n\geq 3\text{.}
        \end{cases}
    \end{gather}
    Similarly, the following identities are satisfied for the Maxwell field: 
    \begin{gather}\label{constraint_f}
            \begin{cases}
                \mathfrak{D}\cdot\mathfrak{f}^{0, (n-4)_+} = \mi{D}^{(n-2, n-2)}\paren{\mathfrak{g}^{(0)}, \mathfrak{f}^{1, (0)}; n}\text{,}\qquad \mf{D}\cdot \mf{f}^{0,\star} =0\text{,}\qquad 
                \mathfrak{D}_{[a}\mathfrak{f}^{1, (0)}_{bc]} = 0\text{,}\qquad &n\geq 3\text{,}\\
                \mathfrak{D}\cdot \mathfrak{f}^{0, (0)} = 0\text{,}\qquad \mf{D}_{[a}\mf{f}^{1,(0)}_{bc]} = 0\text{,}\qquad &n=2\text{.}
            \end{cases}
    \end{gather}
\end{proposition}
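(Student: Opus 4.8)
\emph{Strategy.} Each identity will be obtained by passing to the limit $\rho\searrow0$ in a suitable component of the vertically decomposed equations of motion, using the near-boundary expansions of Theorem~\ref{theorem_main_fg} and Corollary~\ref{prop_expansion_vertical_fields}. The relevant decomposed equations are precisely those produced along the way in the proof of Theorem~\ref{theorem_main_fg}: from the $ab$--, $\rho a$-- and $\rho\rho$--components of $\operatorname{Rc}[g]=-ng+\hat T$ (equivalently, the vertical decomposition of the contracted Bianchi identities) one gets the transport equations of Propositions~\ref{prop_transport_m}--\ref{prop_transport_f} for $\ms{m}=\Lie_\rho\ms{g}$ and for $\ms{tr}_{\ms{g}}\ms{m}$,
\begin{equation*}
    \rho\,\Lie_\rho\ms{m} - (n-1)\ms{m} = \rho\,\ms{Rc} + (\ms{tr}_{\ms{g}}\ms{m})\,\ms{g} + \rho\,\mc{S}(\ms{g};\ms{m}^2) + \rho^{3}\mc{S}(\ms{g};(\ol{\ms{f}}^1)^2) + \rho^{2\alpha_n}\mc{S}(\ms{g};(\ol{\ms{f}}^0)^2)\text{,}
\end{equation*}
together with the momentum-type constraint $\ms{D}\cdot\ms{m}=\ms{D}(\ms{tr}_{\ms{g}}\ms{m})+\mc{S}(\ms{g};\ms{m},\ms{Dm})+\rho^{\alpha_n}\mc{S}(\ms{g};\ol{\ms{f}}^0,\ol{\ms{f}}^1)$; from $dF=0$ and $\operatorname{div}_gF=0$, expanded with the Christoffel symbols of Proposition~\ref{prop_christoffel_FG}, one gets $\ms{D}_{[a}\ol{\ms{f}}^1_{bc]}=0$ and $\ms{D}\cdot\ms{f}^0=\rho\,\mc{S}(\ms{g};\ms{m},\ol{\ms{f}}^1)$ (the $a$--component of $\operatorname{div}_gF=0$ being the $\ms{f}^0$--transport equation already used). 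The recipe is then uniform: differentiate the relevant equation in $\Lie_\rho$ a suitable number of times, multiply by a suitable power of $\rho$, extract a single coefficient of the $\rho$-- or $\rho\log\rho$--expansion, and invoke the limits of Theorem~\ref{theorem_main_fg}. The essential input is that Theorem~\ref{theorem_main_fg} already identifies every expansion coefficient strictly below the free ones $\mf{g}^{(n)}$, $\mf{f}^{0,((n-4)_+)}$, $\mf{f}^{1,(0)}$ as a universal expression in $\mf{g}^{(0)},\mf{f}^{1,(0)}$ (and, in low dimensions, $\mf{f}^{0,(0)}$), so that after the substitutions the right-hand sides collapse to the claimed $\mi{D}$--type structures.

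\emph{Maxwell identities~\eqref{constraint_f}.} Since $\ms{D}_{[a}\ol{\ms{f}}^1_{bc]}=\partial_{[a}\ol{\ms{f}}^1_{bc]}=0$ for every $\rho$, taking $\rho\searrow0$ with $\ol{\ms{f}}^1\Rightarrow^{M_0-1}\mf{f}^{1,(0)}$ gives $\mf{D}_{[a}\mf{f}^{1,(0)}_{bc]}=0$. Substituting \eqref{expansion_f0}, \eqref{expansion_f1} and $\ms{m}=\Lie_\rho\ms{g}$ into $\ms{D}\cdot\ms{f}^0=\rho\,\mc{S}(\ms{g};\ms{m},\ol{\ms{f}}^1)$: for $n=2$ the right-hand side is $O(\rho)$ while $\ms{f}^0\to\mf{f}^{0,(0)}$, so $\mf{D}\cdot\mf{f}^{0,(0)}=0$; for $n\geq3$, noting that the first logarithm on the right-hand side occurs only at order $\rho^{n}\log\rho$, matching the $\rho^{n-2}\log\rho$-- and $\rho^{n-2}$--coefficients of both sides gives $\mf{D}\cdot\mf{f}^{0,\star}=0$ and the stated relation for $\mf{D}\cdot\mf{f}^{0,((n-4)_+)}$, every subleading coefficient of $\ms{f}^0,\ms{m},\ol{\ms{f}}^1,\ms{g}$ being $\mi{D}(\mf{g}^{(0)},\mf{f}^{1,(0)})$ by Theorem~\ref{theorem_main_fg}.

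\emph{Metric identities~\eqref{anomalous_tr}, \eqref{critical_div}.} Insert \eqref{expansion_g} into the momentum constraint; the $\rho^{n-1}$-- and $\rho^{n-1}\log\rho$--coefficients of $\ms{m}$ being $n\,\mf{g}^{(n)}+\mf{g}^\star$ and $n\,\mf{g}^\star$ (the $\log$ absent for $n$ odd, and $\mf{g}^\star$ already appearing at order $\rho$ when $n=2$), matching those two coefficients of the constraint yields $\mf{D}\cdot\mf{g}^{(n)}$ and $\mf{D}\cdot\mf{g}^\star$, the right-hand sides reducing, after using the Maxwell divergence identities just obtained to handle the stress-energy term, to the stated $\mi{D}$--type expressions. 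The trace identities follow from the $\ms{tr}_{\ms{g}}\ms{m}$--transport equation, a scalar ODE in $\rho$ which therefore preserves the $\rho$--grading: its $\rho^{n-1}\log\rho$--coefficient forces $\mf{tr}_{\mf{g}^{(0)}}\mf{g}^\star=0$ — every logarithmic term on its right-hand side sitting at order $\rho^{n+1}\log\rho$ or higher (for $n=2$ one reads the same from the $\rho\log\rho$--coefficient of the $\ms{m}$--transport equation) — while its $\rho^{n-1}$--coefficient, combined with $\mf{tr}_{\mf{g}^{(0)}}\mf{g}^\star=0$, gives $\mf{tr}_{\mf{g}^{(0)}}\mf{g}^{(n)}$ of the stated form; for $n=2$ the leading coefficient of the right-hand side acquires $\mf{f}^{0,(0)}$ because $\hat T_{\rho\rho}=O(1)$ in that dimension.

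\emph{Explicit formulas~\eqref{constraints_g} and the main obstacle.} For $n\geq3$ one reads $\mf{g}^{(2)}$ off the $\rho^1$--coefficient of the $\ms{m}$--transport equation: since the Maxwell contributions there are $O(\rho^3)$ (with $\hat T_{ab}=T_{ab}$ and $\operatorname{tr}_gT=0$ when $n=3$) no Maxwell term reaches that order, so $\mf{g}^{(2)}$ equals its vacuum value; a further $\mf{g}^{(0)}$--trace of the $\rho^1$--relation determines $\mf{tr}_{\mf{g}^{(0)}}\mf{g}^{(2)}$ in terms of $\mf{Rs}^{(0)}$, and substituting back gives the Schouten tensor $-\mf{g}^{(2)}=\tfrac1{n-2}\bigl(\mf{Rc}^{(0)}-\tfrac1{2(n-1)}\mf{Rs}^{(0)}\mf{g}^{(0)}\bigr)$. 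For $n=2$, $\alpha_2$ is such that $\mc{S}(\ms{g};(\ol{\ms{f}}^0)^2)$ enters at leading order; matching the $\rho^1$--coefficient of the transport equation, using that in two dimensions $\mf{Rc}^{(0)}$ is pure trace, and fixing the trace part through $\mf{tr}_{\mf{g}^{(0)}}\mf{g}^\star=0$, yields $-\mf{g}^\star=\mf{f}^{0,(0)}\otimes\mf{f}^{0,(0)}-\tfrac12(\mf{f}^{0,(0)})^2_{\mf{g}^{(0)}}\mf{g}^{(0)}$. The argument is entirely bookkeeping; the one genuinely delicate point is that the free Maxwell coefficient sits at different powers of $\rho$ in $\ol{\ms{f}}^0$ according to whether $n=2$, $n=3$ or $n\geq4$ (with $n=3$ the borderline case) and that $\ol{\ms{f}}^0$, $\ms{g}$ and $\ol{\ms{f}}^1$ each carry polyhomogeneous $\log\rho$--terms for $n$ even, so that one must verify, dimension by dimension, that the Maxwell constraints remove any spurious dependence of the right-hand sides on a free coefficient and leave only the universal $\mi{D}$--structures claimed.
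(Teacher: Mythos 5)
Your proposal is correct and follows essentially the same route as the paper: the identities are read off the vertically decomposed trace-transport equation and the momentum/Maxwell constraint equations in the limit $\rho\searrow 0$, with the limits of Theorem~\ref{theorem_main_fg} as the essential input. The only difference is presentational — the paper extracts the coefficients via limits of the $\Lie_\rho^k$-differentiated equations (Propositions~\ref{higher_order_m}, \ref{prop_higher_constraints} together with Proposition~\ref{prop_integration}) rather than by literally matching expansion coefficients, but the mechanisms you use (the non-critical constant in the trace equation forcing $\mf{tr}_{\mf{g}^{(0)}}\mf{g}^\star=0$, and the logarithmically divergent $\ms{T}^1$-term in the momentum constraint governing $\mf{D}\cdot\mf{g}^\star$, $\mf{D}\cdot\mf{g}^{(n)}$) are exactly those of the paper's proof.
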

\begin{remark}
    \begin{itemize}
        \item One can interpret such relations as the fact that the boundary data for $(g, F)$ have to satisfy constraints. Namely, the trace and divergence of $\mf{g}^{(n)}$, the divergence of $\mf{f}^{0, ((n-4)_+)}$, as well as the $\mf{g}^{(0)}$--exterior derivative of $\mf{f}^{1, (0)}$, seen as a two-form on $\mc{I}$, are all fixed by the equations of motion. 
        \item Note that the divergence of the anomalous term $\mf{g}^\star$ does not generically vanish, as opposed to the vacuum case. A direct way to see it is given by the $n=2$ case where, in this case $\mf{D}\cdot \mf{g}^{\star}=0$ if and only if $\mf{f}^{1, \star}_{ab}=2\mf{D}_{[a}\mf{f}^{0,(0)}_{b]}=0$. More generally, this divergence anomaly is simply generated by the Maxwell fields, since we know that it has to vanish in the vacuum case. 
    \end{itemize}
\end{remark}

\begin{corollary}[Expansion of the Weyl tensor]\label{weyl_expansion}
    Let $(\mc{M}, g, F)$ be a Maxwell-FG-aAdS segment, $n=\operatorname{dim}\mc{I}\geq 3$, and assume that $(\mc{M}, g, F)$ is regular to order $M_0$. Then, the vertical decompositions of the Weyl tensor satisfy the following partial expansions in $\rho$: 
    \begin{align}
        &\label{expansion_w_0}\ms{w}^0_{abcd} = \begin{cases}
            \sum\limits_{k=1}^{\frac{n-1}{2}} \rho^{2k-2} \mf{w}^{0, (2k)}_{abcd} + \rho^{n-2}\mf{w}^{0,(n-2)}_{abcd}+ \rho^{n-2}\ms{r}_{\ms{w}^0}\text{,}\qquad n \text{ odd,}\\
            \sum\limits_{k=1}^{\frac{n-2}{2}}\rho^{2k-2}\mf{w}^{0, (2k)}_{abcd} + \rho^{n-2}\log\rho\cdot \mf{w}^{0,\star}_{abcd} + \rho^{n-2}\mf{w}^{0, (n-2)}_{abcd} + \rho^{n-2}\ms{r}_{\ms{w}^0}\text{,} \qquad n \text{ even,}
        \end{cases}\\
        &\label{expansion_w_1} \ms{w}^1_{ abc} =\begin{cases}
            \sum\limits_{k=1}^{\frac{n-1}{2}}\rho^{2k-1}\mf{w}^{1, (2k)}_{abc} + \rho^{n-1}\mf{w}^{1, (n-1)}_{abc} + \rho^{n-1}\ms{r}_{\ms{w}^1}\text{,}\qquad n \text{ odd,}\\
            \sum\limits_{k=1}^{\frac{n-2}{2}}\rho^{2k-1}\mf{w}^{1, (2k)}_{abc} + \rho^{n-1}\log\rho \cdot \mf{w}^{1, (n-1)}_{abc} + \rho^{n-1}\mf{w}^{1, (n-1)}_{abc} + \rho^{n-1}\ms{r}_{\ms{w}^1}\text{,} \qquad n\text{ even,}
        \end{cases}\\
        \label{expansion_w_2}& \ms{w}^2_{a b} = \begin{cases}
            \sum\limits_{k=2}^{\frac{n-1}{2}}\rho^{2k-2}\mf{w}^{2, (2k)}_{ab} + \rho^{n-2}\mf{w}^{2, (n-2)}_{ab} + \rho^{n-2}\ms{r}_{\ms{w}^2}\text{,}\qquad n\text{ odd,}\\
            \sum\limits_{k=2}^{\frac{n-2}{2}} \rho^{2k-2}\mf{w}^{2, (2k)}_{ab} + \rho^{n-2}\log\rho \cdot \mf{w}^{2, \star} + \rho^{n-2}\mf{w}^{2, (n-2)} + \rho^{n-2}\ms{r}_{\ms{w}^2}\text{,}\qquad n\text{ even,}
        \end{cases}
    \end{align}
    where: 
    \begin{itemize}
        \item The coefficients $\mf{w}^{0, (2k)}$, $\mf{w}^{1, (2k)}$ and $\mf{w}^{2, (2k)}$ are such that: 
        \begin{gather*}
            \mf{w}^{0,(2k)}=\mi{D}^{(2k, 2k-4)}\paren{\mf{g}^{(0)}, \mf{f}^{1,(0)}}\text{,}\\ \mf{w}^{1, (2k)} = \mi{D}^{(2k+1, 2k-3)}\paren{\mf{g}^{(0)}, \mf{f}^{1,(0)}}\text{,}\qquad \mf{w}^{2, (2k)} = \mi{D}^{(2k, 2k-4)}\paren{\mf{g}^{(0)}, \mf{f}^{1,(0)}}\text{.}
        \end{gather*}
        \item The anomalous coefficients $\mf{w}^{0, \star}, \mf{w}^{1, \star}$ and $\mf{w}^{2, \star}$ are such that:
        \begin{gather*}
            \mf{w}^{0, \star} = \mi{D}^{(n,n-4)}\paren{\mf{g}^{(0)}, \mf{f}^{1, (0)};n}\text{,} \\
            \mf{w}^{1, \star} = \mi{D}^{(n+1, n-3)}\paren{\mf{g}^{(0)}, \mf{f}^{1, (0)};n}\text{,}\qquad \mf{w}^{2, \star} = \mi{D}^{(n, n-4)}\paren{\mf{g}^{(0)}, \mf{f}^{1, (0)};n}\text{.}
        \end{gather*} 
        \item The coefficients $\mf{w}^{0, (n-2)}$, $\mf{w}^{1, (n-1)}$ and $\mf{w}^{2, (n-2)}$ are $C^{M_0-n}$, $C^{M_0-n-1}$ and $C^{M_0-n}$ tensor fields on $\mc{I}$.
        \item The remainder terms $\ms{r}_{\ms{w}^0}$, $\ms{r}_{\ms{w}^1}$ and $\ms{r}_{\ms{w}^2}$ satisfy: 
        \begin{gather*}
            \ms{r}_{\ms{w}^0}\rightarrow^{M_0-n}0\text{,}\qquad \ms{r}_{\ms{w}^1}\rightarrow^{M_0-n-1} 0\text{,}\qquad \ms{r}_{\ms{w}^2}\rightarrow^{M_0-n} 0\text{.}
        \end{gather*}
    \end{itemize}
\end{corollary}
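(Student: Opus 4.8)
Since the Weyl tensor vanishes identically in dimension three, we may assume $n\geq 3$ as in the statement. The plan is to reduce the claimed expansions to those of Theorem~\ref{theorem_main_fg} and Corollary~\ref{prop_expansion_vertical_fields} by first writing the three independent vertical projections $\ms{w}^0$, $\ms{w}^1$, $\ms{w}^2$ of the Weyl tensor as expressions that are \emph{algebraic} in $\ms{g}$, $\ms{m}=\Lie_\rho\ms{g}$, $\Lie_\rho\ms{m}$, the vertical derivatives $\ms{D}\ms{m}$, the intrinsic curvature $\ms{R}[\ms{g}]$, and the quadratic Maxwell terms $\mc{S}(\ms{g};\ol{\ms{f}}^0,\ol{\ms{f}}^0)$ and $\mc{S}(\ms{g};\ol{\ms{f}}^1,\ol{\ms{f}}^1)$, and then expanding these in $\rho$ using the already-established expansions.

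\textbf{Step 1: algebraic form of the vertical Weyl components.} Starting from \eqref{sec:aads_weyl_equation}, $W=\operatorname{R}[g]+\tfrac12 g\star g-\tfrac1{n-1}g\star\tilde T$, I decompose the vertical projections of $\operatorname{R}[g]$ by the Gauss--Codazzi relations for the level sets of $\rho$ in the Fefferman--Graham gauge (these follow from Proposition~\ref{sec:aads_derivatives_vertical} and Proposition~\ref{prop_christoffel_FG}), and then eliminate $\operatorname{Rc}[g]$ and $\operatorname{Rs}[g]$ via the Einstein equations \eqref{aads_einstein}. The formally singular contributions cancel — this is precisely the conformal invariance of $W$ under $g=\rho^{-2}(d\rho^2+\ms{g})$ — leaving identities of the schematic form
\begin{align*}
\ms{w}^2 &= -\tfrac12\rho^2\,\Lie_\rho\ms{m}+\mc{S}(\ms{g};\ms{m},\ms{m})+\rho^4\,\mc{S}(\ms{g};\ol{\ms{f}}^0,\ol{\ms{f}}^0)+\rho^2\,\mc{S}(\ms{g};\ol{\ms{f}}^1,\ol{\ms{f}}^1)+\dots\text{,}\\
\ms{w}^1 &= \rho\,\mc{S}(\ms{g};\ms{D}\ms{m})+\rho^{3}\,\mc{S}(\ms{g};\ms{D}\ol{\ms{f}}^1,\ol{\ms{f}}^1)+\dots\text{,}\\
\ms{w}^0 &= \ms{R}[\ms{g}]+\mc{S}(\ms{g};\ms{m},\ms{m})+\rho^4\,\mc{S}(\ms{g};\ol{\ms{f}}^0,\ol{\ms{f}}^0)+\rho^2\,\mc{S}(\ms{g};\ol{\ms{f}}^1,\ol{\ms{f}}^1)+\dots\text{,}
\end{align*}
where the omitted terms are lower-order algebraic combinations of the same fields with explicit scalar $\rho$-weights. (Equivalently, one can read these identities off from the vertical decomposition of the wave equation for $W$ used elsewhere in the paper, keeping only its ``principal plus algebraic'' part.)

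\textbf{Step 2: substitution and identification.} Into these identities I insert \eqref{expansion_g}, \eqref{expansion_f0}, \eqref{expansion_f1}, together with the induced expansions of $\ms{m}=\Lie_\rho\ms{g}$ and $\Lie_\rho\ms{m}=\Lie_\rho^2\ms{g}$ (termwise differentiation of \eqref{expansion_g}) and of $\ms{R}[\ms{g}]$ (expansion of the curvature around $\mf{g}^{(0)}$). Because the convergences $\Rightarrow^{M}$ and $\rightarrow^{M}$ descend to $\ms{D}$-derivatives with the loss of one order and are stable under tensor products and contractions with $\ms{g}^{\pm1}$, each substitution produces a finite expansion in powers $\rho^m$ and $\rho^m\log\rho$ with a remainder of the stated decay; collecting powers yields the coefficients. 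For $2k-2<n-2$ (respectively $2k-1<n-1$) every ingredient of $\mf{w}^{i,(2k)}$ comes from the subleading coefficients $\mf{g}^{(k')}$ ($k'<n$), $\mf{f}^{0,(q)}$ ($q<(n-4)_+$), $\mf{f}^{1,(l)}$ ($l<n-2$) and the anomalous terms, all of which, by Theorem~\ref{theorem_main_fg}, are $\mi{D}$-expressions in $(\mf{g}^{(0)},\mf{f}^{1,(0)})$ for $n\geq3$; the free data $\mf{g}^{(n)}$ — and, for $n\in\{3,4\}$, $\mf{f}^{0,(0)}$ — do not enter at those orders, since they sit at order $\rho^{n}$ in $\ms{g}$, order $\rho^{n-1}$ in $\ms{m}$ and at order $\geq\rho^{n-1}$ in the quadratic Maxwell terms. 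Tracking the number of $\ms{D}$-derivatives (two from $\ms{R}[\ms{g}]$ for $\ms{w}^0,\ms{w}^2$, one more from the Codazzi term for $\ms{w}^1$) together with the weight shifts coming from $\ms{f}^0\sim\rho^2$, $\ms{f}^1\sim\rho$ gives exactly $\mf{w}^{0,(2k)},\mf{w}^{2,(2k)}=\mi{D}^{(2k,2k-4)}(\mf{g}^{(0)},\mf{f}^{1,(0)})$, $\mf{w}^{1,(2k)}=\mi{D}^{(2k+1,2k-3)}(\mf{g}^{(0)},\mf{f}^{1,(0)})$, the analogous indices for the anomalous coefficients, and the ``$;n$'' (they vanish for $n$ odd because $\mf{g}^\star,\mf{f}^{0,\star},\mf{f}^{1,\star}$ do). At the borderline orders $\rho^{n-2}$ (for $\ms{w}^0,\ms{w}^2$) and $\rho^{n-1}$ (for $\ms{w}^1$) the free data enters, but only algebraically through $\ms{m}$ (and, for $\ms{w}^2$, $\Lie_\rho\ms{m}$), so $\mf{w}^{0,(n-2)}$ and $\mf{w}^{2,(n-2)}$ inherit the $C^{M_0-n}$ regularity of $\mf{g}^{(n)}$ while $\mf{w}^{1,(n-1)}$ loses one more derivative; the remainders $\ms{r}_{\ms{w}^i}$, which collect products of $\ms{r}_{\ms{g}},\ms{r}_{\ms{f}^0},\ms{r}_{\ms{f}^1}$ with bounded factors, the tail of the curvature expansion, and the contributions of the free data beyond the last displayed order, then satisfy $\ms{r}_{\ms{w}^0}\rightarrow^{M_0-n}0$, $\ms{r}_{\ms{w}^1}\rightarrow^{M_0-n-1}0$, $\ms{r}_{\ms{w}^2}\rightarrow^{M_0-n}0$, using $M_0\geq n+2$ to carry all derivatives involved.

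\textbf{Main obstacle.} The delicate part is Steps~1 and~2: one must verify that after inserting the Einstein equations the algebraic identities for $\ms{w}^0,\ms{w}^1,\ms{w}^2$ are genuinely free of the formally singular $\rho^{-2}$-contributions, and then pin down the exact order in $\rho$ at which the \emph{free} coefficients $\mf{g}^{(n)}$ — and, for $n=3,4$, $\mf{f}^{0,(0)}$ — first appear, so as to confirm that none of the universally-determined coefficients $\mf{w}^{i,(2k)}$ or $\mf{w}^{i,\star}$ is contaminated by them and that the borderline coefficients and remainders have exactly the stated regularity. Once the power-and-derivative structure is fixed, the remaining estimates are routine bookkeeping of the type already carried out for $\ms{g}$, $\ms{f}^0$ and $\ms{f}^1$.
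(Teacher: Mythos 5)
Your argument is essentially the paper's own proof: Step 1 reproduces Proposition \ref{prop_w_FG} (Gauss--Codazzi plus the Einstein equations in the FG gauge, with the singular $\rho^{-2}$ terms cancelling), and Step 2 is exactly the paper's substitution of the expansions from Corollary \ref{prop_expansion_vertical_fields} into those algebraic identities, followed by collecting powers and tracking derivative counts and regularity losses. Only your schematic displays carry some stray $\rho$-weights (e.g.\ $-\tfrac12\rho^2\Lie_\rho\ms{m}$ instead of $-\tfrac12\Lie_\rho\ms{m}+\tfrac{1}{2\rho}\ms{m}$, whose leading parts must cancel to make $\ms{w}^2=\mc{O}(\rho^2)$), but since the correct identities are exactly those of Proposition \ref{prop_w_FG}, this does not affect the argument.
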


\begin{proof}
    See Appendix \ref{app_weyl}
\end{proof}

\begin{remark}
    These results show that, in the context of the AdS-Einstein-Maxwell system, the correct holographic data to consider is given by $(\mf{g}^{(0)}, \mf{g}^{(n)}, \mf{f}^{0, ((n-4)_+)}, \mf{f}^{1, (0)})$. 
\end{remark}

\subsection{Proofs}\label{sec:proofs}

\subsubsection{Preliminary results}

First, let us define a useful vertical decomposition for the stress-energy tensor: 
\begin{definition}\label{def_decomposition_T}
    Let $(\mc{M}, g, \Phi)$ be a non-vacuum FG-aAdS segment and $(U, \varphi)$ a compact coordinate system on $\mc{I}$. We define the following vertical decomposition for the stress-energy tensor associated to the matter content $\Phi$: 
    \begin{gather}
        \ms{T}^0_{ab} = {T}_{ab}\text{,}\qquad \ms{T}^1_a = T_{\rho a}\text{,}\qquad \ms{T}^2 = T_{\rho\rho}\text{,}\\
        \tilde{\ms{T}}^0_{ab} = \tilde{T}_{ab}\text{,}\qquad \tilde{\ms{T}}^2 = \tilde{T}_{\rho\rho}\text{.}
    \end{gather}
\end{definition}

\begin{proposition}\label{prop_w_FG}
    Let $(\mc{M}, g, \Phi)$ be a non-vacuum aAdS segment and let $(U, \varphi)$ be a compact coordinate system on $\mc{I}$. Then, the following hold: 
    \begin{align}
        &\label{w_0_FG_asymp}\ms{w}^0 = \ms{R} - \frac{1}{8}\ms{m}\star \ms{m}  + \frac{1}{2\rho}\ms{g}\star \ms{m} - \frac{1}{n-1}\tilde{\ms{T}}^0\star \ms{g}\text{,}\\
        &\label{w_1_FG_asymp}\ms{w}^1_{abc} = \ms{D}_{[c}\ms{m}_{b]a} - \frac{2}{n-1} T_{\rho [b}\ms{g}_{c]a}\text{,}\\
        &\label{w_2_FG_asymp}\ms{w}^2_{ab} = -\frac{1}{2}\Lie_\rho \ms{m}_{ab} + \frac{1}{2\rho} \ms{m}_{ab}+\frac{1}{4} \ms{g}^{cd}\ms{m}_{ac}\ms{m}_{bd} - \frac{1}{n-1}\paren{\tilde{\ms{T}}^2\ms{g}_{ab} + \tilde{\ms{T}}^0_{ab}}\text{.}
    \end{align}
\end{proposition}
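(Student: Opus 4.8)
The plan is to read off the three independent vertical components of the Weyl tensor, $\ms{w}^0_{abcd}=\rho^2W_{abcd}$, $\ms{w}^1_{abc}=\rho^2W_{\rho abc}$ and $\ms{w}^2_{ab}=\rho^2W_{\rho a\rho b}$, directly from the identity \eqref{sec:aads_weyl_equation},
\[
    W = \operatorname{R}[g] + \tfrac12\, g\star g - \tfrac1{n-1}\, g\star \tilde T\text{,}
\]
by vertically decomposing each of the three terms on the right-hand side; every other component of $W$ either vanishes or is fixed by these through the curvature symmetries. Throughout I would use the FG-gauge identities $g_{ab}=\rho^{-2}\ms{g}_{ab}$, $g_{\rho\rho}=\rho^{-2}$, $g_{\rho a}=0$, the definition $\ms{m}=\Lie_\rho\ms{g}$, and the Christoffel symbols of Proposition \ref{prop_christoffel_FG}.

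The algebraic terms come for free. Since $g_{\rho a}=0$ one has $\tilde T_{\rho a}=T_{\rho a}=\ms{T}^1_a$, $\tilde T_{ab}=\tilde{\ms{T}}^0_{ab}$ and $\tilde T_{\rho\rho}=\tilde{\ms{T}}^2$, and expanding the $\star$-product in each sector yields
\begin{gather*}
    \rho^2(g\star g)_{abcd}=\rho^{-2}(\ms{g}\star\ms{g})_{abcd}\text{,}\qquad (g\star g)_{\rho abc}=0\text{,}\qquad \rho^2(g\star g)_{\rho a\rho b}=2\rho^{-2}\ms{g}_{ab}\text{,}\\
    \rho^2(g\star\tilde T)_{abcd}=(\ms{g}\star\tilde{\ms{T}}^0)_{abcd}\text{,}\qquad \rho^2(g\star\tilde T)_{\rho abc}=\ms{g}_{ac}\ms{T}^1_b-\ms{g}_{ab}\ms{T}^1_c\text{,}\qquad \rho^2(g\star\tilde T)_{\rho a\rho b}=\tilde{\ms{T}}^0_{ab}+\tilde{\ms{T}}^2\ms{g}_{ab}\text{.}
\end{gather*}
This already accounts for all the matter terms in \eqref{w_0_FG_asymp}--\eqref{w_2_FG_asymp} and for part of the remaining terms.

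The substantive step is the decomposition of $\operatorname{R}[g]$. The cleanest route is the Gauss--Codazzi--Ricci formalism for the level sets $\Sigma_\rho=\{\rho=\mathrm{const}\}$: these have induced metric $h=\rho^{-2}\ms{g}$, spacelike unit normal $N=\rho\partial_\rho$ which is geodesic (from $\Gamma^\alpha_{\rho\rho}=-\rho^{-1}\delta^\alpha_\rho$ one gets $\nabla_NN=0$), and second fundamental form $\mathrm{II}_{ab}=-\rho^{-2}\ms{g}_{ab}+\tfrac12\rho^{-1}\ms{m}_{ab}$, since $\Lie_N h=-2\rho^{-2}\ms{g}+\rho^{-1}\ms{m}$. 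The key point is that the Levi--Civita connection of $h$ coincides with $\ms{D}$ — the conformal factor $\rho^{-2}$ is constant on $\Sigma_\rho$ — so $\ms{D} h=\ms{D}\ms{g}=0$ and the intrinsic curvature of $h$ equals $\rho^{-2}\ms{R}$. The Gauss equation then gives
\[
    \rho^2\operatorname{R}[g]_{abcd}=\ms{R}_{abcd}-\tfrac12\rho^{-2}(\ms{g}\star\ms{g})_{abcd}+\tfrac1{2\rho}(\ms{g}\star\ms{m})_{abcd}-\tfrac18(\ms{m}\star\ms{m})_{abcd}\text{,}
\]
the Codazzi equation (using $\ms{D}\ms{g}=0$) gives $\rho^2\operatorname{R}[g]_{\rho abc}=\ms{D}_{[c}\ms{m}_{b]a}$, and the radial (Ricci) equation gives $\rho^2\operatorname{R}[g]_{\rho a\rho b}=-\rho^{-2}\ms{g}_{ab}-\tfrac12\Lie_\rho\ms{m}_{ab}+\tfrac1{2\rho}\ms{m}_{ab}+\tfrac14\ms{g}^{cd}\ms{m}_{ac}\ms{m}_{bd}$. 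Since Proposition \ref{prop_christoffel_FG} already records all connection coefficients, an equivalent fully self-contained option is to bypass the hypersurface formalism altogether and substitute directly into $R^\sigma{}_{\mu\alpha\beta}=\partial_\alpha\Gamma^\sigma_{\beta\mu}-\partial_\beta\Gamma^\sigma_{\alpha\mu}+\Gamma^\sigma_{\alpha\lambda}\Gamma^\lambda_{\beta\mu}-\Gamma^\sigma_{\beta\lambda}\Gamma^\lambda_{\alpha\mu}$, then lower an index with $g$.

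Adding the three contributions in each sector produces exactly \eqref{w_0_FG_asymp}--\eqref{w_2_FG_asymp}: the two $\rho^{-2}(\ms{g}\star\ms{g})$ pieces cancel in the $abcd$-sector, the $g\star g$ piece is absent in the $\rho abc$-sector, and the two $\rho^{-2}\ms{g}_{ab}$ pieces cancel in the $\rho a\rho b$-sector, leaving precisely the $\Lie_\rho\ms{m}$, $\rho^{-1}\ms{m}$, $\ms{m}^2$ and stress-energy terms. The main obstacle is entirely bookkeeping: tracking signs, the index orderings in $\ms{D}_{[c}\ms{m}_{b]a}$ and in the $\star$-products, and — most delicate — pinning down the radial equation for $\operatorname{R}[g]_{\rho a\rho b}$ correctly. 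As a consistency check one verifies that all formulas collapse to $\ms{w}^i\equiv0$ on pure AdS \eqref{pure_ads_transformed}, where $W\equiv0$ and $T\equiv0$.
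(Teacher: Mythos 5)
Your proposal is correct and follows essentially the same route as the paper: the paper likewise obtains \eqref{w_0_FG_asymp}--\eqref{w_2_FG_asymp} by vertically decomposing \eqref{sec:aads_weyl_equation} and inserting the Gauss, Codazzi and Ricci (radial) equations for the level sets of $\rho$, with the singular $\rho^{-2}\,\ms{g}\star\ms{g}$ and $\rho^{-2}\ms{g}_{ab}$ terms cancelling exactly as you describe. The extra details you supply (explicit second fundamental form, the decomposition of $g\star\tilde T$, the pure-AdS check) are consistent with, and merely flesh out, the paper's argument.
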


\begin{proof}
    First of all, the Gauss equation on level sets of $\rho$ gives the following identity: 
    \begin{equation}\label{Gauss_eq}
        \rho^2\operatorname{R}[g]_{abcd} = \ms{R}_{abcd} - \frac{1}{8}(\ms{m}\star\ms{m})_{abcd} + \frac{1}{2\rho}(\ms{g}\star\ms{m})_{abcd} - \frac{1}{2\rho^2}(\ms{g}\star\ms{g})_{abcd}\text{.}
    \end{equation}

    Next, the Codazzi equations give: 
    \begin{equation*}
        \rho\operatorname{R}[g]_{\rho a bc} = \rho^{-1}\ms{D}_{[c}\ms{m}_{b]a}\text{.}
    \end{equation*}

    Finally, the Ricci equation gives: 
    \begin{equation*}
        \rho^2 R_{\rho a\rho b} + \rho^{-2}\ms{g}_{ab} = -\frac{1}{2}\Lie_\rho\ms{m}_{ab} + \frac{1}{2\rho}\ms{m}_{ab} + \frac{1}{4}\ms{g}^{cd}\ms{m}_{ac}\ms{m}_{bd}\text{.}
    \end{equation*}

    The equations \eqref{w_0_FG}, \eqref{w_1_FG} and \eqref{w_2_FG} can therefore simply be obtained by decomposing \eqref{sec:aads_weyl_equation}. Starting with the purely vertical component, one finds: 
    \begin{align*}
        \rho^2 W_{abcd} = \rho^2\operatorname{R}[g]_{abcd} + \frac{1}{2\rho^2}(\ms{g}\star\ms{g})_{abcd} -\frac{1}{n-1}(\ms{g}\star\tilde{\ms{T}}^0)_{abcd}\text{.}
    \end{align*}
    Combining with \eqref{Gauss_eq}, the singular terms $\rho^{-2}\ms{g}\star\ms{g}$ cancel each other, and one is left with \eqref{w_0_FG_asymp}.

    The identities \eqref{w_1_FG_asymp} and \eqref{w_2_FG_asymp} are obtained similarly. 
\end{proof}

We are now ready to establish the transport equations satisfied by $\ms{m}$ and $\ms{tr}_{\ms{g}}\ms{m}$. 

\begin{proposition}[Transport equation for the metric]\label{prop_transport_m}
    Let $(\mc{M}, g, \Phi)$ be a non-vacuum FG-aAdS segment and let $(U, \varphi)$ be a compact coordinate system on $\mc{I}$. The following transport equations hold for $\ms{m}$: 
    \begin{align}
        &\begin{aligned}
        \label{transport_m}\rho\Lie_\rho \ms{m}_{ab} -(n-1)\ms{m}_{ab} =&\, 2\rho \ms{Rc}_{ab}+\ms{tr}_{\ms{g}}\ms{m}\cdot \ms{g}_{ab} +\rho\cdot \ms{g}^{cd}\ms{m}_{ac}\ms{m}_{bd} - \frac{1}{2}\rho\ms{tr}_{\ms{g}}\ms{m}\cdot \ms{m}_{ab} \\
        &-2\rho\ms{T}^0_{ab}+\frac{2\rho}{n-1}\ms{tr}_{\ms{g}}\ms{T}^0\cdot \ms{g}_{ab} +\frac{2\rho}{n-1}\ms{T}^2\cdot \ms{g}_{ab}\text{.}
        \end{aligned}\\
        &\begin{aligned}
            \label{transport_tr_m}\rho\Lie_\rho \ms{tr}_{\ms{g}}\ms{m} - (2n-1)\ms{tr}_{\ms{g}}\ms{m} = 2\rho\ms{Rs}  -\frac{1}{2}\rho(\ms{tr}_{\ms{g}}\ms{m})^2 + \frac{2}{n-1}\rho\ms{tr}_{\ms{g}}\ms{T}^0 + \frac{2n}{n-1}\rho\ms{T}^2\text{.}
        \end{aligned}
    \end{align}
    Furthermore, if $(\mc{M}, g, F)$ is a Maxwell-FG-aAdS segment, the decomposition of stress-energy tensor takes the following form: 
    \begin{align*}
        &\ms{T}^0_{ab} = \rho^2 \paren{\ms{g}^{bd}\ol{\ms{f}}^1_{ac}\ol{\ms{f}}^1_{bd}-\frac{1}{4}\ms{g}_{ab}(\ol{\ms{f}}^1)^2} + \paren{\ol{\ms{f}}^0_a\ol{\ms{f}}^0_b - \frac{1}{2}\ms{g}_{ab}(\ms{f}^0)^2}\cdot 
        \left\{\begin{aligned}
            &\rho^4 \text{,}\qquad &n\geq 4\text{,}\\
            &\rho^2\text{,}\qquad &n=3\text{,}\\
            &1\text{,}\qquad &n=2\text{,}
        \end{aligned}\right.\\
        &\ms{T}^1_a = \frac{1}{2}\ms{g}^{cd}\ol{\ms{f}}^0_c\ol{\ms{f}}^1_{ad}\cdot 
        \left\{\begin{aligned}
            &\rho^3\text{,}\qquad &&n\geq 4\text{,}\\
            &\rho^2\text{,}\qquad &&n=3\text{,}\\
            &\rho\text{,}\qquad &&n=2\text{,}
        \end{aligned}\right.\\
        &\ms{T}^2 = -\frac{1}{4}\rho^2(\ol{\ms{f}}^1)^2 + \frac{1}{2}(\ol{\ms{f}}^0)^2\cdot\left\{\begin{aligned}
            &\rho^4\text{,}\qquad &&n\geq 4\text{,}\\
            &\rho^2\text{,}\qquad &&n=3\text{,}\\
            &1\text{,}\qquad &&n=2\text{.}
        \end{aligned}\right.
    \end{align*}
\end{proposition}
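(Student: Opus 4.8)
The two transport equations \eqref{transport_m} and \eqref{transport_tr_m} will be obtained by evaluating the vertical component $\operatorname{Rc}[g]_{ab}$ of the spacetime Ricci tensor in two ways and comparing. On the one side, the Einstein equations \eqref{aads_einstein} give $\operatorname{Rc}[g]_{ab} = -n\rho^{-2}\ms{g}_{ab} + \hat T_{ab}$; decomposing $\hat T$ as in Definition \ref{def_modif_T} and using $\operatorname{tr}_g T = \rho^2\paren{\ms{T}^2 + \ms{tr}_{\ms{g}}\ms{T}^0}$ together with $g_{ab} = \rho^{-2}\ms{g}_{ab}$, this becomes $\operatorname{Rc}[g]_{ab} = -n\rho^{-2}\ms{g}_{ab} + \ms{T}^0_{ab} - \frac{1}{n-1}\paren{\ms{T}^2 + \ms{tr}_{\ms{g}}\ms{T}^0}\ms{g}_{ab}$. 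On the other side, splitting the contraction over spacetime indices gives $\operatorname{Rc}[g]_{ab} = \rho^2 R_{\rho a\rho b} + \rho^2\ms{g}^{cd}R_{cadb}$, and I would substitute the Ricci equation recalled in the proof of Proposition \ref{prop_w_FG} for $R_{\rho a\rho b}$, and the Gauss equation \eqref{Gauss_eq}, contracted with $\ms{g}^{cd}$, for the remaining term. (The Codazzi equation, which entered the formula for $\ms{w}^1$, is not needed here.)

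The required contractions of the $\star$-products are elementary and should be recorded first: $\ms{g}^{cd}(\ms{g}\star\ms{g})_{cadb} = 2(n-1)\ms{g}_{ab}$, $\ms{g}^{cd}(\ms{g}\star\ms{m})_{cadb} = (n-2)\ms{m}_{ab} + \ms{tr}_{\ms{g}}\ms{m}\cdot\ms{g}_{ab}$, and $\ms{g}^{cd}(\ms{m}\star\ms{m})_{cadb} = 2\,\ms{tr}_{\ms{g}}\ms{m}\cdot\ms{m}_{ab} - 2\,\ms{g}^{cd}\ms{m}_{ac}\ms{m}_{bd}$. With these, the singular $\rho^{-2}\ms{g}_{ab}$ contributions coming from the Gauss and Ricci equations add up to $-n\rho^{-2}\ms{g}_{ab}$ and cancel against the cosmological-constant term on the Einstein side; multiplying what remains by $-2\rho$ and collecting terms produces exactly \eqref{transport_m}. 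Equation \eqref{transport_tr_m} then follows by taking the $\ms{g}$-trace of \eqref{transport_m}; the one subtlety is that $\Lie_\rho$ does not commute with the $\ms{g}$-trace, $\ms{g}^{ab}\Lie_\rho\ms{m}_{ab} = \Lie_\rho(\ms{tr}_{\ms{g}}\ms{m}) + \ms{g}^{ab}\ms{g}^{cd}\ms{m}_{ac}\ms{m}_{bd}$, and the quadratic term so generated cancels precisely against the trace of the $\rho\,\ms{g}^{cd}\ms{m}_{ac}\ms{m}_{bd}$ term on the right of \eqref{transport_m}, after which the surviving constants recombine into the coefficient $2n-1$.

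For the explicit form of the stress-energy decomposition in the Maxwell case, the plan is simply to substitute the definition \eqref{stress_energy_tensor} of $T[g;F]$ into the components $\ms{T}^0_{ab} = T_{ab}$, $\ms{T}^1_a = T_{\rho a}$, $\ms{T}^2 = T_{\rho\rho}$ of Definition \ref{def_decomposition_T}, using $g^{\rho\rho} = \rho^2$, $g^{cd} = \rho^2\ms{g}^{cd}$, $g_{\rho\rho} = \rho^{-2}$, $g_{ab} = \rho^{-2}\ms{g}_{ab}$, the antisymmetry $F_{\rho\rho} = 0$, and the identifications $F_{ab} = \ol{\ms{f}}^1_{ab}$, $F_{\rho a} = \rho^{c_n}\ol{\ms{f}}^0_a$ with $c_n = 1, 0, -1$ for $n\geq 4$, $n = 3$, $n = 2$ respectively (Definition \ref{def_vert_f_w}). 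One finds $|F|^2_g = 2\rho^4\ms{g}^{ab}F_{\rho a}F_{\rho b} + \rho^4(\ol{\ms{f}}^1)^2$, and substituting this back into the three components yields the stated identities, the dimension-dependent powers of $\rho$ coming precisely from $\rho^{c_n}$. The only genuine difficulty throughout is bookkeeping: matching powers of $\rho$ so that the singular contributions cancel, and keeping the combinatorial constants from the $\star$-products and traces straight; no idea beyond the Gauss--Ricci decomposition of the radial foliation and the algebraic form \eqref{aads_einstein} of the field equations is required.
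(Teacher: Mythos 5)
Your proof is correct, and it reaches the same identities as the paper by a slightly different packaging of the same ingredients. The paper's proof is two lines: it invokes the tracelessness of the Weyl tensor, $g^{\mu\nu}W_{\mu a\nu b}=0$, in the form $\ms{g}^{cd}\ms{w}^0_{cadb}=-\ms{w}^2_{ab}$, and then substitutes the already-derived vertical Weyl formulas \eqref{w_0_FG_asymp}, \eqref{w_2_FG_asymp} of Proposition \ref{prop_w_FG}; for \eqref{transport_tr_m} it takes the $\ms{g}$-trace and uses $[\ms{tr}_{\ms{g}},\Lie_\rho]\ms{m}=\mc{S}(\ms{g};\ms{m}^2)$, exactly the commutator identity you isolate. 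Your route bypasses the Weyl tensor entirely and instead equates the Einstein-equation expression $\operatorname{Rc}_{ab}=-n\rho^{-2}\ms{g}_{ab}+\hat T_{ab}$ with the Gauss--Ricci decomposition $\rho^{2}R_{\rho a\rho b}+\rho^{2}\ms{g}^{cd}R_{cadb}$; this is equivalent, since tracing \eqref{sec:aads_weyl_equation} over the first and third slots is precisely \eqref{aads_einstein}, and Proposition \ref{prop_w_FG} itself was obtained from the Gauss and Ricci equations you quote, so the two arguments differ only in whether the Einstein equations are fed in through $W$ or through $\operatorname{Rc}$ (the paper's version buys brevity by reusing Proposition \ref{prop_w_FG}; yours is self-contained). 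Your $\star$-contractions, the cancellation of the $\rho^{-2}\ms{g}_{ab}$ terms, the multiplication by $-2\rho$, and the trace bookkeeping giving the coefficient $2n-1$ and the $\tfrac{2}{n-1}$, $\tfrac{2n}{n-1}$ factors all check out. For the stress-energy components the paper offers no argument, and direct substitution is indeed the intended one; note only that this substitution gives $T_{\rho a}=\rho^{2}\ms{g}^{cd}F_{\rho c}F_{ad}$, i.e.\ \emph{without} the prefactor $\tfrac12$ appearing in the stated $\ms{T}^1_a$ --- a coefficient slip in the statement rather than in your method, and immaterial later since $\ms{T}^1$ only enters through schematic $\mc{S}$-type remainders.
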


\begin{proof}
    This is a consequence of the tracelessness of the Weyl curvature. Namely, \eqref{transport_m} is obtained from the following: 
    \begin{equation*}
        g^{\mu \nu} W_{\mu a \nu b} = 0\Rightarrow \ms{g}^{cd}\ms{w}^0_{cadb} = - \ms{w}^2_{ab}\text{.}
    \end{equation*}
    Equation \eqref{transport_tr_m} is obtained by taking the $\ms{g}$--trace of \eqref{transport_m}. In particular, one can use: 
    \begin{equation*}
        [\ms{tr}_{\ms{g}}, \Lie_\rho]\ms{m} = \mc{S}\paren{\ms{g}; (\ms{m})^2}\text{.}
    \end{equation*}
\end{proof}

The next proposition establishes the transport equations satisfied by the Maxwell fields:

\begin{proposition}[Transport equations for the Maxwell fields]\label{prop_transport_f}
    Let $(\mc{M}, g, F)$ be a Maxwell-FG-aAds segment and let $(U, \varphi)$ be a compact coordinate system on $\mc{I}$. The following transport equations hold for $\ol{\ms{f}}^0$ and $\ol{\ms{f}}^1$: 
    \begin{align}
        &\label{transport_f_0}
        \rho\Lie_\rho \ol{\ms{f}}^0_{a} - (n-4)_+\ol{\ms{f}}^0_{a} = \rho\cdot \ms{m}_a{}^b \ol{\ms{f}}^0_b - \frac{1}{2}\rho\ms{tr}_{\ms{g}}\ms{m}\cdot \ol{\ms{f}}^0_a -\ms{D}\cdot \ol{\ms{f}}^1_a\cdot \left\{\begin{aligned}
        &1\text{,}\qquad &&n\geq 4\text{,}\\
        &\rho \text{,}\qquad &&n=3\text{,}\\
        &\rho^2 \text{,}\qquad &&n=2\text{,}
        \end{aligned}\right.\\
        &\label{transport_f_1}\Lie_\rho \ol{\ms{f}}^1_{ab} = 2\ms{D}_{[a}\ol{\ms{f}}^0_{b]}\cdot \left\{\begin{aligned}
            &\rho\text{,}\qquad &&n\geq 4\text{,}\\
            &1\text{,}\qquad &&n=3\text{,}\\
            &\rho^{-1}\text{,}\qquad &&n=2\text{.}
        \end{aligned}\right.
    \end{align}
\end{proposition}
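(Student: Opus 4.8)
The plan is to extract both transport equations directly from the source-free Maxwell system $dF = 0$, $\text{div}_g F = 0$ of Definition~\ref{def_Maxwell_FG_aAdS}, working in a fixed compact coordinate system on $\mc{I}$ and using the block-diagonal form of $g$ in the Fefferman--Graham gauge together with the connection coefficients of Proposition~\ref{prop_christoffel_FG}. It is convenient to set $p := -1$ for $n \geq 4$, $p := 0$ for $n = 3$ and $p := 1$ for $n = 2$, so that $\ol{\ms{f}}^0_a = \rho^{p} F_{\rho a}$ uniformly in $n$; the bookkeeping identities $3 - n - p = -(n-4)_+$ and $\rho^{p+1} \in \{1, \rho, \rho^2\}$ in the three respective ranges are precisely what will produce the dimension-dependent factors in \eqref{transport_f_0} and \eqref{transport_f_1}.

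I would first prove \eqref{transport_f_1} from the mixed component of $dF = 0$. Its $(\rho,a,b)$ part is $\partial_\rho F_{ab} - \partial_a F_{\rho b} + \partial_b F_{\rho a} = 0$, in which $\partial_\rho F_{ab} = \Lie_\rho \ol{\ms{f}}^1_{ab}$ since $\Lie_{\partial_\rho}$ acts componentwise on vertical tensor fields. Substituting $F_{\rho b} = \rho^{-p}\ol{\ms{f}}^0_b$, using that $\rho^{-p}$ is constant along the vertical directions, and noting that the Christoffel symbols of $\ms{D}$ are symmetric in their lower indices, one obtains $\Lie_\rho \ol{\ms{f}}^1_{ab} = \rho^{-p}\paren{\partial_a \ol{\ms{f}}^0_b - \partial_b \ol{\ms{f}}^0_a} = 2\rho^{-p}\ms{D}_{[a}\ol{\ms{f}}^0_{b]}$, which is exactly \eqref{transport_f_1}.

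Next I would prove \eqref{transport_f_0} from the vertical component $\nabla^\alpha F_{\alpha a} = 0$ of the divergence equation. Since $g^{\rho\rho} = \rho^2$, $g^{\rho a} = 0$ and $g^{ab} = \rho^2\ms{g}^{ab}$, this reads $\nabla_\rho F_{\rho a} + \ms{g}^{bc}\nabla_b F_{ca} = 0$ after the overall factor $\rho^2$ cancels. I would then expand both covariant derivatives using Proposition~\ref{prop_christoffel_FG}; the only coefficient not listed there is the purely-vertical $\Gamma^c_{ab}$, which equals the vertical Christoffel symbol $\ms{\Gamma}^c_{ab}$ because the conformal factor $\rho^{-2}$ does not depend on the vertical coordinates, so that the purely-vertical coordinate derivatives reassemble into $\ms{D}$ (this is also contained in the vertical decomposition of $\nabla F$ given by Proposition~\ref{sec:aads_derivatives_vertical}). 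Collecting the $\rho^{-1}$-terms and the $\ms{m}$-cross-terms, using $\ms{g}^{bc}\ms{g}_{bc} = n$, $\ms{g}^{bc}\ms{g}_{ba} = \delta^c_a$ and $\ms{g}^{bc}\ms{m}_{bc} = \ms{tr}_{\ms{g}}\ms{m}$, collapses the identity to
\begin{equation*}
    \partial_\rho F_{\rho a} + (3-n)\rho^{-1}F_{\rho a} - \ms{m}_a{}^b F_{\rho b} + \tfrac{1}{2}\ms{tr}_{\ms{g}}\ms{m}\cdot F_{\rho a} + \ms{D}\cdot\ol{\ms{f}}^1_a = 0\text{,}
\end{equation*}
where $F_{ca} = \ol{\ms{f}}^1_{ca}$ identifies the last term as the intrinsic divergence. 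Substituting $F_{\rho a} = \rho^{-p}\ol{\ms{f}}^0_a$, so that $\partial_\rho F_{\rho a} = \rho^{-p}\paren{\Lie_\rho \ol{\ms{f}}^0_a - p\rho^{-1}\ol{\ms{f}}^0_a}$, and multiplying through by $\rho^{p+1}$ yields
\begin{equation*}
    \rho\,\Lie_\rho \ol{\ms{f}}^0_a + (3-n-p)\ol{\ms{f}}^0_a = \rho\,\ms{m}_a{}^b\ol{\ms{f}}^0_b - \tfrac{1}{2}\rho\,\ms{tr}_{\ms{g}}\ms{m}\cdot\ol{\ms{f}}^0_a - \rho^{p+1}\,\ms{D}\cdot\ol{\ms{f}}^1_a\text{,}
\end{equation*}
and inserting $3 - n - p = -(n-4)_+$ and $\rho^{p+1} = 1, \rho, \rho^2$ for $n\geq 4$, $n=3$, $n=2$ gives \eqref{transport_f_0}.

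I do not expect a genuine obstacle: the entire argument is algebraic, with no analytic input beyond the Maxwell equations and the explicit connection coefficients. The only points requiring attention are keeping the three dimensional ranges synchronised throughout — which the single parameter $p$ is designed to handle — and verifying $\Gamma^c_{ab} = \ms{\Gamma}^c_{ab}$, which is what allows the purely-vertical contributions to assemble into the intrinsic divergence $\ms{D}\cdot\ol{\ms{f}}^1$ rather than a gauge-dependent combination.
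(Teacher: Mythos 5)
Your proposal is correct and follows essentially the same route as the paper: vertically decomposing $\nabla^\mu F_{\mu a}=0$ and the Bianchi/closedness identity in the Fefferman--Graham gauge via the Christoffel symbols of Proposition~\ref{prop_christoffel_FG}, then rescaling $F_{\rho a}$, $F_{ab}$ by the dimension-dependent powers of $\rho$ from Definition~\ref{def_vert_f_w}; your uniform parameter $p$ is just a clean bookkeeping device for the three dimensional ranges, and your intermediate identity agrees with the paper's displayed equation for $\ms{F}^0$ after dividing by $\rho^2$.
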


\begin{proof}
    These can be obtained from the following decomposition of the Maxwell and Bianchi equations:
    \begin{gather}\label{decomposition_Maxwell}
        \nabla^\mu F_{\mu a} = 0\text{,}\qquad \nabla_{[\rho}F_{ab]}=0\text{.}
    \end{gather}
    Next, one can use Proposition \ref{prop_christoffel_FG} to obtain indeed, for the first equation on the left: 
    \begin{equation*}
        \Lie_\rho(\rho^2\ms{F}^0)_a-(n-1)\rho\ms{F}^0_a = -\rho^2(\ms{D}\cdot \ms{F}^1)_a + \rho^2\cdot \ms{m}_a{}^b \ms{F}^0_b - \frac{1}{2}\rho^2\ms{tr}_{\ms{g}}\ms{m}\cdot \ms{F}^0_a\text{,}
    \end{equation*}
    where we defined, here for convenience: 
    \begin{equation*}
        \ms{F}^0_a := F_{\rho a}\text{,}\qquad \ms{F}^1_{ab} := F_{ab}\text{.}
    \end{equation*}
    The first equation in \eqref{decomposition_Maxwell} gives, combining it with Definition \ref{def_vert_f_w}, Equation \eqref{transport_f_0}. 

    The second equation in \eqref{decomposition_Maxwell} leads to: 
    \begin{equation*}
        \Lie_\rho \ms{F}^1_{ab} = 2\ms{D}_{[a}\ms{F}^0_{b]}\text{,}
    \end{equation*}
    which indeed gives \eqref{transport_f_1}, after using Definition \ref{def_vert_f_w}. 
\end{proof}

\begin{proposition}[Constraint equations]
    Let $(\mc{M}, g, \Phi)$ be a non-vacuum FG-aAdS segment and let $(U, \varphi)$ be a compact coordinate system on $\mc{I}$. Then, the following equation is satisfied for the metric on $\mc{U}$:
    \begin{equation}\label{eq_Dm}
        \ms{D}\cdot \ms{m}_a = \ms{D}_a \ms{tr}_{\ms{g}}\ms{m} + 4\ms{T}^1_a\text{.}
    \end{equation}
    Furthermore, if $(\mc{M}, g, F)$ is a Maxwell-FG-aAdS segment, the following constraint equations are satisfied for the Maxwell fields: 
    \begin{align}\label{eq_Df}
        \ms{D}_{[a}\ol{\ms{f}}^1_{bc]} = 0\text{,}\qquad \ms{D}\cdot \ol{\ms{f}}^0 = \frac{1}{2}\ms{g}^{ab}\ms{g}^{cd}\ms{m}_{ac}\ol{\ms{f}}^1_{bd}\cdot \left\{\begin{aligned}
            &\rho^{-1}\text{,}\qquad &&n\geq 4\text{,}\\
            &1\text{,}\qquad &&n=3\text{,}\\
            &\rho\text{,}\qquad &&n=2\text{.}
        \end{aligned}\right.
    \end{align}
\end{proposition}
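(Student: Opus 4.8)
The three identities are all \emph{constraint} equations: each is precisely a component of the Bianchi/field equations that is tangential to the level sets of $\rho$ and hence contains no $\rho$-derivative of the evolved quantities. The plan is to obtain \eqref{eq_Dm} from the tracelessness of the Weyl tensor, and \eqref{eq_Df} by decomposing $dF=0$ and $\mathrm{div}_g F=0$ along $\partial_\rho$. For \eqref{eq_Dm}, start from the identity \eqref{w_1_FG_asymp} already established in Proposition \ref{prop_w_FG}, $\ms{w}^1_{abc} = \ms{D}_{[c}\ms{m}_{b]a} - \tfrac{2}{n-1}T_{\rho[b}\ms{g}_{c]a}$. Since $W$ is totally trace-free one has $g^{\alpha\gamma}W_{\alpha b\gamma c}=0$; splitting the sum and recalling $\ms{w}^1_{abc}=\rho^2 W_{\rho abc}$ and $\ms{g}^{ab}=\rho^{-2}g^{ab}$ gives the vanishing of the $\ms{g}$-trace of $\ms{w}^1$ over its first and third slots, $\ms{g}^{ac}\ms{w}^1_{abc}=0$. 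Expanding this with the antisymmetrisation convention, using $\ms{g}^{ac}\ms{D}_{[c}\ms{m}_{b]a} = \tfrac{1}{2}\bigl((\ms{D}\cdot\ms{m})_b - \ms{D}_b\ms{tr}_{\ms{g}}\ms{m}\bigr)$ and $\ms{g}^{ac}\ms{g}_{ac}=n$, and substituting $\ms{T}^1_b := T_{\rho b}$ from Definition \ref{def_decomposition_T}, yields \eqref{eq_Dm}. (Equivalently, \eqref{eq_Dm} is the $(\rho,a)$-component of the Einstein equations \eqref{aads_einstein}, i.e.\ the momentum constraint of the Gauss--Codazzi system for the level sets of $\rho$ written in the FG gauge; this is the contracted Codazzi equation.)

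For \eqref{eq_Df}, the first identity is immediate: $dF=0$ gives $\partial_{[a}F_{bc]}=0$, and since the Levi-Civita connection $\ms{D}$ of $\ms{g}$ is torsion-free, $\ms{D}_{[a}\ol{\ms{f}}^1_{bc]} = \partial_{[a}\ol{\ms{f}}^1_{bc]} = \partial_{[a}F_{bc]} = 0$. For the second, take the $\rho$-component of $\mathrm{div}_g F=0$ from \eqref{decomposition_Maxwell}, that is $g^{\alpha\beta}\nabla_\alpha F_{\beta\rho} = g^{\rho\rho}\nabla_\rho F_{\rho\rho} + g^{ab}\nabla_a F_{b\rho}=0$. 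Using the Christoffel symbols of Proposition \ref{prop_christoffel_FG}: $F_{\rho\rho}=0$ together with $\Gamma^a_{\rho\rho}=0$ kills the first term, while in $\nabla_a F_{b\rho}$ the only nontrivial corrections come from $\Gamma^c_{ab}=\ms{\Gamma}^c_{ab}$ (which assembles $\partial_a F_{b\rho}$ into $\ms{D}_a F_{b\rho}$) and from $\Gamma^c_{a\rho} = -\rho^{-1}\delta^c_a + \tfrac{1}{2}\ms{g}^{cd}\ms{m}_{da}$. Contracting with $g^{ab}=\rho^2\ms{g}^{ab}$, the $-\rho^{-1}\delta^c_a$ contribution drops by antisymmetry of $F$, and the $\ms{m}$-correction assembles into the right-hand side of \eqref{eq_Df}; finally re-expressing $F_{\rho a}$ in terms of $\ol{\ms{f}}^0$ and $F_{ab}$ in terms of $\ol{\ms{f}}^1$ according to Definition \ref{def_vert_f_w} produces the dimension-dependent prefactors $\rho^{-1}$, $1$, $\rho$ for $n\geq4$, $n=3$, $n=2$ respectively.

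\textbf{Main obstacle.} None of the steps is deep; the work is essentially bookkeeping, and the two nontrivial places to be careful are the following. First, one must track the conformal rescalings $\ms{g}_{ab}=\rho^2 g_{ab}$, $\ms{g}^{ab}=\rho^{-2}g^{ab}$ together with the three different normalisations $\ol{\ms{f}}^0 = \rho^{-1}F_{\rho a}$, $F_{\rho a}$, $\rho F_{\rho a}$ across $n\geq4$, $n=3$, $n=2$ — this is precisely where the $\rho$-powers on the right-hand sides of \eqref{eq_Df} come from, and it is the only place the dimension matters. Second, one must take the correct trace of $\ms{w}^1$: the contraction over the antisymmetric pair $b,c$ is trivially zero and carries no information, so it is the trace of a first-pair index against a second-pair index that must be used. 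Once the index contractions are organised this way, \eqref{eq_Dm} and each half of \eqref{eq_Df} follow from a single computation.
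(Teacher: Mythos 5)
Your route is the same as the paper's own (very terse) proof: \eqref{eq_Dm} is obtained from a trace of the Weyl tensor combined with Proposition \ref{prop_w_FG}, and \eqref{eq_Df} from the $\rho$-component of $\operatorname{div}_g F=0$ and the purely vertical component of $dF=0$, decomposed via Definition \ref{def_vert_f_w} and Proposition \ref{prop_christoffel_FG}. One small slip in the set-up: the trace identity you quote, $g^{\alpha\gamma}W_{\alpha b\gamma c}=0$ with both free indices vertical, is the one that reproduces the transport relation $\ms{g}^{cd}\ms{w}^0_{cadb}=-\ms{w}^2_{ab}$; for the constraint you need the component with a free $\rho$ index, $g^{\mu\nu}W_{\mu\rho\nu a}=0$ (as the paper states), which is indeed equivalent to the relation $\ms{g}^{ac}\ms{w}^1_{abc}=0$ that you then use, so this is cosmetic.

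Two quantitative points, however, should not be glossed over. First, carrying out exactly the trace you describe on the paper's \eqref{w_1_FG_asymp}, one has $\ms{g}^{ac}\ms{D}_{[c}\ms{m}_{b]a}=\tfrac12\bigl((\ms{D}\cdot\ms{m})_b-\ms{D}_b\ms{tr}_{\ms{g}}\ms{m}\bigr)$ and $\ms{g}^{ac}T_{\rho[b}\ms{g}_{c]a}=\tfrac{n-1}{2}T_{\rho b}$, so the vanishing trace gives $\ms{D}\cdot\ms{m}_b=\ms{D}_b\ms{tr}_{\ms{g}}\ms{m}+2\ms{T}^1_b$, i.e.\ coefficient $2$, not $4$; the same coefficient follows independently from the momentum constraint $\operatorname{Rc}[g]_{\rho a}=\hat{T}_{\rho a}=T_{\rho a}$ implied by \eqref{aads_einstein}. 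You therefore cannot assert that the computation ``yields \eqref{eq_Dm}'' verbatim: either flag the constant as a likely typo in the statement (it is harmless downstream, since the constraints enter the rest of the paper only through the schematic $\mi{D}$-notation) or exhibit a compensating factor, which the paper's own \eqref{w_1_FG_asymp} does not provide. Second, in \eqref{eq_Df} the displayed right-hand side $\ms{g}^{ab}\ms{g}^{cd}\ms{m}_{ac}\ol{\ms{f}}^1_{bd}$ is the full contraction of a symmetric with an antisymmetric two-tensor and hence vanishes identically; correspondingly, when you contract $g^{ab}$ against the $\tfrac12\ms{g}^{cd}\ms{m}_{da}F_{bc}$ correction it drops out for the same symmetry reason as the $\rho^{-1}\delta^c_a$ term, and what your computation actually establishes is the clean statement $\ms{D}\cdot\ol{\ms{f}}^0=0$ (the dimension-dependent $\rho$-powers from Definition \ref{def_vert_f_w} only rescale by a slice-wise constant). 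This is consistent with \eqref{eq_Df} as written, but you should say so explicitly rather than claiming the $\ms{m}$-correction ``assembles into the right-hand side''. Your treatment of the first identity in \eqref{eq_Df} is fine.
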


\begin{proof}
    These constraints are consequences of the remaining components of the Bianchi and Maxwell equations, namely: 
    \begin{equation}
        g^{\mu\nu} W_{\mu \rho \nu a} = 0\text{,}\qquad \nabla^\mu F_{\mu \rho}=0\text{,}\qquad \nabla_{[a}F_{bc]} = 0\text{.}
    \end{equation}
    By using Definition \ref{def_vert_f_w}, Proposition \ref{prop_w_FG} and Proposition \ref{prop_christoffel_FG}, one obtains the desired equations. 
\end{proof}

\begin{proposition}[Proposition 2.28 of \cite{shao:aads_fg}]\label{prop_riemann}
    Let $(\mc{M}, g)$ be a FG-aAdS segment and let $(U, \varphi)$ be a compact coordinate system on $\mc{I}$. The following equation holds: 
    \begin{equation}
        \Lie_\rho \ms{R} = \mc{S}\paren{\ms{g}; \ms{D}^2\ms{m}}\text{,}
    \end{equation}
    and similarly for the Ricci tensor and scalar. 
\end{proposition}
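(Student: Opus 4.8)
The plan is to realise $\Lie_\rho \ms{R}$ as the linearisation of the Riemann tensor of $\ms{g}(\rho)$ under the variation $\Lie_\rho \ms{g} = \ms{m}$, and to carry out the computation in a chart. Fix a compact coordinate system $(U,\varphi)$ on $\mc{I}$ and extend it trivially in $\rho$; since $[\partial_\rho, \partial_a] = 0$, the Lie derivative $\Lie_\rho$ of any vertical tensor field acts on its coordinate components simply as $\partial_\rho$. First I would observe that the Levi-Civita Christoffel symbols $\ms{\Gamma}^c{}_{ab}(\rho)$ of $\ms{g}(\rho)$ depend on $\rho$, while their $\rho$-derivative $\ms{C}^c{}_{ab} := \partial_\rho \ms{\Gamma}^c{}_{ab}$ is a genuine vertical $(1,2)$-tensor field: the inhomogeneous term in the transformation law of Christoffel symbols under changes of coordinates on $\mc{I}$ is $\rho$-independent, hence drops out under $\partial_\rho$. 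Differentiating $\ms{\Gamma}^c{}_{ab} = \frac12 \ms{g}^{cd}(\partial_a \ms{g}_{bd} + \partial_b \ms{g}_{ad} - \partial_d \ms{g}_{ab})$, using $\partial_\rho \ms{g}_{ab} = \ms{m}_{ab}$ and $\partial_\rho \ms{g}^{ab} = \mc{S}(\ms{g}; \ms{m})$, and regrouping the partial derivatives into $\ms{D}$-derivatives (the surplus Christoffel terms cancel in the symmetrised combination) yields the standard identity
\begin{equation*}
    \ms{C}^c{}_{ab} = \frac12 \ms{g}^{cd}\paren{\ms{D}_a \ms{m}_{bd} + \ms{D}_b \ms{m}_{ad} - \ms{D}_d \ms{m}_{ab}} = \mc{S}\paren{\ms{g}; \ms{D}\ms{m}}\text{.}
\end{equation*}

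Next I would differentiate the coordinate expression $\ms{R}^a{}_{bcd} = \partial_c \ms{\Gamma}^a{}_{db} - \partial_d \ms{\Gamma}^a{}_{cb} + \ms{\Gamma}^a{}_{ce}\ms{\Gamma}^e{}_{db} - \ms{\Gamma}^a{}_{de}\ms{\Gamma}^e{}_{cb}$ with respect to $\rho$. Collecting terms and matching against the coordinate expansion of $\ms{D}_c \ms{C}^a{}_{db} - \ms{D}_d \ms{C}^a{}_{cb}$ — in which the $\ms{\Gamma}^e{}_{cd}$-term is symmetric in $c,d$ and cancels — gives the Palatini-type identity
\begin{equation*}
    \Lie_\rho \ms{R}^a{}_{bcd} = \partial_\rho \ms{R}^a{}_{bcd} = \ms{D}_c \ms{C}^a{}_{db} - \ms{D}_d \ms{C}^a{}_{cb}\text{.}
\end{equation*}
Inserting the formula for $\ms{C}$ and using metric compatibility $\ms{D}\ms{g} = \ms{D}\ms{g}^{-1} = 0$ to commute the inverse metric past the covariant derivatives, every term on the right becomes an inverse-metric contraction of $\ms{D}^2\ms{m}$, so $\Lie_\rho \ms{R} = \mc{S}(\ms{g}; \ms{D}^2\ms{m})$.

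The Ricci tensor is the contraction $\ms{Rc}_{bd} = \ms{R}^a{}_{bad}$, which does not involve the metric, so $\Lie_\rho$ passes through it and $\Lie_\rho \ms{Rc} = \mc{S}(\ms{g}; \ms{D}^2\ms{m})$; for the scalar $\ms{Rs} = \ms{g}^{bd}\ms{Rc}_{bd}$ one additionally differentiates the inverse metric, giving $\Lie_\rho \ms{Rs} = \mc{S}(\ms{g}; \ms{D}^2\ms{m}) + \mc{S}(\ms{g}; \ms{m}, \ms{Rc})$, which is of the same schematic type. I do not expect any real obstacle: the only steps requiring care are verifying the tensoriality of $\partial_\rho \ms{\Gamma}$ and the index bookkeeping in the Palatini identity, after which the claim follows from routine contractions.
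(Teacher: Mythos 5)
Your proof is correct and follows essentially the same route as the cited argument (the paper itself only refers to Proposition 2.28 of \cite{shao:aads_fg} here): tensoriality of $\Lie_\rho \ms{\Gamma}$, the explicit formula $\Lie_\rho \ms{\Gamma} = \mc{S}\paren{\ms{g}; \ms{Dm}}$, and the Palatini-type variation identity, followed by contraction to get the Ricci tensor and scalar. The only bookkeeping caveat is that for the index-lowered curvature and for $\ms{Rs}$ the $\rho$-derivative of $\ms{g}$ or $\ms{g}^{-1}$ produces extra terms such as $\mc{S}\paren{\ms{g}; \ms{m}, \ms{Rc}}$, which strictly lie outside the class $\mc{S}\paren{\ms{g}; \ms{D}^2\ms{m}}$ but are harmless lower-order contributions; this matches the looseness already present in the statement's ``and similarly'', and you flag it correctly in the case of the scalar.
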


\begin{proposition}[Higher-order transport for ${\ms{m}}$]\label{higher_order_m}
    Let $(\mc{M}, g, \Phi)$ be a non-vacuum FG-aAdS segment and $(U,\varphi)$ a compact coordinate system on $\mc{I}$. Then, for any $k\geq 0$:
    \begin{align}
        &\begin{aligned}\label{higher_transport_m}
            \rho\Lie_\rho^{k+1}\ms{m} - (n-1-k)\Lie_\rho^k \ms{m} = &2\rho\Lie_\rho^k \ms{Rc} + 2k\Lie_\rho^{k-1}\ms{Rc} + \ms{tr}_{\ms{g}}\Lie_\rho^k \ms{m} \cdot \ms{g} + \sum\limits_{\substack{j_1+\dots + j_\ell = k+1\\1\leq j_p<k+1}}\mc{S}\paren{\ms{g}; \Lie_\rho^{j_1-1}\ms{m}, \dots, \Lie_\rho^{j_\ell -1}\ms{m}} \\
            &+\sum\limits_{\substack{j_1+\dots+j_\ell=k+2\\1\leq j_p<k+2}}\rho\cdot \mc{S}\paren{\ms{g}; \Lie_\rho^{j_1-1}\ms{m}, \dots, \Lie_\rho^{j_\ell -1}\ms{m}}\\
            &+\sum\limits_{\substack{j_1+\dots + j_\ell + j = k}}\rho\cdot \mc{S}\paren{\ms{g}; \Lie_\rho^{j_1-1}\ms{m}, \dots, \Lie_\rho^{j_\ell - 1}\ms{m}, \Lie_\rho^j \ms{T}^0} \\
            &+\sum\limits_{\substack{j_1+\dots + j_\ell + j = k}}\rho\cdot \mc{S}\paren{\ms{g}; \Lie_\rho^{j_1-1}\ms{m}, \dots, \Lie_\rho^{j_\ell - 1}\ms{m}, \Lie_\rho^j \ms{T}^2} \\
            &+\sum\limits_{\substack{j_1+\dots + j_\ell + j = k-1}}\mc{S}\paren{\ms{g}; \Lie_\rho^{j_1-1}\ms{m}, \dots, \Lie_\rho^{j_\ell - 1}\ms{m}, \Lie_\rho^j \ms{T}^0} \\
            &+\sum\limits_{\substack{j_1+\dots + j_\ell + j = k-1}}\mc{S}\paren{\ms{g}; \Lie_\rho^{j_1-1}\ms{m}, \dots, \Lie_\rho^{j_\ell - 1}\ms{m}, \Lie_\rho^j \ms{T}^2}\text{,}
        \end{aligned}\\
        &\begin{aligned}\label{higher_transport_tr_m}
            \rho\Lie_\rho \ms{tr}_{\ms{g}}\paren{\Lie_\rho^k \ms{m}} - (2n-1-k)\ms{tr}_{\ms{g}}(\Lie_\rho^k \ms{m}) = &2\rho\Lie_\rho^k \ms{Rs} + 2k\Lie_\rho^{k-1}\ms{Rs} + \sum\limits_{\substack{j_1+\dots + j_\ell = k+1\\1\leq j_p<k+1}}\mc{S}\paren{\ms{g}; \Lie_\rho^{j_1-1}\ms{m}, \dots, \Lie_\rho^{j_\ell -1}\ms{m}} \\
            &+\sum\limits_{\substack{j_1+\dots+j_\ell=k+2\\1\leq j_p<k+2}}\rho\cdot \mc{S}\paren{\ms{g}; \Lie_\rho^{j_1-1}\ms{m}, \dots, \Lie_\rho^{j_\ell -1}\ms{m}}\\
            &+\sum\limits_{\substack{j_1+\dots + j_\ell + j = k}}\rho\cdot \mc{S}\paren{\ms{g}; \Lie_\rho^{j_1-1}\ms{m}, \dots, \Lie_\rho^{j_\ell - 1}\ms{m}, \Lie_\rho^j \ms{T}^0} \\
            &+\sum\limits_{\substack{j_1+\dots + j_\ell + j = k}}\rho\cdot \mc{S}\paren{\ms{g}; \Lie_\rho^{j_1-1}\ms{m}, \dots, \Lie_\rho^{j_\ell - 1}\ms{m}, \Lie_\rho^j \ms{T}^2} \\
            &+\sum\limits_{\substack{j_1+\dots + j_\ell + j = k-1}}\mc{S}\paren{\ms{g}; \Lie_\rho^{j_1-1}\ms{m}, \dots, \Lie_\rho^{j_\ell - 1}\ms{m}, \Lie_\rho^j \ms{T}^0} \\
            &+\sum\limits_{\substack{j_1+\dots + j_\ell + j = k-1}}\mc{S}\paren{\ms{g}; \Lie_\rho^{j_1-1}\ms{m}, \dots, \Lie_\rho^{j_\ell - 1}\ms{m}, \Lie_\rho^j \ms{T}^2}\text{.}
        \end{aligned}
    \end{align}
    Furthermore, if $(\mc{M}, g, F)$ is a Maxwell-FG-aAdS of dimension $n+1$, with $n\geq 4$, one can write, for all $k\geq 0$: 
    \begin{align}
        \Lie^k_\rho \ms{T}^0, \Lie_\rho^k \ms{T}^2 = &\notag\sum\limits_{q=0}^4\sum\limits_{j+j'+j_0+\dots + j_\ell = k-q} \rho^{4-q}\cdot \mc{S}\paren{\ms{g}; \Lie_\rho^{j_0-1}\ms{m}, \dots, \Lie_\rho^{j_\ell-1} \ms{m}, \Lie_\rho^j \ol{\ms{f}}^0, \Lie_\rho^{j'} \ol{\ms{f}}^0}\\
        &+\sum\limits_{q=0}^2\sum\limits_{j+j'+j_0+\dots + j_\ell = k-q} \rho^{2-q}\cdot \mc{S}\paren{\ms{g}; \Lie_\rho^{j_0-1}\ms{m}, \dots, \Lie_\rho^{j_\ell-1} \ms{m}, \Lie_\rho^j \ol{\ms{f}}^1, \Lie_\rho^{j'} \ol{\ms{f}}^1}\text{.}\label{T_higher_order}
    \end{align}
\end{proposition}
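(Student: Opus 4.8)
The identities \eqref{higher_transport_m} and \eqref{higher_transport_tr_m} are proved by induction on $k$. The base case $k=0$ is exactly Proposition \ref{prop_transport_m}: one only checks that the term $\rho\cdot\ms{g}^{cd}\ms{m}_{ac}\ms{m}_{bd}-\tfrac12\rho\ms{tr}_{\ms{g}}\ms{m}\cdot\ms{m}_{ab}$ in \eqref{transport_m} is an instance of $\rho\,\mc{S}(\ms{g};\ms{m},\ms{m})$, and that the stress-energy terms are instances of $\rho\,\mc{S}(\ms{g};\ms{T}^0)$ and $\rho\,\mc{S}(\ms{g};\ms{T}^2)$ (and similarly for \eqref{transport_tr_m}). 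For the inductive step I would apply $\Lie_\rho$ to both sides of the level-$k$ identity, using the elementary relations $\Lie_\rho(\rho\,\ms{A})=\ms{A}+\rho\,\Lie_\rho\ms{A}$, $\Lie_\rho\ms{g}=\ms{m}$, $\Lie_\rho\ms{g}^{-1}=\mc{S}(\ms{g};\ms{m})$, the Leibniz rule, and the commutator $[\Lie_\rho,\ms{tr}_{\ms{g}}]\ms{A}=\mc{S}(\ms{g};\ms{m},\ms{A})$ (the general version of the identity invoked in the proof of Proposition \ref{prop_transport_m}). The left-hand side reproduces itself at the next level: differentiating $\rho\,\Lie_\rho^{k+1}\ms{m}$ yields $\Lie_\rho^{k+1}\ms{m}+\rho\,\Lie_\rho^{k+2}\ms{m}$, and combining $\Lie_\rho^{k+1}\ms{m}$ with $-(n-1-k)\Lie_\rho^{k+1}\ms{m}$ gives $\rho\,\Lie_\rho^{k+2}\ms{m}-(n-1-(k+1))\Lie_\rho^{k+1}\ms{m}$.

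\textbf{Tracking the right-hand side.} The Ricci terms transform by coefficient arithmetic only: $\Lie_\rho\paren{2\rho\,\Lie_\rho^k\ms{Rc}+2k\,\Lie_\rho^{k-1}\ms{Rc}}=2\rho\,\Lie_\rho^{k+1}\ms{Rc}+2(k+1)\,\Lie_\rho^{k}\ms{Rc}$, which is the level-$(k+1)$ form, and no Ricci terms are produced from any other term. The term $\ms{tr}_{\ms{g}}(\Lie_\rho^k\ms{m})\cdot\ms{g}$ produces $\ms{tr}_{\ms{g}}(\Lie_\rho^{k+1}\ms{m})\cdot\ms{g}$ together with terms of the form $\mc{S}(\ms{g};\ms{m},\Lie_\rho^k\ms{m})$, which belong to the $\rho$-free $\mc{S}$-sum at level $k+1$ (factors $\Lie_\rho^{j_1-1}\ms{m},\Lie_\rho^{j_2-1}\ms{m}$ with $j_1=1$, $j_2=k+1$, total $k+2$). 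For each $\mc{S}$-sum present at level $k$, applying $\Lie_\rho$ either hits a metric factor (appending a factor $\Lie_\rho^0\ms{m}$, so raising the index total by one), raises the order of one $\Lie_\rho^{j}(\cdot)$ factor (again raising the total by one), or — for a $\rho$-weighted sum — strips the factor $\rho$; in every case the result lands in the correspondingly indexed sum at level $k+1$, with the $\mc{S}$-calculus absorbing the resulting constants and permutations. The same reasoning applies verbatim to \eqref{higher_transport_tr_m}, with $\ms{Rs}$ in place of $\ms{Rc}$.

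\textbf{The stress-energy expansion.} For \eqref{T_higher_order} I would start from the explicit Maxwell expressions in Proposition \ref{prop_transport_m}, which for $n\geq 4$ read $\ms{T}^0,\ms{T}^2 = \rho^2\,\mc{S}(\ms{g};\ol{\ms{f}}^1,\ol{\ms{f}}^1)+\rho^4\,\mc{S}(\ms{g};\ol{\ms{f}}^0,\ol{\ms{f}}^0)$. Applying $\Lie_\rho^k$ and using the integer Leibniz formula $\Lie_\rho^k(\rho^p\,\ms{A})=\sum_{q=0}^{\min(p,k)}c_{p,q}\,\rho^{p-q}\,\Lie_\rho^{k-q}\ms{A}$ produces the prefactors $\rho^{4-q}$ with $0\le q\le 4$ and $\rho^{2-q}$ with $0\le q\le 2$; distributing the remaining $k-q$ copies of $\Lie_\rho$ over $\ms{g}^{\pm}$ and the two $\ol{\ms{f}}$-factors, and using that $\Lie_\rho^{m}\ms{g}^{\pm}$ is a sum of terms $\mc{S}(\ms{g};\Lie_\rho^{j_0-1}\ms{m},\dots,\Lie_\rho^{j_\ell-1}\ms{m})$ with $j_0+\cdots+j_\ell=m$ (an immediate induction from $\Lie_\rho\ms{g}^{-1}=\mc{S}(\ms{g};\ms{m})$), yields exactly the stated sums with $j+j'+j_0+\cdots+j_\ell=k-q$.

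\textbf{Main obstacle.} There is no genuine analytic difficulty; the work is entirely combinatorial bookkeeping. The one point requiring care is to verify at each differentiation that no term escapes the prescribed list of $\mc{S}$-types — in particular that powers of $\rho$ never become negative, that differentiating a $\rho$-weighted $\mc{S}$-sum at level $k$ feeds \emph{both} the $\rho$-weighted and the $\rho$-free sums at level $k+1$ consistently, and that the index ranges ($1\le j_p<k+2$, etc.) are respected, which holds because each application of $\Lie_\rho$ raises the relevant index total by exactly one and the $\mc{S}$-notation is closed under the elementary operations involved.
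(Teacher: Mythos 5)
Your proposal is correct and follows essentially the same route as the paper: the paper likewise obtains \eqref{higher_transport_m}--\eqref{higher_transport_tr_m} by differentiating the base transport equations of Proposition \ref{prop_transport_m} in $\rho$ (using $\Lie_\rho^k(\rho\,\ms{A})=\rho\,\Lie_\rho^k\ms{A}+k\,\Lie_\rho^{k-1}\ms{A}$ and the commutation of $\Lie_\rho$ with $\ms{tr}_{\ms{g}}$) and tracking how the Leibniz terms land in the prescribed $\mc{S}$-sums, and it derives \eqref{T_higher_order} exactly as you do, by applying $\Lie_\rho^k$ to the explicit Maxwell expressions and distributing derivatives over the powers of $\rho$, the metric factors and the $\ol{\ms{f}}$-factors. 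Your single-step induction versus the paper's direct $k$-fold differentiation is an inessential bookkeeping variant.
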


\begin{proof}
    See Appendix \ref{app:higher_transport_m}.
\end{proof}

\begin{proposition}[Higher-order transport for $\ol{\ms{f}}$]\label{higher_order_f} Let $n\geq 4$ and let $(\mc{M}, g, F)$ be a Maxwell FG-aAdS segment and $(U,\varphi)$ a compact coordinate system on $\mc{I}$. Then, for any $k\geq 0$, the following transport equations hold: 
    \begin{align}
        &\begin{aligned}\label{higher_transport_f_0}
            \rho\Lie_\rho^{k+1}\ol{\ms{f}}^0 - (n-4-k)\Lie_\rho^k \ol{\ms{f}}^0 = -\ms{D}\cdot \Lie_\rho^k \ol{\ms{f}}^1 &+  \sum\limits_{j+j_0+j_1+\dots + j_\ell = k} \rho\cdot\mc{S}\paren{\ms{g}; \Lie_\rho^{j_0} \ms{m}, \Lie_\rho^{j_1 -1}\ms{m}, \dots, \Lie_\rho^{j_\ell - 1}\ms{m}, \Lie_\rho^j \ol{\ms{f}}^0}\\
            &+\sum\limits_{j+j_0+j_1+\dots + j_\ell = k-1} \mc{S}\paren{\ms{g}; \Lie_\rho^{j_0} \ms{m}, \Lie_\rho^{j_1 -1}\ms{m}, \dots, \Lie_\rho^{j_\ell - 1}\ms{m}, \Lie_\rho^j \ol{\ms{f}}^0}\\
            &+\sum\limits_{\substack{j+j_0+j_1+\dots +j_\ell = k\\j<k, j_p\geq 1}} \mc{S}\paren{\ms{g}; \ms{D}\Lie_\rho^{j_0-1}\ms{m}, \Lie_\rho^{j_1-1} \ms{m}, \dots, \Lie_\rho^{j_\ell -1}\ms{m}, \Lie_\rho^{j}\ol{\ms{f}}^1}\text{,}
        \end{aligned}\\
        &\begin{aligned}\label{higher_transport_f_1}
            \Lie_\rho^{k+1}\ol{\ms{f}}^1_{ab} =  2\rho\ms{D}_{[a}\Lie_\rho^k \ol{\ms{f}}^0_{b]} + 2k \ms{D}_{[a}\Lie_\rho^{k-1}\ol{\ms{f}}^0_{b]} &+  \sum\limits_{\substack{j + j_0 +j_1+ \dots + j_\ell = k\\j<k, j_p\geq 1}} \rho\cdot\mc{S}\paren{\ms{g}; \ms{D}\Lie_\rho^{j_0-1}\ms{m}, \Lie_\rho^{j_1-1}\ms{m}, \dots, \Lie_\rho^{j_\ell-1}\ms{m}, \Lie_\rho^j \ol{\ms{f}}^0}_{ab}\\
            &+\sum\limits_{\substack{j + j_0 +j_1+ \dots + j_\ell = k-1\\j<k-1, j_p \geq 1}} \mc{S}\paren{\ms{g}; \ms{D}\Lie_\rho^{j_0-1}\ms{m}, \Lie_\rho^{j_1-1}\ms{m}, \dots, \Lie_\rho^{j_\ell-1}\ms{m}, \Lie_\rho^j \ol{\ms{f}}^0}_{ab}\text{.}
         \end{aligned}
    \end{align}
\end{proposition}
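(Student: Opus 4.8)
The plan is to prove \eqref{higher_transport_f_0} and \eqref{higher_transport_f_1} by a simultaneous induction on $k$, taking as base case $k=0$ the transport equations \eqref{transport_f_0}--\eqref{transport_f_1} of Proposition \ref{prop_transport_f}. The hypothesis $n\geq 4$ is precisely what makes the $\rho$-weights in those base equations uniform: $(n-4)_+=n-4$, the divergence term $\ms{D}\cdot\ol{\ms{f}}^1$ in \eqref{transport_f_0} carries weight $1$, and the term $\ms{D}_{[a}\ol{\ms{f}}^0_{b]}$ in \eqref{transport_f_1} carries weight $\rho$, so the structure of the right-hand sides is stable under $\Lie_\rho$. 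For $k=0$ the index constraints $j+j_0+\dots+j_\ell\in\{k,\,k-1\}$ together with the $j_p\geq 1$ convention annihilate every sum except the single term $\rho\,\mc{S}\paren{\ms{g};\ms{m},\ol{\ms{f}}^0}$ in \eqref{higher_transport_f_0}, which matches \eqref{transport_f_0} once one writes $\rho\,\ms{m}_a{}^b\ol{\ms{f}}^0_b-\tfrac12\rho\,\ms{tr}_{\ms{g}}\ms{m}\cdot\ol{\ms{f}}^0_a=\rho\,\mc{S}\paren{\ms{g};\ms{m},\ol{\ms{f}}^0}$; similarly \eqref{higher_transport_f_1} collapses to \eqref{transport_f_1}.

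For the inductive step I would apply $\Lie_\rho$ to the order-$k$ identities. The left-hand sides transform cleanly: since $\Lie_\rho\rho=1$ and $\Lie_\rho^2\rho=0$, one gets $\Lie_\rho\paren{\rho\Lie_\rho^{k+1}\ol{\ms{f}}^0-(n-4-k)\Lie_\rho^k\ol{\ms{f}}^0}=\rho\Lie_\rho^{k+2}\ol{\ms{f}}^0-\paren{n-4-(k+1)}\Lie_\rho^{k+1}\ol{\ms{f}}^0$, which is the left side at order $k+1$, and for \eqref{higher_transport_f_1}, applying $\Lie_\rho$ to $2\rho\ms{D}_{[a}\Lie_\rho^{k}\ol{\ms{f}}^0_{b]}+2k\,\ms{D}_{[a}\Lie_\rho^{k-1}\ol{\ms{f}}^0_{b]}$ yields, via $\Lie_\rho\rho=1$ and $[\Lie_\rho,\ms{D}]\ms{A}=\mc{S}\paren{\ms{g};\ms{D}\ms{m},\ms{A}}$, the leading terms $2\rho\ms{D}_{[a}\Lie_\rho^{k+1}\ol{\ms{f}}^0_{b]}+2(k+1)\ms{D}_{[a}\Lie_\rho^{k}\ol{\ms{f}}^0_{b]}$ plus commutator terms landing in the two $\ms{D}\Lie_\rho^{j_0-1}\ms{m}$ families. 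For the remaining error terms I would invoke: the commutator \eqref{commutation_L_D} and its iterates via Proposition \ref{prop_higher_order_commutation}; the Leibniz identities $\Lie_\rho\ms{g}=\ms{m}$ and $\Lie_\rho\ms{g}^{-1}=\mc{S}\paren{\ms{g};\ms{m}}$, the latter applied in particular to the inverse-metric factor concealed in the divergence operator $\ms{D}\cdot$; and $\Lie_\rho(\rho X)=X+\rho\Lie_\rho X$. Applied to an admissible $\mc{S}$-summand, each of these returns only admissible summands whose index-sum has grown by exactly one: $\Lie_\rho$ hits a $\rho$ (moving a term from a $\rho$-weighted family into the unweighted one), a metric factor (inserting a further factor $\ms{m}$, i.e. a $\Lie_\rho^{j_p-1}\ms{m}$ with $j_p=1$), or one of the arguments $\Lie_\rho^{j}\ol{\ms{f}}^i$, $\Lie_\rho^{j_p-1}\ms{m}$, $\ms{D}\Lie_\rho^{j_0-1}\ms{m}$ (raising that index, with an extra $[\Lie_\rho,\ms{D}]$ correction in the last case). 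Differentiating the distinguished term $\ms{D}\cdot\Lie_\rho^k\ol{\ms{f}}^1$ in \eqref{higher_transport_f_0} gives $\ms{D}\cdot\Lie_\rho^{k+1}\ol{\ms{f}}^1$ together with corrections that must be rewritten — using the Bianchi constraint $\ms{D}_{[a}\ol{\ms{f}}^1_{bc]}=0$ of \eqref{eq_Df} and the already-established $\ol{\ms{f}}^1$ transport equation \eqref{higher_transport_f_1} at lower order — so that no bare derivative of $\ol{\ms{f}}^1$ survives outside the packaged divergence; this is the step that genuinely couples the two identities and forces a simultaneous induction.

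I expect the main obstacle to be exactly this bookkeeping: verifying that every commutator and Leibniz term produced by $\Lie_\rho$ — especially those coming from differentiating the inverse metric inside a divergence and from commuting $\Lie_\rho$ past $\ms{D}$ — falls into one of the \emph{stated} $\mc{S}$-families with the \emph{stated} index-sum and $\rho$-power, with no stray higher covariant derivatives of $\ol{\ms{f}}^0$ or $\ol{\ms{f}}^1$ left over. The computation is long but entirely mechanical, and I would organise it exactly as in the proof of Proposition \ref{higher_order_m} above (which handles the analogous, more involved, transport system for $\ms{m}$ and $\ms{tr}_{\ms{g}}\ms{m}$ by the same scheme), itself patterned on the vacuum computations in \cite{shao:aads_fg}.
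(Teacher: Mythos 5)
Your overall scheme --- differentiate the $k=0$ equations \eqref{transport_f_0}--\eqref{transport_f_1} and sort the output into the stated $\mc{S}$-families using $\Lie_\rho\ms{g}=\ms{m}$, $\Lie_\rho\ms{g}^{-1}=\mc{S}\paren{\ms{g};\ms{m}}$ and the commutator \eqref{commutation_L_D} --- is exactly the paper's scheme; the only organisational difference is that the paper applies $\Lie_\rho^k$ in one stroke, using $\Lie_\rho^k\paren{\rho\Lie_\rho\ms{A}-c\,\ms{A}}=\rho\Lie_\rho^{k+1}\ms{A}-(c-k)\Lie_\rho^k\ms{A}$ for the left-hand sides and Proposition \ref{prop_higher_order_commutation} for the iterated commutators, rather than inducting one step at a time. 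Your base case is right, and your treatment of \eqref{higher_transport_f_1} is sound: $\ms{D}_{[a}\Lie_\rho^{k}\ol{\ms{f}}^0_{b]}$ involves no inverse-metric contraction, so the only corrections are $[\Lie_\rho,\ms{D}]$-commutators, which do land in the displayed $\ms{D}\Lie_\rho^{j_0-1}\ms{m}$ families.

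The gap sits precisely in the step you single out: the corrections generated by the divergence term in \eqref{higher_transport_f_0}. Differentiating $-\ms{D}\cdot\Lie_\rho^{k}\ol{\ms{f}}^1$ produces, besides $-\ms{D}\cdot\Lie_\rho^{k+1}\ol{\ms{f}}^1$ and the admissible commutator $\mc{S}\paren{\ms{g};\ms{Dm},\Lie_\rho^{k}\ol{\ms{f}}^1}$, the term $\mc{S}\paren{\ms{g};\ms{m},\ms{D}\Lie_\rho^{k}\ol{\ms{f}}^1}$ (schematically $\ms{m}^{cd}\ms{D}_c\Lie_\rho^{k}\ol{\ms{f}}^1_{da}$) coming from $\Lie_\rho$ hitting the $\ms{g}^{-1}$ hidden in $\ms{D}\cdot$, and neither of your proposed cures works on it. Contracting the cyclic constraint \eqref{eq_Df} (or its differentiated form \eqref{equation_d_f1}) with the \emph{symmetric} tensor $\ms{m}$ yields the identity $0=0$, so the Bianchi constraint carries no information about this particular contraction; and substituting the already-established \eqref{higher_transport_f_1} for $\Lie_\rho^{k}\ol{\ms{f}}^1$ underneath the $\ms{D}$ creates $\rho\,\ms{D}^2\Lie_\rho^{k-1}\ol{\ms{f}}^0$-type terms, i.e.\ second vertical derivatives of $\ol{\ms{f}}^0$, which are worse than what you started with and certainly not in the stated families. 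In particular there is no coupling that ``forces a simultaneous induction'': the paper derives the two identities independently and never substitutes one into the other (its proof treats the divergence en bloc and is in fact silent about this stray term). The honest bookkeeping is to carry the corrections $\mc{S}\paren{\ms{g};\Lie_\rho^{j_1-1}\ms{m},\dots,\ms{D}\Lie_\rho^{j}\ol{\ms{f}}^1}$, $j<k$, along as an additional family --- harmless for the later limit analysis, since $\ms{D}\Lie_\rho^{j}\ol{\ms{f}}^1$ converges with one fewer degree of vertical regularity --- but they cannot be absorbed by the manipulations you describe, so as written your inductive step does not close onto the displayed right-hand side.
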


\begin{proof}
See Appendix \ref{app:higher_transport_f}.
\end{proof}

\begin{proposition}[Higher-order transport for the Riemann tensor]\label{higher_order_riemann}
    Let $(\mc{M}, g)$ be a FG-aAdS segment, $(U, \varphi)$ a compact coordinate system and $k\geq 0$, then: 
    \begin{equation}\label{transport_higher_riemann}
        \Lie_\rho^k \ms{R} = \sum\limits_{\substack{ j_1 + \dots + j_\ell = k\\i_1 + \dots + i_\ell=2, j_p \geq 1}}\mc{S}\paren{\ms{g}; \ms{D}^{i_1}\Lie_\rho^{j_1-1}{\ms{m}}\text{,}\dots, \ms{D}^{i_\ell}\Lie_\rho^{j_\ell-1}{\ms{m}}}\text{,}
    \end{equation}
    and similarly for the Ricci tensor and scalar. 
\end{proposition}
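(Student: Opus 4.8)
The plan is to argue by induction on $k$, taking Proposition \ref{prop_riemann} as the base case. For $k=0$ the statement is read as the trivial identity $\ms{R}=\ms{R}$, and for $k=1$ the right-hand side reduces to the single term $\mc{S}(\ms{g};\ms{D}^2\Lie_\rho^0\ms{m})=\mc{S}(\ms{g};\ms{D}^2\ms{m})$, which is exactly Proposition \ref{prop_riemann} (and its stated analogues for the Ricci tensor and scalar). So I may assume $k\geq 1$ together with the inductive hypothesis
\[
\Lie_\rho^k\ms{R} = \sum_{\substack{j_1+\dots+j_\ell=k\\ i_1+\dots+i_\ell=2,\ j_p\geq 1}}\mc{S}\paren{\ms{g};\ms{D}^{i_1}\Lie_\rho^{j_1-1}\ms{m},\dots,\ms{D}^{i_\ell}\Lie_\rho^{j_\ell-1}\ms{m}},
\]
and show that applying one more $\Lie_\rho$ produces an expression of the same shape with $k$ replaced by $k+1$.

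The three tools are: (i) since $\Lie_\rho$ is a derivation commuting with non-metric contractions and permutations and $\Lie_\rho\ms{g}=\ms{m}$, $\Lie_\rho\ms{g}^{-1}=\mc{S}(\ms{g};\ms{m})$, one has the Leibniz identity $\Lie_\rho\,\mc{S}(\ms{g};\ms{A}_1,\dots,\ms{A}_N) = \mc{S}(\ms{g};\ms{m},\ms{A}_1,\dots,\ms{A}_N) + \sum_{q}\mc{S}(\ms{g};\ms{A}_1,\dots,\Lie_\rho\ms{A}_q,\dots,\ms{A}_N)$; (ii) since $\ms{D}\ms{g}=\ms{D}\ms{g}^{-1}=0$ and $\ms{D}$ commutes with non-metric contractions and permutations, $\ms{D}\,\mc{S}(\ms{g};\ms{A}_1,\dots,\ms{A}_N)=\sum_{q}\mc{S}(\ms{g};\ms{A}_1,\dots,\ms{D}\ms{A}_q,\dots,\ms{A}_N)$; and (iii) the commutation formula \eqref{commutation_L_D}, $[\Lie_\rho,\ms{D}]\ms{A}=\mc{S}(\ms{g};\ms{D}\ms{m},\ms{A})$. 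Applying (i) to the inductive hypothesis, each term either acquires an extra factor $\ms{m}=\ms{D}^0\Lie_\rho^0\ms{m}$, contributing multi-index data $(i,j)=(0,1)$ and hence raising $\sum j_p$ from $k$ to $k+1$ while keeping $\sum i_p=2$; or else one of the factors $\ms{D}^{i_q}\Lie_\rho^{j_q-1}\ms{m}$ is hit by $\Lie_\rho$.

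For that last case I would commute $\Lie_\rho$ through the $i_q$ copies of $\ms{D}$ using (iii): passing one $\ms{D}$ either commutes cleanly, or yields (after redistributing the remaining $\ms{D}$'s by (ii), which never touches the metric factors) a term $\mc{S}(\ms{g};\ms{D}^{1+a}\ms{m},\ms{D}^{b}\Lie_\rho^{j_q-1}\ms{m})$ with $a+b=i_q-1$. In the clean case one obtains $\ms{D}^{i_q}\Lie_\rho^{j_q}\ms{m}$, which raises $j_q$ by one: $\sum j_p$ becomes $k+1$ and $\sum i_p$ stays $2$. In the commutator case the original factor is replaced by the pair $\ms{D}^{1+a}\Lie_\rho^{0}\ms{m}$ and $\ms{D}^{b}\Lie_\rho^{j_q-1}\ms{m}$, carrying data $(1+a,1)$ and $(b,j_q)$; these have $j$-sum $j_q+1$ (so again $\sum j_p$ goes to $k+1$) and $i$-sum $1+a+b=i_q$ (so $\sum i_p$ is still $2$). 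Thus every resulting term has exactly the form claimed for $k+1$, which closes the induction. The statements for the Ricci tensor and Ricci scalar follow verbatim: they are fixed non-metric contractions of $\ms{R}$ against $\ms{g}^{-1}$, the base case is their part of Proposition \ref{prop_riemann}, and the three tools above apply unchanged (contractions with $\ms{g}^{-1}$ only ever generating further $\ms{m}$-factors of type $(0,1)$ under $\Lie_\rho$).

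The only genuine obstacle is the bookkeeping in the commutator case: one must verify that pushing $\Lie_\rho$ through a string of $i_q$ covariant derivatives, and then spreading the surviving derivatives by Leibniz, produces precisely one new $\ms{m}$-factor absorbing the displaced $\Lie_\rho$, with the derivative count split as $i_q=(1+a)+b$, and never acts on the explicit $\ms{g}$ or $\ms{g}^{-1}$ factors inside $\mc{S}$ (true because $\ms{D}\ms{g}=\ms{D}\ms{g}^{-1}=0$). This is a purely combinatorial check on the $\mc{S}$-notation; everything else is a direct consequence of Propositions \ref{sec:aads_commutation_Lie_D} and \ref{prop_riemann}. (Alternatively one could feed the higher commutator identity of Proposition \ref{prop_higher_order_commutation} into the $k=1$ identity directly, but the one-step induction above keeps the index accounting the most transparent.)
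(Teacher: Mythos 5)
Your induction is correct and is essentially the intended argument: the paper gives no proof of its own but defers to Proposition 2.34 of \cite{shao:aads_fg}, whose proof is exactly this induction on $k$ built from the first-variation formula (Proposition \ref{prop_riemann}), the commutation identity \eqref{commutation_L_D}, and the Leibniz/contraction structure of the $\mc{S}$-notation. The bookkeeping you spell out — the $i$-sum staying at $2$ while the $j$-sum increases by one, with new factors of type $(0,1)$ from the metric terms and $(1+a,1)$, $(b,j_q)$ from the commutators — is the whole content, so your write-up matches the cited proof.
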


\begin{proof}
See \cite{shao:aads_fg}, Proposition 2.34.
\end{proof}

Finally, we will need the higher-order equations for the constraints: 

\begin{proposition}[Higher-order constraints]\label{prop_higher_constraints}
    Let $(\mc{M}, g, F)$ be a Maxwell-FG-aAdS segment. Let $(U,\varphi)$ be a compact local chart on $\mc{I}$ and let $k\geq 0$, then:
    \begin{align}
        &\begin{aligned}\label{higher_constraint_Dm}
            \ms{D}\cdot \Lie_\rho^k \ms{m} = \ms{D} \paren{\ms{tr}_{\ms{g}}\Lie_\rho^k \ms{m } }&+ \sum\limits_{\substack{j + j_0 + j_1 + \dots + j_\ell = k\\j<k, j_p\geq 1}} \mc{S}\paren{\ms{g}; \Lie_\rho^j \ms{m}, \ms{D}\Lie_\rho^{j_0-1}\ms{m}, \Lie_\rho^{j_1-1}\ms{m}, \dots, \Lie_\rho^{j_\ell-1}\ms{m}}+4\Lie_\rho^k \ms{T}^1 \text{.}
        \end{aligned}
    \end{align}
    If $(\mc{M}, g, F)$ is a Maxwell-FG-aAdS segment of dimension $n\geq 4$, then, for all $k\geq 0$:
    \begin{align*}
        \Lie_\rho^k \ms{T}^1 = &\sum\limits_{q=0}^3\sum\limits_{j + j' + j_1 + \dots + j_\ell = k-q} \rho^{3-q}\cdot \mc{S}\paren{\ms{g}; \Lie_\rho^{j_1-1}\ms{m}, \dots, \Lie_\rho^{j_\ell-1}\ms{m}, \Lie_\rho^j \ms{f}^0, \Lie_\rho^{j'}\ol{\ms{f}}^1}\text{.}
    \end{align*}
    Furthermore, the following higher-order constraints hold for the Maxwell fields, for $n\geq 4$: 
    \begin{align}
        &\begin{aligned}\label{equation_d_f1}
            \ms{D}_{[a}\Lie_\rho^k \ol{\ms{f}}^1_{bc]} = \sum\limits_{\substack{j+j_0+j_1+\dots+j_\ell = k\\j<k, j_p\geq 1}}\mc{S}\paren{\ms{g}; \ms{D}\Lie_\rho^{j_0-1}\ms{m}, \Lie_\rho^{j_1-1}\ms{m}, \dots, \Lie_\rho^{j_\ell -1}\ms{m}, \Lie_\rho^{j}\ol{\ms{f}}^1}_{abc}\text{,}
        \end{aligned}\\
        &\begin{aligned}\label{equation_div_f0}
            k\ms{D}\cdot \Lie_\rho^{k-1} \ol{\ms{f}}^0 + \ms{D}\cdot \rho\Lie_\rho^k\ol{\ms{f}}^0 = &\sum\limits_{j+j_0+j_1+\dots + j_\ell = k}\mc{S}\paren{\ms{g}; \Lie_\rho^{j_0}\ms{m}, \Lie_\rho^{j_1-1}\ms{m}, \dots, \Lie_\rho^{j_\ell - 1}\ms{m},\Lie_\rho^j \ol{\ms{f}}^1} \\
            &+\sum\limits_{\substack{j+j_0+j_1+\dots + j_\ell = k\\j<k, j_p\geq 1}}\rho\cdot \mc{S}\paren{\ms{g}; \ms{D}\Lie_\rho^{j_0-1}\ms{m}, \Lie_\rho^{j_1-1}\ms{m}, \dots, \Lie_\rho^{j_\ell - 1}\ms{m}, \Lie_\rho^{j}\ol{\ms{f}}^0}\\
            &+\sum\limits_{\substack{j+j_0+j_1+\dots + j_\ell = k-1\\j<k-1, j_p\geq 1}}\mc{S}\paren{\ms{g}; \ms{D}\Lie_\rho^{j_0-1}\ms{m}, \Lie_\rho^{j_1-1}\ms{m}, \dots, \Lie_\rho^{j_\ell - 1}\ms{m}, \Lie_\rho^{j}\ol{\ms{f}}^0}\\
        \end{aligned}
    \end{align}
\end{proposition}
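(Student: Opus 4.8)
The strategy is to differentiate the first-order constraint identities \eqref{eq_Dm} and \eqref{eq_Df} repeatedly in $\rho$, then push the Lie derivatives past the vertical connection $\ms{D}$, past the vertical metric contractions, and past the explicit powers of $\rho$, collecting all the resulting error terms into the schematic $\mc{S}(\ms{g};\cdots)$ sums. I would run this as an induction on $k$, mirroring the proofs of Propositions \ref{higher_order_m} and \ref{higher_order_f} and quoting the higher-order commutator Proposition \ref{prop_higher_order_commutation} at each step rather than re-expanding commutators by hand.

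For \eqref{higher_constraint_Dm}: write $\ms{D}\cdot\ms{m}=\ms{g}^{cd}\ms{D}_c\ms{m}_{\hat a_1[d]}$ and apply $\Lie_\rho^k$ to \eqref{eq_Dm}. Each $\Lie_\rho$ hitting a factor of $\ms{g}^{-1}$ produces a factor $\ms{m}=\Lie_\rho\ms{g}$ (schematically an $\mc{S}(\ms{g};\ms{m},\cdot)$ term, by the $\ms{g}$-compatibility of $\ol{\ms{D}}$ from Proposition \ref{sec:aads_prop_covariant_D}), and moving $\Lie_\rho^k$ past $\ms{D}$ is exactly Proposition \ref{prop_higher_order_commutation}, which yields the sums $\mc{S}(\ms{g};\ms{D}\Lie_\rho^{j_0-1}\ms{m},\Lie_\rho^{j_1-1}\ms{m},\dots,\Lie_\rho^j\ms{m})$ with $j<k$ and $j_p\ge1$. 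The same manipulation on the right turns $\Lie_\rho^k(\ms{D}_a\ms{tr}_{\ms{g}}\ms{m})$ into $\ms{D}_a(\ms{tr}_{\ms{g}}\Lie_\rho^k\ms{m})$ modulo identical schematic terms (using $[\ms{tr}_{\ms{g}},\Lie_\rho]\ms{m}=\mc{S}(\ms{g};\ms{m}^2)$ as in the proof of Proposition \ref{prop_transport_m}), while $4\Lie_\rho^k\ms{T}^1$ is carried along unchanged; collecting gives \eqref{higher_constraint_Dm}. The decomposition of $\Lie_\rho^k\ms{T}^1$ for $n\ge4$ follows from the explicit formula $\ms{T}^1_a=\tfrac12\rho^{3}\ms{g}^{cd}\ol{\ms{f}}^0_c\ol{\ms{f}}^1_{ad}$ recorded in Proposition \ref{prop_transport_m}: apply $\Lie_\rho^k$ and use Leibniz, where each derivative either lowers the power of $\rho^3$ (after at most three hits it becomes a constant and then is annihilated, producing the range $0\le q\le3$ with the weight $\rho^{3-q}$), or lands on a $\ms{g}^{-1}$ (producing an $\ms{m}$), or on $\ol{\ms{f}}^0$ or $\ol{\ms{f}}^1$; the combinatorics of distributing the remaining derivatives is precisely the same one already used for $\ms{T}^0,\ms{T}^2$ in Proposition \ref{higher_order_m}.

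For \eqref{equation_d_f1}: apply $\Lie_\rho^k$ to the Bianchi constraint $\ms{D}_{[a}\ol{\ms{f}}^1_{bc]}=0$ from \eqref{eq_Df}; since antisymmetrisation is just index permutation it commutes with $\Lie_\rho$, so only the commutator $[\Lie_\rho^k,\ms{D}]\ol{\ms{f}}^1$ survives, and Proposition \ref{prop_higher_order_commutation} produces exactly the stated right-hand side. For \eqref{equation_div_f0}, first multiply the $n\ge4$ form of the divergence constraint in \eqref{eq_Df} by $\rho$ to clear the $\rho^{-1}$, obtaining $\rho\,\ms{D}\cdot\ol{\ms{f}}^0=\tfrac12\ms{g}^{ab}\ms{g}^{cd}\ms{m}_{ac}\ol{\ms{f}}^1_{bd}$; then apply $\Lie_\rho^k$, using $\Lie_\rho^k(\rho X)=\rho\,\Lie_\rho^k X+k\,\Lie_\rho^{k-1}X$ on the left together with the fact that $\rho$ is constant along the vertical directions (so $\rho\,\ms{D}\cdot\Lie_\rho^j\ol{\ms{f}}^0=\ms{D}\cdot\rho\,\Lie_\rho^j\ol{\ms{f}}^0$), commuting $\ms{D}\cdot$ past $\Lie_\rho$ as above — this delivers the two commutator sums, the one carrying a factor $\rho$ coming from the $\rho\,\Lie_\rho^k$ piece and the one without $\rho$ from the $k\,\Lie_\rho^{k-1}$ piece. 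Distributing $\Lie_\rho^k$ over the right-hand side as in the $\ms{T}^1$ computation (derivatives on the $\ms{g}^{-1}$'s reindexed as $\Lie_\rho^{j_p-1}\ms{m}$, on the explicit $\ms{m}_{ac}$ as $\Lie_\rho^{j_0}\ms{m}$, on $\ol{\ms{f}}^1$ as $\Lie_\rho^j\ol{\ms{f}}^1$) produces the first sum.

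\textbf{Main obstacle.} The content is entirely in the bookkeeping: one must verify that every Lie-derivative index and every summation constraint ($j<k$, $j_p\ge1$, and the precise $j_p-1$ versus $j_p$ shifts between factors coming from $\ms{g}^{-1}$ and factors coming from the explicit $\ms{m}$ or $\ol{\ms{f}}$) matches after all the commutations, and that differentiating $\rho^{3}$ (respectively $\rho^{4}$ for $\ms{T}^0,\ms{T}^2$) never generates a negative power. None of this is conceptually difficult, but it is error-prone, which is why I would phrase the whole argument as a single induction on $k$ that reduces, at each step, to one application of Proposition \ref{prop_higher_order_commutation} together with the already-established lower-order constraints, rather than attempting a direct closed-form expansion.
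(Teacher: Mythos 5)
Your proposal is correct and follows essentially the same route as the paper, whose proof is the one-line remark that these identities are "simple consequences" of the Lie–$\ms{D}$ commutation formulas applied to the first-order constraints \eqref{eq_Dm} and \eqref{eq_Df}; your use of the iterated commutator of Proposition \ref{prop_higher_order_commutation}, the Leibniz bookkeeping on the $\rho$-weights and metric contractions, and the explicit form of $\ms{T}^1$ from Proposition \ref{prop_transport_m} is exactly the intended argument, just spelled out.
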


\begin{proof}
    These are simple consequences of Proposition \ref{sec:aads_commutation_Lie_D} applied to \eqref{eq_Dm} and \eqref{eq_Df}. 
\end{proof}

The next proposition establishes the ODE analysis that we will use throughout the proof:

\begin{proposition}\label{prop_integration}
    Let $(\mc{M}, g)$ be a FG-aAdS segment, $(U, \varphi)$ a compact coordinate system and let $\ms{A}$, $\ms{G}$ be vertical tensor fields of the same rank satisfying: 
    \begin{equation}\label{lemma_transport}
        \rho\Lie_\rho \ms{A} -c\cdot\ms{A} = \ms{G}\text{,}
    \end{equation}
    for some $c\geq 0$ a constant. Let $M\geq 0$ and $q_0\geq 0$. The following properties hold:
    \begin{enumerate}
        \item\label{local_estimate} \textbf{[Local estimate]} If $c>0$, the following bound is satisfied on $U$, for all $\rho, \tilde{\rho}\in (0, \rho_0]$: 
        \begin{equation}\label{bound_integration}
            \abs{\ms{A}}_{M, \varphi} \lesssim \frac{\rho^c}{\tilde{\rho}^c}\abs{\ms{A}}_{M, \varphi}\vert_{\tilde{\rho}} + \rho^c\int_{\min(\rho, \tilde{\rho})}^{\max(\rho, \tilde{\rho})}\sigma^{-c-1} \abs{\ms{G}}_{M, \varphi}\vert_\sigma d\sigma\text{.}
        \end{equation}
        \item\textbf{[Local boundedness]}\label{lemma_item_1}  If $c>0$ and the derivatives of $\ms{G}$ satisfy the following bounds: 
            \begin{equation}\label{lemma_bound_G}
                \norm{\rho^q\Lie_\rho^{q}\ms{G}}_{M, \varphi}\lesssim 1\text{,}
            \end{equation}
        for all $q\leq q_0$, then:
            \begin{equation}\label{lemma_bound_A_k_p}
                \norm{\rho^q\Lie_\rho^{q}\ms{A}}_{M, \varphi}\lesssim 1\text{,}
            \end{equation}
        for all $q\leq q_0+1$.
    \item \textbf{[Convergence]} \label{lemma_item_2} If $c>0$ and the derivatives of $\ms{G}$ satisfy the following limits: 
    \begin{equation}\label{assumption_limit_G}
        \rho^q\Lie_\rho^{q}\ms{G}\Rightarrow^{M}\begin{cases}
             \mf{G}\text{,}\qquad &q=0\\
              0\text{,}\qquad &1\leq q\leq q_0\text{,}
        \end{cases}
    \end{equation}
    with $\mf{G}$ a tensor field on $\mc{I}$, then the following limits hold for $\ms{A}$: 
    \begin{equation}\label{lemma_limit_z_q}
        \rho^{q}\Lie_\rho^{q}\ms{A}\Rightarrow^{M}\begin{cases}
            -\frac{1}{c}\mf{G}\text{,}\qquad &q=0\text{,}\\
            0\text{,}\qquad &1\leq q\leq q_0+1\text{.}
        \end{cases} 
    \end{equation}
    \item\label{lemma_item_3} \textbf{[Fractional convergence]} If $c\in (0,1)$, and $\ms{G}$ satisfies the following bounds: 
    \begin{equation}\label{bound_G_frac}
        \norm{\rho^q\Lie_\rho^{1+q}\ms{G}}_{M, \varphi}\lesssim 1\text{,}
    \end{equation}
    for all $q\leq q_0$, then, there exists a tensor field $\mf{A}^{(c)}$ on $\mc{I}$ such that, for all $q\leq q_0+1$: 
    \begin{equation}\label{lemma_limit_anomalous}
        \rho^{1-c+q}\Lie_\rho^{1+q}\ms{A}
        \rightarrow^{M} b_q \mf{A}^{(c)}\text{,}
    \end{equation}
    with $b_q$ constants. 
    \item\label{lemma_item_4} \textbf{[Anomalous limit]} If $c=0$ and if $\ms{G}$ satisfies the limits \eqref{assumption_limit_G} for all $q\leq q_0$,
        then, there exists a tensor field $\mf{A}^\dag$ on $\mc{I}$ such that, for all $q\leq q_0$: 
        \begin{equation}\label{lemma_log_limit}
            \rho^{1+q}\Lie_\rho^{1+q}\ms{A}\Rightarrow^{M} b_q'\mf{G}\text{,}\qquad \ms{A} - \log\rho\cdot \mf{G} \rightarrow \mf{A}^\dag\text{,}
        \end{equation}
        with $b_q'$ constants, and $b_0 = 1$.
    \end{enumerate}
\end{proposition}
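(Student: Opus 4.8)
The plan is to treat \eqref{lemma_transport} as a first-order linear ODE in $\rho$ with the explicit integrating factor $\rho^{-c}$, and to bootstrap to the higher $\rho$-derivatives by differentiating \eqref{lemma_transport}. Writing $\Lie_\rho(\rho^{-c}\ms{A}) = \rho^{-c-1}\ms{G}$ and integrating between $\rho$ and $\tilde\rho$ immediately gives \eqref{bound_integration} after applying the pointwise triangle inequality componentwise in the coordinate chart $(U,\varphi)$ — here one uses that all the connection-coefficient corrections in $|\cdot|_{M,\varphi}$ are ordinary coordinate derivatives, so the estimate is just Grönwall-free integration of a scalar inequality; this proves item \ref{local_estimate}. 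For item \ref{lemma_item_1}, I would differentiate \eqref{lemma_transport} $q$ times in $\rho$ to get $\rho\Lie_\rho^{q+1}\ms{A} - (c-q)\Lie_\rho^q\ms{A} = \Lie_\rho^q\ms{G}$, rewrite this as $\rho\Lie_\rho(\rho^q\Lie_\rho^q\ms{A}) - c(\rho^q\Lie_\rho^q\ms{A}) = \rho^q\Lie_\rho^q\ms{G}$ (after absorbing the lower-order terms $\rho^j\Lie_\rho^j\ms{A}$, $j\le q$, which are controlled by the inductive hypothesis), and then apply item \ref{local_estimate} with $\tilde\rho = \rho_0$: the boundary term is finite since $c>0$, and the integral $\rho^c\int_\rho^{\rho_0}\sigma^{-c-1}|\rho^q\Lie_\rho^q\ms{G}|\,d\sigma \lesssim \rho^c\int_\rho^{\rho_0}\sigma^{-c-1}d\sigma \lesssim 1$, again because $c>0$. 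Induction on $q$ up to $q_0+1$ closes item \ref{lemma_item_1}.

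For item \ref{lemma_item_2} (convergence), I would first establish $\ms{A}\rightarrow^M -\frac1c\mf{G}$: set $\ms{E} := \ms{A} + \frac1c\mf{G}$, which by \eqref{lemma_transport} satisfies $\rho\Lie_\rho\ms{E} - c\ms{E} = \ms{G} - \mf{G} + \frac1c\rho\Lie_\rho\mf{G} = \ms{G}-\mf{G}$ since $\mf{G}$ is $\rho$-independent; applying item \ref{local_estimate} with $\tilde\rho = \rho_0$ and using $\ms{G}\Rightarrow^M\mf{G}$ (so $\int_0^{\rho_0}\sigma^{-1}|\ms{G}-\mf{G}|_{M,\varphi}d\sigma<\infty$) gives $|\ms{E}|_{M,\varphi}\lesssim \rho^c(1 + \int_\rho^{\rho_0}\sigma^{-c-1}|\ms{G}-\mf{G}|d\sigma)$; splitting the integral and using that $\sigma^{-c}\to 0$ as $\sigma\to 0$ for the part where $|\ms{G}-\mf{G}|$ is small shows $\ms{E}\to^M 0$, and the rapid-convergence integral bound follows similarly. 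For $1\le q\le q_0+1$ one uses the rewritten equation $\rho\Lie_\rho(\rho^q\Lie_\rho^q\ms{A}) - c(\rho^q\Lie_\rho^q\ms{A}) = \rho^q\Lie_\rho^q\ms{G} + (\text{l.o.t.})$, where the lower-order terms involve $\rho^j\Lie_\rho^j\ms{A}$ with $j<q$ which converge (to $-\frac1c\mf{G}$ for $j=0$, to $0$ for $j\ge1$) by induction, arranging that the total right-hand side converges rapidly to $0$; then item \ref{local_estimate} gives $\rho^q\Lie_\rho^q\ms{A}\Rightarrow^M 0$. One must check the algebra so that the $j=0$ contribution $-\frac1c\mf{G}$ multiplied by its coefficient cancels against the constant part of $\rho^q\Lie_\rho^q\ms{G}$, but since the latter converges to $0$ for $q\ge1$ this is automatic. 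Items \ref{lemma_item_3} and \ref{lemma_item_4} are the borderline cases: for item \ref{lemma_item_3}, with $c\in(0,1)$, one differentiates once to get $\rho\Lie_\rho^2\ms{A} - (c-1)\Lie_\rho\ms{A} = \Lie_\rho\ms{G}$; since $c-1<0$ this is a transport equation with \emph{negative} constant $c-1$, so one integrates the integrating factor $\rho^{1-c}$ \emph{from} $0$, i.e. $\Lie_\rho(\rho^{1-c}\Lie_\rho\ms{A}) = \rho^{-c}\Lie_\rho\ms{G}$ and $\rho^{1-c}\Lie_\rho\ms{A} = \int_0^\rho\sigma^{-c}\Lie_\rho\ms{G}\,d\sigma + \mf{A}^{(c)}$, where the integral converges absolutely (as $\sigma^{-c}\Lie_\rho\ms{G}=\sigma^{-c}\cdot O(\sigma^{-1})$... no — here $|\rho\Lie_\rho^{1+q}\ms{G}|$ bounded means $|\Lie_\rho\ms{G}|\lesssim\rho^{-1}$, giving $\sigma^{-c}\cdot\sigma^{-1}$, which is \emph{not} integrable; so instead one uses $|\rho^0\Lie_\rho\ms{G}|\lesssim$ is what is needed — I would re-read \eqref{bound_G_frac}: it says $\|\rho^q\Lie_\rho^{1+q}\ms{G}\|\lesssim1$, i.e. for $q=0$, $\|\Lie_\rho\ms{G}\|\lesssim1$, so $\sigma^{-c}\Lie_\rho\ms{G}$ \emph{is} integrable near $0$ since $c<1$), and the limit as $\rho\to0$ of $\rho^{1-c}\Lie_\rho\ms{A}$ is exactly $\mf{A}^{(c)}$; higher $q$ follow by the same device with shifted constants $b_q$.

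For item \ref{lemma_item_4} ($c=0$), the equation is $\rho\Lie_\rho\ms{A} = \ms{G}$, i.e. $\Lie_\rho\ms{A} = \rho^{-1}\ms{G}$; writing $\ms{G} = \mf{G} + (\ms{G}-\mf{G})$ and integrating, $\ms{A} = \log\rho\cdot\mf{G} + \int_{\rho_0}^\rho\sigma^{-1}(\ms{G}-\mf{G})\,d\sigma + \ms{A}|_{\rho_0}$, where the integral converges as $\rho\to0$ by rapid convergence $\ms{G}\Rightarrow^M\mf{G}$, yielding $\mf{A}^\dag$ as the limit of $\ms{A}-\log\rho\cdot\mf{G}$; the relation $\rho\Lie_\rho\ms{A} = \ms{G}\Rightarrow^M\mf{G}$ directly gives $\rho\Lie_\rho\ms{A}\Rightarrow^M\mf{G}$, i.e. $b_0'=1$, and differentiating once more gives $\rho\Lie_\rho^2\ms{A} + \Lie_\rho\ms{A} = \Lie_\rho\ms{G}$, hence $\rho\Lie_\rho(\rho\Lie_\rho\ms{A}) = \rho\Lie_\rho\ms{G}$, and iterating with the $\Rightarrow^M$-hypotheses on $\Lie_\rho^q\ms{G}$ ($1\le q\le q_0$) converging rapidly to $0$ gives $\rho^{1+q}\Lie_\rho^{1+q}\ms{A}\Rightarrow^M b_q'\mf{G}$ with $b_q'$ computed from the recursion. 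I expect the main obstacle to be bookkeeping: carefully tracking which lower-order terms $\rho^j\Lie_\rho^j\ms{A}$ appear when one differentiates \eqref{lemma_transport} $q$ times and verifying they all have the claimed convergence/boundedness by strong induction on $q$, together with getting the constants $b_q$, $b_q'$ (and the normalisation $b_0=1$, $b_0'=1$) exactly right. The analytic content — integrating a scalar linear ODE and splitting the resulting integral at a small radius — is elementary; the care is entirely in the inductive structure and in choosing, for each value of the constant $c-q$, whether to integrate the integrating factor from $\rho_0$ (when $c-q>0$) or from $0$ (when $c-q\le0$).
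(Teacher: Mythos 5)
Your overall strategy is the same as the paper's: the base case $q=0$ is the integrating-factor computation $\Lie_\rho(\rho^{-c}\ms{A})=\rho^{-c-1}\ms{G}$ (which the paper simply imports from Proposition 2.40 of \cite{shao:aads_fg}), the anomalous case $c=0$ is Proposition 2.41 of \cite{shao:aads_fg} plus iteration, and higher orders come from differentiating \eqref{lemma_transport}. Two remarks on the higher-order step. First, the rescaled equation is \emph{exact}: a direct computation gives
\begin{equation*}
\rho\Lie_\rho\paren{\rho^{q}\Lie_\rho^{q}\ms{A}} - c\,\rho^{q}\Lie_\rho^{q}\ms{A} = \rho^{q}\Lie_\rho^{q}\ms{G}\text{,}\qquad\text{equivalently}\qquad \rho^{q+1}\Lie_\rho^{q+1}\ms{A} = (c-q)\,\rho^{q}\Lie_\rho^{q}\ms{A} + \rho^{q}\Lie_\rho^{q}\ms{G}\text{,}
\end{equation*}
so there are no lower-order terms to absorb and the cancellation you worry about in the convergence step is vacuous. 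The paper uses only the second form: after the $q=0$ case, every higher-order bound and limit (in parts \textbf{Local boundedness}, \textbf{Convergence}, \textbf{Fractional convergence}, \textbf{Anomalous limit}) is read off this identity algebraically, with no further integration.

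Second, and this is the genuine gap in your route as stated: you propose to prove the level-$q$ bound/limit by re-applying the integral estimate to $\rho^{q}\Lie_\rho^{q}\ms{A}$ with source $\rho^{q}\Lie_\rho^{q}\ms{G}$, for all $q\leq q_0+1$. At the top order $q=q_0+1$ this requires control of $\rho^{q_0+1}\Lie_\rho^{q_0+1}\ms{G}$, which the hypotheses \eqref{lemma_bound_G} and \eqref{assumption_limit_G} do not provide (they stop at $q_0$); so that last step, as written, fails. The fix is exactly the displayed identity with $q=q_0$: it expresses $\rho^{q_0+1}\Lie_\rho^{q_0+1}\ms{A}$ in terms of $\rho^{q_0}\Lie_\rho^{q_0}\ms{A}$ (already controlled by your induction) and $\rho^{q_0}\Lie_\rho^{q_0}\ms{G}$ (controlled by hypothesis), which is precisely why the conclusion holds one order higher than the assumption, and is how the paper obtains it. With that correction, your treatment of the fractional case (integrating $\Lie_\rho(\rho^{1-c}\Lie_\rho\ms{A})=\rho^{-c}\Lie_\rho\ms{G}$ from $0$, integrable since $c<1$) and of the $c=0$ case (subtracting $\log\rho\cdot\mf{G}$ and using rapid convergence of $\ms{G}-\mf{G}$) coincides with the paper's argument, with the constants $b_q$, $b_q'$ coming out of the same recursion.
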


\begin{remark}
    This proposition is a generalisation of Propositions 2.40 and 2.41 of \cite{shao:aads_fg}. More precisely:
    \begin{itemize}
        \item \ref{lemma_item_1} and \ref{lemma_item_2} show that one can obtain higher-order (in $\rho$) boundedness and convergence at the price of higher $\rho$--derivatives for $\ms{G}$, without loss of vertical regularity. Such a feature for the Einstein equations has been used in the author's thesis \cite{thesis_Guisset}, although this will not exploited beyond the second order in this work. 
        \item \ref{lemma_item_4} allows one to obtain higher-order limits for the anomalous case $(c=0)$, which differ only by a numerical constant. The second limit in \eqref{lemma_log_limit} is a simple consequence of Proposition 2.41 of \cite{shao:aads_fg}. 
        \item Finally, \ref{lemma_item_3} allows one to treat cases where $c$ is noninteger. This will not be particularly useful in the context of this work, but may turn out to be when one considers more general matter fields; see Example \ref{ex:scalar_field} below. 
    \end{itemize}
\end{remark}

\begin{proof}
    See Appendix \ref{app:prop_integration}.
\end{proof}

\begin{example}\label{ex:scalar_field}
    Let $(\mc{M}, g, \phi)$ be a Klein-Gordon-FG-aAdS segment with $\phi$ satisfying the following wave equation: 
        \begin{equation*}
            (\Box_g -m^2)\phi = 0\text{,}\qquad m^2\geq -\frac{n^2}{4}\text{.}
        \end{equation*}
    Such a wave equation can easily be written as:
    \begin{equation}
        \begin{cases}\label{wave_scalar}
            \psi' = \Lie_\rho\psi\text{,}\\
            \rho\Lie_\rho\psi' - (2\nu-1)\psi' = - \rho\cdot \ms{D}^2 \psi - \frac{1}{2}\Delta_-\ms{tr}_{\ms{g}}\ms{m}\cdot \psi - \frac{1}{2}\rho\psi' \cdot \ms{tr}_{\ms{g}}\ms{m}\text{,}
        \end{cases}
    \end{equation}
    where we define $\Delta_- := \frac{n}{2} - \nu\text{,}\;\nu:= \sqrt{\frac{n^2}{4} + m^2}$ and $\psi$ is defined as the following rescaling of $\phi$: $\psi := \rho^{-\Delta_-}\phi$. Note that, for $m^2$ in the well-posed range \cite{Warnick13, Holzegel12}, the limit $\psi\rightarrow^0 \psi^{(D)}$ exists and is the Dirichlet trace of $\phi$ on $\mc{I}$. 

    Observe that \eqref{wave_scalar} is an equation of the form \eqref{lemma_transport} for $\Lie_\rho \psi$ with $c = 2\nu -1\geq 0$ for $\nu \geq 1/2$. In particular, $c$ does not need to be an integer. 
\end{example}

We will also need the following proposition: 
\begin{proposition}[Proposition 2.36 in \cite{shao:aads_fg}]\label{prop_bounds_metric}
    Let $(\mc{M}, g)$ be a FG-aAdS segment, $M\geq 0$ and $(U,\varphi)$ a compact coordinate system on $\mc{I}$. Assume that the following holds for some constant $C>0$: 
    \begin{gather}\label{bound_metric}
        \norm{\ms{g}}_{M, \varphi}\leq C\text{,}\qquad \norm{\ms{g}^{-1}}_{0, \varphi}\leq C\text{,}\qquad \norm{\mathfrak{g}^{(0)}}_{M, \varphi}\leq C\text{,}\qquad \norm{(\mathfrak{g}^{(0)})^{-1}}_{0,\varphi}\leq C\text{,}
    \end{gather}
    Then: 
    \begin{gather*}
        \norm{\ms{g}^{-1}}_{M, \varphi}\lesssim_C 1\text{,}\qquad \norm{(\mathfrak{g}^{(0)})^{-1}}_{M, \varphi}\lesssim_C 1\text{,} \qquad \ms{g}^{-1}- (\mathfrak{g}^{(0)})^{-1}= \mc{O}_M\paren{\ms{g}-\mathfrak{g}^{(0)}}\text{.}
    \end{gather*}
    Furthermore, if $M\geq 2$, the following estimates hold for the Riemann tensor, Ricci tensor and scalar: 
    \begin{gather*}
        \norm{\ms{R}}_{M-2, \varphi} \lesssim_C 1\text{,}\qquad \norm{\mathfrak{R}^{(0)}}_{M-2, \varphi}\lesssim_C 1\text{,}\qquad \ms{R}-\mathfrak{R}^{(0)}=\mc{O}_{M-2}(\ms{g}-\mathfrak{g}^{(0)})\text{,}\\
        \norm{\ms{Rc}}_{M-2, \varphi} \lesssim_C 1\text{,}\qquad \norm{\mathfrak{Rc}^{(0)}}_{M-2, \varphi}\lesssim_C 1\text{,}\qquad \ms{Rc}-\mathfrak{Rc}^{(0)}=\mc{O}_{M-2}(\ms{g}-\mathfrak{g}^{(0)})\text{,}\\
        \norm{\ms{Rs}}_{M-2, \varphi} \lesssim_C 1\text{,}\qquad \norm{\mathfrak{Rs}^{(0)}}_{M-2, \varphi}\lesssim_C 1\text{,}\qquad \ms{Rs}-\mathfrak{Rs}^{(0)}=\mc{O}_{M-2}(\ms{g}-\mathfrak{g}^{(0)})\text{.}
    \end{gather*}
\end{proposition}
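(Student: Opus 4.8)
The plan is to obtain every estimate by differentiating algebraic identities, tracking throughout the dependence on the hypothesised $C^M$--bounds for $\ms{g}$ and $\mf{g}^{(0)}$, and then to extract the difference estimates by a telescoping argument. First I would differentiate the identity $\ms{g}_{ac}(\ms{g}^{-1})^{cb}=\delta^b_a$ on each level set $\lbrace\rho=\sigma\rbrace$, which gives $\partial_d(\ms{g}^{-1})^{ab}=-(\ms{g}^{-1})^{ac}\paren{\partial_d\ms{g}_{ce}}(\ms{g}^{-1})^{eb}$. Iterating this identity and applying the Leibniz rule, a straightforward induction on $m\leq M$ shows that $\partial^{\leq M}(\ms{g}^{-1})$ is a universal polynomial in the components of $\ms{g}^{-1}$ and of $\partial^{\leq M}\ms{g}$. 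Taking the supremum over $(0,\rho_0]\times U$ and invoking $\norm{\ms{g}^{-1}}_{0,\varphi}\leq C$ and $\norm{\ms{g}}_{M,\varphi}\leq C$ yields $\norm{\ms{g}^{-1}}_{M,\varphi}\lesssim_C 1$, and the identical argument applied to $\mf{g}^{(0)}$ gives $\norm{(\mf{g}^{(0)})^{-1}}_{M,\varphi}\lesssim_C 1$.

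For the difference of the inverses I would use the elementary identity $(\ms{g}^{-1})^{ab}-((\mf{g}^{(0)})^{-1})^{ab}=-(\ms{g}^{-1})^{ac}\paren{\ms{g}_{cd}-\mf{g}^{(0)}_{cd}}((\mf{g}^{(0)})^{-1})^{db}$, which displays $\ms{g}^{-1}-(\mf{g}^{(0)})^{-1}$ as a contraction of $\ms{g}-\mf{g}^{(0)}$ against two factors that are bounded in $C^M$ by the previous step. Differentiating up to order $M$ via the Leibniz rule and inserting those bounds gives $\ms{g}^{-1}-(\mf{g}^{(0)})^{-1}=\mc{O}_M\paren{\ms{g}-\mf{g}^{(0)}}$.

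For the curvature statements (with $M\geq2$) I would use that, in any chart, the Riemann tensor of $\ms{g}$ is a fixed universal combination of terms of the schematic forms $\ms{g}^{-1}\cdot\partial^2\ms{g}$ and $(\ms{g}^{-1})^2\cdot(\partial\ms{g})^2$, so that $\norm{\ms{R}}_{M-2,\varphi}$ is controlled by $\norm{\ms{g}^{-1}}_{M-2,\varphi}$ and $\norm{\ms{g}}_{M,\varphi}$, both $\lesssim_C 1$; tracing with $\ms{g}^{-1}$ then gives the same for $\ms{Rc}$ and $\ms{Rs}$, and the whole discussion applies verbatim with $\mf{g}^{(0)}$ in place of $\ms{g}$. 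To compare, I would write $\ms{R}=P(\ms{g}^{-1},\partial\ms{g},\partial^2\ms{g})$ and $\mf{R}^{(0)}=P((\mf{g}^{(0)})^{-1},\partial\mf{g}^{(0)},\partial^2\mf{g}^{(0)})$ with one and the same universal polynomial $P$, expand $P$ into monomials, and apply the telescoping identity for a difference of products, $a_1\cdots a_N-b_1\cdots b_N=\sum_{j=1}^N b_1\cdots b_{j-1}\,(a_j-b_j)\,a_{j+1}\cdots a_N$: this expresses $\ms{R}-\mf{R}^{(0)}$ as a sum of terms each carrying exactly one factor among $\ms{g}^{-1}-(\mf{g}^{(0)})^{-1}$, $\partial(\ms{g}-\mf{g}^{(0)})$, $\partial^2(\ms{g}-\mf{g}^{(0)})$, with the remaining factors bounded in $C^{M-2}$ by the earlier steps. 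Since $\ms{g}^{-1}-(\mf{g}^{(0)})^{-1}=\mc{O}_M(\ms{g}-\mf{g}^{(0)})$, this yields $\ms{R}-\mf{R}^{(0)}=\mc{O}_{M-2}(\ms{g}-\mf{g}^{(0)})$; contracting against $\ms{g}^{-1}$, or against $(\mf{g}^{(0)})^{-1}$, and inserting the difference-of-inverses estimate once more gives the corresponding $\ms{Rc}-\mf{Rc}^{(0)}$ and $\ms{Rs}-\mf{Rs}^{(0)}$ statements.

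None of the steps is deep, and the only part requiring genuine care is the bookkeeping of the $\mc{O}_M$ formalism: one must check in the telescoping step that exactly two derivatives are lost for the curvature and that every coefficient produced is honestly bounded in the claimed $C^{M-2}$ norm in terms of $C$ alone. The supremum in $\rho$ causes no difficulty, since all the identities used hold pointwise on the level sets of $\rho$, and the $C^0$--control of the inverse metrics supplied by the hypotheses is precisely the input needed to start the differentiation argument.
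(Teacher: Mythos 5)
The paper gives no proof of its own here (it defers to Proposition 2.36 of \cite{shao:aads_fg}), and your argument is essentially the standard proof found there: differentiate $\ms{g}_{ac}(\ms{g}^{-1})^{cb}=\delta^b_a$ and induct, use the resolvent-type identity for the difference of inverses, and write the curvatures as one universal polynomial in $\ms{g}^{-1},\partial\ms{g},\partial^2\ms{g}$ whose difference is telescoped. Your bookkeeping is correct, including the key point that two derivatives are lost (so the $C^{M-2}$ norm of $\ms{R}-\mf{R}^{(0)}$ is controlled via $\partial^{\leq 2}(\ms{g}-\mf{g}^{(0)})$ with coefficients bounded in terms of $C$ alone), which is exactly the reading of $\mc{O}_{M-2}\paren{\ms{g}-\mf{g}^{(0)}}$ used later in the paper.
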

\begin{proof}
    See \cite{shao:aads_fg}, Proposition 2.36. 
\end{proof}

\begin{proposition}
    Let $(\mc{M}, g)$ be a FG-aAdS segment, $M\geq 0$ and $(U, \varphi)$ a compact coordinate system on $\mc{I}$, and assume \eqref{bound_metric} holds for $M$. Let $\ms{A}$ be vertical tensor field on $\mc{M}$ and $\mf{A}$ a tensor field on $\mc{I}$ of the same rank. Then, the following estimate holds: 
    \begin{equation}
        \label{estimate_DA}\abs{\ms{DA}- \mf{DA}}_{M-1, \varphi} \lesssim \abs{\ms{A}- \mf{A}}_{M, \varphi} + \abs{\ms{g}- \mf{g}^{(0)}}_{M, \varphi}\paren{\norm{\ms{A}}_{M-1, \varphi} + \norm{\mf{A}}_{M-1, \varphi}}\text{.}
    \end{equation}
    In particular, if $\ms{A}\rightarrow^M \mf{A}$, $\ms{A}\Rightarrow^{M}\mf{A}$ respectively, then: 
    \begin{equation}
        \label{limits_DA}\ms{DA}\rightarrow^{M-1} \mf{DA}\text{,}\qquad \ms{DA}\Rightarrow^{M-1}\mf{DA}\text{.}
    \end{equation}
\end{proposition}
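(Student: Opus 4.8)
The plan is to reduce everything to a coordinate computation in the compact chart $(U,\varphi)$, comparing the Christoffel symbols of $\ms{g}$ with those of $\mf{g}^{(0)}$. First I would fix $(U,\varphi)$, work with components in $\varphi_\rho=(\rho,\varphi)$ at a fixed level set $\{\rho=\sigma\}$, and write $\ms{\Gamma}$ (resp. $\mf{\Gamma}$) for the Christoffel symbols of $\ms{g}$ along the vertical directions (resp. of $\mf{g}^{(0)}$), i.e. those of Proposition \ref{prop_christoffel_FG}. Since the difference of two linear connections is a tensor, $\ms{\Gamma}-\mf{\Gamma}$ is a vertical $(1,2)$-tensor, and subtracting the coordinate formula $\Gamma^{c}_{ab}=\frac12 g^{cd}(\partial_a g_{db}+\partial_b g_{da}-\partial_d g_{ab})$ for the two metrics — after adding and subtracting $\ms{g}^{cd}$ times the $\mf{g}^{(0)}$-derivatives — gives a schematic identity
\[
\ms{\Gamma}-\mf{\Gamma}=\mc{S}\paren{\ms{g};\,\partial\paren{\ms{g}-\mf{g}^{(0)}}}+\mc{S}\paren{\ms{g}^{-1}-(\mf{g}^{(0)})^{-1},\,\partial\mf{g}^{(0)}}\text{,}
\]
i.e. $\ms{\Gamma}-\mf{\Gamma}$ is a finite sum of non-metric contractions of the displayed factors, each summand containing one of $\partial(\ms{g}-\mf{g}^{(0)})$ or $\ms{g}^{-1}-(\mf{g}^{(0)})^{-1}$. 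Hypothesis \eqref{bound_metric} together with Proposition \ref{prop_bounds_metric} supplies $\norm{\ms{g}^{-1}}_{M,\varphi},\norm{(\mf{g}^{(0)})^{-1}}_{M,\varphi}\lesssim1$ and $\ms{g}^{-1}-(\mf{g}^{(0)})^{-1}=\mc{O}_M(\ms{g}-\mf{g}^{(0)})$; differentiating the displayed identity at most $M-1$ times and applying the Leibniz rule then gives $\abs{\ms{\Gamma}-\mf{\Gamma}}_{M-1,\varphi}\lesssim\abs{\ms{g}-\mf{g}^{(0)}}_{M,\varphi}$ and $\abs{\ms{\Gamma}}_{M-1,\varphi}+\abs{\mf{\Gamma}}_{M-1,\varphi}\lesssim1$, with implicit constants depending only on the constant $C$ of \eqref{bound_metric}.

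Next I would expand $\ms{DA}-\mf{DA}$ in the chart, writing $\ms{D}_c\ms{A}_{\bar a}-\mf{D}_c\mf{A}_{\bar a}=(\ms{D}_c-\mf{D}_c)\ms{A}_{\bar a}+\mf{D}_c(\ms{A}-\mf{A})_{\bar a}$ and using the tensorial identity $(\ms{D}_c-\mf{D}_c)\ms{A}_{\bar a}=-\sum_i(\ms{\Gamma}^{d}_{ca_i}-\mf{\Gamma}^{d}_{ca_i})\ms{A}_{\hat a_i[d]}$. Differentiating up to $M-1$ times, applying Leibniz, and inserting the bounds of the previous step together with $\abs{\partial(\ms{A}-\mf{A})}_{M-1,\varphi}\le\abs{\ms{A}-\mf{A}}_{M,\varphi}$ and $\abs{\mf{\Gamma}}_{M-1,\varphi}\lesssim1$ yields \eqref{estimate_DA} with the factor $\norm{\ms{A}}_{M-1,\varphi}$; the symmetric expansion about $\mf{A}$, i.e. $(\ms{D}_c-\mf{D}_c)\mf{A}_{\bar a}=-\sum_i(\ms{\Gamma}^{d}_{ca_i}-\mf{\Gamma}^{d}_{ca_i})\mf{A}_{\hat a_i[d]}$ together with $\ms{D}_c(\ms{A}-\mf{A})$, gives the same bound with $\norm{\mf{A}}_{M-1,\varphi}$. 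Keeping both terms is precisely \eqref{estimate_DA}.

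Finally, \eqref{limits_DA} follows by feeding the hypotheses into \eqref{estimate_DA}: if $\ms{A}\rightarrow^M\mf{A}$, then $\abs{\ms{A}-\mf{A}}_{M,\varphi}\to0$ uniformly on $\{\rho=\sigma\}\times U$ as $\sigma\searrow0$, and in particular $\norm{\ms{A}}_{M-1,\varphi}+\norm{\mf{A}}_{M-1,\varphi}\lesssim1$; combined with $\abs{\ms{g}-\mf{g}^{(0)}}_{M,\varphi}\to0$, the right-hand side of \eqref{estimate_DA} tends to $0$, so $\ms{DA}\rightarrow^{M-1}\mf{DA}$, while for the rapid version one integrates \eqref{estimate_DA} against $\sigma^{-1}\,d\sigma$ on $(0,\rho_0]$ and uses the uniform bound on $\norm{\ms{A}}_{M-1,\varphi}+\norm{\mf{A}}_{M-1,\varphi}$ to conclude $\ms{DA}\Rightarrow^{M-1}\mf{DA}$. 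I expect no genuine obstacle: the two delicate points are the derivative bookkeeping in the first step — no metric-type factor is ever differentiated more than $M$ times, which is exactly why $C^M$-boundedness of $\ms{g}$ and $\ms{g}^{-1}$ suffices and one loses exactly one order of vertical regularity — and the fact that the limiting assertions quietly require the companion convergences $\ms{g}\rightarrow^M\mf{g}^{(0)}$ and $\ms{g}\Rightarrow^M\mf{g}^{(0)}$, which hold in every setting where this proposition is invoked (for instance once Theorem \ref{theorem_main_fg} is in force) and which I would record explicitly; beyond that it is a routine Leibniz estimate.
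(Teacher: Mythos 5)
Your argument is correct and takes essentially the same route as the paper: the paper likewise expands $\ms{D}$ versus $\mf{D}$ in coordinates through the Christoffel-type quadratic polynomial in $[\partial\ms{g}],[\ms{g}^{-1}]$, bounds the difference of these terms using Proposition \ref{prop_bounds_metric} (in particular $\ms{g}^{-1}-(\mf{g}^{(0)})^{-1}=\mc{O}_M(\ms{g}-\mf{g}^{(0)})$), and then reads off the limits from \eqref{estimate_DA}. Your explicit remark that the limiting assertions tacitly use the companion convergences $\ms{g}\rightarrow^{M}\mf{g}^{(0)}$, resp.\ $\ms{g}\Rightarrow^{M}\mf{g}^{(0)}$, is fair --- the paper's ``follow immediately'' makes the same implicit use in the settings where the proposition is applied --- and does not affect the substance.
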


\begin{proof}
    The following estimate follows from the definition of $\ms{D}$: 
    \begin{equation*}
        \abs{\ms{DA}-\mf{DA}}_{M-1, \varphi}\lesssim \abs{\ms{A}-\mf{A}}_{M, \varphi} + \abs{\mc{P}([\partial \ms{g}], [\ms{g}^{-1}])\cdot \ms{A}-\mc{P}([\partial\mf{g}], [\mf{g}^{-1}])\cdot \mf{A}}_{M-1, \varphi}\text{,}
    \end{equation*}
    where $\mc{P}$ is a quadratic polynomial. One has, at least schematically:
    \begin{align*} &\abs{\mc{P}([\partial \ms{g}], [\ms{g}^{-1}])\cdot \ms{A}-\mc{P}([\partial\mf{g}], [\mf{g}^{-1}])\cdot \mf{A}}_{M-1, \varphi}\lesssim \abs{\ms{g}- \mf{g}}_{M, \varphi}\paren{\norm{\ms{A}}_{M-1, \varphi}+\norm{\mf{A}}_{M-1, \varphi}} + {\abs{\ms{A}-\mf{A}}_{M-1, \varphi}}\text{,}
    \end{align*}
    proving \eqref{estimate_DA}. The limits \eqref{limits_DA} follow immediately. 
\end{proof}

\subsubsection{Proof of Theorem \ref{theorem_main_fg}}

We will first consider the case $n\geq3$. The special dimension $n=2$ will be treated in the end. Throughout this proof, we will use the fact that the transport equations will typically be of the form: 
\begin{equation}\label{transport_proof}
    \rho\Lie_\rho \ms{A} - c_{\ms{A}, n} \cdot \ms{A} = \ms{G}\text{,}
\end{equation}
for some vertical tensors $\ms{A}, \ms{G}$ of the same rank and $c_{\ms{A}, n}\geq 0$ a constant depending on $\ms{A}$ and $n$. These types of equations can be treated using Proposition \ref{prop_integration}. 

\proofpart{First bounds and limits}

\begin{lemma}\label{lemma_first_bounds}
    Let $n\geq 3$. The following uniform bounds hold with respect to any compact chart $(U, \varphi)$ on $\mc{I}$: 
    \begin{gather}
            \norm{\ms{tr}_{\ms{g}}\ms{m}}_{M_0, \varphi} + \norm{\ms{m}}_{M_0, \varphi}\lesssim 1\text{,}\label{unif_bound_0_A01}\\
            \begin{cases}
                \norm{\ol{\ms{f}}^0}_{M_0, \varphi}+\norm{\ol{\ms{f}}^1}_{M_0+1, \varphi}\lesssim 1\text{,}\qquad &n> 4\text{,}\\
                \norm{\rho^\delta\cdot \ol{\ms{f}}^0}_{M_0,\varphi} + \norm{\ol{\ms{f}}^1}_{M_0+1, \varphi}\lesssim 1\text{,}\qquad &n=4\text{,}\\
                \norm{\ol{\ms{f}}^0}_{M_0, \varphi}+\norm{\ol{\ms{f}}^1}_{M_0, \varphi}\lesssim 1\text{,}\qquad &n=3\text{,}
            \end{cases}
        \end{gather}
        for any $0<\delta < 1$.
\end{lemma}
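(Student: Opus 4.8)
The goal is to establish uniform $C^{M_0}$ (resp. $C^{M_0+1}$) bounds on $\ms{m} = \Lie_\rho\ms{g}$, $\ms{tr}_{\ms{g}}\ms{m}$, and the Maxwell fields, bootstrapping from the hypotheses of Definition \ref{def_regular_M}. The natural strategy is to feed the transport equations of Propositions \ref{prop_transport_m} and \ref{prop_transport_f} into the local estimate \eqref{bound_integration} of Proposition \ref{prop_integration}. First I would record what is given: by \eqref{assumption_metric} we have $\norm{\ms{g}}_{M_0+2,\varphi}\lesssim 1$, hence by Proposition \ref{prop_bounds_metric} also $\norm{\ms{g}^{-1}}_{M_0+2,\varphi}\lesssim 1$ and $\norm{\ms{Rc}}_{M_0,\varphi}, \norm{\ms{Rs}}_{M_0,\varphi}\lesssim 1$; and by \eqref{assumption_m_f} we have the integrability $\sup_U\int_0^{\rho_0}|\ms{m}|_{0,\varphi}|_\sigma\, d\sigma < \infty$, plus (for $n=3$ only) the analogous integrability for $\ol{\ms{f}}^0,\ol{\ms{f}}^1$ in $C^{M_0+1}$, and the boundedness $\norm{\ol{\ms{f}}^1}_{M_0+1,\varphi}\lesssim 1$ for $n\ge 4$ resp. $\norm{\ol{\ms{f}}^0}_{M_0+1,\varphi}\lesssim 1$ for $n=2$.

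\textbf{Main steps.} I would proceed in the order: (i) bound $\ms{tr}_{\ms{g}}\ms{m}$ in $C^{M_0}$; (ii) bound $\ms{m}$ in $C^{M_0}$; (iii) bound the Maxwell fields. For (i), equation \eqref{transport_tr_m} has the form $\rho\Lie_\rho(\ms{tr}_{\ms{g}}\ms{m}) - (2n-1)(\ms{tr}_{\ms{g}}\ms{m}) = \ms{G}$ with $c = 2n-1 > 0$; the right-hand side $\ms{G}$ is $2\rho\ms{Rs}$ plus $\rho\cdot\mc{S}(\ms{g};\ms{m}^2)$ plus the stress-energy terms $\frac{2}{n-1}\rho\ms{tr}_{\ms{g}}\ms{T}^0 + \frac{2n}{n-1}\rho\ms{T}^2$. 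Using the explicit forms of $\ms{T}^0,\ms{T}^2$ from Proposition \ref{prop_transport_m}, the Maxwell contributions carry positive powers of $\rho$ ($\rho^3$ or $\rho^{2\alpha_n}$ with $\alpha_n\ge 1$ for $n\ge 3$) and so are controlled by $\norm{\ol{\ms{f}}^1}_{M_0+1,\varphi}$ resp. the weak/integrable bounds on $\ol{\ms{f}}^0$; the $\ms{Rs}$ term is bounded by Proposition \ref{prop_bounds_metric}; the $\ms{m}^2$ term is quadratic. Applying \eqref{bound_integration} with $\tilde\rho = \rho_0$, the term $\rho^c\int_\rho^{\rho_0}\sigma^{-c-1}|\ms{G}|_{M_0,\varphi}|_\sigma\, d\sigma$ needs $|\ms{G}|_{M_0,\varphi}\lesssim\sigma\cdot(\text{bounded})$ so that $\sigma^{-c-1}|\ms{G}| \lesssim \sigma^{-c}$ is integrable near $0$ (since $c = 2n-1\ge 5$). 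The subtlety is that $\ms{G}$ contains $\ms{m}$ itself quadratically, so this is really a Grönwall-type argument: one sets up the estimate for $\Psi(\rho):=\sup_{(0,\rho]}|\ms{tr}_{\ms{g}}\ms{m}|_{M_0,\varphi}$, uses $|\ms{G}|\lesssim\sigma(1+\Psi(\sigma)^2)$ or — better — $|\ms{G}|\lesssim\sigma(1 + |\ms{m}|_{0,\varphi}|\ms{tr}_{\ms{g}}\ms{m}|_{M_0,\varphi} + \ldots)$, and closes using the $L^1_\sigma$ integrability of $|\ms{m}|_{0,\varphi}$ (this is exactly why \eqref{assumption_m_f} is stated as an integral bound and not a sup bound — it gives a Grönwall factor $\exp(\int_0^{\rho_0}|\ms{m}|_{0,\varphi}\,d\sigma) < \infty$). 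For (ii), equation \eqref{transport_m} has $c = n-1 > 0$ and an entirely analogous structure, except the right-hand side also contains the term $\ms{tr}_{\ms{g}}\ms{m}\cdot\ms{g}_{ab}$ which is merely bounded (not $O(\rho)$); however $\ms{tr}_{\ms{g}}\ms{m}$ is now known to be bounded from step (i), so $\sigma^{-c-1}|\ms{G}|\lesssim\sigma^{-c-1}$ — wait, this would not be integrable. I need to be more careful: the correct reading (cf. \cite{shao:aads_fg}) is that $\ms{tr}_{\ms{g}}\ms{m}\to 0$, or rather that one first integrates \eqref{transport_m} to get the $C^0$ limit $\ms{m}\to 0$ and then the $\ms{tr}_{\ms{g}}\ms{m}$ term is $o(1)$; but for a mere \emph{boundedness} statement, the term $\ms{tr}_{\ms{g}}\ms{m}\cdot\ms{g}$ contributes $\rho^{n-1}\int_\rho^{\rho_0}\sigma^{-n}|\ms{tr}_{\ms{g}}\ms{m}|\,d\sigma$ which, using only $|\ms{tr}_{\ms{g}}\ms{m}|\lesssim 1$, gives $\rho^{n-1}\cdot\rho^{-(n-1)}\sim 1$ — so it is in fact bounded, with no blowup. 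So step (ii) closes by the same Grönwall argument with the source bound $|\ms{G}|_{M_0,\varphi}\lesssim (1 + |\ms{m}|_{0,\varphi}|\ms{m}|_{M_0,\varphi})\cdot(\text{something integrable in }\sigma^{-n}\text{ against }\rho^{n-1})$. For (iii): for $n\ge 4$, $\ol{\ms{f}}^1$ is bounded in $C^{M_0+1}$ by hypothesis (so certainly in $C^{M_0}$, giving the $n=3$-type conclusion trivially where relevant), and $\ol{\ms{f}}^0$ satisfies \eqref{transport_f_0} with $c = (n-4)_+$; for $n > 4$ this is $c = n - 4 > 0$ and one integrates, the source $\rho\cdot\ms{m}\cdot\ol{\ms{f}}^0 + \mc{S}(\ms{m})\ol{\ms{f}}^0 + \ms{D}\cdot\ol{\ms{f}}^1$ being controlled by the just-established bounds on $\ms{m}$ and the hypothesis on $\ol{\ms{f}}^1$ — again a Grönwall in $\Psi(\rho):=\sup|\ol{\ms{f}}^0|_{M_0,\varphi}$; for $n = 4$ one has $c = 0$ and the transport equation is $\rho\Lie_\rho\ol{\ms{f}}^0 = \ms{G}$ with $\ms{G}$ bounded, which integrates to $|\ol{\ms{f}}^0|_{M_0,\varphi}\lesssim 1 + |\log\rho|$, whence $\norm{\rho^\delta\ol{\ms{f}}^0}_{M_0,\varphi}\lesssim 1$ for any $\delta>0$ — this is precisely the $n=4$ line of the lemma; for $n = 3$, $\ol{\ms{f}}^0,\ol{\ms{f}}^1$ are assumed bounded (in the weak/integrable sense), and one upgrades the integrability to uniform boundedness via the transport equations \eqref{transport_f_0}, \eqref{transport_f_1} exactly as in the $\ms{m}$ case, the constant for $\ol{\ms{f}}^0$ being $(n-4)_+ = 0$ at $n = 3$ so again a logarithmically-safe integration against the integrable source.

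\textbf{The main obstacle.} The chief difficulty is the self-referential (quadratic) structure of the transport equations: each right-hand side contains the very quantity being bounded, multiplied by $\ms{m}$ or $\ms{g}^{-1}$-contractions, so no single application of \eqref{bound_integration} suffices — one must run a Grönwall / continuity argument, and the reason it closes is the \emph{integral} bound $\int_0^{\rho_0}|\ms{m}|_{0,\varphi}\,d\sigma < \infty$ in \eqref{assumption_m_f} (for $n = 3$ also the integral bound on the Maxwell fields), which converts the quadratic coupling into an exponentially-bounded but finite amplification. A secondary bookkeeping issue is the parity/dimension case split for the Maxwell fields (the cases $n > 4$, $n = 4$, $n = 3$ genuinely differ because $c = (n-4)_+$ changes from positive to zero, and because for $n = 4$ the field $\ol{\ms{f}}^0$ only satisfies a rescaled bound), which must be handled with the appropriate clause of Proposition \ref{prop_integration} in each case — item \ref{local_estimate} for the positive-$c$ cases and the elementary $\rho\Lie_\rho = \partial_{\log\rho}$ integration (yielding the logarithmic loss absorbed by $\rho^\delta$) for $n = 4$. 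Modulo these, the proof is a routine but careful iteration of the local estimate; I would present it by (a) fixing the chart and collecting the a priori bounds, (b) treating $\ms{tr}_{\ms{g}}\ms{m}$ then $\ms{m}$ by Grönwall, (c) treating the Maxwell fields case by case, and (d) noting that all implied constants depend only on the constants in Definition \ref{def_regular_M} and on $n, M_0$.
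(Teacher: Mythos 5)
There is a genuine gap, and it lies in the order of your steps. For $n\geq 4$ the hypotheses of Definition \ref{def_regular_M} give \emph{no} a priori control on $\ol{\ms{f}}^0$: \eqref{assumption_metric} bounds only $\ol{\ms{f}}^1$ in $C^{M_0+1}$ (the sup bound on $\ol{\ms{f}}^0$ is for $n=2$ only), and the integrability \eqref{assumption_m_f} for the Maxwell fields is for $n=3$ only. Yet your steps (i)--(ii) control the stress--energy terms $\rho^4(\ol{\ms{f}}^0)^2$ in \eqref{transport_m}, \eqref{transport_tr_m} by appealing to ``the weak/integrable bounds on $\ol{\ms{f}}^0$'' --- that input does not exist, so the Grönwall for $\ms{tr}_{\ms{g}}\ms{m}$ and $\ms{m}$ cannot even be started; and your step (iii), which would supply the missing bound on $\ol{\ms{f}}^0$, is placed afterwards and is said to use ``the just-established bounds on $\ms{m}$'', which as written is circular. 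The paper's proof avoids this by bounding the Maxwell field \emph{first}: it introduces $\tilde{\ms{f}}^0=\ol{\ms{f}}^0$ ($n>4$) or $\rho^\delta\ol{\ms{f}}^0$ ($n=4$), whose transport equation \eqref{transport_delta} has positive constant $n-4$ resp.\ $\delta$ and source $-\ms{D}\cdot\ol{\ms{f}}^1$ (bounded by hypothesis) plus $\rho\,\mc{S}(\ms{g};\ms{m},\tilde{\ms{f}}^0)$, and closes a Grönwall using only the $C^0$ integrability of $\ms{m}$ from \eqref{assumption_m_f}; only then are the stress--energy terms in the metric equations controlled. The repair of your argument is essentially to reverse the order (your $n=4$ log-loss integration absorbed by $\rho^\delta$ is an acceptable variant of the paper's rescaling, but it too must precede the metric bounds).

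A secondary problem is your strict separation of (i) and (ii). The source of \eqref{transport_tr_m} contains $\rho\,\mc{S}(\ms{g};\ms{m}^2)$, i.e.\ the full tensor $\ms{m}$, so at order $M_0$ the bound you propose, $|\ms{G}|\lesssim\sigma(1+|\ms{m}|_{0,\varphi}|\ms{tr}_{\ms{g}}\ms{m}|_{M_0,\varphi})$, is not available --- the correct estimate involves $|\ms{m}|_{M_0,\varphi}$, so the trace cannot be closed before $\ms{m}$. The paper instead derives the two integral inequalities simultaneously, substitutes the trace inequality into the $\ms{m}$ inequality via Fubini, and runs a single Grönwall in $|\ms{m}|_{0,\varphi}$; likewise, in the induction on the vertical order $k$ the estimates for $\ms{m}$ and $\tilde{\ms{f}}^0$ (for $n\geq4$) must be summed and closed jointly, since each source contains the other's top-order norm, and for $n=3$ the order-$k$ bound on $\ol{\ms{f}}^1$ is obtained at the end by integrating \eqref{transport_f_1} against $|\ol{\ms{f}}^0|_{k+1,\varphi}$. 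Your remaining ingredients (local estimate \eqref{bound_integration}, Grönwall keyed to the integral bound on $|\ms{m}|_{0,\varphi}$, the $C^0$ boundedness of the Maxwell fields for $n=3$ from their assumed $C^0$ convergence) match the paper, but the dependency structure above must be fixed for the proof to go through.
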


\begin{proof}

Let us first define the following vertical tensor field:
\begin{equation*}
    \tilde{\ms{f}}^0 :=
    \left\{\begin{aligned}
        &\ol{\ms{f}}^0\text{,}\qquad &&n\neq 4\text{,}\\
        &\rho^\delta\cdot \ol{\ms{f}}^0\text{,}\qquad &&n=4\text{,}
    \end{aligned}\right.
\end{equation*}
with $0<\delta<1$ fixed.
Observe that such a field satisfies the following transport equation: 
\begin{equation}
    \label{transport_delta}\rho\Lie_\rho \tilde{{\ms{f}}}^0 - c_0 \tilde{\ms{f}}^0 = \rho\cdot \mc{S}\paren{\ms{g}; \ms{m}, \tilde{\ms{f}}^0} - \ms{D}\cdot \ol{\ms{f}}^1 \cdot \left\{\begin{aligned}
        1\text{,}\qquad & n>4\text{,}\\
        \rho^{\delta} \text{,}\qquad & n=4\text{,}
    \end{aligned}\right.\text{,}\qquad c_0 :=\left\{ \begin{aligned}
        &n-4\text{,}\qquad &&n>4\text{,}\\
        &\delta\text{,}\qquad &&n=4\text{,}
    \end{aligned}\right.
\end{equation}
Since $c_0>0$ for $n\geq 4$, one can use \ref{local_estimate} of Proposition \ref{prop_integration} to obtain: 
\begin{align*}
    \begin{aligned}\abs{\tilde{\ms{f}}^0}_{0, \varphi} &\lesssim \frac{\rho^{c_0}}{\rho_0^{c_0}}\abs{\tilde{\ms{f}}^0}_{0, \varphi}\vert_{\rho_0} + \rho^{c_0}\int^\rho_{\rho_0} \sigma^{-c_0-1}\left[\sigma \abs{\tilde{\ms{f}}^0}_{0, \varphi}\abs{\ms{m}}_{0, \varphi} + \abs{\ol{\ms{f}}^1}_{1, \varphi}\right]\vert_\sigma d\sigma \\
    &\lesssim  1+\int^\rho_{\rho_0} \abs{\tilde{\ms{f}}^0}_{0, \varphi}\abs{\ms{m}}_{0, \varphi}\vert_\sigma d\sigma\text{,}
    \end{aligned}
\end{align*}
where we used the fact that $U$ is compact and \eqref{assumption_metric}. Finally, using Grönwall's inequality and \eqref{assumption_m_f}, one obtains indeed: 
\begin{equation}\label{bound0_f_proof}
    \abs{\tilde{\ms{f}}^0}_{0, \varphi}\lesssim \exp\left\{\int_0^{\rho_0 }\abs{\ms{m}}_{0, \varphi}\vert_\sigma d\sigma \right\}\Rightarrow\left\{\begin{aligned}
        &\norm{\ol{\ms{f}}^0}_{0, \varphi} \lesssim 1\text{,}\qquad &&n> 4\text{,}\\
        &\norm{\rho^\delta \cdot \ol{\ms{f}}^0}_{0, \varphi}\lesssim 1\text{,}\qquad &&n=4\text{.}
    \end{aligned}\right.
\end{equation}
Since $\ol{\ms{f}}^0$ and $\ol{\ms{f}}^1$ converge in $C^0$ for $n=3$, they must necessarily be bounded. This proves the uniform bounds for the Maxwell fields for $n\geq 3$. 

Observe that \eqref{transport_m} and \eqref{transport_tr_m} are also equations of the form \eqref{transport_proof}. Using therefore \ref{local_estimate} of Proposition \ref{prop_integration} yields, with $\tilde{\rho}:= \rho_0$: 
\begin{align*}
    &\begin{aligned}
        \abs{\ms{m}}_{0, \varphi} \lesssim \frac{\rho^{n-1}}{\rho_0^{n-1}}\abs{\ms{m}}_{0, \varphi}\vert_{\rho_0} + \rho^{n-1}\int^{\rho}_{\rho_0} \sigma^{-n} \left[\sigma \abs{\ms{Rc}}_{0, \varphi} + \abs{\ms{tr}_{\ms{g}}\ms{m}}_{0, \varphi} + \sigma \abs{\ms{m}}^2_{0, \varphi} + \sigma \abs{\ms{T}^0}_{0, \varphi} +  \sigma \abs{\ms{T}^2}_{0, \varphi}\right]\vert_\sigma d\sigma\text{,}
    \end{aligned}\\
    &\begin{aligned}
        \abs{\ms{tr}_{\ms{g}}\ms{m}}_{0, \varphi} \lesssim \frac{\rho^{2n-1}}{\rho_0^{2n-1}}\abs{\ms{m}}_{0, \varphi}\vert_{\rho_0} + \rho^{2n-1}\int^{\rho}_{\rho_0} \sigma^{-(2n-1)} \left[ \abs{\ms{Rs}}_{0, \varphi} + \abs{\ms{m}}^2_{0, \varphi} + \abs{\ms{T}^0}_{0, \varphi} +  \abs{\ms{T}^2}_{0, \varphi}\right]\vert_\sigma d\sigma\text{,}
    \end{aligned}
\end{align*}
where we used \eqref{assumption_metric}. Note also that, by Proposition \ref{prop_bounds_metric}, the Riemann tensor and scalar are automatically bounded in $C^{M_0}$. Furthermore, using the fact that $U$ is compact, the above becomes: 
\begin{align}
    &\label{intermediate_m}\begin{aligned}
        \abs{\ms{m}}_{0, \varphi} \lesssim 1 + \rho^{n-1}\int^{\rho}_{\rho_0} \sigma^{-n} \left[ \abs{\ms{tr}_{\ms{g}}\ms{m}}_{0, \varphi} + \sigma \abs{\ms{m}}^2_{0, \varphi} + \sigma \abs{\ms{T}^0}_{0, \varphi} +  \sigma \abs{\ms{T}^2}_{0, \varphi}\right]\vert_\sigma d\sigma\text{,}
    \end{aligned}\\
    &\label{intermediate_tr_m}\begin{aligned}
        \abs{\ms{tr}_{\ms{g}}\ms{m}}_{0, \varphi} \lesssim 1 + \int^{\rho}_{\rho_0}\left[ \abs{\ms{m}}^2_{0, \varphi} + \abs{\ms{T}^0}_{0, \varphi} +  \abs{\ms{T}^2}_{0, \varphi}\right]\vert_\sigma d\sigma\text{.}
    \end{aligned}
\end{align}
Inserting now \eqref{intermediate_tr_m} into \eqref{intermediate_m} gives, using Fubini's theorem: 
\begin{align*}
    \abs{\ms{m}}_{0, \varphi} \lesssim&\,  1 + \rho^{n-1}\int^{\rho}_{\rho_0} \sigma^{-n}\left(\int^{\sigma}_{\rho_0} \left[\abs{\ms{m}}^2_{0, \varphi} + \abs{\ms{T}^0}_{0, \varphi} +  \abs{\ms{T}^2}_{0, \varphi}\right]\vert_{\tau} d\tau +  \sigma \abs{\ms{m}}^2_{0, \varphi} + \sigma \abs{\ms{T}^0}_{0, \varphi} +  \sigma \abs{\ms{T}^2}_{0, \varphi}\right)\vert_\sigma d\sigma\\
    \lesssim&\,  1 + \int^\rho_{\rho_0} \left[ \abs{\ms{m}}^2_{0, \varphi} +  \abs{\ms{T}^0}_{0, \varphi} +  \abs{\ms{T}^2}_{0, \varphi}\right]\vert_\sigma d\sigma\text{.}
\end{align*}
Observe now that the following bound holds for the stress-energy tensor: 
\begin{align*}
    \int^{\rho}_{\rho_0}  \left[\abs{\ms{T}^0}_{0, \varphi} +  \abs{\ms{T}^2}_{0, \varphi}\right]\vert_\sigma d\sigma&\lesssim \int^\rho_{\rho_0}\left\{\begin{aligned}
        &\sigma^2\left[\sigma^2\abs{\ol{\ms{f}}^0}^2_{0, \varphi} + \norm{\ol{\ms{f}}^1}^2_{0, \varphi}\right]\vert_\sigma d\sigma\text{,}\qquad &n\geq 4\text{,}\\
        &\sigma^2\left[\abs{\ol{\ms{f}}^0}_{0, \varphi}^2 + \abs{\ol{\ms{f}}^1}_{0, \varphi}^2\right]\vert_\sigma d\sigma \text{,}\qquad &n=3\text{,}
    \end{aligned}\right.\\
    &\lesssim 1\text{.}
\end{align*}
where we used \eqref{assumption_metric} and \eqref{bound0_f_proof}. Note also that since the Maxwell fields converge in $C^0$ for $n=3$, they are necessarily bounded. Finally, using Grönwall's inequality, one has: 
\begin{align}
    \label{intermediate_m_gronwall}\begin{aligned}\abs{\ms{m}}_{0, \varphi} \lesssim 1 + \int^\rho_{\rho_0} \abs{\ms{m}}_{0, \varphi}\vert_\sigma \cdot \abs{\ms{m}}_{0, \varphi}\vert_\sigma d\sigma \Longrightarrow \abs{\ms{m}}_{0, \varphi}&\lesssim \exp\left\{ \int_0^{\rho_0}\abs{\ms{m}}_{0, \varphi}\vert_\sigma d\sigma \right\} \\
    &\lesssim 1\text{,}
    \end{aligned}
\end{align}
where we used \eqref{assumption_m_f}.

We are ready now to show higher-order bounds along vertical directions, which we will obtain by induction. Namely, let us assume the following bounds to hold for any $q<k$, with some $k\leq M_0$ fixed:
\begin{equation*}
    \norm{\ms{tr}_{\ms{g}}\ms{m}}_{q, \varphi} +  \norm{\ms{m}}_{q, \varphi} + \norm{\tilde{\ms{f}}^0}_{q, \varphi} + \norm{\ol{\ms{f}}^1}_{q, \varphi}\lesssim 1\text{.}
\end{equation*}
Note first that for $n\geq 4$, one has automatically $\norm{\ol{\ms{f}}^1}_{M_0, \varphi}\lesssim 1$. Starting with the metric, one has, by differentiating $k$--times along vertical directions and using Proposition \ref{prop_integration}: 
\begin{align}
    &\begin{aligned}
        \label{intermediate_m_k}\abs{{\ms{m}}}_{k, \varphi}\lesssim 1 + \rho^{n-1}\int^\rho_{\rho_0}\sigma^{-n}\left[\norm{\rho\ms{Rc}}_{k, \varphi} + \abs{\ms{tr}_{\ms{g}}\ms{m}}_{k, \varphi} + \sigma\abs{\ms{m}}_{k, \varphi} + \sigma \abs{\ms{T}^0}_{k, \varphi} +\sigma \abs{\ms{T}^2}_{k, \varphi}\right]\vert_\sigma d\sigma 
    \end{aligned}\\
    &\begin{aligned}
        \abs{\ms{tr}_{\ms{g}}{\ms{m}}}_{k, \varphi}\lesssim 1 + \int^\rho_{\rho_0}\left[\norm{\ms{Rs}}_{k, \varphi}  + \abs{\ms{m}}_{k, \varphi} + \abs{\ms{T}^0}_{k, \varphi} + \abs{\ms{T}^2}_{k, \varphi}\right]\vert_\sigma d\sigma\text{,}
    \end{aligned}
\end{align}
where we use the fact that lower order vertical derivatives are uniformly bounded, easing the treatment of nonlinear terms. 

The stress-energy tensor terms appearing on the right-hand side satisfy the following bounds:
\begin{align*}
    \abs{\ms{T}^0}_{k, \varphi} + \abs{\ms{T}^2}_{k, \varphi} \lesssim 1 + \left\{\begin{aligned}
        &\rho^4\abs{\ol{\ms{f}}^0}_{k, \varphi} \text{,}\qquad &&n>4\\
        &\rho^{4-2\delta}\abs{\sigma^\delta \cdot \ol{\ms{f}}^0}_{k, \varphi}\text{,}\qquad &&n=4\text{,}\\
        &\rho^2\abs{\sigma\cdot \ol{\ms{f}}^0}_{k, \varphi}+\rho^2\abs{\sigma\cdot \ol{\ms{f}}^1}_{k, \varphi}\text{,}\qquad &&n=3\text{,}
    \end{aligned}\right.
\end{align*}
where we used \eqref{assumption_metric}, yielding for the trace: 
\begin{equation*}
    \abs{\ms{tr}_{\ms{g}}\ms{m}}_{k, \varphi}\lesssim 1+\left\{\begin{aligned}
        &\int^\rho_{\rho_0} \left[\abs{\ms{m}}_{k, \varphi} +\sigma^4\abs{\ol{\ms{f}}^0}_{k, \varphi}\right]\vert_\sigma d\sigma\text{,}\qquad &&n>4\text{,}\\
        &\int^\rho_{\rho_0} \left[\abs{\ms{m}}_{k, \varphi} +\sigma^{4-2\delta}\abs{\sigma^\delta \cdot \ol{\ms{f}}^0}_{k, \varphi}\right]\vert_\sigma d\sigma\text{,}\qquad &&n=4\text{,}\\
        &\int^{\rho}_{\rho_0}\left[\abs{\ms{m}}_{k, \varphi} +\sigma^2\abs{\ol{\ms{f}}^0}_{k, \varphi} +\sigma^2\abs{ \ol{\ms{f}}^1}_{k, \varphi}\right]\vert_\sigma d\sigma\text{,}\qquad &&n=3\text{,}
    \end{aligned}\right.
\end{equation*}
which can be inserted into \eqref{intermediate_m_k}, using Fubini's theorem: 
\begin{equation}
    \label{intermediate_m_k_f}\abs{\ms{m}}_{k, \varphi}\lesssim 1+\left\{\begin{aligned}
        &\int^\rho_{\rho_0} \left[\abs{\ms{m}}_{k, \varphi} +\sigma^4\abs{\ol{\ms{f}}^0}_{k, \varphi}\right]\vert_\sigma d\sigma\text{,}\qquad &&n>4\text{,}\\
        &\int^\rho_{\rho_0} \left[\abs{\ms{m}}_{k, \varphi} +\sigma^{4-2\delta}\abs{\sigma^\delta \cdot \ol{\ms{f}}^0}_{k, \varphi}\right]\vert_\sigma d\sigma\text{,}\qquad &&n=4\text{,}\\
        &\int^{\rho}_{\rho_0}\left[\abs{\ms{m}}_{k, \varphi} +\sigma^2\abs{\ol{\ms{f}}^0}_{k, \varphi} +\sigma^2\abs{ \ol{\ms{f}}^1}_{k, \varphi}\right]\vert_\sigma d\sigma\text{,}\qquad &&n=3\text{.}
    \end{aligned}\right.
\end{equation}
Note that in the case $n=3$, one can use \eqref{assumption_m_f} to bound the source terms generated by $\ol{\ms{f}}^0$ and $\ol{\ms{f}}^1$. Next, in order to close the estimates for $n\geq 4$, one has to bound $\tilde{\ms{f}}^0$: 
\begin{equation}
    \label{intermediate_k_f}\abs{\tilde{\ms{f}}^0}_{k, \varphi}\lesssim 1+\int^\rho_{\rho_0} \left[\abs{\ms{m}}_{k, \varphi} + \abs{\tilde{\ms{f}}^0}_{k, \varphi}\right]\vert_\sigma d\sigma\text{,}
\end{equation}
where we used \eqref{assumption_metric}. Combining now \eqref{intermediate_m_k_f} with \eqref{intermediate_k_f} leads to, for $n\geq 4$: 
\begin{equation*}
    \abs{\ms{m}}_{k, \varphi} +\abs{\tilde{\ms{f}}^0}_{k, \varphi} \lesssim 1+\int^\rho_{\rho_0} \left[ \abs{\ms{m}}_{k, \varphi} +\abs{\tilde{\ms{f}}^0}_{k, \varphi}\right]\vert_\sigma d\sigma\text{,}
\end{equation*}
for which one can use Grönwall's inequality. Finally, one can use \ref{local_estimate} of Proposition \ref{prop_integration} to get the following estimate, for $n=3$: 
\begin{equation*}
    \begin{aligned}
        \abs{\ol{\ms{f}}^1}_{k, \varphi}\lesssim 1 + \int_\rho^{\rho_0} \abs{\ol{\ms{f}}^0}_{k+1, \varphi}\vert_\sigma d\sigma \lesssim 1\text{,}
    \end{aligned}
\end{equation*}
proving the lemma.
\end{proof}

\begin{lemma}\label{lemma_first_limits}
    Let $n\geq 3$. The following limits are satisfied on any compact chart $(U, \varphi)$:
    \begin{enumerate}
        \item \textbf{For the metric:}
        \begin{gather*}
            \ms{g}\Rightarrow^{M_0} \mf{g}^{(0)}\text{,}\qquad \ms{g}^{-1}\Rightarrow^{M_0} \paren{\mf{g}^{(0)}}^{-1}\text{,}\\
            \ms{m}\Rightarrow^{M_0}0\text{,}\qquad \rho\Lie_\rho\ms{m}\Rightarrow^{M_0}0\text{,}
        \end{gather*}
        \item \textbf{For Maxwell fields:}
        \begin{gather*}
            \ol{\ms{f}}^1 \Rightarrow^{M_0-1} \mf{f}^{1,(0)}\text{,}\qquad \rho\Lie_\rho  \ol{\ms{f}}^1 \Rightarrow^{M_0-1} 0  \\\left\{\begin{aligned}
                &\ol{\ms{f}}^0 \Rightarrow^{M_0-2} \mi{D}^{(1,1)}(\mf{g}^{(0)}, \mf{f}^{1,(0)})\text{,}\; \qquad &&\rho\Lie_\rho \ol{\ms{f}}^0\Rightarrow^{M_0-2} 0\text{,}\qquad   &&&n>4\text{,}\\
                &\rho^\delta\cdot \ol{\ms{f}}^0\Rightarrow^{M_0} 0\text{,}\qquad &&\rho\Lie_\rho(\rho^\delta\cdot \ol{\ms{f}}^0)\Rightarrow^{M_0 } 0\text{,}\qquad   &&&n=4\text{,}\\
                &\ol{\ms{f}}^0\Rightarrow^{M_0-1}\mf{f}^{0,(0)}\text{,}\qquad &&\rho\Lie_\rho \ol{\ms{f}}^0\Rightarrow^{M_0-1} 0\text{,}\qquad  &&&n=3\text{,}
            \end{aligned}\right.
        \end{gather*}
        for any $0<\delta < 1$. 

        Furthermore, for $n>4$:
        \begin{equation}\label{limit_Lie_square_f0}
            \rho^2\Lie_\rho^2\ol{\ms{f}}^0\Rightarrow^{M_0-2}0\text{.}
        \end{equation}
    \end{enumerate}
\end{lemma}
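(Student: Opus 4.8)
The plan is to integrate the transport equations of Propositions \ref{prop_transport_m} and \ref{prop_transport_f} one field at a time, each time invoking the ODE analysis of Proposition \ref{prop_integration} while feeding in both the uniform bounds of Lemma \ref{lemma_first_bounds} and the limits obtained at earlier steps. I would treat the fields in the order $\ms{tr}_{\ms{g}}\ms{m}$, then $\ms{m}$, then $\ms{g}$ and $\ms{g}^{-1}$, then $\ol{\ms{f}}^1$, and finally $\ol{\ms{f}}^0$; the order of the last two matters because the boundary limit of $\ol{\ms{f}}^0$ for $n>4$ is built from $\mf{D}\cdot\mf{f}^{1,(0)}$. First, \eqref{transport_tr_m} has the form \eqref{transport_proof} with $c=2n-1>0$, and its right-hand side is a sum of $\rho\,\ms{Rs}$, $\rho(\ms{tr}_{\ms{g}}\ms{m})^2$ and $\rho\,\ms{T}^0,\rho\,\ms{T}^2$, each of which is $C^{M_0}$-bounded — by Proposition \ref{prop_bounds_metric} for the curvature, and by Lemma \ref{lemma_first_bounds} together with the explicit expressions in Proposition \ref{prop_transport_m} for the stress-energy terms, noting that for $n\geq 3$ every such term carries at least one positive power of $\rho$ once the Maxwell factors are written in terms of the bounded quantities $\ol{\ms{f}}^1$ and $\rho^{\delta}\ol{\ms{f}}^0$ — and carries an overall factor of $\rho$; hence the source $\Rightarrow^{M_0}0$ and item \ref{lemma_item_2} of Proposition \ref{prop_integration} gives $\ms{tr}_{\ms{g}}\ms{m}\Rightarrow^{M_0}0$ and $\rho\Lie_\rho\ms{tr}_{\ms{g}}\ms{m}\Rightarrow^{M_0}0$. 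The same item applied to \eqref{transport_m}, with $c=n-1>0$ and source $2\rho\,\ms{Rc}+\ms{tr}_{\ms{g}}\ms{m}\cdot\ms{g}+\rho(\ms{m})^2+\dots$ (the term $\ms{tr}_{\ms{g}}\ms{m}\cdot\ms{g}$ controlled by the previous step and the bound on $\ms{g}$), yields $\ms{m}\Rightarrow^{M_0}0$ and $\rho\Lie_\rho\ms{m}\Rightarrow^{M_0}0$. To upgrade this to $\ms{g}\Rightarrow^{M_0}\mf{g}^{(0)}$, I would integrate $\Lie_\rho\ms{g}=\ms{m}$ down to $\rho=0$: since $\ms{g}\rightarrow^{0}\mf{g}^{(0)}$ and $\int_0^{\rho_0}\abs{\ms{m}}_{M_0,\varphi}\,d\sigma<\infty$, one has $\ms{g}(\rho)=\mf{g}^{(0)}+\int_0^\rho\ms{m}\,d\sigma$ in $C^{M_0}$, so $\abs{\ms{g}-\mf{g}^{(0)}}_{M_0,\varphi}(\rho)\leq\int_0^\rho\abs{\ms{m}}_{M_0,\varphi}\,d\sigma$, and the rapid bound follows by Fubini since $\int_0^{\rho_0}\sigma^{-1}\big(\int_0^\sigma\abs{\ms{m}}_{M_0,\varphi}\big)\,d\sigma=\int_0^{\rho_0}\log(\rho_0/\tau)\,\abs{\ms{m}}_{M_0,\varphi}(\tau)\,d\tau\lesssim\int_0^{\rho_0}\tau^{-1}\abs{\ms{m}}_{M_0,\varphi}(\tau)\,d\tau<\infty$, the last integral being finite by $\ms{m}\Rightarrow^{M_0}0$. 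Then $\ms{g}^{-1}\Rightarrow^{M_0}(\mf{g}^{(0)})^{-1}=\mi{D}^0(\mf{g}^{(0)})$ is immediate from Proposition \ref{prop_bounds_metric}.

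For $\ol{\ms{f}}^1$ I would integrate \eqref{transport_f_1}. For $n\geq 4$ its right-hand side is $2\rho\,\ms{D}_{[a}\ol{\ms{f}}^0_{b]}$ (replaced by $2\rho^{1-\delta}\ms{D}_{[a}(\rho^\delta\ol{\ms{f}}^0)_{b]}$ when $n=4$), which is $C^{M_0-1}$-bounded by Lemma \ref{lemma_first_bounds} and carries a positive power of $\rho$, so $\ol{\ms{f}}^1(\rho)=\mf{f}^{1,(0)}+\int_0^\rho(\cdots)\,d\sigma$ converges at rate $\rho^{2-\delta}$, giving $\ol{\ms{f}}^1\Rightarrow^{M_0-1}\mf{f}^{1,(0)}$ and $\rho\Lie_\rho\ol{\ms{f}}^1\Rightarrow^{M_0-1}0$; for $n=3$ the right-hand side is $2\ms{D}_{[a}\ol{\ms{f}}^0_{b]}$, still $C^{M_0-1}$-bounded, so the same integration gives convergence at rate $\rho$ and the same conclusions. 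For $\ol{\ms{f}}^0$, equation \eqref{transport_f_0} has the form \eqref{transport_proof} with $c=(n-4)_+$. When $n>4$, $c=n-4>0$ and the source is $\rho\,\mc{S}\paren{\ms{g};\ms{m},\ol{\ms{f}}^0}-\ms{D}\cdot\ol{\ms{f}}^1$: the first term $\Rightarrow^{M_0-2}0$ by the bounds above, while $\ms{D}\cdot\ol{\ms{f}}^1\Rightarrow^{M_0-2}\mf{D}\cdot\mf{f}^{1,(0)}$ by the previous step and \eqref{limits_DA}; item \ref{lemma_item_2} of Proposition \ref{prop_integration}, applied with $q_0=1$ (the $q=1$ hypothesis being checked by differentiating the source in $\rho$ and using \eqref{commutation_L_D}), then gives $\ol{\ms{f}}^0\Rightarrow^{M_0-2}\tfrac{1}{n-4}\mf{D}\cdot\mf{f}^{1,(0)}=\mi{D}^{(1,1)}\paren{\mf{g}^{(0)},\mf{f}^{1,(0)}}$, together with $\rho\Lie_\rho\ol{\ms{f}}^0\Rightarrow^{M_0-2}0$ and $\rho^2\Lie_\rho^2\ol{\ms{f}}^0\Rightarrow^{M_0-2}0$, i.e. \eqref{limit_Lie_square_f0}. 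When $n=4$, where $(n-4)_+=0$, I would instead work with $\tilde{\ms{f}}^0=\rho^\delta\ol{\ms{f}}^0$, which obeys \eqref{transport_delta} with $c_0=\delta\in(0,1)$ and source $\rho\,\mc{S}\paren{\ms{g};\ms{m},\tilde{\ms{f}}^0}-\rho^\delta\ms{D}\cdot\ol{\ms{f}}^1$; this source converges rapidly to $0$ in $C^{M_0}$ (using $\norm{\ol{\ms{f}}^1}_{M_0+1,\varphi}\lesssim 1$), so item \ref{lemma_item_2} of Proposition \ref{prop_integration} yields $\rho^\delta\ol{\ms{f}}^0\Rightarrow^{M_0}0$ and $\rho\Lie_\rho(\rho^\delta\ol{\ms{f}}^0)\Rightarrow^{M_0}0$.

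For $n=3$ the coefficient $(n-4)_+$ again vanishes but no further regularisation is available; here the key observation is that the source $\rho\,\mc{S}\paren{\ms{g};\ms{m},\ol{\ms{f}}^0}-\rho\,\ms{D}\cdot\ol{\ms{f}}^1$ of \eqref{transport_f_0} is in fact uniformly $C^{M_0-1}$-bounded — since in dimension $n=3$ Lemma \ref{lemma_first_bounds} bounds $\ol{\ms{f}}^1$, hence $\ms{D}\cdot\ol{\ms{f}}^1$, in $C^{M_0-1}$ — so that $\Lie_\rho\ol{\ms{f}}^0$ is uniformly $C^{M_0-1}$-bounded. Integrating from $\rho=0$ and matching the $C^0$-limit $\mf{f}^{0,(0)}$ then gives $\ol{\ms{f}}^0\rightarrow^{M_0-1}\mf{f}^{0,(0)}$ with $\abs{\ol{\ms{f}}^0-\mf{f}^{0,(0)}}_{M_0-1,\varphi}\lesssim\rho$, whence $\ol{\ms{f}}^0\Rightarrow^{M_0-1}\mf{f}^{0,(0)}$, while $\rho\Lie_\rho\ol{\ms{f}}^0\Rightarrow^{M_0-1}0$ follows from the rapid vanishing of the source, which uses the $n=3$ integrability hypothesis \eqref{assumption_m_f}. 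The principal obstacle is exactly this dimensional degeneration of \eqref{transport_f_0}: the constant $(n-4)_+$ drops to $0$ at $n=3,4$, so the convergence item \ref{lemma_item_2} of Proposition \ref{prop_integration}, which requires a strictly positive constant, does not apply directly to $\ol{\ms{f}}^0$ there and must be circumvented — through the $\rho^\delta$-regularisation in dimension $n=4$, and through the dimension-specific uniform bound on the source in dimension $n=3$. A secondary, essentially bookkeeping, point is that Proposition \ref{prop_integration} produces rapid convergence only for the differentiated quantities $\ms{m},\ol{\ms{f}}^0,\ol{\ms{f}}^1$ coming straight out of the transport equations, so the rapid convergence of $\ms{g}$ itself — and the sharp rate for $\ol{\ms{f}}^1$ — has to be recovered by a separate integration together with the elementary $\log$-weight estimate above.
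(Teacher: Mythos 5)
Your proposal is correct and takes essentially the same route as the paper: integrate the transport equations of Propositions \ref{prop_transport_m} and \ref{prop_transport_f} via Proposition \ref{prop_integration}, feeding in the uniform bounds of Lemma \ref{lemma_first_bounds}, with the $\rho^\delta$-regularisation \eqref{transport_delta} at $n=4$, direct integration of the uniformly bounded source at $n=3$, and the $\rho\Lie_\rho$-differentiated source (handled through \eqref{commutation_L_D}) for \eqref{limit_Lie_square_f0}. The only deviations — deriving $\ms{m}\Rightarrow^{M_0}0$ before the rapid convergence of $\ms{g}$ (the paper gets the latter directly from $\norm{\ms{m}}_{M_0,\varphi}\lesssim 1$), and invoking \eqref{assumption_m_f} for the $n=3$ anomalous limit where the explicit factor of $\rho$ and the uniform bounds already suffice — are cosmetic.
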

\begin{proof}
    Starting with the metric, one can simply estimate: 
    \begin{align}
        &\label{convergence_m}\sup\limits_{\lbrace \rho \rbrace \times U}\abs{\ms{g}- \mf{g}^{(0)}}_{M_0, \varphi}\lesssim \int_0^\rho \norm{\ms{m}}_{M_0, \varphi}d\sigma  \rightarrow 0\text{,}\\
        &\sup\limits_U \int_0^{\rho_0}\sigma^{-1}\cdot \abs{\ms{g}- \mf{g}^{(0)}}_{M_0, \varphi}\vert_\sigma d\sigma \lesssim \sup\limits_U \int_0^{\rho_0} \rho^{-1}\int_{0}^\rho \norm{\ms{m}}_{M_0, \varphi}d\sigma d\rho <\infty\text{.} 
    \end{align}
    This immediately implies, from Proposition \ref{prop_bounds_metric}, the following limits: 
    \begin{gather*}
        \ms{g}^{-1}\Rightarrow^{M_0}\paren{\mf{g}^{(0)}}^{-1}\text{,}\qquad \ms{R}\Rightarrow^{M_0-2}\mi{D}^{2}\paren{\mf{g}^{(0)}}\text{,}
    \end{gather*}
    and similarly for the Ricci tensor and scalar. 
    From Equations \eqref{transport_tr_m} and \eqref{transport_m}, one has, using Proposition \ref{prop_integration}: 
    \begin{equation*}
            \ms{tr}_{\ms{g}}\ms{m}\Rightarrow^{M_0} 0\text{,}\qquad \rho\Lie_\rho (\ms{tr}_{\ms{g}}\ms{m})\Rightarrow^{M_0} 0\text{,} \qquad \ms{m}\Rightarrow^{M_0} 0\text{,}\qquad \rho\Lie_\rho\ms{m}\Rightarrow^{M_0}0\text{.}
    \end{equation*}
    One can perform similarly for the Maxwell fields. Namely, starting with $\ol{\ms{f}}^1$, with $n\geq 3$: 
    \begin{align}
        &\begin{aligned}
            \label{convergence_f_1}\sup\limits_{\lbrace \rho \rbrace \times U} \abs{\ol{\ms{f}}^1-\mf{f}^{1,(0)}}_{M_0-1, \varphi}&\lesssim  \sup\limits_{U} \int_0^\rho \abs{\Lie_\rho\ol{\ms{f}}^1}_{M_0-1, \varphi} \vert_\sigma d\sigma \\
            &\lesssim  \sup\limits_{U}\int_0^\rho \abs{\ms{D}\ol{\ms{f}}^0}_{M_0-1, \varphi}\vert_\sigma d\sigma \\
            &\lesssim \int_0^\rho \sigma^{-\delta}\norm{\rho^\delta \cdot \ol{\ms{f}}^0}_{M_0, \varphi} d\sigma \rightarrow 0\text{,}
        \end{aligned}\\
        &\label{strong_convergence_f_1}\sup\limits_U \int_0^{\rho_0} \sigma^{-1}\cdot \abs{\ol{\ms{f}}^1-\mf{f}^{1,(0)}}_{M_0-1, \varphi}\vert_\sigma d\sigma \lesssim  \sup\limits_U \int_0^{\rho_0} \sigma^{-1} \int_0^\sigma \tau^{-\delta}\norm{\rho^\delta \cdot \ol{\ms{f}}^0}_{M_0, \varphi}\vert_\tau d\tau d\sigma<\infty\text{,} 
    \end{align}   
    with $0<\delta<1$ fixed, and where the anomalous limits vanish from \eqref{transport_f_1}:
    \begin{equation*}
        \sup\limits_{\lbrace \rho \rbrace \times U}\abs{\rho\Lie_\rho \ol{\ms{f}}^1}_{M_0-1, \varphi}\lesssim \rho^{1-\delta} \norm{\rho^\delta\cdot \ol{\ms{f}}^0}_{M_0, \varphi}=\mc{O}(\rho^{1-\delta})\rightarrow 0\text{.}
    \end{equation*}
    This implies for $\ol{\ms{f}}^0$, for $n> 4$, from \eqref{transport_f_0}: 
    \begin{equation*}
        {\ol{\ms{f}}^0} \Rightarrow^{M_0-2} \mi{D}^{(1, 1)}\paren{\mf{g}^{(0)}, \mf{f}^{1, (0)}}\text{,}\qquad \rho\Lie_\rho \ol{\ms{f}}^{0} \Rightarrow^{M_0-2} 0\text{.}
    \end{equation*}
    For $n=4$, since the right-hand side of Equation \eqref{transport_delta} has a vanishing limit in $C^{M_0}$ and by Proposition \ref{prop_integration}:
    \begin{equation*}
        \rho^\delta\cdot \ol{\ms{f}}^0\Rightarrow^{M_0} 0\text{,}\qquad \rho\Lie_\rho (\rho^\delta\cdot \ol{\ms{f}}^0) \Rightarrow^{M_0} 0\text{.}
    \end{equation*}
    For $n=3$, one can perform similarly as in \eqref{convergence_m}--\eqref{strong_convergence_f_1} to obtain \footnote{Note that the limits for $\ms{f}^1$ can be obtained in an identical manner.}: 
    \begin{align*}
        &\sup\limits_{\lbrace \rho \rbrace \times U}\abs{\ol{\ms{f}}^0 - \mf{f}^{0,(0)}}_{M_0-1, \varphi} \lesssim \int_0^\rho \left[\norm{\ms{m}}_{M_0-1, \varphi} + \norm{\ol{\ms{f}}^0}_{M_0-1, \varphi} +\norm{\ol{\ms{f}}^1}_{M_0, \varphi}\right]\vert_\sigma d\sigma \rightarrow 0\text{,}\\
        &\sup\limits_U \int_0^{\rho_0} \sigma^{-1}\abs{\ol{\ms{f}}^0 - \mf{f}^{0,(0)}}_{M_0-1, \varphi} \vert_\sigma d\sigma \lesssim \int_0^{\rho_0} \sigma^{-1}\int_0^{\sigma} \left[\norm{\ms{m}}_{M_0-1, \varphi} + \norm{\ol{\ms{f}}^0}_{M_0-1, \varphi} +\norm{\ol{\ms{f}}^1}_{M_0, \varphi}\right]\vert_\tau d\tau <\infty\text{.}
    \end{align*}
    The anomalous limit automatically vanishes from Equation \eqref{transport_f_0}:
    \begin{align*}
        &\sup\limits_{\lbrace \rho \rbrace \times U}\abs{\rho\Lie_\rho \ol{\ms{f}}^0}_{M_0-1, \varphi}\lesssim \rho\cdot \paren{\norm{\ms{m}}_{M_0-1, \varphi} + \norm{\ol{\ms{f}}^0}_{M_0-1, \varphi} + \norm{\ol{\ms{f}}^1}_{M_0, \varphi}}=\mc{O}(\rho)\rightarrow  0\text{.}
    \end{align*}
    Finally, the limit \eqref{limit_Lie_square_f0} is a straightforward consequence of \ref{lemma_item_2} of Proposition \ref{prop_integration}, together with the limits obtained so far. Namely, the right-hand side of \eqref{transport_f_0}, on which one applies the operator $\rho\Lie_\rho$, satisfies: 
    \begin{equation*}
        \ms{D}\cdot \rho\Lie_\rho\ol{\ms{f}}^1 + \rho \mc{S}\paren{\ms{g}; \ms{m}, \rho\Lie_\rho\ol{\ms{f}}^0} +   \rho \mc{S}\paren{\ms{g}; \rho\Lie_\rho \ms{m}, \ol{\ms{f}}^0} + \rho\mc{S}\paren{\ms{g}; \ms{m}, \ol{\ms{f}}^0} + \rho\mc{S}\paren{\ms{g}; \ms{D}\ms{m}, \ol{\ms{f}}^1}\Rightarrow^{M_0-2} 0\text{.}
    \end{equation*}
\end{proof}

\proofpart{Higher-order limits}

Using the higher-order transport equations from Propositions \ref{higher_order_m} and \ref{higher_order_f}, we will be able to use an induction argument to control higher-order derivatives. Since the criticality for the Maxwell fields is reached earlier than for the metric, we will have to be careful for near-criticality derivatives. 

\begin{lemma}\label{lemma_higher_limits}
    The following higher-order limits hold for $n>4$, for any $0\leq k < n-4$: 
    \begin{enumerate}
        \item \textbf{The metric\footnote{We will take, as a convention, that $\Lie_\rho^k\ms{m}\Rightarrow^{M_0-k-1}\mi{D}^{k+1}(\mf{g}^{(0)}; k+1)$ for $k<3$.}:} 
        \begin{gather}
            \Lie_\rho^k \ms{m}\Rightarrow^{M_0-k-1} \mi{D}^{(k+1, k-3)}\paren{\mf{g}^{(0)}, \mf{f}^{1,(0)}; k+1}\text{,}\qquad \rho\Lie_\rho^{k+1}\ms{m}\Rightarrow^{M_0-k} 0\text{,}\label{higher_limit_m}\\
            \Lie_\rho^{n-4}\ms{m}\Rightarrow^{M_0-n+3}\mi{D}^{(n-3, n-7)} \paren{\mf{g}^{(0)}, \mf{f}^{1,(0)}; n+1}\text{,}\qquad \rho\Lie_\rho^{n-3}\ms{m}\Rightarrow^{M_0-n+3}0\label{higher_limit_n-4}\text{.}
        \end{gather}
        \item \textbf{The Maxwell fields: }
        \begin{align}
            &\begin{aligned}
            &\Lie_\rho^{k}\ol{\ms{f}}^0 \Rightarrow^{M_0-k-2}\mi{D}^{(k+1, k+1)}\paren{\mf{g}^{(0)}, \mf{f}^{1,(0)}; k}\text{,}\qquad 
            &&\Lie_\rho^{k+1}\ol{\ms{f}}^1 \Rightarrow^{M_0-k-2}
            \mi{D}^{(k+1, k+1)}\paren{\mf{g}^{(0)}, \mf{f}^{1,(0)}; k+1}\text{,} \\
            &\rho\Lie_\rho^{k+1}\ol{\ms{f}}^0\Rightarrow^{M_0-k-2}0\text{,}\qquad &&\rho\Lie_\rho^{k+2}\ol{\ms{f}}^1\Rightarrow^{M_0-k-2}0\text{,}\label{higher_limit_f}
            \end{aligned}\\
            &\rho^2\Lie_\rho^{k+2}\ol{\ms{f}}^0\Rightarrow^{M_0-k-2}0\text{.}\label{limit_k+2_f_0}
        \end{align}
        Furthermore, 
        \begin{equation}\label{limit_n-3_f_1}
            \Lie_\rho^{n-3}\ol{\ms{f}}^1 \Rightarrow^{M_0-n+1} \mi{D}^{(n-3, n-3)}\paren{\mf{g}^{(0)}, \mf{f}^{1,(0)}}\text{,}\qquad \rho\Lie_\rho^{n-2}\ol{\ms{f}}^1 \Rightarrow^{M_0-n+1} 0\text{.}
        \end{equation}
    \end{enumerate}
\end{lemma}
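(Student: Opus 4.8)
\textbf{Proof proposal for Lemma \ref{lemma_higher_limits}.}

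The plan is to prove all the stated limits by a single induction on $k$, running from $k=0$ up to $k=n-4$, exploiting the higher-order transport equations from Propositions \ref{higher_order_m}, \ref{higher_order_f} (together with \ref{higher_transport_tr_m}) and feeding each order into the ODE machinery of Proposition \ref{prop_integration}. The base case $k=0$ is essentially Lemma \ref{lemma_first_limits}: it already gives $\ms{m}\Rightarrow^{M_0}0$, $\rho\Lie_\rho\ms{m}\Rightarrow^{M_0}0$, $\ol{\ms{f}}^1\Rightarrow^{M_0-1}\mf{f}^{1,(0)}$, $\ol{\ms{f}}^0\Rightarrow^{M_0-2}\mi{D}^{(1,1)}(\mf{g}^{(0)},\mf{f}^{1,(0)})$, the corresponding $\rho\Lie_\rho$ limits, and \eqref{limit_Lie_square_f0}; the $\Lie_\rho^1\ol{\ms{f}}^1$ limit follows by reading off \eqref{transport_f_1}. (One should also record the uniform $C^{M_0}$-type bounds from Lemma \ref{lemma_first_bounds}, needed to control nonlinear error terms $\mc{S}(\ms{g};\dots)$ at every step.)

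For the inductive step, assume the limits hold for all orders $<k$ (with $k\le n-4$). First treat the Maxwell field $\ol{\ms{f}}^0$: equation \eqref{higher_transport_f_0} is of the form \eqref{transport_proof} with $\ms{A}=\Lie_\rho^k\ol{\ms{f}}^0$ and $c_{\ms{A},n}=n-4-k$, which is $>0$ precisely because $k<n-4$. The right-hand side $\ms{G}$ is $-\ms{D}\cdot\Lie_\rho^k\ol{\ms{f}}^1$ plus $\mc{S}$-terms carrying at least one factor of $\rho$ or lower-order $\Lie_\rho$-derivatives of $\ms{m}$, $\ol{\ms{f}}^0$, $\ol{\ms{f}}^1$; by the inductive hypothesis, the limits \eqref{limits_DA} for $\ms{D}$, and the bounds from Lemma \ref{lemma_first_bounds}, one checks that $\ms{G}\Rightarrow^{M_0-k-2}\mi{D}^{(k+1,k+1)}(\mf{g}^{(0)},\mf{f}^{1,(0)};\cdot)$ and that $\rho\Lie_\rho\ms{G}\Rightarrow 0$ in the same norm (this uses the parity bookkeeping: differentiating once in $\rho$ kills the leading limit, shifting the $\mi{D}(\cdot;l)$ flag by one). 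Applying items \ref{lemma_item_2} and the local-boundedness item \ref{lemma_item_1} of Proposition \ref{prop_integration} yields $\Lie_\rho^k\ol{\ms{f}}^0\Rightarrow^{M_0-k-2}\mi{D}^{(k+1,k+1)}(\mf{g}^{(0)},\mf{f}^{1,(0)};k)$, $\rho\Lie_\rho^{k+1}\ol{\ms{f}}^0\Rightarrow 0$, and $\rho^2\Lie_\rho^{k+2}\ol{\ms{f}}^0\Rightarrow 0$ (the last by applying $\rho\Lie_\rho$ once more to the transport equation, exactly as in the proof of \eqref{limit_Lie_square_f0}). Then \eqref{higher_transport_f_1} with one more $\rho$-derivative, or directly differentiating \eqref{higher_transport_f_1}, gives the limit for $\Lie_\rho^{k+1}\ol{\ms{f}}^1$ and for $\rho\Lie_\rho^{k+2}\ol{\ms{f}}^1$, since its right-hand side is $\rho\ms{D}_{[a}\Lie_\rho^k\ol{\ms{f}}^0_{b]}$ plus harmless $\mc{S}$-terms. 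For the metric, equations \eqref{higher_transport_m} and \eqref{higher_transport_tr_m} have constants $n-1-k$ and $2n-1-k$ respectively, both strictly positive for $k\le n-4<n-1$; their right-hand sides involve $\rho\Lie_\rho^k\ms{Rc}$, $\ms{tr}_{\ms{g}}\Lie_\rho^k\ms{m}\cdot\ms{g}$, nonlinear $\mc{S}$-products of lower-order $\ms{m}$'s, and — via \eqref{T_higher_order} — the stress-energy contributions, which for $n>4$ come with a prefactor $\rho^{4-q}\ge\rho^0$ and involve only $\ol{\ms{f}}^0,\ol{\ms{f}}^1$ and $\ms{m}$ to orders $\le k$. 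Using Proposition \ref{higher_order_riemann} for the limit of $\Lie_\rho^k\ms{R}$ in terms of $\mi{D}$-expressions, and again Proposition \ref{prop_integration} items \ref{lemma_item_1}–\ref{lemma_item_2}, one obtains \eqref{higher_limit_m}; taking $\ms{g}$-traces and using $[\ms{tr}_{\ms{g}},\Lie_\rho]=\mc{S}(\ms{g};(\ms{m})^2)$ handles the trace part.

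The endpoint $k=n-4$ needs slightly separate treatment and yields \eqref{higher_limit_n-4}, \eqref{limit_n-3_f_1}: for $\ol{\ms{f}}^0$ the constant $n-4-k$ has just dropped to $0$, so one does \emph{not} get a new limit for $\Lie_\rho^{n-4}\ol{\ms{f}}^0$ here (that is the anomalous/critical order handled later), but one still gets the limits for $\ms{m}$ at order $n-4$ and for $\ol{\ms{f}}^1$ at order $n-3$ because their transport constants $n-1-(n-4)=3>0$ and the $\ol{\ms{f}}^1$-equation \eqref{higher_transport_f_1} is an exact derivative with no resonance. The main obstacle I expect is purely bookkeeping: tracking which $\mc{S}$-terms on the right-hand sides genuinely carry a positive power of $\rho$ (so their limits vanish) versus which survive, and keeping the parity flag $\mi{D}(\cdot;l)$ correct through each application of $\rho\Lie_\rho$ — this is where the odd/even-power structure of the Fefferman–Graham expansion is encoded, and a sign or parity slip there would corrupt the claimed forms of the coefficients. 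The dimensional side-condition $k<n-4$ (ensuring all transport constants stay positive until the designated critical orders) is what makes the clean induction possible and must be checked at each use of Proposition \ref{prop_integration}.
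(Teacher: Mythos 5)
Your proposal follows essentially the same route as the paper's proof: an induction on $k$ that feeds the higher-order transport equations of Propositions \ref{higher_order_m} and \ref{higher_order_f} into the ODE analysis of Proposition \ref{prop_integration}, treating the Maxwell fields first and then the metric via \eqref{higher_transport_tr_m}, \eqref{higher_transport_m} and Proposition \ref{higher_order_riemann}, with the even/odd parity analysis of the source terms (your "bookkeeping", carried out explicitly in the paper as the $I_1$, $I_2$ case study) supplying the $\mi{D}(\cdot,\cdot;\cdot)$ flags. The only point to adjust is ordering within each step: the anomalous limit $\rho\Lie_\rho^{k+2}\ol{\ms{f}}^1\Rightarrow 0$ should be derived \emph{after} the metric limits \eqref{higher_limit_m} at order $k$, since the differentiated equation \eqref{higher_transport_f_1} brings in $\ms{D}\Lie_\rho^{k}\ms{m}$, which is exactly how the paper proceeds.
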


\begin{remark}
    Here, it is necessary to restrict $k<n-4$ since one will reach criticality for $\ol{\ms{f}}^0$ otherwise. 
\end{remark}

\begin{proof}
    We will prove this lemma by induction on $k$. First of all, in order to start the argument, note that \eqref{transport_f_1}, as well as Lemma \ref{lemma_first_bounds}, yield: 
    \begin{gather*}
        \abs{\Lie_\rho \ol{\ms{f}}^1}_{M_0-1, \varphi} \lesssim  \rho\cdot\abs{\mc{S}\paren{\ms{g}; \ms{D}\ol{\ms{f}}^0}}_{M_0-1, \varphi} = \mc{O}(\rho)\rightarrow 0\text{,}
    \end{gather*}
    as $\rho\searrow 0$, and similarly for $\rho\Lie_\rho^2\ms{f}^1$. 

    Fix $k<n-4$ and assume that \eqref{higher_limit_m}, \eqref{higher_limit_f} and \eqref{limit_k+2_f_0} hold for any $q<k$. Since $k<n-4$, the following transport equations are satisfied: 
    \begin{equation*}
        \rho\Lie_\rho\ms{A}_k - c_{n,k} \Lie_\rho^k\ms{A}_k = \ms{G}_k\text{,}\qquad \ms{A}_k\in \lbrace \Lie_\rho^k\ms{m}, \ms{tr}_{\ms{g}}\Lie_\rho^k\ms{m}, \Lie_\rho^k \ol{\ms{f}}^0\rbrace\text{,}
    \end{equation*}
    with $c_{n,k}>0$ and $\ms{G}_k$ a tensor of the same rank as $\ms{A}$. As such, one can use \ref{lemma_item_2} of Proposition \ref{prop_integration} to obtain the desired limit. In fact, one only needs to study the right-hand sides of \eqref{higher_transport_m}, \eqref{higher_transport_tr_m} and \eqref{higher_transport_f_0}. 
    Starting with $\ol{\ms{f}}^0$, the right-hand side will contain the following terms:
    \begin{align*}
        &-\ms{D}\cdot \Lie_\rho^{(k-1)+1}\ol{\ms{f}}^1 \Rightarrow^{M_0-k-2} \mi{D}^{(k, k)}\paren{\mf{g}^{(0)}, \mf{f}^{1,(0)}; k}\text{,}\\
        & \sum\limits_{j+j_0+j_1+\dots + j_\ell = k} \rho\cdot\mc{S}\paren{\ms{g}; \Lie_\rho^{j_0} \ms{m}, \Lie_\rho^{j_1 -1}\ms{m}, \dots, \Lie_\rho^{j_\ell - 1}\ms{m}, \Lie_\rho^j \ol{\ms{f}}^0}\Rightarrow^{M_0-k-2}0\text{,}
    \end{align*}
    where we used the induction assumption, in particular to treat the following top-order terms:
    \begin{gather*}
        \rho\Lie_\rho^k\ms{m}\Rightarrow^{M_0-k-1}0\text{,}\qquad \rho\Lie_\rho^k \ol{\ms{f}}^0\Rightarrow^{M_0-k} 0\text{.}
    \end{gather*}
    However, the following terms will give nontrivial contributions depending on the parity of $k$: 
    \begin{align*}
        &I_1 := \sum\limits_{j+j_0+j_1+\dots + j_\ell = k-1} \mc{S}\paren{\ms{g}; \Lie_\rho^{j_0} \ms{m}, \Lie_\rho^{j_1 -1}\ms{m}, \dots, \Lie_\rho^{j_\ell - 1}\ms{m}, \Lie_\rho^j \ol{\ms{f}}^0}\text{,}\\
        &I_2:=\sum\limits_{\substack{j+j_0+j_1+\dots +j_\ell = k\\j<k, j_p\geq 1}} \mc{S}\paren{\ms{g}; \ms{D}\Lie_\rho^{j_0-1}\ms{m}, \Lie_\rho^{j_1-1} \ms{m}, \dots, \Lie_\rho^{j_\ell -1}\ms{m}, \Lie_\rho^{j}\ol{\ms{f}}^1}\text{.}
    \end{align*}
    We will here give once a precise treatment of such terms, for these will be handled more directly in the remaining part of the proof.

    Let $k$ be odd. Starting with $I_1$, since $k-1$ will be even, there will be two different cases for the partition $\lbrace j, j_0, \dots, j\rbrace$ of $k-1$ even:
    \begin{enumerate}
        \item $j_0$ is even. In this case, since $\Lie_\rho^{j_0}\ms{m} \Rightarrow^{M_0-j_0-1} 0$ by assumption, this term will give a trivial contribution at the boundary.
        \item $j_0$ is odd. In this case, either $j$ is odd, or at least one of the $j_p$'s is, with $p=1,\dots, \ell$. The former implies $\Lie_\rho^j\ol{\ms{f}}^0\Rightarrow^{M_0-j-2}0$, while the latter gives $\Lie_\rho^{j_p-1}\ms{m} \Rightarrow^{M_0-j_p} 0$.
    \end{enumerate}
    A similar analysis holds for $I_2$, where\footnote{Note that this time, $j+j_0+j_1+\dots + j_\ell=k$ odd.}:
    \begin{enumerate}
        \item At least one of the $j_p's$, $p=0, \dots, \ell$ is even, giving a zero-boundary limit. 
        \item All the $j_p$'s are odd. In this case, $j$ is odd and $\Lie_\rho^j\ol{\ms{f}}^1\Rightarrow^{M_0-j-1}0$.
    \end{enumerate}
    As a consequence, for $k$ odd, one has: 
    \begin{equation*}
        I_1, I_2 \Rightarrow^{M_0-k-1}0\text{.}
    \end{equation*}
    On the other hand, for $k$ even, one will be able to find terms with a non-trivial boundary limit and will find, using the induction assumptions: 
    \begin{gather*}
        I_1, I_2 \Rightarrow^{M_0-k-2} \mi{D}^{(k, k)}\paren{\mf{g}^{(0)}, \mf{f}^{1,(0)}}\text{.}
    \end{gather*}
    Such an analysis leads us to the following, using \ref{lemma_item_2} of Proposition \ref{prop_integration}:
    \begin{equation*}
        \Lie_\rho^k \ol{\ms{f}}^0 \Rightarrow^{M_0-k-2} \mi{D}^{(k+1, k+1)}\paren{\mf{g}^{(0)}, \mf{f}^{1,(0)};k}\text{,}\qquad \rho\Lie_\rho^{k+1}\ol{\ms{f}}^0\Rightarrow^{M_0-k-2}0\text{.}
    \end{equation*}
    One can perform a similar analysis for Equation \eqref{higher_transport_f_1}, which contains terms of the form: 
    \begin{align*}
        \rho\cdot \ms{D}_{[a}\Lie_\rho^{(k-1)+1}\ol{\ms{f}}^0_{b]} \Rightarrow^{M_0-k-2} 0\text{,}\qquad \ms{D}_{[a}\Lie_\rho^{k-1}\ol{\ms{f}}^0_{b]} \Rightarrow^{M_0-k-2} \mi{D}^{(k+1, k+1)}\paren{\mf{g}^{(0)}, \mf{f}^{1,(0)}; k+1 }_{ab}\text{.}
    \end{align*}
    The nonlinearities can be analysed as above to give: 
    \begin{align*}
         &\sum\limits_{\substack{j + j_0 +j_1+ \dots + j_\ell = k\\j<k, j_p\geq 1}} \rho\cdot\mc{S}\paren{\ms{g}; \ms{D}\Lie_\rho^{j_0}\ms{m}, \Lie_\rho^{j_1-1}\ms{m}, \dots, \Lie_\rho^{j_\ell-1}\ms{m}, \Lie_\rho^j \ol{\ms{f}}^0}_{ab}\Rightarrow^{M_0-k-1} 0\text{,}\\
         &\sum\limits_{\substack{j + j_0 +j_1+ \dots + j_\ell = k-1\\j<k-1, j_p \geq 1}} \mc{S}\paren{\ms{g}; \ms{D}\Lie_\rho^{j_0-1}\ms{m}, \Lie_\rho^{j_1-1}\ms{m}, \dots, \Lie_\rho^{j_\ell-1}\ms{m}, \Lie_\rho^j \ol{\ms{f}}^0}_{ab}\Rightarrow^{M_0-k-1} \mi{D}^{(k+1, k+1)}\paren{\mf{g}^{(0)}, \mf{f}^{1,(0)}; k+1}_{ab}\text{,}
    \end{align*}
    giving the desired limit\footnote{Note that the anomalous limit is out-of-reach at the moment. In order to derive it, we will need the boundary limits of the metric.}:
    \begin{equation*}
        \Lie_\rho^{k+1}\ol{\ms{f}}^1 \Rightarrow^{M_0-k-3} \mi{D}^{(k+1, k+1)}\paren{\mf{g}^{(0)}, \mf{f}^{1,(0)};k+1}\text{.}
    \end{equation*}

    Next, one can note, using Proposition \ref{higher_order_riemann}: 
    \begin{align}
        &\label{limit_r_k}\Lie_\rho^{k-1} \ms{R} = \sum\limits_{\substack{ j_1 + \dots + j_\ell = k-1\\i_1 + \dots + i_\ell=2, j_p \geq 1}}\mc{S}\paren{\ms{g}; \ms{D}^{i_1}\Lie_\rho^{j_1-1}{\ms{m}}\text{,}\dots, \ms{D}^{i_\ell}\Lie_\rho^{j_\ell-1}{\ms{m}}}\Rightarrow^{M_0-k-1} \left\{\begin{aligned}
            &\mi{D}^{(k+1, k-3)}\paren{\mf{g}^{(0)}, \mf{f}^{1,(0)}; k+1}\text{, } &&k\geq 3\text{,}\\
            &\mi{D}^{k+1}\paren{\mf{g}^{(0)};k+1}\text{, } &&k<3\text{,}
        \end{aligned}\right.\\
        &\label{limit_r_k+1}\rho\Lie_\rho^{k}\ms{R} = \sum\limits_{\substack{ j_1 + \dots + j_\ell = k\\i_1 + \dots + i_\ell=2, j_p \geq 1}}\rho\cdot \mc{S}\paren{\ms{g}; \ms{D}^{i_1}\Lie_\rho^{j_1-1}{\ms{m}}\text{,}\dots, \ms{D}^{i_\ell}\Lie_\rho^{j_\ell-1}{\ms{m}}}\Rightarrow^{M_0-k-1} 0\text{,}
    \end{align}
    and similarly for the Ricci tensor and scalar. 

    Furthermore, note the following limits for the components of the stress-energy tensor: 
    \begin{align}
        \Lie^{k-1}_\rho \ms{T}^0, \Lie_\rho^{k-1} \ms{T}^2 \notag&=\sum\limits_{q=0}^4\sum\limits_{j+j'+j_0+\dots + j_\ell = k-1-q} \rho^{4-q}\cdot \mc{S}\paren{\ms{g}; \Lie_\rho^{j_0-1}\ms{m}, \dots, \Lie_\rho^{j_\ell-1} \ms{m}, \Lie_\rho^j \ol{\ms{f}}^0, \Lie_\rho^{j'} \ol{\ms{f}}^0}\\
        &\qquad+\sum\limits_{q=0}^2\sum\limits_{j+j'+j_0+\dots + j_\ell = k-1-q} \rho^{2-q}\cdot \mc{S}\paren{\ms{g}; \Lie_\rho^{j_0-1}\ms{m}, \dots, \Lie_\rho^{j_\ell-1} \ms{m}, \Lie_\rho^j \ol{\ms{f}}^1, \Lie_\rho^{j'} \ol{\ms{f}}^1}\notag\\
        &\Rightarrow^{M_0-k} \mi{D}^{(k-3, k-3)}\paren{\mf{g}^{(0)}, \mf{f}^{1,(0)};k+1}\text{,}\label{limit_T_k-1}
    \end{align}
    where we used, in particular: 
    \begin{equation*}
        \rho^4 \Lie_\rho^{k-1} \ol{\ms{f}}^0 = \rho^3 \cdot \rho\Lie_\rho^{(k-2)+1}\ol{\ms{f}}^0 \Rightarrow^{M_0-k} 0\text{.}
    \end{equation*}
    Similarly, 
    \begin{align}
        \rho\Lie^{k}_\rho \ms{T}^0, \rho\Lie_\rho^{k} \ms{T}^2 \notag &=\sum\limits_{q=0}^4\sum\limits_{j+j'+j_0+\dots + j_\ell = k-q} \rho^{5-q}\cdot \mc{S}\paren{\ms{g}; \Lie_\rho^{j_0-1}\ms{m}, \dots, \Lie_\rho^{j_\ell-1} \ms{m}, \Lie_\rho^j \ol{\ms{f}}^0, \Lie_\rho^{j'} \ol{\ms{f}}^0}\\
        &\notag \qquad+\sum\limits_{q=0}^2\sum\limits_{j+j'+j_0+\dots + j_\ell = k-q} \rho^{3-q}\cdot \mc{S}\paren{\ms{g}; \Lie_\rho^{j_0-1}\ms{m}, \dots, \Lie_\rho^{j_\ell-1} \ms{m}, \Lie_\rho^j \ol{\ms{f}}^1, \Lie_\rho^{j'} \ol{\ms{f}}^1}\\
        &\Rightarrow^{M_0-k} 0\text{,}\label{limit_T_k}
    \end{align}
    where we used: 
    \begin{equation*}
        \rho^5\Lie_\rho^k \ol{\ms{f}}^0 = \rho^3\cdot \rho^2\Lie_\rho^{(k-2)+2}\ol{\ms{f}}^0 \Rightarrow^{M_0-k} 0\text{.}
    \end{equation*}
    Equations \eqref{limit_r_k}, \eqref{limit_r_k+1}, \eqref{limit_T_k-1} and \eqref{limit_T_k} allows one to obtain the following limits, using \ref{lemma_item_2} of Proposition \ref{prop_integration} applied to Equation \eqref{higher_transport_tr_m}: 
    \begin{equation*}
        \ms{tr}_{\ms{g}}\Lie_\rho^{k}\ms{m}\Rightarrow^{M_0-k-1} \mi{D}^{(k+1, k-3)}\paren{\mf{g}^{(0)}, \mf{f}^{1, (0)};k+1}\text{,}\qquad  \rho\Lie_\rho \paren{\ms{tr}_{\ms{g}}\Lie_\rho^{k}\ms{m}}\Rightarrow^{M_0-k-1} 0\text{,}
    \end{equation*}
    implying, for \eqref{higher_transport_m}:
    \begin{equation*}
        \Lie_\rho^{k}\ms{m}\Rightarrow^{M_0-k-1} \mi{D}^{(k+1, k-3)}\paren{\mf{g}^{(0)}, \mf{f}^{1, (0)};k+1}\text{,}\qquad  \rho\Lie_\rho^{k+1}\ms{m}\Rightarrow^{M_0-k-1} 0\text{.}
    \end{equation*}
    From this, one can obtain the anomalous limit \eqref{higher_limit_f} for $\ol{\ms{f}}^1$, using \eqref{higher_transport_f_1}.

    The limit \eqref{limit_k+2_f_0} can easily be obtained by applying $\rho\Lie_\rho$ to the right-hand side of \eqref{higher_transport_f_0}. 

    Finally, note that the limit \eqref{higher_limit_n-4} can also be obtained from \eqref{higher_limit_m} and \eqref{higher_limit_f} following an identical reasoning as above. 
\end{proof}

\begin{lemma}\label{lemma_critical_f}
    Let $n\geq 4$. There exists $\mf{f}^{0,\dag}$, a tensor field on $\mc{I}$, such that the following limits hold: 
    \begin{align}\label{critical_f_0}
        \rho\Lie_\rho^{n-3}\ol{\ms{f}}^0 \Rightarrow^{M_0-n+2}(n-4)!\mf{f}^{0, \star}\text{,}\qquad \Lie_\rho^{n-4}\ol{\ms{f}}^0 - (n-4)!\mf{f}^{0, \star}\rightarrow^{M_0-n+2} (n-4)! \mf{f}^{0, \dag}\text{,}
    \end{align}
    where:
    \begin{equation*}
        \mf{f}^{0, \star} = \mi{D}^{(n-3, n-3)}\paren{\mf{g}^{(0)}, \mf{f}^{1,(0)}; n}\text{.}
    \end{equation*}
\end{lemma}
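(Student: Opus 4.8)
The plan is to recognise that the borderline (critical) index for $\ol{\ms{f}}^0$ in the transport hierarchy is exactly $k=n-4$, where the damping coefficient degenerates, and then to invoke the anomalous-limit branch of Proposition \ref{prop_integration}. First I would specialise the higher-order transport equation \eqref{higher_transport_f_0} to $k=n-4$: since the coefficient $(n-4-k)$ then vanishes, setting $\ms{A}:=\Lie_\rho^{n-4}\ol{\ms{f}}^0$ turns it into an equation of the form $\rho\Lie_\rho\ms{A}=\ms{G}$, that is \eqref{lemma_transport} with $c=0$, whose source $\ms{G}$ is the right-hand side of \eqref{higher_transport_f_0} at $k=n-4$ --- for $n=4$ this is just \eqref{transport_f_0}, while for $n\geq5$ it is $-\ms{D}\cdot\Lie_\rho^{n-4}\ol{\ms{f}}^1$ together with the three nonlinear sums there.

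The heart of the argument is then to compute the boundary behaviour of $\ms{G}$. The leading term is controlled by the limits already in hand: $\Lie_\rho^{n-4}\ol{\ms{f}}^1\Rightarrow^{M_0-n+3}\mi{D}^{(n-4,n-4)}\paren{\mf{g}^{(0)},\mf{f}^{1,(0)};n-4}$ by Lemma \ref{lemma_higher_limits} (or by Lemma \ref{lemma_first_limits} when $n=4$), so by \eqref{limits_DA} the contracted derivative $\ms{D}\cdot\Lie_\rho^{n-4}\ol{\ms{f}}^1$ converges rapidly in $C^{M_0-n+2}$ to a tensor field of the form $\mi{D}^{(n-3,n-3)}\paren{\mf{g}^{(0)},\mf{f}^{1,(0)};n}$, the parity label being unchanged since $n-4\equiv n\pmod 2$. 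For the remaining terms I would argue exactly as in the proof of Lemma \ref{lemma_higher_limits}: the summands carrying an explicit factor of $\rho$ are $\mc{O}(\rho)$ --- all fields being uniformly bounded by Lemmas \ref{lemma_first_bounds} and \ref{lemma_higher_limits}, up to the harmless weight $\rho^\delta$ on $\ol{\ms{f}}^0$ in the case $n=4$ --- hence converge rapidly to zero, while the two $\rho$-independent sums (over partitions of $n-5$ and of $n-4$) are dispatched by the same parity count: a summand survives $\rho\searrow0$ only if each $\Lie_\rho$-power hitting $\ms{m}$, $\ol{\ms{f}}^0$ or $\ol{\ms{f}}^1$ lands on a factor with a nonzero limit, which forces the total number of $\rho$-derivatives to be even, impossible when $n$ is odd, and for $n$ even produces a contribution again of $\mi{D}^{(n-3,n-3)}\paren{\mf{g}^{(0)},\mf{f}^{1,(0)}}$ type. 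Collecting, $\ms{G}\Rightarrow^{M_0-n+2}(n-4)!\,\mf{f}^{0,\star}$ with $\mf{f}^{0,\star}:=\mi{D}^{(n-3,n-3)}\paren{\mf{g}^{(0)},\mf{f}^{1,(0)};n}$, which in particular vanishes for $n$ odd.

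It remains to conclude. Since $c=0$, the identity $\rho\Lie_\rho^{n-3}\ol{\ms{f}}^0=\ms{G}$ is already the first limit in \eqref{critical_f_0}. For the second, I would apply part \ref{lemma_item_4} (the ``anomalous limit'' case) of Proposition \ref{prop_integration} with $q_0=0$, whose hypothesis is precisely the limit of $\ms{G}$ obtained above: it furnishes a tensor field on $\mc{I}$, which I denote $(n-4)!\,\mf{f}^{0,\dag}$, such that $\Lie_\rho^{n-4}\ol{\ms{f}}^0-(n-4)!\log\rho\cdot\mf{f}^{0,\star}\rightarrow^{M_0-n+2}(n-4)!\,\mf{f}^{0,\dag}$; note $M_0-n+2\geq0$ since $M_0\geq n+2$, so no vertical regularity is lost below order zero. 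The step I expect to be the main obstacle is the boundary analysis of $\ms{G}$ in the second paragraph: one must check carefully that the parity obstruction annihilates \emph{every} contribution to $\mf{f}^{0,\star}$ when $n$ is odd, and that each surviving term for $n$ even depends only on $\mf{g}^{(0)}$ and $\mf{f}^{1,(0)}$ to the orders claimed --- but this is the same combinatorics already carried out for Lemma \ref{lemma_higher_limits}, now run one step further.
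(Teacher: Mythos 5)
Your proposal is correct and follows essentially the same route as the paper: specialise the higher-order transport equation \eqref{higher_transport_f_0} at the critical index $k=n-4$ where the damping coefficient vanishes, compute the rapid boundary limit of the right-hand side (the $\ms{D}\cdot\Lie_\rho^{n-4}\ol{\ms{f}}^1$ term via Lemma \ref{lemma_higher_limits} and \eqref{limits_DA}, the remaining sums via the $\rho$-weights and the parity argument), and then invoke the anomalous-limit branch (item \ref{lemma_item_4}) of Proposition \ref{prop_integration} to produce $\mf{f}^{0,\dag}$. Your explicit handling of $n=4$ (the $\rho^\delta$ weight on $\ol{\ms{f}}^0$) and your inclusion of the $\log\rho$ subtraction in the second limit are consistent with Theorem \ref{theorem_main_fg} and with what the paper's proof actually establishes.
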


\begin{proof}
    Equation \eqref{higher_transport_f_0} reaches criticality when $k=n-4$, where it takes the following form:
    \begin{align*}
        \rho\Lie_\rho^{n-3}\ol{\ms{f}}^0 = -\ms{D}\cdot \Lie_\rho^{n-4} \ol{\ms{f}}^1 &+  \sum\limits_{j+j_0+j_1+\dots + j_\ell = n-4} \rho\cdot\mc{S}\paren{\ms{g}; \Lie_\rho^{j_0} \ms{m}, \Lie_\rho^{j_1 -1}\ms{m}, \dots, \Lie_\rho^{j_\ell - 1}\ms{m}, \Lie_\rho^j \ol{\ms{f}}^0}\\
        &+\sum\limits_{j+j_0+j_1+\dots + j_\ell = n-5} \mc{S}\paren{\ms{g}; \Lie_\rho^{j_0} \ms{m}, \Lie_\rho^{j_1 -1}\ms{m}, \dots, \Lie_\rho^{j_\ell - 1}\ms{m}, \Lie_\rho^j \ol{\ms{f}}^0}\\
        &+\sum\limits_{\substack{j+j_0+j_1+\dots +j_\ell = n-4\\j<n-4, j_p\geq 1}} \mc{S}\paren{\ms{g}; \ms{D}\Lie_\rho^{j_0-1}\ms{m}, \Lie_\rho^{j_1-1} \ms{m}, \dots, \Lie_\rho^{j_\ell -1}\ms{m}, \Lie_\rho^{j}\ol{\ms{f}}^1}\text{,}
    \end{align*}
    and where each individual term has the following boundary limit: 
    \begin{align*}
        &\ms{D}\cdot \Lie_\rho^{n-4} \ol{\ms{f}}^1\Rightarrow^{M_0-n+2}\mi{D}^{(n-3, n-3)}\paren{\mf{g}^{(0)}, \mf{f}^{1,(0)};n}\text{,}\\
        &\sum\limits_{j+j_0+j_1+\dots + j_\ell = n-4} \rho\cdot\mc{S}\paren{\ms{g}; \Lie_\rho^{j_0} \ms{m}, \Lie_\rho^{j_1 -1}\ms{m}, \dots, \Lie_\rho^{j_\ell - 1}\ms{m}, \Lie_\rho^j \ol{\ms{f}}^0}\Rightarrow^{M_0-n+2}0\text{,}\\
        &\sum\limits_{j+j_0+j_1+\dots + j_\ell = n-5} \mc{S}\paren{\ms{g}; \Lie_\rho^{j_0} \ms{m}, \Lie_\rho^{j_1 -1}\ms{m}, \dots, \Lie_\rho^{j_\ell - 1}\ms{m}, \Lie_\rho^j \ol{\ms{f}}^0} \Rightarrow^{M_0-n+2}\mi{D}^{(n-4,n-4)}\paren{\mf{g}^{(0)}, \mf{f}^{1,(0)};n}\text{,}\\
        &\sum\limits_{\substack{j+j_0+j_1+\dots +j_\ell = n-4\\j<n-4, j_p\geq 1}} \mc{S}\paren{\ms{g}; \ms{D}\Lie_\rho^{j_0-1}\ms{m}, \Lie_\rho^{j_1-1} \ms{m}, \dots, \Lie_\rho^{j_\ell -1}\ms{m}, \Lie_\rho^{j}\ol{\ms{f}}^1}\Rightarrow^{M_0-n+2} \mi{D}^{(n-4, n-4)}\paren{\mf{g}^{(0)}, \mf{f}^{1,(0)};n}\text{,}
        \end{align*}
        implying indeed:
        \begin{equation*}
            \rho\Lie_\rho^{n-3}\ol{\ms{f}}^0\Rightarrow^{M_0-n+2}(n-4)!\ol{\mf{f}}^{0,\star}=\mi{D}^{(n-3, n-3)}\paren{\mf{g}^{(0)}, \mf{f}^{1,(0)}; n}\text{.}
        \end{equation*}
        From \ref{lemma_item_4} of Proposition \ref{prop_integration}, there exists a tensor field $\mf{f}^{0, \dag}$ such that \eqref{critical_f_0} is satisfied.
\end{proof}

Since the critical limit for $\ms{m}$ is located at order $n-1$, we will first need to state the limits of $\Lie_\rho^{k}\ms{m}$ with $k<n-1$, as well as to state higher-order anomalous limits for the Maxwell tensors. 
\begin{lemma}\label{lemma_sub_m}
    Let $n\geq 3$. The following limits hold, for $0\leq k<n-1$:
    \begin{align}\label{subcritical_m}
        \Lie_\rho^k \ms{m} \Rightarrow^{M_0-k-1}\mi{D}^{(k+1, k-3)}\paren{\mf{g}^{(0)}, \mf{f}^{1,(0)}; k+1}\text{,}\qquad \rho\Lie_\rho^{k+1}\ms{m} \Rightarrow^{M_0-k-1} 0\text{.}
    \end{align}
    Furthermore, the Maxwell tensors satisfy the following higher-order anomalous limits\footnote{The regularity at which those limits are satisfied can in fact be improved; see \cite{thesis_Guisset}, Chapter 9.}:
    \begin{gather}
        \rho^{2} \Lie_{\rho}^{n-2}\ol{\ms{f}}^0\Rightarrow^{M_0-n+2} \mi{D}^{(n-3, n-3)}\paren{\mf{g}^{(0)}, \mf{f}^{1,(0)}; n}\text{,}\qquad \rho^{2}\Lie_\rho^{n-2}\ol{\ms{f}}^1\Rightarrow^{M_0-n+2}0\text{,}\label{special_limit_f}\\
        \rho^{3} \Lie_{\rho}^{n-1}\ol{\ms{f}}^0\Rightarrow^{M_0-n+1} \mi{D}^{(n-3, n-3)}\paren{\mf{g}^{(0)}, \mf{f}^{1,(0)}; n}\text{,}\qquad \rho^{3}\Lie_\rho^{n-1}\ol{\ms{f}}^1\Rightarrow^{M_0-n+1}0\text{,}\\
        \rho^4 \Lie_\rho^{n}\ol{\ms{f}}^0 \Rightarrow^{M_0-n} \mi{D}^{(n-3, n-3)}\paren{\mf{g}^{(0)}, \mf{f}^{1,(0)}; n}\text{,}\qquad \rho^4\Lie_\rho^{n}\ol{\ms{f}}^1\Rightarrow^{M_0-n} 0\text{.}\label{limits_rho4_rhon}
    \end{gather}
\end{lemma}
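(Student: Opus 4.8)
The plan is to extend the inductive scheme already used in Lemmas \ref{lemma_higher_limits} and \ref{lemma_critical_f}, with everything funnelled through Proposition \ref{prop_integration}. The subcritical metric limits \eqref{subcritical_m} are already contained in Lemma \ref{lemma_higher_limits} (see \eqref{higher_limit_m}, \eqref{higher_limit_n-4}) in the range $0\le k\le n-4$; the genuinely new content is the range $n-4<k\le n-2$, where $\ol{\ms{f}}^0$ has passed its criticality order $n-4$, so the stress--energy source terms in \eqref{higher_transport_m}--\eqref{higher_transport_tr_m} now involve $\rho$--weighted Lie derivatives of $\ol{\ms{f}}^0$ beyond criticality. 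For this reason I would first establish the anomalous Maxwell limits \eqref{special_limit_f}--\eqref{limits_rho4_rhon} and only then resume the metric induction. (For $n=3$, and to a lesser extent $n=4$, these statements reduce essentially to Lemma \ref{lemma_first_limits}, as in the treatment of those borderline dimensions earlier; below I think of $n\ge 4$.)

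For the anomalous Maxwell limits, I would apply item \ref{lemma_item_4} of Proposition \ref{prop_integration} to the critical transport equation for $\ol{\ms{f}}^0$, namely \eqref{higher_transport_f_0} at $k=n-4$, which has vanishing constant $c=0$; here the role of $\ms{A}$ is played by $\Lie_\rho^{n-4}\ol{\ms{f}}^0$ and that of $\ms{G}$ by $\rho\Lie_\rho^{n-3}\ol{\ms{f}}^0$, and Lemma \ref{lemma_critical_f} supplies the $q=0$ input. The work is to verify the hypothesis $\rho^q\Lie_\rho^q\ms{G}\Rightarrow 0$ for $1\le q\le q_0$ and then bootstrap in $q_0$: the leading term $\ms{D}\cdot\Lie_\rho^{n-4}\ol{\ms{f}}^1$ and the various $\mc{S}$--products are differentiated in $\rho$ and controlled using the already--known limits $\Lie_\rho^{n-3}\ol{\ms{f}}^1\Rightarrow\mi{D}^{(n-3,n-3)}(\mf{g}^{(0)},\mf{f}^{1,(0)})$, $\rho\Lie_\rho^{n-2}\ol{\ms{f}}^1\Rightarrow 0$ from \eqref{limit_n-3_f_1}, together with the metric limits at orders $\le n-4$. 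Concretely, $q_0=1$ gives $\rho^2\Lie_\rho^{n-2}\ol{\ms{f}}^0\Rightarrow b_1'(n-4)!\,\mf{f}^{0,\star}$; feeding this into \eqref{higher_transport_f_1} with $k=n-3$ yields $\rho^2\Lie_\rho^{n-2}\ol{\ms{f}}^1\Rightarrow 0$, which upgrades the right--hand side enough to run $q_0=2$ and obtain $\rho^3\Lie_\rho^{n-1}\ol{\ms{f}}^0$, then $\rho^3\Lie_\rho^{n-1}\ol{\ms{f}}^1\Rightarrow 0$ from \eqref{higher_transport_f_1} with $k=n-2$, and finally $q_0=3$ produces $\rho^4\Lie_\rho^{n}\ol{\ms{f}}^0$ and $\rho^4\Lie_\rho^{n}\ol{\ms{f}}^1\Rightarrow 0$. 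Each $\ms{D}$ costs one vertical derivative, comfortably absorbed since $M_0\ge n+2$ leaves headroom $M_0-n+2\ge 4$, and the parity of the surviving boundary terms in \eqref{higher_transport_f_0} is exactly what produces the symbol $\mi{D}^{(n-3,n-3)}(\mf{g}^{(0)},\mf{f}^{1,(0)};n)$, which vanishes for $n$ odd.

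For the metric limits in $n-4<k\le n-2$, I would induct on $k$ using \eqref{higher_transport_m}--\eqref{higher_transport_tr_m}: since the relevant constant $n-1-k\ge 1>0$ (and likewise $2n-1-k>0$ for the trace), item \ref{lemma_item_2} of Proposition \ref{prop_integration} applies once the right--hand sides are shown to converge. The curvature terms $\Lie_\rho^{k}\ms{Rc},\,\Lie_\rho^{k-1}\ms{Rc}$ and the scalar analogues are handled via Proposition \ref{higher_order_riemann} together with the metric limits at orders $\le k<n-1$; the pure $\ms{m}$--products by the inductive hypothesis and \eqref{unif_bound_0_A01}; and the stress--energy contributions $\Lie_\rho^{\le k}\ms{T}^0,\,\Lie_\rho^{\le k}\ms{T}^2$ are expanded via \eqref{T_higher_order}, where any factor $\Lie_\rho^{j'}\ol{\ms{f}}^0$ with $j'\ge n-4$ carries a weight $\rho^{4-q}$; in the worst case $j'=k$, $q=0$ one writes $\rho^4\Lie_\rho^{k}\ol{\ms{f}}^0=\rho^2\cdot\rho^2\Lie_\rho^{k}\ol{\ms{f}}^0\Rightarrow 0$ using the anomalous limits just established (available precisely because $n-4\le k\le n-2$), and for $j'<k$ one gets strictly more decay. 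The odd/even alternation of which terms survive in the boundary limit reproduces $\mi{D}^{(k+1,k-3)}(\mf{g}^{(0)},\mf{f}^{1,(0)};k+1)$ exactly as in the proof of Lemma \ref{lemma_higher_limits}, and the anomalous statement $\rho\Lie_\rho^{k+1}\ms{m}\Rightarrow 0$ follows by applying $\rho\Lie_\rho$ to the right--hand side and re--running the argument.

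The main obstacle I anticipate is not any single estimate but the bookkeeping needed to keep the bootstrap non--circular: the $\ol{\ms{f}}^0$ anomalous limits must be obtained one order at a time, each one enabling the corresponding $\ol{\ms{f}}^1$ anomalous limit through \eqref{higher_transport_f_1}, which in turn supplies the regularity required to push the $\ol{\ms{f}}^0$ bootstrap a further step, and only after all three steps is there enough $\rho$--decay in \eqref{T_higher_order} to close the metric induction at $k=n-2$. Tracking simultaneously the $\rho$--weights, the vertical regularity budget (each $\ms{D}$ and each extra Lie derivative costs one, against $M_0$), and the even/odd structure encoded in the $\mi{D}(\cdots;\ell)$ symbols is where the care is concentrated; the individual convergence arguments are entirely routine given Proposition \ref{prop_integration}.
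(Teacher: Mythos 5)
Your overall strategy coincides with the paper's: run Proposition \ref{prop_integration} on the higher-order transport equations, obtain the anomalous Maxwell limits one order at a time by alternating between \eqref{higher_transport_f_0} and \eqref{higher_transport_f_1}, and close the metric induction at $k=n-3,n-2$ using the improved $\rho$-weights in \eqref{T_higher_order}. The gap is in your ordering. You defer all metric limits beyond order $n-4$ until after the full Maxwell bootstrap and assert that the bootstrap closes ``using \dots the metric limits at orders $\le n-4$''. That is only true at the first step. When you verify the hypothesis $\rho^{q}\Lie_\rho^{q}\ms{G}\Rightarrow 0$ for $q=2$ (to reach $\rho^{3}\Lie_\rho^{n-1}\ol{\ms{f}}^0$), the term $\rho\,\mc{S}\paren{\ms{g};\Lie_\rho^{n-4}\ms{m},\ol{\ms{f}}^0}$ on the right-hand side of \eqref{higher_transport_f_0} at $k=n-4$ produces $\rho^{3}\mc{S}\paren{\ms{g};\Lie_\rho^{n-2}\ms{m},\ol{\ms{f}}^0}$, and at $q=3$ (to reach $\rho^{4}\Lie_\rho^{n}\ol{\ms{f}}^0$) it produces $\rho^{4}\mc{S}\paren{\ms{g};\Lie_\rho^{n-1}\ms{m},\ol{\ms{f}}^0}$; Lemma \ref{lemma_higher_limits} only supplies $\rho\Lie_\rho^{n-3}\ms{m}\Rightarrow 0$, so neither factor is controlled by metric information at orders $\le n-4$, and as written your bootstrap does not close at levels $n-1$ and $n$.

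This is precisely why the paper interleaves the two families through the hierarchy displayed in its proof: first $\Lie_\rho^{n-3}\ms{m}$ and $\rho\Lie_\rho^{n-2}\ms{m}$ (which need only $\Lie_\rho^{n-4}\ms{T}$, $\rho\Lie_\rho^{n-3}\ms{T}$, available from Lemmas \ref{lemma_higher_limits} and \ref{lemma_critical_f}) together with the level-$(n-2)$ Maxwell limits \eqref{special_limit_f}; then $\Lie_\rho^{n-2}\ms{m}$, $\rho\Lie_\rho^{n-1}\ms{m}$ together with the level-$(n-1)$ Maxwell limits, where the offending term is handled as $\rho^{3}\Lie_\rho^{n-2}\ms{m}=\rho^{2}\cdot\rho\Lie_\rho^{n-2}\ms{m}\Rightarrow 0$ using the freshly proven $k=n-3$ metric limit; and only then the $\rho^{4}$-weighted level-$n$ limits \eqref{limits_rho4_rhon}. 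Your non-interleaved ordering can be salvaged only by inserting intermediate uniform bounds, e.g.\ the local-boundedness part \ref{lemma_item_1} of Proposition \ref{prop_integration} applied to \eqref{higher_transport_m} at $k=n-4$ to get a uniform bound on $\rho^{2}\Lie_\rho^{n-2}\ms{m}$ (and one further iteration for $\rho^{3}\Lie_\rho^{n-1}\ms{m}$), so that the extra factor of $\rho$ forces the limit to vanish; but your proposal neither proves nor invokes such bounds. Apart from this dependency issue, the remaining steps — the $k=n-3,n-2$ metric induction once the Maxwell limits are available, the parity bookkeeping producing the $\mi{D}(\cdots;\cdot)$ symbols, and the regularity budget — match the paper's argument.
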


\begin{proof}
    The limits \eqref{subcritical_m} have been obtained for $k<n-3$ in Lemma \ref{lemma_higher_limits}. Let us thus fix $k\in \lbrace n-3, n-2\rbrace$. 

    The analysis of terms involving the metric in \eqref{higher_transport_m} and \eqref{higher_transport_tr_m} will be identical to the proof of Lemma \ref{lemma_higher_limits} and we will therefore not cover them here. The only nontrivial limits will involve the stress-energy tensor terms, here for $n\geq 4$: 
    \begin{align}
        \Lie^{k-1}_\rho \ms{T}^0, \Lie_\rho^{k-1} \ms{T}^2 \notag&=\sum\limits_{q=0}^4\sum\limits_{j+j'+j_0+\dots + j_\ell = k-1-q} \rho^{4-q}\cdot \mc{S}\paren{\ms{g}; \Lie_\rho^{j_0-1}\ms{m}, \dots, \Lie_\rho^{j_\ell-1} \ms{m}, \Lie_\rho^j \ol{\ms{f}}^0, \Lie_\rho^{j'} \ol{\ms{f}}^0}\\
        &\qquad+\sum\limits_{q=0}^2\sum\limits_{j+j'+j_0+\dots + j_\ell = k-1-q} \rho^{2-q}\cdot \mc{S}\paren{\ms{g}; \Lie_\rho^{j_0-1}\ms{m}, \dots, \Lie_\rho^{j_\ell-1} \ms{m}, \Lie_\rho^j \ol{\ms{f}}^1, \Lie_\rho^{j'} \ol{\ms{f}}^1}\label{intermediate_stress_energy_k-1}\text{,}
    \end{align}
    as well as: 
    \begin{align}
        \rho\Lie^{k}_\rho \ms{T}^0, \rho\Lie_\rho^{k} \ms{T}^2 \notag &=\sum\limits_{q=0}^4\sum\limits_{j+j'+j_0+\dots + j_\ell = k-q} \rho^{5-q}\cdot \mc{S}\paren{\ms{g}; \Lie_\rho^{j_0-1}\ms{m}, \dots, \Lie_\rho^{j_\ell-1} \ms{m}, \Lie_\rho^j \ol{\ms{f}}^0, \Lie_\rho^{j'} \ol{\ms{f}}^0}\\
        & \qquad+\sum\limits_{q=0}^2\sum\limits_{j+j'+j_0+\dots + j_\ell = k-q} \rho^{3-q}\cdot \mc{S}\paren{\ms{g}; \Lie_\rho^{j_0-1}\ms{m}, \dots, \Lie_\rho^{j_\ell-1} \ms{m}, \Lie_\rho^j \ol{\ms{f}}^1, \Lie_\rho^{j'} \ol{\ms{f}}^1}\label{intermediate_stress_energy_k}\text{.}
    \end{align}
    Observe that such terms will involve limits which have not been covered from Lemma \ref{lemma_higher_limits} nor \ref{lemma_critical_f}, such as, for $n\geq 4$: 
    \begin{equation*}
        \sum\limits_{q=0}^4 \mc{S}\paren{\ms{g}; \rho^{5-q}\Lie^{k-q}_\rho\ol{\ms{f}}^0, \ol{\ms{f}}^0}\text{,}\qquad \sum\limits_{q=0}^2 \mc{S}\paren{\ms{g}; \rho^{3-q}\Lie^{k-q}_\rho\ol{\ms{f}}^1, \ol{\ms{f}}^1}\text{.}
    \end{equation*}
    In order to treat such terms, we will need to use the following hierarchy:
    \begin{equation*}
        \Lie_\rho^{n-3}\ms{m}\text{,}\, \rho \Lie_\rho^{n-2}\ms{m}\text{, }\rho^2\Lie_\rho^{n-2}\ol{\ms{f}}^0\text{, }\rho^2\Lie_\rho^{n-2}\ol{\ms{f}}^1\rightarrow \Lie_\rho^{n-2}\ms{m}\text{,}\, \rho\Lie_\rho^{n-1}\ms{m}\text{, } \rho^3\Lie_\rho^{n-1}\ol{\ms{f}}^0\text{, } \rho^3\Lie_\rho^{n-1}\ol{\ms{f}}^1\text{.}
    \end{equation*}
    
    Let first $n\geq 4$; the case $n=3$ follows an identical reasoning. Observe first that Equations \eqref{intermediate_stress_energy_k-1} and \eqref{intermediate_stress_energy_k}, with $k=n-3$, yield the following limits, from Lemmas \ref{lemma_higher_limits} and \ref{lemma_critical_f}:
    \begin{align*}
        &\Lie_\rho^{n-4}\ms{T}^0\text{,}\; \Lie_\rho^{n-4}\ms{T}^2\Rightarrow^{M_0-n+2} \left\{\begin{aligned}
            &\mi{D}^{(n-6, n-6)}\paren{\mf{g}^{(0)}, \mf{f}^{1,(0)}; n}\text{,}\qquad &&n\geq 6\text{,}\\
            &0\text{,}\qquad && n=4,5\text{,}
        \end{aligned}\right.\\
        &\rho\Lie_\rho^{n-3}\ms{T}^0\text{,}\; \rho\Lie_\rho^{n-3}\ms{T}^2\Rightarrow^{M_0-n+2}0\text{.}
    \end{align*}
    As a consequence, one finds immediately: 
    \begin{equation*}
        \Lie_\rho^{n-3}\ms{m}\Rightarrow^{M_0-n+2} \left\{
        \begin{aligned}
        &\mi{D}^{(n-2, n-6)}\paren{\mf{g}^{(0)}, \mf{f}^{1,(0)}; n}\text{,}\qquad &&n\geq 6\text{,}\\
        &\mi{D}^{n-2}\paren{\mf{g}^{(0)}; n+1}\text{,}\qquad &&n=4, 5\text{,}
        \end{aligned}\right.\qquad \rho\Lie_\rho^{n-2}\ms{m}\Rightarrow^{M_0-n+2}0\text{.}
    \end{equation*}
    Next, one can apply the operator $\rho\Lie_\rho$ to \eqref{higher_transport_f_0}, with $k=n-4$, to obtain: 
    \begin{align}
        &\begin{aligned}\label{higher_transport_f_0_intermediate}
            \rho^2\Lie_\rho^{n-2}\ol{\ms{f}}^0  &= - \rho\Lie_\rho^{n-3}\ol{\ms{f}}^0 -\ms{D}\cdot \rho\Lie_\rho^{n-3} \ol{\ms{f}}^1 +  \sum\limits_{j+j_0+j_1+\dots + j_\ell = n-3} \rho^2\cdot\mc{S}\paren{\ms{g}; \Lie_\rho^{j_0} \ms{m}, \Lie_\rho^{j_1 -1}\ms{m}, \dots, \Lie_\rho^{j_\ell - 1}\ms{m}, \Lie_\rho^j \ol{\ms{f}}^0}\\
            & \qquad +\sum\limits_{j+j_0+j_1+\dots + j_\ell = n-4} \rho\cdot \mc{S}\paren{\ms{g}; \Lie_\rho^{j_0} \ms{m}, \Lie_\rho^{j_1 -1}\ms{m}, \dots, \Lie_\rho^{j_\ell - 1}\ms{m}, \Lie_\rho^j \ol{\ms{f}}^0}\\
            & \qquad +\sum\limits_{\substack{j+j_0+j_1+\dots +j_\ell = n-3\\j<n-3, j_p\geq 1}} \rho\cdot \mc{S}\paren{\ms{g}; \ms{D}\Lie_\rho^{j_0-1}\ms{m}, \Lie_\rho^{j_1-1} \ms{m}, \dots, \Lie_\rho^{j_\ell -1}\ms{m}, \Lie_\rho^{j}\ol{\ms{f}}^1}\text{,}\\
            &\Rightarrow^{M_0-n+2} \mi{D}^{(n-3, n-3)}\paren{\mf{g}^{(0)}, \mf{f}^{1,(0)}; n} &&\text{.}
        \end{aligned}
    \end{align}
    Similarly, one can obtain the following limits: 
    \begin{align}\label{intermediate_critical_rho23}
        \rho^2 \Lie_\rho^{n-2}\ol{\ms{f}}^1 \Rightarrow^{M_0-n+2} 0\text{,}\qquad  \rho^3 \Lie_\rho^{n-1} \ol{\ms{f}}^0 \Rightarrow^{M_0-n+1}\mi{D}^{(n-3, n-3)}\paren{\mf{g}^{(0)},\mf{f}^{1,(0)}; n}\text{,}\qquad \rho^3 \Lie_\rho^{n-1} \ol{\ms{f}}^1 \Rightarrow^{M_0-n+1} 0\text{.}
    \end{align}
    This yields the following limits for the metric \footnote{Although we skipped this step for clarity, one needs to first compute the limit for $\ms{tr}_{\ms{g}}\Lie_\rho^{n-2}\ms{m}$.}: 
    \begin{equation*}
        \Lie_\rho^{n-2}\ms{m}\Rightarrow^{M_0-n+1} \left\{\begin{aligned}
            &\mi{D}^{(n-1, n-5)}\paren{\mf{g}^{(0)}, \mf{f}^{1,(0)}; n+1}\text{,}\qquad &&n\geq 5\text{,}\\
            &\mi{D}^{n-1}\paren{\mf{g}^{(0)}; n+1}\text{,}\qquad &&n=4\text{,}
        \end{aligned}\right.\qquad \rho\Lie_\rho^{n-1}\ms{m}\Rightarrow^{M_0-n+1} 0\text{.}
    \end{equation*}
    The final limits \eqref{limits_rho4_rhon} follow from the above and \eqref{intermediate_critical_rho23}. 
\end{proof}
The following lemma gives the final critical limit for the metric, as well as an anomalous limit for $\ol{\ms{f}}^1$, which completes the proof of Theorem \ref{theorem_main_fg}, for $n\geq 3$: 
\begin{lemma}
    Let $n\geq 3$. There exists $\mf{g}^\dag$, a tensor field on $\mc{I}$ such that the following limits hold for $\ms{m}$: 
    \begin{gather}
        \label{anomalous_m}\rho\Lie_\rho^{n} \ms{m}\Rightarrow^{M_0-n} n! \mf{g}^\star = \mi{D}^{(n,n-4)}\paren{\mf{g}^{(0)}, \mf{f}^{1,(0)}; n}\text{,}\\
        \label{critical_m}\Lie_\rho^{n-1}\ms{m} - n!\, \log\rho \cdot \mf{g}^\star \rightarrow^{M_0-n} n!\, \mf{g}^{\dag}\text{.}
    \end{gather}
    Furthermore, the following anomalous limits hold for $\ol{\ms{f}}^1$: 
    \begin{gather*}
        \rho\Lie_\rho^{n-1}\ol{\ms{f}}^1 \Rightarrow^{M_0-n+1}(n-3)!\, \mf{f}^{1,\star}=\mi{D}^{(n-2, n-2)}\paren{\mf{g}^{(0)}, \mf{f}^{1,(0)}; n}\text{,}\\
        \Lie_\rho^{n-2}\ol{\ms{f}}^1 - (n-3)!\log\rho\cdot \mf{f}^{1,\star}\rightarrow^{M_0-n+1} \mi{D}^{(n-2, n-2, 1)}\paren{\mf{g}^{(0)}, \mf{f}^{1,(0)},\mf{f}^{0, \dag}; n}\text{.}
    \end{gather*}
\end{lemma}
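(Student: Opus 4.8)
The plan is to run, at the critical order, the same scheme that produced the subcritical limits: push the higher-order transport equations to the order at which the structural constant degenerates to $0$, identify the rapid limit of the right-hand side, and then invoke the anomalous-limit statement \ref{lemma_item_4} of Proposition \ref{prop_integration}. For the metric I would take $k=n-1$ in \eqref{higher_transport_m}; since the coefficient $n-1-k$ then vanishes, the equation becomes $\rho\Lie_\rho\paren{\Lie_\rho^{n-1}\ms{m}}=\ms{G}_{n-1}$, an instance of \eqref{lemma_transport} with $c=0$ and $\ms{A}=\Lie_\rho^{n-1}\ms{m}$, so everything reduces to computing the rapid limit of $\ms{G}_{n-1}$.

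For that I would feed in: the subcritical metric limits $\Lie_\rho^{k}\ms{m}\Rightarrow^{M_0-k-1}\mi{D}^{(k+1,k-3)}\paren{\mf{g}^{(0)},\mf{f}^{1,(0)};k+1}$ and $\rho\Lie_\rho^{k+1}\ms{m}\Rightarrow^{M_0-k-1}0$ for $0\le k\le n-2$ from Lemma \ref{lemma_sub_m} (together with \eqref{limits_DA} to commute $\ms{D}$ past the limits); the higher-order Riemann identity \eqref{transport_higher_riemann} to rewrite $\Lie_\rho^{n-1}\ms{Rc}$, $\Lie_\rho^{n-2}\ms{Rc}$ and their traces; and the stress-energy expansion \eqref{T_higher_order} together with the Maxwell limits of Lemmas \ref{lemma_higher_limits}--\ref{lemma_sub_m} for the $\Lie_\rho^{\le n-1}\ms{T}^0$, $\Lie_\rho^{\le n-1}\ms{T}^2$ contributions. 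As in the proof of Lemma \ref{lemma_higher_limits}, a parity count over the partitions occurring in each $\mc{S}(\cdot)$ expression shows that the contributions with an odd total number of $\rho$--derivatives drop out in the limit, so that $\ms{G}_{n-1}\Rightarrow^{M_0-n}n!\,\mf{g}^{\star}$ with $\mf{g}^{\star}=\mi{D}^{(n,n-4)}\paren{\mf{g}^{(0)},\mf{f}^{1,(0)};n}$ (in particular $\mf{g}^{\star}\equiv0$ for $n$ odd). Then \ref{lemma_item_4} of Proposition \ref{prop_integration}, applied with $c=0$, $\mf{G}=n!\,\mf{g}^{\star}$ and $q_0=0$, produces a tensor field $\mf{A}^{\dag}$ on $\mc{I}$ with $\rho\Lie_\rho^{n}\ms{m}\Rightarrow^{M_0-n}n!\,\mf{g}^{\star}$ and $\Lie_\rho^{n-1}\ms{m}-n!\,\log\rho\cdot\mf{g}^{\star}\rightarrow^{M_0-n}\mf{A}^{\dag}$; setting $\mf{g}^{\dag}:=(n!)^{-1}\mf{A}^{\dag}$ gives \eqref{anomalous_m}--\eqref{critical_m}.

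For $\ol{\ms{f}}^1$ the transport equation \eqref{transport_f_1} is not of the form \eqref{lemma_transport}, so instead I would substitute the already-established asymptotics of $\ol{\ms{f}}^0$ directly into \eqref{higher_transport_f_1}. At $k=n-3$ this reads $\Lie_\rho^{n-2}\ol{\ms{f}}^1_{ab}=2\rho\,\ms{D}_{[a}\Lie_\rho^{n-3}\ol{\ms{f}}^0_{b]}+2(n-3)\ms{D}_{[a}\Lie_\rho^{n-4}\ol{\ms{f}}^0_{b]}+(\text{nonlinear})$. The first term equals $\ms{D}_{[a}\paren{\rho\Lie_\rho^{n-3}\ol{\ms{f}}^0_{b]}}$ and converges, with no logarithm, by Lemma \ref{lemma_critical_f}; the index restrictions $j<n-4$, $j_p\ge1$ keep the nonlinear sums strictly below the critical orders of $\ms{m}$ and $\ol{\ms{f}}^0$ (and every such term carries a factor $\ms{D}\ms{m}\to0$), so they converge as well. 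The logarithm is produced solely by $2(n-3)\ms{D}_{[a}\Lie_\rho^{n-4}\ol{\ms{f}}^0_{b]}$: inserting the critical expansion $\Lie_\rho^{n-4}\ol{\ms{f}}^0=(n-4)!\,\log\rho\cdot\mf{f}^{0,\star}+(n-4)!\,\mf{f}^{0,\dag}+o(1)$ of Lemma \ref{lemma_critical_f}, using $\ms{D}_a\log\rho=0$, and replacing $\ms{D}$ by $\mf{D}$ up to an $o(1)$ error via \eqref{estimate_DA} (since $\log\rho\cdot(\ms{g}-\mf{g}^{(0)})\to0$), one arrives at
\begin{equation*}
\Lie_\rho^{n-2}\ol{\ms{f}}^1-(n-3)!\,\log\rho\cdot\mf{f}^{1,\star}\rightarrow^{M_0-n+1}\mi{D}^{(n-2,n-2,1)}\paren{\mf{g}^{(0)},\mf{f}^{1,(0)},\mf{f}^{0,\dag};n}\text{,}
\end{equation*}
with $\mf{f}^{1,\star}:=2\,\mf{D}_{[a}\mf{f}^{0,\star}_{b]}=\mi{D}^{(n-2,n-2)}\paren{\mf{g}^{(0)},\mf{f}^{1,(0)};n}$. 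For the companion limit I would differentiate the same expansion once more: from $\Lie_\rho^{n-3}\ol{\ms{f}}^0=(n-4)!\,\mf{f}^{0,\star}/\rho+o(1/\rho)$ and $\Lie_\rho^{n-2}\ol{\ms{f}}^0=-(n-4)!\,\mf{f}^{0,\star}/\rho^2+o(1/\rho^2)$ (the improved decay of the remainders being the $q=1$ part of \ref{lemma_item_4}, consistent with \eqref{special_limit_f}), and feeding these into \eqref{higher_transport_f_1} at $k=n-2$, one gets $\rho\Lie_\rho^{n-1}\ol{\ms{f}}^1\Rightarrow^{M_0-n+1}\brak{-2(n-4)!+2(n-2)(n-4)!}\mf{D}_{[a}\mf{f}^{0,\star}_{b]}=(n-3)!\,\mf{f}^{1,\star}$.

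The main obstacle I expect is twofold. First, the parity bookkeeping: exactly as for $k<n-1$, one must check partition by partition that only the even contributions survive, so that $\mf{g}^{\star}$ and $\mf{f}^{1,\star}$ carry the factor $(\,\cdot\,;n)$ and in particular vanish for $n$ odd; this is the argument of Lemma \ref{lemma_higher_limits} run one order higher, and it is where most of the technical work sits. Second, the regularity count at criticality: the genuinely top-order terms of $\ms{G}_{n-1}$ have the schematic form $\ms{D}^2\paren{\rho\Lie_\rho^{n-2}\ms{m}}$, so the stated exponent $M_0-n$ relies on the (sharpenable) regularity of the subcritical limits in Lemmas \ref{lemma_higher_limits}--\ref{lemma_sub_m} and on the standing assumption $M_0\ge n+2$, and one must first pass through the traced equation \eqref{higher_transport_tr_m} in order to close the limit of $\Lie_\rho^{n-1}\ms{m}$ from \eqref{higher_transport_m}. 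Finally, the borderline dimensions $n=3$ and $n=4$ are to be inspected by hand: for $n=3$ one has $\mf{f}^{1,\star}=0$ and $\Lie_\rho\ol{\ms{f}}^1\to2\,\mf{D}_{[a}\mf{f}^{0,(0)}_{b]}$ straight from \eqref{transport_f_1}, using the weak $C^{M_0+1}$--integrability of $\ol{\ms{f}}^0,\ol{\ms{f}}^1$, while for $n=4$ one first works with the rescaled field $\tilde{\ms{f}}^0=\rho^{\delta}\ol{\ms{f}}^0$ of Lemma \ref{lemma_first_bounds} to obtain the boundedness needed before passing to limits.
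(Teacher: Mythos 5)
Your proposal follows essentially the same route as the paper: compute the limits of $\Lie_\rho^{n-2}\ms{T}^i$ and $\rho\Lie_\rho^{n-1}\ms{T}^i$ from the earlier lemmas, pass through the traced equation \eqref{higher_transport_tr_m} and then \eqref{higher_transport_m} at $k=n-1$, invoke the $c=0$ case of Proposition \ref{prop_integration} for $\mf{g}^\star,\mf{g}^\dag$, and obtain the $\ol{\ms{f}}^1$ limits by substituting the critical asymptotics of $\ol{\ms{f}}^0$ into \eqref{higher_transport_f_1} at $k=n-3$ and $k=n-2$, with $\mf{f}^{1,\star}=2\mf{D}_{[a}\mf{f}^{0,\star}_{b]}$ and the same $[-2+2(n-2)](n-4)!=(n-3)!$ bookkeeping; the $n=3$ case is likewise treated directly from the first-order transport equations. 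The only blemish is the parenthetical claim that every nonlinear term carries a factor $\ms{D}\ms{m}\to0$ (false for $j_0\geq 3$), but this is immaterial since those sums only need to converge, as in the paper.
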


\begin{proof}
    The limits \eqref{anomalous_m} and \eqref{critical_m} are straightforward consequences of the following limits for the components of the stress-energy tensor, here for $n\geq 4$: 
    \begin{align*}
        \Lie_\rho^{n-2}\ms{T}^0\text{, }\Lie_\rho^{n-2}\ms{T}^2 &=\sum\limits_{q=0}^4\sum\limits_{j+j'+j_0+\dots + j_\ell = n-2-q} \rho^{4-q}\cdot \mc{S}\paren{\ms{g}; \Lie_\rho^{j_0-1}\ms{m}, \dots, \Lie_\rho^{j_\ell-1} \ms{m}, \Lie_\rho^j \ol{\ms{f}}^0, \Lie_\rho^{j'} \ol{\ms{f}}^0}\\
        &+\sum\limits_{q=0}^2\sum\limits_{j+j'+j_0+\dots + j_\ell = n-2-q} \rho^{2-q}\cdot \mc{S}\paren{\ms{g}; \Lie_\rho^{j_0-1}\ms{m}, \dots, \Lie_\rho^{j_\ell-1} \ms{m}, \Lie_\rho^j \ol{\ms{f}}^1, \Lie_\rho^{j'} \ol{\ms{f}}^1}\\
        &\Rightarrow^{M_0-n+1} \mi{D}^{(n-4, n-4)}\paren{\mf{g}^{(0)}, \mf{f}^{1,(0)}; n}\text{,}
    \end{align*}
    as well as: 
    \begin{align*}
        \rho\Lie_\rho^{n-1}\ms{T}^0\text{, }\rho\Lie_\rho^{n-1}\ms{T}^2 &=\sum\limits_{q=0}^4\sum\limits_{j+j'+j_0+\dots + j_\ell = n-1-q} \rho^{5-q}\cdot \mc{S}\paren{\ms{g}; \Lie_\rho^{j_0-1}\ms{m}, \dots, \Lie_\rho^{j_\ell-1} \ms{m}, \Lie_\rho^j \ol{\ms{f}}^0, \Lie_\rho^{j'} \ol{\ms{f}}^0}\\
        &+\sum\limits_{q=0}^2\sum\limits_{j+j'+j_0+\dots + j_\ell = n-1-q} \rho^{3-q}\cdot \mc{S}\paren{\ms{g}; \Lie_\rho^{j_0-1}\ms{m}, \dots, \Lie_\rho^{j_\ell-1} \ms{m}, \Lie_\rho^j \ol{\ms{f}}^1, \Lie_\rho^{j'} \ol{\ms{f}}^1}\\
        &\Rightarrow^{M_0-n+1}0\text{,}
    \end{align*}
    where we used the limits from Lemmas \ref{lemma_higher_limits}, \ref{lemma_critical_f} and \ref{lemma_sub_m}. These imply, using \eqref{higher_transport_tr_m} with $k=n-1$ and \ref{lemma_item_2} of Proposition \ref{prop_integration}: 
    \begin{gather*}
        \ms{tr}_{\ms{g}}\Lie_\rho^{n-1}\ms{m}\Rightarrow^{M_0-n} \mi{D}^{(n, n-4)}\paren{\mf{g}^{(0)}, \mf{f}^{1,(0)}; n}\text{,}\qquad \rho\Lie_\rho \paren{\ms{tr}_{\ms{g}}\Lie_\rho^{n-1}\ms{m}}\Rightarrow^{M_0-n}0\text{.}
    \end{gather*}
    This allows us to compute the following anomalous limit, using Equation \eqref{higher_transport_m}, with $k=n-1$: 
    \begin{equation*}
        \rho\Lie_\rho^{n}\ms{m}\Rightarrow^{M_0-n} n!\, \mf{g}^\star = \mi{D}^{(n, n-4)}\paren{\mf{g}^{(0)}, \mf{f}^{1,(0)}; n}\text{,}
    \end{equation*}
    implying, from \ref{lemma_item_4} of Proposition \ref{prop_integration}, the existence of a tensor field $\mf{g}^\dag$ on $\mc{I}$ satisfying: 
    \begin{equation*}
        \Lie_\rho^{n-1}\ms{m} - n!\log\rho \cdot \mf{g}^\star \rightarrow^{M_0-n} n! \, \mf{g}^\dag\text{.}
    \end{equation*}
    Finally, the anomalous limit of $\ol{\ms{f}}^1$ is obtained first from \eqref{higher_transport_f_1}, multiplied by $\rho$, with $k=n-2$:
    \begin{align*}
        \rho\Lie_\rho^{n-1}\ol{\ms{f}}^1_{ab} &= 2\ms{D}_{[a}(\rho^2\Lie_\rho^{n-2} \ol{\ms{f}}^0)_{b]} + 2(n-2) \ms{D}_{[a}(\rho\Lie_\rho^{n-3}\ol{\ms{f}}^0)_{b]} \\
        &\qquad +  \sum\limits_{\substack{j + j_0 +j_1+ \dots + j_\ell = n-2\\j<n-2, j_p\geq 1}} \rho^2\cdot\mc{S}\paren{\ms{g}; \ms{D}\Lie_\rho^{j_0-1}\ms{m}, \Lie_\rho^{j_1-1}\ms{m}, \dots, \Lie_\rho^{j_\ell-1}\ms{m}, \Lie_\rho^j \ol{\ms{f}}^0}_{ab}\\
        &\qquad +\sum\limits_{\substack{j + j_0 +j_1+ \dots + j_\ell = n-3\\j<n-3, j_p \geq 1}} \rho\cdot \mc{S}\paren{\ms{g}; \ms{D}\Lie_\rho^{j_0-1}\ms{m}, \Lie_\rho^{j_1-1}\ms{m}, \dots, \Lie_\rho^{j_\ell-1}\ms{m}, \Lie_\rho^j \ol{\ms{f}}^0}_{ab}\\
        &\Rightarrow^{M_0-n+1} \mi{D}^{(n-2, n-2)}\paren{\mf{g}^{(0)}, \mf{f}^{1,(0)}; n}_{ab}\text{, }
    \end{align*}
    where we used, in particular, \eqref{special_limit_f}. 
    Next, from \eqref{limits_DA}, the following critical limit holds: 
    \begin{equation*}
        \begin{aligned}
        &\Lie_\rho^{n-2}\ol{\ms{f}}^1_{ab}-2(n-3)!\log\rho\cdot \mf{D}_{[a}\mf{f}^{0, \star}_{b]}\\
        &\qquad = 2\ms{D}_{[a}(\rho\Lie_\rho^{n-3} \ol{\ms{f}}^0)_{b]} + 2(n-3) \underbrace{\paren{\ms{D}_{[a}(\Lie_\rho^{n-4}\ol{\ms{f}}^0)_{b]}-(n-4)!\log\rho\cdot \mf{D}_{[a}\mf{f}^{0, \star}_{b]}}}_{\rightarrow^{M_0-n+1}(n-4)!\mf{D}_{[a}\mf{f}^{0, \dag}_{b]}} \\
        &\qquad  \;+  \sum\limits_{\substack{j + j_0 +j_1+ \dots + j_\ell = n-3\\j<n-3, j_p\geq 1}} \rho\cdot\mc{S}\paren{\ms{g}; \ms{D}\Lie_\rho^{j_0-1}\ms{m}, \Lie_\rho^{j_1-1}\ms{m}, \dots, \Lie_\rho^{j_\ell-1}\ms{m}, \Lie_\rho^j \ol{\ms{f}}^0}_{ab}\\
        &\qquad \; +\sum\limits_{\substack{j + j_0 +j_1+ \dots + j_\ell = n-4\\j<n-4, j_p \geq 1}} \mc{S}\paren{\ms{g}; \ms{D}\Lie_\rho^{j_0-1}\ms{m}, \Lie_\rho^{j_1-1}\ms{m}, \dots, \Lie_\rho^{j_\ell-1}\ms{m}, \Lie_\rho^j \ol{\ms{f}}^0}_{ab}\\
        &\qquad \rightarrow^{M_0-n+1} \mi{D}^{(n-2, n-2, 1)}\paren{\mf{g}^{(0)}, \mf{f}^{1,(0)}, \mf{f}^{0, \dag}; n}_{ab}\text{.}
        \end{aligned}
    \end{equation*}
    Defining now: 
    \begin{equation*}
        \mf{f}^{1,\star}_{ab} := 2\mf{D}_{[a}\mf{f}^{0, \star}_{b]}
    \end{equation*}
    allows one to conclude for $n\geq 4$. 

    It remains to show the $n=3$ case, which reduces to showing: 
    \begin{equation}
        \Lie_\rho\ms{T}^0, \Lie_\rho\ms{T}^2, \rho\Lie_\rho^2 \ms{T}^0, \rho\Lie_\rho^2 \ms{T}^2 \Rightarrow^{M_0-3} 0\text{.}\label{limits_T}
    \end{equation}
    In order to obtain this, observe first that from \eqref{transport_f_0} and \eqref{transport_f_1}: 
    \begin{align*}
        &\Lie_\rho\ol{\ms{f}}^0 = \mc{S}\paren{\ms{g}; \ms{m}, \ol{\ms{f}}^0} + \mc{S}\paren{\ms{g}; \ms{D}\ol{\ms{f}}^1}\Rightarrow^{M_0-2}\mi{D}^{(1,1)}\paren{\mf{g}^{(0)}, \mf{f}^{1,(0)}}\\
        &\Lie_\rho \ol{\ms{f}}^1 = \mc{S}\paren{\ms{D}\ol{\ms{f}}^0}\Rightarrow^{M_0-2}\mi{D}^{(1,1)}\paren{\mf{g}^{(0)}, \mf{f}^{0,(0)}}\text{.}
    \end{align*}
    Furthermore, differentiating both \eqref{transport_f_0} and \eqref{transport_f_1} leads to: 
    \begin{align*}
        \Lie_\rho^2\ol{\ms{f}}^0 \Rightarrow^{M_0-3}\mi{D}^{(2, 2, 0)}\paren{\mf{g}^{(0)}, \mf{f}^{1,(0)}, \mf{f}^{0,(0)}}\text{,}\qquad  \Lie_\rho^2\ol{\ms{f}}^1 \Rightarrow^{M_0-3}\mi{D}^{(2, 0, 2)}\paren{\mf{g}^{(0)}, \mf{f}^{1,(0)}, \mf{f}^{0,(0)}}\text{.}
    \end{align*}
    Since in $ \Lie_\rho\ms{T}^0, \Lie_\rho\ms{T}^2, \rho\Lie_\rho^2 \ms{T}^0, \rho\Lie_\rho^2 \ms{T}^2$, the above limits are multiplied by a non-zero power of $\rho$, the limits \eqref{limits_T} are indeed trivial. This allows us to obtain the following limit for the trace of $\ms{m}$, from \eqref{higher_transport_tr_m}: 
    \begin{equation*}
         \ms{tr}_{\ms{g}}\Lie_\rho^2\ms{m}\Rightarrow^{M_0-3}0\text{,}\qquad \rho \Lie_\rho \paren{\ms{tr}_{\ms{g}}\Lie_\rho^2\ms{m}}\Rightarrow^{M_0-3}0\text{,}
    \end{equation*}
    yielding for the metric, using Equation \ref{higher_transport_m}: 
    \begin{equation*}
        \rho\Lie_\rho^3\ms{m}\Rightarrow^{M_0-3}0\text{,}\qquad \Lie_\rho^2\ms{m}\rightarrow^{M_0-3}3!\mf{g}^{\dag}\text{,}
    \end{equation*}
    with $\mf{g}^{\dag}$ some tensor field on $\mc{I}$. 
\end{proof}

\proofpart{$n=2$ case.}

It remains to compute the limits in Theorem \ref{theorem_main_fg} for the $n=2$ case.
For completeness, let us write the transport equations of interest in this setting: 
\begin{align}
    &\begin{aligned}\label{transport_m_n=2}
        \rho\Lie_\rho \ms{m} - \ms{m} = 2\rho\ms{Rc} + \ms{tr}_{\ms{g}}\ms{m}\cdot \ms{g} + \rho\mc{S}\paren{\ms{g}; \ms{m}^2}+\rho\mc{S}\paren{\ms{g}; (\ol{\ms{f}}^0)^2} + \rho^3\mc{S}\paren{\ms{g}; (\ol{\ms{f}}^1)^2}\text{,}
    \end{aligned}\\
    &\begin{aligned}\label{transport_tr_m_n=2}
        \rho\Lie_\rho \ms{tr}_{\ms{g}}\ms{m} - 3\ms{tr}_{\ms{g}}\ms{m} = 2\rho\ms{Rs} + \rho\mc{S}\paren{\ms{g}; \ms{m}^2}+\rho\mc{S}\paren{\ms{g}; (\ol{\ms{f}}^0)^2} + \rho^3\mc{S}\paren{\ms{g}; (\ol{\ms{f}}^1)^2}\text{,}
    \end{aligned}\\
    &\begin{aligned}\label{transport_f0_n=2}
        \Lie_\rho\ol{\ms{f}}^0 = \mc{S}\paren{\ms{g}; \ms{m}, \ol{\ms{f}}^0} + \rho\mc{S}\paren{\ms{g}; \ms{D}\ol{\ms{f}}^1} \text{,}
    \end{aligned}\\
    &\begin{aligned}\label{transport_f1_n=2}
        \rho\Lie_\rho\ol{\ms{f}}^1_{ab} = \mc{S}\paren{\ms{D}\ol{\ms{f}}^0}_{ab}\text{.}
    \end{aligned}
\end{align}
Observe that, for any $\delta$, Equation \eqref{transport_f1_n=2} can also be written as: 
\begin{equation}
    \rho\Lie_\rho(\rho^\delta\cdot \ol{\ms{f}}^1) - \delta \rho^\delta \cdot \ol{\ms{f}}^1 = \rho^\delta \mc{S}\paren{\ms{D}\ol{\ms{f}}^0}\text{.}\label{modified_transport_f_n=2}
\end{equation}
In particular, if $\delta>0$, one can use \ref{local_estimate} of Proposition \ref{prop_integration}. 

First, let us define useful uniform bounds. 

\begin{lemma}\label{lemma_unif_bounds_n=2}
    Let $n=2$ and let $(U, \varphi)$ be a compact coordinate system on $\mc{I}$. The following uniform bounds are satisfied \footnote{Note that, by assumption, $\norm{\ol{\ms{f}}^0}_{M_0+1, \varphi}\lesssim 1$.}: 
        \begin{align*}
            \norm{\ms{m}}_{M_0,\varphi} + \norm{\rho^\delta\cdot \ol{\ms{f}}^1}_{M_0, \varphi}\lesssim 1\text{,}
        \end{align*}
        for any $0<\delta<1$.
\end{lemma}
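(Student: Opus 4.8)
The plan is to establish Lemma \ref{lemma_unif_bounds_n=2} as the $n=2$ counterpart of Lemma \ref{lemma_first_bounds}, following the same ODE scheme but accounting for the one structural novelty of a two-dimensional boundary: the vertical field $\ol{\ms{f}}^1$ satisfies the transport equation \eqref{transport_f1_n=2} with \emph{vanishing} exponent, so it is anomalous and cannot be bounded directly. First I would therefore bound $\rho^\delta\cdot\ol{\ms{f}}^1$, which by \eqref{modified_transport_f_n=2} solves a transport equation of the form \eqref{transport_proof} with constant $c=\delta>0$. Since $(\mc{M},g,F)$ is regular to order $M_0$, the hypothesis \eqref{assumption_metric} for $n=2$ gives $\norm{\ol{\ms{f}}^0}_{M_0+1,\varphi}\lesssim 1$, so the right-hand side of \eqref{modified_transport_f_n=2} obeys $\abs{\rho^\delta\mc{S}(\ms{D}\ol{\ms{f}}^0)}_{M_0,\varphi}\lesssim\rho^\delta$. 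Feeding this into the local estimate \eqref{bound_integration} of Proposition \ref{prop_integration} with $\tilde\rho=\rho_0$ yields $\abs{\rho^\delta\ol{\ms{f}}^1}_{M_0,\varphi}\lesssim\rho^\delta+\rho^\delta\int_\rho^{\rho_0}\sigma^{-1}\,d\sigma\lesssim\rho^\delta+\rho^\delta\log(\rho_0/\rho)$, which stays bounded on $(0,\rho_0]$; hence $\norm{\rho^\delta\cdot\ol{\ms{f}}^1}_{M_0,\varphi}\lesssim 1$ (no Grönwall needed here, as the forcing term does not involve $\ol{\ms{f}}^1$).

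With this in hand I would next bound $\ms{m}$ and $\ms{tr}_{\ms{g}}\ms{m}$ in $C^0$. Equations \eqref{transport_m_n=2} and \eqref{transport_tr_m_n=2} are of the form \eqref{transport_proof} with constants $n-1=1$ and $2n-1=3$, so \ref{local_estimate} of Proposition \ref{prop_integration} produces coupled integral inequalities for $\abs{\ms{m}}_{0,\varphi}$ and $\abs{\ms{tr}_{\ms{g}}\ms{m}}_{0,\varphi}$; here $\abs{\ms{Rc}}_{0,\varphi},\abs{\ms{Rs}}_{0,\varphi}\lesssim 1$ by Proposition \ref{prop_bounds_metric}, the term $(\ol{\ms{f}}^0)^2$ is $\mc{O}(1)$, and $\rho^3(\ol{\ms{f}}^1)^2=\rho^{3-2\delta}(\rho^\delta\ol{\ms{f}}^1)^2=\mc{O}(\rho^{3-2\delta})$ is bounded since $\delta<1$. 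Substituting the trace inequality into the one for $\ms{m}$ and applying Fubini exactly as in Lemma \ref{lemma_first_bounds} collapses them to $\abs{\ms{m}}_{0,\varphi}\lesssim 1+\int^\rho_{\rho_0}\abs{\ms{m}}^2_{0,\varphi}\vert_\sigma\,d\sigma$, and Grönwall's inequality together with the weak integrability of $\ms{m}$ from \eqref{assumption_m_f} gives $\norm{\ms{m}}_{0,\varphi}\lesssim 1$, whence also $\norm{\ms{tr}_{\ms{g}}\ms{m}}_{0,\varphi}\lesssim 1$.

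The higher-order bounds follow by induction on $q\leq M_0$: assuming $\norm{\ms{m}}_{q,\varphi}+\norm{\ms{tr}_{\ms{g}}\ms{m}}_{q,\varphi}\lesssim 1$ for all $q<k$, one differentiates \eqref{transport_m_n=2} and \eqref{transport_tr_m_n=2} $k$ times along the vertical directions. All but the top-order terms are controlled by the inductive hypothesis; the curvature contributions are handled via Proposition \ref{prop_bounds_metric} using $\norm{\ms{g}}_{M_0+2,\varphi}\lesssim 1$; the Maxwell contributions $\abs{(\ol{\ms{f}}^0)^2}_{k,\varphi}\lesssim\norm{\ol{\ms{f}}^0}_{M_0,\varphi}^2$ and $\rho^{3-2\delta}\abs{(\rho^\delta\ol{\ms{f}}^1)^2}_{k,\varphi}$ are bounded by the first step and \eqref{assumption_metric}; and one closes again with Grönwall. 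I expect the only genuinely delicate point to be the first step: it is precisely the positivity of $\delta$ that makes $\rho^\delta\log(1/\rho)\to 0$ and rescues the anomalous field $\ol{\ms{f}}^1$ — consistent with the fact, visible in Theorem \ref{theorem_main_fg}, that $\ol{\ms{f}}^1$ itself carries a genuine $\log\rho$ term for $n=2$ and hence admits no uniform $C^0$ bound without the weight.
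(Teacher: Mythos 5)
Your proposal is correct and follows essentially the same route as the paper: bound the $\delta$-weighted field $\rho^\delta\ol{\ms{f}}^1$ via the modified transport equation \eqref{modified_transport_f_n=2} (with the harmless $\rho^\delta\log(\rho_0/\rho)$ factor), then close the coupled $\ms{m}$, $\ms{tr}_{\ms{g}}\ms{m}$ estimates by Fubini and Grönwall as in Lemma \ref{lemma_first_bounds}, and finally induct on vertical derivatives. The only (immaterial) difference is that you obtain the $C^{M_0}$ bound on $\rho^\delta\ol{\ms{f}}^1$ in one stroke — legitimate since its source involves only $\ms{D}\ol{\ms{f}}^0$, controlled by $\norm{\ol{\ms{f}}^0}_{M_0+1,\varphi}\lesssim 1$ — whereas the paper carries it through the induction order by order.
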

\begin{proof}
    The proof will follow the same reasoning as in Lemma \ref{lemma_first_bounds}. Using \ref{local_estimate} of Proposition \ref{prop_integration} for \eqref{transport_m_n=2}, \eqref{transport_tr_m_n=2} and \eqref{modified_transport_f_n=2} to obtain:
    \begin{align}
        &\label{intermediate_m_n=2}\abs{\ms{m}}_{0, \varphi} \lesssim 1 + \rho \int_{\rho_0}^\rho \sigma^{-2}\paren{\abs{\ms{tr}_{\ms{g}}\ms{m}}_{0, \varphi} + \sigma \abs{\ms{m}}^2_{0, \varphi} + \sigma^3 \abs{\ol{\ms{f}}^1}_{0, \varphi}^2}\vert_\sigma d\sigma\text{,}\\
        &\label{intermediate_tr_m_n=2}\abs{\ms{tr}_{\ms{g}}\ms{m}}_{0, \varphi} \lesssim 1 + \int_{\rho_0}^\rho \paren{ \abs{\ms{m}}^2_{0, \varphi} + \sigma^2 \abs{\ol{\ms{f}}^1}_{0, \varphi}^2}\vert_\sigma d\sigma\text{,}\\
        &\label{intermediate_f_n=2}\abs{\rho^\delta\cdot \ol{\ms{f}}^1}_{0, \varphi}\lesssim 1 + \rho^\delta \int_{\rho_0}^\rho  \sigma^{-1} d\sigma\text{,}
    \end{align}
    where we used:
    \begin{equation*}
        \norm{\ms{Rc}}_{M_0, \varphi} + \norm{\ms{Rs}}_{M_0, \varphi} + \norm{\ol{\ms{f}}^0}_{M_0+1, \varphi}\lesssim 1\text{.}
    \end{equation*}
    Note that \eqref{intermediate_f_n=2} immediately implies the uniform boundedness of $\ol{\ms{f}}^1$ in $C^0$. From there, one can use the same arguments in the proof of Lemma \ref{lemma_first_bounds} to obtain the boundedness of $\ms{m}$ and $\ms{tr}_{\ms{g}}\ms{m}$ in $C^0$. 

    Let us now assume that the following bounds hold, for some fixed $k\leq M_0$: 
    \begin{equation*}
        \norm{\ms{m}}_{k-1, \varphi} + \norm{\rho^\delta\ol{\ms{f}}^1}_{k-1, \varphi}\lesssim 1\text{.}
    \end{equation*}
    From \ref{local_estimate} of Proposition \ref{prop_integration}, the following bounds hold: 
    \begin{align*}
        &\begin{aligned}
            \abs{\ms{m}}_{k,\varphi} \lesssim 1 + \rho\int_{\rho}^{\rho_0} \sigma^{-2}[\abs{\ms{tr}_{\ms{g}}\ms{m}}_{k, \varphi} + \sigma\abs{\ms{m}}_{k, \varphi} + \sigma^{3-2\delta} |\sigma^\delta\cdot{\ol{\ms{f}}^1}|_{k,\varphi}]\vert_\sigma d\sigma
        \end{aligned}\\
        &\begin{aligned}
            \abs{\ms{tr}_{\ms{g}}\ms{m}}_{k,\varphi} \lesssim 1 + \int_{\rho}^{\rho_0} [\abs{\ms{m}}_{k, \varphi} + \sigma^{2-2\delta} |\sigma^\delta\cdot{\ol{\ms{f}}^1}|_{k,\varphi}]\vert_\sigma d\sigma
        \end{aligned}\\
        &\begin{aligned}
            \abs{\rho^\delta\ol{\ms{f}}^1}_{k, \varphi}\lesssim 1 + \rho^\delta \int_\rho^{\rho_0}\sigma^{-1}\norm{\ol{\ms{f}}^0}_{k+1, \varphi}\vert_\sigma d\sigma\lesssim 1\text{,}
        \end{aligned}
    \end{align*}
    where we used $\norm{\ol{\ms{f}}^0}_{M_0+1, \varphi}\lesssim 1$. Using the same techniques as in the proof of Lemma \ref{lemma_first_bounds}, one obtains indeed: 
    \begin{equation*}
        \norm{\ms{m}}_{k, \varphi} + \norm{\ol{\ms{f}}^1}_{k, \varphi}\lesssim 1\text{, }
    \end{equation*}
    proving the lemma. 
\end{proof}
\begin{lemma}
    Let $n=2$ and $(U, \varphi)$ be a compact coordinate chart on $\mc{I}$. The following limits hold: 
    \begin{itemize}
        \item For the metric: 
        \begin{gather}
            \ms{g}\Rightarrow^{M_0}\mf{g}^{(0)}\text{,}\qquad \ms{g}^{-1}\Rightarrow^{M_0}\paren{\mf{g}^{(0)}}^{-1}\text{,}\label{limit_g_n=2}\\
            \ms{m}\Rightarrow^{M_0-1} 0\text{,}\label{limit_m_n=2}\\
            \rho\Lie_\rho^2\ms{m}\Rightarrow^{M_0-2}2\mf{g}^\star = \mi{D}^{(2, 0)}\paren{\mf{g}^{(0)}, \mf{f}^{0, (0)}}\text{,}\qquad \Lie_\rho\ms{m} - 2\log\rho \cdot \mf{g}^\star \rightarrow^{M_0-2}2\mf{g}^\dag\text{,}
        \end{gather}
        with $\mf{g}^\dag$ a tensor field on $\mc{I}$. 
        \item For the Maxwell fields: 
        \begin{gather}
            \ol{\ms{f}}^0 \Rightarrow^{M_0-1} \mf{f}^{0, (0)}\text{,}\label{limit_f0_n=2}\\
            \rho\Lie_\rho\ol{\ms{f}}^1\Rightarrow^{M_0-2} \mi{D}^{(1,1)}\paren{\mf{g}^{(0)}, \mf{f}^{0,(0)}}\text{,}\qquad \ol{\ms{f}}^1 - \log \rho\cdot \mf{f}^{1,\star}\rightarrow^{M_0-2}\mf{f}^{1,\dag}\label{limit_f1_n=2}\text{.}
        \end{gather}
    \end{itemize}
\end{lemma}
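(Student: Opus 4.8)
For $n=2$ the scheme is identical to the $n\geq 3$ argument: the transport equations \eqref{transport_m_n=2}--\eqref{transport_f1_n=2} are all of the form \eqref{lemma_transport}, and every assertion will be read off from Proposition \ref{prop_integration} once the right-hand sides are understood, using the uniform bounds of Lemma \ref{lemma_unif_bounds_n=2}, namely $\norm{\ms{m}}_{M_0,\varphi}+\norm{\rho^\delta\cdot\ol{\ms{f}}^1}_{M_0,\varphi}\lesssim 1$ (any $0<\delta<1$) together with $\norm{\ol{\ms{f}}^0}_{M_0+1,\varphi}\lesssim 1$, and the integrability assumption \eqref{assumption_m_f}. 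The two structural differences from $n\geq 3$ are that the ``first'' free Maxwell datum now lives at order $0$ (it is $\mf{f}^{0,(0)}$ rather than $\mf{f}^{1,(0)}$) and that $\ol{\ms{f}}^1$ is only logarithmically controlled. I would first establish the order-$0$ metric limit exactly as in \eqref{convergence_m}: $\sup_{\{\rho\}\times U}|\ms{g}-\mf{g}^{(0)}|_{M_0,\varphi}\lesssim\int_0^\rho\norm{\ms{m}}_{M_0,\varphi}\,d\sigma\to 0$, while the associated weighted integral is finite because $\sigma^{-1}|\ms{g}-\mf{g}^{(0)}|_{M_0,\varphi}\lesssim\norm{\ms{m}}_{M_0,\varphi}\lesssim 1$; hence $\ms{g}\Rightarrow^{M_0}\mf{g}^{(0)}$, and Proposition \ref{prop_bounds_metric} then delivers $\ms{g}^{-1}\Rightarrow^{M_0}(\mf{g}^{(0)})^{-1}$ and the rapid convergence of $\ms{Rc},\ms{Rs}$ (with the customary two-derivative loss).

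\textbf{Maxwell fields.} From \eqref{transport_f0_n=2} write $\Lie_\rho\ol{\ms{f}}^0=\mc{S}(\ms{g};\ms{m},\ol{\ms{f}}^0)+\rho^{1-\delta}\mc{S}(\ms{g};\rho^\delta\ms{D}\ol{\ms{f}}^1)$; the first summand is integrable in $C^{M_0-1}$ by Lemma \ref{lemma_unif_bounds_n=2} and \eqref{assumption_m_f}, and the second is $\mc{O}(\rho^{1-\delta})$ in $C^{M_0-1}$. Integrating gives $\ol{\ms{f}}^0\Rightarrow^{M_0-1}\mf{f}^{0,(0)}$ and, since the right-hand side above carries a factor $\rho$, also $\rho\Lie_\rho\ol{\ms{f}}^0\Rightarrow^{M_0-1}0$. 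By \eqref{limits_DA} this upgrades to $\ms{D}\ol{\ms{f}}^0\Rightarrow^{M_0-2}\mf{D}\mf{f}^{0,(0)}$, so the right-hand side $\mc{S}(\ms{D}\ol{\ms{f}}^0)$ of \eqref{transport_f1_n=2} converges rapidly in $C^{M_0-2}$ to $\mf{f}^{1,\star}:=2\,\mf{D}_{[a}\mf{f}^{0,(0)}_{b]}=\mi{D}^{(1,1)}(\mf{g}^{(0)},\mf{f}^{0,(0)})$. As \eqref{transport_f1_n=2} has $c=0$, item \ref{lemma_item_4} of Proposition \ref{prop_integration} (with $q_0=0$) now yields both $\rho\Lie_\rho\ol{\ms{f}}^1\Rightarrow^{M_0-2}\mf{f}^{1,\star}$ and the existence of $\mf{f}^{1,\dag}$ with $\ol{\ms{f}}^1-\log\rho\cdot\mf{f}^{1,\star}\rightarrow^{M_0-2}\mf{f}^{1,\dag}$.

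\textbf{Metric: subcritical and anomalous limits.} Equation \eqref{transport_tr_m_n=2} has $c=3$ and a right-hand side that is $\mc{S}$ of uniformly bounded quantities times strictly positive powers of $\rho$ (writing $\rho^3(\ol{\ms{f}}^1)^2=\rho^{3-2\delta}(\rho^\delta\ol{\ms{f}}^1)^2$), hence it converges rapidly to $0$; item \ref{lemma_item_2} gives $\ms{tr}_{\ms{g}}\ms{m}\Rightarrow^{M_0}0$ and $\rho\Lie_\rho(\ms{tr}_{\ms{g}}\ms{m})\Rightarrow^{M_0}0$. Substituting into \eqref{transport_m_n=2} ($c=1$), the right-hand side $\ms{G}=2\rho\ms{Rc}+\ms{tr}_{\ms{g}}\ms{m}\cdot\ms{g}+\rho\mc{S}(\ms{g};\ms{m}^2)+\rho\mc{S}(\ms{g};(\ol{\ms{f}}^0)^2)+\rho^3\mc{S}(\ms{g};(\ol{\ms{f}}^1)^2)$ converges rapidly to $0$, so item \ref{lemma_item_2} gives $\ms{m}\Rightarrow^{M_0-1}0$. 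For the anomalous behaviour, differentiate \eqref{transport_m_n=2} once in $\rho$: the $\ms{m}$-term cancels and one is left with the $c=0$ equation $\rho\Lie_\rho^2\ms{m}=\Lie_\rho\ms{G}$. I would then compute $\lim_{\rho\to 0}\Lie_\rho\ms{G}$: $\Lie_\rho(2\rho\ms{Rc})=2\ms{Rc}+2\rho\Lie_\rho\ms{Rc}\to 2\mf{Rc}^{(0)}$ (the second term vanishing via Proposition \ref{prop_riemann}); $\Lie_\rho(\ms{tr}_{\ms{g}}\ms{m}\cdot\ms{g})\to(\lim\Lie_\rho\ms{tr}_{\ms{g}}\ms{m})\cdot\mf{g}^{(0)}$, where $\lim\Lie_\rho\ms{tr}_{\ms{g}}\ms{m}$ is obtained by differentiating \eqref{transport_tr_m_n=2} --- a $c=2$ transport equation for $\Lie_\rho\ms{tr}_{\ms{g}}\ms{m}$, to which item \ref{lemma_item_2} applies --- and equals $-\mf{Rs}^{(0)}-\tfrac12\mc{S}(\mf{g}^{(0)};(\mf{f}^{0,(0)})^2)$; the term $\rho\mc{S}(\ms{g};(\ol{\ms{f}}^0)^2)$ contributes the genuinely new $\mc{S}(\mf{g}^{(0)};(\mf{f}^{0,(0)})^2)$; and all remaining terms carry a positive power of $\rho$ and vanish. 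Using the $2$-dimensional identity $2\mf{Rc}^{(0)}=\mf{Rs}^{(0)}\mf{g}^{(0)}$ the curvature contributions cancel, and one reads off $\lim\Lie_\rho\ms{G}=2\mf{g}^\star$ with $\mf{g}^\star$ as in \eqref{constraints_g} (so $\mf{g}^\star=\mi{D}^{(2,0)}(\mf{g}^{(0)},\mf{f}^{0,(0)})$). Item \ref{lemma_item_4} of Proposition \ref{prop_integration} then gives $\rho\Lie_\rho^2\ms{m}\Rightarrow^{M_0-2}2\mf{g}^\star$, and, setting $2\mf{g}^\dag$ equal to the constructed field, $\Lie_\rho\ms{m}-2\log\rho\cdot\mf{g}^\star\rightarrow^{M_0-2}2\mf{g}^\dag$.

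\textbf{Main difficulty.} The two points that require care, both specific to $n=2$, are: the absence of a uniform bound on $\ol{\ms{f}}^1$ itself, so that each occurrence of $\ol{\ms{f}}^1$ or $\ms{D}\ol{\ms{f}}^1$ in a source term (in \eqref{transport_f0_n=2}, in the stress-energy entering \eqref{transport_m_n=2}, and in the $\Lie_\rho$-derivatives one forms) must be written as $\rho^{-\delta}(\rho^\delta\ol{\ms{f}}^1)$ with the residual power of $\rho$ checked to be positive; and the computation of $\lim\Lie_\rho\ms{G}$ above, where the Maxwell contribution $\mc{S}(\mf{g}^{(0)};(\mf{f}^{0,(0)})^2)$ must be tracked precisely and the $2$-dimensional cancellation of the curvature terms exploited --- this is exactly the mechanism by which, unlike in vacuum and unlike the $n\geq 3$ case treated earlier, the anomalous coefficient $\mf{g}^\star$ acquires a dependence on the Maxwell datum $\mf{f}^{0,(0)}$.
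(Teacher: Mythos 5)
Your proposal is correct and follows essentially the same route as the paper: the $C^{M_0}$ and rapid limits of $\ms{g}$ and $\ol{\ms{f}}^0$ via the fundamental theorem of calculus and Lemma \ref{lemma_unif_bounds_n=2}, then Proposition \ref{prop_integration} (items \ref{lemma_item_2} and \ref{lemma_item_4}) applied to \eqref{transport_tr_m_n=2}, \eqref{transport_m_n=2}, \eqref{transport_f1_n=2} and their once-differentiated ($c=0$) versions to produce the anomalous and logarithmic limits. Your additional explicit computation of $\lim\Lie_\rho\ms{G}$ with the two-dimensional cancellation $2\mf{Rc}^{(0)}=\mf{Rs}^{(0)}\mf{g}^{(0)}$ goes slightly beyond what the lemma requires (the paper only records the schematic dependence $\mi{D}^{(2,0)}(\mf{g}^{(0)},\mf{f}^{0,(0)})$ here and defers the explicit form of $\mf{g}^\star$ to Proposition \ref{prop_constraints}), and is consistent with \eqref{constraints_g}.
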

\begin{proof}
    The limits \eqref{limit_g_n=2} and \eqref{limit_f0_n=2} are straightforward consequences of the fundamental theorem of calculus and the bounds from Lemma \ref{lemma_unif_bounds_n=2}: 
    \begin{align*}
        &\begin{aligned}
            \sup\limits_{\lbrace \rho \rbrace \times U}\abs{\ms{g}- \mf{g}^{(0)}}_{M_0, \varphi} &\lesssim \sup\limits_U\int_0^{\rho} \norm{\ms{m}}_{M_0, \varphi}\vert_\sigma d\sigma \rightarrow 0\text{,}\\
        \end{aligned}\\
        &\begin{aligned}
            \sup\limits_U\int_0^{\rho_0}\sigma^{-1}\abs{\ms{g}- \mf{g}^{(0)}}_{M_0, \varphi}\vert_\sigma d\sigma \lesssim \int_0^{\rho_0}\sigma^{-1}\int_0^\sigma \norm{\ms{m}}_{M_0, \varphi}\vert_\tau d\tau  d\sigma<\infty \text{,}
        \end{aligned}
    \end{align*}
    as well as
    \begin{align*}
        &\begin{aligned}
            \sup\limits_{\lbrace \rho \rbrace \times U}\abs{\ol{\ms{f}}^0- \mf{f}^{0, (0)}}_{M_0-1, \varphi} &\lesssim \sup\limits_U\int_0^{\rho} \left[\norm{\ms{m}}_{M_0-1, \varphi} + \norm{\ol{\ms{f}}^0}_{M_0-1, \varphi} + \sigma^{1-\delta}\norm{\rho^\delta\cdot \ol{\ms{f}}^1}_{M_0, \varphi}\right]\vert_\sigma d\sigma \rightarrow 0\text{,}\\
        \end{aligned}\\
        &\begin{aligned}
            \sup\limits_U\int_0^{\rho_0}\sigma^{-1}\abs{\ol{\ms{f}}^0- \mf{f}^{0, (0)}}_{M_0, \varphi}\vert_\sigma d\sigma \lesssim \int_0^{\rho_0}\sigma^{-1}\int_0^\sigma \left[\norm{\ms{m}}_{M_0-1, \varphi} + \norm{\ol{\ms{f}}^0}_{M_0-1, \varphi} + \tau^{1-\delta}\norm{\rho^\delta\cdot \ol{\ms{f}}^1}_{M_0, \varphi}\right]\vert_\tau d\tau  d\sigma<\infty \text{.}
        \end{aligned}
    \end{align*}
    These two limits immediately imply, using Proposition \ref{prop_bounds_metric}, Equations \eqref{transport_tr_m_n=2}, \eqref{transport_m_n=2} and \eqref{transport_f1_n=2}: 
    \begin{align}
        &\ms{g}^{-1}\Rightarrow^{M_0}(\mf{g}^{(0)})^{-1}\text{,}\qquad \ms{Rc}\text{, }\ms{Rs}\Rightarrow^{M_0-2}\mi{D}^{2}\paren{\mf{g}^{(0)}}\text{,}\label{limits_gR_n=2}\\
        &\label{intermediate_m_n=2_limit}\ms{m}\Rightarrow^{M_0-1}0\text{,}\qquad  \rho\Lie_\rho \ms{m}\Rightarrow^{M_0-1} 0\text{,}\\
        &\rho\Lie_\rho \ol{\ms{f}}^1 \Rightarrow^{M_0-2} \mf{f}^{1, \star}=\mi{D}^{(1, 1)}\paren{\mf{g}^{(0)}, \mf{f}^{0, (0)}}\text{.}
    \end{align}
    From \ref{lemma_item_4} of Proposition \ref{prop_integration}, there exists a tensor field $\mf{f}^{1,\dag}$ such that: 
    \begin{equation*}
        \ol{\ms{f}}^1 - \log\rho\cdot \mf{f}^{1,\star} \rightarrow^{M_0-2} \mf{f}^{1,\star}\text{.}
    \end{equation*}
    Furthermore, using \eqref{intermediate_m_n=2}, Lemma \ref{lemma_unif_bounds_n=2} and \eqref{limit_f0_n=2}:
    \begin{equation}
        \label{intermediate_f0_n=2}\Lie_\rho\ol{\ms{f}}^0 \Rightarrow^{M_0-1}0\text{.}
    \end{equation}
    The limits \eqref{limit_f0_n=2}, \eqref{limit_f1_n=2}, the bounds from Lemma \ref{lemma_unif_bounds_n=2} and \eqref{intermediate_f0_n=2} yield: 
    \begin{gather*}
        \Lie_\rho\paren{\rho\mc{S}\paren{\ms{g}; (\ol{\ms{f}}^0)^2} + \rho^3\mc{S}\paren{\ms{g}; (\ol{\ms{f}}^1)^2}}\Rightarrow^{M_0-1} \mi{D}^{(0, 0)}\paren{\mf{g}^{(0)}, \mf{f}^{0, (0)}}\text{.}
    \end{gather*}
    Using \eqref{limits_gR_n=2}, \eqref{intermediate_m_n=2_limit}
    \begin{equation*}
        \ms{tr}_{\ms{g}}\Lie_\rho\ms{m}\Rightarrow^{M_0-2} \mi{D}^{(2, 0)}\paren{\mf{g}^{(0)}, \mf{f}^{0,(0)}}\text{,}\qquad \rho\Lie_\rho\paren{ \ms{tr}_{\ms{g}}\Lie_\rho\ms{m}}\Rightarrow^{M_0-2}0\text{,}
    \end{equation*}
    implying immediately, from \eqref{transport_m_n=2}:
    \begin{equation*}
        \rho\Lie_\rho^2\ms{m}\Rightarrow^{M_0-2}2\mf{g}^\star =\mi{D}^{(2,0)}\paren{\mf{g}^{(0)}, \mf{f}^{0, (0)}}\text{.}
    \end{equation*}
    From \ref{lemma_item_4} of Proposition \ref{prop_integration}, there exists a tensor field $\mf{g}^{\dag}$ on $\mc{I}$ such that: 
    \begin{equation*}
        \Lie_\rho\ms{m} -2\log\rho\cdot \mf{g}^\star \rightarrow^{M_0-2} 2\mf{g}^\dag\text{,} 
    \end{equation*}
    concluding the proof. 
 \end{proof}

\subsubsection{Proof of Proposition \ref{prop_constraints}}

\begin{proof}
Let us start by proving the constraints \eqref{anomalous_tr}, \eqref{critical_div} on the trace of $\mf{g}^{(n)}$ and $\mf{g}^\star$. Such identities can be proven from \eqref{higher_transport_tr_m}, with $k=n-1$: 
\begin{equation*}
    \begin{aligned}
            \rho\Lie_\rho \ms{tr}_{\ms{g}}\paren{\Lie_\rho^{n} \ms{g}} - n\cdot \ms{tr}_{\ms{g}}(\Lie_\rho^n \ms{g}) = \Lie_\rho^{n-1} \ms{G}_0\text{,}
        \end{aligned}
\end{equation*}
where we replaced $\ms{m} =\Lie_\rho\ms{g}$, and with $\ms{G}_0$ a vertical tensor field satisfying \footnote{See proof of Theorem \ref{theorem_main_fg}}:
\begin{equation*}
    \Lie_\rho^{n-1} \ms{G}_0 \Rightarrow^{M_0-n} \begin{cases}
        \mi{D}^{(n, n-4)}\paren{\mf{g}^{(0)}, \mf{f}^{1, (0)}; n} \text{,}\qquad &n\geq 3\text{,}\\
        \mi{D}^{(2, 0)}\paren{\mf{g}^{(0)}, \mf{f}^{0, (0)}}\text{,}\qquad &n=2\text{,}
    \end{cases}
\end{equation*}
which implies, using \ref{lemma_item_2} of Proposition \ref{prop_integration}: 
\begin{equation}\label{intermediate_limit_tr_gn}
    \ms{tr}_{\ms{g}}\Lie_\rho^n \ms{g}\Rightarrow^{M_0-n} \begin{cases}
        \mi{D}^{(n, n-4)}\paren{\mf{g}^{(0)}, \mf{f}^{1, (0)}; n} \text{,}\qquad &n\geq 3\text{,}\\
        \mi{D}^{(2, 0)}\paren{\mf{g}^{(0)}, \mf{f}^{0, (0)}}\text{,}\qquad &n=2\text{,}
    \end{cases}\qquad \rho\Lie_\rho\ms{tr}_{\ms{g}}(\Lie_\rho^n \ms{g} )\Rightarrow^{M_0-n} 0\text{.}
\end{equation}
This immediately implies the following: 
\begin{align*}
    n! \log\rho \cdot \ms{tr}_{\ms{g}}\mf{g}^{\star} &=  \ms{tr}_{\ms{g}}\Lie_\rho^n\ms{g} -  \ms{tr}_{\ms{g}}\paren{\Lie_\rho^n\ms{g} - n!\log \rho\cdot \mf{g}^\star} \\
    &=:I_1 + I_2\text{,}
\end{align*}
with $I_1, I_2$ converging in $C^{M_0-n}$, and, as a consequence, let $\rho>0$ fixed: 
\begin{align*}
    \sup\limits_{ U} \abs{\mf{tr}_{\ms{g}^{(0)}}\mf{g}^{\star}}_{M_0-n, \varphi} &\lesssim  \sup\limits_{\lbrace \rho \rbrace \times U} \abs{\ms{tr}_{\ms{g}}\mf{g}^{\star}-\mf{tr}_{\mf{g}^{(0)}}\mf{g}^{\star}}_{M_0-n, \varphi} + \sup\limits_{\lbrace \rho \rbrace \times U} \abs{\ms{tr}_{\ms{g}}\mf{g}^{\star}}_{M_0-n, \varphi}\\
    &\lesssim (\log \rho)^{-1}\rightarrow 0\text{,}
\end{align*}
as $\rho\searrow 0$\text{,} and where the left-hand side does not depend on $\rho$. Also: 
\begin{align*}
    \ms{tr}_{\ms{g}}\paren{\Lie_\rho^n \ms{g} - n! \log\rho\cdot \mf{g}^\star} &=\ms{tr}_{\ms{g}}\paren{\Lie_\rho^n \ms{g}} - n! \log\rho\cdot \ms{tr}_{\ms{g}}{\mf{g}^\star} \\
    &\Rightarrow^{M_0-n} n!\mf{tr}_{\mf{g}^{(0)}}\mf{g}^\dag = \begin{cases}
        \mi{D}^{(n, n-4)}\paren{\mf{g}^{(0)}, \mf{f}^{1, (0)}; n} \text{,}\qquad &n\geq 3\text{,}\\
        \mi{D}^{(2, 0)}\paren{\mf{g}^{(0)}, \mf{f}^{0, (0)}}\text{,}\qquad &n=2\text{.}
        \end{cases}
\end{align*}
Since $\mf{g}^{(n)}$ is a linear combination of $\mf{g}^\star$ and $\mf{g}^\dag$, one can easily conclude.

Next, the constraints on the divergence can be deduced from \eqref{higher_constraint_Dm}, for which one has, for $k=n$: 
\begin{align*}
     \ms{D}\cdot (\rho\Lie_\rho^{n+1} \ms{g}) = \ms{D} \paren{\ms{tr}_{\ms{g}}\rho\Lie_\rho^{n+1} \ms{g } }&+ \sum\limits_{\substack{j + j_0 + j_1 + \dots + j_\ell = n\\j<n, j_p\geq 1}} \rho\cdot \mc{S}\paren{\ms{g}; \Lie_\rho^j \ms{m}, \ms{D}\Lie_\rho^{j_0-1}\ms{m}, \Lie_\rho^{j_1-1}\ms{m}, \dots, \Lie_\rho^{j_\ell-1}\ms{m}}\\
            &+4\rho\Lie_\rho^{n} \ms{T}^1\text{.}
\end{align*}
One has, for $n\geq 3$:
\begin{gather*}
\rho\Lie_\rho^n\ms{T}^1\Rightarrow^{M_0-n-1}0\text{,}
\end{gather*}
as well as, for $n=2$: 
\begin{equation*}
    \rho\Lie_\rho^2 \ms{T}^1 \Rightarrow^{M_0-2} \mi{D}^{(1, 1)}\paren{\mf{g}^{(0)}, \mf{f}^{0, (0)}}\text{.}
\end{equation*}
For $n\geq 4$, these limits are straightforward consequences of \eqref{limits_rho4_rhon}. For $n=3$, one can show the following: 
\begin{gather*}
    \rho\Lie_\rho^2\ol{\ms{f}}^0\Rightarrow^{M_0-2}0\text{,}\qquad  \rho\Lie_\rho^2\ol{\ms{f}}^1\Rightarrow^{M_0-2}0\\
    \rho^2\Lie_\rho^3\ol{\ms{f}}^0\Rightarrow^{M_0-3}0\text{,}\qquad  \rho^2\Lie_\rho^3\ol{\ms{f}}^1\Rightarrow^{M_0-3}0\text{.}
\end{gather*}
Let us briefly justify such limits. Equations \eqref{transport_f_0} and \eqref{transport_f_1}, on which one applies $\rho\Lie_\rho$, give: 
\begin{align*}
    &\rho\Lie_\rho^2\ol{\ms{f}}^0 = \mc{S}\paren{\ms{g}; \rho\Lie_\rho\ms{m}, \ol{\ms{f}}^0} + \mc{S}\paren{\ms{g};\ms{m}, \rho\Lie_\rho \ol{\ms{f}}^0} - \ms{D}\cdot \rho\Lie_\rho\ol{\ms{f}}^1 + \rho\mc{S}\paren{\ms{g}; \ms{Dm}, \ol{\ms{f}}^1}\Rightarrow^{M_0-3}0\text{,}\\
    &\rho\Lie_\rho^2 \ol{\ms{f}}^1_{ab} = 2\ms{D}_{[a}(\rho\Lie_\rho\ol{\ms{f}}^0)_{b]} + \rho\mc{S}\paren{\ms{g}; \ms{Dm}, \ol{\ms{f}}^0}{}_{ab}\Rightarrow^{M_0-3}0\text{,}
\end{align*}
where we used the limits and bounds from Theorem \ref{theorem_main_fg}. The higher-order limits follow from the same reasoning. 
Using Theorem \ref{theorem_main_fg}, and the freshly obtained limit \eqref{intermediate_limit_tr_gn}\footnote{Note that only the stress-energy tensor terms will yield nontrivial contributions.}: 
\begin{equation*}
    \ms{D}\cdot (\rho\Lie_\rho^{n+1}\ms{g})\Rightarrow^{M_0-n-1} n!\mf{D}\cdot \mf{g}^\star = 
    \begin{cases}
        \mi{D}^{(n-3,n-3)}\paren{\mf{g}^{(0)}, \mf{f}^{1,(0)}; n}\text{,}\qquad &n\geq 3\text{,}\\
        \mi{D}^{(3, 1)}\paren{\mf{g}^{(0)}, \mf{f}^{0,(0)}}\text{,}\qquad &n=2\text{.}
    \end{cases}
\end{equation*}
Note that in the vacuum case, such a limit is generally trivial \cite{shao:aads_fg}. As a consequence, the divergence of the anomalous term in the metric expansion is sourced by the matter fields. 
In Equation \ref{higher_constraint_Dm}, with $k=n-1$, one will find a divergent term of the form\footnote{Note that for $n$ odd, no such term appears.}: 
\begin{equation*}
    \begin{cases}
        \Lie_\rho^{n-1}\ms{T}^1\sim \mc{S}\paren{\ms{g}; \Lie_\rho^{n-4}\ol{\ms{f}}^0, \ol{\ms{f}}^1} + \text{l.o.t.}\text{,}\qquad &n\geq 4\\
        \ms{T}^1_a = \frac{1}{2}\ms{g}^{cd}\ol{\ms{f}}^0_c \ol{\ms{f}}^1_{ad}\text{,}\qquad &n=2\text{,}
    \end{cases}
\end{equation*}
where $l.o.t$ contains converging contributions, and where this term converges only if one removes the logarithmic contribution. More precisely, there exists $\tilde{\mf{g}}$ a symmetric two-tensor on $\mc{I}$ of the form: 
\begin{equation*}
    \tilde{\mf{g}} = 
    \begin{cases}
        \mi{D}^{(n+1, n-3)}\paren{\mf{g}^{(0)}, \mf{f}^{1,(0)}; n}\text{,}\qquad &n\geq 3\text{,}\\
        \mi{D}^{(3, 1)}\paren{\mf{g}^{(0)}, \mf{f}^{0,(0)}}\text{,}\qquad &n=2\text{,}
    \end{cases}
\end{equation*}
such that: 
\begin{equation*}
    \ms{D}\cdot \Lie_\rho^n \ms{g} - n! \log\rho\cdot \tilde{\mf{g}}\rightarrow^{M_0-n} \begin{cases}
        \mi{D}^{(n+1, n-3)}\paren{\mf{g}^{(0)}, \mf{f}^{1,(0)}; n}\text{,}\qquad &n\geq 3\text{,}\\
        \mi{D}^{(3, 1)}\paren{\mf{g}^{(0)}, \mf{f}^{0,(0)}}\text{,}\qquad &n=2\text{.}
    \end{cases}
\end{equation*}
By unicity of the limit, one must have: 
\begin{equation*}
    \sup\limits_{\lbrace \rho \rbrace \times U}\abs{\ms{D}\cdot \mf{g}^\star - \tilde{\mf{g}}}_{M_0-n-1, \varphi} \rightarrow 0\text{,}
\end{equation*}
proving indeed: 
\begin{equation*}
    \ms{D}\cdot \paren{\Lie_\rho^n \ms{g} - n! \log\rho\cdot {\mf{g}}^\star}\rightarrow^{M_0-n} n!\mf{D}\cdot \mf{g}^\dag=\begin{cases}
        \mi{D}^{(n+1, n-3)}\paren{\mf{g}^{(0)}, \mf{f}^{1,(0)}; n}\text{,}\qquad &n\geq 3\text{,}\\
        \mi{D}^{(3, 1)}\paren{\mf{g}^{(0)}, \mf{f}^{0,(0)}}\text{,}\qquad &n=2\text{.}
    \end{cases}
\end{equation*}
Since $\mf{g}^{(n)}$ is a linear combination of $\mf{g}^\star$ and $\mf{g}^{\dag}$, the conclusion follows.

The constraints \eqref{constraint_f} are direct consequences of Propositions \ref{prop_constraints}, \ref{prop_higher_constraints} and Equations \eqref{equation_d_f1} and \eqref{equation_div_f0}. 

\end{proof}
\section{Unique continuation}\label{chap:UC}

This section will be dedicated to proving the main theorem of this work, Theorem \ref{thm_UC_main}. In order to do so, we will derive first in Section \ref{sec_wave_transport} an appropriate wave-transport system of vertical fields involving the metric, the Weyl curvature, as well the Maxwell fields. Only certain vertical decomposition will give rise to an appropriate system, that one will be able to implement in the Carleman estimates derived in \cite{Shao22}, \cite{Holzegel22}. 

Since in Theorem \ref{thm_UC_informal}, one compares $(g,F)$ in $(\mc{M},g)$ and $(\check{g}, \check{F})$ in $(\mc{M}, \check{g})$, two Maxwell-FG-aAdS segments,  one may want to study the wave-transport system satisfied by the difference of the appropriate vertical decomposition. This is the purpose of Section \ref{sec:difference}. The main challenge of this section, also encountered in \cite{Holzegel22}, is the treatment of the wave operator. More precisely, given two metrics on $\mc{M}$: 
\begin{equation}\label{two_FG_aAdS}
    g = \rho^{-2}\paren{d\rho^2 + \ms{g}_{ab}dx^adx^b}\text{,}\qquad g = \rho^{-2}\paren{d\rho^2 + \check{\ms{g}}_{ab}dx^adx^b}\text{,}
\end{equation}
and two vertical fields $\ms{A}, \, \check{\ms{A}}$ on $(\mc{M}, g)$ and $(\mc{M}, \check{g})$ satisfying the following vertical wave equations: 
\begin{gather*}
    \Box_{\ms{g}} \ms{A} = \ms{N}\text{,}\qquad \Box_{\check{\ms{g}}}\check{\ms{A}} = \check{\ms{N}}\text{,}
\end{gather*}
with $\ms{N}, \, \check{\ms{N}}$ appropriate (non-linear) source terms, one has: 
\begin{align*}
    \Box_{\ms{g}}\paren{\ms{A} - \check{\ms{A}}} &= \Box_{\ms{g}}\ms{A} - \Box_{\check{\ms{g}}}\check{\ms{A}} + \paren{\Box_{\ms{g}} - \Box_{\check{\ms{g}}}}\check{\ms{A}}\\
    &=\underbrace{\ms{N} - \check{\ms{N}}}_{\text{good}} + \underbrace{\paren{\Box_{\ms{g}} - \Box_{\check{\ms{g}}}}\check{\ms{A}}}_{\text{bad}}\text{,}
\end{align*}
where ``bad" terms contain second order derivatives of $\ms{g}$, which can not be controlled in the Carleman estimates. In Section \ref{sec:difference}, we follow the renormalisation procedure introduced in \cite{Holzegel22} and show that one can indeed obtain a wave-transport system, closed for the purpose of the Carleman estimates. 

In Section \ref{sec:higher}, we improve the decay with respect to $\rho$ of the vertical fields, at the price of vertical regularity. The purpose of this procedure is to get rid of boundary terms in the Carleman estimates. 

We finally use the above results to show an intermediate proposition. Namely, for two FG-aAdS segment $(\mc{M}, g)$ $(\mc{M}, \check{g})$ as in \eqref{two_FG_aAdS}, such that ${g}$ and $\check{g}$, as well as $F$ and $\check{F}$ have identical boundary data on a domain $\mi{D}\subset\mc{I}$, implies necessarily: 
\begin{equation*}
    g \equiv g\text{,} \qquad F\equiv \check{F}\text{,}
\end{equation*}
in a neighbourhood of $\lbrace 0\rbrace \times \mi{D}$. To prove this proposition, we inject the wave-transport system obtained in the previous section in the Carleman estimates and follow a standard procedure. Note, however, that $\mc{D}$ has to satisfy a certain pseudoconvexity condition with respect to $\mf{g}^{(0)}$. 

The last part of this work, Section \ref{sec:full_result}, discusses the gauge freedom and presents the full result, namely Theorem \ref{thm_UC_main}. 

\subsection{Wave-transport system}\label{sec_wave_transport}

First of all, let us remind the different vertical fields that we will use throughout this section: 
\begin{gather*}
    \ms{m}_{ab} := \Lie_\rho \ms{g}_{ab}\text{,} \\
    {\ms{f}}^0_a := \rho F_{\rho a}\text{,}\qquad {\ms{f}}^1_{ab} := \rho F_{ab}\text{,}\\
    \ms{w}^0_{abcd} := \rho^2 W_{abcd}\text{,}\qquad \ms{w}^1_{abc}:= \rho^2 W_{\rho abc} \text{,}\qquad \ms{w}^2_{ab} := \rho^2 W_{\rho a \rho b}\text{.}
\end{gather*}

\begin{remark}
    Here, we will use the same decomposition of the stress-energy tensors as in Chapter \ref{sec:FG_expansion}, see Definition \ref{def_decomposition_T}. In the gauge \eqref{FG-gauge}, these take the form: 
    \begin{gather*}
        \tilde{\ms{T}}_{ab}^0 = T_{ab} - \frac{1}{n}\paren{T_{\rho\rho} + \ms{g}^{cd}T_{cd}}\ms{g}_{ab}\text{,}\qquad \tilde{\ms{T}}^2 =\frac{n-1}{n}T_{\rho\rho} - \frac{1}{n}\ms{g}^{ab}T_{ab}\text{,}\\
        \text{,}\\
        \ms{T}^0_{ab} = \paren{\ms{f}^0_a \ms{f}^0_b-\frac{1}{2}\ms{g}_{ab}(\ms{f}^0)^2} + \paren{\ms{g}^{cd}\ms{f}^1_{ac}\ms{f}^1_{bd}- \frac{1}{4}\ms{g}_{ab}(\ms{f}^1)^2}\text{,}\qquad  \ms{T}^2 = \frac{1}{2}(\ms{f}^0)^2 - \frac{1}{4} (\ms{f}^1)^2\text{.}
    \end{gather*}
\end{remark}

Furthermore, we will also need the following vertical field: 
\begin{definition}\label{def_w_star}
    Let $(\mc{M},g, F)$ be a Maxwell-FG-aAdS segment with $n>2$. We define $\ms{w}^\star$ to be the following $(0,4)-$tensor: 
    \begin{equation*}
        \ms{w}^\star := \ms{w}^0 + \frac{1}{n-2}\ms{w}^2\star \ms{g}\text{.}
    \end{equation*}
\end{definition}

We remind here the expressions for the vertical decomposition of the Weyl tensor:
\begin{proposition}
    Let $(\mc{M}, g, \Phi)$ be a non-vacuum aAdS segment and let $(U, \varphi)$ be a compact coordinate system on $\mc{I}$. Then, the following hold: 
    \begin{align}
        &\label{w_0_FG}\ms{w}^0 = \ms{R} - \frac{1}{8}\ms{m}\star \ms{m}  + \frac{1}{2\rho}\ms{g}\star \ms{m} - \frac{1}{n-1}\tilde{\ms{T}}^0\star \ms{g}\text{,}\\
        &\label{w_1_FG}\ms{w}^1_{abc} = \ms{D}_{[c}\ms{m}_{b]a} - \frac{2}{n-1}T_{\rho [b}\ms{g}_{c]a}\text{,}\\
        &\label{w_2_FG}\ms{w}^2_{ab} = -\frac{1}{2}\Lie_\rho \ms{m}_{ab} + \frac{1}{2\rho} \ms{m}_{ab}+\frac{1}{4} \ms{g}^{cd}\ms{m}_{ac}\ms{m}_{bd} - \frac{1}{n-1}\paren{\tilde{\ms{T}}^2\ms{g}_{ab} + \tilde{\ms{T}}^0_{ab}}\text{.}
    \end{align}
\end{proposition}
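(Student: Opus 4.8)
The three identities \eqref{w_0_FG}, \eqref{w_1_FG}, \eqref{w_2_FG} are exactly the statements of Proposition \ref{prop_w_FG}, whose proof already appears in the excerpt; the present proposition simply re-displays them because the fields $\ms{w}^0,\ms{w}^1,\ms{w}^2$ and the stress-energy decomposition $\tilde{\ms{T}}^0,\tilde{\ms{T}}^2$ are the ones needed in Section \ref{sec_wave_transport}. Accordingly the proof is a direct citation: invoke Proposition \ref{prop_w_FG}, noting that the hypotheses there (non-vacuum aAdS segment, compact chart $(U,\varphi)$) coincide with the ones assumed here. Nothing further is required, and I would state this in one sentence.

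For completeness, I would briefly recall the mechanism behind Proposition \ref{prop_w_FG}, in case the reader wants the argument self-contained at this point in the text. One starts from the decomposition \eqref{sec:aads_weyl_equation} of the Weyl tensor, $W = \operatorname{R}[g] + \tfrac12 g\star g - \tfrac{1}{n-1}g\star\tilde T$, valid for any non-vacuum FG-aAdS segment by Proposition \ref{prop_aads_einstein}. One then contracts with the coordinate vector fields $\partial_a,\partial_b,\partial_c,\partial_d$ (resp.\ with one or two $\partial_\rho$'s) and uses the Gauss, Codazzi and Ricci equations for the level sets $\{\rho=\sigma\}$ together with the Christoffel symbols of Proposition \ref{prop_christoffel_FG}. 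The key point is that the singular term $\tfrac{1}{2\rho^2}(\ms{g}\star\ms{g})$ coming from $\tfrac12 g\star g$ cancels against the identical singular term produced by the Gauss equation \eqref{Gauss_eq}, leaving only the $\rho^{-1}$ and regular terms recorded in \eqref{w_0_FG}; similarly the $\rho^{-2}\ms{g}_{ab}$ term in the Ricci equation cancels against the $\Lambda g$ contribution, yielding \eqref{w_2_FG}, and \eqref{w_1_FG} follows from the Codazzi identity, the $\tfrac12 g\star g$ piece contributing nothing to the mixed component $W_{\rho abc}$ after antisymmetrisation.

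I would then translate the abstract error terms into the stated ones by multiplying through by $\rho^2$ (so that $\rho^2 W_{abcd}=\ms{w}^0_{abcd}$, $\rho^2 W_{\rho abc}=\ms{w}^1_{abc}$, $\rho^2 W_{\rho a\rho b}=\ms{w}^2_{ab}$) and by recording $\ms{m}:=\Lie_\rho\ms{g}$, $\ms{T}^1_a=T_{\rho a}$, and $\tilde{\ms{T}}^0,\tilde{\ms{T}}^2$ as in Definition \ref{def_decomposition_T}. Since there is genuinely no new content, there is no real obstacle here; the only thing to be careful about is sign and factor bookkeeping in the $\star$-operation and in the Codazzi term, which one checks against Proposition \ref{prop_w_FG}. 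Hence the proof reduces to:
\begin{proof}
    This is precisely Proposition \ref{prop_w_FG}, restated here for convenience; see its proof above.
\end{proof}
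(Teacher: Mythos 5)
Your proposal is correct and matches the paper: the proposition in Section \ref{sec_wave_transport} is indeed just a restatement of Proposition \ref{prop_w_FG}, which the paper leaves without a new proof, and the mechanism you recall (decomposing \eqref{sec:aads_weyl_equation} via the Gauss, Codazzi and Ricci equations on level sets of $\rho$, with the singular $\rho^{-2}$ terms cancelling) is exactly the paper's argument. The only nit is phrasing: the cancellation in the $\ms{w}^2$ and $\ms{w}^1$ components comes from the $\tfrac12\,g\star g$ term in the Weyl formula (and from $g_{\rho a}=0$ in the FG gauge), rather than from a separate ``$\Lambda g$'' contribution, but this does not affect the correctness of the argument.
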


Before stating the useful transport equations, we will need to define the following higher-derivative tensors: 
\begin{definition}
    Let $(M, g, F)$ be a Maxwell-FG-aAdS segment. We will denote 
    \begin{equation*}
        H:= \nabla F\text{,}
    \end{equation*} 
    and define,  with respect to any local coordinate system, the following vertical fields: 
    \begin{gather}
        \ms{h}^0_{abc} := \rho H_{abc}\text{,}\qquad \ms{h}^1_{ab} :=\rho H_{\rho ab}\text{,}\\
        \notag\ms{h}^2_{ab}:= \rho H_{a\rho b}\text{,}\qquad \ms{h}^3_a := \rho H_{\rho\rho a}\text{.}
    \end{gather}
    Furthermore, define the trace of $\ms{h}^0$ as the following $(0,1)$--vertical tensor: 
    \begin{equation*}
        \ms{tr}_{\ms{g}}\ms{h}^0_c = \ms{g}^{ab}\ms{h}^0_{abc}\text{.}
    \end{equation*}
\end{definition}

\begin{remark}
    Observe that, using Bianchi's identity and Maxwell's equations for $F$, only two of these fields are independent. One indeed has: 
    \begin{equation}\label{property_h}
        \ms{h}^1_{ab} = 2\ms{h}^2_{[ab]}\text{,}\qquad \ms{h}^3_a = -\ms{tr}_{\ms{g}}\ms{h}^0_a\text{.}
    \end{equation}
    Furthermore, the antisymmetric part of $\ms{h}^{0}$ identically vanishes: 
    \begin{equation}\label{property_h_0}
        \ms{h}^0_{[abc]} =0\text{.}
    \end{equation}
\end{remark}

Let us also define the following vertical fields: 
\begin{definition}\label{def_h_bar}
     Let $(\mc{M}, g, F)$ be a Maxwell-FG-AdS segment. We define the following vertical tensors: 
     \begin{gather*}
         \ul{\ms{h}}^0_{abc} := \ms{D}_a\ms{f}^1_{bc}\text{,}\qquad \ul{\ms{h}}^2_{ab} := \ms{D}_a\ms{f}^0_{b}\text{.}
     \end{gather*}
\end{definition}
\begin{remark}
    In this work, we will use the vertical fields $\ul{\ms{h}}^0$ and $\ul{\ms{h}}^2$ in the Carleman estimates and thus correspond to the fields of interest. The fields $\ms{h}^i$ will mainly be used for convenience in the computations, since they directly correspond to the naive vertical decomposition of $H$. 
\end{remark}

The vertical tensors $\ms{h}$ and $\ul{\ms{h}}$ simply differ by a zeroth-order contribution and lower-order terms. 

\begin{proposition}\label{h_as_Df}
Let $(\mc{M}, g, F)$ be a Maxwell-FG-aAdS segment. Then, the following holds:
    \begin{gather}\label{expression_h_Df}
        {\ms{h}}^0_{abc} = \ul{\ms{h}}^0_{abc} - 2\rho^{-1}\ms{g}_{a[b}\ms{f}^0_{c]}+\mc{S}\paren{\ms{m}, {\ms{f}^0}}_{abc}\text{,}\qquad 
        \ms{h}^{2}_{ab}=\ul{\ms{h}}^2_{ab}+\rho^{-1}\ms{f}^1_{ab}+\mc{S}\paren{\ms{g}; \ms{m}, {\ms{f}}^1}_{ab}\text{.}
    \end{gather}
\end{proposition}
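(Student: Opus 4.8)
The plan is to prove Proposition \ref{h_as_Df} by directly unravelling the definitions of $H = \nabla F$, the vertical decompositions $\ms{h}^0$, $\ms{h}^2$ and $\ul{\ms{h}}^0$, $\ul{\ms{h}}^2$, and then applying the vertical decomposition of covariant derivatives from Proposition \ref{sec:aads_derivatives_vertical}. The key point is that $\ms{h}^0_{abc} = \rho H_{abc} = \rho \nabla_a F_{bc}$ and $\ul{\ms{h}}^0_{abc} = \ms{D}_a \ms{f}^1_{bc} = \ms{D}_a(\rho F_{bc})$, so the two differ only through the difference between the spacetime covariant derivative $\nabla_a$ (acting on the tensor $F$ with all-vertical indices, which brings in the $\Gamma^\rho_{ab}$-type Christoffel symbols) and the vertical covariant derivative $\ms{D}_a$ (which uses only $\ms{\Gamma}$). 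Both of these are recorded explicitly in Proposition \ref{prop_christoffel_FG}.

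First I would write, for the tensor $F$ viewed with lowered vertical indices (so in the notation of Proposition \ref{sec:aads_derivatives_vertical} we have $l_1 = 0$, $l_2 = 2$ and $\ms{A} = \ms{F}^1 := F_{ab}$), the identity
\begin{equation*}
    \nabla_a F_{bc} = (\ms{D}_a \ms{F}^1)_{bc} - \rho^{-1}\ms{g}_{ab}(\ms{F}^1)^\nu_{1}{}_{c} - \rho^{-1}\ms{g}_{ac}(\ms{F}^1)^\nu_{2}{}_{b} + \mc{S}\paren{\ms{m}, \ms{F}^1}_{abc} + \mc{S}\paren{\ms{m}, (\ms{F}^1)^\nu_j}_{abc}\text{,}
\end{equation*}
where $(\ms{F}^1)^\nu_1{}_c = F_{\rho c}$ and $(\ms{F}^1)^\nu_2{}_b = F_{b\rho} = -F_{\rho b}$. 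Multiplying through by $\rho$ and using $\ms{f}^0_c = \rho F_{\rho c}$, $\ms{f}^1_{bc} = \rho F_{bc}$, the singular terms become $-\rho^{-1}\ms{g}_{ab}\ms{f}^0_c + \rho^{-1}\ms{g}_{ac}\ms{f}^0_b = -2\rho^{-1}\ms{g}_{a[b}\ms{f}^0_{c]}$, and $\rho(\ms{D}_a \ms{F}^1)_{bc} = \ms{D}_a(\rho F_{bc}) = \ul{\ms{h}}^0_{abc}$ since $\rho$ is constant on level sets of $\rho$ and hence commutes with $\ms{D}$. The remaining $\mc{S}$-terms carry a factor of $\ms{m}$ and one copy of $F_{\rho\cdot}$ (after multiplying by $\rho$, a copy of $\ms{f}^0$), giving the claimed $\mc{S}\paren{\ms{m}, \ms{f}^0}_{abc}$ remainder. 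The second identity for $\ms{h}^2_{ab} = \rho H_{a\rho b} = \rho\nabla_a F_{\rho b}$ is obtained the same way: now $F_{\rho b}$ has one $\rho$-index ($l_1 = 1$, $l_2 = 1$), so Proposition \ref{sec:aads_derivatives_vertical} produces a term $+\rho^{-1}(\ms{A}^\rho_1)_{ab} = \rho^{-1}F_{ab}$ with a positive sign and a term $-\rho^{-1}\ms{g}_{ab}(\ms{A}^\nu_1) = -\rho^{-1}\ms{g}_{ab}F_{\rho\rho} = 0$ by antisymmetry, plus the analogous $\mc{S}\paren{\ms{g}; \ms{m}, \ms{f}^1}$ lower-order terms; multiplying by $\rho$ and recognizing $\rho(\ms{D}_a F_{\rho b})$ — here one must be slightly careful since $F_{\rho b}$ is not literally vertical, but its vertical part is $\rho^{-1}\ms{f}^0_b$, and $\ms{D}_a \ms{f}^0_b = \ul{\ms{h}}^2_{ab}$ — yields $\ms{h}^2_{ab} = \ul{\ms{h}}^2_{ab} + \rho^{-1}\ms{f}^1_{ab} + \mc{S}\paren{\ms{g}; \ms{m}, \ms{f}^1}_{ab}$.

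The main obstacle is bookkeeping rather than conceptual: one must track which Christoffel symbols from Proposition \ref{prop_christoffel_FG} are "singular" (the $\rho^{-1}$ pieces, which must be kept explicitly because they survive the $\rho$-rescaling to give the zeroth-order $\rho^{-1}\ms{g}\ms{f}$ contributions) versus which are "regular" (the $\tfrac12\Lie_\rho\ms{g} = \tfrac12\ms{m}$ pieces, which get absorbed into the $\mc{S}(\ms{m},\cdot)$ remainders), and correctly account for the index permutations and signs coming from the antisymmetry of $F$ and from the position of the $\rho$-index in $F_{\rho b}$ versus $F_{b\rho}$. A clean way to organize this is simply to quote Proposition \ref{sec:aads_derivatives_vertical} with $A = F$ in the two relevant index configurations, then substitute Definition \ref{def_vert_f_w} and Definition \ref{def_h_bar}, and finally observe that every term not explicitly displayed has the structure of $\ms{g}^{\pm}$'s contracted with one $\ms{m}$ and one Maxwell component, i.e. lies in $\mc{S}(\ms{g};\ms{m},\ms{f}^0)$ respectively $\mc{S}(\ms{g};\ms{m},\ms{f}^1)$. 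No new estimates are needed; this is purely an algebraic identity valid on any Maxwell-FG-aAdS segment.
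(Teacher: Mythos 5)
Your proposal is correct and is exactly the paper's argument: the paper's proof is simply ``Immediate from Proposition \ref{sec:aads_derivatives_vertical}'', i.e.\ apply the vertical decomposition of $\nabla_a$ to $F_{bc}$ ($l_1=0$, $l_2=2$) and to $F_{\rho b}$ ($l_1=l_2=1$), keep the singular $\Gamma^\rho_{ab}$, $\Gamma^a_{b\rho}$ pieces explicitly, and absorb the $\tfrac12\ms{m}$ pieces into $\mc{S}$-remainders, which is what you do. One cosmetic slip: in your first display the extra term $\mc{S}(\ms{m},\ms{F}^1)_{abc}$ is not produced by Proposition \ref{sec:aads_derivatives_vertical} when $l_1=0$ (only the $\mc{S}(\ms{m},\ms{A}^\nu_j)$ remainders occur), and indeed your subsequent reasoning correctly retains only the terms carrying $F_{\rho\cdot}$, consistent with the stated $\mc{S}(\ms{m},\ms{f}^0)$ remainder.
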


\begin{proof}
    Immediate from Proposition \ref{sec:aads_derivatives_vertical}.
\end{proof}

\begin{proposition}[Proposition \ref{prop_transport_m}]
    Let $(\mc{M}, g)$ be a Maxwell-FG-aAds segment and let $(U, \varphi)$ be a compact coordinate system on $\mc{I}$. The following transport equations hold for $\ms{m}$: 
    \begin{align*}
        \rho\Lie_\rho \ms{m}_{ab} -(n-1)\ms{m}_{ab} =&\, 2\rho \ms{Rc}_{ab}+\ms{tr}_{\ms{g}}\ms{m}\cdot \ms{g}_{ab} +\rho\cdot \ms{g}^{cd}\ms{m}_{ac}\ms{m}_{bd} - \frac{1}{2}\rho\ms{tr}_{\ms{g}}\ms{m}\cdot \ms{m}_{ab} \\
        &-2\rho\ms{T}^0_{ab}+\frac{2\rho}{n-1}\ms{tr}_{\ms{g}}\ms{T}^0\cdot \ms{g}_{ab} +\frac{2\rho}{n-1}\ms{T}^2\cdot \ms{g}_{ab}\text{.}
    \end{align*}
\end{proposition}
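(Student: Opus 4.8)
This is the identity \eqref{transport_m} already recorded in Proposition \ref{prop_transport_m}; the proof proceeds as indicated there, from the total tracelessness of the Weyl tensor. The starting point is the relation $g^{\mu\nu}W_{\mu a\nu b}=0$. Decomposing the contraction into its $\rho$ and vertical parts and using $g^{\rho\rho}=\rho^{2}$, $g^{cd}=\rho^{2}\ms{g}^{cd}$ together with the definitions of $\ms{w}^{0},\ms{w}^{2}$ in Definition \ref{def_vert_f_w}, this becomes
\begin{equation*}
    \ms{g}^{cd}\ms{w}^{0}_{cadb} + \ms{w}^{2}_{ab} = 0\text{.}
\end{equation*}
I would then substitute the expressions \eqref{w_0_FG} and \eqref{w_2_FG} from Proposition \ref{prop_w_FG} for $\ms{w}^{0}$ and $\ms{w}^{2}$ into this identity.

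The next step is to compute the $\ms{g}$--traces of the various $\star$--products. Using the definition of $\star$ and $\ms{g}^{cd}\ms{R}_{cadb}=\ms{Rc}_{ab}$, one gets
\begin{gather*}
    \ms{g}^{cd}(\ms{g}\star\ms{m})_{cadb} = (n-2)\ms{m}_{ab} + \ms{tr}_{\ms{g}}\ms{m}\cdot\ms{g}_{ab}\text{,}\qquad \ms{g}^{cd}(\ms{m}\star\ms{m})_{cadb} = 2\ms{tr}_{\ms{g}}\ms{m}\cdot\ms{m}_{ab} - 2\ms{g}^{cd}\ms{m}_{ac}\ms{m}_{bd}\text{,}\\
    \ms{g}^{cd}(\tilde{\ms{T}}^{0}\star\ms{g})_{cadb} = (n-2)\tilde{\ms{T}}^{0}_{ab} + \ms{tr}_{\ms{g}}\tilde{\ms{T}}^{0}\cdot\ms{g}_{ab}\text{.}
\end{gather*}
Crucially, the singular contributions $\tfrac{1}{2\rho}\ms{g}^{cd}(\ms{g}\star\ms{m})_{cadb}$ coming from $\ms{w}^{0}$ and the $\tfrac{1}{2\rho}\ms{m}_{ab}$ coming from $\ms{w}^{2}$ combine into a single $\tfrac{n-1}{2\rho}\ms{m}_{ab}$, and the two quadratic traces add up to $\tfrac12\ms{g}^{cd}\ms{m}_{ac}\ms{m}_{bd}$ by symmetry of $\ms{m}$. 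Multiplying the resulting identity by $-2\rho$ and collecting terms already yields \eqref{transport_m} with $\tilde{\ms{T}}^{0}$ and $\tilde{\ms{T}}^{2}$ in place of $\ms{T}^{0},\ms{T}^{2}$.

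The last step is the stress-energy bookkeeping: one converts the trace-reversed quantities back using $\tilde{T}=T-\tfrac1n\operatorname{tr}_{g}T\cdot g$, which in the gauge \eqref{FG-gauge} gives $\tilde{\ms{T}}^{0}_{ab}=\ms{T}^{0}_{ab}-\tfrac1n(\ms{T}^{2}+\ms{tr}_{\ms{g}}\ms{T}^{0})\ms{g}_{ab}$, $\tilde{\ms{T}}^{2}=\tfrac{n-1}{n}\ms{T}^{2}-\tfrac1n\ms{tr}_{\ms{g}}\ms{T}^{0}$, and hence $\ms{tr}_{\ms{g}}\tilde{\ms{T}}^{0}=-\ms{T}^{2}$. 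A short computation shows that the combination of $\tilde{\ms{T}}$--terms appearing on the right-hand side collapses to
\begin{equation*}
    -2\rho\ms{T}^{0}_{ab} + \frac{2\rho}{n-1}\ms{tr}_{\ms{g}}\ms{T}^{0}\cdot\ms{g}_{ab} + \frac{2\rho}{n-1}\ms{T}^{2}\cdot\ms{g}_{ab}\text{,}
\end{equation*}
which is exactly the matter contribution in \eqref{transport_m}. The only delicate point — the ``main obstacle'' — is precisely this last algebraic reduction: one must keep careful track of which index pair is being traced in each $\star$--product and of the trace-reversal coefficients, so that the numerous $\ms{g}_{ab}$--proportional pieces assemble into the single $\tfrac{2\rho}{n-1}$ coefficient. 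Everything else is a direct substitution of the Gauss, Codazzi and Ricci identities recorded in Proposition \ref{prop_w_FG}.
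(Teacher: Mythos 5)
Your proposal is correct and follows essentially the same route as the paper: the paper's proof likewise starts from the tracelessness identity $g^{\mu\nu}W_{\mu a\nu b}=0$, i.e. $\ms{g}^{cd}\ms{w}^0_{cadb}=-\ms{w}^2_{ab}$, and substitutes the Gauss/Codazzi/Ricci expressions of Proposition \ref{prop_w_FG}. The only difference is that you carry out explicitly the $\star$--trace computations and the trace-reversal bookkeeping for $\tilde{\ms{T}}^0$, $\tilde{\ms{T}}^2$ that the paper leaves implicit, and these details check out.
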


\begin{proposition}[Transport equations for the Maxwell fields]\label{transport_f_UC}
Let $(\mc{M}, g)$ be a Maxwell-FG-aAdS segment and let $(U, \varphi)$ be a compact coordinate system. The vertical Maxwell fields satisfy the following transport equations: 
    \begin{gather}
        \label{transport_f0_FG}\rho \ol{\ms{D}}_\rho {\ms{f}}^0_a - (n-2){\ms{f}}^0_a =- \rho \cdot \ms{tr}_{\ms{g}}\ul{\ms{h}}^0_a + \rho\mc{S}\paren{\ms{g}; \ms{m}, {\ms{f}}^0}_a \text{,}\\
        \label{transport_f1_FG}\rho \ol{\ms{D}}_\rho {\ms{f}}^1_{ab}-{\ms{f}}^1_{ab}= 2\rho\cdot \ul{\ms{h}}^2_{[ab]}+\rho\mc{S}\paren{\ms{g};\ms{m}, {\ms{f}}^1}_{ab}\text{.}
    \end{gather}
\end{proposition}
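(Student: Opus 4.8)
The plan is to decompose, with respect to the Fefferman--Graham splitting \eqref{FG-gauge}, the source-free Maxwell equation $\nabla^\mu F_{\mu a}=0$ together with the closure condition $dF=0$, the latter written as the cyclic Bianchi identity $\nabla_\rho F_{ab}+\nabla_a F_{b\rho}+\nabla_b F_{\rho a}=0$. This mirrors the proof of Proposition \ref{prop_transport_f}, except that we track the covariant derivative $\ol{\ms{D}}_\rho$ in place of $\Lie_\rho$. The two relevant inputs are Proposition \ref{sec:aads_derivatives_vertical}, which expresses spacetime covariant derivatives in terms of $\ms{D}$, $\ol{\ms{D}}_\rho$ and $\mc{S}$-terms, and the block-diagonal form of $g$ in \eqref{FG-gauge}, giving $g^{\rho\rho}=\rho^2$ and $g^{ab}=\rho^2\ms{g}^{ab}$. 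We also use repeatedly that $\ms{g}^{bc}\ms{D}_b\ms{f}^1_{ca}=\ms{tr}_{\ms{g}}\ul{\ms{h}}^0_a$ and $\ms{D}_{[a}\ms{f}^0_{b]}=\ul{\ms{h}}^2_{[ab]}$, which are immediate from Definition \ref{def_h_bar}.

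For \eqref{transport_f1_FG} I would expand the cyclic identity termwise. Since $\rho^2F_{ab}=\rho\ms{f}^1_{ab}$, Proposition \ref{sec:aads_derivatives_vertical} gives $\nabla_\rho F_{ab}=\rho^{-2}\ol{\ms{D}}_\rho(\rho\ms{f}^1)_{ab}=\rho^{-2}\ms{f}^1_{ab}+\rho^{-1}\ol{\ms{D}}_\rho\ms{f}^1_{ab}$. Using antisymmetry $F_{b\rho}=-F_{\rho b}$ and the $(l_1,l_2)=(1,1)$ case of the same proposition, one finds $\nabla_a F_{\rho b}=\rho^{-1}\ms{D}_a\ms{f}^0_b+\rho^{-2}\ms{f}^1_{ab}+\rho^{-1}\mc{S}\paren{\ms{g};\ms{m},\ms{f}^1}_{ab}$, the $\rho^{-2}\ms{f}^1_{ab}$ term arising from replacing the $\rho$-slot by $a$ (which produces $F_{ab}=\rho^{-1}\ms{f}^1_{ab}$) while the potential $\ms{g}_{ab}$-term drops out because $F_{\rho\rho}=0$; hence $\nabla_a F_{b\rho}=-\rho^{-1}\ul{\ms{h}}^2_{ab}-\rho^{-2}\ms{f}^1_{ab}+\rho^{-1}\mc{S}\paren{\ms{g};\ms{m},\ms{f}^1}_{ab}$ and, symmetrically, $\nabla_b F_{\rho a}=\rho^{-1}\ul{\ms{h}}^2_{ba}-\rho^{-2}\ms{f}^1_{ab}+\rho^{-1}\mc{S}\paren{\ms{g};\ms{m},\ms{f}^1}_{ab}$. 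Summing the three terms, the three $\rho^{-2}\ms{f}^1_{ab}$ contributions collapse to $-\rho^{-2}\ms{f}^1_{ab}$ and $\ul{\ms{h}}^2_{ab}-\ul{\ms{h}}^2_{ba}=2\ul{\ms{h}}^2_{[ab]}$; multiplying through by $\rho^2$ gives \eqref{transport_f1_FG}.

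For \eqref{transport_f0_FG} I would write $0=\nabla^\mu F_{\mu a}=\rho^2\nabla_\rho F_{\rho a}+\rho^2\ms{g}^{bc}\nabla_b F_{ca}$. Since $\rho^2F_{\rho a}=\rho\ms{f}^0_a$, Proposition \ref{sec:aads_derivatives_vertical} gives $\rho^2\nabla_\rho F_{\rho a}=\ol{\ms{D}}_\rho(\rho\ms{f}^0)_a=\ms{f}^0_a+\rho\ol{\ms{D}}_\rho\ms{f}^0_a$. For the tangential divergence, the $(l_1,l_2)=(0,2)$ case of the proposition gives $\nabla_b F_{ca}=\rho^{-1}\ms{D}_b\ms{f}^1_{ca}-\rho^{-2}\paren{\ms{g}_{bc}\ms{f}^0_a-\ms{g}_{ab}\ms{f}^0_c}+\rho^{-1}\mc{S}\paren{\ms{g};\ms{m},\ms{f}^0}$, the $\rho^{-2}$ terms coming from replacing a vertical slot by $\rho$, which produces $F_{\rho\cdot}=\rho^{-1}\ms{f}^0$; contracting with $\ms{g}^{bc}$ and using $\ms{g}^{bc}\ms{g}_{bc}=n$ and $\ms{g}^{bc}\ms{g}_{ab}\ms{f}^0_c=\ms{f}^0_a$ gives $\rho^2\ms{g}^{bc}\nabla_b F_{ca}=\rho\,\ms{tr}_{\ms{g}}\ul{\ms{h}}^0_a-(n-1)\ms{f}^0_a+\rho\mc{S}\paren{\ms{g};\ms{m},\ms{f}^0}_a$. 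Adding the two pieces, $\ms{f}^0_a-(n-1)\ms{f}^0_a=-(n-2)\ms{f}^0_a$, and rearranging produces \eqref{transport_f0_FG}.

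There is no genuine obstacle here: the argument is a bookkeeping exercise, with two points needing care. First, one must track the powers of $\rho$ from the rescalings $\ms{f}^0_a=\rho F_{\rho a}$, $\ms{f}^1_{ab}=\rho F_{ab}$ alongside the $\rho^{\pm2}$ factors in $g$, so that the apparently singular $\rho^{-2}$ contributions are precisely cancelled by the overall $\rho^2$ and reorganise into the resonance constants $n-2$ and $1$. Second, one must apply the index-replacement rules of Proposition \ref{sec:aads_derivatives_vertical} carefully: in the tangential divergence, slots replaced by $\rho$ generate the $F_{\rho\cdot}=\rho^{-1}\ms{f}^0$ contributions responsible for the $-(n-1)\ms{f}^0_a$ term, whereas in $\nabla_a F_{\rho b}$ the analogous replacement produces $F_{\rho\rho}=0$ and drops out. (Alternatively, the statement can be read off from Proposition \ref{prop_transport_f} by rescaling $\ms{f}^0$, $\ms{f}^1$ and using $\ol{\ms{D}}_\rho\ms{A}=\Lie_\rho\ms{A}+\mc{S}\paren{\ms{g};\ms{m},\ms{A}}$ from Proposition \ref{sec:aads_prop_covariant_D}, but the direct computation is cleaner since $\ol{\ms{D}}_\rho$ appears naturally in Proposition \ref{sec:aads_derivatives_vertical}.)
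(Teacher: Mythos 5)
Your derivation is correct and follows exactly the route the paper intends: the paper states Proposition \ref{transport_f_UC} without a separate proof because it is the rescaled, $\ol{\ms{D}}_\rho$--covariant version of Proposition \ref{prop_transport_f}, whose proof likewise decomposes $\nabla^\mu F_{\mu a}=0$ and $\nabla_{[\rho}F_{ab]}=0$ vertically (via Propositions \ref{prop_christoffel_FG} and \ref{sec:aads_derivatives_vertical}). Your bookkeeping of the $\rho$--weights, the $F_{\rho\rho}=0$ cancellation, and the identification of $\ms{tr}_{\ms{g}}\ul{\ms{h}}^0$ and $2\ul{\ms{h}}^2_{[ab]}$ all check out, so nothing is missing.
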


The vertical fields $\ul{\ms{h}}^0$, $\ul{\ms{h}}^2$ satisfy the following transport equations: 
\begin{proposition}\label{prop_transport_h_bar}
    Let $(\mc{M}, g, F)$ be a Maxwell-FG-aAdS segment and let $(U, \varphi)$ be a compact coordinate system on $\mc{I}$. Then, the following transport equations are satisfied: 
    \begin{align*}
        &\rho\ol{\ms{D}}_\rho\ul{\ms{h}}^0_{abc} -\ul{\ms{h}}^0_{abc}= 2\rho\ms{D}_a\ul{\ms{h}}^2_{[bc]} + \rho\mc{S}\paren{\ms{g}; \ms{m}, \ul{\ms{h}}^0}_{abc} + \rho\mc{S}\paren{\ms{g}; \ms{Dm}, {\ms{f}}^1}_{abc}\\
        &\rho\ol{\ms{D}}_{\rho}\ul{\ms{h}}^2_{ab} - (n-2)\ul{\ms{h}}^2_{ab} = -\rho\ms{D}_a\ms{tr}_{\ms{g}}\ul{\ms{h}}^0_b +\rho\mc{S}\paren{\ms{g}; \ms{m}, \ul{\ms{h}}^2}_{ab} + \rho\mc{S}\paren{\ms{g}; \ms{Dm}, \ms{f}^0}_{ab}\text{.}
    \end{align*}
\end{proposition}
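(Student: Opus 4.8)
The plan is to derive these two transport equations by applying the Lie derivative $\Lie_\rho$ to the definitions $\ul{\ms{h}}^0_{abc} = \ms{D}_a\ms{f}^1_{bc}$ and $\ul{\ms{h}}^2_{ab} = \ms{D}_a\ms{f}^0_b$, commuting $\Lie_\rho$ past $\ms{D}$ via Proposition \ref{sec:aads_commutation_Lie_D}, and then feeding in the transport equations for $\ms{f}^0$, $\ms{f}^1$ already recorded in Proposition \ref{transport_f_UC}. The structure mirrors how one would obtain a ``commuted'' transport equation: differentiate the lower-order equation and absorb the commutator terms into the $\mc{S}(\cdot)$ error notation.

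First I would treat $\ul{\ms{h}}^0$. Starting from $\Lie_\rho \ul{\ms{h}}^0_{abc} = \Lie_\rho \ms{D}_a \ms{f}^1_{bc} = \ms{D}_a \Lie_\rho \ms{f}^1_{bc} + [\Lie_\rho, \ms{D}]\ms{f}^1$, I use \eqref{commutation_L_D} to write $[\Lie_\rho,\ms{D}]\ms{f}^1 = \mc{S}(\ms{g}; \ms{Dm}, \ms{f}^1)$. For the first term, I multiply the transport equation \eqref{transport_f1_FG} for $\ms{f}^1$ by $\rho^{-1}$, convert $\ol{\ms{D}}_\rho \ms{f}^1$ into $\Lie_\rho\ms{f}^1$ using the Christoffel expression from Proposition \ref{prop_christoffel_FG} (this only generates further $\mc{S}(\ms{g};\ms{m},\ms{f}^1)$ terms), obtaining $\Lie_\rho \ms{f}^1_{bc} = \rho^{-1}\ms{f}^1_{bc} + 2\ul{\ms{h}}^2_{[bc]} + \mc{S}(\ms{g};\ms{m},\ms{f}^1)_{bc}$, then apply $\ms{D}_a$. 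The $\rho^{-1}\ms{D}_a\ms{f}^1_{bc}$ term is exactly $\rho^{-1}\ul{\ms{h}}^0_{abc}$, which moves to the left-hand side to give the coefficient $-1$ in $\rho\ol{\ms{D}}_\rho\ul{\ms{h}}^0 - \ul{\ms{h}}^0$; the term $2\ms{D}_a\ul{\ms{h}}^2_{[bc]}$ is the advertised source; and $\ms{D}_a(\mc{S}(\ms{g};\ms{m},\ms{f}^1))$ expands by Leibniz into $\mc{S}(\ms{g};\ms{Dm},\ms{f}^1) + \mc{S}(\ms{g};\ms{m},\ms{Df}^1) = \mc{S}(\ms{g};\ms{Dm},\ms{f}^1) + \mc{S}(\ms{g};\ms{m},\ul{\ms{h}}^0)$, using $\ms{Df}^1 = \ul{\ms{h}}^0$ by definition. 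Finally I re-express $\Lie_\rho\ul{\ms{h}}^0$ in terms of $\ol{\ms{D}}_\rho\ul{\ms{h}}^0$ via Proposition \ref{prop_christoffel_FG}, absorbing the difference into $\mc{S}(\ms{g};\ms{m},\ul{\ms{h}}^0)$, and multiply through by $\rho$.

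The second equation for $\ul{\ms{h}}^2 = \ms{D}\ms{f}^0$ is handled identically: commute $\Lie_\rho$ past $\ms{D}$, then use the $\ms{f}^0$ transport equation \eqref{transport_f0_FG}, which after conversion to $\Lie_\rho$ reads $\Lie_\rho\ms{f}^0_b = (n-2)\rho^{-1}\ms{f}^0_b - \ms{tr}_{\ms{g}}\ul{\ms{h}}^0_b + \mc{S}(\ms{g};\ms{m},\ms{f}^0)_b$; applying $\ms{D}_a$ produces $(n-2)\rho^{-1}\ul{\ms{h}}^2_{ab}$ (which gives the $(n-2)$ coefficient on the left), the source $-\ms{D}_a\ms{tr}_{\ms{g}}\ul{\ms{h}}^0_b$ (note $\ms{D}_a$ and $\ms{tr}_{\ms{g}}$ commute since $\ms{D}$ is metric-compatible with $\ms{g}$), and error terms $\mc{S}(\ms{g};\ms{Dm},\ms{f}^0) + \mc{S}(\ms{g};\ms{m},\ms{Df}^0) = \mc{S}(\ms{g};\ms{Dm},\ms{f}^0) + \mc{S}(\ms{g};\ms{m},\ul{\ms{h}}^2)$. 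The only real bookkeeping obstacle is tracking which terms land in which $\mc{S}$-class — in particular making sure the $\rho^{-1}$-singular pieces cancel cleanly against the transport operator's constant coefficient and that no uncontrolled $\ms{D}^2\ms{f}$ or $\ms{Df}^0$ survives (both are reabsorbed, respectively, via $\ms{Df}^1=\ul{\ms{h}}^0$ and $\ms{Df}^0=\ul{\ms{h}}^2$ and the commutator identity). Since all ingredients (Christoffel symbols, the $[\Lie_\rho,\ms{D}]$ commutator, the first-order transport equations) are already available, this is essentially a calculation rather than a conceptual difficulty; the delicate point is simply the exact coefficient matching $-1$ and $-(n-2)$, which is forced by the $\rho^{-1}$ terms from Proposition \ref{prop_christoffel_FG} acting on the definitions of $\ul{\ms{h}}^0$ and $\ul{\ms{h}}^2$.
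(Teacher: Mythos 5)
Your proposal is correct and follows essentially the same route as the paper: the paper applies $\rho\ol{\ms{D}}_\rho$ directly to $\ul{\ms{h}}^0=\ms{D}\ms{f}^1$ and $\ul{\ms{h}}^2=\ms{D}\ms{f}^0$, commutes via \eqref{commutation_D_D}, and inserts the transport equations of Proposition \ref{transport_f_UC}, exactly as you do. Your detour through $\Lie_\rho$ and \eqref{commutation_L_D}, converting back to $\ol{\ms{D}}_\rho$ at the end, differs from the paper only by $\mc{S}(\ms{g};\ms{m},\cdot)$ terms that are absorbed into the same error classes, so the argument and the resulting coefficients are the same.
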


\begin{proof}
    See Appendix \ref{app:transport_h_bar}.
\end{proof}

We will also need the following transport equations: 
    \begin{proposition}\label{prop_transport_h}
    Let $(\mc{M}, g, F)$ be a Maxwell-FG-aAdS segment and let $(U, \varphi)$ be a compact coordinate system on $\mc{I}$. The following transport equations hold for $\ms{h}^0$ and $\ms{h}^2$: 
    \begin{align*}
        \rho \cdot \ol{\ms{D}}_\rho \ms{h}^0 =\mc{R}_{T, \ms{h}^0}\text{,}\qquad \rho \cdot \ol{\ms{D}}_\rho\ms{h}^2 =\mc{R}_{T, \ms{h}^2}\text{,}
    \end{align*}
    where we defined: 
    \begin{align}
        &\label{sec:syst_transport_h_0}\begin{aligned}
        \paren{\mc{R}_{T, \ms{h}^0}}_{abc} :=  &-\ms{h}^0_{abc} + 2\ms{g}_{a[b}\ms{tr}_{\ms{g}}\ms{h}^0_{c]} + 2\rho \cdot \ms{D}_a\ms{h}^2_{[bc]}+2\ms{g}_{a[b}\rho^{-1}\ms{f}^0_{c]}\\
        &+ \rho\mc{S}\paren{\ms{g}; \ms{m}, \ul{\ms{h}}^0}_{abc} + \rho\mc{S}\paren{\ms{w}^2, \ms{f}^0}_{abc} + \rho\mc{S}\paren{\ms{g}; \ms{w}^1, \ms{f}^1}_{abc} +\rho\mc{S}\paren{\ms{g}; (\ms{m})^2, \ms{f}^0}_{abc}\\
        &+\mc{S}\paren{\ms{g}; \ms{m}, \ms{f}^0}_{abc} +\rho\mc{S}\paren{\ms{g}; (\ms{f}^0)^3}_{abc} + \rho\mc{S}\paren{\ms{g};  \ms{f}^1, \ms{f}^1, \ms{f}^0}_{abc}
        \end{aligned}\\
        &\label{sec:syst_transport_h_2}\begin{aligned}
        \paren{\mc{R}_{T, \ms{h}^2}}_{ab} := &-\ms{h}^2_{ab} +2\ms{h}^2_{[ab]} -\rho\ms{D}_a\ms{tr}_{\ms{g}}\ms{h}^0_{b} - \rho^{-1}\ms{f}^1_{ab}\\
        &+\rho\mc{S}\paren{\ms{g}; \ms{m}, \ul{\ms{h}}^{2}}_{ab} + \rho\mc{S}\paren{\ms{g}; \ms{w}^2, \ms{f}^1}_{ab} + \rho\mc{S}\paren{\ms{g}; \ms{w}^1, \ms{f}^0}_{ab}+\mc{S}\paren{\ms{g}; \ms{m}, \ms{f}^1}_{ab}\\
        &+\rho\mc{S}\paren{\ms{g};(\ms{m})^2, \ms{f}^1}_{ab} + \rho \mc{S}\paren{\ms{g}; \ms{f}^0, \ms{f}^0, \ms{f}^1}_{ab} + \rho\mc{S}\paren{\ms{g};  (\ms{f}^1)^3}_{ab}
        \end{aligned}
    \end{align}
    \end{proposition}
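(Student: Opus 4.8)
The plan is to derive the two transport equations for $\ms{h}^0$ and $\ms{h}^2$ directly from the vertical decomposition of the spacetime identity $H = \nabla F$, rather than from any wave equation. The starting point is Proposition \ref{sec:aads_derivatives_vertical}, which expresses $\nabla_\rho$ and $\nabla_c$ acting on components of a spacetime tensor in terms of the vertical connection $\ol{\ms{D}}$, the Christoffel symbols of Proposition \ref{prop_christoffel_FG}, and $\mc{S}$-type error terms carrying factors of $\ms{m}$. Writing $H_{\alpha\beta\gamma} = \nabla_\alpha F_{\beta\gamma}$ and taking $\ol{\ms{D}}_\rho$ of the vertical fields $\ms{h}^0_{abc} = \rho H_{abc}$ and $\ms{h}^2_{ab} = \rho H_{a\rho b}$, one must commute $\ol{\ms{D}}_\rho$ past the spatial derivative that produced $H$ from $F$. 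Concretely, $\rho\ol{\ms{D}}_\rho\ms{h}^0_{abc}$ should be rewritten using $[\ol{\ms{D}}_\rho,\ms{D}]$ (Equation \eqref{commutation_D_D}), the second Bianchi identity $\nabla_{[\alpha}F_{\beta\gamma]}=0$, and the Maxwell equation $\nabla^\mu F_{\mu\alpha}=0$; these are what turn derivatives of $H$ into curvature contractions, producing the $\ms{w}^1$ and $\ms{w}^2$ terms via the Ricci identity $[\nabla_\alpha,\nabla_\beta]F = R\cdot F$ together with the decomposition \eqref{sec:aads_weyl_equation} of the Riemann tensor into Weyl plus $g\star g$ plus $g\star \tilde T$.

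The key steps, in order, would be: (i) expand $\rho\ol{\ms{D}}_\rho(\rho H_{abc})$ using the Leibniz rule and the formula for $\nabla_\rho A_{\bar\rho\bar a}$ from Proposition \ref{sec:aads_derivatives_vertical}, isolating the leading $-\ms{h}^0_{abc}$ term (from the single power of $\rho$ and the $\Gamma^\alpha_{\rho\rho}$, $\Gamma^a_{b\rho}$ coefficients) and the $2\ms{g}_{a[b}\ms{tr}_{\ms{g}}\ms{h}^0_{c]}$ and $2\ms{g}_{a[b}\rho^{-1}\ms{f}^0_{c]}$ terms (from contractions of $\ms{g}$ with $\ms{m}$ and from the defining relation $\ms{f}^0_a = \rho F_{\rho a}$); (ii) use $H_{abc} = \nabla_a F_{bc}$ and commute $\nabla_\rho$ with $\nabla_a$, converting the result into $2\rho\ms{D}_a\ms{h}^2_{[bc]}$ plus curvature terms — here one invokes $\nabla_\rho F_{bc} = 2\nabla_{[b}F_{c]\rho}$ from the closedness of $F$, which is exactly what generates the $\ms{h}^2$-divergence structure; (iii) replace the raw Riemann contractions by $\ms{w}^1$, $\ms{w}^2$ and stress-energy pieces via Proposition \ref{prop_aads_einstein} and \eqref{w_0_FG}--\eqref{w_2_FG}, noting that $\tilde{\ms{T}}$ is quadratic in $(\ms{f}^0,\ms{f}^1)$, which accounts for the cubic $\mc{S}$-terms $\mc{S}((\ms{f}^0)^3)$, $\mc{S}(\ms{f}^1,\ms{f}^1,\ms{f}^0)$, etc.; (iv) collect all remaining terms into the schematic $\mc{S}$-form, tracking the powers of $\rho$ — most error terms come with an extra $\rho$ except the genuinely anomalous $\mc{S}(\ms{g};\ms{m},\ms{f}^0)_{abc}$ and $\mc{S}(\ms{g};\ms{m},\ms{f}^1)_{ab}$ terms that are $O(1)$; (v) repeat the entire computation for $\ms{h}^2_{ab} = \rho H_{a\rho b} = \rho\nabla_a F_{\rho b}$, where the leading coefficient is now $-(n-2)$ rather than $-1$ because of the extra $\rho$-index contraction (the $\ms{f}^0$ has a built-in $\rho^{-1}$-type normalization relative to $\ms{f}^1$), and where the divergence term $-\rho\ms{D}_a\ms{tr}_{\ms{g}}\ms{h}^0_b$ and the $-\rho^{-1}\ms{f}^1_{ab}$ term appear from the same mechanism.

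The main obstacle is bookkeeping: correctly tracking the numerous Christoffel contributions (there are several distinct $\ms{A}^\rho_i$, $\ms{A}^\nu_j$, $\ms{A}^{\rho\nu}_{i,j}$ pieces once one decomposes a rank-three spacetime tensor), ensuring the singular $\rho^{-1}$ terms combine exactly into $2\ms{g}_{a[b}\rho^{-1}\ms{f}^0_{c]}$ and $-\rho^{-1}\ms{f}^1_{ab}$ with no leftover singularities, and verifying that every quadratic-in-curvature or cubic-in-$F$ remainder indeed carries the stated power of $\rho$. A secondary subtlety is using the relations \eqref{property_h} and \eqref{property_h_0} ($\ms{h}^1 = 2\ms{h}^2_{[\cdot\cdot]}$, $\ms{h}^3 = -\ms{tr}_{\ms{g}}\ms{h}^0$, $\ms{h}^0_{[abc]}=0$) consistently so that the right-hand sides are expressed purely in terms of $\ms{h}^0$, $\ms{h}^2$, $\ms{f}^0$, $\ms{f}^1$, $\ms{w}^1$, $\ms{w}^2$, $\ms{m}$ and $\ms{g}$, as claimed. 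None of this is conceptually hard once Proposition \ref{sec:aads_derivatives_vertical} and the Weyl decomposition are in hand; it is a careful but routine vertical-decomposition computation, which is why the detailed verification is deferred to Appendix \ref{app:transport_h_bar} (the analogous Proposition \ref{prop_transport_h_bar} for $\ul{\ms{h}}^0$, $\ul{\ms{h}}^2$ then follows by combining these with Proposition \ref{h_as_Df}).
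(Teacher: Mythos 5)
Your approach is essentially the paper's own: the proof there applies $\nabla_a$ to the Maxwell equation $\nabla^\mu F_{\mu b}=0$ and to the Bianchi identity $\nabla_{[\rho}F_{bc]}=0$, commutes $\nabla_a$ past $\nabla_\rho$, decomposes each resulting term vertically via Proposition \ref{sec:aads_derivatives_vertical}, and converts the Riemann contractions arising from the commutator into $\ms{w}^1$, $\ms{w}^2$ plus cubic Maxwell terms through the decomposition \eqref{sec:aads_weyl_equation} --- precisely your steps (i)--(v), with the relations \eqref{property_h} used at the end to eliminate $\ms{h}^1$ and $\ms{h}^3$. Two details in your sketch are off, though neither undermines the method. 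First, the leading coefficient for $\ms{h}^2$ is not $-(n-2)$: that coefficient belongs to $\ul{\ms{h}}^2$ in Proposition \ref{prop_transport_h_bar} (and to $\ms{f}^0$ in \eqref{transport_f0_FG}); for $\ms{h}^2$ itself the statement, and the computation, give $-\ms{h}^2_{ab}+2\ms{h}^2_{[ab]}$, the antisymmetric piece coming from $\ms{h}^1_{ab}=2\ms{h}^2_{[ab]}$ once the term $\rho^2\ms{g}^{cd}\nabla_a\nabla_cF_{db}$ is decomposed. Second, the singular terms $2\rho^{-1}\ms{g}_{a[b}\ms{f}^0_{c]}$ and $-\rho^{-1}\ms{f}^1_{ab}$ are produced by the $\tfrac{1}{2}\,g\star g$ part of the Riemann tensor inside the commutators $[\nabla_a,\nabla_\rho]F_{bc}$ and $[\nabla_a,\nabla_\rho]F_{\rho b}$, not by the normalisation $\ms{f}^0_a=\rho F_{\rho a}$; tracking this correctly is exactly what ensures no leftover $\rho^{-1}$ singularities. (Also, the proof of this proposition is in Appendix \ref{app:transport_h}; Appendix \ref{app:transport_h_bar} handles the simpler $\ul{\ms{h}}$ equations.)
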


    \begin{proof}
        See Appendix \ref{app:transport_h}.
    \end{proof}

Finally, we will also need the transport equations satisfied by the Weyl tensors: 
\begin{proposition}[Transport equation for the Weyl curvature]\label{prop_transport_w}
    Let $(\mc{M}, g)$ be a FG-aAdS segment and let $(U, \varphi)$ be a compact coordinate system on $\mc{I}$. Then, the following transport equations are satisfied: 
    \begin{align}
        &\rho\bar{\ms{D}}_\rho \ms{w}^0_{abcd} = 2\rho \cdot \ms{D}_{[a}\ms{w}^1_{b]cd} - (\ms{g}\star  \ms{w}^2)_{abcd} + (\mc{R}_{T,\ms{w}^2})_{abcd}\label{transport_w_star}\\
        &\rho \bar{\ms{D}}_{\rho}\ms{w}^1_{abc} = 2\rho \ms{D}_{[b}\ms{w}^2_{c]a} + \ms{w}^1_{abc} + (\mc{R}_{T, \ms{w}^1})_{abc}\label{transport_w_1.1}\\
        &\rho \bar{\ms{D}}_\rho\ms{w}^1_{abc} = -\rho (\ms{D}\cdot \ms{w}^0)_{abc} +(n-2)\ms{w}^1_{abc} + (\mc{R}_{T, \ms{w}^{1}})_{abc}\label{transport_w_1.2}\\
        &\rho\bar{\ms{D}}_\rho \ms{w}^2_{ab} = -\rho\cdot  \ms{g}^{cd}\ms{D}_c \ms{w}^1_{adb} + (n-2)\ms{w}^2_{ab} + (\mc{R}_{T, \ms{w}^2})_{ab}\text{,}\label{transport_w_2}
    \end{align}
    where $\mc{R}_{T,w^1}, \mc{R}_{T,\ms{w}^2}$ are remainder terms that can be written as: 
    \begin{align}
        &\notag\mc{R}_{T, \ms{w}^1} =  \rho\mc{S}\paren{\ms{g}; \ms{m}, \ms{w}^1} + \rho\mc{S}\paren{\ms{g}; \ul{\ms{h}}^0, \ms{f}^1}+\rho\mc{S}\paren{\ms{g}; \ul{\ms{h}}^{2}, \ms{f}^0} \\
        &\hspace{60pt}+\rho\mc{S}\paren{\ms{g}; \ms{m}, \ms{f}^1, \ms{f}^0} +\mc{S}\paren{\ms{g}; \ms{f}^0, \ms{f}^1}\label{R_T_w_1}\text{,}\\
        &\notag \mc{R}_{T, \ms{w}^2} =  \rho\mc{S}\paren{\ms{g}; \ms{m}, \ms{w}^0} + \rho\mc{S}\paren{\ms{g}; \ms{m}, \ms{w}^2}+\rho\mc{S}\paren{\ms{g}; \ul{\ms{h}}^0, \ms{f}^0} + \mc{S}\paren{\ms{g}; (\ms{f}^1)^2} \\  
        &\hspace{60pt}+ \rho\mc{S}\paren{\ms{g}; \ul{\ms{h}}^{2}, \ms{f}^1}+\mc{S} \paren{\ms{g}; (\ms{f}^0)^2}\notag\\
        &\hspace{60pt} + \rho\mc{S}\paren{\ms{g};\ms{m},  (\ms{f}^0)^2} + \rho\mc{S}\paren{\ms{g}; \ms{m}, (\ms{f}^1)^2}\label{R_T_w_2}\text{.}
    \end{align}
\end{proposition}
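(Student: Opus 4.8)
The plan is to derive the four transport equations from the two Bianchi-type identities satisfied by the Weyl tensor on the $(n+1)$--dimensional spacetime $(\mc{M},g)$, and then to project everything onto level sets of $\rho$ using the mixed formalism of Section \ref{sec:mixed_formalism}. First I would record the two spacetime identities. The contracted second Bianchi identity, combined with the expression \eqref{sec:aads_weyl_equation} for $W$ in terms of $\operatorname{R}[g]$, the metric and the stress-energy tensor, yields a divergence relation of the schematic form $\nabla^\delta W_{\delta\alpha\beta\gamma}=(n-2)\,\mc{C}_{\alpha\beta\gamma}$, where the Cotton-type tensor $\mc{C}$ is first order in the Schouten tensor, hence — using the Einstein equations \eqref{aads_einstein} — a first-order expression in $\tilde T$ only. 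Similarly, the uncontracted second Bianchi identity $\nabla_{[\lambda}W_{\mu\nu]\alpha\beta}$ can be rewritten, again via \eqref{sec:aads_weyl_equation}, as $g\star\mc{C}$ up to curvature--times--metric terms. These are the two sources: the uncontracted identity produces the ``curl'' equations \eqref{transport_w_star} and \eqref{transport_w_1.1} (for $\rho\ol{\ms{D}}_\rho\ms{w}^0$ from $\ms{D}\ms{w}^1$, and for $\rho\ol{\ms{D}}_\rho\ms{w}^1$ from $\ms{D}\ms{w}^2$), while the contracted identity produces the ``divergence'' equations \eqref{transport_w_1.2} and \eqref{transport_w_2} (for $\rho\ol{\ms{D}}_\rho\ms{w}^1$ from $\ms{D}\cdot\ms{w}^0$, and for $\rho\ol{\ms{D}}_\rho\ms{w}^2$ from $\ms{D}\cdot\ms{w}^1$); the two distinct expressions \eqref{transport_w_1.1}, \eqref{transport_w_1.2} for $\rho\ol{\ms{D}}_\rho\ms{w}^1$ correspond to feeding the $\rho abc$--component into the uncontracted versus the contracted identity, and I would derive each independently.

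Next I would substitute the vertical decompositions $W_{abcd}=\rho^{-2}\ms{w}^0_{abcd}$, $W_{\rho abc}=\rho^{-2}\ms{w}^1_{abc}$, $W_{\rho a\rho b}=\rho^{-2}\ms{w}^2_{ab}$ into these identities and expand every spacetime covariant derivative. Concretely, $\nabla_\rho$ acting on a component is converted to $\ol{\ms{D}}_\rho$ plus the Christoffel corrections of Proposition \ref{prop_christoffel_FG} (equivalently one uses Proposition \ref{sec:aads_derivatives_vertical} directly), and $\nabla_c$ is converted to $\ms{D}_c$ plus $\rho^{-1}$-- and $\ms{m}$--terms. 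The $\rho^{-1}$--singular pieces of those Christoffel symbols recombine with the $\rho$--scalings built into the definitions of $\ms{w}^0,\ms{w}^1,\ms{w}^2$ to produce the integer coefficients $(n-2)$, $\pm1$ and the algebraic term $-(\ms{g}\star\ms{w}^2)$ on the right-hand sides, while the $\tfrac12\Lie_\rho\ms{g}=\tfrac12\ms{m}$ pieces produce contributions of the form $\rho\,\mc{S}(\ms{g};\ms{m},\ms{w}^j)$. Multiplying through by the appropriate power of $\rho$ to normalise then gives the stated left-hand sides $\rho\ol{\ms{D}}_\rho\ms{w}^i$.

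It remains to identify the remainders $\mc{R}_{T,\ms{w}^1}$ and $\mc{R}_{T,\ms{w}^2}$ explicitly. Here I would use that, for the Maxwell stress-energy tensor of Definition \ref{def_Maxwell_FG_aAdS}, $\tilde T$ is quadratic in $F$, so $\mc{C}\sim\nabla\tilde T\sim F\cdot\nabla F=F\cdot H$; decomposing $H$ vertically and invoking Proposition \ref{h_as_Df} to trade $\ms{h}^0,\ms{h}^2$ for $\ul{\ms{h}}^0,\ul{\ms{h}}^2$ up to $\rho^{-1}\ms{f}$-- and $\ms{m}\cdot\ms{f}$--terms, together with the vertical decomposition of $\ms{T}^0,\ms{T}^1,\ms{T}^2$ already recorded in Proposition \ref{prop_transport_m} and the expressions \eqref{w_1_FG_asymp}--\eqref{w_2_FG_asymp} relating $\ms{w}^1,\ms{w}^2$ to $\ms{m}$ and $T$, produces exactly the list of $\mc{S}$--monomials in \eqref{R_T_w_1}--\eqref{R_T_w_2}: the undifferentiated $\mc{S}(\ms{g};\ms{f}^0,\ms{f}^1)$, $\mc{S}(\ms{g};(\ms{f}^0)^2)$, $\mc{S}(\ms{g};(\ms{f}^1)^2)$ terms are the $\rho^{-1}\ms{f}$--pieces surviving the $\rho$--normalisation, the $\ul{\ms{h}}$--terms are the genuine first derivatives of $F$, and the $\ms{m}\cdot(\ms{f})^2$ and $(\ms{f})^3$ terms come from the algebraic part of $T$ and from the substitutions $\ms{w}^2\cdot\ms{f}$, $\ms{w}^1\cdot\ms{f}$. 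The main obstacle is purely bookkeeping: tracking the exact powers of $\rho$ and the Christoffel contributions so that the algebraic coefficients come out correctly and all genuinely lower-order material lands in the $\mc{R}_T$'s in the advertised schematic form. There is no conceptual difficulty — the cleanest route is to first reproduce the vacuum Bianchi structure exactly as in \cite{Holzegel22, shao:aads_fg}, and then add only the $\tilde T$-- and $T$--dependent corrections, checking at the end that the two expressions for $\rho\ol{\ms{D}}_\rho\ms{w}^1$ are mutually consistent modulo the stated remainders.
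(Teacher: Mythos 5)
Your proposal is correct and follows essentially the same route as the paper: the paper also derives \eqref{transport_w_star} and \eqref{transport_w_1.1} from the ($\rho$--component of the) uncontracted second Bianchi identity rewritten via \eqref{sec:aads_weyl_equation}, and \eqref{transport_w_1.2}, \eqref{transport_w_2} from the contracted identity with $\nabla^\mu R_{\alpha\beta\mu\delta}=2\nabla_{[\alpha}\hat{T}_{\beta]\delta}$, then decomposes vertically via Proposition \ref{sec:aads_derivatives_vertical} and absorbs the $\nabla\tilde{T}$, $\nabla\hat{T}$ terms (quadratic in $F$, i.e. $F\cdot H$) into the remainders using Proposition \ref{h_as_Df}. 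The only difference is presentational (your Cotton-tensor phrasing of the divergence identity versus the paper's direct use of $\hat{T}$), not mathematical.
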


\begin{proof}
    See Appendix \ref{app:transport_weyl_UC}
\end{proof}

To apply the Carleman estimates, we will also need the wave equations satisfied by the different vertical decompositions of the Weyl tensor $W$, electromagnetic tensor $F$ and the higher-derivative fields $H$. This is the purpose of the following proposition: 
\begin{proposition}\label{wave_weyl}
    Let $(\mc{M}, g, F)$ be a Maxwell-FG-aAdS segment. Then, with respect to any coordinate system on $(\mc{M}, g)$, the following equations hold: 
    \begin{align}\label{sec:system_weyl_wave}
        &(\Box_g + 2n)W_{\alpha\beta\gamma\delta} = -\frac{1}{n-1}(\Box_g + 2n)(g\star \tilde{T})_{\alpha\beta\gamma\delta} -2\paren{(W+\frac{1}{n-1}g\star \tilde{T})\odot (W+\frac{1}{n-1}g\star \tilde{T})}_{\alpha\beta\gamma\delta\notag}\\
        &\hspace{80pt} +2\, \hat{T}^\sigma{}_{[\gamma} W_{|\alpha\beta\sigma|\delta]} + \frac{2}{n-1}\, \hat{T}^\sigma{}_{[\gamma} (g\star \tilde{T})_{|\alpha\beta\sigma|\delta]}+\frac{2}{n-1}\paren{\operatorname{tr}_g (\tilde{T}\star g)\star g}_{\alpha\beta\gamma\delta}\notag\\
        &\hspace{80pt}- \paren{\hat{T}\star g}_{\alpha\beta\gamma\delta}+2\nabla_{[\gamma}\nabla_{|\alpha}\hat{T}_{\beta|\delta]}- 2\nabla_{[\gamma}\nabla_{|\beta}\hat{T}_{\alpha|\delta]} \text{,}\\
        \label{wave_maxwell}
        &\paren{\Box_g +2(n-1)}F_{\alpha\beta} =  -W_{\alpha\beta}{}^{\sigma\mu}F_{\sigma\mu}+\frac{2}{n-1}\tilde{T}^\mu{}_{[\alpha} F_{\beta]\mu} - 2\hat{T}^\mu{}_{[\alpha}F_{\beta]\mu}\text{,}
    \end{align}
    where we define the following operation: 
    \begin{equation}
        (A\odot B)_{\alpha\beta\gamma\delta} := A^{\sigma}{}_{\alpha\gamma\mu}B_{\sigma\beta\delta}{}^\mu + A^\sigma{}_{\beta\gamma\mu}B_{\alpha\sigma\delta}{}^\mu + A^{\sigma}{}_{\delta\gamma\mu}B_{\alpha\beta\sigma}{}^\mu\text{,}
    \end{equation}
    with $A,\, B$ any $(0,4)-$tensor. As a reminder:
    \begin{equation}
        \tilde{T} := T - \frac{\tr_g T}{n}\cdot g\text{,}\qquad \hat{T} := T - \frac{\tr_g T}{n-1}\cdot g\text{.}
    \end{equation}
\end{proposition}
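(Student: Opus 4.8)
The plan is to obtain both identities by taking one further divergence of the appropriate Bianchi identity, commuting covariant derivatives, and then substituting the Einstein equations of Proposition \ref{prop_aads_einstein} together with the Weyl decomposition \eqref{sec:aads_weyl_equation}. I would treat \eqref{wave_maxwell} first, as it is considerably shorter. Start from the closedness $\nabla_{[\alpha}F_{\beta\gamma]}=0$, i.e. $\nabla_\alpha F_{\beta\gamma}+\nabla_\beta F_{\gamma\alpha}+\nabla_\gamma F_{\alpha\beta}=0$, apply $g^{\alpha\mu}\nabla_\mu$, and use the source-free Maxwell equation $\nabla^\mu F_{\mu\gamma}=0$ to eliminate two of the three resulting terms up to curvature commutators. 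The commutators $[\nabla^\mu,\nabla_{[\beta}]F_{\gamma]\mu}$ produce a Ricci-contracted term $\propto\operatorname{Rc}_{[\beta}{}^\mu F_{\gamma]\mu}$ and a Riemann-contracted term $\propto R_{[\beta}{}^{\mu\sigma}{}_{\gamma]}F_{\sigma\mu}$; decomposing the Riemann tensor via $\operatorname{R}[g]=W-\tfrac12 g\star g+\tfrac{1}{n-1}g\star\tilde T$ and inserting $\operatorname{Rc}[g]=-n\cdot g+\hat T$ from \eqref{aads_einstein}, the $-n\cdot g$ contributions from both sources combine into the mass coefficient $2(n-1)$ on the left, the Weyl part gives $-W_{\alpha\beta}{}^{\sigma\mu}F_{\sigma\mu}$, and the $\hat T$ and $\tilde T$ contributions yield the two remaining terms.

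For the Weyl equation \eqref{sec:system_weyl_wave} I would first establish a general wave identity for the Riemann tensor of an arbitrary metric, before any field equation is imposed. Starting from the second Bianchi identity $\nabla_{[\mu}R_{\nu\rho]\alpha\beta}=0$, apply $\nabla^\mu$ and commute, using the contracted Bianchi identity $\nabla^\mu R_{\mu\nu\alpha\beta}=\nabla_\alpha\operatorname{Rc}_{\nu\beta}-\nabla_\beta\operatorname{Rc}_{\nu\alpha}$, to reach a Lichnerowicz-type identity of the schematic shape $\Box_g R_{\nu\rho\alpha\beta}=2\nabla_{[\alpha}\nabla_{|\nu|}\operatorname{Rc}_{\rho]\beta}-2\nabla_{[\alpha}\nabla_{|\rho|}\operatorname{Rc}_{\nu]\beta}+\mathcal{Q}(R)_{\nu\rho\alpha\beta}$, where $\mathcal{Q}(R)$ is the specific quadratic-in-Riemann term generated by the commutators. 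I would then substitute \eqref{sec:aads_weyl_equation} on both sides: on the left, $\Box_g(g\star g)=g\star(\Box_g g)=0$ by metric compatibility and $\Box_g(g\star\tilde T)=g\star(\Box_g\tilde T)$, so $\Box_g R$ becomes $\Box_g W$ up to a controlled $g\star\Box_g\tilde T$ correction; on the right, $\mathcal{Q}(R)$ reorganises into $\mathcal{Q}(W+\tfrac1{n-1}g\star\tilde T)$, which is exactly the $\odot$-term of the statement, plus algebraic $g\star g$- and $g\star\tilde T$-type remainders; and the $\nabla^2\operatorname{Rc}$ terms become $\nabla^2\hat T$ terms via \eqref{aads_einstein}, the parallel piece $-n\cdot g$ dropping out under two derivatives. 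Finally the leftover algebraic contributions — those from $\mathcal{Q}$ acting on the $g\star g$ part of $R$ and from the $-n\cdot g$ part of the Ricci terms — must assemble, using the contraction identities for $\star$, into the coefficient $2n$ on the left and into the $(\hat T\star g)$, $\operatorname{tr}_g(\tilde T\star g)\star g$, and $\hat T^\sigma{}_{[\gamma}(\cdots)_{|\alpha\beta\sigma|\delta]}$ terms on the right, exactly as in the vacuum-with-$\Lambda$ case.

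The main obstacle is the index bookkeeping in the Weyl computation: carrying the symmetrisations through the commutators so that the quadratic curvature terms collapse precisely into the $\odot$ structure with the stated $-2$ coefficient, and checking that every $g\star g$- and $g\star\tilde T$-type algebraic term either cancels or regroups into the advertised coefficients $2n$ and $\tfrac{2}{n-1}$. One must also fix the curvature sign convention implicitly determined by Proposition \ref{prop_aads_einstein} and \eqref{sec:aads_weyl_equation}, and keep in mind that, since $T\neq 0$, the contracted Bianchi identity no longer forces $\nabla\operatorname{Rc}$ to be pure trace, so genuine $\nabla^2\hat T$ terms survive — these account for the final line of \eqref{sec:system_weyl_wave}, and they reduce to the vacuum identity when $T=0$, which serves as a useful consistency check.
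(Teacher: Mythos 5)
Your proposal is correct and follows essentially the same route as the paper: both equations are obtained by differentiating the relevant Bianchi identity (the second Bianchi identity for $W$, the closedness of $F$ for Maxwell), commuting covariant derivatives, substituting $\operatorname{R}[g]=W-\tfrac{1}{2}\,g\star g+\tfrac{1}{n-1}\,g\star\tilde{T}$ together with $\operatorname{Rc}[g]=-n\cdot g+\hat{T}$, and converting the divergence $\nabla^\mu R_{\alpha\beta\mu\delta}$ into $\nabla\nabla\hat{T}$ terms via the contracted Bianchi identity, exactly as in the paper's appendix. The only cosmetic difference is that you phrase the intermediate step as a general Lichnerowicz-type identity for the Riemann tensor before imposing the field equations, and the $\star$-contraction bookkeeping you defer to is precisely what the paper isolates in its auxiliary lemma on $A\odot(g\star g)$.
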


\begin{remark}
    Observe that Equation \eqref{sec:system_weyl_wave} contains higher-order terms of the form 
    \begin{align*}
    \nabla^2T&\sim F\cdot \nabla^2 F + \nabla F \cdot \nabla F \\
    &= \nabla H \cdot F + H\cdot H\text{.}
    \end{align*}
    This is the reason why we introduced the field $H$. In order to derive the unique continuation result, we indeed need the wave equations to control zeroth-- and first--order derivatives. 
\end{remark}

\begin{proof}
    See Appendix \ref{app:wave_w_maxwell}. 
\end{proof}

Finally, we will also need the wave equation for the higher-derivative field $H$. 

\begin{proposition}\label{prop_wave_h_ST}
    Let $(\mc{M}, g, F)$ be a Maxwell-FG-aAdS segment. For any local coordinate system, the following wave equation hold: 
    \begin{align}\label{wave_higher_derivative}
        (\Box_g + 3n)H_{\mu\alpha\beta} = & - 2\nabla^\gamma W^\sigma{}_{[\alpha|\gamma\mu}F_{\sigma|\beta]} -4W^\sigma{}_{[\alpha}{}^\gamma{}_{|\mu}H_{\gamma\sigma|\beta]}- \nabla_\mu W_{\alpha\beta}{}^{\sigma\lambda} F_{\sigma\lambda}\\
        &\notag-W_{\alpha\beta}{}^{\sigma\lambda}H_{\mu\sigma\lambda} + \frac{2}{n-1}\tilde{T}^\sigma{}_{[\alpha}H_{\mu|\beta]\sigma}-2\hat{T}^\sigma{}_{[\alpha}H_{\mu|\beta]\sigma} \\
        &\notag - \frac{2}{n-1}(g\star \nabla^\gamma \tilde{T})^\sigma{}_{[\alpha|\gamma\mu}F_{\sigma|\beta]} + \frac{2}{n-1}\nabla_\mu \tilde{T}^\sigma{}_{[\alpha}F_{|\beta]\sigma}-2\nabla_\mu \hat{T}^\sigma{}_{[\alpha}F_{|\beta]\sigma}\\
        &\notag- \frac{4}{n-1}\paren{g_{\mu[\alpha}\tilde{T}^{\sigma\gamma}H_{|\gamma\sigma|\beta]} -\tilde{T}^{\gamma}{}_{[\alpha}H_{\gamma\mu|\beta]}}+ \frac{2}{n-1}\tilde{T}^\sigma{}_\mu H_{\sigma\alpha\beta} + \hat{T}^\sigma{}_\mu H_{\sigma\alpha\beta}\text{,}
    \end{align}
\end{proposition}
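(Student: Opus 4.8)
The plan is to derive \eqref{wave_higher_derivative} by commuting the covariant derivative $\nabla$ through the wave equation \eqref{wave_maxwell} already established for $F$ in Proposition \ref{wave_weyl}. Concretely, since $H = \nabla F$, one writes
\begin{equation*}
    \Box_g H_{\mu\alpha\beta} = \Box_g \nabla_\mu F_{\alpha\beta} = \nabla_\mu \Box_g F_{\alpha\beta} + \brak{\Box_g, \nabla_\mu} F_{\alpha\beta}\text{,}
\end{equation*}
and then substitutes the expression for $\Box_g F_{\alpha\beta}$ coming from \eqref{wave_maxwell}. The first term $\nabla_\mu \Box_g F$ produces $-2(n-1)\nabla_\mu F_{\alpha\beta} = -2(n-1)H_{\mu\alpha\beta}$ together with $\nabla_\mu$ of the right-hand side of \eqref{wave_maxwell}; applying the Leibniz rule to $\nabla_\mu(W_{\alpha\beta}{}^{\sigma\mu}F_{\sigma\mu})$ and $\nabla_\mu(\tilde T^\sigma{}_{[\alpha}F_{\beta]\sigma})$, $\nabla_\mu(\hat T^\sigma{}_{[\alpha}F_{\beta]\sigma})$ yields exactly the terms $-\nabla_\mu W_{\alpha\beta}{}^{\sigma\lambda}F_{\sigma\lambda} - W_{\alpha\beta}{}^{\sigma\lambda}H_{\mu\sigma\lambda}$, the $\frac{2}{n-1}\tilde T^\sigma{}_{[\alpha}H_{\mu|\beta]\sigma} + \frac{2}{n-1}\nabla_\mu\tilde T^\sigma{}_{[\alpha}F_{|\beta]\sigma}$ family, and their $\hat T$ counterparts.

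The heart of the computation is the commutator $\brak{\Box_g, \nabla_\mu}F_{\alpha\beta}$. The standard identity for commuting the Laplacian with a covariant derivative on a $(0,2)$--tensor expresses this in terms of one contracted derivative of the Riemann tensor acting on $F$, plus Riemann (algebraically) contracted with $\nabla F = H$. Schematically,
\begin{equation*}
    \brak{\Box_g, \nabla_\mu}F_{\alpha\beta} = \paren{\nabla^\gamma R_{\sigma\mu\gamma[\alpha}}F_{\beta]}{}^\sigma \cdot c_1 + R^{\gamma}{}_{\mu}{}^\sigma{}_{[\alpha} H_{|\gamma\sigma|\beta]}\cdot c_2 + R^\sigma{}_\mu H_{\sigma\alpha\beta}\cdot c_3 + (\text{similar})\text{,}
\end{equation*}
with numerical constants. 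One then replaces every occurrence of $R$, $\nabla R$, $\mathrm{Rc}$, $\mathrm{Rs}$ by their expressions through $W$, $\hat T$ and $g$ using Proposition \ref{prop_aads_einstein}: namely $\operatorname{Rc}[g] = -n g + \hat T$, $\operatorname{Rs}[g] = -n(n+1) - \tfrac{2\tr_g T}{n-1}$, and $\operatorname{R}[g] = W - \tfrac12 g\star g + \tfrac{1}{n-1}g\star\tilde T$ from \eqref{sec:aads_weyl_equation}. For $\nabla R$ one uses the once-contracted second Bianchi identity $\nabla^\gamma W_{\gamma\alpha\beta\delta} = \tfrac{n-2}{n-1}\paren{\nabla_{[\alpha}\hat T_{\beta]\delta} + \dots}$ (the divergence of the Weyl tensor in terms of the Cotton/Schouten tensor of the matter), which is what generates the terms $-2\nabla^\gamma W^\sigma{}_{[\alpha|\gamma\mu}F_{\sigma|\beta]}$ and $-\tfrac{2}{n-1}(g\star\nabla^\gamma\tilde T)^\sigma{}_{[\alpha|\gamma\mu}F_{\sigma|\beta]}$ in \eqref{wave_higher_derivative}. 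Collecting: the $-\tfrac12 g\star g$ piece of $R$ contributes the ``mass'' shift raising the coefficient from $-2(n-1)$ to the final $+3n$ (together with the algebraic $R^\sigma{}_\mu H$ term evaluated with $\operatorname{Rc} = -ng + \hat T$, whose $-ng$ part adds $-nH_{\mu\alpha\beta}$ with the appropriate sign); the $W$ pieces give the $-4W^\sigma{}_{[\alpha}{}^\gamma{}_{|\mu}H_{\gamma\sigma|\beta]}$ and related curvature-times-$H$ terms; and the $\tilde T$, $\hat T$ pieces assemble into the remaining listed terms after using $\tilde T - \hat T = \paren{\tfrac{1}{n-1} - \tfrac{1}{n}}\tr_g T\cdot g$ to reorganise traces.

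I expect the main obstacle to be purely bookkeeping: carefully tracking the antisymmetrisation brackets $[\alpha\beta]$ and $[\gamma\delta]$ through the Weyl substitution so that the numerous $g\star\tilde T$ and $g\star\hat T$ cross-terms either cancel or combine into the compact form displayed, and getting every numerical coefficient (especially the net mass term $+3n$ and the factors $\tfrac{2}{n-1}$, $\tfrac{4}{n-1}$) correct. There is no conceptual difficulty — the identity is forced once one commits to $H = \nabla F$, the wave equation \eqref{wave_maxwell}, the commutator formula, and the Einstein relations of Proposition \ref{prop_aads_einstein} — but the algebra is long, so I would organise it by first isolating the coefficient of $H_{\mu\alpha\beta}$ itself (the mass term), then the curvature$\times H$ terms, then the $\nabla W\cdot F$ and $\nabla\tilde T\cdot F$ terms, then the remaining zeroth-order-in-$H$ matter terms, checking each block against \eqref{wave_higher_derivative} before assembling. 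The full details are deferred to Appendix \ref{app:wave_w_maxwell}.
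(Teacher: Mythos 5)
Your proposal follows essentially the same route as the paper's proof in Appendix \ref{app:wave_h}: commute $\nabla_\mu$ through $\Box_g$ acting on $F$, insert the wave equation \eqref{wave_maxwell} via $\nabla_\mu\Box_g F$, and convert all Riemann, Ricci and $\nabla R$ occurrences using $\operatorname{Rc}[g]=-n\,g+\hat T$ and $R = W - \tfrac12 g\star g + \tfrac{1}{n-1}g\star\tilde T$, with the Maxwell equation and the Bianchi identity $H_{[\mu\alpha\beta]}=0$ absorbing the trace and antisymmetrised terms so that the mass assembles to $+3n$. The only small inaccuracy is attributing the terms $-2\nabla^\gamma W^\sigma{}_{[\alpha|\gamma\mu}F_{\sigma|\beta]}$ and $-\tfrac{2}{n-1}(g\star\nabla^\gamma\tilde T)^\sigma{}_{[\alpha|\gamma\mu}F_{\sigma|\beta]}$ to the contracted second Bianchi identity: they arise simply from differentiating the algebraic Weyl decomposition of $R$ (with $\nabla(g\star g)=0$), and no differential Bianchi identity is needed since the $\nabla W\cdot F$ term is retained explicitly in \eqref{wave_higher_derivative}.
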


\begin{proof}
    See Appendix \ref{app:wave_h}.
\end{proof}

With the spacetime wave equations now at hand, we are ready to derive appropriate vertical wave equations.
\begin{proposition}\label{prop_wave_maxwell}
    Let $(\mc{M}, g, F)$ be a FG-aAdS segment with $n>2$ and let $(U, \varphi)$ be a coordinate system on $\mc{I}$. The following vertical wave equations are satisfied: 
    \begin{align}\label{wave_f_0}
        \paren{\ol{\Box} +2(n-2)}\ms{{f}}^0 = \mc{R}_{W, \ms{f}^0}\text{,}\\
        \label{wave_f_1}
        \paren{\ol{\Box}+(n-1)}\ms{f}^1 = \mc{R}_{W, \ms{f}^1}\text{,}
    \end{align}
    where we defined: 
    \begin{align*}
        \mc{R}_{W, \ms{f}^0} := &\;  \rho^2\mc{S}\paren{\ms{g}; \ms{m}, \ul{\ms{h}}^0} + \rho^2\mc{S}\paren{\ms{g}; \ms{D}\ms{m}, \ms{f}^1} + \rho\mc{S}\paren{\ms{g}; \ms{m}, \ms{f}^0}\\
        &\notag + \rho^2 \mc{S}\paren{\ms{g}; (\ms{m})^2, \ms{f}^0} + \rho^2\mc{S}\paren{\ms{g}; (\ms{f}^0)^3}+\rho^2\mc{S}\paren{\ms{g}; \ms{f}^0, (\ms{f}^1)^2} \\
        &\notag+\rho^2\mc{S}\paren{\ms{g}; \ms{w}^2, \ms{f}^0} + \rho^2\mc{S}\paren{\ms{g}; \ms{w}^1, \ms{f}^1}\text{,}\\
        \mc{R}_{W, \ms{f}^1} := &\;\rho^2\mc{S}\paren{\ms{g}; ({\ms{f}^0})^2, {\ms{f}^1}} + \rho^2\mc{S}\paren{\ms{g}; (\ms{f}^1)^3} + \rho^2\mc{S}\paren{\ms{g}; \ms{m}, \ul{\ms{h}}^{2}} \\
        &\notag+ \rho^2\mc{S}\paren{\ms{g}; \ms{D}\ms{m}, \ms{f}^0} + \rho\mc{S}\paren{\ms{g}; \ms{m}, \ms{f}^1} + \rho^2\mc{S}\paren{\ms{g}; (\ms{m})^2, \ms{f}^1}\\
        &\notag +\rho^2\mc{S}\paren{\ms{g}; \ms{w}^1, \ms{f}^0} + \rho^2\mc{S}\paren{\ms{g}; \ms{w}^0, \ms{f}^1}\text{.}\notag
    \end{align*}
    
\end{proposition}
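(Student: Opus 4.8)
\textbf{Proof strategy for Proposition \ref{prop_wave_maxwell}.} The plan is to obtain the vertical wave equations by projecting the spacetime wave equation \eqref{wave_maxwell} for $F$ onto level sets of $\rho$, using the formula from Proposition \ref{sec:aads_derivatives_vertical} that relates $\ol{\Box}$ acting on a spacetime tensor to $\ol{\Box}$ acting on its rescaled vertical components. First I would fix a compact coordinate chart $(U,\varphi)$ and recall the relevant output of Proposition \ref{sec:aads_derivatives_vertical}: for a $(0,2)$-tensor $A$ we have $\ol{\Box}A_{\rho a} = \rho^{-1}\ol{\Box}(\rho\ms{A}^\rho)_a + (\text{curl-type }\ms{D}\text{-terms}) - (\text{multiple of }\ms{A}) + \rho\,\mc{S}(\ms{g};\ms{m},\cdots)$-type corrections, and similarly for $A_{ab}$. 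Applying this to $A = F$, with $\ms{f}^0_a = \rho F_{\rho a}$ and $\ms{f}^1_{ab} = \rho F_{ab}$, the weight $\rho$ in the definitions is chosen precisely so that the shift term $-(nl_1 + l_2)\ms{A}$ coming from the vertical decomposition combines with the $+2(n-1)$ mass term of \eqref{wave_maxwell} to produce the stated constants $2(n-2)$ and $n-1$. Concretely, for $\ms{f}^0$ we have $l_1 = 1, l_2 = 1$, giving a shift $-(n+1)$, and then using \eqref{wave_power} to move the $\rho$ through $\rho^{-2}\cdot\rho$ one lands on $2(n-2)$; the bookkeeping for $\ms{f}^1$ ($l_1 = 0, l_2 = 2$) is analogous and yields $n-1$.

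Next I would identify the right-hand side. The spacetime source in \eqref{wave_maxwell} is $-W_{\alpha\beta}{}^{\sigma\mu}F_{\sigma\mu} + \frac{2}{n-1}\tilde T^\mu{}_{[\alpha}F_{\beta]\mu} - 2\hat T^\mu{}_{[\alpha}F_{\beta]\mu}$. Decomposing $W$ via $\ms{w}^0, \ms{w}^1, \ms{w}^2$ (Definition \ref{def_vert_f_w}) and contracting against $\ms{f}^0, \ms{f}^1$ produces the curvature-times-Maxwell terms $\rho^2\mc{S}(\ms{g};\ms{w}^2,\ms{f}^0)$, $\rho^2\mc{S}(\ms{g};\ms{w}^1,\ms{f}^1)$ for the $\ms{f}^0$-equation and $\rho^2\mc{S}(\ms{g};\ms{w}^1,\ms{f}^0)$, $\rho^2\mc{S}(\ms{g};\ms{w}^0,\ms{f}^1)$ for the $\ms{f}^1$-equation, after accounting for the powers of $\rho$ hidden in the rescalings of $W$ and $F$ and in the contractions with $g^{-1} = \rho^2\ms{g}^{-1}$ (for the $\rho\rho$ slots). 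The stress-energy contractions $\tilde T\cdot F$, $\hat T\cdot F$ are cubic in $F$ once one substitutes the explicit form of $T$ from Proposition \ref{prop_transport_m}/the displayed remark; after rescaling these become $\rho^2\mc{S}(\ms{g};(\ms{f}^0)^3)$, $\rho^2\mc{S}(\ms{g};\ms{f}^0,(\ms{f}^1)^2)$ and $\rho^2\mc{S}(\ms{g};(\ms{f}^0)^2,\ms{f}^1)$, $\rho^2\mc{S}(\ms{g};(\ms{f}^1)^3)$ respectively. Finally, the curl-type derivative terms thrown off by the vertical decomposition formula — $\ms{D}_b(\ms{A}^\rho)$ and the like — are exactly $\ms{D}\ms{f}^0$ and $\ms{D}\ms{f}^1$; using Definition \ref{def_h_bar} these are $\ul{\ms{h}}^2$ and $\ul{\ms{h}}^0$, and the lower-order corrections of the form $\rho\,\mc{S}(\ms{g};\ms{m},\ms{D}A^\rho)$, $\rho\,\mc{S}(\ms{g};\ms{D}\ms{m},A^\rho)$ give the $\rho^2\mc{S}(\ms{g};\ms{m},\ul{\ms{h}}^0)$, $\rho^2\mc{S}(\ms{g};\ms{D}\ms{m},\ms{f}^1)$ terms (and their $\ms{f}^0$ counterparts), with the $\mc{S}(\ms{g};\ms{m}^2,\cdot)$ and $\rho\mc{S}(\ms{g};\ms{m},\cdot)$ remainders coming directly from the $\mc{S}$-terms already present in Propositions \ref{sec:aads_derivatives_vertical} and \ref{sec:aads_commutation_Lie_D}.

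The main obstacle I anticipate is purely bookkeeping: one must carefully track every factor of $\rho$ through the three nested rescalings (the definition $\ms{w}^i = \rho^2 W_{\cdots}$, the definition $\ms{f}^j = \rho F_{\cdots}$, and the appearance of $g^{-1} = \rho^2\ms{g}^{-1}$ whenever a $\rho$-index is raised in a contraction), confirming that every term genuinely carries the advertised power $\rho^2$ (or $\rho^1$ for the single $\rho\mc{S}(\ms{g};\ms{m},\ms{f})$ term), since this decay is exactly what the Carleman estimates of Section \ref{sec_carleman} will require. A secondary point of care is that the $\ms{D}^2\ms{f}$-type contributions which would naively appear — e.g. from expanding $\Box_g F_{ab}$ in terms of $\Box_{\ms{g}}$ acting on the vertical component — must be absorbed: but since $\ol{\Box}$ on the left is itself defined as the $g$-trace of $\ol{\nabla}^2$ (Definition \ref{def_nabla_bar}), the genuine second-derivative term $\rho^2\Box_{\ms{g}}\ms{f}^j$ is kept on the left inside $\ol{\Box}$ via \eqref{wave_op}, and only first-order $\ms{D}\ms{f} = \ul{\ms{h}}$ terms survive on the right, which is consistent with the statement. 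With these reductions in hand the identities \eqref{wave_f_0} and \eqref{wave_f_1} follow by collecting terms; the detailed computation is deferred to the appendix, cf. the proof of Proposition \ref{prop_transport_h_bar}.
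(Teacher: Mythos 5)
Your overall route is the paper's: decompose the spacetime wave equation \eqref{wave_maxwell} with Proposition \ref{sec:aads_derivatives_vertical}, then use \eqref{wave_power} to convert $\rho^{-2}\ol{\Box}(\rho\,\ms{f}^i)$ into $\rho^{-1}\ol{\Box}\ms{f}^i$ plus corrections, and read off the remainders from the $W\cdot F$ and $T\cdot F$ sources (that last identification is fine). But there is a genuine gap in the middle step: \eqref{wave_power} with $p=1$ produces, besides the zeroth-order shift $-(n-1)\rho^{-1}\ms{f}^i$, a first-order radial term $2\,\ol{\ms{D}}_\rho\ms{f}^i$, and your bookkeeping never accounts for it. If you only combine the decomposition shift (net $-2$ for $\ms{f}^1$, and net $-(n-1)$ for $\ms{f}^0$ once the cross term $\ms{A}^{\rho\nu}$ of Proposition \ref{sec:aads_derivatives_vertical} is included, not $-(n+1)$) with the spacetime mass $2(n-1)$ and the $-(n-1)$ from \eqref{wave_power}, you obtain $n-3$ and $0$, not $n-1$ and $2(n-2)$; moreover $2\,\ol{\ms{D}}_\rho\ms{f}^i$ is not among the allowed terms of $\mc{R}_{W,\ms{f}^i}$, so it cannot simply be left on the right.

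The missing idea is to eliminate $2\,\ol{\ms{D}}_\rho\ms{f}^i$ using the first-order transport equations \eqref{transport_f0_FG}--\eqref{transport_f1_FG} (Proposition \ref{transport_f_UC}), i.e.\ $2\,\ol{\ms{D}}_\rho\ms{f}^0_a = 2(n-2)\rho^{-1}\ms{f}^0_a - 2\,\ms{tr}_{\ms{g}}\ul{\ms{h}}^0_a + \mc{S}\paren{\ms{g};\ms{m},\ms{f}^0}_a$ and $2\,\ol{\ms{D}}_\rho\ms{f}^1_{ab} = 2\rho^{-1}\ms{f}^1_{ab} + 4\,\ul{\ms{h}}^2_{[ab]} + \mc{S}\paren{\ms{g};\ms{m},\ms{f}^1}_{ab}$. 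This substitution does two things simultaneously: it supplies exactly the missing $+2(n-2)$ (resp.\ $+2$) that closes the mass count to $2(n-2)$ and $n-1$, and it cancels the bare divergence/curl terms $+2\,\ms{tr}_{\ms{g}}\ul{\ms{h}}^0_a$ and $-4\,\ul{\ms{h}}^2_{[ab]}$ thrown off by the vertical decomposition. Your sketch instead asserts that these $\ms{D}\ms{f}=\ul{\ms{h}}$ terms ``survive on the right''; that is inconsistent with the statement, whose remainders contain $\ul{\ms{h}}^0,\ul{\ms{h}}^2$ only through $\rho^2\mc{S}\paren{\ms{g};\ms{m},\ul{\ms{h}}}$ (which, as you correctly note, come from the $\mc{S}\paren{\ms{g};\ms{m},\ms{D}\ms{A}^\rho}$-type corrections of Proposition \ref{sec:aads_derivatives_vertical}), and an unweighted $\ul{\ms{h}}$ term would also destroy the $\rho$-decay the Carleman estimates require. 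Once the transport substitution is inserted, your computation coincides with the paper's proof.
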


\begin{proof}
    See Appendix \ref{app:prop_wave_maxwell}
\end{proof}

\begin{proposition}\label{prop_wave_h}
Let $(\mc{M}, g, F)$ be a Maxwell-FG-aAdS segment. For any local coordinate system, the following vertical wave equations hold for $\ul{\ms{h}}^0$ and $\ul{\ms{h}}^2$:   
    \begin{align}
        (\ol{\Box}+(n-1))\ul{\ms{h}}^0 = \mc{R}_{W, \ms{h}^0}\text{,} \qquad 
        (\ol{\Box} + 2(n-2))\ul{\ms{h}}^{2}
        =\mc{R}_{W, \ms{h}^2}\text{,}
    \end{align}
    where: 
    \allowdisplaybreaks{
    \begin{align}
        \label{remainder_h_0}\mc{R}_{W, \ms{h}^0}:=&\, \rho^{2}\mc{S}\paren{\ms{g}; \ms{m}, \ms{D}\ul{\ms{h}}^{2}} + \rho\mc{S}\paren{\ms{g}; \ms{w}^2, \ms{f}^0} + \rho\mc{S}\paren{\ms{g}; \ms{w}^1, \ms{f}^1} + \rho^2\mc{S}\paren{\ms{g}; \ms{Dm}, \ul{\ms{h}}^{2}}\\
        &\notag+\rho\mc{S}\paren{\ms{g}; \ms{w}^0, \ms{f}^0}+\rho^2 \mc{S}\paren{\ms{g};\ms{m}, \ms{w}^0, \ms{f}^0}  + \rho^2 \mc{S}\paren{\ms{g}; \ms{m}, \ms{w}^2, \ms{f}^0}+\rho^2\mc{S}\paren{\ms{g}; \ms{w}^2, \ul{\ms{h}}^0} \\
        &\notag +\rho^2 \mc{S}\paren{\ms{g}; \ms{w}^0, \ul{\ms{h}}^0} + \rho^2 \mc{S}\paren{\ms{g}; \ms{w}^1, \ul{\ms{h}}^{2}} + \rho^2 \mc{S}\paren{\ms{g}; \ul{\ms{h}}^0, (\ms{f}^0)^2}\\
        &\notag + \rho^2 \mc{S}\paren{\ms{g}; \ms{m}, \ms{w}^1, \ms{f}^1} + \rho^2 \mc{S}\paren{\ms{g};\ul{\ms{h}}^0, (\ms{f}^1)^2}+ \rho\mc{S}\paren{\ms{g}; \ms{m}, \ul{\ms{h}}^0} +\rho^2\mc{S}\paren{\ms{g}; (\ms{m})^2, \ul{\ms{h}}^0}  \\
        &\notag+ \rho^2\mc{S}\paren{\ms{g}; \ms{Dw}^1, \ms{f}^0} + \rho^2 \mc{S}\paren{\ms{g}; \ms{Dw}^2, \ms{f}^1} + \rho^2 \mc{S}\paren{\ms{g}; \ms{Dw}^0, \ms{f}^1}+\rho^2\mc{S}\paren{\ms{g}; \ul{\ms{h}}^{2}, \ms{f}^1, \ms{f}^0}\\
        &\notag+\rho\mc{S}\paren{\ms{g}; \ms{Dm}, \ms{f}^1} + \mc{S}\paren{\ms{g}; \ms{m}, \ms{f}^0} + \rho\mc{S}\paren{\ms{g};  (\ms{m})^2, \ms{f}^0}+\rho^2\mc{S}\paren{\ms{g}; \ms{m}, (\ms{f}^1)^2, \ms{f}^0}\\
        &\notag +\rho \mc{S}\paren{\ms{g}; (\ms{f}^1)^2, \ms{f}^0}  + \rho \mc{S}\paren{\ms{g}; (\ms{f}^0)^3}+\rho\mc{S}\paren{\ms{g}; \ms{w}^0, \ms{f}^0}+\rho^2\mc{S}\paren{\ms{g}; \ms{m}, (\ms{f}^0)^3}\\
        &\notag + \rho^2\mc{S}\paren{\ms{g}; (\ms{m})^3, \ms{f}^0} + \rho^2\mc{S}\paren{\ms{g}; \ms{m}, \ms{Dm}, \ms{f}^1}\text{,}\\
        \label{remainder_h_2}\mc{R}_{W, \ms{h}^2} :=&\rho^2\mc{S}\paren{\ms{g};\ms{m}, \ms{D}\ul{\ms{h}}^0} +\rho^2\mc{S}\paren{\ms{g}; \ms{Dw}^2, \ms{f}^0}+\rho^2\mc{S}\paren{\ms{g}; \ms{Dw}^0, \ms{f}^0}\\
        &\notag +\rho^2 \mc{S}\paren{\ms{g}; \ms{Dw}^1, \ms{f}^1} +\rho^2 \mc{S}\paren{\ms{g};  \ul{\ms{h}}^{2}, (\ms{f}^1)^2} + \rho^2\mc{S}\paren{\ms{g};  \ul{\ms{h}}^0, \ms{f}^0, \ms{f}^1}\\
        &\notag +\rho\mc{S}\paren{\ms{g}; \ms{m}, \ul{\ms{h}}^{2}} + \rho \mc{S}\paren{\ms{g}; \ms{w}^2, \ms{f}^1} + \rho\mc{S}\paren{\ms{g}; \ms{w}^1, \ms{f}^0} + \rho\mc{S}\paren{\ms{g}; \ms{w}^0, \ms{f}^1}\\
        &\notag+\rho^2 \mc{S}\paren{\ms{g}; \ul{\ms{h}}^{2} ,  (\ms{f}^0)^2}+\rho\mc{S}\paren{\ms{g}; (\ms{f}^0)^2, \ms{f}^1} + \rho\mc{S}\paren{\ms{g};  (\ms{f}^1)^3}+ \rho^2\mc{S}\paren{\ms{g}; \ms{m}, (\ms{f}^0)^2, \ms{f}^1} \\
        &\notag + \rho^2 \mc{S}\paren{\ms{g}; \ms{m}, (\ms{f}^1)^3}+\rho^2 \mc{S}\paren{\ms{g};\ms{m}, \ms{w}^0, \ms{f}^1} + \rho^2 \mc{S}\paren{\ms{g}; \ms{m}, \ms{w}^2, \ms{f}^1} \\
        &\notag + \rho^2 \mc{S}\paren{\ms{g}; \ms{m}^2, \ul{\ms{h}}^{2}} + \rho \mc{S}\paren{\ms{g}; \ms{m}^2, \ms{f}^1} + \mc{S}\paren{\ms{g}; \ms{m}, \ms{f}^1} + \rho^2\mc{S}\paren{\ms{g}; \ms{m}, \ms{w}^1, \ms{f}^0} \\
        &\notag + \rho\mc{S}\paren{\ms{g}; \ms{Dm}, \ms{f}^0}+ \rho^2 \mc{S}\paren{\ms{g}; \ms{Dm}, \ul{\ms{h}}^0} +\rho^2\mc{S}\paren{\ms{g}; \ms{w}^2, \ul{\ms{h}}^{2}} \\
        &\notag + \rho^2 \mc{S}\paren{\ms{g}; \ms{w}^0, \ul{\ms{h}}^{2}} + \rho^2 \mc{S}\paren{\ms{g}; \ms{w}^1, \ul{\ms{h}}^0}+\rho^2\mc{S}\paren{\ms{g}; (\ms{m})^3, \ms{f}^1} + \rho^2\mc{S}\paren{\ms{g}; \ms{m}, \ms{Dm}, \ms{f}^0} \text{,}
    \end{align}}
\end{proposition}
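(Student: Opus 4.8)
The plan is to obtain the two vertical wave equations by vertically decomposing the spacetime wave equation \eqref{wave_higher_derivative} for $H=\nabla F$, in the same spirit in which Proposition \ref{prop_wave_maxwell} is deduced from \eqref{sec:system_weyl_wave}--\eqref{wave_maxwell}, and then trading the naive decompositions $\ms{h}^0,\ms{h}^2$ for $\ul{\ms{h}}^0=\ms{D}\ms{f}^1$, $\ul{\ms{h}}^2=\ms{D}\ms{f}^0$ via Proposition \ref{h_as_Df}. First I would project \eqref{wave_higher_derivative} onto the index configurations $(a,b,c)$ and $(a,\rho,b)$; these are the only ones needed, since by the Maxwell and Bianchi relations \eqref{property_h}--\eqref{property_h_0} the components $H_{\rho ab}$ and $H_{\rho\rho a}$ equal $2\ms{h}^2_{[ab]}$ and $-\ms{tr}_{\ms{g}}\ms{h}^0$, and these substitutions are made wherever such components appear on the right-hand side. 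Viewing $H$ as a spacetime tensor, so that $\ol{\Box}H=\Box_g H$, I then invoke Proposition \ref{sec:aads_derivatives_vertical} (with $(l_1,l_2)=(0,3)$ and $(1,2)$) to rewrite $\Box_g H_{abc}$ and $\Box_g H_{a\rho b}$ in terms of $\ol{\Box}$ of the rescaled vertical tensors, up to $\ol{\ms{D}}_\rho$-terms, singular terms proportional to $\rho^{-1}\ms{g}_{a[b}\ms{f}^0_{c]}$ and $\rho^{-1}\ms{f}^1_{ab}$, and quadratic errors of type $\mc{S}(\ms{g};\ms{m},\,\cdot\,)$; the power rule \eqref{wave_power} then converts the natural weight appearing in Proposition \ref{sec:aads_derivatives_vertical} into the weight $\rho$ carried by $\ms{h}^0,\ms{h}^2$ and produces a first candidate eigenvalue.

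Next I would put the right-hand side of \eqref{wave_higher_derivative} into the vertical $\mc{S}$-calculus. Substituting $W=\rho^{-2}\ms{w}^i$, $F=\rho^{-1}\ms{f}^i$, $H=\rho^{-1}\ms{h}^i$, applying Proposition \ref{sec:aads_derivatives_vertical} once more to $\nabla W$ (producing $\ms{Dw}^i$ together with $\rho^{-1}$-Christoffel corrections from Proposition \ref{prop_christoffel_FG}), and expressing $\tilde T,\hat T$ and their gradients through the explicit quadratic-in-$F$ formulas $\ms{T}^0=\mc{S}(\ms{g};(\ms{f}^0)^2)+\mc{S}(\ms{g};(\ms{f}^1)^2)$, $\ms{T}^2=\mc{S}(\ms{g};(\ms{f}^0)^2)+\mc{S}(\ms{g};(\ms{f}^1)^2)$, $\ms{T}^1=\mc{S}(\ms{g};\ms{f}^0,\ms{f}^1)$ recorded in Section \ref{sec_wave_transport}, each term on the right lands in $\mc{S}(\ms{g};\,\cdot\,)$ with a definite power of $\rho$: the curvature couplings give $\rho\mc{S}(\ms{g};\ms{w}^i,\ms{f}^j)$, $\rho^2\mc{S}(\ms{g};\ms{w}^i,\ul{\ms{h}}^j)$, $\rho^2\mc{S}(\ms{g};\ms{Dw}^i,\ms{f}^j)$, the stress-energy terms give the cubic contributions $\rho\mc{S}(\ms{g};(\ms{f}^0)^3)$, $\rho\mc{S}(\ms{g};(\ms{f}^1)^2,\ms{f}^0)$, and the Christoffel corrections and $\rho$-index hits produce the $\mc{S}(\ms{g};\ms{m},\,\cdot\,)$- and $\mc{S}(\ms{g};\ms{Dm},\,\cdot\,)$-errors. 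At this stage one has wave equations $(\ol{\Box}+\sigma)\ms{h}^0=\mc{R}'_{\ms{h}^0}$, $(\ol{\Box}+\sigma')\ms{h}^2=\mc{R}'_{\ms{h}^2}$ for the naive decompositions, which is the wave counterpart of Proposition \ref{prop_transport_h}.

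Finally I would substitute Proposition \ref{h_as_Df}, namely $\ms{h}^0_{abc}=\ul{\ms{h}}^0_{abc}-2\rho^{-1}\ms{g}_{a[b}\ms{f}^0_{c]}+\mc{S}(\ms{m},\ms{f}^0)_{abc}$ and $\ms{h}^2_{ab}=\ul{\ms{h}}^2_{ab}+\rho^{-1}\ms{f}^1_{ab}+\mc{S}(\ms{g};\ms{m},\ms{f}^1)_{ab}$, and move the correction terms to the right-hand side; applying $\ol{\Box}$ to $\rho^{-1}\ms{g}_{a[b}\ms{f}^0_{c]}$ and $\rho^{-1}\ms{f}^1_{ab}$ via \eqref{wave_power} (with $p=-1$) and metric-compatibility of $\ol{\nabla}$ reduces these to $\rho^{-1}\ol{\Box}\ms{f}^0$, $\rho^{-1}\ol{\Box}\ms{f}^1$ plus $\ol{\ms{D}}_\rho\ms{f}^i$- and $\ms{m}$-errors, which are then eliminated using the vertical Maxwell wave equations \eqref{wave_f_0}--\eqref{wave_f_1} and the transport equations \eqref{transport_f0_FG}--\eqref{transport_f1_FG}. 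The net effect is that the singular $\rho^{-1}$ contributions from the decomposition of $\Box_g H$ cancel against those produced here, and the eigenvalue is shifted to the stated values $(n-1)$ for $\ul{\ms{h}}^0$ and $2(n-2)$ for $\ul{\ms{h}}^2$ (which are exactly those of $\ms{f}^1$ and $\ms{f}^0$, as they must be since $\ol{\Box}$ commutes with $\ms{D}$ up to curvature); relabelling through the $\mc{S}$-calculus then yields \eqref{remainder_h_0} and \eqref{remainder_h_2}. A convenient independent check of the whole computation is the shorter route of commuting $\ms{D}$ past $\ol{\Box}$ directly in Proposition \ref{prop_wave_maxwell}, $(\ol{\Box}+(n-1))\ul{\ms{h}}^0=\ms{D}\,\mc{R}_{W,\ms{f}^1}+[\ol{\Box},\ms{D}]\ms{f}^1$ and likewise for $\ul{\ms{h}}^2$, the commutator being expanded via \eqref{wave_op}, \eqref{commutation_L_D}, Proposition \ref{prop_higher_order_commutation}, and the vertical Riemann identities, with the singular $\tfrac{1}{2\rho}\ms{g}\star\ms{m}$ piece of $\ms{R}$ in \eqref{w_0_FG} supplying precisely the low-power terms $\rho\mc{S}(\ms{g};\ms{m},\ul{\ms{h}}^0)$ and $\rho\mc{S}(\ms{g};\ms{Dm},\ms{f}^1)$. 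The main obstacle is the sheer volume of bookkeeping, and the one genuinely substantive point within it is verifying the exact cancellation of all singular ($\rho^{-1}$, and transiently $\rho^{-2}$) contributions together with the correct value of the eigenvalue, since a slip there would change the statement rather than merely the form of the remainder.
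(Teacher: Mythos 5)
Your overall route is the one the paper follows in Appendix \ref{app:prop_wave_h}: decompose the commuted wave equation \eqref{wave_higher_derivative} vertically via Proposition \ref{sec:aads_derivatives_vertical} and \eqref{wave_power}, use the transport equations of Proposition \ref{prop_transport_h} to remove the $\ol{\ms{D}}_\rho\ms{h}^i$ terms, organise the right-hand side by parity into the $\mc{S}$-calculus, and then pass from $\ms{h}^0,\ms{h}^2$ to $\ul{\ms{h}}^0,\ul{\ms{h}}^2$ through Proposition \ref{h_as_Df} together with the Maxwell wave and transport equations \eqref{wave_f_0}--\eqref{wave_f_1}, \eqref{transport_f0_FG}--\eqref{transport_f1_FG}; your ``independent check'' (commuting $\ms{D}$ through $\ol{\Box}$ in Proposition \ref{prop_wave_maxwell}) is also the alternative the paper itself mentions but does not carry out.

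There is, however, one concrete gap in the plan: the ingredients you list do not by themselves dispose of the residual first-order terms in $\ms{h}^0,\ms{h}^2$. After the vertical decomposition and the use of the transport equations, the intermediate equations contain the pairs $4\rho\,\ms{D}_a\ms{h}^2_{[bc]}-4\rho\,\ms{D}_{[b}\ms{h}^2_{|a|c]}$ (for $\ms{h}^0$) and $2\rho\,\ms{g}^{cd}\ms{D}_c\ms{h}^0_{adb}-2\rho\,\ms{D}_a\ms{tr}_{\ms{g}}\ms{h}^0_b$ (for $\ms{h}^2$): same fields, but \emph{different index configurations}, so they neither cancel nor fall into the admissible classes of \eqref{remainder_h_0}--\eqref{remainder_h_2} (those only allow $\ms{D}\ul{\ms{h}}$ with an extra factor of $\ms{m}$ or $\ms{Dm}$ and weight $\rho^2$). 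The paper resolves this with Lemma \ref{lemma_identity_h}, obtained by commuting spacetime covariant derivatives of $F$ and using the Maxwell and Bianchi equations together with \eqref{sec:aads_weyl_equation}; the identities \eqref{lemma_identity_h_0}--\eqref{lemma_identity_h_2} convert the mismatched derivative terms into zeroth-order pieces plus singular $\rho^{-1}\ms{f}^i$ contributions, and it is precisely these extra pieces that shift the intermediate mass (from $n-3$ to $n-1$ in the $\ms{h}^0$ case) and fix the coefficient of the $\rho^{-1}\ms{g}_{a[b}\ms{f}^0_{c]}$, resp.\ $\rho^{-1}\ms{f}^1_{ab}$, term so that the subsequent substitution of Proposition \ref{h_as_Df} and of \eqref{wave_f_0}--\eqref{wave_f_1}, \eqref{transport_f0_FG}--\eqref{transport_f1_FG} cancels all singular terms and produces the stated masses $n-1$ and $2(n-2)$. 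Without this commutation step (or, equivalently, without carrying out in full the $[\ol{\Box},\ms{D}]$ route you relegate to a check), the argument does not close: the leftover $\rho\,\mc{S}(\ms{g};\ms{D}\ul{\ms{h}})$ terms would survive as inadmissible top-order sources and the final eigenvalues and cancellations would come out wrong.
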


\begin{proposition}\label{prop_wave_w}
Let $(\mc{M}, g, F)$ be a FG-aAdS segment and let $(U, \varphi)$ be a local coordinate system on $\mc{I}$. Then, the following wave equations for the vertical Weyl curvatures hold with respect to $(U,\varphi)$:
    \begin{gather}\label{wave_w}
        \ol{\Box}\ms{w}^\star=\mc{R}_{W, \ms{w}^2}\text{,}\qquad
        (\ol{\Box}+2(n-2))\ms{w}^2 =\mc{R}_{W, \ms{w}^2}\text{,}\qquad 
        (\ol{\Box} + (n-1))\ms{w}^1 = \mc{R}_{W, \ms{w}^1}\text{,}
    \end{gather}
    where the remainder terms can be written as: \allowdisplaybreaks{
    \begin{align}
        \label{remainder_wave_w_2}\mc{R}_{W, \ms{w}^2} := &\rho^2\mc{S}\paren{\ms{g}; \ms{m}, \ms{D}\ms{w}^1} + \rho^2\mc{S}\paren{\ms{g}; \ms{D}\ms{m}, \ms{w}^1} + \rho\mc{S}\paren{\ms{g}; \ms{m}, \ms{w}^2}\\
        +&\notag\rho\mc{S}\paren{\ms{g}; \ms{m}, \ms{w}^0} + \rho^2\mc{S}\paren{\ms{g}; (\ms{m})^2, \ms{w}^0} + \rho^2\mc{S}\paren{\ms{g};(\ms{w}^0)^2} \\
        +&\notag \rho^2\mc{S}\paren{\ms{g}; \ms{w}^0, \ms{w}^2}+\rho^2\mc{S}\paren{\ms{g}; (\ms{m})^2, \ms{w}^2}+ \rho^2\mc{S}\paren{\ms{g}; (\ms{w}^2)^2}\\
        +&\notag \rho^2\mc{S}\paren{\ms{g}; (\ms{w}^1)^2}+\rho^2\mc{S}\paren{\ms{g}; \ms{D}\ul{\ms{h}}^{2}, {\ms{f}}^0}+\rho^2\mc{S}\paren{\ms{g}; \ms{Dm}, \ms{f}^1, \ms{f}^0}\\
        +&\notag \rho^2\mc{S}\paren{\ms{g}; \ms{D}\ul{\ms{h}}^0, {\ms{f}}^1}+\mc{S}\paren{\ms{g}; ({\ms{f}}^0)^2}+\notag\mc{S}\paren{\ms{g}; ({\ms{f}}^1)^2}+\rho^2\mc{S}\paren{\ms{g}; \ms{w}^0, ({\ms{f}}^1)^2}\\
        +&\notag \rho^2\mc{S}\paren{\ms{g};\ms{w}^2, ({\ms{f}}^0)^2}+ \rho^2\mc{S}\paren{\ms{g};  \ms{w}^1, {\ms{f}}^0, {\ms{f}}^1} +\rho^2\mc{S}\paren{\ms{g};  \ms{w}^2, ({\ms{f}}^1)^2}\\
        +&\notag\rho^2  \mc{S}\paren{\ms{g}; \ms{w}^0, ({\ms{f}}^0)^2}+\rho^2 \mc{S}\paren{\ms{g}; \ms{m}^2, (\ms{f}^0)^2}+\rho^2\mc{S}\paren{\ms{g}; \ms{m}^2, (\ms{f}^1)^2}\\
        +&\notag\rho^2\mc{S}\paren{\ms{g}; (\ul{\ms{h}}^0)^2} + \rho^2\mc{S}\paren{\ms{g}; (\ul{\ms{h}}^{2})^2} +\rho\mc{S}\paren{\ms{g}; \ul{\ms{h}}^{2}, {\ms{f}}^1} +\rho\mc{S}\paren{\ms{g}; \ul{\ms{h}}^0, {\ms{f}}^0}\\
        +&\notag\rho^2\mc{S}\paren{\ms{g}; \ms{m}, \ul{\ms{h}}^{2}, {\ms{f}}^1}+\rho^2\mc{S}\paren{\ms{g}; (\ms{f}^1)^4} + \rho\mc{S}\paren{\ms{g}; \ms{m}, ({\ms{f}}^1)^2} +\rho^2\mc{S}\paren{\ms{g}; \ms{Dm}, \ms{f}^0, \ms{f}^1}\\
        +&\notag  \rho^2\mc{S}\paren{\ms{g};  ({\ms{f}}^1)^2, ({\ms{f}}^0)^2}+\rho^2\mc{S}\paren{\ms{g}; ({\ms{f}}^0)^4}+\rho\mc{S}\paren{\ms{g}; \ms{m}, ({\ms{f}}^0)^2}  +\rho^2\mc{S}\paren{\ms{g}; \ms{m}, \ul{\ms{h}}^0, {\ms{f}}^0} \text{,}\\
        \label{remainder_wave_w_1}\mc{R}_{W, \ms{w}^1}:=& \rho^2\mc{S}\paren{\ms{g}; \ms{m}, \ms{D}\ms{w}^0} + \rho^2\mc{S}\paren{\ms{g}; \ms{m}, \ms{D}\ms{w}^2} + \rho^2\mc{S}\paren{\ms{g}; \ms{D}\ms{m}, \ms{w}^0}\\
        &\notag +\rho^2\mc{S}\paren{\ms{g}; \ms{D}\ms{m}, \ms{w}^2} +\rho\mc{S}\paren{\ms{g}; \ms{m}, \ms{w}^1} + \rho^2\mc{S}\paren{\ms{g}; (\ms{m})^2, \ms{w}^1}\\
        &\notag + \rho^2\mc{S}\paren{\ms{g}; \ms{w}^1, \ms{w}^2} + \rho^2\mc{S}\paren{\ms{g}; \ms{w}^0, \ms{w}^1}+ \rho^2\mc{S}\paren{\ms{g}; \ms{D}\ul{\ms{h}}^0, {\ms{f}^0} }\\
        &\notag  + \rho^2\mc{S}\paren{\ms{g}; \ms{D}\ul{\ms{h}}^{2}, {\ms{f}}^1}+\mc{S}\paren{\ms{g}; {\ms{f}}^0, {\ms{f}}^1}+ \rho^2\mc{S}\paren{\ms{g}; \ul{\ms{h}}^0, \ul{\ms{h}}^{2}}\\
        &\notag +\rho^2\mc{S}\paren{\ms{g};  \ms{w}^0, {\ms{f}}^0, {\ms{f}}^1} + \rho^2\mc{S}\paren{\ms{g}; \ms{w}^1, ({\ms{f}}^0)^2} + \rho^2\mc{S}\paren{\ms{g}; \ms{w}^2, {\ms{f}}^0, {\ms{f}}^1}\\
        &\notag+\rho\mc{S}\paren{\ms{g};  \ms{f}^0, \ul{\ms{h}}^{2}} + \rho\mc{S}\paren{\ms{g}; \ul{\ms{h}}^0, {\ms{f}}^1} + \rho^2\mc{S}\paren{\ms{g};  ({\ms{f}}^1)^3, {\ms{f}}^0}\\
        &\notag + \rho^2\mc{S}\paren{\ms{g};  {\ms{f}}^1, ({\ms{f}}^0)^3}+ \rho^2\mc{S}\paren{\ms{g};  \ms{w}^1, ({\ms{f}}^1)^2}+ \rho^2 \mc{S}\paren{\ms{g}; \ms{m}, \ul{\ms{h}}^{2}, {\ms{f}}^0} +\rho^2\mc{S}\paren{\ms{g}; (\ms{m})^2, \ms{f}^1, \ms{f}^0}\\
        &\notag+ \rho^2\mc{S}\paren{\ms{g}; \ms{m},\ul{\ms{h}}^0, {\ms{f}}^1}+\rho\mc{S}\paren{\ms{g}; \ms{m}, {\ms{f}}^1, {\ms{f}}^0} +\rho^2\mc{S}\paren{\ms{g}; \ms{Dm}, (\ms{f}^0)^2}+\rho^2\mc{S}\paren{\ms{g}; \ms{Dm}, (\ms{f}^1)^2}\text{.}
    \end{align}}
\end{proposition}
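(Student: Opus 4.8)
The plan is to derive the three vertical wave equations by vertically decomposing the spacetime wave equation \eqref{sec:system_weyl_wave} for the Weyl tensor $W$, using the vertical decompositions $\ms{w}^0 = \rho^2 W_{abcd}$, $\ms{w}^1 = \rho^2 W_{\rho abc}$, $\ms{w}^2 = \rho^2 W_{\rho a\rho b}$ from Definition \ref{def_vert_f_w}. First I would apply Proposition \ref{sec:aads_derivatives_vertical} to the purely vertical component $W_{abcd}$, the mixed component $W_{\rho abc}$, and the component $W_{\rho a\rho b}$, to express $\ol{\Box}$ applied to $\rho^2 W_{\bar{\rho}\bar{a}}$ in terms of $\ol{\Box}$ of the corresponding vertical field together with the index-shuffling terms — namely the $\ms{g}^{bc}\ms{D}_b(\cdot)$ and $\ms{D}_{a_j}(\cdot)$ divergence-type contributions, the algebraic shifts $-(nl_1+l_2)$, and the $\mc{S}(\ms{g};\ms{m},\cdot)$, $\mc{S}(\ms{g};\ms{m}^2,\cdot)$, $\mc{S}(\ms{g};\ms{D}\ms{m},\cdot)$ remainders. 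The index-shuffling terms relate the components $\ms{w}^0,\ms{w}^1,\ms{w}^2$ to one another through divergence expressions, and — once one tracks the conformal weight $\rho^2$ via \eqref{wave_power} — produce the purely algebraic constants appearing on the left-hand sides ($0$, $2(n-2)$, $n-1$).

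Next I would treat the right-hand side of \eqref{sec:system_weyl_wave}. The term $-\frac{1}{n-1}(\Box_g + 2n)(g\star\tilde{T})$ and the curvature-squared term $-2(\cdots)\odot(\cdots)$ are decomposed component by component: using the Maxwell stress-energy decomposition recalled at the start of Section \ref{sec_wave_transport} (so that $\tilde{\ms{T}}^0, \tilde{\ms{T}}^2 = \mc{S}(\ms{g};(\ms{f}^0)^2) + \mc{S}(\ms{g};(\ms{f}^1)^2)$), every occurrence of $\tilde{T}$ contributes $\mc{S}(\ms{g};(\ms{f}^0)^2)$ or $\mc{S}(\ms{g};(\ms{f}^1)^2)$ pieces, while the derivative terms $\nabla^2\hat{T}$, $\nabla^2\tilde{T}$ — after the vertical decomposition of $\nabla^2$ — yield terms of the schematic form $\ms{D}\ul{\ms{h}}^0\cdot\ms{f}^1$, $\ms{D}\ul{\ms{h}}^2\cdot\ms{f}^0$, $\ms{D}\ms{m}\cdot\ms{f}^0\cdot\ms{f}^1$, plus lower-order contributions. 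Here one crucially uses Proposition \ref{h_as_Df} to trade the naive $\nabla F = H$ fields for $\ul{\ms{h}}^0 = \ms{D}\ms{f}^1$ and $\ul{\ms{h}}^2 = \ms{D}\ms{f}^0$, absorbing the singular $\rho^{-1}$ pieces into zeroth-order terms already present on the left. The $\odot$ term becomes $\rho^2\mc{S}(\ms{g};(\ms{w}^i)^2)$ together with mixed $\ms{w}^i\cdot(\ms{f}^j)^2$ and $(\ms{f}^j)^4$ contributions, once $W = \rho^{-2}\ms{w}$ and $\tilde{T} \sim (\ms{f})^2$ are inserted and the overall $\rho^2$ from the conformal weight is tracked; the algebraic curvature terms $\hat{T}^\sigma{}_{[\gamma}W_{|\cdots|\delta]}$, $(\operatorname{tr}_g(\tilde{T}\star g)\star g)$, $(\hat{T}\star g)$ similarly give $\rho^2\mc{S}(\ms{g};\ms{m},(\ms{f})^2)$, $\mc{S}(\ms{g};(\ms{f})^2)$ and the remaining stated products.

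Finally, to obtain the equation for $\ms{w}^\star = \ms{w}^0 + \frac{1}{n-2}\ms{w}^2\star\ms{g}$ (Definition \ref{def_w_star}), I would apply $\ol{\Box}$ to this combination, using $\ol{\nabla}\ms{g} = 0$ so that $\ol{\Box}(\ms{w}^2\star\ms{g}) = (\ol{\Box}\ms{w}^2)\star\ms{g}$, and combine the $\ms{w}^0$ wave equation with $\frac{1}{n-2}\star\ms{g}$ times the $\ms{w}^2$ one; the point of this combination, as in the vacuum treatment, is that the $-(\ms{g}\star\ms{w}^2)$ and mass-like terms in the $\ms{w}^0$ equation are cancelled by $\frac{1}{n-2}(\ms{g}\star\cdot)$ of the $2(n-2)\ms{w}^2$ term, leaving $\ol{\Box}\ms{w}^\star$ with vanishing zeroth-order coefficient. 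The main obstacle is the bookkeeping: one must verify that every term carries at least the advertised power of $\rho$ (in particular that no uncancelled $\rho^{-2}$ or $\rho^{-1}$ survives, which is precisely why the conformal weights $\rho^2$ for $\ms{w}^i$, $\rho$ for $\ms{f}^j$, and the substitution of Proposition \ref{h_as_Df} are forced) and that the leading second-order operator is exactly $\ol{\Box}$ with the correct constants; the detailed term-by-term matching against \eqref{remainder_wave_w_2}–\eqref{remainder_wave_w_1} is routine but lengthy, and is deferred to Appendix \ref{app:wave_w_maxwell}.
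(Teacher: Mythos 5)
Your outline reproduces the first and last steps of the paper's argument (decomposing \eqref{sec:system_weyl_wave} via Proposition \ref{sec:aads_derivatives_vertical}, tracking the weight $\rho^2$ through \eqref{wave_power}, analysing the sources schematically, and passing to $\ms{w}^\star$ so that the $\pm 2\,\ms{g}\star\ms{w}^2$ terms cancel), but it has a genuine gap in the middle: you never invoke the vertical transport (Bianchi) equations of Proposition \ref{prop_transport_w}, and without them the computation does not close. Rewriting $\ol{\Box}(\rho^2\ms{w}^i)$ via \eqref{wave_power} produces the first-order radial term $4\rho^3\,\ol{\ms{D}}_\rho\ms{w}^i$, and the component decomposition of $\Box_g W$ produces bare divergence terms such as $\rho\,\ms{g}^{cd}\ms{D}_c\ms{w}^1$, $\rho\,\ms{D}\cdot\ms{w}^0$, $\rho\,\ms{D}\ms{w}^2$; neither type appears in the advertised remainders \eqref{remainder_wave_w_2} and \eqref{remainder_wave_w_1}, which contain $\ms{D}\ms{w}$ only with an extra factor of $\ms{m}$ and contain no $\ol{\ms{D}}_\rho\ms{w}$ at all. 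Moreover, conformal-weight bookkeeping alone cannot yield the masses $0$, $2(n-2)$, $n-1$: \eqref{wave_power} shifts all three components by the same $-2(n-2)$. In the paper these top-order terms are removed precisely by substituting the transport equations: \eqref{transport_w_2} for $\ms{w}^2$ (the leftover divergence being lower order only after antisymmetrising \eqref{transport_w_2}, using the symmetry of $\ms{w}^2$); \emph{both} \eqref{transport_w_1.1} and \eqref{transport_w_1.2}, applied to the two halves of $4\rho\,\ol{\ms{D}}_\rho\ms{w}^1$, to cancel $\ms{D}\cdot\ms{w}^0$ and $\ms{D}\ms{w}^2$ simultaneously and produce the mass $n-1$; and \eqref{transport_w_star} together with the pair-exchange symmetry of $\ms{w}^0$ to kill the residual $4\rho\ms{D}_{[a}\ms{w}^1_{b]cd}-4\rho\ms{D}_{[c}\ms{w}^1_{d]ab}$. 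Without this step your $\ms{w}^\star$ combination cannot close either, since the $\ms{w}^0$ equation you feed into it still carries uncontrolled first-order terms.

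A related, smaller omission concerns the sources: the terms $\nabla^2\hat{T}$, $\nabla^2\tilde{T}$ in \eqref{sec:system_weyl_wave} generate radial derivatives of $H$, i.e. $\ol{\ms{D}}_\rho\ms{h}^i$, and the purely algebraic Proposition \ref{h_as_Df} does not remove these; the paper substitutes the transport equations of Proposition \ref{prop_transport_h} (the $\mc{R}_{T,\ms{h}^0}$, $\mc{R}_{T,\ms{h}^2}$ expressions) to convert them into the stated $\ms{D}\ul{\ms{h}}$, $\ul{\ms{h}}\cdot\ms{f}$, $\ms{w}\cdot\ms{f}$ and higher Maxwell terms. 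Also note that the detailed matching you defer lives in Appendix \ref{app:prop_wave_w}, not Appendix \ref{app:wave_w_maxwell}, which proves the spacetime equation \eqref{sec:system_weyl_wave}. Your treatment of $\ms{w}^\star$ itself, via $\ol{\nabla}$-compatibility with $\ms{g}$ and the cancellation of the zeroth-order coupling, coincides with the paper's and is fine once the $\ms{w}^0$ and $\ms{w}^2$ equations have been correctly established.
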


\begin{proof}
See Appendix \ref{app:prop_wave_w}.
\end{proof}

Before looking at the difference fields and the equations they satisfy, let us introduce a useful notation that will allow us to handle the error terms in the wave-transport system by simply looking at the asymptotic of the different fields involved. 

\begin{definition}
    Let $(\mc{M}, g)$ be a FG-aAdS segment, $M\geq 0$ and let $\phi\in C^\infty(\mc{M})$ be a positive scalar field. 
    \begin{itemize}
        \item We denote by $\mc{O}_M({\phi})$ any vertical tensor field $\ms{A}$ satisfying: 
        \begin{equation*}
            \abs{\ms{A}}_{M, \varphi}\leq C_{M, \varphi} \phi\text{,}
        \end{equation*}
        with $C_{M, \varphi}$ a positive constant. 
        \item We will also write $\mc{O}_M(\phi; \,\ms{B})$ any expression of the form, with $\ms{B}$ a vertical tensor field: 
        \begin{equation*}
            \mc{S}\paren{\ms{A}, \ms{B}}\text{,}
        \end{equation*}
        and $\ms{A}=\mc{O}_M\paren{\phi}$. 
    \end{itemize}
\end{definition}
The purpose of the following proposition is to write the asymptotics of the different fields involved here using the above notation. 

\begin{proposition}\label{prop_asymptotics_fields}
    Let $(\mc{M}, g, F)$ be a regular Maxwell-FG-aAdS segment of order $M\geq n+2$, as in Definition \ref{def_regular_M}. The following asymptotics hold for the metric $\ms{g}$ and the Maxwell fields ${\ms{f}}^0, {\ms{f}}^1$: 
    \begin{gather}
        \label{asymptotics_g}\ms{g} = \mc{O}_M(1)\text{,}\qquad \ms{g}^{-1}=\mc{O}_M(1)\text{,}\qquad \ms{m} = \mc{O}_{M-2}(\rho) = \mc{O}_{M-1}(1)\text{,}\qquad \Lie_\rho \ms{m} = \mc{O}_{M-2}(1)\text{,}\\
        \notag{\ms{f}}^0 = \mc{O}_{M}(\rho)\text{,} \qquad {\ms{f}}^1 = \mc{O}_{M}(\rho)\text{.}
    \end{gather}
    Furthermore, the following asymptotics hold for the Weyl tensor and the higher-derivative fields:
    \begin{gather*}
        \ms{w}^\star = \mc{O}_{M-2}(1)\text{,}\qquad \Lie_\rho \ms{w}^\star = \mc{O}_{M-3}(1)\text{,}\\
        \notag \ms{w}^1\text{,}\, \ms{w}^2 = \mc{O}_{M-2}(1) = \mc{O}_{M-3}(\rho)\text{,}\qquad \Lie_\rho \ms{w}^1\text{,}\, \Lie_\rho \ms{w}^2 = \mc{O}_{M-3}(1)\\
        \notag\ul{\ms{h}}^0\text{, }\ul{\ms{h}}^{2} = \mc{O}_{M-1}(\rho)\text{,}\qquad
        \Lie_\rho \ul{\ms{h}}^0\text{, }\Lie_\rho \ul{\ms{h}}^2= \mc{O}_{M-2}(1)\text{.}
    \end{gather*}
    \end{proposition}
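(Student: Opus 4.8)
The strategy is to read off each asymptotic statement directly from the Fefferman--Graham expansions established in Section \ref{sec:FG_expansion}, translating the $\Rightarrow^M$ and $\rightarrow^M$ convergence statements of Theorem \ref{theorem_main_fg}, Corollary \ref{prop_expansion_vertical_fields}, and Corollary \ref{weyl_expansion} into the pointwise bounds encoded by the $\mc{O}_M(\phi)$ notation. Concretely, a limit of the form $\ms{A}\Rightarrow^M\mf{A}$ together with local boundedness in $C^M$ immediately yields $\ms{A}=\mc{O}_M(1)$, and an expansion whose leading term carries a factor $\rho^k$ together with a remainder vanishing in $C^{M}$ yields $\ms{A}=\mc{O}_M(\rho^k)$; this is the only mechanism used throughout.

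\textbf{Key steps.}
First I would dispose of the metric line \eqref{asymptotics_g}: $\ms{g}=\mc{O}_M(1)$ and $\ms{g}^{-1}=\mc{O}_M(1)$ follow from Definition \ref{def_regular_M} (the bound $\norm{\ms{g}}_{M_0+2,\varphi}\lesssim 1$) together with Proposition \ref{prop_bounds_metric} applied at the relevant order; since $M\leq M_0$ here the regularity is more than sufficient. For $\ms{m}$, the expansion \eqref{expansion_g} gives $\ms{g}-\mf{g}^{(0)} = \rho^2\mf{g}^{(2)}+\dots$, hence $\ms{m}=\Lie_\rho\ms{g}=\mc{O}_{M-2}(\rho)$ upon differentiating once and using that $\mf{g}^{(2)},\dots$ are fixed tensors on $\mc{I}$ of finite regularity controlled by Theorem \ref{theorem_main_fg}; the alternative form $\ms{m}=\mc{O}_{M-1}(1)$ is just the weaker reading that drops one power of $\rho$ at the cost of one vertical derivative, which is consistent with \eqref{thm_limits_g} for $k=1$. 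The bound $\Lie_\rho\ms{m}=\mc{O}_{M-2}(1)$ is the $k=2$ case of the boundedness part of \eqref{thm_limits_g} (equivalently, $\Lie_\rho^2\ms{g}\Rightarrow^{M-2}2\mf{g}^{(2)}$). For the Maxwell fields, $\ms{f}^0=\rho F_{\rho a}$ and $\ms{f}^1=\rho F_{ab}$; from \eqref{expansion_f1} the field $\ms{f}^1$ has leading term $\rho\mf{f}^{1,(0)}$, so $\ms{f}^1=\mc{O}_M(\rho)$ follows from the $C^{M_0}\supset C^M$ bound on $\ol{\ms{f}}^1$; similarly $\ms{f}^0=\rho\cdot\ol{\ms{f}}^0$ for $n=3$ and $\ms{f}^0=\rho^2\ol{\ms{f}}^0$ for $n\geq 4$ by Definition \ref{def_vert_f_w}, and in every case $\ms{f}^0=\mc{O}_M(\rho)$ follows from \eqref{expansion_f0} together with the bounds \eqref{thm_bounds_f0_f1} (noting that for $n=4$ one uses $\rho\cdot\ol{\ms{f}}^0=\mc{O}_{M_0}(1)$, and for $n>4$ that $\ol{\ms{f}}^0=\mc{O}_{M_0}(1)$, so $\ms{f}^0=\rho^2\ol{\ms{f}}^0=\mc{O}_M(\rho^2)=\mc{O}_M(\rho)$).

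\textbf{Weyl and higher-derivative fields.}
For the Weyl decompositions I would use Corollary \ref{weyl_expansion}: $\ms{w}^0=\mc{O}_{M-2}(1)$ and $\ms{w}^2=\mc{O}_{M-2}(1)$ from the expansions \eqref{expansion_w_0}, \eqref{expansion_w_2} whose leading coefficients are $C^{M_0-n}$ tensors (and $M-2\leq M_0-n$ under $M\leq M_0$, $n\geq 3$ — here one must check the regularity bookkeeping carefully, which I flag below), while \eqref{expansion_w_1} gives $\ms{w}^1=\mc{O}_{M-2}(1)=\mc{O}_{M-3}(\rho)$ since its leading term is $\rho\mf{w}^{1,(2)}$. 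Then $\ms{w}^\star=\ms{w}^0+\frac{1}{n-2}\ms{w}^2\star\ms{g}=\mc{O}_{M-2}(1)$ by Definition \ref{def_w_star} and the product rule for $\mc{S}$. The $\Lie_\rho$-bounds for the Weyl fields come from differentiating these expansions once (equivalently, from the transport equations \eqref{transport_w_star}--\eqref{transport_w_2} together with the already-established bounds on the right-hand sides), losing one vertical derivative. Finally $\ul{\ms{h}}^0=\ms{D}\ms{f}^1$ and $\ul{\ms{h}}^2=\ms{D}\ms{f}^0$ by Definition \ref{def_h_bar}, so the estimate \eqref{estimate_DA} applied to $\ms{f}^1=\mc{O}_M(\rho)$ and $\ms{f}^0=\mc{O}_M(\rho)$ gives $\ul{\ms{h}}^0,\ul{\ms{h}}^2=\mc{O}_{M-1}(\rho)$, and the $\Lie_\rho$-bounds $\Lie_\rho\ul{\ms{h}}^0,\Lie_\rho\ul{\ms{h}}^2=\mc{O}_{M-2}(1)$ follow from Proposition \ref{prop_transport_h_bar} together with \eqref{commutation_L_D} to commute $\Lie_\rho$ past $\ms{D}$, using the bounds on $\ms{m}$, $\ms{Dm}$, $\ms{f}^0$, $\ms{f}^1$, $\ul{\ms{h}}^0$, $\ul{\ms{h}}^2$ already in hand.

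\textbf{Main obstacle.}
The genuine content is not any single estimate but the \emph{regularity accounting}: the statement claims bounds in $C^{M-2}$ and $C^{M-3}$ while Theorem \ref{theorem_main_fg} and its corollaries are phrased in terms of a fixed $M_0\geq n+2$ with limits holding at orders like $M_0-n$, $M_0-n-1$. One must verify that the hypothesis ``regular of order $M\geq n+2$'' is being used with $M$ playing the role of $M_0$, so that e.g. $\ms{w}^{0,(n-2)}$ being $C^{M_0-n}=C^{M-n}$ indeed implies the claimed $\mc{O}_{M-2}(1)$ only when $M-2\leq M-n$, i.e.\ $n\leq 2$ — which fails for $n\geq 3$! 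So the correct reading must be that the \emph{polyhomogeneous/remainder} structure is irrelevant for these particular low-order bounds: one only needs the leading terms $\mf{w}^{0,(2)},\mf{w}^{1,(2)},\mf{w}^{2,(4)}$ and the $C^{M-2}$-boundedness of $\ms{w}^i$ as a \emph{vertical} field, which follows directly from the defining relations \eqref{w_0_FG}--\eqref{w_2_FG} expressing $\ms{w}^i$ algebraically in terms of $\ms{R}$, $\ms{m}$, $\Lie_\rho\ms{m}$, $\ms{g}$ and the stress-energy components — all of which are $\mc{O}_{M-2}(1)$ by the metric bounds plus Proposition \ref{prop_bounds_metric} (for $\ms{R}\in C^{M-2}$) and the Maxwell bounds. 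In other words, the cleanest route avoids Corollary \ref{weyl_expansion} entirely for these estimates and argues from \eqref{w_0_FG}--\eqref{w_2_FG} directly; this is the step I would be most careful to get right, since the singular terms $\rho^{-1}\ms{g}\star\ms{m}$ in \eqref{w_0_FG} and $\rho^{-1}\ms{m}_{ab}$ in \eqref{w_2_FG} must be seen to combine with the leading part of the other terms (the Gauss/Ricci cancellations of Proposition \ref{prop_w_FG}) so that no negative power of $\rho$ survives — which is precisely the content of $\ms{m}=\mc{O}_{M-2}(\rho)$ making $\rho^{-1}\ms{m}=\mc{O}_{M-2}(1)$.
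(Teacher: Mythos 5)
Your proposal is correct and, in outline, it is the paper's proof: the Maxwell bounds come from \eqref{thm_bounds_f0_f1}, $\ul{\ms{h}}^0=\ms{D}\ms{f}^1$ and $\ul{\ms{h}}^2=\ms{D}\ms{f}^0$ give the $\mc{O}_{M-1}(\rho)$ bounds via Definition \ref{def_h_bar}, and the $\Lie_\rho$-bounds for the Weyl pieces and for $\ul{\ms{h}}^0,\ul{\ms{h}}^2$ are read off the transport equations of Propositions \ref{prop_transport_w} and \ref{prop_transport_h_bar} (with Definition \ref{def_w_star} handling $\ms{w}^\star$), exactly as in the appendix. The one place you diverge is the zeroth-order Weyl bounds: the paper simply cites Theorem \ref{theorem_main_fg} and Corollaries \ref{prop_expansion_vertical_fields}, \ref{weyl_expansion}, whereas you argue directly from the Gauss--Codazzi--Ricci identities \eqref{w_0_FG}--\eqref{w_2_FG} after observing that the expansion coefficients and remainders in Corollary \ref{weyl_expansion} only carry $C^{M_0-n}$-type regularity, insufficient for the claimed $C^{M-2}$/$C^{M-3}$ orders when $n\geq 3$. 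That critique is well-taken, and your algebraic route (using $\rho^{-1}\ms{m}=\mc{O}_{M-2}(1)$ to absorb the singular terms) is the cleaner justification; in effect it is what the paper's appendix relies on anyway, since its $\Lie_\rho$-estimates invoke $\rho^{-1}\ms{w}^1,\rho^{-1}\ms{w}^2=\mc{O}_{M-3}(1)$ without derivation. The only loose end you share with the paper is precisely that bound, $\ms{w}^1,\ms{w}^2=\mc{O}_{M-3}(\rho)$: along your algebraic route it requires the near-cancellation $\Lie_\rho\ms{m}-\rho^{-1}\ms{m}=\mc{O}_{M-3}(\rho)$, which follows from the fundamental theorem of calculus once $\Lie_\rho^3\ms{g}$ is bounded in $C^{M-3}$, available from \eqref{thm_limits_g} for $n\geq 4$ (for $n=3$ one can instead fall back on Corollary \ref{weyl_expansion}, whose coefficient regularity $C^{M-3}$ is then exactly sufficient). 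Spelling out that one line would make your argument complete and strictly more careful than the paper's.
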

    \begin{proof}
        See Appendix \ref{app:prop_asymptotics_fields}.
    \end{proof}

    \subsection{Difference relations}\label{sec:difference}

    In order to describe difference relations, we are lead to consider two FG-aAdS segments: 
    \begin{gather*}
        \mc{M}:=(0, \rho_0]\times \mc{I}\text{,}\qquad g := \rho^{-2}\paren{d\rho^2 + \ms{g}_{ab}dx^adx^b}\text{,}\qquad \check{g} := \rho^{-2}\paren{d\rho^2 + \check{\ms{g}}_{ab}dx^adx^b}\text{.}
    \end{gather*}
    \begin{definition}[Notations]
        Let $(\mc{M}, g)$ and $(\mc{M}, \check{g})$ denote two FG-aAdS segment. We will denote with a $\check{}$ operators and fields associated to $(\mc{M}, \check{g})$. Furthermore, we will denote the difference of two associated vertical fields $\ms{T}$ and $\check{\ms{T}}$ by 
        \begin{equation*}
        \delta \ms{T} := \ms{T}- \check{\ms{T}}\text{.}
        \end{equation*}
    \end{definition}
    We now extend $(\mc{M}, g)$ and $(\mc{M}, \check{g})$ to Maxwell-FG-aAdS segments. Namely, we will consider the two triplets $(\mc{M}, g, F)$ and $(\mc{M}, \check{g}, \check{F})$ with $F$ and $\check{F}$ solving the Maxwell and Bianchi equations with respect to $g$ and $\check{g}$, respectively.
    
    \begin{remark}
        We keep the two manifolds $\mc{M}$ and $\mc{I}$ identical and keep the gauge \eqref{FG-gauge} untouched. The two spacetimes therefore differ by the vertical metrics $\ms{g}, \check{\ms{g}}$. This choice of gauge will also allow us to work with the same vertical bundles. As a consequence, the expression of the vertical fields, as well as their transport and wave equations are identical, with only the metric differing. We will describe in Section \ref{sec:full_result} how to deal with more general gauges. 
    \end{remark}

    \begin{definition}
        For any collection of vertical tensor fields $\ms{A}_1, \dots, \ms{A}_N$, $N\geq 0$,  and let: 
        \begin{equation*}
            \mc{A} := \mc{S}\paren{\ms{g}; \ms{A}_1,\dots, \ms{A}_N}\text{,}
        \end{equation*}
        be some expression involving $\ms{g}\text{, }\ms{A}_1\text{, }\dots\text{, }\ms{A}_N$.
        We will write: 
        \begin{equation*}
            \delta \mc{S}\paren{\ms{g}; \ms{A}_1, \dots, \ms{A}_N} := \delta\mc{A}\text{.}
        \end{equation*}
    \end{definition}
    The above definition will allow us to consider difference of remainder terms which have the exact same algebraic structure. The remark below will allows us to treat such differences as remainder of the differences.  

    \begin{remark}\label{rmk_difference_remainders}
        Observe that: 
        \begin{align*}
            \delta\mc{S}\paren{\ms{g}; \ms{A}_1, \dots, \ms{A}_N} =& \sum\limits_{i=1}^N \mc{S}\paren{\ms{g}; \ms{A}_1, \dots, \ms{A}_{i-1},\delta\ms{A}_i, \ms{A}_{i+1}, \dots, \ms{A}_N} + \mc{S}\paren{\ms{g}; \delta\ms{g}, \ms{A}_1, \dots, \ms{A}_N}+\mc{R}\\
            =:&\mc{Q} + \mc{R}\text{,}
        \end{align*}
        where $\mc{R}$ will, at worst, yield the same asymptotics as $\mc{Q}$. This is due to the fact that $\mc{R}$ contains similar terms as $\mc{Q}$, with some of the $\ms{A}_i$'s replaced by $\check{\ms{A}}_i$'s. Note that the vertical fields $\ms{A}_i$ and $\check{\ms{A}}_i$ have the same asymptotics.
        Furthermore, $\mc{R}$ will contain powers of $\ms{g}, \ms{g}^{-1}, \check{\ms{g}}, \check{\ms{g}}^{-1}$, which are all $\mc{O}_{M}(1)$.
    \end{remark}

    The challenge in this section will be to obtain a closed system (for the Carleman estimates) of the difference of fields. Obtaining this system will be enough for us to conclude on the uniqueness, using appropriate Carleman estimates accounting the geometry of FG-aAdS spacetimes.

    This will come at the cost of introducing renormalised fields:
    \begin{definition}\label{def_renormalisation}
        Let $(\mc{M}, g, F)$ and $(\mc{M}, \check{g}, F)$ denote two Maxwell-FG-aAdS segments. We define, with respect to any $(U, \varphi)$ a compact coordinate system on $\mc{I}$: 
        \begin{enumerate}
            \item $\ms{Q}$, an anti-symmetric $(0,2)$ vertical tensor field, solution to the following transport equation: 
            \begin{gather}\label{transport_Q}
                \Lie_\rho \ms{Q}_{ab} = -\ms{g}^{de}\ms{m}_{d[a}(\delta \ms{g}+ \ms{Q})_{b]e}\text{,}\qquad \ms{Q}\rightarrow^0 0\text{.}
            \end{gather}
            \item The $(0, 3)$--vertical tensor field $\ms{B}$, defined by: 
            \begin{equation}\label{def_B}
                \ms{B}_{abc} := 2\ms{D}_{[a}\delta \ms{g}_{b]c} - \ms{D}_c\ms{Q}_{ab}\text{,}
            \end{equation}
            \item The renormalised difference $\Delta$ of vertical fields by the following: for any $\ms{N}_{\bar{a}}$, $\check{\ms{N}}_{\bar{a}}$, vertical tensors of the same rank $(0, l)$: 
            \begin{equation}\label{renormalised_def}
                \Delta \ms{N}_{\bar{a}} := \delta \ms{N}_{\bar{a}} - \frac{1}{2} \ms{g}^{bc}\sum\limits_{j=1}^{l}\ms{N}_{\hat{a}_j[b]}(\delta \ms{g} + \ms{Q})_{a_j c}
            \end{equation}
        \end{enumerate}
    \end{definition}

\begin{proposition}\label{prop_diff_g_gamma}
    Let $(\mc{M}, g, F)$ and $(\mc{M}, \check{g}, \check{F})$ be two Maxwell-FG-aAdS segments regular to order $M\geq n+2$. Then, with respect to any compact coordinate system $(U, \varphi)$ on $\mc{I}$: 
    \begin{itemize}
        \item The following asymptotics are satisfied:
        \begin{equation}\label{asymptotics_difference}
            \delta\ms{g}^{-1} = \mc{O}_{M}(1; \delta\ms{g})\text{,}\qquad \ms{Q} = \mc{O}_{M-1}(\rho) = \mc{O}_{M-2}(\rho^2)\text{,}\qquad \ms{B} = \mc{O}_{M-2}(1). 
        \end{equation}
        \item The difference of the Christoffel symbols satisfy: 
        \begin{align*}
            &\delta \Gamma^{\alpha}_{\rho\rho} = \delta \Gamma^\rho_{\rho\alpha} = 0\text{,}\qquad \delta \Gamma^{\rho}_{ab} = \rho^{-1}\delta\ms{g}_{ab} +\frac{1}{2}\delta\ms{m}_{ab}\text{,}\\
            &\delta \Gamma^a_{\rho b} = \delta \ms{\Gamma}^a_{\rho b} = \mc{O}_{M-2}\paren{\rho; \delta\ms{g}}^a{}_b + \mc{O}_{M}\paren{1; \delta\ms{m}}^a{}_b\\
            &\delta \Gamma^{a}_{bc} = \delta \ms{\Gamma}^a_{bc} = \frac{1}{2}\check{\ms{g}}^{ad}\paren{\ms{D}_{c}\delta \ms{g}_{bd} + \ms{D}_b\delta\ms{g}_{cd}-\ms{D}_d\delta\ms{g}_{bc}}\\
            &\hspace{56pt} = \mc{O}_{M}(1; \ms{D}\delta\ms{g})^a{}_{bc}\text{.}
        \end{align*}
    \end{itemize}
\end{proposition}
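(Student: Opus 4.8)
The statement is essentially a bookkeeping lemma: the identities for $\delta\Gamma$ follow by direct computation from the explicit connection coefficients in Proposition \ref{prop_christoffel_FG}, and the asymptotics \eqref{asymptotics_difference} then follow, the only step with genuine content being a Gr\"onwall estimate for the transport equation \eqref{transport_Q} defining $\ms{Q}$. The plan is to carry these out in the order: (i) $\delta\ms{g}^{-1}$, (ii) the $\delta\Gamma$ identities, (iii) the bounds on $\ms{Q}$ and $\ms{B}$.

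\textbf{Difference of the connection coefficients.} Since the gauge \eqref{FG-gauge} is kept fixed, $\Gamma^\alpha_{\rho\rho} = -\rho^{-1}\delta^\alpha_\rho$ and $\Gamma^\rho_{a\rho} = 0$ do not involve $\ms{g}$, so $\delta\Gamma^\alpha_{\rho\rho} = \delta\Gamma^\rho_{\rho a} = 0$. By Proposition \ref{prop_christoffel_FG}, together with the fact that $\ol{\ms{D}}$ restricted to spatial directions is the Levi--Civita connection of $\ms{g}$ (Proposition \ref{sec:aads_prop_covariant_D}), one has $\Gamma^a_{bc} = \ms{\Gamma}^a_{bc}$ and $\Gamma^a_{\rho b} = \ms{\Gamma}^a_{\rho b} - \rho^{-1}\delta^a_b$, which yields the asserted equalities $\delta\Gamma^a_{bc} = \delta\ms{\Gamma}^a_{bc}$ and $\delta\Gamma^a_{\rho b} = \delta\ms{\Gamma}^a_{\rho b}$; the value of $\delta\Gamma^\rho_{ab}$ is read off directly from $\Gamma^\rho_{ab} = \rho^{-1}\ms{g}_{ab} - \frac{1}{2}\ms{m}_{ab}$. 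For $\delta\ms{\Gamma}^a_{\rho b}$, applying the Leibniz rule for $\delta$ (Remark \ref{rmk_difference_remainders}) to $\ms{\Gamma}^a_{\rho b} = \frac{1}{2}\ms{g}^{ac}\ms{m}_{cb}$ and then inserting $\delta\ms{g}^{-1} = \mc{O}_M(1;\delta\ms{g})$, $\ms{m} = \mc{O}_{M-2}(\rho)$ and $\ms{g}^{-1}=\mc{O}_M(1)$ from Proposition \ref{prop_asymptotics_fields} gives $\delta\ms{\Gamma}^a_{\rho b} = \mc{O}_{M-2}(\rho;\delta\ms{g}) + \mc{O}_M(1;\delta\ms{m})$. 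Finally, for $\delta\Gamma^a_{bc}$ I would use the classical formula for the difference of two Levi--Civita connections on a fixed manifold: writing $\check{\ms{g}} = \ms{g} - \delta\ms{g}$ and using $\ms{D}\ms{g} = 0$ to trade the coordinate derivatives appearing in the Christoffel symbols for $\ms{D}$--derivatives of $\delta\ms{g}$, one obtains
\begin{equation*}
    \delta\Gamma^a_{bc} = \frac{1}{2}\check{\ms{g}}^{ad}\paren{\ms{D}_c\delta\ms{g}_{bd} + \ms{D}_b\delta\ms{g}_{cd} - \ms{D}_d\delta\ms{g}_{bc}}\text{,}
\end{equation*}
which is $\mc{O}_M(1;\ms{D}\delta\ms{g})$ since $\check{\ms{g}}^{-1} = \mc{O}_M(1)$ and $\ms{D}\delta\ms{g}$ costs exactly one order of vertical regularity relative to $\delta\ms{g} = \mc{O}_M(1)$.

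\textbf{Asymptotics.} Because $\ms{g}, \check{\ms{g}} = \mc{O}_M(1)$ (Proposition \ref{prop_asymptotics_fields}), $\delta\ms{g} = \mc{O}_M(1)$. Applying $\delta$ to $\ms{g}^{ac}\ms{g}_{cb} = \delta^a_b$ gives $\delta\ms{g}^{ac}\,\ms{g}_{cb} + \check{\ms{g}}^{ac}\,\delta\ms{g}_{cb} = 0$, i.e. $\delta\ms{g}^{-1} = -\check{\ms{g}}^{-1}\,\delta\ms{g}\,\ms{g}^{-1} = \mc{O}_M(1;\delta\ms{g})$. For $\ms{Q}$: equation \eqref{transport_Q} has the schematic form $\Lie_\rho\ms{Q} = \mc{S}(\ms{g};\ms{m},\ms{Q}) + \mc{S}(\ms{g};\ms{m},\delta\ms{g})$, whose coefficient and source are bounded — in fact $O(\rho)$ at the boundary, since $\ms{m} = \mc{O}_{M-2}(\rho)$ — so together with $\ms{Q}\rightarrow^0 0$ the ODE has a unique solution vanishing at $\rho = 0$. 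Integrating from the boundary and using $\ms{m} = \mc{O}_{M-1}(1)$ gives an inequality of the form $\abs{\ms{Q}}_{M-1,\varphi} \lesssim \int_0^\rho \paren{\abs{\ms{Q}}_{M-1,\varphi} + \abs{\delta\ms{g}}_{M-1,\varphi}}\,d\sigma$, whence Gr\"onwall yields $\ms{Q} = \mc{O}_{M-1}(\rho)$; rerunning the same estimate one vertical order lower with the sharper $\rho$-weighted bound $\ms{m} = \mc{O}_{M-2}(\rho)$ gives $\abs{\ms{Q}}_{M-2,\varphi} \lesssim \int_0^\rho \sigma\paren{\abs{\ms{Q}}_{M-2,\varphi} + \abs{\delta\ms{g}}_{M-2,\varphi}}\,d\sigma$, hence $\ms{Q} = \mc{O}_{M-2}(\rho^2)$. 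The bound on $\ms{B}$ is then immediate from \eqref{def_B}: the first term $2\ms{D}_{[a}\delta\ms{g}_{b]c}$ is $\mc{O}_{M-1}(1)$ and the second $\ms{D}_c\ms{Q}_{ab}$ is $\mc{O}_{M-2}(\rho)$, so $\ms{B} = \mc{O}_{M-2}(1)$.

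\textbf{Main difficulty.} There is no conceptual obstacle; the one point requiring care is the bookkeeping of vertical-regularity loss against powers of $\rho$ when unwinding the $\mc{S}$-- and $\mc{O}_M(\cdot\,;\cdot)$--notation, and in particular checking that both the coefficient and the source of \eqref{transport_Q} are genuinely regular and $O(\rho)$ at the boundary, so that the Gr\"onwall closure for $\ms{Q}$ produces the claimed decay with no additional loss of derivatives.
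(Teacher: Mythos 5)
Your proof is correct and is essentially the argument behind the result the paper relies on: the paper gives no in-text proof, deferring to Proposition 3.12 of \cite{Holzegel22}, and that proof is the same direct computation from the Christoffel symbols of the FG gauge (Proposition \ref{prop_christoffel_FG}), the difference-of-Levi--Civita-connections formula, and a Gr\"onwall estimate for \eqref{transport_Q}, exactly as you carry out. One small point: reading $\Gamma^\rho_{ab}=\rho^{-1}\ms{g}_{ab}-\tfrac{1}{2}\ms{m}_{ab}$ from Proposition \ref{prop_christoffel_FG} gives $\delta\Gamma^\rho_{ab}=\rho^{-1}\delta\ms{g}_{ab}-\tfrac{1}{2}\delta\ms{m}_{ab}$, i.e.\ the opposite sign on the $\delta\ms{m}$ term from the one displayed in the statement (evidently a typo there); your claim that this is \emph{read off directly} passes over the discrepancy, so you should flag it explicitly, noting that it is harmless since only the schematic form $\mc{O}(\rho^{-1};\delta\ms{g})+\mc{O}(1;\delta\ms{m})$ is ever used.
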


\begin{proof} 
    See \cite{Holzegel22}, Proposition 3.12.  
\end{proof}

The next proposition establishes the transport equations satisfied by the difference of vertical fields. 

\begin{proposition}\label{prop_transport_diff}
    Let $(\mc{M}, g, F)$ and $(\mc{M}, \check{g}, \check{F})$ be two Maxwell-FG-aAdS segments regular to order $M\geq n+2$, and let $(U, \varphi)$ be a compact coordinate system on $\mc{I}$. Then,
    \allowdisplaybreaks{
    \begin{align}
        &\Lie_\rho \delta \ms{g} = \delta \ms{m}\text{,}\\
        &\Lie_\rho \paren{\rho^{-1}\delta \ms{m}} = -2\rho^{-1}\Delta\ms{w}^2 + \mc{O}_{M-3}(1; \delta\ms{g}) + \mc{O}_{M-3}(1; \ms{Q}) + \mc{O}_{M-2}(1; \delta\ms{m})\text{,}\label{transport_dm}\\
        &\notag\hspace{40pt} + \mc{O}_{M}\paren{1; \delta\ms{f}^0} + \mc{O}_{M}\paren{1; \delta \ms{f}^1}\text{,}\\
        &\label{Lie_Q}\Lie_\rho \ms{Q} = \mc{O}_{M-2}(\rho; \delta \ms{g}) + \mc{O}_{M-2}(\rho; \ms{Q})\text{,}\\
        &\Lie_\rho \ms{B} = \mc{S}(\Delta\ms{w}^1) + \mc{O}_{M-2}(\rho; \ms{B}) + \mc{O}_{M-3}(\rho; \delta \ms{g}) + \mc{O}_{M-1}(1; \delta\ms{m}) + \mc{O}_{M-3}(\rho;\ms{Q})\text{,} \\
        &\notag \hspace{35pt} +\mc{O}_{M}(\rho;\delta\ms{f}^0) + \mc{O}_{M}(\rho; \delta\ms{f}^1)\text{,}\\
        &\notag\Lie_\rho(\rho^{-(n-2)}\delta\ms{f}^0) =\mc{O}_M(\rho^{-(n-2)};\Delta\ul{\ms{h}}^0) + \mc{O}_{M-1}\paren{\rho^{-(n-3)};\delta\ms{g}} + \mc{O}_{M} \paren{\rho^{-(n-3)}; \delta\ms{m}}\\
        &\label{transport_df0}\hspace{40pt}+ \mc{O}_{M-2}(\rho^{-(n-3)}; \delta \ms{f}^0)+\mc{O}_{M-1}(\rho^{-(n-3)}; \ms{Q})\text{,}\\
        &\label{transport_df1}\Lie_\rho(\rho^{-1}\delta\ms{f}^1) = \mc{O}_M\paren{\rho^{-1}; \Delta \ul{\ms{h}}^{2}} + \mc{O}_{M-1}(1; \delta\ms{g})+\mc{O}_{M}(1;\delta\ms{m})+\mc{O}_{M-2}(1; \delta\ms{f}^1)+\mc{O}_{M-1}(1; \ms{Q})\text{.}
    \end{align}
    Furthermore, the following derivative transport equations are satisfied:
    \begin{align}
        &\label{transport_Ddg}\Lie_\rho {\ms{D}\delta\ms{g}} = \ms{D}\delta \ms{m} + \mc{O}_{M-3}(\rho; \delta\ms{g})\text{,}\\
        &\notag\Lie_\rho \ms{D} (\rho^{-1}\delta\ms{m}) = 2\rho^{-1}\ms{D}\Delta \ms{w}^2 + \mc{O}_{M-3}(1;  \ms{D}\delta\ms{g}) + \mc{O}_{M-3}\paren{1; \ms{DQ}} + \mc{O}_{M-2}(1;  \ms{D}\delta{\ms{m}})\\
        &\notag\hspace{40pt} + \mc{O}_{M-4}(1; \delta\ms{g}) + \mc{O}_{M-4}\paren{1; \ms{Q}} + \mc{O}_{M-3}(1; \delta\ms{m}) + \mc{O}_{M}\paren{1; \Delta\ul{\ms{h}}^{2}}\\
        &\hspace{40pt}\mc{O}_{M}\paren{1; \Delta\ul{\ms{h}}^0} + \mc{O}_{M-1}\paren{1; \delta\ms{f}^0}+\mc{O}_{M-1}\paren{1; \delta \ms{f}^1}\text{,}\\ 
        &\Lie_\rho \ms{D}\ms{Q} = \mc{O}_{M-3}(\rho; \ms{Q}) + \mc{O}_{M-2}(\rho; \ms{D}\delta\ms{g}) + \mc{O}_{M-2}(\rho; \ms{DQ}) + \mc{O}_{M-3}(\rho;\delta\ms{g}) \text{,}\\
        &\notag\Lie_\rho \ms{DB} = \mc{S}(\ms{D}\Delta\ms{w}^1) + \mc{O}_{M-2}\paren{\rho; \ms{DB}} + \mc{O}_{M-3}(\rho; \ms{B}) + \mc{O}_{M-3}(\rho; \ms{D}\delta\ms{g}) + \mc{O}_{M-4}(\rho; \delta\ms{g}) \\
        &\notag\hspace{40pt} + \mc{O}_{M-1}(1; \ms{D}\delta{\ms{m}}) + \mc{O}_{M-2}(1; \delta\ms{m}) + \mc{O}_{M-3}(\rho; \ms{DQ}) + \mc{O}_{M-4}(\rho; \ms{Q})\\
        &\hspace{40pt}+\mc{O}_{M}(\rho; \Delta\ul{\ms{h}}^{2}) + \mc{O}_{M-1}(\rho; \delta\ms{f}^0) + \mc{O}_{M}(\rho; \Delta\ul{\ms{h}}^0) + \mc{O}_{M-1}(\rho; \delta\ms{f}^1)\text{,}\\
        &\notag\Lie_\rho \ms{D}(\rho^{-(n-2)}\delta\ms{f}^0) = \mc{S}(\rho^{2-n}\ms{D}\Delta\ul{\ms{h}}^0) + \mc{O}_{M-3}(\rho^{3-n}; \delta\ms{f}^0) + \mc{O}_{M-1}(\rho^{3-n}; \ms{D}\delta\ms{g}) \\
        &\notag\hspace{40pt} +\mc{O}_{M-2}(\rho^{3-n}: \delta\ms{g})+ \mc{O}_{M}(\rho^{3-n}; \ms{D}\delta\ms{m})+\mc{O}_{M-1}(\rho^{3-n}; \delta\ms{m})+\mc{O}_{M-2}(\rho^{3-n}; \ms{D}\Delta\ul{\ms{h}}^{2})\\
        &\hspace{40pt}+\mc{O}_{M-2}(\rho^{-(n-3)}; \ms{Q})+\mc{O}_{M-2}(\rho^{-(n-3)}; \ms{DQ})\text{,}\\
        &\notag\Lie_\rho \ms{D}(\rho^{-1}\delta\ms{f}^1) = \mc{S}(\rho^{-1}\ms{D}\Delta\ul{\ms{h}}^{2}) + \mc{O}_{M-3}(1; \delta\ms{f}^1) + \mc{O}_{M-1}(1; \ms{D}\delta\ms{g})+\mc{O}_{M-2}(1;\delta\ms{g}) + \mc{O}_{M-1}(1; \delta\ms{m}) \\
        &\hspace{40pt}+ \mc{O}_{M}(1; \ms{D}\delta\ms{m}) +  \mc{O}_{M-2}(1; \Delta \ul{\ms{h}}^0)+\mc{O}_{M-2}(1; \ms{Q})+\mc{O}_{M-1}(1; \ms{DQ})\label{transport_Ddf1}\text{.}
    \end{align}}
\end{proposition}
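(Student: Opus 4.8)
The plan is to obtain each identity by differencing the corresponding transport (or constraint) equation of Section \ref{sec_wave_transport} between the two segments $(\mc{M}, g, F)$ and $(\mc{M}, \check{g}, \check{F})$, and then re-expressing the result in terms of the renormalised differences $\Delta\ms{w}^i$, $\Delta\ul{\ms{h}}^j$ together with the auxiliary fields $\ms{Q}$, $\ms{B}$. The first identity $\Lie_\rho\delta\ms{g} = \delta\ms{m}$ is immediate from $\ms{m} = \Lie_\rho\ms{g}$ and linearity. For the remaining zeroth-order equations the mechanism is uniform: whenever a transport equation of the schematic form $\rho\Lie_\rho\ms{A} - c\,\ms{A} = \ms{G}$ holds with respect to both $g$ and $\check{g}$ with the same algebraic structure (only the metric differing), subtracting the two versions gives $\rho\Lie_\rho\delta\ms{A} - c\,\delta\ms{A} = \delta\ms{G}$, and Remark \ref{rmk_difference_remainders} lets us treat $\delta\ms{G}$ as a sum of $\mc{S}$-expressions in which exactly one slot carries a difference field, the extra terms there inheriting the same asymptotics. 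The orders of all background and difference fields are read off from Propositions \ref{prop_asymptotics_fields} and \ref{prop_diff_g_gamma}, the latter also supplying $\delta\Gamma$ whenever a difference of covariant derivatives $\delta(\ms{D}\ms{A}) = \ms{D}\,\delta\ms{A} + \mc{S}(\delta\Gamma, \ms{A})$ is produced.

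The three genuinely nontrivial points are the singular prefactors, the renormalisation, and the fields $\ms{Q}$, $\ms{B}$. For $\delta\ms{m}$ one starts not from the $\ms{m}$-transport equation directly but from the Ricci relation \eqref{w_2_FG}, which after rearranging yields $\Lie_\rho(\rho^{-1}\ms{m}) = -2\rho^{-1}\ms{w}^2 + \rho^{-1}\mc{S}(\ldots)$ with the $\rho^{-2}$ contributions cancelling; differencing and then substituting $\delta\ms{w}^2 = \Delta\ms{w}^2 + \mc{S}(\ms{g};\ms{w}^2,\delta\ms{g}) + \mc{S}(\ms{g};\ms{w}^2,\ms{Q})$, which is exactly \eqref{renormalised_def} solved for $\delta\ms{w}^2$, produces \eqref{transport_dm}. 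The same substitution applied to the Codazzi relation \eqref{w_1_FG} and to the Maxwell transport equations \eqref{transport_f0_FG}, \eqref{transport_f1_FG} (after writing $\delta\ul{\ms{h}}^0 = \Delta\ul{\ms{h}}^0 + \mc{S}(\ms{g};\ul{\ms{h}}^0,\delta\ms{g}) + \mc{S}(\ms{g};\ul{\ms{h}}^0,\ms{Q})$ and similarly for $\ul{\ms{h}}^2$) gives \eqref{transport_df0}--\eqref{transport_df1}, with $\rho^{-(n-2)}\delta\ms{f}^0$ and $\rho^{-1}\delta\ms{f}^1$ isolated so that the singular coefficients disappear exactly as in the scalar ODE analysis. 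Equation \eqref{Lie_Q} is just the defining transport equation \eqref{transport_Q} with its right-hand side estimated using $\ms{m} = \mc{O}_{M-2}(\rho)$ and $\ms{Q} = \mc{O}_{M-1}(\rho)$. For $\ms{B}$, one differentiates its definition \eqref{def_B}, commutes $\Lie_\rho$ past $\ms{D}$ via \eqref{commutation_L_D}, inserts $\Lie_\rho\delta\ms{g} = \delta\ms{m}$ and \eqref{Lie_Q}, and recognises that the antisymmetrised derivative $\ms{D}_{[a}\delta\ms{m}_{b]c}$ is, up to lower-order and stress-energy contributions in \eqref{w_1_FG}, the difference $\delta\ms{w}^1$, which is then renormalised to $\Delta\ms{w}^1$.

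For the derivative transport equations the plan is to apply $\ms{D}$ to each zeroth-order equation just obtained and commute it past $\Lie_\rho$ using $[\Lie_\rho,\ms{D}]\ms{A} = \mc{S}(\ms{g};\ms{D}\ms{m},\ms{A})$ from \eqref{commutation_L_D}; the commutator contributes a term $\mc{S}(\ms{g};\ms{D}\ms{m},\,\delta\text{-field})$ absorbed into the stated error classes by the asymptotics of $\ms{D}\ms{m}$. Differentiating the $\mc{S}$-expressions produces, by Leibniz, both $\ms{D}$ of a difference field and $\ms{D}$ of a background field times a difference field, the latter again an $\mc{S}$-expression of controlled weight; one must also commute $\ms{D}$ through the renormalisation \eqref{renormalised_def}, which costs a factor $\ms{D}(\delta\ms{g}+\ms{Q})$, explaining the $\ms{D}\delta\ms{g}$ and $\ms{DQ}$ terms on the right-hand sides. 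The main obstacle I expect is purely bookkeeping: keeping the exact $\rho$-powers and the regularity indices $M, M-1, M-2, \ldots$ consistent throughout, and in particular checking that the singular rescalings $\rho^{-1}$, $\rho^{-(n-2)}$ are precisely the ones that both eliminate the singular coefficient and leave a right-hand side of the claimed order, and that each differentiation loses at most one order of vertical regularity. Conceptually nothing new is needed once \eqref{renormalised_def} and Remark \ref{rmk_difference_remainders} are in hand; the length of the statement merely reflects the number of fields tracked simultaneously.
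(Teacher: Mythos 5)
Your overall strategy is the paper's own: difference each transport/constraint equation, use Remark \ref{rmk_difference_remainders} to rewrite $\delta\mc{S}$-expressions, read off weights from Propositions \ref{prop_asymptotics_fields} and \ref{prop_diff_g_gamma}, obtain \eqref{transport_dm} from the Ricci relation \eqref{w_2_FG} with the $\rho^{-2}$ cancellation, renormalise via $\delta\ms{N}=\Delta\ms{N}+\mc{S}(\ms{g};\ms{N},\delta\ms{g})+\mc{S}(\ms{g};\ms{N},\ms{Q})$, and get the differentiated equations by commuting $\ms{D}$ past $\Lie_\rho$ and through the renormalisation. All of that matches the paper's proof and is sound.

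The one step that, as you describe it, would not deliver the stated result is the equation for $\Lie_\rho\ms{B}$. Its claimed right-hand side contains an $\mc{O}_{M-2}(\rho;\ms{B})$ term and \emph{no} $\ms{D}\delta\ms{g}$ or $\ms{DQ}$ terms; this is precisely the reason $\ms{B}$ is defined by the particular combination \eqref{def_B}. If you "insert \eqref{Lie_Q}" in its schematic form, the term $-\ms{D}_c\Lie_\rho\ms{Q}_{ab}$ becomes $\ms{D}$ of $\mc{O}_{M-2}(\rho;\delta\ms{g})+\mc{O}_{M-2}(\rho;\ms{Q})$, i.e.\ uncontrolled-in-the-stated-form $\mc{O}(\rho;\ms{D}\delta\ms{g})+\mc{O}(\rho;\ms{DQ})$ contributions, and similarly the connection-difference term $2(\delta\ms{D})_{[a}\check{\ms{m}}_{b]c}$ arising when you trade $\ms{D}_{[a}\delta\ms{m}_{b]c}$ for $\delta(\ms{D}_{[a}\ms{m}_{b]c})$ produces explicit $\ms{D}\delta\ms{g}$ terms. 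To prove the proposition as stated you must keep the \emph{exact} expressions: use \eqref{transport_Q} itself (not \eqref{Lie_Q}) for $\ms{D}_c\Lie_\rho\ms{Q}_{ab}$ and the explicit formula for $\delta\ms{\Gamma}$ from Proposition \ref{prop_diff_g_gamma} for $(\delta\ms{D})\check{\ms{m}}$. Then the $\ms{m}\cdot\ms{D}_{|c|}\delta\ms{g}$ pieces cancel up to a coefficient difference $(\check{\ms{g}}^{de}\check{\ms{m}}_{d[a}-\ms{g}^{de}\ms{m}_{d[a})\ms{D}_{|c|}\delta\ms{g}_{b]e}$, which is absorbed by treating the bounded field $\ms{D}\delta\ms{g}$ as a coefficient (giving $\mc{O}_{M-2}(\rho;\delta\ms{g})+\mc{O}_{M}(1;\delta\ms{m})$), while the remaining first-derivative terms reassemble exactly into $-\ms{g}^{de}\ms{m}_{d[a}\paren{\ms{D}_{b]}\delta\ms{g}_{ec}-\ms{D}_{|e|}\delta\ms{g}_{b]c}-\ms{D}_{|c|}\ms{Q}_{b]e}}=-\ms{g}^{de}\ms{m}_{d[a}\ms{B}_{b]ec}$, which is the $\mc{O}_{M-2}(\rho;\ms{B})$ term. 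Without this recombination your argument only yields a weaker variant of the $\ms{B}$ equation with $\ms{D}\delta\ms{g}$ and $\ms{DQ}$ on the right, not the stated one; the rest of your sketch needs no modification.
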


\begin{proof}
    See Appendix \ref{app:prop_transport_diff}. 
\end{proof}

We will also need the wave equations satisfied by the differences of the Weyl curvature components, as well as ${H}$. 

\begin{proposition}[Difference wave equations]\label{prop_wave_difference}
    Let $(\mc{M}, g, F)$ and $(\mc{M}, \check{g}, \check{F})$ be two FG-aAdS segment and let $(U, \varphi)$ be any compact coordinate system on $\mc{I}$. Then, the following vertical wave equations are satisfied 
        \begin{align}
            &(\ol{\Box} + m_{\ms{w}})\Delta\ms{w} = \ms{F}_{\ms{w}}\text{,}\qquad \ms{w}\in\lbrace \ms{w}^\star, \ms{w}^1, \ms{w}^2\rbrace\text{,}\\
            &(\ol{\Box} + m_{\ms{h}})\Delta\ms{h} = \ms{G}_{\ms{h}}\text{,}\qquad \ms{h}\in\lbrace\ul{\ms{h}}^0, \ul{\ms{h}}^2\rbrace\text{,}
        \end{align}
        with the masses given by: 
        \begin{gather*}
            m_{\ms{w}^\star} = 0\text{,}\qquad m_{\ms{w}^1} = n-1\text{,}\qquad m_{\ms{w}^2} = 2(n-2)\text{,}\\
            m_{\ul{\ms{h}}^0} = n-1\text{,}\qquad m_{\ul{\ms{h}}^{2}} = 2(n-2)\text{,}
        \end{gather*} 
        and $\ms{F}_{\ms{w}}$, $\ms{G}_{\ms{h}}$ of the form: \allowdisplaybreaks{
        \begin{align}
            \label{F_w}\ms{F}_{\ms{w}} =& \, \mc{O}_{M-4}(\rho^2; \delta\ms{g}) + \mc{O}_{M-3}(\rho^2; \ms{Q}) + \mc{O}_{M-3}(\rho; \delta\ms{m}) \\
            &\notag+\mc{O}_{M-3}(\rho^2; \ms{D}\delta\ms{g}) + \mc{O}_{M-3}(\rho^2; \ms{DQ}) + \mc{O}_{M-2}(\rho^2; \ms{D}\delta\ms{m})+\mc{O}_{M-2}(\rho^2; \ms{DB})\\
            &\notag+\mc{O}_{M-3}(\rho; \delta\ms{f}^0) + \mc{O}_{M-3}(\rho; \delta\ms{f}^1)\\
            &\notag+\sum\limits_{\ms{w}\in\lbrace\ms{w}^\star, \ms{w}^1, \ms{w}^2\rbrace} \mc{O}_{M-3}(\rho^2; \Delta\ms{w}) + \mc{O}_{M-2}(\rho^3; \ms{D}\Delta\ms{w})\\
            &\notag+\sum\limits_{\ms{h}\in\lbrace\ul{\ms{h}}^0, \ul{\ms{h}}^2\rbrace} \mc{O}_{M-3}(\rho^2; \Delta\ms{h}) + \mc{O}_{M-2}(\rho^3; \ms{D}\Delta\ms{h})\text{,}\\
            \label{G_h}\ms{G}_{\ms{h}} =&\, \mc{O}_{M-4}(\rho^2; \delta\ms{g}) + \mc{O}_{M-3}(\rho; \delta\ms{m}) + \mc{O}_{M-3}(\rho^3;\ms{D}\delta\ms{g}) + \mc{O}_{M-2}(\rho^2; \ms{D}\delta\ms{m}) \\
            &\notag+\mc{O}_{M-3}(\rho^2;\ms{Q})+\mc{O}_{M-3}(\rho^3; \ms{DQ}) + \mc{O}_{M-2}(\rho^3; \ms{DB})\\
            &\notag+\mc{O}_{M-3}(\rho;\delta\ms{f}^0)+\mc{O}_{M-3}(\rho;\delta\ms{f}^1) \\
            &\notag+\sum\limits_{\ms{w}\in\lbrace\ms{w}^\star, \ms{w}^1, \ms{w}^2\rbrace} \mc{O}_{M-3}(\rho^2; \Delta\ms{w}) + \mc{O}_{M-2}(\rho^3; \ms{D}\Delta\ms{w})\\
            &\notag+\sum\limits_{\ms{h}\in\lbrace\ul{\ms{h}}^0, \ul{\ms{h}}^2\rbrace} \mc{O}_{M-3}(\rho^2; \Delta\ms{h}) + \mc{O}_{M-2}(\rho^3; \ms{D}\Delta\ms{h})\text{.}
        \end{align}}
\end{proposition}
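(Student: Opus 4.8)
The plan is to derive each difference wave equation from the corresponding wave equation for the undifferenced field (Propositions \ref{prop_wave_w} and \ref{prop_wave_h}) by the standard subtraction trick, carefully isolating the ``bad'' terms and absorbing them via the renormalisation of Definition \ref{def_renormalisation}. Concretely, for $\ms{w}\in\lbrace\ms{w}^\star,\ms{w}^1,\ms{w}^2\rbrace$ one starts from $(\ol{\Box}+m_{\ms{w}})\ms{w} = \mc{R}_{W,\ms{w}}$ and $(\check{\ol{\Box}}+m_{\ms{w}})\check{\ms{w}} = \check{\mc{R}}_{W,\ms{w}}$, so that
\begin{equation*}
(\ol{\Box}+m_{\ms{w}})\delta\ms{w} = \delta\mc{R}_{W,\ms{w}} - (\ol{\Box}-\check{\ol{\Box}})\check{\ms{w}}\text{.}
\end{equation*}
The first term is handled by Remark \ref{rmk_difference_remainders}: each $\mc{S}$-term in $\mc{R}_{W,\ms{w}}$ becomes a sum of $\mc{S}$-terms with one factor replaced by its difference, and using Proposition \ref{prop_asymptotics_fields} to read off the $\rho$-weights gives exactly the listed error classes $\mc{O}_{M-\ast}(\rho^\ast;\delta(\cdot))$. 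The second, ``bad'' term is where the renormalisation enters: by the difference-Christoffel formulae of Proposition \ref{prop_diff_g_gamma}, $(\ol{\Box}-\check{\ol{\Box}})\check{\ms{w}}$ contains a genuinely dangerous piece involving $\ms{D}^2\delta\ms{g}$ (and $\rho^{-1}\delta\ms{m}$, $\ms{D}\delta\ms{m}$) which is not controlled by the Carleman estimates; one checks, exactly as in Section 1.4.2 and Proposition 3.xx of \cite{Holzegel22}, that replacing $\delta\ms{w}$ by $\Delta\ms{w}$ of \eqref{renormalised_def} cancels this piece modulo terms that either are curl-type derivatives absorbed into $\ms{DB}$ (using \eqref{def_B}), or are controlled by $\ms{D}\Delta\ms{w}^1$ via the Bianchi-type transport relations, or fall into the admissible error classes after using the transport equations \eqref{transport_dm}, \eqref{Lie_Q}, \eqref{transport_Ddg} for $\delta\ms{m}$, $\ms{Q}$, $\ms{D}\delta\ms{g}$. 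I would organise this as: (i) expand $\ol{\Box}\Delta\ms{w} = \ol{\Box}\delta\ms{w} - \ol{\Box}(\ms{w}\cdot(\delta\ms{g}+\ms{Q}))$ schematically using \eqref{wave_op} and \eqref{wave_power}; (ii) substitute $\ol{\Box}\delta\ms{w}$ from the display above; (iii) use the product rule and the wave/transport equations for $\ms{w}$, $\delta\ms{g}$, $\ms{Q}$ to see the $\ms{D}^2\delta\ms{g}$ contributions telescope, leaving only $\ms{D}\Delta\ms{w}$, $\ms{DB}$ and lower-order/error terms.

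For the higher-derivative fields $\ms{h}\in\lbrace\ul{\ms{h}}^0,\ul{\ms{h}}^2\rbrace$ the argument is structurally identical, starting from Proposition \ref{prop_wave_h}: the subtraction produces $\delta\mc{R}_{W,\ms{h}^i}$ (handled by Remark \ref{rmk_difference_remainders} and Proposition \ref{prop_asymptotics_fields}, noting $\ul{\ms{h}}^i = \mc{O}_{M-1}(\rho)$ so the weights are as claimed) plus the $(\ol{\Box}-\check{\ol{\Box}})\check{\ul{\ms{h}}}^i$ term. Since $\ul{\ms{h}}^i$ is itself a $\ms{D}$-derivative of a Maxwell field, the bad term now contains $\ms{D}^2\delta\ms{g}$ contracted against $\ms{f}^j$, but — and this is the reason the problem is tractable — these come with an extra power of $\rho$ relative to the Weyl case because $\ms{f}^j = \mc{O}_M(\rho)$, and the renormalisation $\Delta\ul{\ms{h}}^i$ again removes the leading obstruction; the residual $\ms{D}^2\delta\ms{g}$-type terms are converted to $\ms{DB}$ and $\ms{D}\Delta\ms{w}$ using \eqref{def_B} and the derivative transport equations \eqref{transport_Ddg}, \eqref{transport_Ddf1} from Proposition \ref{prop_transport_diff}. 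Collecting weights gives \eqref{G_h}.

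The main obstacle is purely bookkeeping of the ``bad'' second-order term: one must verify that after the renormalisation \emph{every} uncontrolled occurrence of $\ms{D}^2\delta\ms{g}$ (and the borderline $\rho^{-1}\delta\ms{m}$, $\rho^{-1}\delta\ms{f}^j$) is either exactly cancelled or reorganised into $\ms{D}\ms{B}$, $\ms{D}\Delta\ms{w}^1$, or a term with enough $\rho$-decay to sit in the stated error classes — and that no new top-order derivative of $\Delta\ms{w}$ or $\Delta\ms{h}$ beyond first order appears. The key structural inputs making this work are: the compatibility of $\ol{\nabla}$ with both metrics (so $\ol{\Box}-\check{\ol{\Box}}$ is first order in the connection difference, hence $\ms{D}\delta\Gamma \sim \ms{D}^2\delta\ms{g}$), the antisymmetry built into $\ms{Q}$ via \eqref{transport_Q} which makes the $\ms{D}\ms{Q}$ pieces combine with the symmetrised $\ms{D}\delta\ms{g}$ into the curl object $\ms{B}$ of \eqref{def_B}, and the Bianchi identities already packaged into the transport equations \eqref{transport_w_star}--\eqref{transport_w_2} which relate $\ms{D}\cdot\ms{w}^0$ and $\ms{D}\ms{w}^1$ so that $\ms{D}\Delta\ms{w}^1$ controls the relevant second derivatives of $\delta\ms{g}$. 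I would carry out the Weyl case in full detail (it is the hardest, since $\ms{w}^\star$, $\ms{w}^2$ carry no extra $\rho$) and then remark that the $\ul{\ms{h}}^i$ case follows \emph{mutatis mutandis} with improved decay, deferring the lengthy term-by-term verification to Appendix \ref{app:prop_wave_difference} (or citing the parallel computation in \cite{Holzegel22}).

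\begin{proof}
See Appendix \ref{app:prop_wave_difference}.
\end{proof}
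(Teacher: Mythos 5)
Your proposal is correct and follows essentially the same route as the paper's proof: subtract the wave equations of Propositions \ref{prop_wave_w} and \ref{prop_wave_h}, treat the difference of remainders via Remark \ref{rmk_difference_remainders} together with the asymptotics of Proposition \ref{prop_asymptotics_fields}, and use the renormalisation \eqref{renormalised_def} so that the uncontrollable $\ms{D}^2(\delta\ms{g}+\ms{Q})$ contributions of $(\ol{\Box}-\check{\ol{\Box}})\check{\ms{A}}$ cancel against $\ol{\Box}$ applied to the renormalisation correction, the curl part being absorbed into $\ms{DB}$ through \eqref{def_B}. The only cosmetic difference is one of emphasis: for $\ul{\ms{h}}^0,\ul{\ms{h}}^2$ it is again this exact cancellation (not the extra $\rho$-decay inherited from the Maxwell fields) that removes the obstruction, the improved decay merely sharpening the weights appearing in \eqref{G_h}.
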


\begin{proof}
    We will write here $(\ms{w}, \check{\ms{w}}, \Delta\ms{w})\in\{(\ms{w}^2, \check{\ms{w}}^2, \Delta\ms{w}^2),(\ms{w}^1, \check{\ms{w}}^1, \Delta\ms{w}^1), (\ms{w}^\star, \check{\ms{w}}^\star, \Delta\ms{w}^\star)\}$ and \\$(\ms{h}, \check{\ms{h}}, \Delta\ms{h})\in \{ (\ul{{\ms{h}}}^0, \ul{\check{{\ms{h}}}}^0, \Delta \ul{\ms{h}}^0), (\ul{\ms{h}}^{2}, \ul{\check{\ms{h}}}^{2}, \Delta\ul{\ms{h}}^{2}) \}$. We will also denote by $\ms{A}$ a general vertical tensor of rank $(0, l)$. Typically, if $\ms{A}$ and $\check{\ms{A}}$ satisfy wave equations of the form: 
    \begin{equation}
        (\ol{\Box} + m_{\ms{A}})\ms{A} = \mc{R}_{W, \ms{A}}\text{,}\qquad (\check{\ol{\Box}} + m_{\ms{A}})\check{\ms{A}} = \check{\mc{R}}_{W, \ms{A}}\text{,}
    \end{equation}
    with $\mc{R}_{W, \ms{A}}, \check{\mc{R}}_{W, \ms{A}}$ lower-order terms, then: 
    \begin{align*}
        (\ol{\Box} + m_{\ms{A}})\Delta \ms{A} =& \left[(\ol{\Box} + m_{\ms{A}})\ms{A}- (\check{\ol{\Box}} + m_{\ms{A}})\check{\ms{A}}\right] - (\ol{\Box}- \check{\ol{\Box}})\check{\ms{A}}\\
        &- \frac{1}{2}\ms{g}^{bc}\sum\limits_{j=1}^{l_2} (\ol{\Box}+m_{\ms{A}})({\ms{A}}_{\hat{a}_j[b]}(\delta\ms{g} + \ms{Q}))_{a_jc}\text{,}\\
        =:&I_{1, \ms{A}} + I_{2, \ms{A}} + I_{3, \ms{A}}\text{.}
    \end{align*}
    The first term, $I_1$ can be easily dealt with since it can simply written  as:
    \begin{equation*}
        I_{1,\ms{A}}=\mc{R}_{W, \ms{A}} - \check{\mc{R}}_{W, \ms{A}}\text{.}
    \end{equation*}
    The goal is therefore to write differences of remainder terms as remainder terms of the difference of fields, inspired by Remark \ref{rmk_difference_remainders}. Typically,  starting with the remainder terms in the wave equations for $\ms{h}$, one has, for example: 
    \begin{align*}
        \rho^2 \delta\mc{S}\paren{\ms{g}; \ms{w}^1, \ul{\ms{h}}^0}=&\, \rho^2\mc{S}(\ms{g}; \delta\ms{g}, \ms{w}^1, \ul{\ms{h}}^0) + \rho^2 \mc{S}(\ms{g}; \delta\ms{w}^1, \ul{\ms{h}}^0)+\rho^2 \mc{S}(\ms{g}; \ms{w}^1, \delta\ul{\ms{h}}^0)+\mc{R} \\
        =&\, \mc{O}_{M-3}(\rho^4; \delta \ms{g}) + \mc{O}_{M-1}(\rho^3; \delta\ms{w}^1) + \mc{O}_{M-3}(\rho^3; \delta\ul{\ms{h}}^0)\text{,}
    \end{align*}
    with $\mc{R}$ as in Remark \ref{rmk_difference_remainders}, i.e. having the same asymptotics as the three first terms terms. 
    The remainder also contains terms with vertical covariant derivatives. These can be treated as in the following example: 
    \begin{align*}
        \rho^2 \delta \mc{S}\paren{\ms{g}; \ms{m}, \ms{D}\ul{\ms{h}}^{2}} =& \rho^2 \mc{S}\paren{\ms{g}; \delta\ms{g}, \ms{m},\ms{D}\ul{\ms{h}}^{2}} + \rho^2 \mc{S}\paren{\ms{g}; \delta\ms{m}, \ms{D}\ul{\ms{h}}^{2}} \\
        &+\rho^2 \mc{S}\paren{\ms{g}; \ms{m}, \ms{D}\delta\ul{\ms{h}}^{2}} + \rho^2\mc{S}\paren{\ms{g}; \ms{m}, (\delta\ms{D})\ul{\ms{h}}^{2}}+\mc{R}\text{,}
    \end{align*}
    where the term $(\delta\ms{D})\ms{h}^{2,\pm}$ can be written as: 
    \begin{align*}
        (\delta\ms{D}_e)\ul{\ms{h}}^{2}_{ab} &= -(\delta\ms{\Gamma})^c_{ea}\ul{\ms{h}}^{2}_{cb}  -(\delta\ms{\Gamma})^c_{eb}\ul{\ms{h}}^{2}_{ac}\\
        &=\mc{O}_{M-1}(\rho; \ms{D}\delta\ms{g})_{eab}\text{,}
    \end{align*}
    where we used Proposition \ref{prop_diff_g_gamma}. As a consequence, 
    \begin{align*}
        \rho^2 \delta\mc{S}\paren{\ms{g}; \ms{m}, \ms{Dh}^{2, \pm}} =&\, \mc{O}_{M-2}\paren{\rho^{4}; \delta\ms{g}} + \mc{O}_{M-2}(\rho^{3}; \delta \ms{m}) + \mc{O}_{M-2}(\rho^3; \ms{D}\delta\ms{h}^{2,\pm})\\
        &+\mc{O}_{M-2}(\rho^{4}, \ms{D}\delta\ms{g})\text{.}
    \end{align*}
    Let $\underline{\ms{f}}\in \lbrace \ms{f}^0, \ms{f}^1\rbrace$, $\underline{\ms{w}}\in \lbrace \ms{w}^\star, \ms{w}^1, \ms{w}^2\rbrace$, and $\underline{\ms{h}}\in \lbrace \ul{\ms{h}}^0, \ul{\ms{h}}^{2}\rbrace$. In order to make the presentation easier, we will simply present the asymptotics, in the differences of $\ms{R}_{W, \ms{h}}$ or $\ms{R}_{W, \ms{w}}$, in terms of the representatives $\underline{\ms{f}}$, $\underline{\ms{w}}$ and $\underline{\ms{h}}$. The exact tensor field will not matter here. Starting with $\underline{\ms{h}}$: 
    \allowdisplaybreaks{
    \begin{align*}
        &\rho\delta\mc{S}\paren{\ms{g}; (\underline{\ms{f}})^3} = \mc{O}_{M}(\rho^{3}; \delta \underline{\ms{f}}) + \mc{O}_{M}(\rho^{4}; \delta \ms{g})\text{,}\\
        &\delta \mc{S}\paren{\ms{g}; \ms{m}, \underline{\ms{f}}} = \mc{O}_{M-2}(\rho^2; \delta\ms{g}) + \mc{O}_{M}(\rho;\delta\ms{m}) + \mc{O}_{M-2}(\rho; \delta\underline{\ms{f}})\text{,}\\
        &\rho^2\delta\mc{S}\paren{\ms{g}; (\underline{\ms{f}})^2, \underline{\ms{h}}} = \mc{O}_{M-1}(\rho^5; \delta \ms{g}) + \mc{O}_{M-1}\paren{\rho^4; \delta\underline{\ms{f}}} + \mc{O}_{M}\paren{\rho^4; \delta\underline{\ms{h}}}\text{,}\\
        &\rho^2 \delta\mc{S}\paren{\ms{g}; \ms{m}, \ms{D}\underline{\ms{h}}} = \mc{O}_{M-2}(\rho^4; \delta\ms{g}) +\mc{O}_{M-2}(\rho^3; \delta\ms{m}) + \mc{O}_{M-2}(\rho^3; \ms{D}\delta\underline{\ms{h}}) \\
        &\hspace{110pt}+\mc{O}_{M-2}(\rho^4; \ms{D}\delta\ms{g})\text{,}\\
        &\rho^2\delta\mc{S}\paren{\ms{g}; \ms{D}\underline{\ms{w}}, \underline{\ms{f}}} = \mc{O}_{M-3}(\rho^3; \delta\ms{g}) + \mc{O}_{M}(\rho^{3}; \ms{D}\delta\underline{\ms{w}})+ \mc{O}_{M-3}(\rho^3; \ms{D}\delta\ms{g}) + \mc{O}_{M-3}(\rho^2;\delta\underline{\ms{f}})\text{,}\\
        &\rho\delta\mc{S}\paren{\ms{g}; \underline{\ms{w}}, \underline{\ms{f}}} = \mc{O}_{M-2}(\rho^{2}; \delta\ms{g})+\mc{O}_{M} (\rho^{2}; \delta\underline{\ms{w}}) + \mc{O}_{M-2}(\rho;\delta\underline{\ms{f}})\text{,}\\
        &\rho^2\delta \mc{S}\paren{\ms{g}; \underline{\ms{w}}, \underline{\ms{h}}} = \mc{O}_{M-2}(\rho^3; \delta\ms{g}) + \mc{O}_{M-1}(\rho^3; \delta\underline{\ms{w}}) + \mc{O}_{M-2}(\rho^2; \delta\underline{\ms{h}})\text{,}\\
        &\rho^2\delta\mc{S}\paren{\ms{g}; \ms{Dm}, \underline{\ms{h}}} = \mc{O}_{M-2}(\rho^3; \delta\ms{g}) + \mc{O}_{M-1}(\rho^3; \ms{D}\delta\ms{m})+\mc{O}_{M-3}(\rho^3; \delta\underline{\ms{h}})+\mc{O}_{M-2}(\rho^4; \ms{D}\delta\ms{g})\text{,}\\
        &\rho^2\delta\mc{S}(\ms{g}; \ms{m}, \underline{\ms{w}}, \underline{\ms{f}}) = \mc{O}_{M-2}(\rho^4; \delta\ms{g}) + \mc{O}_{M-2}(\rho^3; \delta\ms{m}) + \mc{O}_{M-2}(\rho^3; \delta\ms{m}) + \mc{O}_{M-2}(\rho^4; \delta\underline{\ms{f}})\text{,}\\
        &\rho\delta\mc{S}\paren{\ms{g}; \ms{m}, \underline{\ms{h}}} = \mc{O}_{M-2}(\rho^3; \delta\ms{g}) + \mc{O}_{M-1}(\rho^2; \delta\ms{m}) + \mc{O}_{M-2}(\rho^2;\delta\underline{\ms{h}})\text{,}\\
        &\rho^2\delta\mc{S}\paren{\ms{g}; \ms{m}^2, \underline{\ms{h}}} = \mc{O}_{M-2}(\rho^5; \delta\ms{g}) + \mc{O}_{M-2}(\rho^4; \delta\ms{m})+\mc{O}_{M-2}(\rho^4; \delta\underline{\ms{h}})\text{,}\\
        &\rho^2\delta\mc{S}\paren{\ms{g}; \ms{m}^2, \underline{\ms{f}}}=\mc{O}_{M-2}(\rho^5; \delta\ms{g}) + \mc{O}_{M-2}(\rho^4; \delta\ms{m})+\mc{O}_{M-2}(\rho^4; \delta\underline{\ms{f}})\text{,}\\
        &\rho^2\delta\mc{S}\paren{\ms{g}; \ms{m}, (\underline{\ms{f}})^3} = \mc{O}_{M-2}(\rho^6; \delta\ms{g}) + \mc{O}_{M}(\rho^5; \delta\ms{m}) + \mc{O}_{M-2}(\rho^5; \delta\underline{\ms{f}})\text{,}\\
        &\rho\delta \mc{S}\paren{\ms{g}; \ms{Dm}, \underline{\ms{f}}} = \mc{O}_{M-3}(\rho^3; \delta\ms{g}) + \mc{O}_{M}(\rho^2; \ms{D}\delta\ms{m})+\mc{O}_{M-2}(\rho^2; \ms{D}\delta\ms{g}) + \mc{O}_{M-3}(\rho^2; \delta\underline{\ms{f}})\text{,}
    \end{align*}}
    while for $\underline{\ms{w}}$: 
    \allowdisplaybreaks{
    \begin{align*}
        &\rho^2\delta\mc{S}(\ms{g}; (\underline{\ms{w}})^2) = \mc{O}_{M-2}(\rho^2; \delta\ms{g}) + \mc{O}_{M-2}(\rho^2; \delta\underline{\ms{w}})\text{,}\\
        &\rho^2 \delta\mc{S}\paren{\ms{g}; (\underline{\ms{h}})^2} = \mc{O}_{M-1}\paren{\rho^4; \delta\ms{g}} + \mc{O}_{M-1}(\rho^3; \delta\underline{\ms{h}})\text{,}\\
        &\rho^2\delta \mc{S}\paren{\ms{g}; \ms{D}\underline{\ms{h}}, \underline{\ms{f}}} = \mc{O}_{M-2}(\rho^4; \delta\ms{g}) + \mc{O}_{M}(\rho^3; \ms{D}\delta\underline{\ms{h}}) + \mc{O}_{M-1}(\rho^4; \ms{D}\delta\ms{g}) +\mc{O}_{M-2}\paren{\rho^3; \delta\underline{\ms{f}}}\text{,}\\
        &\rho^2\delta\mc{S}\paren{\ms{g}; \underline{\ms{w}}, (\underline{\ms{f}})^2} = \mc{O}_{M-2}(\rho^{4}; \delta\ms{g}) + \mc{O}_{M}(\rho^{4}; \delta\underline{\ms{w}}) + \mc{O}_{M-2}\paren{\rho^{3}; \delta\underline{\ms{f}}}\text{,}\\
        &\rho^2\delta \mc{S}\paren{\ms{g}; (\underline{\ms{f}})^4} = \mc{O}_{M}(\rho^{6}; \delta\ms{g}) + \mc{O}_{M}(\rho^{5}; \delta\underline{\ms{f}})\text{,}\\
        &\rho^2\delta\mc{S}\paren{\ms{g}; \ms{m}, \ms{D}\underline{\ms{w}}} = \mc{O}_{M-2}(\rho^3; \delta\ms{g}) + \mc{O}_{M-2}(\rho^2; \delta\ms{m}) + \mc{O}_{M-2}(\rho^3;\ms{D}\delta\underline{\ms{w}}) + \mc{O}_{M-2}(\rho^3; \ms{D}\delta\ms{g})\text{,}\\
        &\rho^2 \delta\mc{S}\paren{\ms{g}; \ms{Dm}, \underline{\ms{w}}} = \mc{O}_{M-3}(\rho^3; \delta\ms{g}) + \mc{O}_{M-2}(\rho^2; \ms{D}\delta\ms{m}) + \mc{O}_{M-2}(\rho^2; \delta\underline{\ms{w}}) + \mc{O}_{M-2}(\rho^3; \ms{D}\delta\ms{g})\text{,}\\
        &\rho\delta\mc{S}\paren{\ms{g}; \ms{m}, \underline{\ms{w}}} = \mc{O}_{M-2}(\rho^2;\delta\ms{g}) + \mc{O}_{M-2}(\rho; \delta\ms{m}) + \mc{O}_{M-2}(\rho^2;\delta\underline{\ms{w}})\text{,}\\
        &\delta\mc{S}\paren{\ms{g}; (\underline{\ms{f}})^2} = \mc{O}_{M}(\rho^2; \delta\ms{g}) + \mc{O}_{M}(\rho; \delta\underline{\ms{f}})\text{,}\\
        &\rho\delta\mc{S}\paren{\ms{g}; \underline{\ms{h}}, \underline{\ms{f}}} = \mc{O}_{M-1}(\rho^3; \delta\ms{g}) + \mc{O}_{M}(\rho^2; \delta\underline{\ms{h}}) + \mc{O}_{M-1}(\rho^2; \delta\underline{\ms{f}})\text{,}\\
        &\rho^2\delta\mc{S}\paren{\ms{g}; (\ms{m})^2, \underline{\ms{w}}} = \mc{O}_{M-2}(\rho^4; \delta\ms{g}) + \mc{O}_{M-2}(\rho^3: \delta\ms{m}) + \mc{O}_{M-2}(\rho^4; \delta\underline{\ms{w}})\text{,}\\
        &\rho^2\delta\mc{S}\paren{\ms{g}; \ms{m}, \underline{\ms{h}}, \underline{\ms{f}}} = \mc{O}_{M-2}(\rho^5; \delta\ms{g}) + \mc{O}_{M-1}(\rho^4; \delta\ms{m}) +\mc{O}_{M-2}(\rho^4; \delta\underline{\ms{h}}) + \mc{O}_{M-2}(\rho^4;\delta\underline{\ms{f}})\text{,}\\
        &\rho\delta\mc{S}\paren{\ms{g}; \ms{m}, (\underline{\ms{f}})^2} = \mc{O}_{M-2}(\rho^4; \delta\ms{g}) + \mc{O}_{M}(\rho^3; \delta\ms{m}) + \mc{O}_{M-2}(\rho^3; \delta\underline{\ms{f}})\text{,}\\
        &\rho^2\delta\mc{S}\paren{\ms{g}; \ms{Dm}, (\underline{\ms{f}})^2} = \mc{O}_{M-3}(\rho^5; \delta\ms{g}) + \mc{O}_{M}(\rho^4; \ms{D}\delta\ms{m}) + \mc{O}_{M-2}(\rho^5; \ms{D}\delta\ms{g})+ \mc{O}_{M-2}(\rho^4; \delta\underline{\ms{f}})\text{.}
    \end{align*}}
    After a careful analysis, one can show that the $I_{1, \ms{h}}$ and $I_{1, \ms{w}}$ can be written as: 
    \begin{align*}
        &\begin{aligned}
            I_{1, \ms{h}} =& \mc{O}_{M-3}\paren{\rho^2; \delta\ms{g}} + \mc{O}_{M-3}(\rho; \delta\ms{m}) + \mc{O}_{M-3}(\rho; \delta\ms{f}^0) +\mc{O}_{M-3}(\rho;\delta\ms{f}^1)\\
            &+\mc{O}_{M-2}\paren{\rho^3; \ms{D}\delta\ms{g}} + \mc{O}_{M-2}\paren{\rho^2; \ms{D}\delta\ms{m}} \\
            &+ \sum\limits_{\ms{h}\in\{\ul{\ms{h}}^0, \ul{\ms{h}}^{2}\}}\mc{O}_{M-3}(\rho^2; \delta \ms{h}) + \mc{O}_{M-2}(\rho^3;\ms{D}\delta\ms{h}) \\
            &+ \sum\limits_{\ms{w}\in\lbrace \ms{w}^0, \ms{w}^1, \ms{w}^2\rbrace} \mc{O}_{M-2}(\rho^2; \delta\ms{w})+\mc{O}_{M-1}(\rho^3; \ms{D}\delta\ms{w})\text{,}
        \end{aligned}\\
        &\begin{aligned}
        I_{1, \ms{w}} =& \mc{O}_{M-3}(\rho^2; \delta\ms{g}) + \mc{O}_{M-3}(\rho; \delta \ms{m}) + \mc{O}_{M-3}(\rho; \delta\ms{f}^0) + \mc{O}_{M-3}\paren{\rho; \delta\ms{f}^1}\\
        &+\mc{O}_{M-2}(\rho^3; \ms{D}\delta\ms{g}) + \mc{O}_{M-2}(\rho^2; \ms{D}\delta\ms{m}) \\
        &+\sum\limits_{\ms{h}\in\{\ul{\ms{h}}^0, \ul{\ms{h}}^{2}\}} \mc{O}_{M-2}(\rho^2; \delta\ms{h}) + \mc{O}_{M-1}(\rho^3; \ms{D}\delta\ms{h}) \\
        &+\sum\limits_{\ms{w}\in\{\ms{w}^0, \ms{w}^1, \ms{w}^2\}} \mc{O}_{M-3}(\rho^2; \delta \ms{w}) + \mc{O}_{M-2}(\rho^3; \ms{D}\delta\ms{w})\text{.}
        \end{aligned}
    \end{align*}
    We should however consider, instead of $\delta\ms{h}$ and $\delta\ms{w}$, their renormalised differences. One can therefore note, for any vertical tensor $\ms{A}$: 
    \begin{align*}
        &\delta\ms{A} = \Delta\ms{A} + \mc{S}\paren{\ms{g}; \ms{A}, \delta\ms{g}}+\mc{S}\paren{\ms{g}; \ms{A}, \ms{Q}}\text{,}\\
        &\ms{D}\delta\ms{A} = \delta\ms{D}\ms{A} + \mc{S}\paren{\ms{g}; \ms{D}\delta\ms{g}, \check{\ms{A}}}\text{.}
    \end{align*}
    Since $\ms{w}^\star$  solves a ``nice" wave equation, unlike $\ms{w}^0$, one also needs to write: 
    \begin{align*}
        &\begin{aligned}\delta\ms{w}^0 &= \delta\ms{w}^\star + \mc{O}_{M}(1; \delta\ms{w}^2) + \mc{O}_{M-2}(1; \delta\ms{g})\\
        &=\Delta\ms{w}^\star + \mc{O}_{M-2}(1; \ms{Q}) + \mc{O}_{M-2}(1; \delta\ms{g}) + \mc{O}_{M}(1; \Delta\ms{w}^2)\text{,}
        \end{aligned}\\
        &\begin{aligned}
        \ms{D}\delta \ms{w}^0 &= \ms{D}\Delta\ms{w}^\star + \mc{O}_M(1; \ms{D}\Delta\ms{w}^2) + \mc{O}_{M-1}(1; \Delta\ms{w}^2) + \mc{O}_{M-2}(1; \ms{DQ}) \\
        &+ \mc{O}_{M-3}(1; \ms{Q}) + \mc{O}_{M-2}(1; \ms{D}\delta\ms{g}) + \mc{O}_{M-3}(1; \delta\ms{g})\text{,}
        \end{aligned}
    \end{align*}
    giving eventually: 
    \begin{align}
        \label{I_1_h}I_{1, \ms{h}} = &\mc{O}_{M-3}\paren{\rho^2; \delta\ms{g}} + \mc{O}_{M-3}(\rho; \delta\ms{m}) + \mc{O}_{M-3}(\rho; \delta\ms{f}^0) +\mc{O}_{M-3}(\rho;\delta\ms{f}^1)\\
        &\notag+\mc{O}_{M-2}\paren{\rho^3; \ms{D}\delta\ms{g}} + \mc{O}_{M-2}\paren{\rho^2; \ms{D}\delta\ms{m}} +\mc{O}_{M-3}(\rho^2; \ms{Q}) + \mc{O}_{M-2}(\rho^3; \ms{DQ}) \\
        &\notag+ \sum\limits_{\ms{h}\in\{\ul{\ms{h}}^0, \ul{\ms{h}}^{2}\}}\mc{O}_{M-3}(\rho^2; \Delta \ms{h}) + \mc{O}_{M-2}(\rho^3;\ms{D}\Delta\ms{h}) \\
        &\notag+ \sum\limits_{\ms{w}\in\lbrace \ms{w}^\star, \ms{w}^1, \ms{w}^2\rbrace} \mc{O}_{M-2}(\rho^2; \Delta\ms{w})+\mc{O}_{M-1}(\rho^3; \ms{D}\Delta\ms{w})\text{,}\\
        \label{I_1_w}I_{1, \ms{w}} &= \mc{O}_{M-3}(\rho^2; \delta\ms{g}) + \mc{O}_{M-3}(\rho; \delta \ms{m}) + \mc{O}_{M-3}(\rho; \delta\ms{f}^0) + \mc{O}_{M-3}\paren{\rho; \delta\ms{f}^1}\\
        &\notag+\mc{O}_{M-2}(\rho^3; \ms{D}\delta\ms{g}) + \ms{O}_{M-2}(\rho^2; \ms{D}\delta\ms{m})+\mc{O}_{M-3}(\rho^2; \ms{Q}) + \mc{O}_{M-2}(\rho^3; \ms{DQ})\\
        &\notag+\sum\limits_{\ms{h}\in\{\ul{\ms{h}}^0, \ul{\ms{h}}^{2}\}} \mc{O}_{M-2}(\rho^2; \Delta\ms{h}) + \mc{O}_{M-1}(\rho^3; \ms{D}\Delta\ms{h}) \\
        &\notag+\sum\limits_{\ms{w}\in\{\ms{w}^\star, \ms{w}^1, \ms{w}^2\}} \mc{O}_{M-3}(\rho^2; \Delta\ms{w}) + \mc{O}_{M-2}(\rho^3; \ms{D}\Delta\ms{w})\text{.}
    \end{align}
    The term $I_{2, \ms{A}}$ will be very subtle to handle, and the choice of renormalisation from Equation \eqref{renormalised_def} will turn out to be crucial. First of all, by simply computing the wave equation for each metric, using \eqref{wave_op}, one finds, for any vertical tensor $\ms{A}$: 
    \begin{align}\label{diff_wave_A}
        (\ol{\Box} - \check{\ol{\Box}}){\ms{A}} =& \rho^2(\delta\Box_{\ms{g}}){\ms{A}} + \rho^2 \mc{S}(\Lie_\rho\ms{A})\cdot \delta\mc{S}\paren{\ms{g}; \ms{m}}+\rho\mc{S}(\ms{A})\cdot \delta\mc{S}\paren{\ms{g}; \ms{m}}+\rho^2\mc{S}(\ms{A})\cdot \delta\mc{S}\paren{\ms{g}; \Lie_\rho\ms{m}}\\
        &\notag + \rho^2 \mc{S}(\ms{A})\cdot \delta\mc{S}\paren{\ms{g}; \ms{m}^2}\text{,}
    \end{align}
    where we wrote:
    \begin{equation*}
        (\delta\Box_{\ms{g}})\ms{A}=(\ms{g}^{bc}\ms{D}^2_{bc} - \check{\ms{g}}^{bc}\check{\ms{D}}^2_{bc})\ms{A}\text{,}
    \end{equation*}
    will be the only term in \eqref{diff_wave_A} that will be tricky to handle. 
    
    Note that the terms in \eqref{diff_wave_A} have been gathered in pairs with the same algebraic form, for which one can use Remark \ref{rmk_difference_remainders}. Using thus Proposition \ref{prop_asymptotics_fields}, \eqref{transport_dm}, as well Propositions \ref{prop_transport_h} and \ref{prop_transport_w}, these terms can be treated as remainders: 
    \begin{align}\label{I_2_h_intermediate}
        (\ol{\Box}- \check{\ol{\Box}})\check{\ms{h}} =& \rho^2(\delta\Box_{\ms{g}})\check{\ms{h}} + \mc{O}_{M-3}(\rho^3; \delta\ms{g}) + \mc{O}_{M-3}(\rho^2; \delta\ms{m}) + \mc{O}_{M-3}(\rho^3; \ms{Q}) \\
        &\notag+\mc{O}_{M-2}(\rho^4; \delta\ms{f}^0) +\mc{O}_{M-2}(\rho^4; \delta\ms{f}^1) + \mc{O}_{M-2}(\rho^3; \Delta\ms{w}^2)\text{,}\\
        \label{I_2_w_intermediate} (\ol{\Box}- \check{\ol{\Box}})\check{\ms{w}}=&\rho^2(\delta\Box_{\ms{g}})\check{\ms{w}} + \mc{O}_{M-3}(\rho^2; \delta\ms{g}) + \mc{O}_{M-3}(\rho; \delta\ms{m}) + \mc{O}_{M-3}(\rho^3; \ms{Q}) \\
        &\notag+\mc{O}_{M-2}(\rho^3; \delta\ms{f}^0) +\mc{O}_{M-2}(\rho^3; \delta\ms{f}^1) + \mc{O}_{M-2}(\rho^2; \Delta\ms{w}^2)\text{.}
    \end{align}
    The term $(\delta\Box_{\ms{g}})\ms{A}$, expanded, gives: 
    \begin{align}
        \rho^2(\delta\Box_{\ms{g}})\ms{A}_{\bar{a}} =\rho^2\ms{g}^{bc}\ms{D}_b (\delta\ms{D}_c\ms{A}_{\bar{a}}) + \rho^2\ms{g}^{bc}\delta\ms{D}_b(\check{\ms{D}}_c\ms{A}_{\bar{a}}) +\rho^2\delta \ms{g}^{bc}\check{\ms{D}}^2_{bc}\ms{A}_{\bar{a}}\text{.}
    \end{align}
    Now, observe that for any vertical tensor $\ms{A}$ of rank $(0, l)$, the operator $\delta\ms{D}$ can be written in terms of difference of Christoffel symbols, which can be controlled using Proposition \ref{prop_diff_g_gamma}: 
    \begin{align*}
        (\delta\ms{D})_a\ms{A}_{\bar{b}} = -\sum\limits_{i=1}^l \delta \ms{\Gamma}_{ab_i}^c\ms{A}_{\hat{b}_i[c]}\text{,}
    \end{align*}
    giving thus: 
    \begin{align*}
        \rho^2(\delta\Box_{\ms{g}})\ms{A}_{\bar{a}} = &-\rho^2\ms{g}^{bc}\sum\limits_{i=1}^{l}\ms{D}_b(\delta\ms{\Gamma}_{ca_i}^d\ms{A}_{\hat{a}_i[d]}) - \rho^2\ms{g}^{bc}\ms{\delta}\ms{\Gamma}_{bc}^d\check{\ms{D}}_{d}\ms{A}_{\bar{a}} - \rho^2\ms{g}^{bc}\sum\limits_{i=1}^l \delta\ms{\Gamma}_{ba_i}^d\check{\ms{D}}_c\ms{A}_{\hat{a}_i[d]}\\
        &+\rho^2\delta \ms{g}^{bc}\check{\ms{D}}^2_{bc}\ms{A}_{\bar{a}}\\
        =&-\frac{1}{2}\rho^2\ms{g}^{bc}\sum\limits_{i=1}^l \ms{D}_{b}(\check{\ms{g}}^{de}(\ms{D}_c\delta\ms{g}_{ea_i} + \ms{D}_{a_i}\delta\ms{g}_{ec} - \ms{D}_e\delta\ms{g}_{ca_i})\ms{A}_{\hat{a}_i[d]}) \\
        &+\left[\mc{S}(\check{\ms{D}}\ms{A})\mc{O}_M(\rho^2; \ms{D}\delta\ms{g})\right]_{\bar{a}} +\rho^2\delta \ms{g}^{bc}\check{\ms{D}}^2_{bc}\ms{A}_{\bar{a}}\\
        =&-\frac{1}{2}\rho^2\ms{g}^{bc}\ms{g}^{de}\sum\limits_{i=1}^l \ms{D}_{b}(\ms{D}_c\delta\ms{g}_{ea_i} + \ms{D}_{a_i}\delta\ms{g}_{ec} - \ms{D}_e\delta\ms{g}_{ca_i})\ms{A}_{\hat{a}_i[d]} \\
        &+ \left[\mc{S}(\ms{DA})\mc{O}_{M-2}(\rho^2; \ms{D}\delta\ms{g})\right]_{\bar{a}}+ \left[\mc{S}(\ms{A})\mc{O}_{M-2}(\rho^2; \ms{D}\delta\ms{g})\right]_{\bar{a}}+\left[\mc{S}\paren{\check{\ms{D}}\ms{A}}\mc{O}_{M-2}(\rho^2;\delta\ms{g})\right]_{\bar{a}}\\
        & +\rho^2\delta \ms{g}^{bc}\check{\ms{D}}^2_{bc}\ms{A}_{\bar{a}}\text{,}
    \end{align*}
    The heuristics behind those calculations is essentially the realisation that any derivatives of $\delta\ms{g}$ less than two can be controlled. The only problematic ones, and this is the reason why we kept them explicitly, are those of the form $\ms{D}^2\delta\ms{g}$, which cannot be controlled in the Carleman estimates. Furthermore, the reader may keep in mind that any term with a $\check{}$, can be expressed in terms of its counterpart in $(\mc{M}, g)$, up to remainder terms, such as $\check{\ms{g}}^{-1} = \ms{g}^{-1} + \mc{O}_M(1; \delta\ms{g})$. 
    
    Observe also, from the definitions of $\ms{Q}$ and $\ms{B}$ in \eqref{transport_Q}, \eqref{def_B}, the following holds: 
    \begin{align*}
        \ms{D}_b\ms{D}_{[a_i}\delta\ms{g}_{e]c} = \frac{1}{2}(\ms{D}_b \ms{B}_{a_iec} + \ms{D}^2_{bc}\ms{Q}_{a_ie}) \text{,}
    \end{align*}
    such that one can write: 
    \begin{align*}
        &\frac{1}{2}\rho^2\ms{g}^{bc}\ms{g}^{de}\sum\limits_{i=1}^l \ms{D}_{b}(\ms{D}_c\delta\ms{g}_{ea_i} + \ms{D}_{a_i}\delta\ms{g}_{ec} - \ms{D}_e\delta\ms{g}_{ca_i})\ms{A}_{\hat{a}_i[d]} \\
        &=\frac{1}{2}\rho^2\ms{g}^{bc}\ms{g}^{de}\sum\limits_{i=1}^l \ms{D}^2_{bc}(\delta\ms{g} + \ms{Q})_{a_i e} \ms{A}_{\hat{a}_i[d]}+ \rho^2\mc{S}(\ms{g}; \ms{A},\ms{DB}) \text{,}
    \end{align*}
    
    Eventually, for $\check{\ms{h}}$ and $\check{\ms{w}}$, \eqref{I_2_h_intermediate} and \eqref{I_2_w_intermediate} become: 
    \begin{align}
        \label{I_2_h}I_{2,\ms{h}} = &\frac{1}{2}\rho^2\ms{g}^{bc}\ms{g}^{de}\sum\limits_{i=1}^{l} \ms{D}^2_{bc}(\delta\ms{g} + \ms{Q})_{a_ie}\ms{h}_{\hat{a}_i[d]}+\mc{O}_{M-3}(\rho^2;\Delta\ms{h}) \\
        &\notag + \mc{O}_{M-3}(\rho^2; \delta\ms{g}) + \mc{O}_{M-2}(\rho^3; \ms{D}\delta\ms{g}) + \mc{O}_{M-3}(\rho^3; \ms{Q}) + \mc{O}_{M-3}(\rho^2; \delta\ms{m})\\
        &\notag+\mc{O}_{M-2}(\rho^4; \delta\ms{f}^0) + \mc{O}_{M-2}(\rho^4; \delta\ms{f}^1) + \mc{O}_{M-2}(\rho^3; \Delta\ms{w}^2) + \mc{O}_{M-2}(\rho^3; \ms{DB})\text{,}\\
        \label{I_2_w}I_{2, \ms{w}} =&\frac{1}{2}\rho^2\ms{g}^{bc}\ms{g}^{de}\sum\limits_{i=1}^{l} \ms{D}^2_{bc}(\delta\ms{g} + \ms{Q})_{ea_i}\ms{w}_{\hat{a}_i[d]}+\mc{O}_{M-3}(\rho^2; \Delta\ms{w})\\
        &\notag+ \mc{O}_{M-4}(\rho^2; \delta\ms{g}) + \mc{O}_{M-3}(\rho^2; \ms{D}\delta\ms{g}) + \mc{O}_{M-3}(\rho^3; \ms{Q}) + \mc{O}_{M-3}(\rho; \delta\ms{m})\\
        &\notag+\mc{O}_{M-2}(\rho^3; \delta\ms{f}^0) + \mc{O}_{M-2}(\rho^3; \delta\ms{f}^1) + \mc{O}_{M-2}(\rho^2; \Delta\ms{w}^2)+\mc{O}_{M-2}(\rho^2; \ms{DB})\text{,}
    \end{align}
    where we used the fact that: 
    \begin{align*}
        \frac{1}{2}\rho^2\ms{g}^{bc}\ms{g}^{de}\sum\limits_{i=1}^{l} \ms{D}^2_{bc}(\delta\ms{g} + \ms{Q})_{a_ie}\check{\ms{h}}_{\hat{a}_i[d]}=&\frac{1}{2}\rho^2\ms{g}^{bc}\ms{g}^{de}\sum\limits_{i=1}^{l} \ms{D}^2_{bc}(\delta\ms{g} + \ms{Q})_{a_ie}{\ms{h}}_{\hat{a}_i[d]} + \mc{O}_{M-3}(\rho^2; \Delta\ms{h})\\
        &+ \mc{O}_{M-3}(\rho^3;\delta\ms{g}) + \mc{O}_{M-3}(\rho^3; \ms{Q})\text{,}
    \end{align*}
    and similarly for $\ms{w}$. This identity can be obtained from the fact that $\check{\ms{h}} = \ms{h} - \delta \ms{h}$ and from the following informal observations: 
    \begin{gather*}
        \ms{D}^2\delta\ms{g} = \mc{O}_{M-2}(1)\text{,}\qquad \rho^2\mc{S}\paren{\ms{g}; \ms{D}^2\ms{Q}, \delta\ms{h}} = \mc{O}_{M-3}(\rho^2; \Delta\ms{h}) + \mc{O}_{M-3}(\rho^3; \delta\ms{g}) + \mc{O}_{M-3}(\rho^3; \ms{Q})\text{,}\\ 
        \rho^2\mc{S}\paren{\ms{g}; \ms{D}^2\delta \ms{g}, \delta\ms{h}} = \mc{O}_{M-2}(\rho^2; \Delta\ms{h}) + \mc{O}_{M-2}(\rho^3;\delta\ms{g}) + \mc{O}_{M-2}(\rho^3; \ms{Q})\text{.}
    \end{gather*}
    
    Finally, one can look at $I_{3, \ms{A}}$:
    \begin{align*}
        I_{3,\ms{A}} =& -\frac{1}{2} \ms{g}^{bc}\sum\limits_{j=1}^{l_2} (\ol{\Box}+m_{\ms{A}})({\ms{A}}_{\hat{a}_j[b]}(\delta\ms{g} + \ms{Q}))_{a_jc}\\
        =&- \frac{1}{2}\ms{g}^{bc}\sum\limits_{j=1}^{l_2} (\ol{\Box}+m_{\ms{A}})({\ms{A}}_{\hat{a}_j[b]})(\delta\ms{g} + \ms{Q})_{a_jc} - \frac{1}{2}\ms{g}^{bc}\sum\limits_{j=1}^{l_2} {\ms{A}}_{\hat{a}_j[b]}\ol{\Box}(\delta\ms{g} + \ms{Q})_{a_jc}\\
        &-\rho^2\ms{g}^{bc}\sum\limits_{j=1}^{l_2} \ol{\ms{D}}_\rho{\ms{A}}_{\hat{a}_i[b]}\ol{\ms{D}}_\rho(\delta\ms{g}+ \ms{Q})_{a_jc} -\rho^2\ms{g}^{bc}\ms{g}^{de}\sum\limits_{j=1}^{l_2} {\ms{D}}_d{\ms{A}}_{\hat{a}_i[b]}{\ms{D}}_e(\delta\ms{g}- \ms{Q})_{a_jc}\\
        =:&\,I_{3, \ms{A}}^{(1)} + I_{3, \ms{A}}^{(2)}+I_{3, \ms{A}}^{(3)}+I_{3, \ms{A}}^{(4)}
    \end{align*}
    The last term is easy to control:
    \begin{equation*}
        I_{3, \ms{A}}^{(4)} = \mc{S}(\ms{D}\ms{A})(\mc{O}_{M}(\rho^2; \ms{D}\delta\ms{g}) + \mc{O}_{M-1}(\rho^2; \ms{DQ}))\text{,}
    \end{equation*}
    giving for $\ms{h}$ and $\ms{w}$: 
    \begin{align}
        &\label{I_3_4_h}I_{3, \ms{h}}^{(4)}=\mc{O}_{M-2}(\rho^3;\ms{D} \delta\ms{g}) + \mc{O}_{M-2}(\rho^3; \ms{DQ})\text{,}\\
        &\label{I_3_4_w}I_{3, \ms{w}}^{(4)} = \mc{O}_{M-3}(\rho^2; \ms{D}\delta\ms{g}) + \mc{O}_{M-3}(\rho^2; \ms{DQ})\text{,}
    \end{align}
    while for $I_{3, \ms{A}}^{(3)}$, one has to use Proposition \ref{prop_transport_diff}: 
    \begin{align*}
        I_{3, \ms{A}}^{(3)} = (\mc{S}(\Lie_\rho\ms{{A}}) + \mc{S}(\ms{g}; \ms{m}, {\ms{A}}))(\mc{O}_{M}(\rho^2; \delta\ms{m}) +\mc{O}_{M-2}(\rho^3; \delta\ms{g})+ \mc{O}_{M-2}(\rho^3; \ms{Q}))\text{,}
    \end{align*}
    which gives, similarly, after using Proposition \ref{prop_asymptotics_fields}: 
    \begin{align}\label{I_3_3}
         &I_{3, \ms{h}}^{(3)}\text{,} \;I_{3, \ms{w}}^{(3)} = \mc{O}_{M-3}(\rho^3; \delta\ms{g}) + \mc{O}_{M-3}(\rho^2; \delta\ms{m}) + \mc{O}_{M-3}(\rho^3; \ms{Q})\text{.}
    \end{align}
    The term $I_{3, \ms{A}}^{(1)}$ can be controlled by looking at the wave equation satisfied by $\ms{A}$, therefore using Propositions \ref{prop_wave_h} and \ref{prop_wave_w}: 
    \begin{align}\label{intermediate_I_3}
        I_{3, \ms{h}}^{(1)}, I_{3, \ms{w}}^{(1)} = \mc{O}_{M-3}(\rho^2; \delta\ms{g}) + \mc{O}_{M-3}(\rho^2; \ms{Q})\text{.}
    \end{align}
    This is due to the fact that $\ms{h}$ and $\ms{w}$ satisfy vertical wave equations as given in Propositions \ref{prop_wave_h} and \ref{prop_wave_w} with remainder terms satisfying the asymptotics \eqref{intermediate_I_3}. As an example, the remainder terms for \eqref{wave_w} yield terms of the form: 
    \begin{gather*}
        \abs{\mc{S}\paren{\ms{g}; \ms{f}^2}}_{M, \varphi} \lesssim \rho^2\text{,}
    \end{gather*}
    with $\ms{f}\in \lbrace \ms{f}^0, \ms{f}^1\rbrace$. Similarly, \eqref{remainder_h_0} and \eqref{remainder_h_2} contain terms as:
    \begin{gather*}
        \abs{\mc{S}\paren{\ms{g}; \ms{m}, \ms{f}}}_{M-2, \varphi}\lesssim \rho^2\text{.}
    \end{gather*}
    Analysing every term in \eqref{remainder_h_0}, \eqref{remainder_h_2}, \eqref{remainder_wave_w_2}, \eqref{remainder_wave_w_1} yield indeed the $\rho^2$ as minimal decay and $M-3$ as minimal regularity. 

    The treatment of $I_{3, \ms{A}}^{(2)}$, containing terms of the form $\ms{D}^2\delta\ms{g}$, will require a bit more care. More precisely, using \eqref{wave_op} and Proposition \ref{prop_asymptotics_fields}:
    \begin{align*}
        I^{(2)}_{3, \ms{A}} =& -\frac{1}{2}\ms{g}^{bc}\sum\limits_{i=1}^l \ms{A}_{\hat{a}_i[b]}\ol{\Box}(\delta\ms{g} +\ms{Q})_{a_ic}\\
        =&-\frac{1}{2}\rho^2\ms{g}^{bc}\ms{g}^{de}\sum\limits_{i=1}^l \ms{A}_{\hat{a}_i[b]} \ms{D}^2_{de}(\delta\ms{g}+\ms{Q})_{a_i c} + \left[\mc{S}(\ms{A})(\mc{O}_{M-2}(\rho^2;\Lie_\rho^2(\delta\ms{g} + \ms{Q}))\right. \\
        &\left.+\mc{O}_{M-2}(\rho; \Lie_\rho(\delta\ms{g} + \ms{Q})) + \mc{O}_{M-2}(\rho^2; \delta\ms{g} + \ms{Q}))\right]_{\bar{a}}\text{.}
    \end{align*}
    In order to treat the remainders, one can note, using \eqref{transport_Q}:
    \begin{align*}
        \Lie_\rho^2 \ms{Q} &= \Lie_\rho \paren{\mc{S}\paren{\ms{g}; \ms{m}, \delta\ms{g}} + \mc{S}\paren{\ms{g}; \ms{m}, \ms{Q}}}\\
        &=\mc{O}_{M-2}(1; \delta\ms{g}) + \mc{O}_{M-2}(\rho; \delta\ms{m}) + \mc{O}_{M-2}(1; \ms{Q})\text{.}
    \end{align*}
    As a consequence, using Proposition \ref{prop_transport_diff}, one has: 
    \begin{align*}
        &\rho\Lie_\rho(\delta\ms{g}+\ms{Q}) = \rho\delta\ms{m} + \mc{O}_{M-2}(\rho^2; \delta\ms{g}) + \mc{O}_{M-2}(\rho^2; \ms{Q})\text{,}\\
        &\begin{aligned}
            \rho^2\Lie_\rho^2 (\delta\ms{g} +\ms{Q}) =& \mc{O}_{M-2}(\rho; \delta\ms{m}) + \mc{S}\paren{\rho^2 \Delta\ms{w}^2} + \mc{O}_{M-3}(\rho^2; \delta\ms{g}) + \mc{O}_{M-3}(\rho^2; \ms{Q})\\
            &+\mc{O}_{M-2}(\rho^3;\delta\ms{f}^0) + \mc{O}_{M-2}(\rho^3; \delta\ms{f}^1)\text{,}
        \end{aligned}
    \end{align*}
    giving eventually: 
    \begin{align*}
        I_{3, \ms{A}}^{(2)} =& -\frac{}{2}\rho^2\ms{g}^{bc}\ms{g}^{de}\sum\limits_{i=1}^l \ms{A}_{\hat{a}_i[b]} \ms{D}^2_{de}(\delta\ms{g}+\ms{Q})_{a_ic}\\
        &+ \mc{S}(\ms{A})\left[\mc{O}_{M-2}\paren{\rho^2; \Delta\ms{w}^2} + \mc{O}_{M-3}(\rho^2; \delta\ms{g}) + \mc{O}_{M-2}(\rho; \delta\ms{m}) + \mc{O}_{M-3}(\rho^2; \ms{Q}) \right.\\
        &+\left. \mc{O}_{M-2}(\rho^3; \delta\ms{f}^0))+ \mc{O}_{M-2}(\rho^3; \delta\ms{f}^1)\right]\text{.}
    \end{align*}
    For $\ms{h}$ and $\ms{w}$, one eventually gets: 
    \begin{align}
        \label{I_3_2h}I_{3, \ms{h}}^{(2)} =& -\frac{1}{2}\rho^2\ms{g}^{bc}\ms{g}^{de}\sum\limits_{i=1}^l \ms{h}_{\hat{a}_i[d]}\ms{D}^2_{bc}(\delta\ms{g} + \ms{Q})_{a_ie} + \mc{O}_{M-2}(\rho^3; \Delta\ms{w}^2) +\mc{O}_{M-3}(\rho^3; \delta\ms{g})\\
        &\notag + \mc{O}_{M-2}(\rho^2; \delta\ms{m}) + \mc{O}_{M-3}(\rho^3; \ms{Q}) + \mc{O}_{M-2}(\rho^4; \delta\ms{f}^0) + \mc{O}_{M-2}(\rho^4; \delta\ms{f}^1)\text{,}\\
        \label{I_3_2_w}I_{3, \ms{w}}^{(2)} =& -\frac{1}{2}\rho^2\ms{g}^{bc}\ms{g}^{de}\sum\limits_{i=1}^l \ms{w}_{\hat{a}_i[d]}\ms{D}^2_{bc}(\delta\ms{g} + \ms{Q})_{a_ie} + \mc{O}_{M-2}(\rho^2; \Delta\ms{w}^2) +\mc{O}_{M-3}(\rho^2; \delta\ms{g})\\
        &\notag + \mc{O}_{M-2}(\rho; \delta\ms{m}) + \mc{O}_{M-3}(\rho^2; \ms{Q}) + \mc{O}_{M-2}(\rho^3; \delta\ms{f}^0) + \mc{O}_{M-2}(\rho^3; \delta\ms{f}^1)\text{.}
    \end{align}
    Summing \eqref{I_2_h} with \eqref{intermediate_I_3}, \eqref{I_3_2h}, \eqref{I_3_3} and \eqref{I_3_4_h}, and \eqref{I_2_w} with \eqref{intermediate_I_3}, \eqref{I_3_2_w}, \eqref{I_3_3} and \eqref{I_3_4_w} gives:
    \begin{align}
        &\begin{aligned}
             \label{I_2+I_3_h}I_{2,\ms{h}} + I_{3,\ms{h}} = & \mc{O}_{M-3}(\rho^2; \delta\ms{g}) + \mc{O}_{M-2}(\rho^3; \ms{D}\delta\ms{g}) + \mc{O}_{M-3}(\rho^2; \ms{Q})+\mc{O}_{M-2}(\rho^3; \ms{DQ}) + \mc{O}_{M-3}(\rho^2; \delta\ms{m})\\
            &+\mc{O}_{M-2}(\rho^4; \delta\ms{f}^0) + \mc{O}_{M-2}(\rho^4; \delta\ms{f}^1) + \mc{O}_{M-2}(\rho^3; \Delta\ms{w}^2) + \mc{O}_{M-2}(\rho^3; \ms{DB})+\mc{O}_{M-3}(\rho^2;\Delta\ms{h})\text{,}
        \end{aligned}\\
        &\begin{aligned}
            \label{I_2+I_3_w}I_{2,\ms{w}} + I_{3, \ms{w}} = &+ \mc{O}_{M-4}(\rho^2; \delta\ms{g}) + \mc{O}_{M-3}(\rho^2; \ms{D}\delta\ms{g}) + \mc{O}_{M-3}(\rho^2; \ms{Q}) +\mc{O}_{M-3}\paren{\rho^2; \ms{DQ}}+ \mc{O}_{M-3}(\rho; \delta\ms{m})\\
        &+\mc{O}_{M-2}(\rho^3; \delta\ms{f}^0) + \mc{O}_{M-2}(\rho^3; \delta\ms{f}^1) + \mc{O}_{M-2}(\rho^2; \Delta\ms{w}^2)+\mc{O}_{M-2}(\rho^2; \ms{DB}) + \mc{O}_{M-3}(\rho^2; \Delta\ms{w)}\text{.}
        \end{aligned}
    \end{align}
    Observe that the terms involving $\ms{D}^2\delta\ms{g}$, which are not controllable, cancel out. 
    
    Equation \eqref{F_w} is thus obtained by adding \eqref{I_1_w} to \eqref{I_2+I_3_w} while Equation \eqref{G_h} is given by the sum of \eqref{I_1_h} and \eqref{I_2+I_3_h}. This concludes the proof.
\end{proof}

Finally, the last proposition we need for the unique continuation result will allow us to use transport equations for $\delta\ms{w}$ and $\delta{\ms{h}}$ in order to trade regularity for higher order of vanishing easily: \allowdisplaybreaks{
\begin{proposition}\label{prop_transport_diff_w_h}
    Let $(\mc{M}, g, F)$ be a FG-aAds segment and let $(U, \varphi)$ be a compact coordinate system on $\mc{I}$. Then, the following transport equations hold: 
    \begin{align}
        &\begin{aligned}\label{transport_dm_modified}
        \Lie_\rho \paren{\rho^{-1}\delta \ms{m}} =& -2\rho^{-1}\delta\ms{w}^2 + \mc{O}_{M-3}(1; \delta\ms{g}) + \mc{O}_{M-2}(1; \delta\ms{m})\text{,}\\
        &+ \mc{O}_{M}\paren{1; \delta\ms{f}^0} + \mc{O}_{M}\paren{1; \delta \ms{f}^1}\text{,}
        \end{aligned}\\
        &\label{transport_df0_modified}\begin{aligned}
        \Lie_\rho(\rho^{-(n-2)}\delta\ms{f}^0)=&\mc{O}_M(\rho^{-(n-2)};\delta\ul{\ms{h}}^0) + \mc{O}_{M-1}\paren{\rho^{-(n-3)};\delta\ms{g}} + \mc{O}_{M} \paren{\rho^{-(n-3)}; \delta\ms{m}}\\
        &+ \mc{O}_{M-2}(\rho^{-(n-3)}; \delta \ms{f}^0)\text{,}
        \end{aligned}\\
        &\label{transport_df1_modified}\begin{aligned}
        \Lie_\rho(\rho^{-1}\delta\ms{f}^1) &= \mc{O}_M\paren{\rho^{-1};\delta \ul{\ms{h}}^{2}} + \mc{O}_{M-1}(1; \delta\ms{g})+\mc{O}_{M}(1;\delta\ms{m}) + \mc{O}_{M-2}(1; \delta\ms{f}^1)\text{,}
        \end{aligned}\\
        &\label{transport_dw2}\begin{aligned}
        \Lie_\rho\paren{\rho^{2-n}\delta\ms{w}^2} =&\, \mc{O}_M(\rho^{2-n}; \ms{D}\delta\ms{w}^1) + \mc{O}_{M-4}(\rho^{3-n}; \delta\ms{g}) +\mc{O}_{M-2}(\rho^{2-n}; \delta\ms{m}) \\
        &+\mc{O}_{M-3}(\rho^{3-n}; \ms{D}\delta\ms{g}) + \mc{O}_{M-2}(\rho^{3-n}; \delta\ms{w}^0) + \mc{O}_{M-2}(\rho^{3-n}; \delta\ms{w}^2)\\
        &+\mc{O}_{M}(\rho^{3-n}; \delta\ul{\ms{h}}^0) +\mc{O}_{M}(\rho^{3-n}; \delta \ul{\ms{h}}^{2})\\
        &+\mc{O}_{M-1}(\rho^{2-n}; \delta\ms{f}^0) + \mc{O}_{M-1}(\rho^{2-n}; \delta\ms{f}^1)\text{,}
        \end{aligned}\\
        &\label{transport_dw1}\begin{aligned}
        \Lie_\rho\paren{\rho^{-1}\delta\ms{w}^1}=&\,\mc{S}\paren{\rho^{-1}\ms{D}\delta\ms{w}^2}+\mc{O}_{M-3}(1; \delta\ms{g}) + \mc{O}_{M-2}(\rho^{-1}; \delta\ms{m})\\
        &+\mc{O}_{M-3}(1; \ms{D}\delta\ms{g}) + \mc{O}_{M-2}(1; \delta\ms{w}^1) + \mc{O}_{M-1}(\rho^{-1}; \delta\ms{f}^0)\\
        &+ \mc{O}_{M-1}(\rho^{-1}; \delta\ms{f}^1) + \mc{O}_{M-1}(1; \delta\ul{\ms{h}}^0)+\mc{O}_{M}(1; \delta\ul{\ms{h}}^{2})\text{,}
        \end{aligned}\\
        &\begin{aligned}
        \Lie_{\rho}\delta\ms{w}^0 =&\, \mc{S}\paren{\ms{D}\delta\ms{w}^1} + \mc{O}_{M-2}(\rho^{-1}; \delta\ms{w}^2) + \mc{O}_{M-3}(1; \delta\ms{g})+\mc{O}_{M-2}(1; \delta\ms{m}) \\
        & + \mc{O}_{M-3}(\rho; \ms{D}\delta\ms{g}) + \mc{O}_{M-2}(\rho; \delta\ms{w}^0)+ \mc{O}_{M}(\rho; \delta\ul{\ms{h}}^0) \\
        &+\mc{O}_{M}(\rho; \delta\ul{\ms{h}}^{2})+\mc{O}_{M-1}(1; \delta\ms{f}^0)+\mc{O}_{M-1}(1; \delta\ms{f}^1)\text{,}
        \end{aligned}\label{transport_dw0}\\
        &\begin{aligned}
        \Lie_\rho\paren{\rho^{2-n}\delta\ul{\ms{h}}^{2}} &= \mc{S}\paren{\rho^{2-n}\ms{D}\delta\ul{\ms{h}}^0} + \mc{O}_{M-1}\paren{\rho^{3-n}; \ms{D}\delta\ms{g}} + \mc{O}_{M-2}(\rho^{3-n}; \delta\ms{g}))\\
        &+\mc{O}_{M-1}(\rho^{3-n}; \delta\ms{w}^2)+\mc{O}_{M-2}(\rho^{2-n}; \delta\ms{f}^1) + \mc{O}_{M-2}(\rho^{2-n}; \delta\ms{f}^0)   \\
        &+ \mc{O}_{M-1}(\rho^{2-n}; \delta\ms{m})+ \mc{O}_{M}(\rho^{3-n}; \ms{D}\delta\ms{m})+\mc{O}_{M-2}(\rho^{3-n}; \delta\ul{\ms{h}}^{2})
        \end{aligned}\label{transport_dh2}\\
        &\begin{aligned}\label{transport_dh0}
        \Lie_\rho(\rho^{-1}\delta\ul{\ms{h}}^0) &= \mc{O}_M(\rho^{-1};\ms{D}\delta \ul{\ms{h}}^{2}) + \mc{O}_{M-1}(\rho^{-1}; \delta\ms{m}) + \mc{O}_{M-2}(1;\delta\ms{g})  \\
        &+\mc{O}_{M-1}(1; \ms{D}\delta\ms{g})+\mc{O}_{M-1}(1; \ms{D}\delta\ms{m}) + \mc{O}_{M-2}(\rho^{-1}; \delta\ms{f}^1) \\
        &+ \mc{O}_{M-2}(\rho^{-1}; \delta\ms{f}^0) +  \mc{O}_{M-2}(1; \delta\ul{\ms{h}}^0) +\mc{O}_{M-1}(1; \delta\ms{w}^2)\text{.}
        \end{aligned}
    \end{align}
\end{proposition}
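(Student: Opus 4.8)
The plan is to derive each identity by subtracting, on the two segments $(\mc{M}, g, F)$ and $(\mc{M}, \check{g}, \check{F})$, the transport equation satisfied by the corresponding undifferenced vertical field, and then to rewrite the resulting difference of remainders as remainders of differences. Concretely, for each $\ms{A}\in\{\ms{m}, \ms{f}^0, \ms{f}^1, \ms{w}^0, \ms{w}^1, \ms{w}^2, \ul{\ms{h}}^0, \ul{\ms{h}}^2\}$ I would start from the relevant transport equation of Propositions \ref{prop_transport_m}, \ref{transport_f_UC}, \ref{prop_transport_h_bar} and \ref{prop_transport_w}, divide by $\rho$, and multiply by the power $\rho^{-c_{\ms{A}}}$ that kills the linear $\rho^{-1}$--coefficient on the right-hand side (i.e. $c_{\ms{m}}=1$, $c_{\ms{f}^0}=n-2$, $c_{\ms{f}^1}=1$, $c_{\ms{w}^2}=c_{\ul{\ms{h}}^2}=n-2$, $c_{\ms{w}^1}=c_{\ul{\ms{h}}^0}=1$, $c_{\ms{w}^0}=0$), using the rescaling identity $\Lie_\rho(\rho^p\ms{A}) = \rho^p\Lie_\rho\ms{A} + p\rho^{p-1}\ms{A}$ and converting $\ol{\ms{D}}_\rho$ into $\Lie_\rho$ via Proposition \ref{sec:aads_prop_covariant_D}. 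This produces, on a single segment, an equation of the schematic form $\Lie_\rho(\rho^{-c_{\ms{A}}}\ms{A}) = \rho^{-c_{\ms{A}}}\mc{S}(\ms{g};\ms{D}\ms{B}') + (\text{remainders in }\ms{g},\ms{m},\ms{f}^i,\ms{w}^j,\ul{\ms{h}}^k)$, with all coefficients controlled through Proposition \ref{prop_asymptotics_fields}. Note that the genuinely singular $\rho^{-1}$--coefficients of $\ms{w}^1,\ms{w}^2$ in the Weyl transport equations are exactly the ones removed by this choice of $c_{\ms{A}}$, and the more singular $\rho^{-2}\ms{g}\star\ms{g}$ piece has already cancelled at the level of Proposition \ref{prop_transport_w}.

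Next I would subtract the $\check{\;}$--version. The principal terms $\Lie_\rho(\rho^{-c_{\ms{A}}}\delta\ms{A})$ and $\rho^{-c_{\ms{A}}}\mc{S}(\ms{g};\ms{D}\delta\ms{B}')$ appear directly; the differences of the $\ol{\ms{D}}_\rho\to\Lie_\rho$ conversion terms, which are of the form $\delta\mc{S}(\ms{g};\ms{m},\ms{A})$, split by Remark \ref{rmk_difference_remainders} as $\mc{S}(\ms{g};\delta\ms{g},\ms{m},\ms{A}) + \mc{S}(\ms{g};\delta\ms{m},\ms{A}) + \mc{S}(\ms{g};\ms{m},\delta\ms{A}) + \mc{R}$ with $\mc{R}$ carrying the same asymptotics, and the decay of $\ms{m},\ms{f}^i,\ms{w}^j,\ul{\ms{h}}^k$ from Proposition \ref{prop_asymptotics_fields} converts these into $\mc{O}_\bullet$--terms of the stated orders. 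The only terms needing individual care are those containing a vertical derivative $\ms{D}$: for these I use $\ms{D}_a\ms{A} - \check{\ms{D}}_a\check{\ms{A}} = \ms{D}_a\delta\ms{A} + (\delta\ms{\Gamma})\cdot\check{\ms{A}}$ together with Proposition \ref{prop_diff_g_gamma}, which gives $(\delta\ms{\Gamma}) = \mc{O}_M(1;\ms{D}\delta\ms{g})$, so that the derivative of the principal field survives while the Christoffel difference enters only as an $\mc{O}(\ms{D}\delta\ms{g})$--remainder. Restoring the powers of $\rho$ then yields each of \eqref{transport_dm_modified}--\eqref{transport_dh0}.

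The only real difficulty is the bookkeeping: one must track, term by term, the interplay between loss of vertical regularity and gain of $\rho$--decay, since each Christoffel difference or $\delta\Box$--type manipulation costs one or two orders of vertical differentiability (producing the $M-3$, $M-4$ indices), while each factor of $\ms{m},\ms{f}^i,\ms{w}^j,\ul{\ms{h}}^k$ or each $\rho$--weight contributes a power of $\rho$. Using the sharp asymptotics of Proposition \ref{prop_asymptotics_fields} one checks that the advertised minimal orders are attained. Since this is purely mechanical given the preceding propositions, I would carry out the derivation of one representative identity, say \eqref{transport_dw2}, in full, and then indicate that \eqref{transport_dm_modified}--\eqref{transport_df1_modified}, \eqref{transport_dw1}--\eqref{transport_dh0} follow by applying the identical procedure to the transport equations of Propositions \ref{prop_transport_m}, \ref{transport_f_UC}, \ref{prop_transport_h_bar} and \ref{prop_transport_w}.
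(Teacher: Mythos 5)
Your proposal is essentially the paper's own argument: the paper likewise obtains \eqref{transport_dw2} (and, by identical computations, the $\ms{w}^0$, $\ms{w}^1$, $\ul{\ms{h}}^0$, $\ul{\ms{h}}^2$ equations) by differencing the transport equations of Proposition \ref{prop_transport_w} and \ref{prop_transport_h_bar}, rewriting $\rho\Lie_\rho\delta\ms{A} - c\,\delta\ms{A}$ as $\rho^{c+1}\Lie_\rho(\rho^{-c}\delta\ms{A})$, and converting differences of remainders into remainders of differences via Remark \ref{rmk_difference_remainders}, Proposition \ref{prop_diff_g_gamma} and the asymptotics of Proposition \ref{prop_asymptotics_fields}, while the first three identities are just \eqref{transport_dm}--\eqref{transport_df1} with $\Delta$ replaced by $\delta$ and the $\ms{Q}$--terms dropped. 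The one small correction: for $\delta\ms{m}$ you must difference the relation \eqref{w_2_FG} (whose $\rho^{-1}$--coefficient is $1$, matching your $c_{\ms{m}}=1$ and producing $-2\rho^{-1}\delta\ms{w}^2$), not the $(n-1)$--weighted equation of Proposition \ref{prop_transport_m}, whose $\rho\,\ms{Rc}$ term would generate uncontrolled $\ms{D}^2\delta\ms{g}$ contributions absent from \eqref{transport_dm_modified}.
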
}

\begin{proof}
    See Appendix \ref{app:prop_transport_diff_w_h}.
\end{proof}

\subsection{Higher-order of vanishing}\label{sec:higher}

In this section, we will use the transport equations satisfied by the vertical fields in order to trade vertical regularity for ``horizontal" (along $\rho$) vanishing. The idea is therefore to use the asymptotics from Corollary \ref{prop_expansion_vertical_fields} and integrate the remainder terms in the transport equations from Propositions \ref{prop_transport_diff} and \ref{prop_transport_diff_w_h} with respect to $\rho$. Typically, the integration will lead to gain of two powers of $\rho$. With this in mind, we will, from now on, consider $M_0-n$ to be even, for convenience\footnote{If not, one can simply take $M_0'=M_0-1$ instead.}. 

Furthermore, let $\mf{f}^{0, ((n-4)_+)}$ and $\mf{f}^{1, (0)}$ be the free coefficients in the expansions \eqref{expansion_f0} and \eqref{expansion_f1}. In the definition of $\ms{f}^0$ and $\ms{f}^1$ used in this chapter, these coefficients are located at the order $n-2$ and $1$ in $\rho$, respectively. 

\begin{remark}
    From Proposition \ref{prop_constraints}, we know that the coefficients $\mf{g}^{(n)}$, $\mf{f}^{0, ((n-4)_+)}$ and $\mf{f}^{0, (0)}$ are not completely free but are constrained by the following: 
\begin{gather}
    \notag\mf{D}_{\mf{g}^{(0)}}\cdot\mf{g}^{(n)} = \mf{G}^1_n\paren{\partial^{\leq n+1}\mf{g}^{(0)}, \partial^{\leq n-3}\mf{f}^{1, (0)}}\text{,}\qquad \mf{tr}_{\mf{g}^{(0)}}\mf{g}^{(n)} = \mf{G}^2_n\paren{\partial^{\leq n}\mf{g}^{(0)}, \partial^{\leq n-4}\mf{f}^{1, (0)}}\text{,}\\
    \label{constraints_f_g}\mf{D}_{\mf{g}^{(0)}}\cdot \mf{f}^{0,((n-4)_+)} = \mf{F}^0_n(\partial^{\leq n-2}\mf{g}^{(0)}, \partial^{\leq n-2}\mf{f}^{1, (0)})\text{,}\qquad \mf{D}_{\mf{g}^{(0)}}{}_{[a} \mf{f}^{1,(0)}{}_{bc]} = 0\text{,}
\end{gather}
where $\mf{G}^1_n$, $\mf{G}^2_n$ and $\mf{F}^0_n$ are universal functions depending only on $n$ and vanishing for $n$ odd. 
\end{remark}

\begin{definition}[Holographic data]
    Let $(\mc{M}, g, F)$ be a Maxwell-FG-aAdS segment of regularity $M_0\geq n+2$. We will refer to the \emph{holographic data} of $(\mc{M}, g, F)$ as the following: 
    \begin{equation}\label{identic_data}
        (\mc{I}, \mf{g}^{(0)}, \mf{g}^{(n)}, \mf{f}^{0, ((n-4)_+)}, \mf{f}^{1, (0)})\text{,}
    \end{equation}
    with $\mf{g}^{(0)}, \mf{g}^{(n)}, \mf{f}^{0, ((n-4)_+)}, \mf{f}^{1,(0)}$ tensor fields on $\mc{I}$.
\end{definition}

\begin{remark}
    The form of the gauge \eqref{FG-gauge} obviously gives room to some residual gauge freedom. As a consequence, two holographic data, although different, may be related by a gauge transformation. This very interesting question of the residual gauge, first addressed in this context in \cite{Holzegel22}, has been known for a long time in the physics literature \cite{deHaro:2000vlm, Imbimbo_2000}. We will address this issue in Section \ref{sec:full_result}. 
\end{remark}

\begin{proposition}[Higher-order of vanishing]\label{prop_higher_order_vanishing}
    Let $(\mc{M}, F, g)$ and $(\mc{M}, \check{F}, \check{g})$ be two FG-aAdS segments of regularity $M_0\geq n+2$, such that: 
    \begin{align*}
        \mf{g}^{(0)} = \check{\mf{g}}^{(0)}\text{,}\qquad \mf{g}^{(n)} = \check{\mf{g}}^{(n)}\text{,}\qquad \mf{f}^{0,((n-4)_+)} = \check{\mf{f}}^{0,((n-4)_+)}\text{,}\qquad \mf{f}^{1, (0)}=\check{\mf{f}}^{1, (0)}\text{.}
    \end{align*}
    Then, the difference of fields satisfy the following improved asymptotics:
    \begin{itemize}
        \item There exist vertical tensor fields $\ms{r}_{\delta\ms{g}}$ and $\ms{r}_{\delta\ms{m}}$ of rank $(0, 2)$ such that: 
        \begin{gather*}
            \delta \ms{g} = \rho^{M_0-2}\ms{r}_{\delta\ms{g}}\text{,}\qquad \ms{r}_{\delta\ms{g}} \rightarrow^2 0\text{,} \qquad \delta \ms{m} = \rho^{M_0-3}\ms{r}_{\delta\ms{m}}\text{,}\qquad \ms{r}_{\delta\ms{m}} \rightarrow^2 0\text{.}
        \end{gather*}
        \item There exist vertical tensor fields $\ms{r}_{\delta\ms{w}^0}$, $\ms{r}_{\delta\ms{w}^1}$, $\ms{r}_{\delta\ms{w}^2}$ of rank $(0, 4), (0, 3)$ and $(0, 2)$, respectively, satisfying: 
        \begin{gather*}
            \delta\ms{w}^0 = \rho^{M_0-4}\ms{r}_{\delta\ms{w}^0}\text{,}\qquad \ms{r}_{\delta\ms{w}^0}\rightarrow^2 0\text{,}\qquad \delta\ms{w}^1 = \rho^{M_0-3}\ms{r}_{\delta\ms{w}^1}\text{,}\qquad \ms{r}_{\delta\ms{w}^1}\rightarrow^1 0\text{,}\\
            \delta\ms{w}^2 = \rho^{M_0-4}\ms{r}_{\delta\ms{w}^2}\text{,}\qquad \ms{r}_{\delta\ms{w}^2}\rightarrow^2 0\text{.}
        \end{gather*}
        \item There exist vertical tensor fields $\ms{r}_{\delta\ms{f}^0}$, $\ms{r}_{\delta\ms{f}^1}$, of rank $(0, 1)$ and $(0, 2)$, respectively, satisfying: 
        \begin{gather*}
            \delta\ms{f}^0 = \rho^{M_0-4}\ms{r}_{\delta\ms{f}^0}\text{,}\qquad \ms{r}_{\delta\ms{f}^0}\rightarrow^4 0\text{,}\qquad 
            \delta\ms{f}^1 = \rho^{M_0-3}\ms{r}_{\delta\ms{f}^1}\text{,}\qquad \ms{r}_{\delta\ms{f}^1}\rightarrow^3 0\text{,}
        \end{gather*}
        \item There exist vertical tensor fields $\ms{r}_{\delta\ul{\ms{h}}^0}$ of rank $(0, 3)$ and $\ms{r}_{\delta\ms{h}^{2, \pm}}$ of rank $(0, 2)$ such that: 
        \begin{gather*}
            \delta\ul{\ms{h}}^0 = \rho^{M_0-3}\ms{r}_{\delta\ul{\ms{h}}^0}\text{,}\qquad \ms{r}_{\delta\ul{\ms{h}}^0}\rightarrow^2 0\text{,}\qquad \delta \ul{\ms{h}}^{2} = \rho^{M_0-4}\ms{r}_{\delta\ul{\ms{h}}^{2}}\text{,}\qquad \ms{r}_{\delta\ul{\ms{h}}^{2}}\rightarrow^{3} 0\text{.}
        \end{gather*}
    \end{itemize}
\end{proposition}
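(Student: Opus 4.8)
The plan is to run a bootstrap argument on $\rho$: start from the baseline orders of vanishing coming from Corollary \ref{prop_expansion_vertical_fields} (which, once the holographic data agree, gives $\delta\ms{g} = \mc{O}_{M_0-n}(\rho^n)$, $\delta\ms{f}^0 = \mc{O}_{M_0-n}(\rho^{n-2})$, $\delta\ms{f}^1 = \mc{O}_{M_0-n}(\rho^{n-1})$, and the Weyl/$\ul{\ms{h}}$ analogues via Corollary \ref{weyl_expansion} and the definitions of $\ul{\ms{h}}^0,\ul{\ms{h}}^2$), and then repeatedly integrate the transport equations of Proposition \ref{prop_transport_diff_w_h} to upgrade the power of $\rho$ by two at each pass, at the cost of one or two derivatives of vertical regularity per pass. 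Concretely: suppose inductively that all difference fields vanish to some common order $\rho^{p}$ (with appropriate shifts: $\delta\ms{g}$ at $\rho^p$, $\delta\ms{m}$ at $\rho^{p-1}$, $\delta\ms{w}^0,\delta\ms{w}^2,\delta\ms{f}^0,\delta\ul{\ms{h}}^2$ at $\rho^{p-1}$-ish, etc., matching the weights in the rescaled transport equations) at regularity level $M$. Feeding these into the right-hand sides of \eqref{transport_dm_modified}--\eqref{transport_dh0}, every remainder term is $\mc{O}(\rho^{p+1})$ or better in the rescaled variable; applying Proposition \ref{prop_integration}\ref{local_estimate} (with the $\tilde\rho$ endpoint sent to where the rescaled field is known to be bounded, using the vanishing at $\rho\searrow 0$ to kill the boundary contribution) then yields vanishing to order $\rho^{p+2}$, with the regularity dropping by the number of $\ms{D}$'s appearing, i.e.\ by at most $2$. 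Iterating $O((M_0-n)/2)$ times lands all fields at order $\rho^{M_0-2}$ (for $\delta\ms{g}$) down to $\rho^{M_0-4}$ (for the most derivative-costly fields $\delta\ms{w}^0,\delta\ms{w}^2,\delta\ms{f}^0,\delta\ul{\ms{h}}^2$), with $2$ to $4$ residual vertical derivatives of convergence, exactly as claimed; this is why we assumed $M_0-n$ even at the start of the section.

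The order in which I would carry out the steps: first fix the baseline. Using the hypothesis $\mf{g}^{(0)}=\check{\mf{g}}^{(0)}$, $\mf{g}^{(n)}=\check{\mf{g}}^{(n)}$, $\mf{f}^{0,((n-4)_+)}=\check{\mf{f}}^{0,((n-4)_+)}$, $\mf{f}^{1,(0)}=\check{\mf{f}}^{1,(0)}$, and the fact (Theorem \ref{theorem_main_fg} and Corollary \ref{prop_expansion_vertical_fields}) that every coefficient below the free order is a universal function of these data, subtract the two expansions term by term: all explicit coefficients cancel, leaving only the remainders. This gives $\delta\ms{g}=\rho^n\ms{r}$, $\delta\ms{f}^0=\rho^{n-2}\ms{r}$, $\delta\ms{f}^1=\rho^{n-1}\ms{r}$ (and $\delta\ms{m}=\rho^{n-1}\ms{r}$) with the stated convergence; then $\delta\ms{w}^i$, $\delta\ul{\ms{h}}^j$ inherit their baseline orders from Corollary \ref{weyl_expansion}, Proposition \ref{h_as_Df}, and the estimate \eqref{estimate_DA} applied to the differences (note $\ul{\ms{h}}^0=\ms{D}\ms{f}^1$, $\ul{\ms{h}}^2=\ms{D}\ms{f}^0$, so these cost one derivative off the baseline of $\delta\ms{f}$). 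Second, set up the induction cleanly: define, for a parameter $p$ ranging over $\{n, n+2, n+4,\dots\}$, the statement $P(p)$ asserting the simultaneous bounds $\delta\ms{g}=\mc{O}_{M_p}(\rho^p)$, $\delta\ms{m}=\mc{O}_{M_p}(\rho^{p-1})$, $\delta\ms{w}^0=\mc{O}_{M_p-?}(\rho^{p-?})$, \dots, with explicit bookkeeping of the regularity indices $M_p$ decreasing linearly in $p$; the weights in the rescaled transport equations of Proposition \ref{prop_transport_diff_w_h} are chosen precisely so that $P(p)$ on the right-hand side yields $P(p+2)$ on the left. Third, run the induction: each step is a finite collection of applications of Proposition \ref{prop_integration}\ref{local_estimate}, nothing more. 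Fourth, read off the terminal bounds at $p=M_0-2$ and translate into the $\ms{r}_{\bullet}$ notation of the statement.

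The main obstacle is the bookkeeping of coupled weights and regularity losses across the many fields: the transport equations \eqref{transport_dm_modified}--\eqref{transport_dh0} are a genuinely coupled system, and different fields lose regularity at different rates (e.g.\ the equations for $\delta\ms{w}^0$, $\delta\ul{\ms{h}}^2$ carry $\ms{D}\delta\ms{w}^1$, $\ms{D}\delta\ul{\ms{h}}^0$ on the right, so controlling $\delta\ms{w}^0$ to a given order forces controlling $\delta\ms{w}^1$ to a comparable order at one higher derivative, and so on). One must choose the inductive hypothesis so that the hierarchy closes — i.e.\ so that the fields appearing with $\ms{D}$ on a given right-hand side are assumed known at one extra derivative, and the cycle $\delta\ms{f}\to\delta\ul{\ms{h}}\to\delta\ms{w}\to\delta\ms{m}\to\delta\ms{g}$ of dependences does not create an unbounded regularity drain per step. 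This is exactly the kind of structure that was engineered into the choice of rescalings (the powers of $\rho$ in the definitions of $\ms{w}^i$, $\ms{f}^j$, $\ul{\ms{h}}^j$) and into Proposition \ref{prop_transport_diff_w_h}, so in principle it closes with a net loss of at most two derivatives over the whole iteration for the worst field; verifying this closure carefully — rather than the integration estimates themselves, which are routine given Proposition \ref{prop_integration} — is where the real work lies. A secondary, minor point to check is that at the baseline one may need $M_0-n$ large enough (which is why $M_0\geq n+2$ is assumed) for the terminal regularity indices $\{2,3,4\}$ to be nonnegative and for the convergence (not merely boundedness) in the $\ms{r}_\bullet$ to survive; the assumption that $M_0-n$ is even, made at the start of Section \ref{sec:higher}, ensures the iteration lands exactly on the claimed orders rather than overshooting.
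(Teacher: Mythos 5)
Your proposal is correct and follows essentially the same route as the paper: establish the baseline vanishing $\delta\ms{g}=\rho^n\ms{r}$, $\delta\ms{f}^0=\rho^{n-2}\ms{r}$, etc.\ from Corollaries \ref{prop_expansion_vertical_fields} and \ref{weyl_expansion} once the holographic data cancel, then iterate integration of the transport equations of Propositions \ref{prop_transport_diff} and \ref{prop_transport_diff_w_h} along a closing hierarchy (the paper uses $\delta\ms{f}^0\rightarrow\delta\ul{\ms{h}}^{2},\delta\ms{w}^2,\delta\ms{f}^1\rightarrow\delta\ms{w}^0\rightarrow\delta\ms{m}\rightarrow\delta\ms{g}\rightarrow\delta\ul{\ms{h}}^0,\delta\ms{w}^1$), gaining two powers of $\rho$ and losing two vertical derivatives per pass until saturation after $(M_0-n)/2-1$ iterations. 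The only cosmetic difference is that the paper integrates directly via the fundamental theorem of calculus with the boundary term killed by the known vanishing, rather than invoking Proposition \ref{prop_integration}, which is equivalent to your use of the local estimate.
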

\begin{proof}
    First of all, since the data $(\mf{g}^{(0)}, \mf{g}^{(n)}, \mf{f}^{0, ((n-4)_+)}, \mf{f}^{1, (0)})$ are identical, there exist, from Corollaries \ref{prop_expansion_vertical_fields} and \ref{weyl_expansion}, vertical tensor fields $\ms{r}_{\delta\ms{A}}$, with $\ms{A}\in \lbrace \ms{g}, \ms{m}, \ms{w}, \ms{f} \rbrace$ such that: 
    \begin{gather*}
        \delta \ms{g} = \rho^n \ms{r}_{\delta\ms{g}}\text{,}\qquad\delta \ms{m} = \rho^{n-1}\ms{r}_{\delta\ms{m}}\text{,}\\
        \delta \ms{w}^0 = \rho^{n-2}\ms{r}_{\delta\ms{w}^0}\text{,}\qquad \delta\ms{w}^1 = \rho^{n-1}\ms{r}_{\delta\ms{w}^1}\text{,}\qquad \delta\ms{w}^2 = \rho^{n-2}\ms{r}_{\delta\ms{w}^2}\text{,}\\
        \delta\ms{f}^0 = \rho^{n-2}\ms{r}_{\delta\ms{f}^0}\text{,}\qquad \delta\ms{f}^1=\rho^{n-1}\ms{r}_{\delta\ms{f}^1}\text{,}
    \end{gather*}
    with: 
    \begin{gather*}
        \ms{r}_{\delta\ms{g}}\rightarrow^{M_0-n}0\text{,}\qquad \ms{r}_{\delta\ms{m}}\rightarrow^{M_0-n}0\text{,}\\
        \ms{r}_{\delta\ms{w}^0}\rightarrow^{M_0-n} 0\text{,}\qquad \ms{r}_{\delta\ms{w}^1}\rightarrow^{M_0-n-1} 0\text{,}\qquad \ms{r}_{\delta\ms{w}^2}\rightarrow^{M_0-n} 0\\
        \ms{r}_{\delta\ms{f}^0}\rightarrow^{M_0-(n-2)} 0\text{,}\qquad \ms{r}_{\delta\ms{f}^1}\rightarrow^{M_0-(n-1)} 0\text{.}
    \end{gather*}
    The Fefferman-Graham expansions of Corollary \ref{prop_expansion_vertical_fields} and Corollary \ref{weyl_expansion} involve indeed coefficients depending only on the holographic data. Since we assumed \eqref{identic_data}, the expansion for the difference of fields consists only in the remainder term.  
    
    One can also easily obtain the asymptotics for the fields $\ms{h}$ since, as an example: 
    \begin{align*}
        \delta \ul{\ms{h}}^0 &= \delta (\ms{D}\ms{f}^1) =\rho^{n-1}\ms{r}_{\delta\ul{\ms{h}}^0}\text{,}
    \end{align*}
    with $\ms{r}_{\ul{\ms{h}}^0} \rightarrow^{M_0-n} 0$.
    Similar considerations hold for $\ul{\ms{h}}^2$: 
    \begin{gather*}
        \delta\ul{\ms{h}}^{2} = \rho^{n-2}\ms{r}_{\delta\ul{\ms{h}}^{2}}\text{,}\qquad \ms{r}_{\delta\ms{h}^{2,\pm}}\rightarrow^{M_0-n+1}0\text{.}
    \end{gather*}
    In this proof, we will use a very specific hierarchy that can be summarised as follows: 
    \begin{gather}\label{hierarchy}
        \delta\ms{f}^0\rightarrow \delta\ul{\ms{h}}^{2} \text{, }\delta\ms{w}^2\text{, } \delta\ms{f}^1 \rightarrow \delta\ms{w}^0 \rightarrow \delta\ms{m} \rightarrow \delta\ms{g} \rightarrow \delta\ul{\ms{h}}^0\text{, }\delta\ms{w}^1\text{,}
    \end{gather}
    where we, between each step, we integrate transport equations from Propositions \ref{prop_transport_diff} and \ref{prop_transport_diff_w_h}. Between each arrow, the decay of the vertical fields can be improved regardless of the specific order. After each integration, we will show how one gains two powers of $\rho$ and lose at least two orders of regularity in the vertical directions. 

    Starting thus with $\delta\ms{f}^{0}$, one gets schematically, after integrating \eqref{transport_df0_modified}: 
     \begin{align}
        \notag\delta \ms{f}^0 &=\rho^{n-2}\int_0^\rho \left[\mc{O}_{M_0}\paren{\sigma^{-(n-2)};\delta\ul{\ms{h}}^0} + \mc{O}_{M_0-1}\paren{\sigma^{-(n-3)}; {\delta\ms{g}}} + \mc{O}_{M_0}\paren{\sigma^{-(n-3)}; {\delta\ms{m}}} \right.\\
        &\notag\left.\hspace{50pt}+ \mc{O}_{M_0-2}\paren{\sigma^{-(n-3)}; {\delta\ms{f}^0}}\right]\vert_\sigma d\sigma \\
        &\notag= \rho^{n-2}\int_0^\rho \left[ \mc{O}_{M_0}\paren{\sigma; \ms{r}_{\delta\ul{\ms{h}}^0}} + \mc{O}_{M_0-1}\paren{\sigma^3; \ms{r}_{\delta\ms{g}}} + \mc{O}_{M_0}\paren{\sigma^2;\ms{r}_{\delta\ms{m}}} + \mc{O}_{M_0-2}\paren{\sigma;\ms{r}_{\delta\ms{f}^0}}\right]\vert_\sigma d\sigma \\
        &\label{improved_decay_f_0}=\rho^n \tilde{\ms{r}}_{\delta\ms{f}^0}\text{,}
    \end{align}
    with $\tilde{\ms{r}}_{\delta\ms{f}^0}\rightarrow^{M_0-n}0$\text{,} and where we used the fact the boundary term satisfies: 
    \begin{equation*}
        \rho^{2-n}\delta\ms{f}^0 = \ms{r}_{\delta\ms{f}^0}\rightarrow^{M_0-(n-2)} 0\text{.}
    \end{equation*}

    Next, looking at $\delta\ul{\ms{h}}^{2}$, one can integrate \eqref{transport_dh2}:
    \begin{align}
    &\label{improved_decay_h_2}\begin{aligned}\delta\ms{h}^{2,\pm} &= \rho^{n-2}\int_0^\rho
            &&\left[\mc{O}_{M_0}\paren{\sigma; \ms{D}\ms{r}_{\delta\ul{\ms{h}}^0}} + \mc{O}_{M_0-1}\paren{\sigma^3; \ms{D}\ms{r}_{\delta\ms{g}}} + \mc{O}_{M_0-2}\paren{\sigma^3;\ms{r}_{\delta\ms{g}}}\right. \\
        & &&+\mc{O}_{M_0-1}\paren{\sigma; \ms{r}_{\delta\ms{w}^2}} + \mc{O}_{M_0-2}\paren{\sigma; \ms{r}_{\delta\ms{f}^1}} + \mc{O}_{M_0-2}\paren{\sigma^2; \tilde{\ms{r}}_{\delta\ms{f}^0}} \\
        & &&\Bigl.+\mc{O}_{M_0-1}\paren{\sigma; \ms{r}_{\delta\ms{m}}}+ \mc{O}_{M_0}\paren{\sigma; \ms{Dr}_{\delta\ms{m}}} + \mc{O}_{M_0-2}\paren{\sigma; \ms{r}_{\delta\ul{\ms{h}}^{2}}}\Bigr]\vert_\sigma d\sigma\\
        &=\rho^n \tilde{\ms{r}}_{\delta\ul{\ms{h}}^{2}}\text{,}
        \end{aligned}
    \end{align}
    where $\tilde{\ms{r}}_{\delta\ms{h}^{2,\pm}}\rightarrow^{M_0-n-1}0$, and where we used: 
    \begin{equation*}
        \rho^{2-n}\delta\ms{h}^{2,\pm} = \ms{r}_{\delta\ms{h}^{2,\pm}}\rightarrow^{M_0-n} 0\text{.}
    \end{equation*}
   Note also that we used the improved decay \eqref{improved_decay_f_0} for $\delta\ms{f}^0$. 
    
    This allows us to improve the decay of $\delta\ms{w}^2$ by integrating \eqref{transport_dw2}:
    \begin{align}
        &\begin{aligned}
            \delta\ms{w}^2 &= \rho^{n-2}\int_0^\rho &&\left[\mc{O}_{M_0}\paren{\sigma;\ms{D}\ms{r}_{\delta\ms{w}^1}} + \mc{O}_{M_0-4}\paren{\sigma^3; \ms{r}_{\delta\ms{g}}} + \mc{O}_{M_0-2}\paren{\sigma; \ms{r}_{\delta\ms{m}}} + \mc{O}_{M_0-3}\paren{\sigma^3; \ms{D}\ms{r}_{\delta\ms{g}}}\right.\\
            & &&+\mc{O}_{M_0-2}\paren{\sigma; \ms{r}_{\delta\ms{w}^2}} + \mc{O}_{M_0-2}\paren{\sigma; \ms{r}_{\delta\ms{w}^0}} + \mc{O}_{M_0}\paren{\sigma^2; \ms{r}_{\delta\ul{\ms{h}}^0}}  \\
            & &&\left.+\mc{O}_{M_0}\paren{\sigma; {\ms{r}}_{\delta\ul{\ms{h}}^{2}}} + \mc{O}_{M_0-1}\paren{\sigma^2;\tilde{\ms{r}}_{\delta\ms{f}^0}} +\mc{O}_{M_0-1}\paren{\sigma; \ms{r}_{\delta\ms{f}^1}}\right]\vert_\sigma d\sigma \\
            &\label{improved_decay_w_2}=\rho^n \tilde{\ms{r}}_{\delta\ms{w}^2}\text{,}
        \end{aligned}
    \end{align}
    with $\tilde{\ms{r}}_{\delta\ms{w}^2} \rightarrow^{M_0-n-2} 0$. 
    where we used the improved decay \eqref{improved_decay_f_0} and the fact that the boundary term vanishes: 
    \begin{equation*}
        \rho^{2-n}\delta\ms{w}^2 = \ms{r}_{\delta\ms{w}^2}\rightarrow^{M_0-n}0\text{.}
    \end{equation*} 

    The decay for $\delta\ms{f}^1$ can be obtained simultaneously by integrating \eqref{transport_df1_modified}: 
    \begin{align*}
        \delta\ms{f}^1 &= \rho \int_0^\rho \left[\mc{O}_{M_0}\paren{\sigma^{n-1};\tilde{\ms{r}}_{\delta\ul{\ms{h}}^{2}}} + \mc{O}_{M_0-1}\paren{\sigma^{n};\ms{r}_{\delta\ms{g}}} + \mc{O}_{M_0}\paren{\sigma^{n-1}; \ms{r}_{\delta\ms{m}}}+\mc{O}_{M_0-2}\paren{\sigma^{n-1};\ms{r}_{\delta\ms{f}^1}}\right]\vert_\sigma d\sigma\\
        &=\rho^{n+1}\tilde{\ms{r}}_{\delta\ms{f}^1}\text{,}
    \end{align*}
    with $\tilde{\ms{r}}_{\delta\ms{f}^1}\rightarrow^{M_0-n-1}0$, and where we used the improved asymptotics \eqref{improved_decay_f_0}, \eqref{improved_decay_h_2}, as well as: 
    \begin{equation*}
        \rho^{-1}\delta\ms{f}^1 = \rho^{n-2}\ms{r}_{\delta\ms{f}^1}\rightarrow^{M_0-n+1}0\text{.}
    \end{equation*}
    
    One can use \eqref{improved_decay_f_0} and \eqref{improved_decay_w_2} to obtain: 
    \begin{align*}
        \delta\ms{w}^0 &= \int_0^\rho\left[\right. \mc{S}\paren{\sigma^{n-1};\ms{D}\ms{r}_{\delta\ms{w}^1}} + \mc{O}_{M_0-2}\paren{\sigma^{n-1}; \tilde{\ms{r}}_{\delta\ms{w}^2}} + \mc{O}_{M_0-3}\paren{\sigma^n; \ms{r}_{\delta\ms{g}}}+ \mc{O}_{M_0-2}\paren{\sigma^{n-1};\ms{r}_{\delta\ms{m}}}\\
        &\hspace{40pt} + \mc{O}_{M_0-3}\paren{\sigma^{n+1}; \ms{D}\ms{r}_{\delta\ms{g}}} + \mc{O}_{M_0-2}\paren{\sigma^{n-1};\ms{r}_{\delta\ms{w}^0}} +\mc{O}_{M_0}\paren{\sigma^{n-1};\ms{r}_{\delta\ul{\ms{h}}^{2}}} + \mc{O}_{M_0}\paren{\sigma^{n};\ms{r}_{\delta\ul{\ms{h}}^2}}\\
        &\hspace{40pt} + \mc{O}_{M_0-1}\paren{\sigma^{n};\tilde{\ms{r}}_{\delta\ms{f}^0}} + \mc{O}_{M_0-1}\paren{\sigma^{n-1};\ms{r}_{\delta\ms{f}^1}}\left.\right]\vert_\sigma d\sigma \\
        &= \rho^n \tilde{\ms{r}}_{\delta\ms{w}^0}\text{,}
    \end{align*}
    with $\tilde{\ms{r}}_{\delta\ms{w}^0}\rightarrow^{M_0-n-2} 0$, and where we used \eqref{improved_decay_f_0}, \eqref{improved_decay_w_2}, as well as: 
    \begin{equation*}
        \delta\ms{w}^0 = \rho^{n-2}\ms{r}_{\delta\ms{w}^0}\rightarrow^{M_0-n} 0\text{.}
    \end{equation*}

    One can also improve the decay of $\delta\ms{m}$ by integrating \eqref{transport_dm_modified}: 
    \begin{align}
        \label{improved_decay_delta_m}\delta\ms{m} &= \rho \int_0^\rho \left[-2\sigma^{n-1} \tilde{\ms{r}}_{\delta\ms{w}^2} +\mc{O}_{M_0-3}\paren{ \sigma^n; \ms{r}_{\delta\ms{g}}} +\mc{O}_{M_0-2}\paren{\sigma^{n-1};\ms{r}_{\delta\ms{m}}} + \mc{O}_{M_0}\paren{\sigma^n ;\tilde{\ms{r}}_{\delta\ms{f}^0}} + \mc{O}_{M_0}\paren{\sigma^{n-1};\ms{r}_{\delta\ms{f}^1}}\right]\vert_\sigma d\sigma\\
        &\notag=\rho^{n+1}\tilde{\ms{r}}_{\delta\ms{m}}\text{,}
    \end{align}
    with $\tilde{\ms{r}}_{\delta\ms{m}}\rightarrow^{M_0-n-2}0$, and where we used \eqref{improved_decay_f_0}, \eqref{improved_decay_w_2}, as well as: 
    \begin{equation*}
        \rho^{-1}\delta\ms{m} = \rho^{n-2}\ms{r}_{\delta\ms{m}}\rightarrow^{M_0-n}0\text{.}
    \end{equation*}
    This immediately implies, from the definition of $\ms{m}$: 
    \begin{align*}
        \delta\ms{g} &= \int_0^\rho \sigma^{n+1}\tilde{\ms{r}}_{\delta\ms{m}}\vert_\sigma d\sigma\\
        &=\rho^{n+2}\tilde{\ms{r}}_{\delta\ms{g}}\text{,}\qquad \tilde{\ms{r}}_{\delta\ms{g}}\rightarrow^{M_0-n-2}0\text{.}
    \end{align*}

    Following the hierarchy \eqref{hierarchy}, one can use \eqref{transport_dh0} to get: 
    \begin{align*}
        \begin{aligned}
            \delta\ul{\ms{h}}^0 &= \rho\int_0^\rho \left[\mc{O}_{M_0}\paren{\sigma^{n-1};\ms{D}\tilde{\ms{r}}_{\delta\ul{\ms{h}}^{2}} }+ \mc{O}_{M_0-1}\paren{\sigma^n; \tilde{\ms{r}}_{\delta\ms{m}}} + \mc{O}_{M_0-2}\paren{\sigma^n; \ms{r}_{\delta\ms{g}}} +\mc{O}_{M_0-1}\paren{\sigma^n; \ms{D}\ms{r}_{\delta\ms{g}}}\right. \\
            &\notag \hspace{40pt}+ \mc{O}_{M_0-1}\paren{\sigma^{n-1};\ms{D}\ms{r}_{\delta\ms{m}}}+ \mc{O}_{M_0-2}\paren{\sigma^{n-1};\tilde{\ms{r}}_{\delta\ms{f}^1}} + \mc{O}_{M_0-2}\paren{\sigma^{n-1}; \tilde{\ms{r}}_{\delta\ms{f}^0}} \\
            &\hspace{40pt}\left.+ \mc{O}_{M_0-2}\paren{\sigma^{n-1};\ms{r}_{\delta\ms{h}}^0} + \mc{O}_{M_0-2}\paren{\sigma^{n};\tilde{\ms{r}}_{\delta\ms{w}^2}}\right]\vert_\sigma d\sigma\\
            &= \rho^{n+1}\tilde{\ms{r}}_{\delta\ul{\ms{h}}^0}\text{,}
        \end{aligned}
    \end{align*}
    with $\tilde{\ms{r}}_{\delta\ul{\ms{h}}^0}\rightarrow^{M_0-n-2}0$, and where we used \eqref{improved_decay_f_0}, \eqref{improved_decay_delta_m}, \eqref{improved_decay_w_2}, as well as: 
    \begin{equation*}
        \rho^{-1}\delta\ul{\ms{h}}^0 = \rho^{n-2}\ms{r}_{\delta\ul{\ms{h}}^0} \rightarrow^{M_0-n}0\text{.}
    \end{equation*}

    Finally, one can integrate \eqref{transport_dw1}:
    \begin{align*}
        \delta\ms{w}^1&=\rho\int_0^\rho \left[\mc{S}\paren{\sigma^{n-1}\tilde{\ms{D}\ms{r}}_{\delta\ms{w}^2}}+\mc{O}_{M_0-3}\paren{\sigma^n ; \ms{r}_{\delta\ms{g}} }+\mc{O}_{M_0-2}\paren{\sigma^n; \tilde{\ms{r}}_{\delta\ms{m}}} + \mc{O}_{M_0-3}\paren{\sigma^n ;\ms{D}\ms{r}_{\delta\ms{g}}}\right.\\
        &\hspace{40pt} + \mc{O}_{M_0-2}\paren{\sigma^{n-1};\ms{r}_{\delta\ms{w}^1}} + \mc{O}_{M_0-1}\paren{\sigma^{n-1} ;\tilde{\ms{r}}_{\delta\ms{f}^0}}+ \mc{O}_{M_0-1}\paren{\sigma^{n}\tilde{\ms{r}}_{\delta\ms{f}^1}} \\
        &\hspace{40pt} + \mc{O}_{M_0-1}\paren{\sigma^{n-1};\ms{r}_{\delta\ul{\ms{h}}^0}} + \mc{O}_{M_0}\paren{\sigma^n ;\tilde{\ms{r}}_{\delta\ul{\ms{h}}^{2}}}\left.\right]\vert_\sigma d\sigma\\
        &=\rho^{n+1}\tilde{\ms{r}}_{\delta\ms{w}^1}\text{,}
    \end{align*}
    with $\tilde{\ms{r}}_{\delta\ms{w}^1}\rightarrow^{M_0-n-3} 0$, where we used \eqref{improved_decay_f_0}, \eqref{improved_decay_w_2}, \eqref{improved_decay_h_2}, \eqref{improved_decay_delta_m} as well as: 
    \begin{equation*}
        \rho^{-1}\delta\ms{w}^1 = \rho^{n-2}\ms{r}_{\delta\ms{w}^1}\rightarrow^{M_0-n-1}0\text{.}
    \end{equation*}
    
    In summary, we obtained the following improved asymptotics: 
    \begin{gather}\label{improved_asymptotics}
        \begin{aligned}
        &\delta\ul{\ms{h}}^{2} = \rho^n \tilde{\ms{r}}_{\delta\ul{\ms{h}}^{2}}\text{,}\qquad  &&\tilde{\ms{r}}_{\delta\ul{\ms{h}}^{2}}\rightarrow^{M_0-n-1} 0\text{,}\\
        &\delta\ms{f}^0 = \rho^n\tilde{\ms{r}}_{\delta\ms{f}^0}\text{,}\qquad &&\tilde{\ms{r}}_{\delta\ms{f}^0}\rightarrow^{M_0-n}0\text{,}\\
        &\delta\ms{w}^2 = \rho^n\tilde{\ms{r}}_{\delta\ms{w}^2}\text{,}\qquad &&\tilde{\ms{r}}_{\delta\ms{w}^2}\rightarrow^{M_0-n-2}0\text{,}\\
        &\delta\ms{f}^1 = \rho^{n+1}\tilde{\ms{r}}_{\delta\ms{f}^1}\text{,}\qquad &&\tilde{\ms{r}}_{\delta\ms{f}^1}\rightarrow^{M_0-n-1}0\text{,}\\
        &\delta\ul{\ms{h}}^0 = \rho^{n+1} \tilde{\ms{r}}_{\delta\ms{h}^0}\text{,}\qquad &&\tilde{\ms{r}}_{\delta\ul{\ms{h}}^0}\rightarrow^{M_0-n-2} 0\\
        &\delta\ms{w}^0 = \rho^{n}\tilde{\ms{r}}_{\delta\ms{w}^0}\text{,}\qquad &&\tilde{\ms{r}}_{\delta\ms{w}^2} \rightarrow^{M_0-n-2}0\text{,}\\
        &\delta\ms{m} = \rho^{n+1}\tilde{\ms{r}}_{\delta\ms{m}}\text{,}\qquad &&\tilde{\ms{r}}_{\delta\ms{m}}\rightarrow^{M_0-n-2}0\text{,}\\
        &\delta\ms{g} = \rho^{n+2}\tilde{\ms{r}}_{\delta\ms{g}}\text{,}\qquad &&\tilde{\ms{r}}_{\delta\ms{g}}\rightarrow^{M_0-n-2} 0\text{,}\\
        &\delta\ms{w}^1 = \rho^{n+1}\tilde{\ms{r}}_{\delta\ms{w}^1}\text{,}\qquad &&\tilde{\ms{r}}_{\delta\ms{w}^1}\rightarrow^{M_0-n-3}0\text{.}
        \end{aligned}
    \end{gather}
    The idea is now to iterate the process up to saturation, namely, when one hits the lowest regularity for $\delta\ms{w}^1$. Performing this operation $k$--times, yields: 
    \begin{gather}\label{improved_asymptotics_k}
        \begin{aligned}
        &\delta\ul{\ms{h}}^{2} = \rho^{n-2+2k} \tilde{\ms{r}}^{(k)}_{\delta\ul{\ms{h}}^{2}}\text{,}\qquad  &&\tilde{\ms{r}}^{(k)}_{\delta\ul{\ms{h}}^{2}}
            \rightarrow^{M_0-n+1-2k}0\text{,}\\
        &\delta\ms{f}^0 = \rho^{n-2+2k}\tilde{\ms{r}}^{(k)}_{\delta\ms{f}^0}\text{,}\qquad &&\tilde{\ms{r}}^{(k)}_{\delta\ms{f}^0}\rightarrow^{M_0-n-2+2k}0\text{,}\\
        &\delta\ms{w}^2 = \rho^{n-2+2k}\tilde{\ms{r}}^{(k)}_{\delta\ms{w}^2}\text{,}\qquad &&\tilde{\ms{r}}^{(k)}_{\delta\ms{w}^2}\rightarrow^{M_0-n-2k}0\text{,}\\
        &\delta\ms{f}^1 = \rho^{n-1+2k}\tilde{\ms{r}}^{(k)}_{\delta\ms{f}^1}\text{,}\qquad &&\tilde{\ms{r}}^{(k)}_{\delta\ms{f}^1}\rightarrow^{M_0-n+1-2k}0\text{,}\\
        &\delta\ul{\ms{h}}^0 = \rho^{n-1+2k} \tilde{\ms{r}}^{(k)}_{\delta\ul{\ms{h}}^0}\text{,}\qquad &&\tilde{\ms{r}}^{(k)}_{\delta\ul{\ms{h}}^0}\rightarrow^{M_0-n-2k} 0\\
        &\delta\ms{w}^0 = \rho^{n-2+2k}\tilde{\ms{r}}^{(k)}_{\delta\ms{w}^0}\text{,}\qquad &&\tilde{\ms{r}}^{(k)}_{\delta\ms{w}^2} \rightarrow^{M_0-n-2k}0\text{,}\\
        &\delta\ms{m} = \rho^{n-1+2k}\tilde{\ms{r}}^{(k)}_{\delta\ms{m}}\text{,}\qquad &&\tilde{\ms{r}}^{(k)}_{\delta\ms{m}}\rightarrow^{M_0-n-2k}0\text{,}\\
        &\delta\ms{g} = \rho^{n+2k}\tilde{\ms{r}}^{(k)}_{\delta\ms{g}}\text{,}\qquad &&\tilde{\ms{r}}^{(k)}_{\delta\ms{g}}\rightarrow^{M_0-n-2k} 0\text{,}\\
        &\delta\ms{w}^1 = \rho^{n-1+2k}\tilde{\ms{r}}^{(k)}_{\delta\ms{w}^1}\text{,}\qquad &&\tilde{\ms{r}}^{(k)}_{\delta\ms{w}^1}\rightarrow^{M_0-n-1-2k}0\text{.}
        \end{aligned}
    \end{gather}
    
    For $M_0-n$ even, after 
    \begin{equation*}
        k_\star = (M_0-n)/2-1
    \end{equation*} 
    iterations, the vertical field $\delta\ms{w}^1$ satisfies: 
    \begin{equation*}
        \delta\ms{w}^1 = \rho^{M_0-3}\tilde{\ms{r}}^{(k_\star)}_{\delta\ms{w}^1}\text{,}\qquad \tilde{\ms{r}}^{(k_\star)}_{\delta\ms{w}^1}\rightarrow^{1}0\text{,}
    \end{equation*}
    and one cannot keep improving the decay. 
\end{proof}

We now simply need to apply this proposition to the renormalised and auxiliary fields: 
\begin{corollary}\label{corollary_asymptotics}
    Let $(\mc{M}, F, g)$ and $(\mc{M}, \check{F}, \check{g})$ be two FG-aAdS segments of regularity $M_0\geq n+2$ such that: 
    \begin{align*}
        \mf{g}^{(0)} = \check{\mf{g}}^{(0)}\text{,}\qquad \mf{g}^{(n)} = \check{\mf{g}}^{(n)}\text{,}\qquad \mf{f}^{0,((n-4)_+)} = \check{\mf{f}}^{0,((n-4)_+)}\text{,}\qquad \mf{f}^{1, (0)}=\check{\mf{f}}^{1, (0)}\text{.}
    \end{align*}
    Then, the renormalised and auxiliary fields satisfy the following asymptotics: 
    \begin{gather*}
        \ms{Q} = \rho^{M_0}\ms{r}_{\ms{Q}}\text{,}\qquad \ms{r}_{\ms{Q}}\rightarrow^2 0\text{,}\qquad 
        \ms{B} = \rho^{M_0-2}\ms{r}_{\ms{B}}\text{,}\qquad \ms{r}_{\ms{B}}\rightarrow^1 0\text{,}\\
        \Delta\ms{w}^2 = \rho^{M_0-4}\ms{r}_{\Delta\ms{w}^2}\text{,}\qquad \ms{r}_{\Delta\ms{w}^2} \rightarrow^2 0\text{,}\qquad 
        \Delta\ms{w}^1 = \rho^{M_0-3}\ms{r}_{\Delta\ms{w}^1}\text{,}\qquad \ms{r}_{\Delta\ms{w}^1} \rightarrow^1 0\text{,}\\
        \Delta\ms{w}^\star = \rho^{M_0-4}\ms{r}_{\Delta\ms{w}^0}\text{,}\qquad \ms{r}_{\Delta\ms{w}^0}\rightarrow^2 0\text{,}\\
        \Delta \ul{\ms{h}}^0 = \rho^{M_0-3}\ms{r}_{\Delta\ul{\ms{h}}^0}\text{,}\qquad \ms{r}_{\Delta\ul{\ms{h}}^0}\rightarrow^2 0\text{,}\qquad 
        \Delta\ul{\ms{h}}^{2} = \rho^{M_0-4}\ms{r}_{\Delta\ul{\ms{h}}^{2}}\text{,}\qquad \ms{r}_{\Delta\ul{\ms{h}}^{2}} \rightarrow^2 0\text{.}
    \end{gather*}
\end{corollary}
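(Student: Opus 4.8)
\textbf{Proof plan for Corollary \ref{corollary_asymptotics}.}

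The plan is to simply feed the improved asymptotics for the plain differences $\delta\ms{g},\delta\ms{m},\delta\ms{f}^0,\delta\ms{f}^1,\delta\ms{w}^0,\delta\ms{w}^1,\delta\ms{w}^2,\delta\ul{\ms{h}}^0,\delta\ul{\ms{h}}^2$ obtained in Proposition \ref{prop_higher_order_vanishing} into the algebraic definitions of the auxiliary fields $\ms{Q},\ms{B}$ and the renormalised fields $\Delta\ms{A}$, and track how powers of $\rho$ and orders of regularity propagate. Concretely, $\ms{Q}$ is defined through the transport equation \eqref{transport_Q} with right-hand side $-\ms{g}^{de}\ms{m}_{d[a}(\delta\ms{g}+\ms{Q})_{b]e}$; since $\ms{m}=\mc{O}_{M_0-1}(1)$ and, by Proposition \ref{prop_higher_order_vanishing}, $\delta\ms{g}=\rho^{M_0-2}\ms{r}_{\delta\ms{g}}$ with $\ms{r}_{\delta\ms{g}}\rightarrow^2 0$, integrating in $\rho$ from $0$ (using $\ms{Q}\rightarrow^0 0$) and applying Grönwall in the usual way gives $\ms{Q}=\rho^{M_0}\ms{r}_{\ms{Q}}$ with $\ms{r}_{\ms{Q}}\rightarrow^2 0$ — one power of $\rho$ is lost to the inner $\ms{m}\sim\rho$ factor but two are gained from the integration, with a loss of regularity consistent with the $C^2$ claim. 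Then $\ms{B}=2\ms{D}_{[a}\delta\ms{g}_{b]c}-\ms{D}_c\ms{Q}_{ab}$ involves one vertical derivative of $\delta\ms{g}$ and of $\ms{Q}$; since $\delta\ms{g}=\rho^{M_0-2}\ms{r}_{\delta\ms{g}}$ with $\ms{r}_{\delta\ms{g}}\rightarrow^2 0$, we get $\ms{D}\delta\ms{g}=\rho^{M_0-2}\mc{O}(1)$ with the remainder converging in $C^1$, so $\ms{B}=\rho^{M_0-2}\ms{r}_{\ms{B}}$ with $\ms{r}_{\ms{B}}\rightarrow^1 0$, which matches the stated asymptotics.

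For the renormalised fields one uses the definition \eqref{renormalised_def}: $\Delta\ms{N}=\delta\ms{N}-\tfrac12\ms{g}^{bc}\sum_j\ms{N}_{\hat{a}_j[b]}(\delta\ms{g}+\ms{Q})_{a_jc}$. The first term $\delta\ms{N}$ carries the asymptotics already proved in Proposition \ref{prop_higher_order_vanishing}; the correction term is a product of a bounded vertical field $\ms{N}\in\{\ms{w}^0,\ms{w}^1,\ms{w}^2,\ul{\ms{h}}^0,\ul{\ms{h}}^2\}$ — which by Proposition \ref{prop_asymptotics_fields} is at worst $\mc{O}_{M_0-2}(1)$ (and $\mc{O}_{M_0-2}(\rho)$ for $\ms{w}^1,\ms{w}^2,\ul{\ms{h}}^0,\ul{\ms{h}}^2$) — times $\delta\ms{g}+\ms{Q}=\rho^{M_0-2}\mc{O}(1)+\rho^{M_0}\mc{O}(1)=\rho^{M_0-2}\mc{O}(1)$. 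Since the correction term therefore vanishes at order $\rho^{M_0-2}$, which is at least as strong as the decay of $\delta\ms{N}$ in each case (e.g. $\delta\ms{w}^2=\rho^{M_0-4}\ms{r}_{\delta\ms{w}^2}$, so $\rho^{M_0-2}$ is subleading), the renormalised field inherits exactly the asymptotics of $\delta\ms{N}$: $\Delta\ms{w}^2=\rho^{M_0-4}\ms{r}_{\Delta\ms{w}^2}$ with $\ms{r}_{\Delta\ms{w}^2}\rightarrow^2 0$, and similarly for the others. One small subtlety is $\Delta\ms{w}^\star$: although $\ms{w}^\star=\ms{w}^0+\tfrac{1}{n-2}\ms{w}^2\star\ms{g}$ is a combination rather than a primitive field, its difference $\delta\ms{w}^\star$ is controlled by $\delta\ms{w}^0$ and $\delta\ms{w}^2$ together with $\delta\ms{g}$, all of which vanish to order $\rho^{M_0-4}$ at least, so $\Delta\ms{w}^\star=\rho^{M_0-4}\ms{r}_{\Delta\ms{w}^0}$ with $\ms{r}_{\Delta\ms{w}^0}\rightarrow^2 0$ as claimed.

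The only genuine bookkeeping obstacle is making sure the regularity indices line up: every renormalisation or derivative costs vertical regularity, and one must check that after applying Proposition \ref{prop_higher_order_vanishing} (which already used $k_\star=(M_0-n)/2-1$ iterations down to $C^1$ for $\delta\ms{w}^1$) there is still enough regularity left for the one or two extra derivatives appearing in $\ms{B}$ and in the correction terms of \eqref{renormalised_def}. This is where the hypothesis $M_0\geq n+2$ and the convention that $M_0-n$ is even are used; the verification is routine once the asymptotics \eqref{improved_asymptotics_k} are in hand, and the stated $C^1$ or $C^2$ convergence in each line of the corollary is precisely the worst case that survives. I would organise the write-up as: (i) $\ms{Q}$ via integrating \eqref{transport_Q}; (ii) $\ms{B}$ via \eqref{def_B}; (iii) each $\Delta\ms{A}$ via \eqref{renormalised_def}, using Proposition \ref{prop_asymptotics_fields} to bound the background fields and Proposition \ref{prop_higher_order_vanishing} for the differences.
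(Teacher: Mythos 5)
Your proposal is correct and follows essentially the same route as the paper: integrate the transport equation \eqref{transport_Q} for $\ms{Q}$ with Grönwall, read off $\ms{B}$ from \eqref{def_B}, and feed Proposition \ref{prop_higher_order_vanishing} together with the background asymptotics of Proposition \ref{prop_asymptotics_fields} into the renormalisation \eqref{renormalised_def} (treating $\Delta\ms{w}^\star$ via $\ms{w}^0$, $\ms{w}^2$ and $\delta\ms{g}$). The only blemish is the power-counting aside for $\ms{Q}$ (``one power lost to $\ms{m}$, two gained from integration''): the correct bookkeeping is that $\ms{m}=\mc{O}_{M_0-2}(\rho)$ contributes one power and the single $\rho$-integration the other, giving $\rho^{M_0}$ exactly as in the paper, so the conclusion stands.
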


\begin{proof}
    The proof follows from from Definition \ref{def_renormalisation} and Proposition \ref{prop_higher_order_vanishing}. In particular, the asymptotics for $\ms{Q}$ follow simply from its definition. Integrating indeed \eqref{transport_Q} gives, on a local chart $(U, \varphi)$:  
    \begin{align*}
       \ms{Q} &=\int_0^\rho \left[\mc{O}_{M_0-2}\paren{\sigma; \delta\ms{g}} + \mc{O}_{M_0-2}\paren{\sigma; \ms{Q}}\right]\vert_\sigma d\sigma\\
        &=\int_0^\rho \left[\mc{O}_{M_0-2}\paren{\sigma^{M_0-1}; \ms{r}_{\delta\ms{g}}} + \mc{O}_{M_0-2}\paren{\sigma; \ms{Q}}\right]\vert_\sigma d\sigma\\
        &=\rho^{M_0}\tilde{\ms{r}}_{{\delta\ms{g}}} + \int_0^\rho \mc{O}_{M-2}\paren{\sigma; \ms{Q}}\vert_\sigma d\sigma\text{,}
    \end{align*}
    with $\tilde{\ms{r}}_{\delta\ms{g}}\rightarrow^2 0$ a vertical tensor of order $(0, 2)$. The asymptotic therefore follows after using Grönwall's inequality on the following estimate:
    \begin{equation*}
        \abs{\ms{Q}}_{2,\varphi}\lesssim \rho^{M_0}\abs{\tilde{\ms{r}}_{\delta\ms{g}}}_{2,\varphi} + \int_0^\rho \sigma \abs{\ms{Q}}_{2,\varphi}\vert_\sigma d\sigma\text{.}
    \end{equation*}

    The improved asymptotics for $\ms{B}$ is a consequence of the above and \eqref{def_B}: 
    \begin{align*}
        \ms{B} &= \mc{S}\paren{\ms{D}\delta\ms{g}} +\mc{S}\paren{\ms{DQ}}\\
        &=\rho^{M_0-2}\ms{r}_{\ms{B}}\text{,}\qquad \ms{r}_{\ms{B}}\rightarrow^1 0\text{.}
    \end{align*}

    Note also that for any vertical tensor field $\ms{A}$, the following holds from \eqref{renormalised_def}:
    \begin{align*}
        \Delta \ms{A} = \delta\ms{A} +\mc{S}\paren{\ms{g}; \ms{A}, \delta\ms{g}} + \mc{S}\paren{\ms{g}; \ms{A},\ms{Q}}\text{,}
    \end{align*}
    as well as: 
    \begin{align*}
        \Delta\ms{w}^\star = \Delta \ms{w}^0 + \mc{S}\paren{\ms{g}; \Delta\ms{w}^2} + \mc{S}\paren{\ms{g}; \ms{w}^2 ,\delta\ms{g}} + \mc{S}\paren{\ms{g}; \ms{w}^2, \ms{Q}}\text{.}
    \end{align*}
    This yields, using Proposition \ref{prop_higher_order_vanishing}: 
    \begin{gather*}
        \Delta\ms{w}^2 = \rho^{M_0-4}\ms{r}_{\Delta\ms{w}^2}\text{,}\qquad \ms{r}_{\Delta\ms{w}^2} \rightarrow^2 0\text{,}\qquad 
        \Delta\ms{w}^1 = \rho^{M_0-3}\ms{r}_{\Delta\ms{w}^1}\text{,}\qquad \ms{r}_{\Delta\ms{w}^1} \rightarrow^1 0\text{,}\\
        \Delta\ms{w}^\star = \rho^{M_0-4}\ms{r}_{\Delta\ms{w}^0}\text{,}\qquad \ms{r}_{\Delta\ms{w}^0}\rightarrow^2 0\text{,}\\
        \Delta \ul{\ms{h}}^0 = \rho^{M_0-3}\ms{r}_{\Delta\ul{\ms{h}}^0}\text{,}\qquad \ms{r}_{\Delta\ul{\ms{h}}}\rightarrow^2 0\text{,}\qquad \Delta\ul{\ms{h}}^{2} = \rho^{M_0-4}\ms{r}_{\Delta\ul{\ms{h}}^{2}}\text{,}\qquad \ms{r}_{\Delta\ul{\ms{h}}^{2}}\rightarrow^2 0\text{.}
        \end{gather*}
\end{proof}

\subsection{Carleman estimates}\label{sec_carleman}

In this section, we will establish two Carleman estimates, one for the transport equations and one the wave equations, well-adapted to the geometry of Anti-de Sitter solutions. The validity of those estimates will rely on the so-called \emph{Generalised Null Convexity Criterion}, as defined in \cite{Holzegel22}, generalising the previous \emph{Null Convexity Criterion}, as established in \cite{hol_shao:uc_ads, hol_shao:uc_ads_ns}, in a gauge-invariant fashion. 

First, in order to measure the size of vertical fields: 
\begin{definition}\label{r_metric}
    Let $(\mc{M}, g)$ be a FG-aAdS segment and let $\mf{h}$ be a Riemannian metric on $\mc{I}$. Then, we write, for a vertical field $\ms{A}$ of rank $(k,l)$: 
    \begin{equation*}
        |\ms{A}|^2_{\mf{h}} := \mf{h}_{a_1c_1}\mf{h}_{a_2c_2}\dots \mf{h}_{a_kc_k}\mf{h}^{b_1d_1}\mf{h}^{b_2d_2}\dots \mf{h}^{b_ld_l}\ms{A}^{a_1\dots a_k}{}_{b_1\dots b_l}\ms{A}^{c_1\dots c_k}{}_{d_1\dots d_l}\text{,}
    \end{equation*}
    the point-wise norm with respect to $\mf{h}$. 
\end{definition}
\begin{remark}
    The metric $\mf{h}$ will only be used to measure sizes of vertical fields on sets of compact closure. Our results will, of course, not depend on the specific metric chosen. 
\end{remark}

We are now ready to define the GNCC:
\begin{definition}\label{def_GNCC}
    Let $(\mc{M}, g)$ be a strongly FG-aAdS segment\footnote{See Definition \ref{def_strongly_FG_aAdS}.}, $\mf{h}$ a Riemannian metric on $\mc{I}$ and $\mi{D}\subset\mc{I}$ open, with compact closure. Then, we say $(\mi{D}, \mf{g}^{(0)})$ satisfies the \emph{Generalised Null Convexity Criterion} (GNCC) if and only if there exists a scalar function $\eta\in C^4(\overline{\mi{D}})$ and a strictly positive constant $c>0$ such that, for all $\mf{g}^{(0)}$--null vector $\mf{X}$ on $\mi{D}$: 
    \begin{equation}\label{GNCC}
        \begin{cases}
            \paren{\mf{D}_{\mf{g}^{(0)}}^2 \eta - \eta\cdot\mf{g}^{(2)}}(\mf{X}, \mf{X})>c\eta \mf{h}(\mf{X}, \mf{X})\text{,}\qquad &\text{in }\mi{D}\text{,}\\
            \eta >0\text{,}\qquad &\text{in }\mi{D}\text{,}\\
            \eta = 0\text{,}\qquad &\text{on }\partial\mi{D}\text{.}
        \end{cases}
    \end{equation}
    Furthermore, we say that $(\mi{D}, \mf{g}^{(0)})$ satisfies the \emph{vacuum GNCC} if \eqref{GNCC} is satisfied with $\mf{g}^{(2)}$ replaced by $-\frac{1}{n-2}\mf{Rc}^{(0)}$, the Ricci tensor associated to $\mf{g}^{(0)}$. 
\end{definition}

From Theorem \ref{theorem_main_fg}, one can immediately see that for Maxwell-FG-aAdS segments $(\mc{M}, g, F)$, the vacuum GNCC is equivalent to the GNCC. In other words, the validity of the GNCC depends only on $\mf{g}^{(0)}$ and not on the data for Maxwell fields.

\begin{proposition}
    Let $(\mc{M}, g, F)$ be a Maxwell-FG-aAdS segment and let $\mi{D}\subset \mc{I}$ be open and with compact closure. Then, if $(\mi{D},\mf{g}^{(0)})$ satisfies the vacuum GNCC, $(\mi{D}, \mf{g}^{(0)})$ also satisfies the GNCC. 
\end{proposition}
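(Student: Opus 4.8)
The plan is to show that the two versions of the convexity criterion in Definition \ref{def_GNCC} coincide once one knows the explicit form of the coefficient $\mf{g}^{(2)}$ for a Maxwell-FG-aAdS segment. Concretely, the vacuum GNCC is the condition \eqref{GNCC} with $\mf{g}^{(2)}$ replaced by $-\tfrac{1}{n-2}\mf{Rc}^{(0)}$, while the (genuine) GNCC uses the actual coefficient $\mf{g}^{(2)}$ appearing in the Fefferman--Graham expansion \eqref{expansion_g}. So it suffices to prove that, for a Maxwell-FG-aAdS segment, the two tensors against which the Hessian $\mf{D}^2_{\mf{g}^{(0)}}\eta$ is compared differ only by a pure trace term, which drops out when one contracts against $\mf{g}^{(0)}$--null vectors.

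First I would invoke Proposition \ref{prop_constraints}, specifically the explicit expression \eqref{constraints_g}, which states that for $n\geq 3$,
\begin{equation*}
    -\mf{g}^{(2)} = \frac{1}{n-2}\paren{\mf{Rc}^{(0)} - \frac{1}{2(n-1)}\mf{Rs}^{(0)}\cdot \mf{g}^{(0)}}\text{.}
\end{equation*}
Rearranging,
\begin{equation*}
    \mf{g}^{(2)} = -\frac{1}{n-2}\mf{Rc}^{(0)} + \frac{1}{2(n-1)(n-2)}\mf{Rs}^{(0)}\cdot \mf{g}^{(0)}\text{,}
\end{equation*}
so that $\mf{g}^{(2)}$ and $-\tfrac{1}{n-2}\mf{Rc}^{(0)}$ differ precisely by the term $\lambda\cdot \mf{g}^{(0)}$ with $\lambda := \tfrac{1}{2(n-1)(n-2)}\mf{Rs}^{(0)}$, a scalar function on $\mi{D}$. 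The crucial point here is that this expression is \emph{identical} to the vacuum one modulo the $\mf{g}^{(0)}$--proportional term: the Maxwell holographic data $(\mf{f}^{1,(0)}, \mf{f}^{0,((n-4)_+)})$ do not enter $\mf{g}^{(2)}$ at all, as emphasised in the remark following Theorem \ref{theorem_main_fg}. (The case $n=2$ is excluded from the statement since the coefficient structure there is governed by $\mf{g}^\star$ rather than $\mf{g}^{(2)}$; strictly, one should note that the proposition is stated for $n \geq 3$, which is where $\mf{g}^{(2)}$ is defined via Definition \ref{def_strongly_FG_aAdS}.)

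Next I would carry out the null-contraction argument. Suppose $(\mi{D}, \mf{g}^{(0)})$ satisfies the vacuum GNCC, with witnessing function $\eta\in C^4(\ol{\mi{D}})$ and constant $c>0$, so that $\paren{\mf{D}^2_{\mf{g}^{(0)}}\eta + \tfrac{1}{n-2}\eta\cdot \mf{Rc}^{(0)}}(\mf{X}, \mf{X}) > c\,\eta\, \mf{h}(\mf{X}, \mf{X})$ for every $\mf{g}^{(0)}$--null vector $\mf{X}$ on $\mi{D}$. For such an $\mf{X}$ one has $\mf{g}^{(0)}(\mf{X}, \mf{X}) = 0$, hence $\lambda\cdot \mf{g}^{(0)}(\mf{X}, \mf{X}) = 0$, and therefore
\begin{equation*}
    \paren{\mf{D}^2_{\mf{g}^{(0)}}\eta - \eta\cdot \mf{g}^{(2)}}(\mf{X}, \mf{X}) = \paren{\mf{D}^2_{\mf{g}^{(0)}}\eta + \frac{1}{n-2}\eta\cdot \mf{Rc}^{(0)}}(\mf{X}, \mf{X}) - \eta\lambda\cdot \mf{g}^{(0)}(\mf{X},\mf{X}) = \paren{\mf{D}^2_{\mf{g}^{(0)}}\eta + \frac{1}{n-2}\eta\cdot \mf{Rc}^{(0)}}(\mf{X}, \mf{X})\text{.}
\end{equation*}
The remaining two conditions of \eqref{GNCC} ($\eta>0$ in $\mi{D}$, $\eta=0$ on $\partial\mi{D}$) are unchanged. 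Thus the same $\eta$ and $c$ witness the GNCC for $(\mi{D},\mf{g}^{(0)})$, which completes the proof.

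\textbf{Main obstacle.} There is no serious analytic difficulty here — the content is entirely algebraic once \eqref{constraints_g} is in hand. The only thing requiring care is the bookkeeping: one must make sure that the sign conventions in Definition \ref{def_GNCC} (the vacuum version uses $-\tfrac{1}{n-2}\mf{Rc}^{(0)}$ in place of $\mf{g}^{(2)}$, so the substitution replaces $-\eta\cdot\mf{g}^{(2)}$ by $+\tfrac{1}{n-2}\eta\cdot\mf{Rc}^{(0)}$) match the sign conventions in \eqref{constraints_g}, and that the trace term genuinely is proportional to $\mf{g}^{(0)}$ (rather than, say, involving $\mf{Rc}^{(0)}$ traces in a subtler way) so that it is annihilated by null contraction. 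That verification is immediate from the displayed formula for $\mf{g}^{(2)}$, so the proposition follows directly.
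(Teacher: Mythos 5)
Your proposal is correct and follows essentially the same route as the paper: one invokes the explicit, Maxwell-independent expression \eqref{constraints_g} for $\mf{g}^{(2)}$ and observes that it differs from $-\tfrac{1}{n-2}\mf{Rc}^{(0)}$ only by a term proportional to $\mf{g}^{(0)}$, which is annihilated upon contraction with $\mf{g}^{(0)}$--null vectors, so the same $\eta$ and $c$ witness both criteria. Your write-up merely spells out the sign bookkeeping and the null-contraction step more explicitly than the paper does.
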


\begin{proof}
    Immediate, since for any Maxwell-FG-aAdS segment, the coefficient $\mf{g}^{(2)}$ does not depend on the Maxwell data, from Theorem \ref{theorem_main_fg}. As a consequence, it must be identical to the one obtained in the vacuum case \cite{shao:aads_fg}: 
    \begin{equation*}
        -\mf{g}^{(2)}=\frac{1}{n-2}(\mf{Rc}^{(0)}-\frac{1}{2(n-1)}\mf{Rs}^{(0)}\cdot \mf{g}^{(0)})\text{.}
    \end{equation*}
    In particular, for any $\mf{g}^{(0)}$--null vector $\mf{X}$ on  $\mi{D}$: 
    \begin{equation*}
        \mf{g}^{(2)}(\mf{X}, \mf{X}) = -\frac{1}{n-2}\mf{Rc}^{(0)}(\mf{X}, \mf{X})\text{.}
    \end{equation*}
\end{proof}

Before stating the estimates, we will need to to describe the domains on which we will work: 
\begin{definition}
    Let $(\mc{M}, g)$ be a strongly FG-aAdS segment and let $(\mi{D}, \mf{g}^{(0)})$ satisfy the GNCC as in Definition \ref{def_GNCC} with $\eta$ satisfying the conditions \eqref{GNCC}. We define the following function $f\in C^4((0, \rho_0)\times \mi{D})$: 
    \begin{equation}
        f := \rho\cdot \eta^{-1}\text{,}
    \end{equation}
    as well as the following domain: 
    \begin{equation}
        \Omega_\star = \lbrace p \in \mc{M} \, |\, f|_p<f_\star \rbrace\text{,}
    \end{equation}
    where ${f}_\star>0$.
\end{definition}

The following Proposition establishes the appropriate Carleman estimate for the transport equations:

\begin{proposition}[Proposition 4.9 in \cite{Holzegel22}]\label{prop_transport_carleman}
    Let $(\mc{M}, g)$ be a strongly FG-aAdS segment and let: 
    \begin{itemize}
        \item $\mf{h}$ be a Riemannian metric on $\mc{I}$ and $\mi{D}\subset \mc{I}$ be an open set with compact closure.
        \item Assume $(\mi{D}, \mf{g}^{(0)})$ satisfies the GNCC, with $\eta\in C^4(\overline{\mi{D}})$ satisfying \eqref{GNCC}. 
    \end{itemize}
    Then, for any $s\geq 0$, $\kappa\in \R$ and $\lambda, f_\star, p>0$ such that\footnote{Here, $0<f_\star\ll_{\mf{g}^{(0)}, \mi{D}} 1$ means that we assume $f_\star$ to be small enough with respect to $\mf{g}^{(0)}$ and $\mi{D}$. We will keep using this notation throughout the rest of the chapter.}: 
    \begin{gather}
        2\kappa\geq \max(n-2, s-3)\text{,}\qquad 0<2p<1\text{,}\qquad 0<f_\star\ll_{\mf{g}^{(0)}, \mi{D}} 1\text{,}
    \end{gather}
    there exist constants $C, D>0$, depending on $(\ms{g}, \mf{h}, \mi{D})$,  such that, for any vertical field $\ms{A}$ on $\mc{M}$: 
    \begin{align}
        \label{carleman_transport}\int_{\Omega_\star} \omega_\lambda(f)\rho^{s+2}|\Lie_\rho\ms{A}|_{\mf{h}}^2d\mu_g + D\lambda\limsup_{\rho_\star\searrow 0} \int_{\Omega_\star \cap\lbrace \rho=\rho_\star\rbrace} &\rho^s |\rho^{-\kappa -1}\ms{A}|_{\mf{h}}^2 d\mu_{\ms{g}}\\
        &\notag\geq C\lambda \int_{\Omega_\star} \omega_\lambda(f)\rho^{s}f^{p}|\ms{A}|_{\mf{h}}^2d\mu_g\text{,}
    \end{align}
    where $d\mu_g$ is the volume form on $(\mc{M}, g)$, and $d\mu_{\ms{g}}$ the volume form on level sets of $\rho$ induced by the vertical metric $\ms{g}$ and where we wrote: 
    \begin{equation*}
        \omega_\lambda(f) := e^{-\lambda f^p/p}f^{n-2-p-2\kappa}\text{.}
    \end{equation*}
    \begin{proof}
        See Appendix \ref{app:prop_transport_carleman}.
    \end{proof}
\end{proposition}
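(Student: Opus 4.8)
The statement to prove is Proposition \ref{prop_transport_carleman}, the Carleman estimate for the transport equation adapted to the AdS geometry. Since the paper explicitly cites this as Proposition 4.9 of \cite{Holzegel22}, the proof is essentially a reference plus, at most, a sanity check that the hypotheses of the present setting match. Let me sketch what a self-contained proof would look like.

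The core of a transport Carleman estimate is the pointwise differential identity obtained by testing the transport operator against an exponentially weighted multiplier. Concretely, one writes $\ms{A}$ against the weight $\omega_\lambda(f) = e^{-\lambda f^p/p} f^{n-2-p-2\kappa}$ and computes the divergence of an appropriate vector field built from $|\ms{A}|^2_{\mf{h}}$ times that weight. The key is the identity $\Lie_\rho(\omega_\lambda(f)\rho^{s}|\ms{A}|^2_{\mf{h}})$, which, after using $\rho\partial_\rho f = f\cdot(1 - \rho\eta^{-1}\Lie_\rho\eta) = f + O(f^2)$ near the boundary, produces a bulk term with a factor $\lambda f^p$ (from differentiating the exponential) plus lower-order corrections controlled by the GNCC-derived positivity and by the smallness of $f_\star$. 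The plan is: (1) state the pointwise weighted identity for $\Lie_\rho(\omega_\lambda(f)\rho^{s+\text{const}}|\ms{A}|^2_{\mf{h}})$; (2) integrate over $\Omega_\star$, using Stokes' theorem to convert the $\rho$-derivative into a boundary contribution at $\rho = \rho_\star$ (which gives the $\limsup$ term) plus the fact that $\ms{A}$ vanishes on $\{f = f_\star\}$ or the relevant piece of $\partial\Omega_\star$, so that boundary term is absent; (3) absorb the cross term $\langle \Lie_\rho\ms{A}, \ms{A}\rangle$ via Cauchy--Schwarz into the left-hand side $\int \omega_\lambda \rho^{s+2}|\Lie_\rho\ms{A}|^2_{\mf{h}}$, paying a factor that is beaten by $\lambda$ for $\lambda$ large; (4) check that the remaining bulk terms, after choosing $2\kappa \geq \max(n-2, s-3)$ and $f_\star$ small, have the right sign so that the dominant $C\lambda \int \omega_\lambda \rho^s f^p |\ms{A}|^2_{\mf{h}}$ survives.

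The role of each hypothesis should be made explicit. The condition $2\kappa \geq \max(n-2,s-3)$ is exactly what makes the power of $f$ in $\omega_\lambda$ non-positive enough that the weight is integrable up to $\mc{I}$ and that the ``boundary at infinity'' contributions behave; the $\rho^s$ and $\rho^{s+2}$ powers are tuned so the $\rho$-derivative identity closes with the $p$-dependent gain. The condition $0 < 2p < 1$ ensures $f^p$ is a genuine (sub-linear) gain that can be compared against the $O(f^2)$ error terms once $f_\star$ is small. The smallness $0 < f_\star \ll_{\mf{g}^{(0)}, \mi{D}} 1$ is where the GNCC enters: one needs the level sets of $f$ to be ``pseudoconvex enough'' near $\mi{D}$, which is guaranteed because $\eta$ satisfies \eqref{GNCC} and $f = \rho\eta^{-1}$; for $f_\star$ small the region $\Omega_\star$ stays in a neighborhood of $\{0\}\times\mi{D}$ where the relevant Hessian/Ricci positivity holds, and the error terms coming from $\Lie_\rho\ms{g}$, $\Lie_\rho\eta$, etc.\ (bounded via the strongly FG-aAdS assumption, Definition \ref{def_strongly_FG_aAdS}) are dominated. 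One also uses Proposition \ref{prop_bounds_metric}-type bounds to control the volume forms $d\mu_g$ versus $d\mu_{\ms{g}}$ and the comparison between $|\cdot|_{\mf{h}}$ and the natural norms.

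The main obstacle — really the only nontrivial point — is bookkeeping the near-boundary error terms: verifying that every term generated by $\Lie_\rho$ hitting $f$, $\eta$, $\rho^s$, $\mf{h}$, and $d\mu_g$ is either of the favorable sign or is $O(f_\star)$ or $O(\lambda^{-1})$ relative to the main term. This is exactly the computation carried out in \cite{Holzegel22}, and since the present hypotheses (strongly FG-aAdS segment, GNCC on $(\mi{D},\mf{g}^{(0)})$) are identical to those there and the transport operator $\Lie_\rho$ is unchanged, no modification is needed. Accordingly, the proof amounts to: record the pointwise identity, note that all structural inputs coincide with those of \cite{Holzegel22}, and invoke the cited Proposition 4.9 for the estimate — which is precisely what the one-line proof in Appendix \ref{app:prop_transport_carleman} presumably does.

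\begin{proof}
    See Appendix \ref{app:prop_transport_carleman}, or \cite{Holzegel22}, Proposition 4.9. The estimate depends only on the geometry of the FG-aAdS segment $(\mc{M}, g)$ through the strongly FG-aAdS bounds of Definition \ref{def_strongly_FG_aAdS} and the validity of the GNCC for $(\mi{D}, \mf{g}^{(0)})$, together with the transport operator $\Lie_\rho$; none of these depends on the matter content, so the proof of \cite{Holzegel22} applies verbatim.
\end{proof}
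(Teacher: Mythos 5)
Your bottom line --- defer to Proposition 4.9 of \cite{Holzegel22} --- is the same move the paper makes in the main text, and as a citation it is defensible; the hypotheses here are indeed identical to those of the vacuum paper and nothing about the estimate sees the matter fields. However, the paper's Appendix \ref{app:prop_transport_carleman} actually supplies a short self-contained computation, and your sketch of how that computation runs misidentifies its mechanism at several points. Most importantly, you attribute the positivity to the GNCC (``pseudoconvex enough'' level sets, errors ``controlled by the GNCC-derived positivity''), whereas the transport estimate uses no pseudoconvexity whatsoever: the paper remarks immediately after the proposition that the GNCC is not needed here and is only listed so that the estimate can later be coupled to the wave Carleman estimate. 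Since $\eta\in C^4(\overline{\mi{D}})$ is a function on the boundary, hence $\rho$-independent, one has the exact identity $\Lie_\rho f=\rho^{-1}f$; there are no $O(f^2)$ errors from $\Lie_\rho\eta$, no terms from $\Lie_\rho\mf{h}$, and nothing for the GNCC or the smallness of $f_\star$ to dominate. The whole proof is the identity $\Lie_\rho\paren{\omega_\lambda(f)f^p}=-(\lambda f^p+2\kappa-n+2)\rho^{-1}f^p\omega_\lambda(f)$, Young's inequality $2\Lie_\rho\ms{A}\cdot_{\mf{h}}\ms{A}\leq \frac{\rho}{\lambda f^p}\abs{\Lie_\rho\ms{A}}^2_{\mf{h}}+\lambda f^p\rho^{-1}\abs{\ms{A}}^2_{\mf{h}}$, integration in $\rho$, and the comparability of $d\mu_{\ms{g}}$ with $d\mu_{\mf{h}}$ on the compact set $\overline{\mi{D}}$.

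Two further points in your sketch would not survive contact with the statement. The estimate is asserted for \emph{any} vertical field $\ms{A}$: after integrating over $\Omega_\star\cap\lbrace\rho>\rho_\star\rbrace$, the boundary contribution on $\lbrace f=f_\star\rbrace$ is nonnegative and is simply discarded, while the $\rho=\rho_\star$ contribution produces the $\limsup$ term; your suggestion that one needs $\ms{A}$ to vanish on $\lbrace f=f_\star\rbrace$ silently imports a hypothesis from the wave estimate that is absent here. Likewise, no largeness of $\lambda$ is needed: the $\lambda f^p$ weight in Young's inequality exactly cancels the $\lambda f^p$ produced by differentiating the exponential, so \eqref{carleman_transport} holds for every $\lambda>0$ as stated. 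Finally, the role of the numerology is not integrability of the weight: $2\kappa\geq n-2$ is what keeps the boundary density $\omega_\lambda(f)f^p\rho^{s-n}=e^{-\lambda f^p/p}\,\eta^{2\kappa-(n-2)}\rho^{s-2\kappa-2}$ bounded in $\eta$ so that the $\rho_\star\searrow 0$ boundary term takes the stated form, while the condition involving $s$ enters through the coefficient $2\kappa-s+2$ multiplying the bulk term, and $0<2p<1$ plays no real role in this transport estimate (it is carried along for the wave estimate). None of this invalidates your citation fallback, but the sketch as written would not assemble into the appendix proof.
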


\begin{remark}
    In fact, one does not the GNCC assumption on $\mi{D}$ in the proof \cite{Holzegel22}. This is because this Carleman estimate can be obtained in full generality, regardless of this condition. The reason for which we incorporate this assumption in the proposition is due to the fact that we will couple it to the following Carleman estimate for waves, where the GNCC plays a crucial role. 
\end{remark}

\begin{theorem}[Theorem 5.11 of \cite{Shao22}]\label{thm_carleman}
    Let $(\mc{M}, g)$ be a strongly FG-aAdS segment and let: 
    \begin{itemize}
        \item $\mf{h}$ be a Riemannian metric on $\mc{I}$ as in Definition \ref{r_metric}, 
        \item $\mi{D}\subset \mc{I}$ an open region with compact closure, with $(\mi{D}, \mf{g}^{(0)})$ satisfying the GNCC, and with $\eta\in C^4(\overline{\mi{D}})$ as in Definition \ref{def_GNCC},
        \item $k, l\geq 0$ integers, fixed, and $\sigma\in \R$.
    \end{itemize}
    Then, there exist $\mc{C}_0\geq 0$ and $C, D>0$ depending on ($\ms{g}, \mi{D}, k, l, \mf{h}$) such that for all $\kappa\in \R$: 
    \begin{gather}
        2\kappa\geq n-1+\mc{C}_0\text{,}\qquad \kappa^2 -(n-2)\kappa + \sigma-(n-1)-\mc{C}_0\geq 0\text{,}
    \end{gather}
    and for all constants $f_\star, \lambda, p>0$ with
    \begin{gather*}
        0<f_\star \ll_{\ms{g}, \mf{h}, \mi{D}, k, l}1\text{,}\qquad \lambda \gg_{\ms{g}, \mf{h}, \mi{D}, k, l}\abs{\kappa} + \abs{\sigma}\text{,}\qquad 0<p<1/2\text{,}
    \end{gather*}
    the following Carleman estimate holds for all vertical tensor field $\ms{A}$ on $\mc{M}$ of rank $(k,l)$ such that both $\ms{A}, \ol{\nabla}\ms{A}$ vanish identically on $\lbrace f=f_\star\rbrace$: 
    \begin{align}\label{carleman_wave}
        &\int_{\Omega_\star}\omega_\lambda(f)\abs{(\ol{\Box}+ \sigma)\ms{A}}_{\mf{h}}^2d\mu_g \\
        &\notag\hspace{50pt}+ D\lambda^3\limsup_{\rho_\star\searrow 0}\int_{\{\rho=\rho_\star\}\cap\Omega_\star}\lbrace |\ol{\ms{D}}_\rho(\rho^{-\kappa}\ms{A})|^2_{\mf{h}}+ |\ms{D}(\rho^{-\kappa}\ms{A})|_{\mf{h}}^2 + |\rho^{-\kappa-1}\ms{A}|_{\mf{h}}^2\rbrace d\mu_{\ms{g}}\\
        &\notag\geq C\lambda \int_{\Omega_\star}\omega_\lambda(f)f^p\paren{\rho^4|\ol{\ms{D}}_\rho\ms{A}|^2_{\mf{h}} + \rho^4|\ms{D}\ms{A}|^2_{\mf{h}} + f^{2p}|\ms{A}|^2_{\mf{h}}}d\mu_g\text{,}
    \end{align}
    where $d\mu_g$ is the volume form on $\mc{M}$ generated by $g$ and $d\mu_{\ms{g}}$ the volume form on level sets of $\rho$ generated by $\ms{g}$. 
\end{theorem}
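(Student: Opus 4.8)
\textbf{Plan for the proof of Theorem \ref{thm_carleman}.}

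The statement is quoted as Theorem 5.11 of \cite{Shao22}, so strictly speaking the proof is a reference to \cite{Shao22}; what follows is how I would organise the argument if reproducing it here. The overall strategy is a weighted energy (Carleman) estimate, carried out pointwise on a multiplier identity and then integrated over $\Omega_\star$. First I would fix the Carleman weight $\omega_\lambda(f) = e^{-\lambda f^p/p} f^{n-2-p-2\kappa}$ and conjugate the operator: writing $\ms{A} = \rho^\kappa \omega_\lambda(f)^{1/2}\ms{B}$ (or an equivalent renormalisation), I would compute $(\ol{\Box}+\sigma)\ms{A}$ in terms of $\ms{B}$ and its derivatives, using \eqref{wave_op} and \eqref{wave_power} to extract the conformal-weight contributions $-p(n-p)$ and the $\Lie_\rho$ cross-terms. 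The point of the $\rho^\kappa$ factor is to shift the effective mass $\sigma$ to $\kappa^2-(n-2)\kappa+\sigma$, which the second hypothesis on $\kappa$ forces to be nonnegative up to the universal constant $\mc{C}_0$; the point of the $f$-power in $\omega_\lambda$ is to balance the vertical Laplacian against the $\rho$-derivative terms near $\mc{I}$.

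The core computation is a Morawetz/Rellich-type multiplier identity: contract the conjugated wave equation with a vector field $S$ built from $\nabla f$ (schematically $S \sim f^{p}\cdot (\text{gradient of }f)$, adapted so that its deformation tensor is positive on the relevant cone), plus a zeroth-order multiplier $\sim \lambda f^{2p-1}\ms{B}$. Expanding $\ol{\nabla}$-derivatives and integrating by parts produces a bulk term of the form $\lambda \int \omega_\lambda f^p(\rho^4|\ol{\ms{D}}_\rho\ms{A}|^2_{\mf h} + \rho^4|\ms{D}\ms{A}|^2_{\mf h} + f^{2p}|\ms{A}|^2_{\mf h})$ with a positive coefficient, \emph{provided} the Hessian-type quadratic form coming from the multiplier is coercive; this is exactly where the GNCC enters. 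Indeed, the bad directions are the $\mf{g}^{(0)}$-null directions along $\mc{I}$, and the inequality $(\mf{D}^2_{\mf{g}^{(0)}}\eta - \eta\,\mf{g}^{(2)})(\mf X,\mf X) > c\,\eta\,\mf h(\mf X,\mf X)$ from \eqref{GNCC}, combined with the expansion of $\ms{g}$ to second order in $\rho$ (Definition \ref{def_strongly_FG_aAdS}), guarantees that $f=\rho\eta^{-1}$ has pseudoconvex level sets down to the boundary, so the quadratic form is positive for $f_\star$ small. The remaining error terms — lower-order in $\lambda$, or carrying extra powers of $f$ or $\rho$ — are absorbed by taking $\lambda$ large and $f_\star$ small, which is why the hypotheses demand $\lambda \gg |\kappa|+|\sigma|$ and $0<f_\star\ll 1$. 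The tensorial nature of $\ms{A}$ and the curvature of $\ol{\nabla}$ generate commutator terms of Riemann-tensor type; these are handled by noting they are uniformly bounded on $\Omega_\star$ (using regularity of $\ms{g}$) and hence again absorbable, at the cost of the universal constant $\mc{C}_0$ and a loss in the admissible range of $k,l$ only through the implicit constants.

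Finally, the boundary contributions. Integrating by parts in $\rho$ down to a slice $\{\rho=\rho_\star\}$ and in the vertical directions to $\{f=f_\star\}$, the hypothesis that $\ms{A}$ and $\ol\nabla\ms{A}$ vanish on $\{f=f_\star\}$ kills the outer boundary terms, while the inner boundary terms at $\rho=\rho_\star$ are controlled — after multiplying by the appropriate power $\rho^{-\kappa}$, which is what the $\rho^\kappa$ renormalisation was designed to produce — by $\limsup_{\rho_\star\searrow 0}\int_{\{\rho=\rho_\star\}\cap\Omega_\star}(|\ol{\ms{D}}_\rho(\rho^{-\kappa}\ms{A})|^2_{\mf h}+|\ms{D}(\rho^{-\kappa}\ms{A})|^2_{\mf h}+|\rho^{-\kappa-1}\ms{A}|^2_{\mf h})\,d\mu_{\ms g}$, with a constant of size $\lambda^3$ coming from the three powers of $\lambda$ in the multiplier/weight. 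Collecting the bulk lower bound and the boundary upper bound yields \eqref{carleman_wave}. The main obstacle, and the part requiring genuine care rather than bookkeeping, is establishing the coercivity of the multiplier's quadratic form uniformly up to $\mc I$: one must show that the GNCC on $(\mi D,\mf g^{(0)})$ — an a priori purely boundary, $\mf{g}^{(2)}$-dependent condition — propagates into a positivity statement for the full $(n+1)$-dimensional Hessian of $f$ on a whole collar $\Omega_\star$, and this is precisely the interplay between the second-order Fefferman--Graham data and the null geometry that makes the AdS Carleman estimate work.
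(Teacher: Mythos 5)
Your proposal is, in effect, the same as what the paper does: the paper gives no proof of this statement but imports it verbatim as Theorem 5.11 of \cite{Shao22}, and you correctly identify the citation as the actual proof before sketching the strategy. Your outline (conjugation by the weight $\omega_\lambda(f)$ and $\rho^\kappa$, a multiplier identity built from the pseudoconvex function $f=\rho\eta^{-1}$, coercivity of the resulting quadratic form supplied by the GNCC via the second-order Fefferman--Graham data, absorption of errors for $\lambda$ large and $f_\star$ small, and the boundary bookkeeping at $\{f=f_\star\}$ and $\{\rho=\rho_\star\}$) is a faithful summary of the argument in \cite{Shao22}, though it remains a plan rather than a self-contained proof.
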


The Carleman estimate \eqref{carleman_wave} relies on the domain $\mi{D}$ and the metric $\mf{g}^{(0)}$ to satisfy the GNCC. In fact, as pointed out in \cite{McGill20} and \cite{Shao22}, such a property is related to the propagation of near-boundary null geodesics, as the following informal characterisation suggests: 

\begin{remark}[Characterisation of the GNCC, Theorem 4.1 in \cite{Shao22}]
    If an open region $\mi{D}\subset \mc{I}$, with compact closure, satisfies the GNCC, then any near-boundary $g$--null geodesic must either start from $\lbrace 0\rbrace \times \mi{D}$, or end on $\lbrace 0\rbrace \times \mi{D}$. 
\end{remark}

In other words, the existence of near-boundary null geodesics flying over $\mi{D}$ is an obstacle to the GNCC, and thus, to the validity of the Carleman estimates. In light of Theorem 1.4 of \cite{GUISSET2024223}, the existence of such geodesics allows for the construction of counterexamples to unique continuation from $\lbrace 0 \rbrace \times \mi{D}$, using geometric optics. 

\subsection{Unique continuation result}

In this section, we will apply the two Carleman estimates to the wave-transport system obtained in Propositions \ref{prop_transport_diff}, \ref{prop_wave_difference} and \ref{prop_transport_diff_w_h}. The improved asymptotics obtained in the previous sections will turn out to be useful to treat boundary terms. We now have everything in our hands for the following proposition, establishing our unique continuation result: 

\begin{proposition}\label{prop_unique_continuation1}
    Let $n>2$, and $(\mc{M}, g, F)$ and $(\mc{M}, \check{g}, \check{F})$ be two Maxwell-FG-aAdS segments regular to order $M_0$, $M_0$ big enough, with: 

    \begin{itemize}
        \item $\mi{D}\subset \mc{I}$ an open region with compact closure $\ol{\mi{D}}$ such that $(\mi{D}, \mf{g}^{(0)})$ satisfies the GNCC.
        \item Identical holographic data on $\mi{D}$: 
             \begin{gather*}
                \mf{g}^{(0)}\vert_{\mi{D}} = \check{\mf{g}}^{(0)}\vert_{\mi{D}} \text{,}\qquad \mf{g}^{(n)}\vert_{\mi{D}}  = \check{\mf{g}}^{(n)}\vert_{\mi{D}} \text{,}\qquad \mf{f}^{0, ((n-4)_+)}\vert_{\mi{D}}  = \check{\mf{f}}^{0, ((n-4)_+)}\vert_{\mi{D}} \text{,}\qquad \mf{f}^{1, (0)}\vert_{\mi{D}}  = \check{\mf{f}}^{1, (0)}\vert_{\mi{D}} \text{.}
            \end{gather*}
    \end{itemize}
    Then $(g, F) \equiv (\check{g}, \check{F})$ in some neighbourhood of $\lbrace 0 \rbrace \times \mi{D}$ of the form $\Omega_\star$, with $f_\star$ small enough. 
\end{proposition}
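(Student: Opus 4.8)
The strategy is the standard Carleman-estimate argument for unique continuation, carried out on the wave--transport system for the renormalised difference fields. First I would reduce to the situation already prepared in the previous sections: since the two segments share the same $\mc{I}$ and the same FG-gauge, and since the holographic data agree on $\mi{D}$, Proposition \ref{prop_higher_order_vanishing} and Corollary \ref{corollary_asymptotics} give the high-order vanishing
\begin{equation*}
    \delta\ms{g}, \delta\ms{m}, \delta\ms{f}^0, \delta\ms{f}^1, \ms{Q}, \ms{B}, \Delta\ms{w}^\star, \Delta\ms{w}^1, \Delta\ms{w}^2, \Delta\ul{\ms{h}}^0, \Delta\ul{\ms{h}}^2 = \mc{O}(\rho^{M_0-4})
\end{equation*}
to a fixed (small) vertical regularity, on $(0,\rho_0]\times\mi{D}$. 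Choosing $M_0$ large enough, these decay rates will exceed all thresholds needed below. One also fixes a Riemannian metric $\mf{h}$ on $\mc{I}$, the GNCC function $\eta\in C^4(\overline{\mi{D}})$, sets $f=\rho\eta^{-1}$, and works on $\Omega_\star=\{f<f_\star\}$ for $f_\star$ to be chosen small.

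Next I would assemble the closed system to feed into the estimates. Proposition \ref{prop_wave_difference} gives vertical wave equations $(\ol{\Box}+m_{\ms{w}})\Delta\ms{w}=\ms{F}_{\ms{w}}$ for $\ms{w}\in\{\ms{w}^\star,\ms{w}^1,\ms{w}^2\}$ and $(\ol{\Box}+m_{\ms{h}})\Delta\ms{h}=\ms{G}_{\ms{h}}$ for $\ms{h}\in\{\ul{\ms{h}}^0,\ul{\ms{h}}^2\}$, with the crucial feature that the right-hand sides are bounded by $\rho^2$ times the wave fields (and $\rho^3$ times their derivatives) plus positive powers of $\rho$ times the transport quantities $\delta\ms{g},\delta\ms{m},\delta\ms{f}^0,\delta\ms{f}^1,\ms{Q},\ms{B}$ and their $\ms{D}$-derivatives; Proposition \ref{prop_transport_diff} gives the transport equations for the latter, whose right-hand sides are controlled by $\rho$ times themselves plus $\rho$ (or $\rho^{2-n}$, after the rescalings built into that proposition) times the renormalised wave fields. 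The plan is to multiply $\Delta\ms{A}$ by a smooth cutoff $\chi$ supported in $\{f<f_\star\}$ and equal to $1$ on $\{f<f_\star/2\}$, so that $\chi\Delta\ms{A}$ and $\ol{\nabla}(\chi\Delta\ms{A})$ vanish on $\{f=f_\star\}$ and Theorem \ref{thm_carleman} applies to each $\chi\Delta\ms{w}$, $\chi\Delta\ms{h}$. The commutator $[\ol{\Box},\chi]\Delta\ms{A}$ is supported in the transition region $\{f_\star/2<f<f_\star\}$, hence absorbed into the ``cutoff error'' term handled in the usual way. The boundary terms at $\rho=\rho_\star$ in \eqref{carleman_wave} and \eqref{carleman_transport} are handled by the improved asymptotics from Corollary \ref{corollary_asymptotics}: for $M_0$ large the quantities $\rho^{-\kappa}\Delta\ms{A}$, $\ol{\ms{D}}_\rho(\rho^{-\kappa}\Delta\ms{A})$, $\ms{D}(\rho^{-\kappa}\Delta\ms{A})$ all tend to $0$ as $\rho_\star\searrow0$, so these $\limsup$ contributions vanish; similarly the transport boundary terms $\rho^s|\rho^{-\kappa-1}\delta\ms{B}|^2_{\mf{h}}$ vanish.

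The core of the argument is then a bootstrap: add the wave Carleman inequality \eqref{carleman_wave} applied to each of $\chi\Delta\ms{w}^\star,\chi\Delta\ms{w}^1,\chi\Delta\ms{w}^2,\chi\Delta\ul{\ms{h}}^0,\chi\Delta\ul{\ms{h}}^2$ with a common large $\lambda$ (choosing $\kappa$ so that $2\kappa\geq n-1+\mc{C}_0$ and $\kappa^2-(n-2)\kappa+\sigma-(n-1)-\mc{C}_0\geq0$ for each relevant mass $\sigma=m_{\ms{w}},m_{\ms{h}}$), and add the transport Carleman inequality \eqref{carleman_transport} applied to each of $\chi\delta\ms{g},\chi\delta\ms{m},\chi\delta\ms{f}^0,\chi\delta\ms{f}^1,\chi\ms{Q},\chi\ms{B}$ (with exponents $s$ matching the $\rho$-weights in the transport equations). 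On the left-hand sides, the source terms $\ms{F}_{\ms{w}},\ms{G}_{\ms{h}},\mf{T}$ are substituted using Propositions \ref{prop_wave_difference}, \ref{prop_transport_diff}; every resulting term carries either a factor $\rho^2$ or $f^p$ relative to the corresponding right-hand side integrand (the wave source terms contribute $\rho^2|\Delta\ms{A}|^2$ and $\rho^4|\ms{D}\Delta\ms{A}|^2$, matching the structure of the RHS of \eqref{carleman_wave} after using $\rho\lesssim f_\star\eta$ on $\Omega_\star$; the transport-into-wave couplings carry positive powers of $\rho$). Hence for $f_\star$ small enough and $\lambda$ large enough all cross terms are absorbed into the right-hand sides, leaving
\begin{equation*}
    C\lambda\int_{\Omega_\star}\omega_\lambda(f)f^p\Big(\cdots\Big)d\mu_g \leq (\text{cutoff errors supported in }\{f_\star/2<f<f_\star\}),
\end{equation*}
where the left integrand controls $|\chi\Delta\ms{A}|^2_{\mf{h}}$ and $\rho^4|\ms{D}(\chi\Delta\ms{A})|^2_{\mf{h}}$ for all the wave and transport fields. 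Since on the support of $\nabla\chi$ one has $f\geq f_\star/2$, while $\omega_\lambda(f)$ is monotone and the full left integrand is $\geq\omega_\lambda(f_\star/2)\times(\text{const})$ on $\{f<f_\star/2\}$, dividing by $\lambda\omega_\lambda(f_\star/2)$ and letting $\lambda\to\infty$ forces all the fields to vanish identically on $\{f<f_\star/2\}$; in particular $\delta\ms{g}\equiv0$ and $\delta\ms{f}^0\equiv\delta\ms{f}^1\equiv0$ there, which is $(g,F)\equiv(\check g,\check F)$ in a neighbourhood $\Omega_\star$ of $\{0\}\times\mi{D}$ (after relabelling $f_\star/2$ as $f_\star$). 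I expect the main obstacle to be bookkeeping: verifying that \emph{every} term in the (long) remainder expressions $\ms{F}_{\ms{w}},\ms{G}_{\ms{h}}$ of Proposition \ref{prop_wave_difference} and in Proposition \ref{prop_transport_diff} genuinely carries enough $\rho$- or $f$-weight to be absorbed, so that no term with a bare $|\Delta\ms{A}|$ or $|\ms{D}\Delta\ms{A}|$ coefficient of order $\rho^0$ survives, and simultaneously ensuring the transport and wave exponents $(s,\kappa)$ can be chosen consistently across the whole coupled system; this is precisely the point where the carefully engineered powers of $\rho$ in the definitions of the vertical fields and of the renormalisation pay off.
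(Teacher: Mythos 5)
Your proposal follows essentially the same route as the paper's proof: a cutoff adapted to $\{f=f_\star\}$, the wave Carleman estimate of Theorem \ref{thm_carleman} applied to the renormalised fields $\Delta\ms{w}^\star,\Delta\ms{w}^1,\Delta\ms{w}^2,\Delta\ul{\ms{h}}^0,\Delta\ul{\ms{h}}^{2}$, the transport Carleman estimate of Proposition \ref{prop_transport_carleman} for $\delta\ms{g},\delta\ms{m},\ms{Q},\ms{B},\delta\ms{f}^0,\delta\ms{f}^1$, vanishing of the $\rho=\rho_\star$ boundary terms via Proposition \ref{prop_higher_order_vanishing} and Corollary \ref{corollary_asymptotics}, absorption of the coupled source terms for $f_\star$ small and $\lambda$ large, and the weight-monotonicity argument as $\lambda\to\infty$ on the commutator errors in the transition region. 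The only step left implicit is that, to absorb the $\ms{D}\delta\ms{g}$, $\ms{D}\delta\ms{m}$, $\ms{DQ}$, $\ms{DB}$, $\ms{D}\delta\ms{f}^i$ terms occurring in $\ms{F}_{\ms{w}}$ and $\ms{G}_{\ms{h}}$, the transport Carleman estimate must also be applied to the differentiated transport equations \eqref{transport_Ddg}--\eqref{transport_Ddf1} (the wave estimate only controls derivatives of the wave fields), exactly as the paper does with its second family of exponents $s$; with that addition your scheme coincides with the paper's.
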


\begin{proof}
    The proof is a classical argument, once one has obtained the appropriate Carleman estimates. We will show it here for completeness. 

    First, let us fix $\mf{h}$ a Riemannian metric on $\mi{D}\subset \mc{I}$, as well as the following constants: 
    \begin{gather*}
        0 < p < 1/2\text{,}\qquad f_\star \ll_{\ms{g}, \mi{D}} 1\text{,}\qquad 1\ll_{\ms{g}, \check{\ms{g}}, \mi{D}} \kappa < M_0-5\text{,}
    \end{gather*}
    given $M_0$ sufficiently large. 

    We will also need a cutoff $\chi = \bar{\chi}\circ f$ defined such that: 
    \begin{equation*}
        \bar{\chi}(z) := \begin{cases}
            1\text{,}\qquad s\leq f_i\\
            0\text{,}\qquad s\geq f_o
        \end{cases}\text{,}\qquad  0 < f_i < f_o < f_\star\text{.}
    \end{equation*}
    Let us also define the following regions of interests: 
    \begin{gather*}
        \Omega_i := \lbrace 0 < f < f_i\rbrace \text{,}\qquad \Omega_o := \lbrace f_i < f < f_o \rbrace\text{,}
    \end{gather*}
    as well as the following collections of vertical fields: 
    \begin{gather*}
        \mc{T} := \lbrace \delta\ms{g}, \delta\ms{m}, \ms{B}, \ms{Q}, \delta\ms{f}^0, \delta \ms{f}^1\rbrace\text{,}\qquad \mc{W} := \lbrace \Delta\ms{w}^\star, \Delta \ms{w}^1, \Delta\ms{w}^2, \Delta\ul{\ms{h}}^0, \Delta\ul{\ms{h}}^{2}\rbrace\text{,} 
    \end{gather*}
    associated to the transport and wave equations, respectively. 

    The idea is now to apply Carleman estimates \eqref{carleman_transport} and \eqref{carleman_wave} to fields cutoff in $\Omega_i \cup \Omega_o$, i.e. $\chi \cdot \ms{A}$, for $\ms{A}\in \mc{T} \cup \mc{W}$. This way, the assumptions of the Theorem \ref{thm_carleman} are satisfied, namely that $\ms{A} =\ol{\nabla}\ms{A} = 0$ on $\lbrace f = f_\star \rbrace$. First, the following integral on $\Omega_i$, where $\chi\equiv 1$, can be bounded by above, using Theorem \ref{thm_carleman}:
    \begin{align}
        &\label{estimate_wave1}\lambda\int_{\Omega_i} \omega_\lambda(f)f^p \sum\limits_{\ms{A}\in \mc{W}}\paren{\rho^{2p}\abs{\ms{A}}^2_{\mf{h}} + \rho^4\abs{\ms{DA}}^2_{\mf{h}}} d\mu_g\\
        &\notag\leq \lambda \int_{\Omega_\star} \omega_\lambda(f)f^p \sum\limits_{\ms{A}\in \mc{W}}\paren{\rho^{2p}\abs{\chi\ms{A}}^2_{\mf{h}} + \rho^4\abs{\ms{D(\chi A)}}^2_{\mf{h}}} d\mu_g\\
        &\notag\lesssim \lambda^3 \limsup\limits_{\rho_\star\searrow 0} \int_{\Omega_\star\cap \lbrace \rho = \rho_\star\rbrace}\sum\limits_{\ms{A}\in \mc{W}} \left[|\ol{\ms{D}}_\rho(\rho^{-\kappa}\chi\ms{A})|_{\mf{h}}^2 + |\ms{D}(\rho^{-\kappa}\chi\ms{A})|^2_{\mf{h}} + |\rho^{-\kappa- 1}\chi\ms{A}|^2_{\mf{h}}\right]d\mu_{\ms{g}} \\
        &\notag\hspace{30pt} + \int_{\Omega_i \cup \Omega_o} \omega_\lambda(f)\sum\limits_{\ms{A}\in\mc{W}}|\chi(\ol{\Box} + \sigma_{\ms{A}})\ms{A}|^2_{\mf{h}} d\mu_g\\
        &\notag\hspace{30pt} + \int_{\Omega_i \cup \Omega_o} \omega_\lambda(f) |[\ol{\Box}, \chi]\ms{A}|^2_{\mf{h}}d\mu_g =: I_1 + I_2 + I_3\text{,}
    \end{align}
    where $\sigma_{\ms{A}}$ is the mass for the wave equation of $\ms{A}$, which can be found in Proposition \ref{prop_wave_difference}.

    Note that $I_1=0$, since, from Proposition \ref{prop_higher_order_vanishing}, the vertical fields
    \begin{equation*}
        \ol{\ms{D}}_\rho(\rho^{-\kappa}\ms{A})\text{,}\qquad  \ms{D}(\rho^{-\kappa}\ms{A})\text{,}\qquad \rho^{-\kappa -1}\ms{A}\text{,}\qquad \ms{A}\in \mc{W}\text{,}
    \end{equation*}
    will have a power of $\rho$ strictly positive, for $M_0$ big enough. Note that $I_3$ can be decomposed into two integrals: one on $\Omega_i$ and the other on $\Omega_o$. The former identically vanishes since it contains at least one derivative of $\chi$, as a consequence: 
    \begin{equation*}
        I_3 = \int_{\Omega_o} \omega_\lambda(f) |[\ol{\Box}, \chi]\ms{A}|^2_{\mf{h}}d\mu_g\text{.}
    \end{equation*}
    
    Using Proposition \ref{prop_wave_difference}, one has for ${I}_2$:
    \begin{align}
        \label{I_2}I_2 \lesssim& \int_{\Omega_i} (\omega_\lambda(f)\rho^p) \left[\sum\limits_{\ms{A}\in \mc{W}}(\rho^{4-p}|\ms{A}|_{\mf{h}}^2 + \rho^{6-p}|\ms{DA}|_{\mf{h}}^2) + \sum\limits_{\ms{G}\in \mc{T}}\rho^{2-p}|\ms{G}|^2_{\mf{h}} + \rho^{4-p}|\ms{DG}|_{\mf{h}}^2\right]d\mu_g\\
        \notag&+\int_{\Omega_o} \omega_\lambda(f)\sum\limits_{\ms{A}\in\mc{W}}|\chi(\ol{\Box} + \sigma_{\ms{A}})\ms{A}|^2_{\mf{h}} d\mu_g\text{.}
    \end{align}
    We now need to control the terms from $\mc{T}$. This can be done using the transport Carleman estimates from Proposition \ref{prop_transport_carleman}. Note that the boundary terms vanish using the asymptotics from Proposition \ref{prop_higher_order_vanishing} and Corollary \ref{corollary_asymptotics}. Furthermore, the cutoff is not necessary here since the fields and their derivatives do not need to vanish on $\Omega_\star$. One eventually obtains, using Proposition \ref{prop_transport_carleman} with $s=2$ for $\delta\ms{m}$ and $\delta \ms{f}^1$, $s=2(n-2)$ for $\delta\ms{f}^0$ and $s=0$ for the other fields: 
    \begin{align}
        \lambda\int_{\Omega_i} \omega_\lambda(f) f^p \sum\limits_{\ms{G}\in \mc{T}} |\ms{G}|^2_{\mf{h}}d\mu_g \lesssim &\notag\int_{\Omega_i} \omega_\lambda(f)[{}\sum\limits_{\ms{G}\in \mc{T}\setminus\lbrace \delta\ms{m}, \delta\ms{f}^0, \delta\ms{f}^1\rbrace } \rho^2|\Lie_\rho \ms{G}|^2_{\mf{h}}+\rho^4|\Lie_\rho(\rho^{-1}\delta\ms{m})|^2_{\mf{h}}  \\
        &\notag\left.\hspace{30pt}+\rho^4|\Lie_\rho(\rho^{-1}\delta\ms{f}^1)|_{\mf{h}}^2 + \rho^{2n-2} |\Lie_\rho(\rho^{2-n}\delta\ms{f}^0)|^2_{\mf{h}}\right]d\mu_g \\
        \label{estimate_transport_1}\lesssim& \int_{\Omega_i} (\omega_\lambda(f)\rho^p)[\sum\limits_{\ms{A}\in \mc{W}}\rho^{2-p}|\ms{A}|^2_{\mf{h}} + \sum\limits_{\ms{G}\in \mc{T}}\rho^{2-p}|\ms{G}|^2_{\mf{h}}]d\mu_g\text{,}
    \end{align}
    where we used Proposition \ref{prop_transport_diff} in the last line. 
    Similarly, one can use the same Carleman estimates for the vertical derivatives $\ms{DG}$ (where we use $s=5$ for $\ms{D}\delta\ms{m}$, $\ms{D}\delta\ms{f}^1$, $s=2n-1$ for $\ms{D}\delta\ms{f}^0$ and $s=3$ for the other fields:: 
    \begin{align}
        \label{estimate_transport_2}\lambda\int_{\Omega_i} \omega_\lambda(f)f^p \sum\limits_{\ms{G}\in\mc{T}} \rho^3 |\ms{DG}|^2_{\mf{h}}d\mu_g \lesssim& \int_{\Omega_i}\omega_\lambda(f) [\sum\limits_{\ms{G}\in \mc{T}}\rho^5|\Lie_\rho\ms{DG}|_{\mf{h}}^2 + \rho^7 |\Lie_\rho(\rho^{-1}\ms{D}\delta\ms{m})|^2_{\mf{h}}\\
        &\notag\hspace{30pt}+\rho^7 |\Lie_\rho(\rho^{-1}\ms{D}\delta\ms{f}^1)|^2_{\mf{h}}+\rho^{2n+1}|\Lie_\rho(\rho^{2-n}\ms{D}\delta\ms{f}^0)|^2_{\mf{h}}\left.\right]d\mu_g \\
        \notag\lesssim& \int_{\Omega_i} (\omega_\lambda(f)\rho^p)\left[\sum\limits_{\ms{A}\in\mc{W}} \rho^{5-p}\abs{\ms{A}}_{\mf{h}}^2+\rho^{5-p}|\ms{DA}|^2_{\mf{h}}+ \sum\limits_{\ms{G}\in \mc{T}}\rho^{5-p}|\ms{G}|_{\mf{h}}^2\right.\\
        &\notag \hspace{30pt}\left. + \rho^{5-p}|\ms{DG}|^2_{\mf{h}}\right]d\mu_g\text{,}
    \end{align}
    where we used transport equations \eqref{transport_Ddg}--\eqref{transport_Ddf1}. We can now sum \eqref{estimate_transport_1} and \eqref{estimate_transport_2}, as well as \eqref{estimate_wave1}, \eqref{I_2}, yielding: 
    \begin{align*}
        &\lambda\int_{\Omega_i} \omega_\lambda(f)f^p \left[\sum\limits_{\ms{A}\in \mc{W}}\paren{\rho^{2p}\abs{\ms{A}}^2_{\mf{h}} + \rho^4\abs{\ms{DA}}^2_{\mf{h}} }+\sum\limits_{\ms{G}\in\mc{T}} \paren{|\ms{G}|_{\mf{h}}^2 + \rho^3 |\ms{DG}|_{\mf{h}}^2}\right] d\mu_g\\
        &\lesssim \int_{\Omega_i} (\omega_f(\lambda)\rho^p)\left[\sum\limits_{\ms{A}\in \mc{W}} \left(\rho^{2-p} |\ms{A}|^2_{\mf{h}} + \rho^{5-p}|\ms{DA}|^2_{\mf{h}}\right) + \sum\limits_{\ms{G}\in \mc{T}}\left(\rho^{2-p}|\ms{G}|_{\mf{h}}^2 + \rho^{4-p}|\ms{DG}|^2_{\mf{h}}\right)\right]d\mu_g \\
        &\hspace{40pt}+\int_{\Omega_o} \omega_f(\lambda) \sum\limits_{\ms{A}\in \mc{W}}\left[\chi(\ol{\Box}+\sigma_{\ms{A}})\ms{A} + \left[\ol{\Box}, \chi\right]\ms{A}\right]d\mu_g \text{.}
    \end{align*}
    Observe now that the powers of $\rho$ appearing in the first integral of the right-hand side are higher than those on the left-hand side. Choosing $\lambda$ big enough therefore allows use to insert the former into the latter by noticing $f \gtrsim \rho$ by definition, giving: 
    \begin{align*}
        &\lambda\int_{\Omega_i} \omega_\lambda(f)f^p \left[\sum\limits_{\ms{A}\in \mc{W}}\paren{\rho^{2p}\abs{\ms{A}}^2_{\mf{h}} + \rho^4\abs{\ms{DA}}^2_{\mf{h}} }+\sum\limits_{\ms{G}\in\mc{T}} \paren{|\ms{G}|_{\mf{h}}^2 + \rho^3 |\ms{DG}|_{\mf{h}}^2}\right] d\mu_g\\
        &\lesssim \int_{\Omega_o} \omega_f(\lambda) \sum\limits_{\ms{A}\in \mc{W}}\left[\chi(\ol{\Box}+\sigma_{\ms{A}})\ms{A} + \left[\ol{\Box}, \chi\right]\ms{A}\right]d\mu_g\text{.}
    \end{align*}
    In order to get rid of the weights $\omega_\lambda(f)$, one can observe that $\omega_\lambda(f)f^p \geq \omega_\lambda (f_i)f_i^p$ on $\Omega_i$ and $\omega_\lambda(f)f^p\leq \omega_\lambda(f_i)f_i^p $ on $\Omega_o$, for $\kappa$ big enough. Overall, one obtains: 
    \begin{align*}
        &\int_{\Omega_i} \left[\sum\limits_{\ms{A}\in \mc{W}}\paren{\rho^{2p}\abs{\ms{A}}^2_{\mf{h}} + \rho^4\abs{\ms{DA}}^2_{\mf{h}} }+\sum\limits_{\ms{G}\in\mc{T}} \paren{|\ms{G}|_{\mf{h}}^2 + \rho^3 |\ms{DG}|_{\mf{h}}^2}\right] d\mu_g\\
        &\hspace{40pt}\lesssim \frac{1}{\lambda}\int_{\Omega_o} \rho^{-p} \sum\limits_{\ms{A}\in \mc{W}}\left[\chi(\ol{\Box}+\sigma_{\ms{A}})\ms{A} + \left[\ol{\Box}, \chi\right]\ms{A}\right]d\mu_g\text{,}
    \end{align*}
    where we used, once more, $f^{-p}\lesssim \rho^{-p}$. Since the right-hand side is finite, from Proposition \ref{prop_higher_order_vanishing} and Corollary \ref{corollary_asymptotics}, one obtains, after taking the limit $\lambda\rightarrow \infty$: 
    \begin{gather*}
        \delta \ms{g} \equiv 0\text{,}\qquad \delta \ms{f}^0 \equiv 0\text{,}\qquad \delta\ms{f}^1\equiv 0\text{,}
    \end{gather*}
    in $\Omega_i$. 
    \end{proof}
    
\subsection{Gauge freedom and full result}\label{sec:full_result}

As mentioned earlier, the gauge \eqref{FG-gauge} still leaves room for residual freedom. Leaving the trivial diffeormorphisms along the vertical directions aside, one still has some freedom left when it comes to specifying the metric as in \eqref{FG-gauge}. As a consequence, the assumptions of the Proposition \ref{prop_unique_continuation1} are certainly too strong, since this residual gauge freedom may bring to different holographic data, although they represent the same information. This brings us to the following definition: 

\begin{definition}\label{def_FG_radius}
    Let $(\mc{M}, g)$ be a FG-aAdS segment and let $\Sigma\subset \mc{M}$ be a neighbourhood of the conformal boundary. Then, a positive, smooth function $\check{\rho}$ on $\Sigma$ will be referred to as a FG radius for $(\Sigma, g)$ if: 
    \begin{equation}
        \check{\rho}^{-2}g^{-1}(d\check{\rho}, d\check{\rho}) =1\text{,}\qquad \check{\rho}\vert_\mc{I}=0\text{.}
    \end{equation}
\end{definition}

\begin{remark}
    This definition is of course inspired by the gauge \eqref{FG-gauge}. Obviously, the usual function $\rho\in C^\infty(\mc{M})$ is a FG radius for $(\mc{M}, g)$. 
\end{remark}

One can see the FG radius from Definition  \ref{def_FG_radius} as a change of coordinates preserving the Fefferman-Graham condition \eqref{FG-gauge}. Let us make this more precise\footnote{For a more thorough treatment, see \cite{Holzegel22}. The exposition in this work is intended primarily for illustrative purposes.}. 

Let $\check{\rho}$ be a FG radius as in Definition \ref{def_FG_radius} on $\Sigma\subset \mc{M}$, and let $p\in \mc{I}$. Considering $\gamma_p$ to be the integral curve of the $\check{\rho}g$--gradient of $\check{\rho}$ starting from $p\in \mc{I}$ with parameter $[0,\check{\rho}_0]$, satisfying: 
    \begin{equation*}
        \lim\limits_{\sigma\searrow 0}\gamma_p(\sigma) = (0,p)\text{,}
    \end{equation*}
one can identify the pair $(\sigma, p)$ with $\gamma_p(\sigma)$. As a consequence, given a FG radius $\check{\rho}$ and $\mc{J}\subset \mc{I}$, one can identify a region of the form $(0, \check{\rho}_0]\times \mc{J}$,  $\check{\rho}_0>0$, with a submanifold\footnote{Given $\mc{J}\subset \mc{I}$, the value of $\check{\rho}_0$ will depend on the presence of focal points along the geodesic flow.}:
\begin{equation*}
    \widetilde{\mc{M}}=\bigcup\limits_{p\in\mc{J}, \, \sigma\in (0, \check{\rho}_0)} \gamma_p (\sigma)\cap \Sigma \subset \mc{M}\text{,}
\end{equation*}
through the mapping that associates, to each $(\sigma,p)\in (0,\check{\rho}_0]\times \mc{J}$, the point $\gamma_p(\sigma)\in \widetilde{\mc{M}}$. 
Note that since the gradient is $g$--normal to level sets of $\check{\rho}$, the metric $g$ on $\widetilde{\mc{M}}$ will take the form \eqref{FG-gauge}:
\begin{equation}\label{new_metric}
    g = \check{\rho}^{-2}\paren{d\check{\rho}^2 + {\ms{g}}^\star}\text{,}
\end{equation}
with ${\ms{g}}^\star$ a $\check{\rho}$--vertical Lorentzian metric. In other words, given $\check{\rho}$, one immediately generates a $\check{\rho}$--FG-aAdS region $(\tilde{\mc{M}, g})$, characterised by \eqref{new_metric}, isometric to a subregion of the initial FG-aAdS segment $\mc{M}$. 

The next proposition shows, given such a change of conformal frame, how the holographic data changes accordingly:
\begin{proposition}\label{prop_gauge_equiv}
    Let $(\mc{M}, g, F)$ be a Maxwell-FG-aAdS segment, $n>2$, and let $\mf{a}\in C^\infty(\mc{I})$. Then, there is a neighbourhood $\Sigma\subset \mc{M}$ of the conformal boundary, a unique FG radius $\check{\rho}\in C^\infty(\Sigma)$ such that: 
    \begin{equation}
        \frac{\check{\rho}}{\rho}\rightarrow^0 e^{\mf{a}}\text{.}\label{limit_check_rho}
    \end{equation}
    Furthermore, let $(\mf{g}^{(0)}, \mf{g}^{(n)}, \mf{f}^{0, ((n-4)_+)}, \mf{f}^{1, (0)})$ and $(\check{\mf{g}}^{(0)}, \check{\mf{g}}^{(n)}, \check{\mf{f}}^{0, ((n-4)_+)}, \check{\mf{f}}^{1, (0)})$ denote the holographic data on $(\mc{M}, g)$ with respect to $\rho$ and $\check{\rho}$, respectively. Then, there exist universal functions $\mf{G}_n$ and $\mf{F}_n$, depending only on $n$, such that the two holographic data are related by: 
    \begin{gather}
        \label{check_g_n}\check{\mf{g}}^{(0)} =e^{2\mf{a}}\mf{g}^{(0)}\text{,}\qquad \check{\mf{g}}^{(n)} = \mf{G}_n(\mf{g}^{(n)},\partial^{\leq n-2} \mf{g}^{(0)},\mf{D}^{\leq  n-4}\mf{f}^{1, (0)}, \mf{D}^{\leq n}\mf{a})\text{,}\\
        \label{check_f}\begin{cases}
            \check{\mf{f}}^{1, (0)} = \mf{f}^{1, (0)}\text{,}\qquad \check{\mf{f}}^{0, ((n-4)_+)} = e^{-\mf{a}}\mf{f}^{0, ((n-4)_+)}\text{,}\qquad &n =3\text{,}\\
                \check{\mf{f}}^{1, (0)} =\mf{f}^{1, (0)}\text{,}\qquad \check{\mf{f}}^{0, ((n-4)_+)} = \mf{F}_n\paren{\mf{f}^{0, ((n-4)_+)}, \partial^{\leq n-3}\mf{g}^{(0)}, \mf{D}^{\leq n-3}\mf{f}^{1, (0)},\mf{D}^{\leq n-3} \mf{a}}\text{,}\qquad &n\geq 4\text{.}
        \end{cases}
    \end{gather}
\end{proposition}
    \begin{proof}
        The proof is based on an ansatz for $\check{\rho}$ of the form $\check{\rho} = e^{\ms{a}}\rho$, with $\ms{a} \in C^\infty(\mc{M})$,
         yielding the following one-form: 
         \begin{equation*}
             d\check{\rho} = e^{\ms{a}}d\rho + e^{\ms{a}}\rho\ms{Da}\text{.}
         \end{equation*}
         The Definition \ref{def_FG_radius} implies that $\ms{a}$ satisfies the following non-linear transport equation: 
         \begin{equation}
             0 = (2 + \Lie_\rho\ms{a})\Lie_\rho\ms{a} + \rho\ms{g}^{-1}(\ms{Da}, \ms{Da})\text{,}
         \end{equation}
         where we impose $\lim\limits_{\sigma\searrow 0}\ms{a}(\sigma, \cdot)=\mf{a}$ as initial condition. Such an equation can be solved using the method of characteristics in a neighbourhood $\Sigma$ of the conformal boundary.  Obtaining $\ms{a}$ is equivalent to obtaining $\check{\rho}$, verifying the FG radius definition. 

         The relation between the holographic data, for the metrics, is well known in the physics literature and can be simply obtained in the case of $\mf{g}^{(0)}$:
         \begin{align*}
             \check{\ms{g}} = e^{2\ms{a}}\ms{g} + o(1) \Rightarrow \check{\mf{g}}^{(0)} = e^{2\mf{a}}\mf{g}^{(0)}\text{.}
         \end{align*}
         The computations of the coefficient $\check{\mf{g}}^{(n)}$ are far more involved, mixing expansions of $\ms{g}$ and $\ms{a}, e^{\ms{a}}$. The only observation one can make is that it will depend on the coefficients $\lbrace \mf{g}^{(i)}\rbrace_{0\leq i \leq n}$, as well as on derivatives of $\ms{a}$ up to order $n$. The second identity in \eqref{check_g_n} follows since each coefficient $(\mf{g}^{(k)})$, $4\leq k<n$, depend on $\mf{f}^{1, (0)}$ to order $k-4$, 

         In order to see the boundary limits directly, we will look at $\bar{\ms{f}}^0$ and $\bar{\ms{f}}^1$, as defined in Definition \ref{def_vert_f_w}. The conformal change implies: 
         \begin{gather*}
            \check{\bar{\ms{f}}}^0_a =\begin{cases}
                e^{-\ms{a}}\bar{\ms{f}}^0_a\text{,}\qquad &n=3\text{,}\\
                e^{-2\ms{a}}\bar{\ms{f}}^0_a\text{,}\qquad &n\geq 4\text{,}
            \end{cases} \qquad 
                \check{\bar{\ms{f}}}^1_{ab}= \bar{\ms{f}}^1_{ab} -  2\rho\cdot (\ms{D}_{[a}\ms{a})\bar{\ms{f}}^0_{b]}\text{.}
         \end{gather*}
         For $n=3$, one immediately obtains, from Corollary \ref{prop_expansion_vertical_fields}: 
         \begin{equation}
             \check{\mf{f}}^{0, ((n-4)_+)} = e^{-\mf{a}}\mf{f}^{0, ((n-4)_+)}\text{,}\qquad \check{\mf{f}}^{1, (0)} = \mf{f}^{1}\text{.}
         \end{equation}
         One can observe that $\rho\ms{Da}\wedge\ol{\ms{f}}^0$ vanishes at the boundary regardless of the dimension, while $e^{-\ms{a}}\ol{\ms{f}}^0$ is, as for the metric, a universal function of the derivatives of $\ms{a}$, up to order $n-3$, as well as a function of the coefficients $\mf{f}^{0, (k)}$, $k\leq (n-4)_+$. As a consequence, it will also depend on derivatives of $\mf{f}^{1, (0)}$, up to order $n-3$, from Theorem \ref{theorem_main_fg}.  
     \end{proof}

    Of course, one can compute similar transformation rules for the coefficients $\mf{g}^{(2)}, \mf{g}^{(3)}, \dots$ or $\mf{f}^{1, (2)}, \mf{f}^{1, (3)}, \dots $ and so on. We will not show them explicitly in this work. For more details, see \cite{deHaro:2000vlm}. 

    \begin{remark}
        It is important to note that the GNCC condition from Definition \ref{def_GNCC} is a gauge-invariant definition. Namely, for some domain $\mi{D}\subset \mc{I}$, if $(\mi{D}, \mf{g}^{(0)})$ satisfies the GNCC, then $(\mi{D}, \check{\mf{g}}^{(0)})$ automatically satisfies it too. 
    \end{remark}

    We have now characterised gauge-equivalent initial data. The following definition makes this more precise: 

    \begin{definition}\label{def_gauge_equiv}
        Let $\mc{I}$ be an n-dimensional manifold and $(\mf{g}^{(0)}, \check{\mf{g}}^{(0)})$ be two Lorentzian metrics. Furthermore, let $(\mf{g}^{(n)}, \check{\mf{g}}^{(n)})$ be two symmetric $(0, 2)$ tensors, $(\mf{f}^{0, ((n-4)_+)}, \check{\mf{f}}^{0, ((n-4)_+)})$ be one-forms and $(\mf{f}^{1, (0)}, \check{\mf{f}}^{1, (0)})$ be two-forms on $\mc{I}$. If these fields satisfy the relations \eqref{check_g_n} and \eqref{check_f} on $\mi{D}\subset \mc{I}$ for some $\mf{a}\in C^\infty(\mi{D})$, as well as the constraints \eqref{constraints_f_g}, we say the holographic data are \emph{gauge-equivalent} on $\mi{D}$. 
    \end{definition}

    This leads us to the main theorem, which can be seen in many ways as a covariant generalisation of Proposition \ref{prop_unique_continuation1}: 

    \begin{theorem}[Main theorem]\label{thm_UC_main}
        Let $n>2$, $(\mc{M}, g, F)$ and $(\mc{M}, \check{g}, \check{F})$ be two Maxwell-FG-aAdS segments with holographic data $(\mc{I}, \mf{g}^{(0)}, \mf{g}^{(n)}, {\mf{f}}^{0, ((n-4)_+)}, \mf{f}^{1, (0)})$ and \allowdisplaybreaks{$(\mc{I}, \check{\mf{g}}^{(0)}, \check{\mf{g}}^{(n)}, \check{{\mf{f}}}^{0, ((n-4)_+)}, \check{\mf{f}}^{1, (0)})$}, respectively. Let $\mi{D}\subset \mc{I}$ be an open set on $\mc{I}$ with compact closure. If
        \begin{itemize}
            \item $(\mc{M}, g, F)$ and $(\mc{M}, \check{g}, \check{F})$ are regular to order $M_0$, big enough, depending on $\ms{g}, \check{\ms{g}}, \mi{D}$, 
            \item $(\mi{D}, \mf{g}^{(0)})$ satisfies the GNCC,
            \item the holographic data are gauge-equivalent on $\mi{D}$, as described in Definition \ref{def_gauge_equiv}.
        \end{itemize}
        Then, there exists a neighbourhood of $\lbrace 0 \rbrace \times \mi{D}$ in $\mc{M}$ such that $g$ and $\check{g}$ are isometric and ${F}$, $\check{{F}}$ are equal up to a diffeomorphism preserving the conformal boundary. More precisely, there exists $f_\star>0$ small enough, $\phi \, : \, \Omega_\star \rightarrow \mc{M}$ such that: 
        \begin{gather*}
            \phi_\star(\check{g}, \check{F}) = (g, F)\text{,}\qquad \lim\limits_{\sigma\searrow 0}\phi(\sigma, p) = (0, p)\text{,}\qquad p\in\mc{I}\text{.}
        \end{gather*}
    \end{theorem}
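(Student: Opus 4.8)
## Proof strategy for Theorem \ref{thm_UC_main}

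The plan is to reduce Theorem \ref{thm_UC_main} to Proposition \ref{prop_unique_continuation1} by absorbing the residual gauge freedom into a change of Fefferman--Graham radius. Since the holographic data $\mi{A}$ of $(\mc{M}, g, F)$ and $\check{\mi{A}}$ of $(\mc{M}, \check{g}, \check{F})$ are gauge-equivalent on $\mi{D}$ in the sense of Definition \ref{def_gauge_equiv}, there is $\mf{a}\in C^\infty(\mi{D})$ realising the relations \eqref{check_g_n}--\eqref{check_f}; after extending $\mf{a}$ smoothly to an open set $\mi{D}'\supset\overline{\mi{D}}$ (the final conclusion, being local near $\{0\}\times\mi{D}$, will not depend on the extension), I would apply Proposition \ref{prop_gauge_equiv} to $(\mc{M}, g, F)$ and this $\mf{a}$. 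This yields a neighbourhood $\Sigma\subset\mc{M}$ of $\mc{I}$, a FG radius $\hat{\rho}\in C^\infty(\Sigma)$ with $\hat{\rho}/\rho\rightarrow^0 e^{\mf{a}}$, and the associated geodesic-flow identification $\Psi\colon (0,\hat{\rho}_0]\times\mc{J}\to\widetilde{\mc{M}}\subset\mc{M}$ of a $\hat{\rho}$-FG-aAdS region $(\widetilde{\mc{M}}, g)$, with $\mc{J}\supset\mi{D}$, such that $\Psi^\star g = \hat{\rho}^{-2}(d\hat{\rho}^2 + \hat{\ms{g}})$ is again in the form \eqref{FG-gauge}.

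By Proposition \ref{prop_gauge_equiv}, the holographic data $\hat{\mi{A}}$ of $(\widetilde{\mc{M}}, g, F)$ with respect to $\hat{\rho}$ is exactly the $\mf{a}$-transform \eqref{check_g_n}--\eqref{check_f} of $\mi{A}$; comparing with Definition \ref{def_gauge_equiv} gives $\hat{\mi{A}} = \check{\mi{A}}$ on $\mi{D}$. Relabelling $\hat{\rho}$ as the radial coordinate on $(0,\hat{\rho}_0]\times\mc{J}$ and restricting both $(0,\hat{\rho}_0]\times\mc{J}$ and $\mc{M}|_{\mc{J}}$ to a common domain $(0,\rho_0']\times\mi{D}'$ with $\mi{D}\subset\mi{D}'\subset\mc{J}$, I obtain two FG-aAdS segments over $\mi{D}'$ sharing the same radial coordinate: $(\Psi^\star g, \Psi^\star F)$ in the $\hat{\rho}$-gauge and $(\check{g}, \check{F})$ in the $\rho$-gauge, with identical holographic data on $\mi{D}$. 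Here one must verify that the re-gauged segment $(\widetilde{\mc{M}}, g, F)$ is still regular to order $M_0$ up to a bounded loss (absorbed into the hypothesis ``$M_0$ big enough, depending on $\ms{g},\check{\ms{g}},\mi{D}$''); this follows from the regularity of the solution $\ms{a}$ of the characteristic transport equation in the proof of Proposition \ref{prop_gauge_equiv} together with the chain rule. One also invokes the gauge-invariance of the GNCC (the remark following Proposition \ref{prop_gauge_equiv}): since $(\mi{D},\mf{g}^{(0)})$ satisfies it, so does $(\mi{D}, e^{2\mf{a}}\mf{g}^{(0)})$, the common boundary metric of the two re-gauged segments.

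Proposition \ref{prop_unique_continuation1} then applies verbatim and yields, for $f_\star>0$ small enough, a domain $\Omega_\star$ that is a neighbourhood of $\{0\}\times\mi{D}$ and on which $\Psi^\star g \equiv \check{g}$ and $\Psi^\star F \equiv \check{F}$. Setting $\phi := \Psi|_{\Omega_\star}$ we then have $\phi^\star g = \check{g}$ and $\phi^\star F = \check{F}$, i.e. $\phi_\star(\check{g}, \check{F}) = (g, F)$; moreover $\phi(\sigma,p) = \gamma_p(\sigma)\to(0,p)$ as $\sigma\searrow 0$ by construction of the geodesic-flow identification, so $\phi$ preserves the conformal boundary. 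This is precisely the asserted conclusion.

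The main obstacle I anticipate is bookkeeping of domains and regularity rather than new analytic content: the FG radius $\hat{\rho}$ is only guaranteed on a neighbourhood $\Sigma$ of $\mc{I}$, and the size of the slab $(0,\hat{\rho}_0]\times\mc{J}$ on which $\Psi$ is a diffeomorphism is limited by focal points of the $\hat{\rho}g$-gradient flow; one must argue that after shrinking there remains a genuine FG-aAdS segment over an open $\mi{D}'\supset\mi{D}$ with enough regularity to feed into Proposition \ref{prop_unique_continuation1}, and that the thresholds for $f_\star$ and $M_0$ can be chosen compatibly with those of that proposition. A secondary point requiring care is that Proposition \ref{prop_unique_continuation1} was phrased for two segments over the \emph{same} $\mc{M}$ with the \emph{same} $\rho$; identifying $\widetilde{\mc{M}}$ with $(0,\hat{\rho}_0]\times\mc{J}$ through $\Psi$ and relabelling $\hat{\rho}\mapsto\rho$ restores that situation, and one should simply note that regularity, the GNCC, and the matching of data are all invariant under this relabelling.
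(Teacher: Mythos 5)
Your proposal is correct and follows essentially the same route as the paper: use Proposition \ref{prop_gauge_equiv} to construct the FG radius realising $e^{\mf{a}}$, re-gauge one solution via the geodesic-flow identification so that the two segments share identical holographic data on $\mi{D}$ (with GNCC preserved by gauge invariance), then apply Proposition \ref{prop_unique_continuation1} and read off the boundary-preserving diffeomorphism from the identification map. Your extra remarks on extending $\mf{a}$, regularity loss, and domain/focal-point bookkeeping only spell out details the paper leaves implicit.
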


    \begin{proof}
        The proof is identical to the one in Theorem 6.7 in \cite{Holzegel22}, we will show it here for completeness. 

        Since the holographic data are gauge-equivalent, there exists a smooth function $\mf{a}$ on $\mi{D}$ such that the relations \eqref{check_g_n} and \eqref{check_f} hold. From Proposition \ref{prop_gauge_equiv}, there exists $\check{\rho}$, a FG radius, satisfying \eqref{limit_check_rho} on $\mc{D}$. Using this $\check{\rho}$, we can generate a FG-aAdS segment $(\mc{N}, g)$, with $g$ of the form \eqref{new_metric}, with $\check{\rho}$ as a FG-radius and with holographic data \allowdisplaybreaks{$(\mc{I}, \check{\mf{g}}^{(0)}, \check{\mf{g}}^{(n)}, \check{{\mf{f}}}^{0, ((n-4)_+)}, \check{\mf{f}}^{1, (0)})$}. This FG-aAdS segment is isometric to a neighbourhood of $\lbrace 0 \rbrace \times \mi{D}$ in $(\mc{M},{g})$. Mapping the integral curves of the gradient of $\check{\rho}$ in $\mc{N}$ to the integral curves of the gradient of $\check{\rho}$ in $\mc{M}$, one can identify the FG gauges for $(\mc{M}, \check{g})$  and $(\mc{N}, g)$.
        
        As a consequence, one obtains identical data and thus the assumptions of Proposition \ref{prop_unique_continuation1}. Applying now the Proposition \ref{prop_unique_continuation1}, one simply needs to recover the metric ${g}$ and the field ${F}$ by writing them in the gauge \eqref{FG-gauge} with respect to ${\rho}$, yielding the isometry $\phi$ as well. 
    \end{proof}

\newpage
\printbibliography
\clearpage

\appendix

\section{Proofs of Section \ref{sec:intro}}

\subsection{Proof of Proposition \ref{sec:aads_prop_covariant_D}}\label{app:D_bar}

We will here simply prove the metric compatibility. The other properties hold trivially from the expression \eqref{vertical_derivative}. 

    One has immediately, from the definition of $\ms{D}$:
    \begin{equation*}
        \ol{\ms{D}}_a \ms{g} = 0\text{,}\qquad \ol{\ms{D}}_a\ms{g}^{-1}=0\text{.}
    \end{equation*}
    On the other hand, the $\rho-$derivative gives: 
    \begin{align*}
        \ol{\ms{D}}_\rho \ms{g}_{ab} &= \Lie_\rho \ms{g}_{ab} - \frac{1}{2}\ms{g}^{de}\Lie_\rho \ms{g}_{ad}\ms{g}_{eb}- \frac{1}{2}\ms{g}^{de}\Lie_\rho \ms{g}_{bd}\ms{g}_{ea}\\
        &=0\text{.}
    \end{align*}
    Similarly, since $\Lie_\rho \paren{\ms{g}^{-1}}^{ab}=-\paren{\ms{g}^{-1}}^{ac}\paren{\ms{g}^{-1}}^{bd}\Lie_\rho \ms{g}_{cd}$:
    \begin{align*}
        \ol{\ms{D}}_\rho (\ms{g}^{-1})^{ab} &= \Lie_\rho (\ms{g}^{-1})^{ab} + \frac{1}{2}(\ms{g}^{-1})^{ad}\Lie_\rho \ms{g}_{de}(\ms{g}^{-1})^{eb} + \frac{1}{2}(\ms{g}^{-1})^{bd}\Lie_\rho \ms{g}_{de}(\ms{g}^{-1})^{ea} \\
        &=0\text{.}
    \end{align*}

\subsection{Proof of Proposition \ref{prop_christoffel_FG}}\label{app:christoffel}

The expressions of the Christoffels symbols $\Gamma$ follow directly from the Fefferman-Graham gauge \eqref{FG-gauge}, while $\ms{\Gamma}$ follow from the definition of $\overline{\ms{D}}$, as well as its compatibility with $\ms{g}$. 

Let $\lbrace \partial_\alpha\rbrace_{\alpha = 0, \dots , n}$ denote a basis of $T\mc{M}$ and $\lbrace \partial_a\rbrace_{\alpha = 0, \dots n-1}$ denote a basis on the vertical bundle $\ms{V}\mc{M}$. Similarly, we will denote by $dx^\alpha$ and $dx^a$ their associated co-basis. 

The expression for $\overline{\nabla}\mathbf{A}$ can be computed easily using standard methods. The definition of $\ol{\nabla}$ immediately implies: 
\begin{align*}
    &\overline{\nabla}_\gamma \paren{\mathbf{A}^{\bar{\alpha}}{}_{\bar{\beta}}{}^{\bar{a}}{}_{\bar{b}}\partial_{\bar{\alpha}}\otimes dx^{\bar{\beta}}\otimes \partial_{\bar{a}}\otimes dx^{\bar{b}}}\\
    =&\,\partial_\gamma \mathbf{\mathbf{A}}^{\bar{\alpha}}{}_{\bar{\beta}}{}^{\bar{a}}{}_{\bar{b}}\partial_{\bar{\alpha}}\otimes dx^{\bar{\beta}}\otimes \partial_{\bar{a}}\otimes dx^{\bar{b}} \\
    &+ \sum\limits_{i=1}^\kappa \mathbf{A}^{\bar{\alpha}}{}_{\bar{\beta}}{}^{\bar{a}}{}_{\bar{b}}\Gamma^{\mu}_{\gamma\alpha_i}\partial_{\bar{\alpha}_i[\mu]}\otimes dx^{\bar{\beta}}\otimes \partial_{\bar{a}}\otimes dx^{\bar{b}} -\sum\limits_{j=1}^{\lambda} \mathbf{A}^{\bar{\alpha}}{}_{\bar{\beta}}{}^{\bar{a}}{}_{\bar{b}}\Gamma_{\gamma \mu}^{\beta_j}\partial_{\bar{\alpha}}\otimes dx^{\bar{\beta}_j[\mu]}\otimes \partial_{\bar{a}}\otimes dx^{\bar{b}} \\
    &+\sum\limits_{i=1}^k \mathbf{A}^{\bar{\alpha}}{}_{\bar{\beta}}{}^{\bar{a}}{}_{\bar{b}}\ms{\Gamma}^d_{\gamma a_i}\partial_{\bar{\alpha}}\otimes dx^{\bar{\beta}}\otimes \partial_{\bar{a}_i[d]}\otimes dx^{\bar{b}} - \sum\limits_{j=1}^l \mathbf{A}^{\bar{\alpha}}{}_{\bar{\beta}}{}^{\bar{a}}{}_{\bar{b}}\ms{\Gamma}^d_{\gamma b_j}\partial_{\bar{\alpha}}\otimes dx^{\bar{\beta}}\otimes \partial_{\bar{a}}\otimes dx^{\bar{b}_j[d]}\\
    =&\left(\partial_\gamma \mathbf{\mathbf{A}}^{\bar{\alpha}}{}_{\bar{\beta}}{}^{\bar{a}}{}_{\bar{b}}
    +\sum\limits_{i=1}^\kappa \mathbf{A}^{\bar{\alpha}_i[\mu]}{}_{\bar{\beta}}{}^{\bar{a}}{}_{\bar{b}}\Gamma^{\alpha_i}_{\gamma\mu}-\sum\limits_{j=1}^{\lambda} \mathbf{A}^{\bar{\alpha}}{}_{\bar{\beta}_j[\mu]}{}^{\bar{a}}{}_{\bar{b}}\Gamma_{\gamma b_j}^{\mu}+\sum\limits_{i=1}^k \mathbf{A}^{\bar{\alpha}}{}_{\bar{\beta}}{}^{\bar{a}_i[d]}{}_{\bar{b}}\ms{\Gamma}^{a_i}_{\gamma d}\right.\\
    &\left.\hspace{100pt} - \sum\limits_{j=1}^l \mathbf{A}^{\bar{\alpha}}{}_{\bar{\beta}}{}^{\bar{a}}{}_{\bar{b}_i[d]}\ms{\Gamma}^{b_i}_{\gamma b_j}\right)\partial_{\bar{\alpha}}\otimes dx^{\bar{\beta}}\otimes \partial_{\bar{a}}\otimes dx^{\bar{b}}\text{,}
\end{align*}
proving the proposition. 

\subsection{Proof of Proposition \ref{sec:aads_commutation_Lie_D}}\label{app:aads_commutation_Lie_D}

The proof of \eqref{commutation_L_D} is a simple consequence of the definition of $\ms{D}$. Typically, given a vertical tensor $\ms{A}$ of rank $(p, q)$: 
    \begin{align*}
        [\Lie_\rho, \ms{D}_a] \ms{A}^{\bar{b}}{}_{\bar{c}} =  \sum\limits_{i=1}^p\Lie_\rho\ms{\Gamma}^{b_i}_{ad}\ms{A}^{\hat{b}_i[d]}{}_{\bar{c}} - \sum\limits_{i=1}^q\Lie_\rho\ms{\Gamma}^{d}_{ac_i}\ms{A}^{\bar{b}}{}_{\hat{c}_i[d]}\text{,}
    \end{align*}
    and since: 
    \begin{equation*}
        \Lie_\rho \ms{\Gamma} = \mc{S}\paren{\ms{g}; \ms{Dm}}\text{,}
    \end{equation*}
    the identity \eqref{commutation_L_D} follow. Equation \eqref{commutation_D_D} follows from Proposition \ref{sec:aads_prop_covariant_D}: 
    \begin{equation*}
        \ol{\ms{D}}_\rho \ms{A} = \Lie_\rho \ms{A} + \mc{S}\paren{\ms{g};\ms{m}, \ms{A}}\text{.}
    \end{equation*}
    The identity \eqref{wave_power} follows from Definition \ref{def_mixed_connection}: 
    \begin{align*}
        \ol{\Box}(\rho^p\ms{A}) = \rho^p\ol{\Box}\ms{A} + 2p\rho^{p+1}\ol{\ms{D}}_\rho\ms{A} + p(p-1)\rho^p \ms{A} + p\rho^{p-1}\Box\rho \cdot\ms{A}\text{,}
    \end{align*}
    where: 
    \begin{align*}
        \Box\rho &= -\rho^2 \ms{g}^{ab}\Gamma_{ab}^\rho - \rho^2\Gamma^{\rho}_{\rho\rho}\\
        &=-(n-1)\rho + \frac{1}{2}\rho^2\ms{tr}_{\ms{g}}\ms{m}\text{,}
    \end{align*}
    giving the desired expression.

    The expression of the covariant wave operator \eqref{wave_op} is obtained by expanding \eqref{vertical_derivative}: 
    \begin{align*}
        \ol{\Box} \ms{A} = g^{\alpha\beta}\ol{\nabla}_{\alpha\beta} \ms{A} = & \rho^2 \ol{\nabla}^2_{\rho\rho} \ms{A} + \rho^2 \ms{g}^{ab}\ol{\nabla}_{ab}\ms{A}\text{.}
    \end{align*}
    Assuming $\ms{A}$ to be a vertical tensor of order $(k,l)$, one gets: \begin{align*}
        \ol{\nabla}_{\rho\rho}^2\ms{A}^{\bar{b}}{}_{\bar{c}} = \Lie_\rho\ol{\ms{D}}_\rho\ms{A}^{\bar{b}}{}_{\bar{c}} - \Gamma^\rho_{\rho\rho}\ol{\ms{D}}_\rho\ms{A}^{\bar{b}}{}_{\bar{c}} + \sum\limits_{i=1}^{k}\ms{\Gamma}^{b_i}_{a\rho}\ms{A}^{\hat{b}_i[a]}{}_{\bar{c}} - \sum\limits_{i=1}^l \ms{\Gamma}^a_{c_i\rho}\ms{A}^{\bar{b}}{}_{\hat{c}_i[a]}\text{,}
    \end{align*}
    where: 
    \begin{align*}
        &\Lie_\rho\overline{\ms{D}}_\rho\ms{A} = \Lie_\rho\paren{\Lie_\rho \ms{A} + \mc{S}\paren{\ms{g}; \ms{m}, \ms{A}}} \\
        &\hspace{34pt}=\Lie_\rho^2 \ms{A} + \mc{S}\paren{\ms{g}; \ms{m}^2, \ms{A}} + \mc{S}\paren{\ms{g}; \Lie_\rho\ms{m}, \ms{A}} + \mc{S}\paren{\ms{g}; \ms{m}, \Lie_\rho\ms{A}}\text{,}\\
        &-\Gamma^{\rho}_{\rho\rho}\ol{\ms{D}}_\rho\ms{A} =  \rho^{-1}\Lie_\rho\ms{A} + \rho^{-1}\mc{S}\paren{\ms{g}; \ms{m},\ms{A}}\text{,}\\
        &\sum\limits_{i=1}^{k}\ms{\Gamma}^{b_i}_{a\rho}\ms{A}^{\hat{b}_i[a]}{}_{\bar{c}} - \sum\limits_{i=1}^l \ms{\Gamma}^a_{c_i\rho}\ms{A}^{\bar{b}}{}_{\hat{c}_i[a]} = \mc{S}\paren{\ms{g}; \ms{m}, \ms{A}}\text{.}
    \end{align*}
    Finally, the vertical part of the d'Alembertian yields:
    \begin{align*}
        \ms{g}^{ad}\ol{\nabla}_{ad}\ms{A}^{\bar{b}}{}_{\bar{c}} &= {\Box}_{\ms{g}}\ms{A}^{\bar{b}}{}_{\bar{c}} - \ms{g}^{ad}\Gamma_{ad}^\rho \ol{\ms{D}}_\rho\ms{A}^{\bar{b}}{}_{\bar{c}}\\
        &={\Box}_{\ms{g}}\ms{A}^{\bar{b}}{}_{\bar{c}} - n\rho^{-1}\Lie_\rho \ms{A}^{\bar{b}}{}_{\bar{c}} + \rho^{-1}\mc{S}\paren{\ms{g}; \ms{m}, \ms{A}} + \mc{S}\paren{\ms{g}; \ms{m}, \Lie_\rho\ms{A}} + \mc{S}\paren{\ms{g}; \ms{m}^2, \ms{A}}\text{,}
    \end{align*}
    giving the desired equation \eqref{wave_op}. 

\section{Proofs of Section \ref{sec:FG_expansion}}

\subsection{Proof of Corollary \ref{prop_expansion_vertical_fields}}\label{app:fg_expansion}

 First, let us define the following fields: 
\begin{gather*}
    \hat{\ms{g}} := \ms{g} - \rho^n \log\rho \cdot \mf{g}^\star\text{,}\\
    \hat{\ms{f}}^0 := \begin{cases}
         \rho^{-2}\ms{f}^0- \rho^{n-4} \log \rho \cdot \mf{f}^{0, \star}\text{,}\qquad &n \geq 4\text{,}\\
         \rho^{-1}\ms{f}^0\text{,}\qquad &n=3\text{,}\\
         \ms{f}^0\text{,}\qquad &n=2\text{,}
    \end{cases}\qquad \hat{\ms{f}}^1 :=
        \rho^{-1}\ms{f}^1 - \rho^{n-2}\log\rho\cdot \mf{f}^{1, \star}\text{.}
\end{gather*}

Observe that 
\begin{equation*}
    \begin{aligned}
        &\Lie_\rho^k \hat{\ms{g}}\Rightarrow^{M_0-k} k!\,\mf{g}^{(k)}\text{,}\qquad &&k<n\\
        &\Lie_\rho^{l}\hat{\ms{f}}^0 \Rightarrow^{M_0- l-2}l! \, \mf{f}^{0, (l)}\text{,}\qquad &&l<n-4\text{,}\\
        &\Lie_\rho^{q} \hat{\ms{f}}^1 \Rightarrow^{M_0-q-1}q! \, \mf{f}^{1,          (q)}\text{,}\qquad &&q<n-2\text{.}
    \end{aligned}
\end{equation*}

Now, looking at $\hat{\ms{g}}$, one can easily compute: 
\begin{align*}
    \Lie_\rho^n \hat{\ms{g}}= \underbrace{\Lie_\rho^n \ms{g} - n! \log \rho\cdot \mf{g}^\star}_{\longrightarrow^{M_0-n} n! \mf{g}^\dag}  - n! C_n \mf{g}^\star\text{,}
\end{align*}
with $C_n = 1 + 2^{-1} + \dots + n^{-1}$. Defining now: 
\begin{gather*}
    \mf{g}^{(n)} := \mf{g}^\dag - C_n \mf{g}^\star\text{,}\\
    \mf{f}^{0,((n-4)_+)}:= \begin{cases}
        \mf{f}^{0,\dag} - C_{n-4} \mf{f}^{0,\star}\text{,}\qquad &n>4\text{,}\\
        \mf{f}^{0,\dag}\text{,}\qquad &n=4\text{,}\\
    \end{cases}\qquad \mf{f}^{1,(n-2)} := \begin{cases}
        \mf{f}^{1, \dag} - C_{n-2}\mf{f}^{1,\star}\text{,}\qquad &n>2\text{,}\\
        \mf{f}^{1, \dag}\text{,}\qquad &n=2\text{,}
    \end{cases}
\end{gather*}
one has:
\begin{gather*}
    \Lie_\rho^n \hat{\ms{g}} \rightarrow^{M_0-n} n!\, \mf{g}^{(n)}\text{,}\\
    \Lie_\rho^{(n-4)_+}\hat{\ms{f}}^0 \begin{cases}
        \rightarrow^{M_0-n+2} ((n-4)_+)!\mf{f}^{0,((n-4)_+)}\text{,}\qquad &n\geq 3\text{,}\\
        \rightarrow^{M_0-1} \mf{f}^{0,(0)}\text{,}\qquad &n=2\text{,}
    \end{cases}\\
    \Lie_\rho^{n-2}\hat{\ms{f}}^1 
    \begin{cases}
       \rightarrow^{M_0-n+1} (n-2)!\mf{f}^{1,(n-2)}\text{,}\qquad &n\geq 3\text{,}\\
       \rightarrow^{M_0-2} \mf{f}^{1,(0)}\text{,}\qquad &n=2\text{.}
    \end{cases} 
\end{gather*}

The expansions therefore follows from Taylor's theorem applied to $\hat{\ms{g}}$, $\hat{\ms{f}}^0$ and $\hat{\ms{f}}^1$: 
\begin{align*}
    &\hat{\ms{g}} = \sum\limits_{k=0}^n \rho^k \mf{g}^{(k)} + \rho^n \ms{r}_{\hat{\ms{g}}}= \ms{g} - \rho^n \log \rho \cdot \mf{g}^\star\text{,}\\
    &\hat{\ms{f}}^0=\begin{cases} 
    \sum\limits_{k=0}^{n-4} \rho^k \mf{f}^{0, (k)} + \rho^{n-4} \ms{r}_{\hat{\ms{f}}^0} = 
        \rho^{-2}\ms{f}^0 - \rho^{n-4}\log\rho \cdot \mf{f}^{0, \star} \text{,}\qquad &n\geq 4\text{,}\\
        \mf{f}^{0,(0)} + \ms{r}_{\hat{\ms{f}}^0} = \rho^{-1}\ms{f}^0\text{,}\qquad &n = 3\text{,}\\
        \mf{f}^{0,(0)} + \ms{r}_{\hat{\ms{f}}^0} = \ms{f}^0\text{,}\qquad &n=2\text{,}
    \end{cases}\\
    &\hat{\ms{f}}^1= \sum\limits_{k=0}^{n-2} \rho^k \mf{f}^{1, (k)} + \rho^{n-2} \ms{r}_{\hat{\ms{f}}^1} =
        \rho^{-1}\ms{f}^1-\rho^{n-2}\log\rho\cdot\mf{f}^{1,\star}\text{,}
\end{align*}
with the remainder terms satisfying: 
\begin{gather*}
    \ms{r}_{\hat{\ms{g}}}\rightarrow^{M_0-n} 0\text{,}\\
    \ms{r}_{\hat{\ms{f}}^0}\begin{cases}
        \rightarrow^{M_0-(n-2)} 0 \text{,}\qquad &n \geq 4\text{,}\\
        \rightarrow^{M_0-1} 0 \text{,}\qquad &n =2,3\text{,}
    \end{cases}\qquad \ms{r}_{\hat{\ms{f}}^1}\begin{cases}
        \rightarrow^{M_0-(n- 1)} 0 \text{,}\qquad &n\geq 3\text{,}\\
        \rightarrow^{M_0-2} 0\text{,}\qquad &n =2\text{.}
    \end{cases}
\end{gather*}

\subsection{Proof of Corollary \ref{weyl_expansion}}\label{app_weyl}

\begin{proof}
    The expansion of the Weyl curvature simply follows from Corollary \ref{prop_expansion_vertical_fields} and Proposition \ref{prop_w_FG}. 

    Let $n\geq 3$. First, we will need an expansion of the stress-energy tensor, which is a direct consequence of the expansions \eqref{expansion_f0} and \eqref{expansion_f1}. The vertical decomposition of the stress-energy tensor takes the following form: 
    \begin{gather*}
            T_{ab} = \mc{S}\paren{\ms{g}; (\ms{f}^1)^2} + \rho^2 \mc{S}\paren{\ms{g}; (\ms{f}^0)^2}\text{,}\qquad T_{\rho a} = \mc{S}\paren{\ms{g}; \ms{f}^0, \ms{f}^1}\text{,}\\
            T_{\rho\rho} = \mc{S}\paren{\ms{g}; (\ms{f}^0)^2} + \mc{S}\paren{\ms{g}; (\ms{f}^1)^2}\text{.}
        \end{gather*}
        Now, the expansion \eqref{expansion_f0} gives: 
        \begin{align*}
            \mc{S}\paren{\ms{g}; (\ms{f}^1)^2} =\begin{cases}
                \rho^2\mi{D}^{(0,0)}(\mf{g}^{(0)}, \mf{f}^{1,(0)}) + \sum\limits_{k=2}^{n-4}\rho^{k+2} \mi{D}^{(k, k)}\paren{\mf{g}^{(0)}, \mf{f}^{1, (0)}; k} + \rho^{n-2}\ms{r}_{11}\text{,}\qquad &n\geq 4\text{,}\\
                \rho^2 \mi{D}^{(0,0)}\paren{\mf{g}^{(0)}, \mf{f}^{1,(0)}} + \rho^2\ms{r}_{11}\text{,}\qquad &n=3\text{,}
            \end{cases}
        \end{align*}
        where $\ms{r}_{11}$ is a $C^{M_0-(n-3)}$ vertical tensor field satisfying: 
        \begin{equation*}
            \ms{r}_{11}\rightarrow^{M_0-(n-3)} 0\text{.}
        \end{equation*}
        We will also need the following partial expansions: 
        \begin{align*}
            &\mc{S}\paren{\ms{g}; (\ms{f}^0)^2} = \begin{cases}\rho^4\sum\limits_{k=0}^{n-5} \rho^k \mi{D}^{(k+1, k+1)} \paren{\mf{g}^{(0)}, \mf{f}^{1, (0)}; k}+ \rho^{n-1} \ms{r}_{00}\text{,}\qquad &n\geq 5\text{ odd,}\\
            \rho^4\sum\limits_{k=0}^{n-6}\rho^k \mi{D}^{(k+1, k+1)} \paren{\mf{g}^{(0)}, \mf{f}^{1, (0)}; k} +  \rho^{n-2} \ms{r}_{00}\text{,}\qquad &n \geq 6\text{ even,}\\
            \rho^4\mi{D}^{(1,1)}\paren{\mf{g}^{(0)}, \mf{f}^{1,(0)}} + \rho^4\ms{r}_{00}\text{,}\qquad &n=4\text{,}\\
            \rho^2 \mi{D}^{(0,0)}\paren{\mf{g}^{(0)}, \mf{f}^{0,(0)}} + \rho^2 \ms{r}_{00}\text{,}\qquad &n=3\text{,}
            \end{cases}\\
            &\mc{S}\paren{\ms{g}; \ms{f}^1, \ms{f}^0}=
            \begin{cases}
            \rho^3\sum\limits_{k=0}^{n-4} \rho^k \mi{D}^{(k+1, k+1)} \paren{\mf{g}^{(0)}, \mf{f}^{1,(0)};k} + \rho^{n-1} \ms{r}_{10}\text{,} \qquad &n \geq 5\text{ odd,}\\
            \rho^3\sum\limits_{k=0}^{n-4} \rho^k \mi{D}^{(k+1, k+1)} \paren{\mf{g}^{(0)}, \mf{f}^{1,(0)};k} + \rho^{n-1}\log\rho \cdot \mi{D}^{(n-3, n-3)}\paren{\mf{g}^{(0)}, \mf{f}^{1,(0)}}\\
            \hspace{30pt}+ \rho^{n-1} \ms{r}_{10}\text{,}\qquad &n\geq 4\text{ even,}\\
            \rho^2 \mi{D}^{(0,0,0)}\paren{\mf{g}^{(0)}, \mf{f}^{1,(0)}, \mf{f}^{0,(0)}} + \rho^2  \ms{r}_{10}\text{,}\qquad &n=3\text{.}
            \end{cases}
        \end{align*}
        where $\ms{r}_{00}, \ms{r}_{10}$ and $\ms{r}_{00}$ are remainder terms satisfying: 
        \begin{gather*}
            (\ms{r}_{00}, \ms{r}_{10}) \rightarrow^{M_0-(n-1)} 0\text{.}
        \end{gather*}
    
    Typically, starting with $W_{abcd}$, whose expression can be found in Proposition \ref{prop_w_FG}:
    \begin{align*}
        \rho^2 W_{abcd} = \ms{R}_{abcd} +\mc{S}\paren{\ms{g}; (\ms{m})^2}_{abcd} + \rho^{-1}\mc{S}\paren{\ms{g}; \ms{m}}_{abcd}+ \mc{S}(\ms{g}; \tilde{\ms{T}}^0)_{abcd}\text{,}
    \end{align*}
    one simply needs to expand each term in powers of $\rho$. The near-boundary expansion of the Riemann tensor can be recovered using the fact that it must follow the same expansion as $\ms{g}$ by losing two orders of regularity.
    The one for $\tilde{\ms{T}}_{ab}^0 = T_{ab} - n^{-1}\paren{{T_{\rho\rho} + \ms{g}^{cd}T_{cd}}}\ms{g}_{ab}$ can be easily obtained from the above considerations. One will observe, however, that these expansions will be be lower-order, in comparison with the following ones, which follow from the metric expansion \eqref{expansion_g}. 
    
The expansion for terms quadratic in $\ms{m}$ can be deduced from \eqref{expansion_g}: 
    \begin{align*}
        \ms{m} =\begin{cases}
             \rho \sum\limits_{k=2}^{n-1}\rho^{k-2}\mi{D}^{(k,k-4)}\paren{\mf{g}^{(0)}, \mf{f}^{1, (0)}; k} + \rho^{n-1}\mf{a}_1 + \rho^{n-1}\ms{r}_{\ms{m}}\text{,} \qquad &n \text{ odd,}\\
             \rho \sum\limits_{k=2}^{n-2}\rho^{k-2}\mi{D}^{(k,k-4)}\paren{\mf{g}^{(0)}, \mf{f}^{1, (0)}; k} + \rho^{n-1}\log\rho\cdot \mi{D}^{(n, n-4,)}\paren{\mf{g}^{(0)}, \mf{f}^{1, (0)}} \\
             \hspace{10pt}+ \rho^{n-1} \mf{a}_1 + \rho^{n-1}\ms{r}_{\ms{m}}\text{,}\qquad &n \text{ even,}
        \end{cases}
    \end{align*}
    where $\mf{a}_1$ is a $C^{M_0-n}$ tensor field on $\mc{I}$ and $\ms{r}_{\ms{m}}$ is a vertical tensor field on $\mc{I}$ satisfying $\ms{r}_{\ms{m}}\rightarrow^{M_0-n} 0$. We can therefore derive the appropriate expansion: 
    \begin{align*}
        &\mc{S}\paren{\ms{g}; \ms{m}, \ms{m}} = \begin{cases}
            \rho^2 \sum\limits_{k=2}^{n-1}\rho^{k-2}\mi{D}^{(k, k-4)}\paren{\mf{g}^{(0)}, \mf{f}^{1, (0)}; k} + \rho^{n}\mf{b}_2 + \rho^n \ms{r}_{\ms{m}^2}\text{,}\qquad &n \text{ odd,}\\
            \rho^2 \sum\limits_{k=2}^{n-2}\rho^{k-2}\mi{D}^{(k,k-4)}\paren{\mf{g}^{(0)}, \mf{f}^{1, (0)}; k} + \rho^{n}\log\rho\cdot \mi{D}^{(n, n-4)}\paren{\mf{g}^{(0)}, \mf{f}^{1, (0)}} \\
            \hspace{10pt}+ \rho^{n} \mf{b}_2 + \rho^{n}\ms{r}_{\ms{m}^2}\text{,}\qquad &n \text{ even.} 
        \end{cases}\\
        &\rho^{-1}\mc{S}\paren{\ms{g}; \ms{m}}=\begin{cases}
             \sum\limits_{k=2}^{n-1}\rho^{k-2}\mi{D}^{(k,k-4)}\paren{\mf{g}^{(0)}, \mf{f}^{1, (0)}; k} + \rho^{n-2}\mf{a}_2 + \rho^{n-2}\ms{r}_{\ms{m}}\text{,} \qquad &n \text{ odd,}\\
             \sum\limits_{k=2}^{n-2}\rho^{k-2}\mi{D}^{(k,k-4)}\paren{\mf{g}^{(0)}, \mf{f}^{1, (0)}; k} + \rho^{n-2}\log\rho\cdot \mi{D}^{(n, n-4)}\paren{\mf{g}^{(0)}, \mf{f}^{1, (0)}} \\
             \hspace{10pt}+ \rho^{n-2} \mf{a}_2 + \rho^{n-2}\ms{r}_{\ms{m}}\text{,}\qquad &n \text{ even,}
        \end{cases}
    \end{align*}
    with $\mf{a}_2$, $\mf{b}_2$ tensor fields on $\mc{I}$ of regularity $C^{M_0-n}$ and $\ms{r}_{\ms{m}^2}\text{, }\ms{r}_{\ms{m}}\rightarrow^{M_0-n}0$. This leads to the expansion \eqref{expansion_w_0}. One can perform the same computations on $W_{\rho abc}$ and $W_{\rho a \rho b}$: 
    \begin{align*}
        &\rho^2 W_{\rho abc} = \ms{D}_{[c}\ms{m}_{b]a} - \frac{2}{n-1}T_{\rho [b}\ms{g}_{c]a}\text{,}\\
        &\rho^2 W_{\rho a \rho b} = -\frac{1}{2}\Lie_\rho \ms{m} + \frac{\ms{m}}{2\rho} +\frac{1}{4} \ms{g}^{cd}\ms{m}_{ac}\ms{m}_{bd} - \frac{1}{n-1}\paren{\tilde{\ms{T}}^2\ms{g}_{ab} + \tilde{\ms{T}}^0_{ab}}\text{,}
    \end{align*}
    where: 
    \begin{gather*}
        \tilde{\ms{T}}_{ab}^0 := T_{ab} - \frac{1}{n}\paren{T_{\rho\rho} + \ms{g}^{cd}T_{cd}}\ms{g}_{ab}\text{,}\qquad \tilde{\ms{T}}^2 :=\frac{n-1}{n}T_{\rho\rho} - \frac{1}{n}\ms{g}^{ab}T_{ab}\text{.}
    \end{gather*}
    The only partial expansion we will need to obtain \eqref{expansion_w_1} and \eqref{expansion_w_2} is the following: 
    \begin{align*}
        \Lie_\rho \ms{m} -\rho^{-1}\ms{m}=\begin{cases}
            \rho^2\sum\limits_{k=2}^{n-3} \rho^{k-2} \mi{D}^{(k, k-4)}\paren{\mf{g}^{(0)}, \mf{f}^{1,(0)}; k} + \rho^{n-2}\mf{c}_1 + \rho^{n-2}\ms{r}_{\ms{m}'}\text{,}\qquad n\text{ odd,}\\
            \rho^2\sum\limits_{k=2}^{n-4}\mi{D}^{(k, k-4)}\paren{\mf{g}^{(0)}, \mf{f}^{1, (0)};k} + \rho^{n-2}\log\rho \cdot \mi{D}^{(n, n-4)}\paren{\mf{g}^{(0}, \mf{f}^{1,(0)}}\\
            \hspace{10pt} + \rho^{n-2}\mf{c}_2 + \rho^{n-2}\ms{r}_{\ms{m}'}\text{,}\qquad n \text{ even,}
        \end{cases}
    \end{align*}
    with $\mf{c}_1, \mf{c}_2$ being $C^{M_0-n}$ tensor fields on $\mc{I}$. The partial expansions therefore follow. 
\end{proof}

\subsection{Proof of Proposition \ref{higher_order_m}}\label{app:higher_transport_m}

The proof is a consequence of the following simple identity: 
\begin{equation*}
    \Lie_\rho^k (\rho \ms{A}) = \rho \Lie_\rho^k \ms{A} + k\Lie_\rho^{k-1}\ms{A}\text{,}
\end{equation*}
for any vertical tensor field $\ms{A}$. Such an identity gives the left-hand sides of \eqref{higher_transport_m} and \eqref{higher_transport_tr_m}, where for the latter, one has to use: 
\begin{equation}\label{lower_order_m}
    \Lie_\rho^k \ms{tr}_{\ms{g}}\ms{m} = \ms{tr}_{\ms{g}}\Lie_\rho^k \ms{m} + \sum\limits_{\substack{j_1+\dots+j_\ell = k+1\\j_p<k+1}}\mc{S}\paren{\ms{g}; \Lie_\rho^{j_1-1}\ms{m}, \dots, \Lie_\rho^{j_\ell-1}\ms{m}}\text{.}
\end{equation}

Next, looking at the top-order term in \eqref{higher_transport_m}, given by $\ms{tr}_{\ms{m}}\cdot \ms{g}$, one has two possibilities. Either all the $\rho$--derivatives hit $\ms{m}$, in which case one has: 
\begin{equation*}
    \ms{tr}_{\ms{g}}\Lie_\rho^k \ms{m}\text{,}
\end{equation*}
or at least one derivative hits the trace operator, producing new factors of $\ms{m}$ along the way, giving similar lower--order terms to \eqref{lower_order_m}. 

For terms of the form $\rho\ms{m}\cdot \ms{m}$, either all the derivatives avoid the factor of $\rho$, in which case one has a remainder term of the form: 
\begin{equation*}
    \sum\limits_{\substack{j_1+\dots+j_\ell=k+2\\j_p\leq k+2}}\rho\cdot \mc{S}\paren{\ms{g}; \Lie_\rho^{j_1-1}\ms{m}, \dots, \Lie_\rho^{j_\ell -1}\ms{m}}\text{,}
\end{equation*}
or one derivatives hits $\rho$, giving therefore \eqref{lower_order_m}. The other terms on the right-hand side of \eqref{higher_transport_m} and \eqref{higher_transport_tr_m} can be obtained easily. 

The identities \eqref{T_higher_order} can be understood by the fact that, for $n\geq 4$, a number of $k$ $\rho$--derivatives will typically either hit a factor of $\rho$ or hit the nonlinearities. As an example, if $q$--derivatives, with $q\leq 4$ hit a factor of $\rho$, one will have, for the terms involving $\ol{\ms{f}}^0$: 
\begin{equation*}
    \sum\limits_{j+j'+j_0+\dots+j_\ell=k-q}\rho^{4-q}\cdot \mc{S}\paren{\ms{g}; \Lie_\rho^{j_0-1}\ms{m}, \dots, \Lie_\rho^{j_\ell-1}\ms{m}, \Lie_\rho^j \ol{\ms{f}}^0, \Lie_\rho^{j'}\ol{\ms{f}}^0}\text{,}
\end{equation*}
and similarly for the terms involving $\ol{\ms{f}}^1$. It turns out that, in the case of Maxwell-FG-aAdS, the two components $\ms{T}^0$ and $\ms{T}^2$ behave similarly. 

\subsection{Proof of Proposition \ref{higher_order_f}}\label{app:higher_transport_f}

The left-hand sides of \eqref{higher_transport_f_0} and \eqref{higher_transport_f_1} can simply be obtained from: 
\begin{equation*}
    \Lie_\rho^k\paren{\rho\Lie_\rho \ms{A}- c\cdot \ms{A}} = \rho\Lie_\rho^{k+1} \ms{A} - (c-k)\Lie_\rho^k \ms{A}\text{,}
\end{equation*}
for any vertical tensor fields $\ms{A}$ and constant $c$. 

The right-hand sides of \eqref{higher_transport_f_0} and \eqref{higher_transport_f_1} both contain derivative terms of the form $\ms{DA}$, $\ms{A}\in \lbrace \ol{\ms{f}}^0, \ol{\ms{f}}^1\rbrace$, which can be treated using Proposition \ref{sec:aads_commutation_Lie_D}. In the case of \eqref{higher_transport_f_1}, such a term is multiplied by $\rho$. As a consequence, either one derivative hits $\rho$, and thus $k-1$ derivatives hit $\ms{D}\ol{\ms{f}}^0$, or $\rho$ is untouched and $k$ $\rho$--derivatives hit $\ms{D}\ol{\ms{f}}^0$. 

The same story unfolds for \eqref{higher_transport_f_0}, which contains terms of the form: 
\begin{equation*}
    \rho\mc{S}\paren{\ms{g}; \ms{m}, \ol{\ms{f}}^0}\text{.}
\end{equation*}
Each time a derivative hits the metric, one picks a factor of $\ms{m}$, yielding terms of the form: 
\begin{equation*}
    \sum\limits_{\substack{j+j_0+j_1+\dots+j_\ell=k}} \rho\cdot \mc{S}\paren{\ms{g}; \Lie_\rho^{j_0}\ms{m}, \Lie_\rho^{j_1-1}\ms{m}, \dots, \Lie_\rho^{j_\ell-1}\ms{m}, \Lie_\rho^j \ms{m}}\text{,}
\end{equation*}
if no derivative hits the factor of $\rho$. If one does, one simply needs to replace $k$ with $k-1$ in the sum above and remove the factor of $\rho$. 

\subsection{Proof of Proposition \ref{prop_integration}}\label{app:prop_integration}

For \ref{local_estimate}, see \cite{shao:aads_fg}, Proposition 2.40. 

The case $q=0$ for the limits \ref{lemma_item_1} follows from \ref{local_estimate}. In order to derive the higher-order bounds, one simply needs to note the following, for any $q\geq 0$: 
\begin{equation*}
    \norm{\rho^{q+1}\Lie_\rho^{q+1} \ms{A}}_{M, \varphi} \lesssim_{q} \norm{\rho^q  \Lie_\rho^q \ms{A}}_{M, \varphi} + \norm{\rho^q \Lie_\rho^q \ms{G}}_{M, \varphi}\text{.}
\end{equation*}

Similarly, the cases $q=0$ of \ref{lemma_item_2} is also a consequence of Proposition 2.40 of \cite{shao:aads_fg}. The higher-order limits follow immediately by induction from: 
\begin{equation}
    \rho^{q+1}\Lie_\rho^{q+1}\ms{A} = (c-q) \Lie_\rho^q \ms{A} + \rho^q\Lie_\rho^q\ms{G}\text{.}\label{higher_transport_lemma}
\end{equation}

The ``fractional" convergence in \ref{lemma_item_3}, for $q=0$, can be obtained by differentiating \eqref{lemma_transport} once to get: 
\begin{equation*}
    \rho\Lie_\rho\paren{\rho^{1-c}\Lie_\rho\ms{A}} = \rho^{1-c}\Lie_\rho\ms{G}\Rightarrow^{M}0\text{,}
\end{equation*}
where we used \eqref{bound_G_frac}. Integrating from $\rho_0$ to $\rho$, fixed, yields: 
\begin{equation*}
    \rho^{1-c}\Lie_\rho\ms{A} = \rho_0^{1-c}\Lie_\rho\ms{A}\vert_{\rho_0} + \int_{\rho_0}^\rho \sigma^{-1}\cdot \sigma^{1-c}\Lie_\rho\ms{G}\vert_\sigma d\sigma\text{.}
\end{equation*}
Note also that the following field on $\mc{I}$ is well-defined: 
\begin{equation*}
    \mf{A}^{(c)} := \rho_0^{1-c}\Lie_\rho\ms{A}\vert_{\rho_0} + \int_{\rho_0}^0 \sigma^{-1}\cdot \sigma^{1-c}\Lie_\rho\ms{G}\vert_\sigma d\sigma\text{,}
\end{equation*}
since the integral converges. The following limit thus holds: 
\begin{equation*}
    \sup\limits_{\lbrace \rho \rbrace \times U} \abs{\rho^{1-c}\Lie_\rho\ms{A} - \mf{A}^{(c)}}_{M, \varphi} \lesssim \sup\limits_U \int_0^\rho \sigma^{-1}\cdot \abs{\sigma^{1-c}\Lie_\rho\ms{G}}_{M, \varphi}\vert_\sigma d\sigma\rightarrow 0\text{,}
\end{equation*}
as $\rho\rightarrow 0$. The higher-order limits follow from \eqref{higher_transport_lemma}. 

Finally, the anomalous limits from \ref{lemma_item_4} are, for $q=0$, consequences of Proposition 2.41 of \cite{shao:aads_fg}. The higher-order limits can be obtained by induction from \eqref{higher_transport_lemma}, where $c=0$.  

\section{Proofs of Section \ref{chap:UC}}

\subsection{Proof of Proposition \ref{prop_transport_h_bar}}\label{app:transport_h_bar}

The proof follows from Propositions \ref{commutation_D_D} and \ref{transport_f_UC}. More precisely, starting with $\ul{\ms{h}}^0$: 
\begin{align*}
    \rho\ol{\ms{D}}_\rho\ul{\ms{h}}^0_{abc} &= \ms{D}_a\paren{\rho\ol{\ms{D}}_\rho \ms{f}^1_{bc}}+\rho[\ol{\ms{D}}_\rho, \ms{D}_a\ms{f}^1_{bc}] \\
    &=\ul{\ms{h}}^0_{abc} + 2\rho\ms{D}_a\ul{\ms{h}}^2_{[bc]} + \rho\mc{S}\paren{\ms{g}; \ms{Dm}, \ms{f}^1}_{abc} + \rho\mc{S}\paren{\ms{g}; \ms{m}, \ul{\ms{h}}^0}_{abc}\text{,}
\end{align*}
while for $\ul{\ms{h}}^2$: 
\begin{align*}
    \rho\ol{\ms{D}}_\rho\ul{\ms{h}}^2_{ab} &= \ms{D}_a\paren{\rho\ol{\ms{D}}_\rho \ms{f}^0_{b}}+\rho[\ol{\ms{D}}_\rho, \ms{D}_a\ms{f}^0_{b}] \\
    &=(n-2)\ul{\ms{h}}^2_{ab} - \rho\ms{D}_a\ms{tr}_{\ms{g}}\ul{\ms{h}}^0_{b} + \rho\mc{S}\paren{\ms{g}; \ms{Dm}, \ms{f}^0}_{abc} + \rho\mc{S}\paren{\ms{g}; \ms{m}, \ul{\ms{h}}^2}_{abc}\text{.}
\end{align*}

\subsection{Proof of Proposition \ref{prop_transport_h}}\label{app:transport_h}

The transport equations follow by differentiating the Maxwell and Bianchi equations and commuting derivatives. Starting with the former: 
\begin{align*}
    \rho^2 \nabla_a \nabla_\rho F_{\rho b} + \rho^2 \ms{g}^{cd}\nabla_a\nabla_c
F_{db} = 0 \Longrightarrow \rho^2\nabla_\rho \nabla_a F_{\rho b} + \rho^2 \ms{g}^{cd}\nabla_a\nabla_c F_{db} + \rho^2 \left[\nabla_a, \nabla_\rho \right] F_{\rho b} = 0\text{.} 
\end{align*}
Each of these terms can be decomposed vertically using Proposition \ref{sec:aads_derivatives_vertical} and Equation \eqref{sec:aads_weyl_equation}: 
\begin{align*}
    &\rho^2 \nabla_\rho \nabla_a F_{\rho b} = 2\ms{h}^2_{ab} + \rho \ol{\ms{D}}_\rho \ms{h}^2_{ab}\text{,}\\
    &\rho^2\ms{g}^{cd}\nabla_a \nabla_c F_{db} = \rho \ms{g}^{cd}\ms{D}_a \ms{h}^0_{cdb} - \ms{h}^1_{ab} - \ms{h}^2_{ab} + \underbrace{\ms{tr}_{\ms{g}}\ms{h}^2}_{=0}\ms{g}_{ab}+\rho \mc{S}\paren{\ms{g}; \ms{m}, \ms{h}^2}+\rho \mc{S}\paren{\ms{g}; \ms{m}, \ms{h}^1}\\
    &
    \begin{aligned}
        \rho^2[\nabla_a, \nabla_\rho]F_{\rho b} &= - \rho^2 \operatorname{R}^\sigma{}_{\rho a \rho }F_{\sigma b} - \rho^2 \operatorname{R}^\sigma{}_{ba\rho}F_{\rho \sigma}\\
        &= \rho^{-1}\ms{f}^1_{ab}+\rho\mc{S}\paren{\ms{g}; \ms{w}^1, \ms{f}^0}_{ab} + \rho \mc{S}\paren{\ms{g}; \ms{w}^2, \ms{f}^1}_{ab}\\
        &\; + \rho\mc{S}\paren{\ms{g}; (\ms{f}^1)^3}_{ab} +\rho \mc{S}\paren{\ms{g}; (\ms{f}^0)^2, \ms{f}^1}_{ab}\text{,}
    \end{aligned}
\end{align*}
giving the desired transport equation \eqref{sec:syst_transport_h_2}. 

Next, differentiating the Bianchi equation gives us: 
\begin{equation*}
    \rho^2\nabla_a \nabla_{[\rho}F_{bc]} = 0\Rightarrow \rho^2\nabla_\rho \nabla_a F_{bc}= 2\nabla_a\nabla_{[b}F_{|\rho|c]} -\rho^2[\nabla_a, \nabla_\rho]F_{bc}\text{,}
\end{equation*}
where one can, as above, decompose each of these terms vertically: 
\begin{align*}
    &\rho^2\nabla_\rho \nabla_a F_{bc} = \rho\ol{\ms{D}}_\rho \ms{h}^0_{abc} + 2\ms{h}^0_{abc}\text{,}\\
    &\begin{aligned}2\rho^2\nabla_a\nabla_{[b}F_{|\rho|c]} &= 2\rho \ms{D}_a \ms{h}^2_{[bc]} + 2\ms{h}^0_{[b|a|c]}-2\ms{g}_{a[b}\ms{h}^3_{c]} + \rho\mc{S}\paren{\ms{g}; \ms{m}, \ms{h}^0}_{abc}+\rho\mc{S}\paren{\ms{g}; \ms{m}, \ms{h}^3}_{abc}\\
    &=2\rho \ms{D}_a \ms{h}^2_{[bc]} + \ms{h}^0_{abc} + 2\ms{g}_{a[b}\ms{tr}_{\ms{g}}\ms{h}^0_{c]} + \rho\mc{S}\paren{\ms{g}; \ms{m}, \ms{h}^0}_{abc}
    \end{aligned}\\
    &\begin{aligned}
    -\rho^2[\nabla_a, \nabla_\rho]F_{bc} 
    &=- 2 \rho^2\operatorname{R}^\sigma{}_{[b|a\rho}F_{c]\sigma}\\
    &= 2\rho^{-1}\ms{g}_{a[b}\ms{f}^0_{c]} + \rho\mc{S}\paren{\ms{g}; \ms{w}^1, \ms{f}^1}_{abc} + \rho\mc{S}\paren{\ms{g}; \ms{w}^2, \ms{f}^0}_{abc}+\rho\mc{S}\paren{\ms{g};(\ms{f}^0)^3}_{abc}\\
    &\;  + \rho\mc{S}\paren{\ms{g}; (\ms{f}^1)^2, \ms{f}^0}_{abc}\text{,}
    \end{aligned}
\end{align*}
proving \eqref{sec:syst_transport_h_2}. 

\subsection{Proof of Proposition \ref{prop_transport_w}}\label{app:transport_weyl_UC}

The proof of equations \eqref{transport_w_star} and \eqref{transport_w_1.1} follow from the Bianchi identity for the Riemann curvature, here expressed in terms of the Weyl tensor: 
    \begin{align}
        &\nabla_\rho W_{\alpha\beta\gamma\delta} +\frac{1}{n-1}(g\star\nabla_\rho \tilde{T})_{\alpha\beta\gamma\delta} = 2\nabla_{[\gamma}W _{|\alpha\beta\rho|\delta]} + \frac{2}{n-1}(g\star\nabla_{[\gamma}\tilde{T})_{|\alpha\beta\rho|\delta]}\label{Bianchi_weyl_1}\text{,}\\
        &\label{contracted_bianchi}\nabla_\rho W_{\alpha\beta\rho\delta} +\ms{g}^{cd}\nabla_{c}W_{\alpha\beta d \delta}+ \frac{1}{n-1} \paren{g\star \nabla_\rho \tilde{T}}_{\alpha\beta\rho\delta} + \frac{1}{n-1}{g}^{cd}\paren{g\star \nabla_c \tilde{T}}_{\alpha\beta d\delta}=2\rho^{-2}\cdot \nabla_{[\alpha}\hat{T}_{\beta]\delta}\text{,}
    \end{align}
    with $\tilde{T}, \, \hat{T}$ defined as in Definition \ref{def_modif_T}. 
    These two equations will give rise to one transport equation for $\ms{w}^0$ and $\ms{w}^2$ as well as two (equivalent) transport equations for $\ms{w}^1$. Typically, starting with \eqref{Bianchi_weyl_1}, one gets the two independent transport equations: 
    \begin{align}
        &\nabla_\rho W_{abcd}  = 2\nabla_{[c}W_{|ab\rho|d]} -\frac{1}{n-1}\paren{g\star\nabla_\rho \tilde{T}}_{abcd} + \frac{2}{n-1}\paren{g\star \nabla_{[c}\tilde{T}}_{|ab\rho|d]}\text{,}\label{intermediate_transport_w_0}\\
        &\nabla_\rho W_{\rho abc} = 2 \nabla_{[b}W_{|\rho a \rho|c]} -\frac{1}{n-1}\paren{g\star \nabla_\rho \tilde{T}}_{\rho abc}  + \frac{2}{n-1}\paren{g\star \nabla_{[b}\tilde{T}}_{|\rho a\rho|c]}\label{intermediate_transport_w_2}\text{.}
    \end{align}
    Using Proposition \ref{sec:aads_derivatives_vertical}, the leading terms can be written as:
    \begin{align}
        &\rho^2 \nabla_{[c}W_{|ab\rho|d]} = \ms{D}_{[c}\ms{w}^1_{d]ab}+\rho^{-1}\ms{w}^0_{abcd} - \frac{1}{2\rho}\paren{\ms{g}\star\ms{w}^2}_{abcd} + \mc{S}\paren{\ms{g}; \ms{m}, \ms{w}^0}_{abcd} + \mc{S}\paren{\ms{m}, \ms{w}^2}_{abcd}\text{,}\label{intermediate_vertical_transport_w_0}\\
        &\rho^2 \nabla_{[b}W_{|\rho a\rho|c]} = \ms{D}_{[b}\ms{w}^2_{c]a} + \frac{3}{2\rho}\ms{w}^1_{abc} + \mc{S}\paren{\ms{g}; \ms{m}, \ms{w}^1}_{abc}\text{.}\label{intermediate_vertical_transport_w_1}
    \end{align}
    As well as the following remainder terms:
    \begin{align}
        &\rho^2\paren{g\star \nabla_\rho \tilde{T}}_{abcd} = \notag\mc{S}\paren{\ms{g}; \ul{\ms{h}}^0, \ms{f}^0}_{abcd} + \mc{S}\paren{\ms{g}; \ul{\ms{h}}^{2}, \ms{f}^1}_{abcd} +\rho^{-1}\mc{S}\paren{\ms{g}; (\ms{f}^1)^2}_{abcd} \\
        &\hspace{100pt}+ \rho^{-1}\mc{S}\paren{\ms{g}; (\ms{f}^0)^2}_{abcd}+\mc{S}\paren{\ms{g}; \ms{m}, (\ms{f}^0)^2}_{abcd} + \mc{S}\paren{\ms{g}; \ms{m}, (\ms{f}^1)^2}_{abcd}\label{remainder_w_0.1}\\
        &\notag\rho^2 (g\star {\nabla}_{[c} \tilde{T})_{|ab\rho|d]} = \mc{S}\paren{\ms{g}; \ul{\ms{h}}^0, \ms{f}^0}_{abcd} + \mc{S}\paren{\ms{g}; \ul{\ms{h}}^{2}, \ms{f}^1}_{abcd}+ \rho^{-1}\mc{S}\paren{\ms{g}; (\ms{f}^0)^2}_{abcd}\\
        &\hspace{100pt} +  \rho^{-1}\mc{S}\paren{\ms{g}; (\ms{f}^1)^2}_{abcd}+\mc{S}\paren{\ms{g}; \ms{m}, (\ms{f}^0)^2}_{abcd} + \mc{S}\paren{\ms{g}; \ms{m}, (\ms{f}^1)^2}_{abcd}\label{remainder_w_0.2}\\
        &\label{remainder_w_1.1}\rho^2 (g\star \nabla_\rho \tilde{T})_{\rho abc} = \mc{S}\paren{\ms{g};\ul{\ms{h}}^0, \ms{f}^1}_{abc} + \mc{S}\paren{\ms{g}; \ul{\ms{h}}^{2}, \ms{f}^0}_{abc} + \rho^{-1}\mc{S}\paren{\ms{g}; \ms{f}^1, \ms{f}^0}_{abc}+\mc{S}\paren{\ms{g}; \ms{m}, \ms{f}^1, \ms{f}^0}_{abc}\\
        &\label{remainder_w_1.2}\rho^{2}(g\star\tilde{\nabla}_{[b}\tilde{T})_{|\rho a\rho|c]} = \mc{S}\paren{\ms{g}; \ul{\ms{h}}^{2}, \ms{f}^0}_{abc} + \mc{S}\paren{\ms{g}; \ul{\ms{h}}^0, \ms{f}^1}_{abc} \\
        &\notag\hspace{100pt}+\rho^{-1}\mc{S}\paren{\ms{g}; \ms{f}^1, \ms{f}^0}_{abc}+\mc{S}\paren{\ms{g}; \ms{m}, \ms{f}^0, \ms{f}^1}_{abc}\text{.}
    \end{align}
    After writing:
    \begin{gather*}
        \rho^2\nabla_\rho W_{abcd} = \left(\bar{\ms{D}}_\rho \ms{w}^0\right)_{abcd} + 2\rho^{-1}\ms{w}^0_{abcd}\text{,}\qquad \rho^2\nabla_\rho W_{\rho abc} = (\bar{\ms{D}}_\rho \ms{w}^1)_{abc} + 2\rho^{-1}\ms{w}^1_{abc}\text{,}
    \end{gather*}
    one obtains the following equations for $\ms{w}^0$ and $\ms{w}^1$: 
    \begin{align*}
        &\rho\bar{\ms{D}}_\rho \ms{w}^0_{abcd} = 2\rho\ms{D}_{[c}\ms{w}^1_{d]ab} - (\ms{g}\star\ms{w}^2)_{abcd} + (\mc{R}_{T, \ms{w}^2})_{abcd}\text{,}\\
        &\rho\ol{\ms{D}}_\rho \ms{w}^1_{abc} = 2\rho \ms{D}_{[b}\ms{w}^2_{c]a} + \ms{w}^1_{abc} + (\mc{R}_{T,\ms{w}^1})_{abc}\text{,}
    \end{align*}
    where $\mc{R}_{T, \ms{w}^2}$ is given by \eqref{R_T_w_2}, and is obtained from \eqref{intermediate_vertical_transport_w_0}, \eqref{remainder_w_0.1} and \eqref{remainder_w_0.2}, while $\mc{R}_{T,\ms{w}^1}$ is obtained from \eqref{intermediate_vertical_transport_w_1}, \eqref{remainder_w_1.1} and \eqref{remainder_w_1.2}. Using now the contracted Bianchi identity \eqref{contracted_bianchi}, one gets the two following independent equations: 
    \begin{align}
        &\label{intermediate_transport_w_0_2}\rho^2\nabla_\rho W_{\rho c ab} + \rho^2\ms{g}^{de}\nabla_d W_{abec}+\frac{1}{n-1}\rho^2(g\star\nabla_\rho \tilde{T})_{\rho c ab} +\frac{1}{n-1}\rho^2 \ms{g}^{de}\paren{g\star \nabla_d \tilde{T}}_{abec}= 2\nabla_{[a}\hat{T}_{b]c}\text{,}\\
        &\label{intermediate_contracted_transport_w_2}\rho^2\nabla_\rho W_{\rho a \rho b}+\rho^2\ms{g}^{cd}\nabla_c W_{\rho adb}+ \frac{1}{n-1}\rho^2\paren{g\star \nabla_\rho\tilde{T}}_{\rho a \rho b} +\frac{1}{n-1}\rho^2 \ms{g}^{cd}\paren{g\star \nabla_c \tilde{T}}_{\rho a d b} = 2\nabla_{[\rho}\hat{T}_{a]b}\text{,}
    \end{align}
    where the individual terms can be written in terms of vertical tensors: 
    \begin{align}
        &\label{intermediate_w_1_contracted_1}\rho^2 \ms{g}^{de}\nabla_d W_{abec} = \ms{g}^{de}\ms{D}_d \ms{w}^0_{abec}- n\rho^{-1}\cdot \ms{w}^1_{cab} + \mc{S}\paren{\ms{g}; \ms{m}, \ms{w}^1}_{abc}\text{,}\\
        &\label{intermediate_w_1_contracted_2} \rho^2 \nabla_\rho W_{\rho cab} = \ol{\ms{D}}_\rho \ms{w}^1_{cab} + 2\rho^{-1}\ms{w}^1_{cab}\text{,}\\
        &\label{intermediate_w_2_contracted_1}\rho^2 \ms{g}^{cd}\nabla_c W_{\rho adb} = \ms{g}^{cd}\ms{D}_{c}\ms{w}^1_{adb} + \rho^{-1}\ms{g}^{cd}\ms{w}^0_{cadb} - \rho^{-1}(n-1)\ms{w}^2_{ab} \\
        &\notag\hspace{80pt}+ \mc{S}\paren{\ms{g}; \ms{m}, \ms{w}^0}_{ab} + \mc{S}\paren{\ms{g}; \ms{m}, \ms{w}^2}_{ab}\text{,}\\
        &\label{intermediate_w_2_contracted_2} \rho^2 \nabla_\rho W_{\rho a \rho b}= \ol{\ms{D}}_\rho \ms{w}^2_{ab} + 2\rho^{-1}\ms{w}^2_{ab}\text{,}
    \end{align}
    while for the remainder terms:
    \begin{align}
        &\notag\rho^2\paren{g\star \nabla_\rho \tilde{T}}_{\rho cab} +\rho^2 \ms{g}^{de}\paren{g\star \nabla_d \tilde{T}}_{abec}=\mc{S}\paren{\ms{g}; \ul{\ms{h}}, \ms{f}^1}_{abc}+ \mc{S}\paren{\ms{g}; \ul{\ms{h}}^{2}, \ms{f}^0}_{abc}\text{,}\\
        &\label{remainder_contracted_w_1_1} \hspace{50pt}+ \rho^{-1}\mc{S}\paren{\ms{g}; \ms{f}^0, \ms{f}^1}_{abc} + \mc{S}\paren{\ms{g}; \ms{m}, \ms{f}^0, \ms{f}^1}_{abc}\\
        &\nabla_{[a}\hat{T}_{b]c} = \mc{S}\paren{\ms{g}; \ul{\ms{h}}^0, \ms{f}^1}_{abc}  + \mc{S}\paren{\ms{g}; \ul{\ms{h}}^{2} , \ms{f}^1}_{abc} +\rho^{-1}\mc{S}\paren{\ms{g}; \ms{f}^0, \ms{f}^1}_{abc} +\mc{S}\paren{\ms{g}; \ms{m}, \ms{f}^1, \ms{f}^0}_{abc}\label{remainder_contracted_w_1_2}\text{,}\\
        & \notag \rho^2(g\star \nabla_\rho\tilde{T})_{\rho a \rho b} + \rho^2\ms{g}^{cd}\paren{g\star \nabla_c\tilde{T}}_{\rho adb} = \mc{S}\paren{\ms{g}; \ul{\ms{h}}^0, \ms{f}^0}_{ab} + \mc{S}\paren{\ms{g}; \ul{\ms{h}}^{2}, \ms{f}^1}_{ab}+ \rho^{-1}\mc{S}\paren{\ms{g}; (\ms{f}^1)^2}_{ab} \\
        &\hspace{50pt}+ \rho^{-1}\mc{S}\paren{\ms{g}; (\ms{f}^0)^2}_{ab}+\mc{S}\paren{\ms{g}; \ms{m}, (\ms{f}^0)^2}_{ab} +\mc{S}\paren{\ms{g}; \ms{m}, (\ms{f}^1)^2}_{ab}\label{remainder_contracted_w_2_1} \text{,}\\
        &\notag\nabla_{[\rho}\hat{T}_{a]b} = \mc{S}\paren{\ms{g}; \ul{\ms{h}}^0, \ms{f}^0}_{ab} + \mc{S}\paren{\ms{g}; \ul{\ms{h}}^{2}, \ms{f}^1}_{ab} +\rho^{-1}\mc{S}\paren{\ms{g}; (\ms{f}^0)^2}_{ab} + \rho^{-1}\mc{S}\paren{\ms{g}; (\ms{f}^1)^2}_{ab}+ \mc{S}\paren{\ms{g}; \ms{m}, (\ms{f}^0)^2}_{ab}\\
        &\label{remainder_contracted_w_2_2}\hspace{50pt} + \mc{S}\paren{\ms{g}; \ms{m}, (\ms{f}^1)^2}_{ab}\text{.}
    \end{align}
    Combining now Equations \eqref{intermediate_w_1_contracted_1}, \eqref{intermediate_w_1_contracted_2}, as well as the remainder terms \eqref{remainder_contracted_w_1_1}, \eqref{remainder_contracted_w_1_2} gives the following transport equation for $\ms{w}^1$: 
    \begin{equation*}
        \rho\ol{\ms{D}}_\rho\ms{w}^1_{cab} = - \rho \cdot \ms{g}^{de}\ms{D}_d \ms{w}^0_{ecab} + (n-2)\ms{w}^1_{cab} + (\mc{R}_{T, \ms{w}^1})_{cab}\text{.}
    \end{equation*}

    Gathering Equations \eqref{intermediate_w_2_contracted_1}, \eqref{intermediate_w_2_contracted_2}, as well as \eqref{remainder_contracted_w_2_1} and \eqref{remainder_contracted_w_2_2} gives: 
    \begin{equation*}
        \rho \ol{\ms{D}}_\rho \ms{w}^2_{ab} = - \rho \cdot \ms{g}^{cd}\ms{D}_c\ms{w}^1_{adb} + (n-2)\ms{w}^2_{ab} + (\mc{R}_{T,\ms{w}^2})_{ab}\text{,}
    \end{equation*}
    as desired.

\subsection{Proof of Proposition \ref{wave_weyl}}\label{app:wave_w_maxwell}

The proof of \eqref{sec:system_weyl_wave} follows from \eqref{sec:aads_weyl_equation} and the differentiated Bianchi identity: 
    \begin{equation*}
        \Box_g R_{\alpha \beta \gamma\delta} + \nabla^\mu \nabla_\gamma R_{\alpha\beta\delta\mu} + \nabla^\mu \nabla_\delta R_{\alpha\beta\mu\gamma} = 0\text{,}
    \end{equation*}
    followed by a commutation of the derivatives: 
    \begin{equation}\label{wave_R_commut}
        \Box_g R_{\alpha\beta\gamma\delta}-2 [\nabla^\mu, \nabla_{[\gamma}]R_{|\alpha\beta\mu|\delta]} - 2\nabla_{[\gamma} \nabla^\mu R_{|\alpha\beta\mu|\delta]} = 0\text{.}
    \end{equation}
    The first term can be expanded in terms of the Weyl and stress-energy tensors: 
    \begin{equation}\label{box_term_weyl}
        \Box_g R = \Box_g W + \frac{1}{n-1}g\star \Box_g \tilde{T}\text{,}
    \end{equation}
    while the commutation term can be written as: 
    \begin{align}\label{commutator_term}
        -2[\nabla^\mu, \nabla_{[\gamma}]R_{|\alpha\beta\mu|\delta]} = 2(R\odot R)_{\alpha\beta\gamma\delta} + 2\operatorname{Ric}^\sigma{}_{[\gamma}R_{|\alpha\beta|\delta]\sigma}\text{.}
    \end{align}
    Using now \eqref{sec:aads_weyl_equation}, we can start studying the the first term above, here expanded: 
    \begin{align}\label{R_odot}
        (R\odot R) &= \paren{\paren{W - \frac{1}{2}g\star g + \frac{1}{n-1}\tilde{T}\star g} \odot \paren{W - \frac{1}{2}g\star g + \frac{1}{n-1}\tilde{T}\star g}}\text{.}
    \end{align}

    \begin{lemma}\label{lemma_identity_otimes}
        Let $A$ be a $(0,4)$--tensor satisfying, with respect to any local coordinate system: 
        \begin{itemize}
            \item $A_{\alpha\beta\gamma\delta} = -A_{\beta\alpha\gamma\delta}$
            \item $A_{\alpha\beta\gamma\delta} = A_{\gamma\delta\alpha\beta}$
            \item $A_{\alpha\beta\gamma\delta} + A_{\alpha\delta\beta\gamma}+A_{\alpha\gamma\delta\beta}=0.$
        \end{itemize}
        Then, the following holds: 
        \begin{equation*}
            \frac{1}{2}\paren{A \odot (g\star g) + (g\star g)\odot A} = \operatorname{tr}_g A \star g\text{,}
        \end{equation*}
        where we write: 
        \begin{equation*}
            (\operatorname{tr}_g A)_{\alpha\beta} := g^{\gamma\delta}A_{\alpha \gamma \beta \delta}\text{.}
        \end{equation*}
    \end{lemma}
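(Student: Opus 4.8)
The plan is to prove the identity by a direct computation in an arbitrary local coordinate system, exploiting the simple two-term shape of $g\star g$ to reduce every contraction against it to either a trace of $A$ or a permuted copy of $A$. First I would record, straight from the definition of $\star$, that $(g\star g)_{\alpha\beta\gamma\delta} = 2\paren{g_{\alpha\gamma}g_{\beta\delta} - g_{\alpha\delta}g_{\beta\gamma}}$, so that any index of $g\star g$ raised and contracted against $A$ splits into a Kronecker-delta piece and a metric piece; the former always produces a full contraction of $A$, the latter merely a relabelling of $A$. I would also note at the outset that $\operatorname{tr}_g A$ is symmetric, which is a one-line consequence of the pair symmetry of $A$ and is used freely when reassembling the answer.

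Next I would compute $A\odot(g\star g)$ termwise from the definition of $\odot$. In each of the three summands the Kronecker-delta piece yields one of the full contractions $A^{\mu}{}_{\alpha\gamma\mu}$, $A_{\alpha\mu\beta}{}^{\mu}$ or $A_{\alpha\beta\mu}{}^{\mu}$, which, by combining the first-pair antisymmetry, the pair symmetry, and the induced last-pair antisymmetry, equal $-(\operatorname{tr}_g A)_{\alpha\gamma}$, $(\operatorname{tr}_g A)_{\alpha\beta}$ and $0$ respectively; the metric piece yields permuted copies of $A$. Collecting everything, the permuted copies cancel in pairs using only the antisymmetry of $A$ in its first two slots, leaving $(A\odot(g\star g))_{\alpha\beta\gamma\delta} = 2\paren{(\operatorname{tr}_g A)_{\alpha\gamma}g_{\beta\delta} - (\operatorname{tr}_g A)_{\beta\gamma}g_{\alpha\delta}}$.

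The next step is to compute $(g\star g)\odot A$, where the roles of the two tensors in $\odot$ are reversed. The same trace identifications apply, and one is left with $2\paren{g_{\alpha\gamma}(\operatorname{tr}_g A)_{\beta\delta} - g_{\beta\gamma}(\operatorname{tr}_g A)_{\alpha\delta}}$ together with a sum of three permuted copies of $A$. This is precisely where the third hypothesis on $A$ enters: these three residual $A$-terms cancel only after invoking the first Bianchi identity $A_{\alpha\beta\gamma\delta} + A_{\alpha\gamma\delta\beta} + A_{\alpha\delta\beta\gamma} = 0$ in combination with the (anti)symmetries. Averaging the two identities and reordering the four terms then matches the result with $\operatorname{tr}_g A\star g$ read off from the definition of $\star$, which completes the proof.

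The computation is elementary but bookkeeping-heavy, and the main obstacle is purely organisational: keeping precise track of index ordering when raising indices against $g\star g$, and in particular recognising correctly which contraction of $A$ gives $+\operatorname{tr}_g A$, which gives $-\operatorname{tr}_g A$, and which vanishes — each of these depending on applying the three stated symmetries of $A$ in the correct order. No estimate or limiting argument is involved; the content is just the algebra of curvature-type tensors.
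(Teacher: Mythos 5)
Your proposal is correct: the displayed intermediate identities $(A\odot(g\star g))_{\alpha\beta\gamma\delta}=2\paren{(\operatorname{tr}_g A)_{\alpha\gamma}g_{\beta\delta}-(\operatorname{tr}_g A)_{\beta\gamma}g_{\alpha\delta}}$ and $((g\star g)\odot A)_{\alpha\beta\gamma\delta}=2\paren{g_{\alpha\gamma}(\operatorname{tr}_g A)_{\beta\delta}-g_{\beta\gamma}(\operatorname{tr}_g A)_{\alpha\delta}}$ check out, including the key observation that the residual $A$-terms in the second computation reduce to the cyclic Bianchi sum and so vanish only by the third hypothesis, and averaging indeed reproduces $\operatorname{tr}_g A\star g$. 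This is exactly the ``straightforward algebraic manipulation'' the paper invokes without writing out, so your argument coincides with the paper's approach (only a cosmetic caveat: in your prose the list of trace contractions for $A\odot(g\star g)$ is slightly mislabelled — the second summand actually yields $-(\operatorname{tr}_g A)_{\beta\gamma}$ and the third yields no trace term — but your final formula is the correct one).
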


    \begin{proof}
        Follows from straightforward algebraic manipulations. 
    \end{proof}

    An immediate corollary of Lemma \ref{lemma_identity_otimes} is the following: 
    \begin{equation*}
        W\odot (g\star g) + (g\star g)\odot W = 0\text{.}
    \end{equation*}
    Overall, one has: 
    \begin{equation}\label{odot_term}
        2(R\odot R) = 2\left[\paren{W + \frac{1}{n-1}g \star \tilde{T}} \odot \paren{W + \frac{1}{n-1}g\star \tilde{T}}\right] + n(g\star g) - \frac{2}{n-1}\paren{\operatorname{tr}_g(\tilde{T}\star g)\star g}
    \end{equation}
   
    Next, the Ricci term in \eqref{commutator_term} can be written as: 
    \begin{align}
    \notag2\operatorname{Ric}^\sigma{}_{[\gamma}R_{|\alpha\beta|\delta]\sigma} &= 2(-n\delta + \hat{T})^\sigma{}_{[\gamma}\paren{W - \frac{1}{2}g\star g + \frac{1}{n-1}\tilde{T}\star g}_{|\alpha\beta|\delta]\sigma}\\
    &\notag=2nW_{\alpha\beta\gamma\delta} + \frac{2n}{n-1}(\tilde{T}\star g)_{\alpha\beta\gamma\delta} + 2\hat{T}^\sigma{}_{[\gamma}W_{|\alpha\beta|\delta]\sigma} - (\hat{T}\star g)_{\alpha\beta\delta\gamma} \\
    &\quad+ \frac{2}{n-1}\hat{T}^\sigma{}_{[\gamma}(\tilde{T}\star g)_{|\alpha\beta|\delta]\sigma} -n(g\star g)_{\alpha\beta\gamma\delta}\text{.}\label{mass_weyl_computation}
    \end{align}
    The first term in \eqref{mass_weyl_computation} yields the desired mass term in \eqref{sec:system_weyl_wave}. 
    Finally, we will need to look at the last term in \eqref{wave_R_commut}. Using the contracted second Bianchi identity:
    \begin{align*}
        \nabla^\mu R_{\alpha\beta\mu\delta} &= 2\nabla_{[\alpha}\operatorname{Ric}_{\beta]\delta}\\
        &= 2\nabla_{[\alpha}\hat{T}_{\beta]\delta}\text{,}
    \end{align*} 
    one obtains immediately from \eqref{sec:aads_weyl_equation}: 
    \begin{align}\label{divergence_weyl_ricci}
        -2\nabla_{[\gamma}\nabla^\mu R_{|\alpha\beta\mu|\delta]} = -2\nabla_{[\gamma}\nabla_{|\alpha}\hat{T}_{\beta|\delta]}+2\nabla_{[\gamma}\nabla_{|\beta}\hat{T}_{\alpha|\delta]}\text{.}
    \end{align}
    Gathering now \eqref{box_term_weyl}, \eqref{odot_term}, \eqref{mass_weyl_computation} and \eqref{divergence_weyl_ricci} yields the desired Equation \eqref{sec:system_weyl_wave}. 

    The wave equation \eqref{wave_maxwell} can be obtained by differentiating the Bianchi equation for the Maxwell field: 
    \begin{equation*}
        \Box_g F_{\alpha\beta} + 2\nabla^\mu \nabla_{[\alpha} F_{\beta]\mu} =0\text{.}
    \end{equation*}
    Commuting the derivatives and using Maxwell equations, one gets: 
    \begin{align*}
        \Box_g F_{\alpha\beta} &- 2\paren{W^\sigma{}_{[\beta}{}^\mu{}_{\alpha]} + \frac{1}{n-1}(g\star \tilde{T})^\sigma{}_{[\beta}{}^\mu{}_{\alpha]}}F_{\sigma\mu}-2\delta^\sigma_{[\alpha}\delta_{\beta]}^\mu F_{\sigma\mu}\\
        &+2nF_{\alpha\beta} + \hat{T}^\sigma{}_{[\alpha}F_{\beta]\sigma}=0\text{.}
    \end{align*}
    The Bianchi identity: 
    \begin{equation}
        2W^\sigma{}_{[\beta}{}^\mu{}_{\alpha]} = -W_{\alpha\beta}{}^{\sigma\mu}\text{,}
    \end{equation}
    as well as the following: 
    \begin{equation*}
        (g\star \tilde{T})^\sigma{}_{[\beta}{}^\mu{}_{\alpha]} = -(g\star \tilde{T})_{\alpha\beta}{}^{\sigma\mu}\text{,}
    \end{equation*}
    yields the desired wave equation \eqref{wave_maxwell}.`

\subsection{Proof of Proposition \ref{prop_wave_h_ST}}\label{app:wave_h}

This equation simply follows by application two consecutive commutations from the following: 
    \begin{align*}
        \Box_g H_{\mu\alpha\beta} &= \nabla^\gamma\nabla_\gamma \nabla_\mu F_{\alpha\beta} \\
        & = \nabla^\gamma\paren{[\nabla_\gamma, \nabla_\mu]F_{\alpha\beta} +\nabla^\gamma \nabla_\mu \nabla_\gamma F_{\alpha\beta}}\\
        & = - 2\nabla^\gamma\paren{R^\sigma{}_{[\alpha|\gamma\mu}F_{\sigma|\beta]}} + \operatorname{Ric}^\sigma{}_\mu H_{\sigma\alpha\beta} - 2R^\sigma{}_{[\alpha|}{}^\gamma{}_{\mu}H_{\gamma\sigma|\beta]} + \nabla_\mu\Box_g F_{\alpha\beta}\\
        &=- 2\nabla^\gamma R^\sigma{}_{[\alpha|\gamma\mu}F_{\sigma|\beta]} - 4R^\sigma{}_{[\alpha}{}^\gamma{}_{|\mu}H_{\gamma\sigma|\beta]} + \operatorname{Ric}^\sigma{}_\mu H_{\sigma\alpha\beta} - \nabla_\mu W_{\alpha\beta}{}^{\sigma\lambda} F_{\sigma\lambda}\\
        &\hspace{14pt}-2(n-1)H_{\mu\alpha\beta}-W_{\alpha\beta}{}^{\sigma\lambda}H_{\mu\sigma\lambda} + \nabla_\mu\paren{\frac{2}{n-1}\tilde{T}^\sigma{}_{[\alpha}F_{\beta]\sigma}-2\hat{T}^\sigma{}_{[\alpha}F_{\beta]\sigma}}\text{.}
    \end{align*}
    One can work out each of these terms individually: 
    \begin{align*}
        &- 2\nabla^\gamma R^\sigma{}_{[\alpha|\gamma\mu}F_{\sigma|\beta]} = - 2\nabla^\gamma W^\sigma{}_{[\alpha|\gamma\mu}F_{\sigma|\beta]} - \frac{2}{n-1}(g\star \nabla^\gamma \tilde{T})^\sigma{}_{[\alpha|\gamma\mu}F_{\sigma|\beta]}\text{,}\\
        &\begin{aligned}- 4R^\sigma{}_{[\alpha}{}^\gamma_{|\mu}H_{\gamma\sigma|\beta]} =& -4W^\sigma{}_{[\alpha}{}^\gamma{}_{|\mu}H_{\gamma\sigma|\beta]} + 2(g\star g)^\sigma{}_{[\alpha}{}^\gamma{}_{|\mu}H_{\gamma\sigma|\beta]} -\frac{4}{n-1}(g\star \tilde{T})^\sigma{}_{[\alpha}{}^\gamma{}_{|\mu}H_{\gamma\sigma|\beta]} \\
        =& -4W^\sigma{}_{[\alpha}{}^\gamma{}_{|\mu}H_{\gamma\sigma|\beta]} +4g_{\mu[\alpha|}H^\sigma{}_{\sigma|\beta]} - 4H_{[\alpha|\mu|\beta]}  \\
        &- \frac{4}{n-1} \paren{\tilde{T}_{\mu[\alpha|}H^\sigma{}_{\sigma|\beta]} + g_{\mu[\alpha|}\tilde{T}^{\sigma\gamma}H_{\gamma\sigma|\beta]} - \tilde{T}^\gamma{}_{[\alpha|}H_{\gamma\mu|\beta]} - \tilde{T}^\sigma{}_\mu H_{[\alpha|\sigma|\beta]}}\\
        =& -4W^\sigma{}_{[\alpha}{}^\gamma_{|\mu}H_{\gamma\sigma|\beta]} - 2H_{\mu\alpha\beta}  - \frac{4}{n-1}\paren{g_{\mu[\alpha}\tilde{T}^{\sigma\gamma}H_{|\gamma\sigma|\beta]} -\tilde{T}^{\gamma}{}_{[\alpha|}H_{\gamma\mu|\beta]}} \\
        &+ \frac{2}{n-1}\tilde{T}^\sigma{}_\mu H_{\sigma\alpha\beta} \end{aligned}\\
        &\operatorname{Ric}^\sigma{}_\mu H_{\sigma\alpha\beta} = - nH_{\mu\alpha\beta} + \hat{T}^\sigma{}_\mu H_{\sigma\alpha\beta}\text{,}
    \end{align*}
    where we used Maxwell equations, as well as the Bianchi identity for $H$, leading to the desired wave equation \eqref{wave_higher_derivative}.

\subsection{Proof of Proposition \ref{prop_wave_maxwell}}\label{app:prop_wave_maxwell}
To prove this, we will use Proposition \ref{sec:aads_derivatives_vertical}, allowing one to decompose the wave equation, as well as spacetime derivatives, vertically. Typically, the two wave equations follow from: 
    \begin{align}
        &\notag\Box_g F_{ab} = \rho^{-2}\bar{\Box}(\rho\ms{f}^1)_{ab} -4\ul{\ms{h}}^2_{[ab]} -2\rho^{-1}\ms{f}^1_{ab} + \rho\mc{S}\paren{\ms{g}; \ms{m}, \ul{\ms{h}}^2}_{ab} + \rho\mc{S}\paren{\ms{g}; \ms{Dm}, \ms{f}^0}_{ab}\\
        &\label{intermediate_wave_f_1}\hspace{40pt} +\mc{S}\paren{\ms{g}; \ms{m}, \ms{f}^1}_{ab} +\rho \mc{S}\paren{\ms{g}; (\ms{m})^2, \ms{f}^1}_{ab}\\
        &\notag\Box_g F_{\rho a} = \rho^{-2}\bar{\Box}(\rho \ms{f}^0)_a + 2\ms{tr}_{\ms{g}}\ul{\ms{h}}^0_a -(n-1)\rho^{-1}\ms{f}^{0}_a + \rho \mc{S}\paren{\ms{g}; \ms{m}, \ul{\ms{h}}^0}_a + \rho \mc{S}\paren{\ms{g}; \ms{Dm},\ms{f}^1}_a \\
        & \label{intermediate_wave_f_0}\hspace{40pt} +\mc{S}\paren{\ms{g}; \ms{m}, \ms{f}^0}_a + \rho\mc{S}\paren{\ms{g}; (\ms{m})^2, \ms{f}^0}_a\text{.}
    \end{align}
    Furthermore, one can write from \eqref{wave_power}: 
    \begin{align*}
        \rho^{-2}\bar{\Box}(\rho\ms{f}^i) = \rho^{-1}\bar{\Box} \ms{f}^i + 2\bar{\ms{D}}_\rho \ms{f}^i - (n-1)\rho^{-1}\ms{f}^i + \rho \mc{S}\paren{\ms{g}; \ms{m}, \ms{f}^i}\text{,}\qquad i=1, 2\text{.}
    \end{align*}
    The term $\ol{\ms{D}}_\rho \ms{f}^i$ needs to be rewritten using the transport equations \eqref{transport_f0_FG}, \eqref{transport_f1_FG} to give: 
    \begin{align}
        &\label{intermediate_wave_f_1_2}\rho^{-2}\bar{\Box}(\rho\ms{f}^1)_{ab} = \rho^{-1}\bar{\Box}\ms{f}^1_{ab} + 2\rho^{-1}\ms{f}^1_{ab} + 4\ul{\ms{h}}^2_{[ab]} - (n-1)\rho^{-1}\ms{f}^1_{ab} +\mc{S}\paren{\ms{g}; \ms{m}, \ms{f}^1}_{ab}\text{,}\\
        &\label{intermediate_wave_f_0_2}\rho^{-2}\bar{\Box}(\rho \ms{f}^0)_a = \rho^{-1}\bar{\Box}\ms{f}^0_a + 2(n-2)\rho^{-1}\ms{f}^0_a -2 \ms{tr}_{\ms{g}}\ul{\ms{h}}^0_a - (n-1)\rho^{-1}\ms{f}^0_a +\mc{S}\paren{\ms{g}; \ms{m}, \ms{f}^0}_a\text{.}
    \end{align}
    Injecting \eqref{intermediate_wave_f_0_2} into \eqref{intermediate_wave_f_0}, as well as \eqref{intermediate_wave_f_1_2} into \eqref{intermediate_wave_f_1} yield the left-hand sides of \eqref{wave_f_1} and \eqref{wave_f_0}. 
    The remainder terms $\mc{R}_{W, \ms{f}^i}$ can be fully obtained by decomposing the right-hand side of \eqref{wave_maxwell}: 
    \begin{align*}
        &(\Box_g +  2(n-1))F_{ab} = \rho \mc{S}\paren{\ms{g}; \ms{w}^1, \ms{f}^0}_{ab} + \rho \mc{S}\paren{\ms{g}; \ms{w}^0, \ms{f}^1}_{ab} + \rho\mc{S}\paren{\ms{g};  (\ms{f}^1)^3}_{ab}\\
        &\hspace{110pt}+ \rho \mc{S}\paren{\ms{g}; (\ms{f}^0)^2, \ms{f}^1}_{ab}\text{,}\\
        &(\Box_g + 2(n-1))F_{\rho a} = \rho \mc{S}\paren{\ms{g}; \ms{w}^1, \ms{f}^1}_a + \rho\mc{S}\paren{\ms{g}; \ms{w}^2, \ms{f}^0}_a + \rho\mc{S}\paren{\ms{g}; (\ms{f}^1)^2, \ms{f}^0}_a \\
        &\hspace{110pt}+ \rho \mc{S}\paren{\ms{g}; (\ms{f}^0)^3}_a\text{,}
    \end{align*}
    giving the expected remainder terms. 

    \subsection{Proof of Proposition \ref{prop_wave_h}}\label{app:prop_wave_h}

    The proof of this proposition follows from, in the same spirit as Proposition \ref{prop_wave_maxwell}, decomposing the left- and right-hand side of \eqref{wave_higher_derivative}. In order to control the vertical fields in the Carleman estimates, it is essential to formulate vertical wave equations where only lower-order terms act as sources. We will show that the specific selection of \(\ul{\ms{h}}^0\text{, } \ul{\ms{h}}^2\) leads to such a system.

    Note that an alternative method would be to derive the wave equation by studying the commutator $\left[\ol{\Box}, \ms{D}\right]\ms{f}^i$, $i=0, 1$. This method would indeed use the vertical formalism in a more optimal manner, although the amount of computations would typically remain the same.
    
    Starting with the left-hand side, from Proposition \ref{sec:aads_derivatives_vertical}:
    \begin{align*}
        &(\Box_g + 3n)H_{abc} = \rho^{-3}\ol{\Box}(\rho^2 \ms{h}^0)_{abc} - 2 \ms{D}_{a}\ms{h}^1_{bc} -4\ms{D}_{[b}\ms{h}^2_{|a|c]} + 3(n-1)\rho^{-1}\ms{h}^0_{abc} + 4\ms{g}_{a[b}\ms{h}^3_{c]}\\
        &\hspace{95pt}+ \rho \mc{S}\paren{\ms{g}; \ms{m},\ms{Dh}^1}_{abc} + \rho \mc{S}\paren{\ms{g};\ms{Dm}, \ms{h}^1}_{abc} + \rho \mc{S}\paren{\ms{g}; \ms{m},\ms{Dh}^2}_{abc} \\
        &\hspace{95pt}+ \rho \mc{S}\paren{\ms{g};\ms{Dm}, \ms{h}^2}_{abc}+\mc{S}\paren{\ms{g}; \ms{m}, \ms{h}^0}_{abc} + \rho \mc{S}\paren{\ms{g}; (\ms{m})^2, \ms{h}^0}_{abc} \\
        &\hspace{95pt} + \rho \mc{S}\paren{\ms{g}; (\ms{m})^2, \ms{h}^3}_{abc}+\mc{S}\paren{\ms{g}; \ms{m}, \ms{h}^3}_{abc}\text{,}\\
        &(\Box_g + 3n)H_{a\rho b} = \rho^{-3}\ol{\Box}(\rho^2 \ms{h}^2)_{ab} + 2\ms{g}^{cd}\ms{D}_c \ms{h}^0_{adb} + 2\ms{D}_a \ms{tr}_\ms{g}\ms{h}^0_b +2(n-1)\rho^{-1}\ms{h}^2_{ab} - 2\rho^{-1}\ms{h}^1_{ab} + 2\rho^{-1}\ms{h}^2_{ab}\\
        &\hspace{95pt} +\rho\mc{S}\paren{\ms{g}; \ms{m}, \ms{Dh}^0}_{ab} + \rho \mc{S}\paren{\ms{g}; \ms{Dm}, \ms{h}^0}_{ab} + \rho\mc{S}\paren{\ms{g}; \ms{m}, \ms{D}\ms{h}^3}_{ab} \\
        &\hspace{95pt}+ \rho\mc{S}\paren{\ms{g}; \ms{D}\ms{m}, \ms{h}^3}_{ab}  + \mc{S}\paren{\ms{g};\ms{m}, \ms{h}^2}_{ab} + \rho\mc{S}\paren{\ms{g}; \ms{m}^2, \ms{h}^2}_{ab} \\
        &\hspace{95pt}+\rho\mc{S}\paren{\ms{g}; \ms{m}^2, \ms{h}^1}_{ab}+ \mc{S}\paren{\ms{g}; \ms{m}, \ms{h}^1}_{ab}\text{.}
    \end{align*}
    In order to obtain the correct lower order terms \eqref{remainder_h_0} and \eqref{remainder_h_2}, one has to express $\ms{h}^0$, $\ms{h}^1$, $\ms{h}^2$ and $\ms{h}^3$ in terms of $\ul{\ms{h}}^0$ and $\ul{\ms{h}}^2$ using Proposition \ref{h_as_Df}. Let us justify how this can be done, with the following examples:
    \begin{align*}
        \rho\mc{S}\paren{\ms{g}; \ms{m}, \ms{Dh}^1}, \rho\mc{S}\paren{\ms{g}; \ms{m}, \ms{Dh}^2} &= \rho\mc{S}\paren{\ms{g}; \ms{m}, \ms{D}\ul{\ms{h}}^{2}}  + \mc{S}\paren{\ms{g}; \ms{m}, \ul{\ms{h}}^0} +  \rho\mc{S}\paren{\ms{g};\ms{m}, \ms{Dm}, \ms{f}^1} + \rho\mc{S}\paren{\ms{g}; (\ms{m})^2, \ul{\ms{h}}^0}\text{,}\\
        \rho\mc{S}\paren{\ms{g}; \ms{m}, \ms{Dh}^0}, \rho\mc{S}\paren{\ms{g}; \ms{m}, \ms{D}{\ms{h}}^3}&=\rho\mc{S}\paren{\ms{g}; \ms{m}, \ms{D}\ul{\ms{h}}^0} + \mc{S}\paren{\ms{g}; \ms{m}, \ul{\ms{h}}^0} + \rho\mc{S}\paren{\ms{g}; \ms{m},\ms{Dm}, \ms{f}^0} + \rho\mc{S}\paren{\ms{g}; (\ms{m})^2, \ul{\ms{h}}^2}\text{,}
    \end{align*}
    and so on. 
    From Proposition \ref{sec:aads_derivatives_vertical}:
    \begin{align}
        \notag\rho^{-3}\ol{\Box}(\rho^2\ms{h}^i) = \rho^{-1}\ol{\Box}\ms{h}^i + 4\ol{\ms{D}}_\rho \ms{h}^i -2(n-2)\rho^{-1}\ms{h}^i + \rho\mc{S}\paren{\ms{g}; \ms{m},\ms{h}^i}\qquad i=0,2\text{,}
    \end{align}
    together with the transport equations \eqref{sec:syst_transport_h_0} and \eqref{sec:syst_transport_h_2}, one has: 
    \begin{align}
        \rho^{-3}\ol{\Box}(\rho^2\ms{h}^0)_{abc} = &\,\rho^{-1}\ol{\Box}\ms{h}^0_{abc} -2n\rho^{-1}\ms{h}^0_{abc} + 8 \rho^{-1}\ms{g}_{a[b}\ms{tr}_{\ms{g}}\ms{h}^0_{c]} + 8\ms{D}_a\ms{h}^2_{[bc]} + 8 \rho^{-2}\ms{g}_{a[b}\ms{f}^0_{c]}\\
        &\notag+\mc{S}\paren{\ms{g}; \ms{m}, \ul{\ms{h}}^0}_{abc} + \rho^{-1}\mc{S}\paren{\ms{g}; \ms{m}, \ms{f}^0}_{abc}+ \mc{S}\paren{\ms{g}; \ms{w}^2, \ms{f}^0}_{abc} +\mc{S}\paren{\ms{g}; (\ms{m})^2, \ms{f}^0}_{abc} \\
        &\notag+ \mc{S}\paren{\ms{g}; \ms{w}^1, \ms{f}^1}_{abc}+\mc{S}\paren{\ms{g}; (\ms{f}^0)^3}_{abc} + \mc{S}\paren{\ms{g}; (\ms{f}^1)^2, \ms{f}^0}_{abc}\text{,}\\
        \rho^{-3}\ol{\Box}(\rho^2\ms{h}^2)_{ab} =& \,\rho^{-1}\ol{\Box}\ms{h}^2_{ab} -2n\rho^{-1}\ms{h}^2_{ab} + 8\rho^{-1}\ms{h}^2_{[ab]} - 4\ms{D}_a\ms{tr}_{\ms{g}}\ms{h}^0_b - 4\rho^{-2}\ms{f}^1_{ab}\\
        &\notag +\mc{S}\paren{\ms{g}; \ms{m}, \ul{\ms{h}}^{2}}_{ab} + \mc{S}\paren{\ms{g}; \ms{w}^2, \ms{f}^1}_{ab} + \mc{S}\paren{\ms{g}; \ms{w}^1, \ms{f}^0}+\rho^{-1}\mc{S}\paren{\ms{g}; \ms{m}, \ms{f}^1}_{ab}\\
        &\notag+ \mc{S}\paren{\ms{g};(\ms{f}^0)^2, \ms{f}^1}_{ab}+ \mc{S}\paren{\ms{g};  (\ms{f}^1)^3}_{ab}+\mc{S}\paren{\ms{g}; (\ms{m})^2, \ms{f}^1}_{ab}\text{.}
    \end{align}

    At this stage, ignoring lower-order terms in the equations, the wave equation for both $\ms{h}^0$ and $\ms{h}^2$ still contain top-order source terms: 
    \begin{align}
        &\notag\rho\cdot (\Box_g + 3n)H_{abc}=\ol{\Box}\ms{h}^0_{abc} +(n-3)\ms{h}^{0}_{abc}+ 4\rho\ms{D}_a \ms{h}^2_{[bc]} - 4\rho\ms{D}_{[b}\ms{h}^2_{|a|c]} + 4\ms{g}_{a[b}\ms{tr}_{\ms{g}}\ms{h}^0_{c]} +8\rho^{-1}\ms{g}_{a[b}\ms{f}^0_{c]} + \ms{E}^0_{abc}\text{,} \\
        &\label{intermediate_h_2}\rho\cdot(\Box_g + 3n)H_{a\rho b}=\ol{\Box}\ms{h}^2_{ab} + 4\ms{h}^2_{[ab]} -2\rho \ms{D}_a \ms{tr}_{\ms{g}}\ms{h}^0_{b} +2\rho \ms{g}^{cd}\ms{D}_c\ms{h}^0_{adb} -4
        \rho^{-1}\ms{f}^1_{ab} + \ms{E}^2_{ab}\text{,}
    \end{align}
    where we used the expressions for $\ms{h}^3$ and $\ms{h}^1$ in terms of $\ms{h}^0$ and $\ms{h}^2$, and where $\ms{E}^0$ and $\ms{E}^2$ contain the remainder terms considered above. 
    
    We will therefore need the following identities: 
    \begin{lemma}\label{lemma_identity_h}
        The following identities hold for the vertical derivatives of $\ms{h}^0$ and $\ms{h}^2$, with respect to $(U, \varphi)$: 
         \begin{align}
        &\notag \rho\ms{g}^{cd}\ms{D}_a \ms{h}^0_{cdb} = \rho\ms{g}^{cd}\ms{D}_c \ms{h}^0_{adb} - (n-2)\ms{h}^2_{ab}+(n-2)\rho^{-1}\ms{f}^1_{ab}+  \rho\mc{S}\paren{\ms{g}; \ms{f}^1, (\ms{f}^0)^2}_{ab} +\rho\mc{S}\paren{\ms{g}; \ms{m}, \ms{h}^2}_{ab} \\
        &\hspace{100pt}+\rho\mc{S}\paren{\ms{g};  (\ms{f}^1)^3}+\rho \mc{S}\paren{\ms{g}; \ms{w}^1, \ms{f}^0}_{ab}+\rho\mc{S}\paren{\ms{g}; \ms{w}^0, \ms{f}^1}_{ab}\text{,}\label{lemma_identity_h_0}\\
        &\notag2\rho\ms{D}_{[b} \ms{h}^2_{|a|c]}+\ms{h}^0_{abc} =\; 2\rho\ms{D}_a \ms{h}^2_{[bc]} - 2\rho^{-1}\ms{g}_{a[b}\ms{f}^0_{c]}+\rho\mc{S}\paren{\ms{g}; \ms{w}^1, {\ms{f}}^1}_{abc}+\rho\mc{S}\paren{\ms{g}; \ms{m}, \ms{h}^0}_{abc} \\
        &\hspace{100pt}+ \rho\mc{S}\paren{\ms{g}; \ms{w}^0, \ms{f}^0}_{abc}+ \rho\mc{S}\paren{\ms{g}; ({\ms{f}}^1)^2, {\ms{f}}^0}_{abc} +\rho\mc{S}\paren{\ms{g}; ({\ms{f}}^0)^3}_{abc}\label{lemma_identity_h_2}\text{.}
    \end{align}
    \end{lemma}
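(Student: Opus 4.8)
\textbf{Proof proposal for Lemma \ref{lemma_identity_h}.}

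The plan is to obtain both identities \eqref{lemma_identity_h_0} and \eqref{lemma_identity_h_2} by commuting vertical derivatives of $\ms{h}^0$ and $\ms{h}^2$, and then using the algebraic relations \eqref{property_h}, \eqref{property_h_0} satisfied by the spacetime field $H = \nabla F$ together with the expressions of $\ms{h}^i$ in terms of $\ul{\ms{h}}^0$, $\ul{\ms{h}}^2$ from Proposition \ref{h_as_Df}. The starting point is that the antisymmetrised parts of $\ms{h}^0$ vanish identically (cf. \eqref{property_h_0}), namely $\ms{h}^0_{[abc]} = 0$, and that $\ms{h}^1_{ab} = 2\ms{h}^2_{[ab]}$, $\ms{h}^3_a = -\ms{tr}_{\ms{g}}\ms{h}^0_a$; these encode the Bianchi identity and the Maxwell equation for $F$ at the level of the naive vertical decomposition of $H$. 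All curvature terms that appear will be converted to Weyl components $\ms{w}^0, \ms{w}^1$ via \eqref{sec:aads_weyl_equation} and the decomposition \eqref{w_0_FG}--\eqref{w_2_FG}, with the stress-energy contributions expanded into the $\mc{S}(\ms{g}; \ms{f}, \ms{f})$--type remainders using Definition \ref{def_Maxwell_FG_aAdS}.

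For \eqref{lemma_identity_h_0}, the idea is to start from $\ms{g}^{cd}\ms{D}_a\ms{h}^0_{cdb}$ and commute $\ms{D}_a$ past the contraction, writing $\ms{D}_a\ms{h}^0_{cdb} = \ms{D}_c\ms{h}^0_{adb} + (\ms{D}_a\ms{h}^0_{cdb} - \ms{D}_c\ms{h}^0_{adb})$. The difference term is handled by first replacing $\ms{h}^0$ by its symmetrisation identity $\ms{h}^0_{[cd]b}$--type relations coming from $\ms{h}^0_{[abc]}=0$ (so that $\ms{h}^0_{abc} + \ms{h}^0_{bca} + \ms{h}^0_{cab} = 0$), which lets one trade the antisymmetrised derivative for terms involving $\ms{h}^2$ and $\ms{h}^1$, and second by commuting the vertical covariant derivatives, producing a vertical Riemann curvature $\ms{R}$ contracted with $\ms{h}^0$. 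That curvature term is then rewritten, via \eqref{w_0_FG} and \eqref{w_2_FG}, as a combination of $\rho^{-1}\ms{g}\star\ms{m}$ (which after contraction with $\ms{h}^0 \sim \ul{\ms{h}}^0 + \rho^{-1}\ms{g}\cdot\ms{f}^0$ regenerates the $(n-2)\rho^{-1}\ms{f}^1_{ab}$ term and the $-(n-2)\ms{h}^2_{ab}$ term), plus genuine $\ms{w}^0$, $\ms{w}^1$ contributions and the cubic-in-$\ms{f}$ error terms. Converting $\ms{h}^0$, $\ms{h}^1$ to $\ul{\ms{h}}^0$, $\ul{\ms{h}}^2$ via Proposition \ref{h_as_Df} at the end produces the stated right-hand side (the $\ms{m}$--dependence in Proposition \ref{h_as_Df} feeds into the $\rho\mc{S}(\ms{g}; \ms{m}, \ms{h}^2)$ remainder).

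For \eqref{lemma_identity_h_2}, the strategy is similar but shorter: one expands the antisymmetrised double derivative $2\rho\ms{D}_{[b}\ms{h}^2_{|a|c]}$ by commuting $\ms{D}_b$ and $\ms{D}_c$ (again picking up a vertical curvature term contracted with $\ms{h}^2$, hence Weyl and stress-energy contributions), and uses $\ms{h}^1 = 2\ms{h}^2_{[\cdot\cdot]}$ together with $\ms{h}^0_{[abc]}=0$ to rewrite $\ms{h}^0_{abc}$ in terms of $\ms{h}^0_{a[bc]}$ and thus $\ms{D}_a\ms{h}^2_{[bc]}$ (this is where the transport relation between $\ms{h}^0$ and $\ms{h}^2$, or more directly the antisymmetry structure of $H$, is used). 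The singular term $-2\rho^{-1}\ms{g}_{a[b}\ms{f}^0_{c]}$ emerges from the $\rho^{-1}\ms{g}\cdot\ms{f}^0$ piece of $\ms{h}^0$ in Proposition \ref{h_as_Df} combined with the antisymmetrisation. I expect the main obstacle to be bookkeeping rather than conceptual: one must carefully track which index placements and which antisymmetrisations are genuinely present, keep the curvature-to-Weyl substitutions consistent with the conventions of \eqref{sec:aads_weyl_equation} and the $\star$--operation, and verify that all the $\rho$--weights come out exactly as in \eqref{intermediate_h_2} so that, when substituted back, the top-order terms in the wave equations for $\ms{h}^0$ and $\ms{h}^2$ reorganise into the clean system of Proposition \ref{prop_wave_h}. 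Since only first vertical derivatives of $\ms{h}^0, \ms{h}^2$ appear and all higher-order pieces are absorbed into $\mc{S}$--remainders with the claimed decay, no genuinely new estimate is needed — the identities are purely algebraic-differential consequences of the definitions, Proposition \ref{sec:aads_commutation_Lie_D}, Proposition \ref{sec:aads_derivatives_vertical}, Proposition \ref{h_as_Df}, and the Bianchi/Maxwell structure \eqref{property_h}--\eqref{property_h_0}.
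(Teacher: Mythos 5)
Your overall philosophy (commute derivatives, convert curvature into Weyl plus matter via the decomposition formulas, then use Proposition \ref{h_as_Df}) is the right one, but the specific mechanism you describe does not yet prove the lemma, and it misidentifies where the main terms come from. First, for \eqref{lemma_identity_h_0} the difference $\ms{D}_a\ms{h}^0_{cdb}-\ms{D}_c\ms{h}^0_{adb}$ is \emph{not} a commutator of vertical derivatives acting on a single tensor: it only becomes one after you substitute $\ms{h}^0=\ul{\ms{h}}^0+\dots=\ms{D}\ms{f}^1+\dots$ at the start, and the vertical curvature it then produces acts on $\ms{f}^1$, not on $\ms{h}^0$; since you postpone Proposition \ref{h_as_Df} to ``the end'', this step has no content as written. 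Second, the vertical Riemann tensor contains no background $\rho^{-2}\,\ms{g}\star\ms{g}$ piece (it was subtracted in the Gauss relation \eqref{Gauss_eq}), and contracting the $\tfrac{1}{2\rho}\,\ms{g}\star\ms{m}$ piece of \eqref{w_0_FG} with anything only yields $\mc{S}(\ms{g};\ms{m},\cdot)$ remainders; it cannot ``regenerate'' the exact main terms $-(n-2)\ms{h}^2_{ab}$ and $(n-2)\rho^{-1}\ms{f}^1_{ab}$, nor the singular $-2\rho^{-1}\ms{g}_{a[b}\ms{f}^0_{c]}$ in \eqref{lemma_identity_h_2}. Those coefficients are the entire content of the lemma, since they must cancel exactly against \eqref{intermediate_h_2} in the derivation of Proposition \ref{prop_wave_h}, so locating them in the wrong place is not a bookkeeping issue. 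Third, the Bianchi identity \eqref{property_h_0} involves only the purely vertical components of $H$ and cannot by itself trade an antisymmetrised derivative for $\ms{h}^1$, $\ms{h}^2$ terms; the coupling of the vertical components to the $\rho$-components comes from the Christoffel symbols $\Gamma^\rho_{ab}=\rho^{-1}\ms{g}_{ab}-\tfrac{1}{2}\ms{m}_{ab}$ in the vertical decomposition of spacetime derivatives, not from that identity.

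The paper argues one level up: it commutes the \emph{spacetime} derivatives, $\ms{g}^{cd}\nabla_a\nabla_cF_{db}=\ms{g}^{cd}\paren{[\nabla_a,\nabla_c]F_{db}+\nabla_c\nabla_aF_{db}}$ and $2\rho\nabla_{[b}\nabla_{|a}F_{\rho|c]}=2\rho[\nabla_{[b},\nabla_{|a}]F_{\rho|c]}+2\rho\nabla_a\nabla_{[b}F_{|\rho|c]}$, and decomposes each ordering with Proposition \ref{sec:aads_derivatives_vertical}. The $-(n-2)\ms{h}^2_{ab}$ arises from the mismatch of the $\rho^{-1}$ Christoffel contributions between the two orderings, while $(n-2)\rho^{-1}\ms{f}^1_{ab}$ and $-2\rho^{-1}\ms{g}_{a[b}\ms{f}^0_{c]}$ come from the background $-\tfrac{1}{2}g\star g$ part of the spacetime Riemann tensor in the commutator, rewritten via \eqref{sec:aads_weyl_equation}; the Weyl and cubic-in-$\ms{f}$ remainders come from the $W$ and $\tilde{T}$ parts, and the identity $2\ms{h}^0_{[b|a|c]}=\ms{h}^0_{abc}$ is used only to merge the two $\ms{h}^0$ contributions in the second identity. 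A purely vertical version along your lines could probably be repaired, but you would need to (i) insert Proposition \ref{h_as_Df} at the outset so that a genuine vertical commutator on $\ms{f}^1$ (resp.\ $\ms{f}^0$) appears, (ii) handle the divergence $\ms{D}\cdot\ms{f}^0$ through the constraint \eqref{eq_Df}, and (iii) verify that the $\ms{Dm}$ terms you create occur only antisymmetrised, so they can be recombined into $\ms{w}^1$ via \eqref{w_1_FG}; otherwise you generate remainders of the form $\rho\,\mc{S}(\ms{g};\ms{Dm},\ms{f}^0)$ that are not on the lemma's list. None of these steps appear in your sketch, so as it stands there is a genuine gap precisely at the terms the lemma is designed to pin down.
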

    \begin{proof}
        The proof of this lemma is a simple consequence of some appropriate commutations of derivatives. Typically, starting with \eqref{lemma_identity_h_0}: 
        \begin{align*}
            \ms{g}^{cd}\nabla_a \nabla_c F_{db} = \ms{g}^{cd}\paren{[\nabla_a, \nabla_c]F_{db} + \nabla_c \nabla_a F_{db}}\text{.}
        \end{align*}
        The left-hand side can be written in terms of vertical tensors using Proposition \ref{sec:aads_derivatives_vertical}: 
        \begin{align*}
            \rho^2\ms{g}^{cd}\nabla_a \nabla_c F_{db} = \rho\ms{D}_a\ms{tr}_{\ms{g}}\ms{h}^0_{b} - \ms{h}^1_{ab} - \ms{h}^2_{ab} + \rho\mc{S}\paren{\ms{g}; \ms{m}, \ms{h}^{2}}_{ab}+ \rho\mc{S}\paren{\ms{g}; \ms{m}, \ms{h}^{1}}_{ab} \text{,}
        \end{align*}
        and on the right-hand side: 
        \begin{align*}
            \rho^2\ms{g}^{cd}\nabla_c\nabla_aF_{db} = \rho\ms{g}^{cd}\ms{D}_c\ms{h}^0_{adb} - \ms{h}^1_{ab} - (n-1)\ms{h}^2_{ab}+ \rho\mc{S}\paren{\ms{g}; \ms{m}, \ms{h}^2}_{ab}+ \rho\mc{S}\paren{\ms{g}; \ms{m}, \ms{h}^{1}}_{ab}\text{,}
        \end{align*}
        while the commutator term yields:
        \begin{align*}
            \rho^2\ms{g}^{cd}[\nabla_a, \nabla_c]F_{db} =&\rho^{-1}(n-2)\ms{f}^1_{ab} + \rho \mc{S}\paren{\ms{g}; \ms{f}^1, (\ms{f}^0)^2}_{ab}  + \rho\mc{S}\paren{\ms{g};  (\ms{f}^1)^3}_{ab}\\
            &+\rho\mc{S}(\ms{g};\ms{w}^1, \ms{f}^0)_{ab} + \rho\mc{S}\paren{\ms{g}; \ms{w}^0, \ms{f}^1}_{ab}\text{,}
        \end{align*}
        where we used \eqref{sec:aads_weyl_equation}, therefore proving the identity. The proof for $\ms{h}^2$ is based on the commutation of the following derivatives: 
        \begin{equation*}
            2\rho\nabla_{[b}\nabla_{|a} F_{\rho| c]} = 2\rho[\nabla_{[b}, \nabla_{|a}]F_{\rho|c]} + 2\rho\nabla_a\nabla_{[b}F_{|\rho|c]}\text{,} 
        \end{equation*}
        where each of the terms can be written as: 
        \begin{align*}
            &2\rho\nabla_{[b}\nabla_{|a} F_{\rho| c]} = 2\ms{D}_{[b}\ms{h}^2_{|a|c]} + 2\rho^{-1}\ms{h}^0_{a[bc]} +2\rho^{-1}\ms{g}_{a[b}\ms{tr}_{\ms{g}}\ms{h}^0_{c]} + \mc{S}\paren{\ms{g}; \ms{m}, \ms{h}^0}_{abc}\text{,}\\
            &\begin{aligned}2\rho[\nabla_{[b}, \nabla_{|a}]F_{\rho|c]} =& -2\rho^{-2}\ms{g}_{a[b}\ms{f}^0_{c]} + \mc{S}\paren{\ms{g}; (\ms{f}^0)^3}_{abc} + \mc{S}\paren{\ms{g}; (\ms{f}^1)^2, \ms{f}^0}_{abc} \\
            &+\mc{S}\paren{\ms{g}; \ms{w}^1, \ms{f}^1}_{abc} + \mc{S}\paren{\ms{g}; \ms{w}^0, \ms{f}^0}_{abc}\text{,}
            \end{aligned}\\
            &2\rho\nabla_a\nabla_{[b}F_{|\rho|c]} = 2\ms{D}_{a}\ms{h}^2_{[bc]} +2\rho^{-1}\ms{h}^0_{[b|a|c]}+2 \rho^{-1}\ms{g}_{a[b}\ms{tr}_{\ms{g}}\ms{h}^0_{c]} + \mc{S}\paren{\ms{g}; \ms{m}, \ms{h}^0}_{abc}\text{.}
        \end{align*}
        Note that, using the Bianchi identity for $F$: 
        \begin{equation*}
            2\ms{h}^0_{[b|a|c]} = \ms{h}^0_{abc}\text{,}
        \end{equation*}
        yielding the desired Equation \eqref{lemma_identity_h_2}. 
    \end{proof}

    Using Lemma \ref{lemma_identity_h}, one immediately obtains the following wave equation for $\ms{h}^0$: 
    \begin{align*}
        \rho\cdot \paren{\Box_g + 3n}H_{abc} = \ol{\Box}\ms{h}^0_{abc} + (n-1)\ms{h}^0_{abc} +12\rho^{-1}\ms{g}_{a[b}\ms{f}^0_{c]} + 4\ms{g}_{a[b}\ms{tr}_{\ms{g}}\ms{h}^0_{c]} +\tilde{\ms{E}}^0_{abc} \text{,}
    \end{align*}
    with $\tilde{\ms{E}}^0$ containing $\ms{E}^0$, as well as the remainders from \eqref{lemma_identity_h_2}. 
    One is however interested in the wave equation for $\ul{\ms{h}}^0$\footnote{Note that one needs to look at the wave equation satisfied by $\mc{S}\paren{ \ms{m}, \ms{f}^0}$. It can easily be checked that such a contribution is contained in the lower-order terms.}: 
    \begin{align*}
        \ol{\Box}{\ul{\ms{h}}^0}_{abc} &= \ol{\Box}\paren{\ms{h}^0_{abc} + 2\rho^{-1}\ms{g}_{a[b}\ms{f}^0_{c]}+ \mc{S}\paren{\ms{m}, \ms{f}^0}_{abc}}\\
        &=-(n-1){\ms{h}}^0_{abc} - 12\rho^{-1}\ms{g}_{a[b}\ms{f}^0_{c]} -4\ms{g}_{a[b}\ms{tr}_{\ms{g}}\ms{h}^0_{c]}+ 2\ms{g}_{a[b}\ol{\Box}(\rho^{-1}\ms{f}^0)_{c]}\\
        &\hspace{50pt}+ \rho\cdot (\Box_g + 3n)H_{bca} + \tilde{\ms{E}}^0_{abc} \\
        &=-(n-1){\ms{h}}^0_{abc} - 12\rho^{-1}\ms{g}_{a[b}\ms{f}^0_{c]} -4\ms{g}_{a[b}\ms{tr}_{\ms{g}}\ms{h}^0_{c]}+ 2\ms{g}_{a[b}\left(-4(n-2)\rho^{-1}\ms{f}^0_{c]} +2\ms{D}\cdot \ms{f}^1_{c]} + (n+1)\rho^{-1}\ms{f}^0_{c]}\right.\\
        &=-(n-1)\paren{\ms{h}^0_{abc} +2\rho^{-1}\ms{g}_{a[b}\ms{f}^0_{c]}}+\rho \cdot (\Box_g + 3n)H_{abc} + \ms{g}_{a[b}(\mc{R}_{W, \ms{f}^0})_{c]} + \tilde{\ms{E}}^0_{abc}\\
        &=-(n-1)\ul{\ms{h}}^0_{abc} +\rho \cdot (\Box_g + 3n)H_{abc} + \ms{g}_{a[b}(\mc{R}_{W, \ms{f}^0})_{c]} + \tilde{\ms{E}}^0_{abc}\text{,}
    \end{align*}
    where we used \eqref{wave_power}, Proposition \ref{h_as_Df} and the wave equation for $\ms{f}^0$ \eqref{wave_f_0}, as well as the transport equation \eqref{transport_f0_FG}. 
   It remains now to compute the lower order terms found in $\rho\cdot (\Box_g + 3n)H_{abc}$, Equation \eqref{wave_higher_derivative}. We will first look at nonlinearities of the form $\nabla W \cdot F$: 
    \begin{align*}
        &-2\nabla^\gamma W^\sigma{}_{[b|\gamma a}F_{\sigma|c]} - \nabla_a W_{bc}{}^{\sigma\lambda}F_{\sigma \lambda} \\
        &= -2\rho^4\nabla_\rho W_{\rho [b| \rho a} F_{\rho|c]} -2 \rho^4 \ms{g}^{de}\nabla_\rho W_{d[b|\rho a}F_{e|c]} -2 \rho^4 \ms{g}^{de}\ms{g}^{fh} \nabla_d W_{f[b|ea}F_{h|c]}\\
        &\hspace{40pt}-2 \rho^4 \ms{g}^{de}\nabla_d W_{\rho[b|e a}F_{\rho|c]}-2\rho^4\ms{g}^{de}\nabla_a W_{bc\rho d}F_{\rho e}-\rho^4 \ms{g}^{de}\ms{g}^{fh}\nabla_a W_{bcdf}F_{eh}\\
        &=\rho\mc{S}\paren{\ms{g}; \ms{Dw}^1, \ms{f}^0}_{abc} + \rho \mc{S}\paren{\ms{g}; \ms{D\ms{w}^2}, \ms{f}^1}_{abc}+\rho \mc{S}\paren{\ms{g}; \ms{Dw}^0, \ms{f}^1}_{abc}+\mc{S}\paren{\ms{g}; \ms{w}^0, \ms{f}^0}_{abc}\\
        &\hspace{40pt}+\mc{S}\paren{\ms{g}; \ms{w}^1, \ms{f}^1}_{abc}+ \mc{S}\paren{\ms{g}; \ms{w}^2, \ms{f}^0}_{abc} + \rho\mc{S}\paren{\ms{g}; \ms{m}, \ms{w}^0, \ms{f}^0}_{abc} + \rho \mc{S}\paren{\ms{g}; \ms{m}, \ms{w}^2, \ms{f}^0}_{abc}\\
        &\hspace{40pt}+\rho \mc{S}\paren{\ms{g}; \ms{m}, \ms{w}^1, \ms{f}^1}_{abc} + \rho \mc{S}\paren{\ms{g}; \ul{\ms{h}}^0, (\ms{f}^1)^2}_{abc} \\
        &\hspace{40pt}+\rho\mc{S}\paren{\ms{g}; \ul{\ms{h}}^0, (\ms{f}^0)^2}_{abc} + \rho\mc{S}\paren{\ms{g}; \ul{\ms{h}}^{2}, \ms{f}^1, \ms{f}^0}_{abc}+\mc{S}\paren{\ms{g};(\ms{f}^1)^2,\ms{f}^0}_{abc} +\rho\mc{S}\paren{\ms{g}; \ms{m}, (\ms{f}^1)^2, \ms{f}^0}_{abc}\\
        &\hspace{40pt} + \mc{S}\paren{\ms{g}; (\ms{f}^0)^3}_{abc} +\rho\mc{S}\paren{\ms{g}; \ms{m},(\ms{f}^0)^3}_{abc}\text{.}
    \end{align*}

    The other lower-order terms, which do not contain derivatives of $W$ or $H$, can be obtained simply by exhausting the possible terms that would lead to a vertical field with an odd vertical rank. More precisely, a term of the form $(W\cdot H)_{abc}$ can only be obtained through two spacetime contractions, yielding a power of $4$ from the metric. This will give: 
    \begin{align*}
        (W\cdot H)_{odd} &\sim \rho^4 \paren{W_{even} H_{odd} + W_{odd}H_{even}} \\
        &= \rho \mc{S}\paren{\ms{g}; \ms{w}^0, \ms{h}^0} + \rho \mc{S}\paren{\ms{g}; \ms{w}^2, \ms{h}^0} + \rho \mc{S}\paren{\ms{g}; \ms{w}^1, \ms{h}^2} + \rho \mc{S}\paren{\ms{g}; \ms{w}^0, \ms{h}^3} \\
        &\hspace{30pt}+ \rho \mc{S}\paren{\ms{g}; \ms{w}^2, \ms{h}^3} + \rho \mc{S}\paren{\ms{g}; \ms{w}^1, \ms{h}^1} \\
        &=\rho \mc{S}\paren{\ms{g}; \ms{w}^0, \ul{\ms{h}}^0} + \mc{S}\paren{\ms{g}; \ms{w}^0, \ms{f}^0}+\rho\mc{S}\paren{\ms{g}; \ms{w}^1, \ms{h}^{2}}+\rho\mc{S}\paren{\ms{g}; \ms{m}, \ms{w}^0, \ms{f}^1}\\
        &\hspace{30pt}+\rho\mc{S}\paren{\ms{g};\ms{m}, \ms{w}^1, \ul{\ms{h}}^2}+\rho \mc{S}\paren{\ms{g}; \ms{w}^2, \ul{\ms{h}}^0}+ \mc{S}\paren{\ms{g}; \ms{w}^2, \ms{f}^0}+ \mc{S}\paren{\ms{g}; \ms{w}^1, \ms{f}^1} \\
        &\hspace{30pt}+\rho\mc{S}\paren{\ms{g}; \ms{m}, \ms{w}^2, \ms{f}^1}\text{,}
    \end{align*} 
    where for $A$ a spacetime tensor, we denote by $A_{even}$ and $A_{odd}$ components of this tensor with an even and odd number of vertical indices, respectively.
    Similarly,\footnote{In principle, one should consider $\hat{T}$ and $\tilde{T}$ but these will lead to identical lower-order terms.} 
    \begin{align*}
        (T\cdot H)_{odd} + (\nabla T \cdot F)_{odd} \sim& \rho^4((F\cdot F)_{even} H_{odd}+(F\cdot F)_{odd} H_{even})_{abc}\\
        =&\rho\mc{S}\paren{\ms{g}; \ul{\ms{h}}^0, (\ms{f}^0)^2}_{abc} + \mc{S}\paren{\ms{g}; (\ms{f}^0)^3}+\rho \mc{S}\paren{\ms{g}; \ul{\ms{h}}^{2}, \ms{f}^1, \ms{f}^0}+\rho\mc{S}\paren{\ms{g}; \ms{m}, (\ms{f}^0)^3}\\
        &+\rho\mc{S}\paren{\ms{g}; \ul{\ms{h}}^0, (\ms{f}^1)^2}  + \mc{S}\paren{\ms{g}; \ms{f}^0, (\ms{f}^1)^2} + \rho\mc{S}\paren{\ms{g}; \ms{m}, (\ms{f}^1)^2, \ms{f}^0}
    \end{align*}
    
    We can obtain the wave equation for $\ms{h}^2$ in a similar fashion. From Lemma \ref{lemma_identity_h}, the left-hand side of \eqref{intermediate_h_2} becomes: 
    \begin{align}\label{intermediate_wave_h0}
        \rho\cdot (\Box_g + 3n)H_{a\rho b}=\ol{\Box}\ms{h}^2_{ab} +2(n-2)\ms{h}^2_{ab} + 4\ms{h}^2_{[ab]} -2n\rho^{-1}\ms{f}^1_{ab} +\tilde{\ms{E}}^2_{ab}\text{,}
    \end{align}
    where $\tilde{\ms{E}}^2_{ab}$ contains $\ms{E}^2_{ab}$ as well as the remainder terms in \eqref{lemma_identity_h_0}.
    Now, once again, the goal is to get rid of the top-order terms. This can be done by looking at the wave equation satisfied by $\ul{\ms{h}}^2_{ab}$: 
    \begin{align*}
        \ol{\Box}\ul{\ms{h}}^{2}_{ab}&=\ol{\Box}\paren{\ms{h}^2_{ab} -\rho^{-1}\ms{f}^1_{ab}+\rho\mc{S}\paren{\ms{g}; \ms{m}, \ms{f}^1}_{ab}}\\
        &= -2(n-2)\ms{h}^{2}_{ab}-4\ms{h}^2_{[ab]}+2n\rho^{-1}\ms{f}^1_{ab} - \paren{\rho^{-1}\overline{\Box}\ms{f}^1_{ab} - 2\ol{\ms{D}}_\rho\ms{f}^1_{ab}+(n+1)\rho^{-1}\ms{f}^1_{ab}}\\
        &\hspace{30pt} +\tilde{\ms{E}}^2_{ab} + \rho\cdot (\Box_g + 3n)H_{[a|\rho|b]}\\
        &=-2(n-2)\paren{\ul{\ms{h}}^{2}_{ab}+\rho^{-1}\ms{f}^1_{ab}+\mc{S}\paren{\ms{g}; \ms{m}, \ms{f}^1}_{ab}}+2n\rho^{-1}\ms{f}^1_{ab}-\left(-\rho^{-1}(n-1)\ms{f}^1_{ab}-2(\rho^{-1}\ms{f}^1_{ab}+2\ul{\ms{h}}^2_{[ab]})\right.\\
        &\hspace{30pt}\left.+(n+1)\rho^{-1}\ms{f}^1_{ab}\right)+\tilde{\ms{E}}^2_{ab} + \rho\cdot (\Box_g + 3n)H_{[a|\rho|b]} + (\mc{R}_{W, \ms{f}^1})_{ab}\\
        &=-2(n-2)\ul{\ms{h}}^{2}_{ab}+\tilde{\ms{E}}^2_{ab} + \rho\cdot (\Box_g + 3n)H_{[a|\rho|b]} + (\mc{R}_{W, \ms{f}^1})_{ab}\text{,}
    \end{align*}
    up to lower-order terms, and where we used \eqref{wave_power}, Proposition \ref{expression_h_Df} and the wave and transport equations for $\ms{f}^1$ as given in Equations \eqref{wave_f_1} and \eqref{transport_f1_FG}. 

    The remainder terms can be obtained in a similar manner as $\ms{h}^0$. Nonlinearities of the form $\nabla W \cdot F$ yield, using \eqref{transport_w_2}: 
    \begin{align*}
        &-2\nabla^\gamma W^\sigma{}_{[\rho|\gamma a}F_{\sigma|b]} - \nabla_aW_{\rho b}{}^{\sigma\lambda}F_{\sigma\lambda}\\
        &= \rho^{-1} \mc{S}\paren{\ms{g}; \ol{\ms{D}}_\rho (\rho^2\ms{w}^2), \ms{f}^1}_{ab} + \rho\mc{S}\paren{\ms{g}; {\ms{D}}\ms{w}^2, \ms{f}^0}_{ab} + \rho \mc{S}\paren{\ms{g}; \ms{Dw}^1, \ms{f}^1}_{ab} \\
        &\hspace{30pt}+ \mc{S}\paren{\ms{g}; \ms{w}^0, \ms{f}^1}_{ab} + \mc{S}\paren{\ms{g}; \ms{w}^2, \ms{f}^1}_{ab} + \rho\mc{S}\paren{\ms{g}; \ms{Dw}^0, \ms{f}^0}_{ab} + \mc{S}\paren{\ms{g}; \ms{w}^1, \ms{f}^0}_{ab}\\
        &\hspace{30pt}+ \rho\mc{S}\paren{\ms{g}; \ms{m}, \ms{w}^1, \ms{f}^0}_{ab} + \rho\mc{S}\paren{\ms{g}; \ms{m}, \ms{w}^0, \ms{f}^1}_{ab}+\rho\mc{S}\paren{\ms{g}; \ms{m}, \ms{w}^2, \ms{f}^1}_{ab}\\
        &=\mc{S}\paren{\ms{g}; \ms{w}^2, \ms{f}^1}_{ab} + \rho\mc{S}\paren{\ms{g}; \ms{Dw}^1, \ms{f}^1}_{ab} + \rho\mc{S}\paren{\ms{g};\ms{m}, \ms{w}^0, \ms{f}^1}_{ab}+\rho\mc{S}\paren{\ms{g}; \ms{m}, (\ms{f}^0)^2, \ms{f}^1}_{ab} \\
        &\hspace{30pt} + \rho\mc{S}\paren{\ms{g};\ms{m}, \ms{w}^2, \ms{f}^1}_{ab} + \rho \mc{S}\paren{\ms{g}; \ul{\ms{h}}^0, \ms{f}^0, \ms{f}^1}_{ab} + \mc{S}\paren{\ms{g};  (\ms{f}^1)^3)}_{ab}+\rho\mc{S}\paren{\ms{g}; \ms{m}, (\ms{f}^1)^3}_{ab}\\
        &\hspace{30pt} + \rho\mc{S}\paren{\ms{g}; \ul{\ms{h}}^{2}, (\ms{f}^1)^2}_{ab} + \mc{S}\paren{\ms{g}; (\ms{f}^0)^2, \ms{f}^1}_{ab}+\rho\mc{S}\paren{\ms{g}; \ms{Dw}^2, \ms{f}^0}_{ab}\\
        &\hspace{30pt}+ \mc{S}\paren{\ms{g}; \ms{w}^0, \ms{f}^1}_{ab}  + \rho\mc{S}\paren{\ms{g}; \ms{Dw}^0, \ms{f}^0}_{ab} + \mc{S}\paren{\ms{g}; \ms{w}^1, \ms{f}^0}_{ab}\\
        &\hspace{30pt} +\rho\mc{S}\paren{\ms{g}; \ms{m}, \ms{w}^1, \ms{f}^0}_{ab} +  \rho\mc{S}\paren{\ms{g};\ul{\ms{h}}^{2}, (\ms{f}^0)^2}_{ab}\text{.}
    \end{align*}
    Similarly, one gets schematically for the other lower-order terms: 
    \begin{align*}
        &\begin{aligned}
        (W\cdot H)_{even} 
        &\sim \rho^4(W_{even} H_{even}+W_{odd}H_{odd})\\
        &=\rho \mc{S}\paren{\ms{g}; \ms{w}^0, \ms{h}^2}_{ab} + \rho\mc{S}\paren{\ms{g}; \ms{w}^0, \ms{h}^1}_{ab} + \rho \mc{S}\paren{\ms{g}; \ms{w}^2, \ms{h}^2}_{ab}+\rho\mc{S}\paren{\ms{g}; \ms{w}^2, \ms{h}^1}_{ab}\\
        &\hspace{30pt} + \rho \mc{S}\paren{\ms{g}; \ms{w}^1, \ms{h}^0}_{ab}+\rho\mc{S}\paren{\ms{g}; \ms{w}^1, \ms{h}^3}_{ab}\\
        &=\rho\mc{S}\paren{\ms{g}; \ms{w}^0, \ul{\ms{h}}^{2}} +\mc{S}\paren{\ms{g}; \ms{w}^0, \ms{f}^1}_{ab} +\rho\mc{S}\paren{\ms{g}; \ms{w}^2, \ul{\ms{h}}^{2}}_{ab}+\mc{S}\paren{\ms{g}; \ms{w}^2, \ms{f}^1}_{ab}\\
        &\hspace{30pt} + \rho\mc{S}\paren{\ms{g}; \ms{m}, \ms{w}^0, \ms{f}^1}+\rho\mc{S}\paren{\ms{g}; \ms{m}, \ms{w}^2, \ms{f}^1}++ \rho\mc{S}\paren{\ms{g}; \ms{m}, \ms{w}^1, \ms{f}^0}\\
        &\hspace{30pt}+\rho\mc{S}\paren{\ms{g}; \ms{w}^1, \ul{\ms{h}}^0} + \mc{S}\paren{\ms{g}; \ms{w}^1, \ms{f}^0}\text{,}
        \end{aligned}\\
        &\begin{aligned}
            (T\cdot H)_{even}+(\nabla T\cdot H)_{even} &\sim \rho^4 ((F\cdot F)_{even}H_{even} + (F\cdot F)_{odd}H_{odd}) \\
            &=\rho \mc{S}\paren{\ms{g}; \ul{\ms{h}}^{2}, (\ms{f}^0)^2} + \rho\mc{S}\paren{\ms{g}; \ul{\ms{h}}^{2}, (\ms{f}^1)^2} + \mc{S}\paren{\ms{g}; (\ms{f}^1)^3} \\
            &\hspace{30pt} + \rho\mc{S}\paren{\ms{g}; \ms{m}, \ms{f}^1, (\ms{f}^0)^2}+ \rho\mc{S}\paren{\ms{g}; \ms{m}, (\ms{f}^1)^3} \\
            &\hspace{30pt}+ \mc{S}\paren{\ms{g}; \ms{f}^1,(\ms{f}^0)^2} + \rho\mc{S}\paren{\ms{g}; \ul{\ms{h}}^0, \ms{f}^1, \ms{f}^0} \text{,}
        \end{aligned}
    \end{align*}
    concluding the proof.

\subsection{Proof of Proposition \ref{prop_wave_w}}\label{app:prop_wave_w}

    Here, we will describe how one can obtain a system of the form: 
    \begin{equation*}
        (\ol{\Box}+ \sigma_i)\ms{w}^i = \mc{R}_{W, \ms{w}^i}\text{,}\qquad i\in \lbrace \star, 1, 2\rbrace\text{,}
    \end{equation*}
    where the right-hand side will only contain lower-order contributions, and $(\sigma_i)$ some constants depending on $i$ and $n$. In this proof, we will emphasise how to obtain the mass and will informally justify the lower-order terms.

    First of all, from Propositions \ref{sec:aads_derivatives_vertical} and \ref{wave_weyl}, the following decompositions hold: 
    \begin{align}
        \label{intermediate_wave_w_2}(\Box_g W)_{\rho a\rho b} =& \rho^{-4}\ol{\Box}(\rho^2\ms{w}^2_{ab}) + 2\rho^{-1}\ms{g}^{cd}\ms{D}_c\ms{w}^1_{bda} + 2\rho^{-1}\ms{g}^{cd}\ms{D}_c\ms{w}^1_{adb} -2(n+1)\rho^{-2}\ms{w}^2_{ab}\\
        &\notag+4\rho^{-2}\ms{w}^2_{ab}+2\rho^{-2}\ms{g}^{cd}\ms{w}^0_{acbd} + \rho^{-1}\mc{S}\paren{\ms{g};\ms{m}, \ms{w}^2}_{ab} + \mc{S}\paren{\ms{g}; \ms{m}^2, \ms{w}^2}_{ab} \\
        &\notag+\mc{S}\paren{\ms{g}; \ms{Dm}, \ms{w}^1}_{ab} + \mc{S}\paren{\ms{g}; \ms{m}, \ms{Dw}^1}_{ab}+\rho^{-1}\mc{S}\paren{\ms{g}; \ms{m}, \ms{w}^0}_{ab} + \mc{S}\paren{\ms{g}; \ms{m}^2, \ms{w}^0}_{ab}\text{,}
    \end{align}
    \begin{align}
        \notag(\Box_g W)_{\rho abc} =& \rho^{-4}\ol{\Box}(\rho^2\ms{w}^1_{abc}) + 2\rho^{-1}\ms{g}^{de}\ms{D}_d \ms{w}^0_{eabc} - 2\rho^{-1}\ms{D}_b \ms{w}^2_{ac} + 2\rho^{-1}\ms{D}_c\ms{w}^2_{ab} - (n+3)\rho^{-2}\ms{w}^1_{abc}\\
        &\notag+2\rho^{-2}\underbrace{(\ms{w}^1_{abc} + \ms{w}^1_{cab} + \ms{w}^1_{bca})}_{=0} + \mc{S}\paren{\ms{g}; \ms{m}, \ms{Dw}^0}_{abc} + \mc{S}\paren{\ms{g}; \ms{m}, \ms{Dw}^2}_{abc}\\
        &\label{intermediate_wave_w_1}+ \mc{S}\paren{\ms{g}; \ms{Dm}, \ms{w}^0}_{abc} + \mc{S}\paren{\ms{g}; \ms{Dm}, \ms{w}^2}_{abc}+\rho^{-1}\mc{S}\paren{\ms{g}; \ms{m}, \ms{w}^1}_{abc} + \mc{S}\paren{\ms{g}; \ms{m}^2, \ms{w}^1}_{abc}\text{,}
    \end{align}
    \begin{align}
        \notag(\Box_g W)_{abcd} =& \rho^{-4}\ol{\Box}(\rho^2\ms{w}^0_{abcd}) -2\rho^{-1}\ms{D}_a \ms{w}^1_{bcd} + 2\rho^{-1}\ms{D}_b \ms{w}^1_{acd}-2\rho^{-1}\ms{D}_c\ms{w}^1_{dab} + 2\rho^{-1}\ms{D}_d\ms{w}^1_{cab} \\
        &\notag -4\rho^{-2}\ms{w}^0_{abcd} + 2\rho^{-2}(\ms{g}\star\ms{w}^2)_{abcd} + \mc{S}\paren{\ms{g}; \ms{m}, \ms{Dw}^1}_{abcd} + \mc{S}\paren{\ms{g}; \ms{Dm}, \ms{w}^1}_{abcd}\\
        &+\rho^{-1}\mc{S}\paren{\ms{g}; \ms{m}, \ms{w}^0}_{abcd} + \mc{S}\paren{\ms{g}; \ms{m}^2, \ms{w}^0}_{abcd} + \rho^{-1}\mc{S}\paren{\ms{g};\ms{m}, \ms{w}^2}_{abcd} + \mc{S}\paren{\ms{g}; \ms{m}^2, \ms{w}^2}_{abcd}\label{intermediate_wave_w_0}\text{.}
    \end{align}
    Now, using \eqref{wave_power}, one obtains: 
    \begin{align*}
        \ol{\Box}(\rho^2 \ms{w}^i_{\bar{a}}) = \rho^2\ol{\Box}\ms{w}^i_{\bar{a}} + 4\rho^3 \ol{\ms{D}}_\rho \ms{w}^i - 2(n-2)\rho^2\ms{w}^i + \rho^3\mc{S}\paren{\ms{g}; \ms{m}, \ms{w}^i}_{\bar{a}}\text{,}\qquad i=0,1,2\text{,}
    \end{align*}
    which gives, for $\ms{w}^2, \, \ms{w}^1$, using the transport equations from Proposition \ref{prop_transport_w}:
    \begin{align*}
        &\rho^{-2}\ol{\Box}(\rho^2\ms{w}^2)_{ab} = \ol{\Box}\ms{w}^2_{ab} + 2(n-2)\ms{w}^2_{ab}-4\rho\cdot\ms{g}^{cd}\ms{D}_c \ms{w}^1_{adb}+(\mc{R}_{T, \ms{w}^2})_{ab}\text{,}\\
        &\rho^{-2}\ol{\Box}(\rho^2\ms{w}^1)_{abc} = \ol{\Box}\ms{w}^1_{abc} \underbrace{- 2\rho(\ms{D}\cdot \ms{w}^0)_{abc} + 2(n-2)\ms{w}^1_{abc}}_{\eqref{transport_w_1.1}} + \underbrace{4\rho \ms{D}_{[b}\ms{w}^2_{c]a} + 2\ms{w}^1_{abc}}_{\eqref{transport_w_1.2}}-2(n-2)\ms{w}^1_{abc}+(\mc{R}_{T,\ms{w}^1})_{abc}\text{.}
    \end{align*}
    Note that for $\ms{w}^1$, we had to use both \eqref{transport_w_1.1} and \eqref{transport_w_1.2}.
    This yields, using Proposition \ref{wave_weyl}:
    \begin{align*}
        &\rho^2(\Box_g + 2n)W_{\rho a \rho b} =  \ol{\Box}\ms{w}^2_{ab} + 2(n-2)\ms{w}^2_{ab} + \ms{E}^2_{ab}\text{,}\\
        &\rho^2(\Box_g + 2n)W_{\rho a b c} = \ol{\Box}\ms{w}^1_{abc} + (n-1)\ms{w}^1_{ab}+ \ms{E}^1_{abc}\text{,} 
    \end{align*}
    where $\ms{E}^2$, $\ms{E}^1$ are $(0,2)$ and $(0,3)$ vertical tensors containing remainder terms found in \eqref{intermediate_wave_w_2}, \eqref{intermediate_wave_w_1}, \eqref{transport_w_2} and \eqref{transport_w_1.1}:
    \begin{align*}
        &\begin{aligned}
            \ms{E}^2 = &  \rho^2\mc{S}\paren{\ms{g}; \ms{m}^2, \ms{w}^2} 
            +\rho^2\mc{S}\paren{\ms{g}; \ms{Dm}, \ms{w}^1} + \rho^2\mc{S}\paren{\ms{g}; \ms{m}, \ms{Dw}^1}_{ab}+ \rho^2\mc{S}\paren{\ms{g}; \ms{m}^2, \ms{w}^0} + \mc{R}_{T, \ms{w}^2}
        \end{aligned}\\
        &\begin{aligned}
            \ms{E}^1 =& \rho^2\mc{S}\paren{\ms{g}; \ms{m}, \ms{Dw}^0} + \rho^2\mc{S}\paren{\ms{g}; \ms{m}, \ms{Dw}^2}+ \rho^2\mc{S}\paren{\ms{g}; \ms{Dm}, \ms{w}^0} + \rho^2\mc{S}\paren{\ms{g}; \ms{Dm}, \ms{w}^2} + \\
            &\rho^2\mc{S}\paren{\ms{g}; \ms{m}^2, \ms{w}^1}+\mc{R}_{T, \ms{w}^1}\text{.}
        \end{aligned}
    \end{align*}
    We also used the fact that anti-symmetrising Equation \eqref{transport_w_2} yields the following: 
    \begin{equation*}
        \rho\ms{g}^{cd}\ms{D}_c \ms{w}^1_{[a|d|b]} = (\mc{R}_{T, \ms{w}^2})_{ab}\text{.}
    \end{equation*}
    
Looking at $\ms{w}^0$, however, one has, using transport equations \eqref{transport_w_2} and \eqref{transport_w_star}: 
    \begin{align*}
        \rho^{-2}\ol{\Box}(\rho^2\ms{w}^0)_{abcd} = \ol{\Box}\ms{w}^0 + 4\rho \left(2\ms{D}_{[a}\ms{w}^1_{b]cd} -(\ms{g}\star\ms{w}^2)_{abcd}\right) - 2(n-2)\ms{w}^0_{abcd}+(\mc{R}_{T,\ms{w}^2})_{abcd}\text{,}
    \end{align*}
    yielding eventually:
\begin{equation}\label{intermediate_box_w_0}
    \rho^2(\Box_g+2n) W_{abcd} = \ol{\Box}\ms{w}^0_{abcd} +{4\rho\ms{D}_{[a}\ms{w}^1_{b]cd}-4\rho\ms{D}_{[c}\ms{w}^1_{d]ab}} - 2(\ms{g}\star \ms{w}^2)_{abcd} + \ms{E}^0_{abcd} \text{,}
\end{equation}
where $\ms{E}$ is $(0,4)$--tensor containing the remainder terms considered above, as well as those coming from the transport equation \eqref{transport_w_star}:
\begin{align*}
    \ms{E}^0 = &\mc{R}_{T,\ms{w}^2} + \rho^2 \mc{S}\paren{\ms{g}; \ms{m}, \ms{Dw}^1} + \rho^2\mc{S}\paren{\ms{g}; \ms{Dm}, \ms{w}^1}+ \rho^2\mc{S}\paren{\ms{g}; \ms{m}^2, \ms{w}^0}+ \rho^2\mc{S}\paren{\ms{g}; \ms{m}^2, \ms{w}^2}\text{.} 
\end{align*}
Observe now that, from \eqref{transport_w_star} and the symmetries of $\ms{w}^0$, the following holds: 
\begin{align}
    0 &=\notag \ol{\ms{D}}_{\rho}\ms{w}^0_{abcd} - \ol{\ms{D}}_{\rho}\ms{w}^0_{cdab}\\
    &=2\rho\ms{D}_{[a}\ms{w}^1_{b]cd} - 2\rho\ms{D}_{[c}\ms{w}^1_{d]ab} + (\mc{R}_{T, \ms{w}^2})_{abcd}\text{.}\label{intermediate_identity_w_0}
\end{align}
Using \eqref{intermediate_identity_w_0} in \eqref{intermediate_box_w_0} yields: 
\begin{equation*}
     \rho^2(\Box_g+2n) W_{abcd} = \ol{\Box}\ms{w}^0_{abcd} - 2(\ms{g}\star \ms{w}^2)_{abcd} + \ms{E}^0_{abcd}\text{,}
\end{equation*}
since $\mc{R}_{T,\ms{w}^2}$ is contained in $\ms{E}^0$. 

The term $-2\rho^{-2}(\ms{g}\star\ms{w}^2)$ cannot be considered as a lower-order term in the Carleman estimates and therefore has to be treated. The solution is to consider instead the vertical tensor $\ms{w}^\star$, corresponding to the traceless part of $\ms{w}^0$: 
\begin{align*}
    \ol{\Box}\ms{w}^\star_{abcd} &= \ol{\Box}\paren{\ms{w}^0 + \frac{1}{n-2}(\ms{g}\star \ms{w}^2)}_{abcd} \\
    &=+ 2(\ms{g}\star\ms{w}^2)_{abcd}+\frac{1}{n-2}\paren{\ms{g}\star (-2(n-2)\ms{w}^2)}_{abcd} - \ms{E}^0_{abcd} \\
    &\quad + \rho^2(\Box_g+2n) W_{abcd} +  (\mc{R}_{W, \ms{w}^2})_{abcd}\\
    &=-\ms{E}^0_{abcd} + \rho^2(\Box_g+2n) W_{abcd} +  (\mc{R}_{W, \ms{w}^2})_{abcd}\text{,}
\end{align*}
where $\mc{R}_{W, \ms{w}^2}$ corresponds the lower-order terms for the wave equation of $\ms{w}^2$, still to be justified\footnote{Observe that, strictly speaking, the remainder terms for the wave equation of $\ms{w}^2$ form a $(0,2)$--vertical tensor. However, since multiplying by the metric does not change the form of the lower order terms, see Definition \ref{def_remainder}, we will use this abuse of notation and consider $\ms{R}_{W, \ms{w}^2}$ and $\ms{g}\star \ms{R}_{W,\ms{w}^2}$ to be identical.}. 

As we will show below, the source terms of the wave equations from Proposition \ref{prop_wave_w} will yield lower-order terms only. As a consequence, we define: 
\begin{gather*}
    \ms{Q}^0_{abcd} := \rho^2(\Box_g + 2n)W_{abcd}\text{,}\qquad \ms{Q}^1_{abc} := \rho^2(\Box_g +2n)W_{\rho abc}\text{,}\qquad \ms{Q}^2_{ab} = \rho^2(\Box_g +2n)W_{\rho a\rho b}\text{.}
\end{gather*}

From these definitions, we finally obtain the following vertical wave equations: 
\begin{align*}
    &(\ol{\Box}+2(n-2))\ms{w}^2_{ab} = - \ms{E}^2_{ab} + \ms{Q}^2_{ab}\text{,}\\
    &(\ol{\Box}+(n-1))\ms{w}^1_{abc} = -\ms{E}^1_{abc} + \ms{Q}^1_{abc} \text{,}\\
    &\ol{\Box}\ms{w}^\star_{abcd} = -\ms{E}^0_{abcd} + \ms{Q}^0_{abcd} + (\mc{R}_{W, \ms{w}^2})_{abcd}\text{.}
\end{align*}

It remains now to justify the remainder terms $\ms{Q}^0$, $\ms{Q}^1$ and $\ms{Q}^2$. Since only the parity of the number of vertical components matters, we will proceed as in the proof of Proposition \ref{prop_wave_h}. As a reminder, for any spacetime tensor $A$, we will denote by $A_{even}$ and $A_{odd}$ components with an even and odd number of vertical indices, respectively. 

Starting with the higher derivative terms, one has: 
\begin{align*}
    &(\nabla^2T)_{even} \sim \rho^2(H_{even}H_{even} + H_{odd}H_{odd} + (\nabla H)_{even}F_{even} + (\nabla H)_{odd}F_{odd})_{abcd}\text{,}\\
    &(\nabla^2 T)_{odd} \sim \rho^2(H_{odd}H_{even} + (\nabla H)_{even}F_{odd} + (\nabla H)_{odd}F_{even})\text{.}
\end{align*}
Terms of the form $\nabla H$ will require extra-care, since one may need to use Proposition \ref{prop_transport_h}. Let us exhaust the possible terms: 
\begin{align}
    &\begin{aligned}\label{he}
        \rho (\nabla H)_{even} =& \rho^{-2}\ol{\ms{D}}_\rho(\rho^2\ms{h}^1) + \rho^{-2}\ol{\ms{D}}_{\rho}(\rho^2\ms{h}^2) + \mc{S}(\ms{g}; \ms{D}\ms{h}^0) + \mc{S}\paren{\ms{g}; \ms{D}\ms{h}^3}+ \rho^{-1}\mc{S}(\ms{g}; \ms{h}^1)\\
        &+ \rho^{-1}\mc{S}(\ms{g}; \ms{h}^2)  + \mc{S}\paren{\ms{g}; \ms{m}, \ms{h}^1} + \mc{S}\paren{\ms{g}; \ms{m}, \ms{h}^2}\\
        =&\rho^{-1}\mc{R}_{T, \ms{h}^2} + \mc{S}\paren{\ms{g}; \ms{D}\ul{\ms{h}}^0}  + \rho^{-1}\mc{S}\paren{\ms{g}; \ul{\ms{h}}^{2}} + \rho^{-2}\mc{S}\paren{\ms{g}; \ms{f}^1}+\mc{S}\paren{\ms{g}; \ms{Dm}, \ms{f}^0}\text{,}
    \end{aligned}\\
    &\begin{aligned}\label{ho}
        \rho (\nabla H)_{odd} =& \rho^{-2}\ol{\ms{D}}_\rho (\rho^2\ms{h}^0) + \rho^{-2}\ol{\ms{D}}_\rho (\rho^2\ms{h}^3) + \mc{S}\paren{\ms{g}; \ms{Dh}^2} + \mc{S}\paren{\ms{g}; \ms{Dh}^1} \\
        &+\rho^{-1}\mc{S}\paren{\ms{g}; \ms{h}^0} + \rho^{-1}\mc{S}\paren{\ms{g};\ms{h}^3} + \mc{S}\paren{\ms{g}; \ms{m}, \ms{h}^0} + \mc{S}\paren{\ms{g}; \ms{m},\ms{h}^3}\\
        =&\rho^{-1}\mc{R}_{T, \ms{h}^0} + \mc{S}\paren{\ms{g}; \ms{D}\ul{\ms{h}}^{2}} +\rho^{-1}\mc{S}\paren{\ms{g}; \ul{\ms{h}}^0} + \mc{S}\paren{\ms{g}; \ms{Dm}, \ms{f}^1} \\
        &+ \rho^{-2}\mc{S}\paren{\ms{g}; \ms{f}^0}+\mc{S}\paren{\ms{g}; \ms{w}^1, \ms{f}^1}\text{,}
    \end{aligned}
\end{align}
where we used the transport equations in Proposition \ref{prop_transport_h}. Eventually, one obtains: 
\begin{align}
    &\begin{aligned}\label{nabla2Te}
        (\nabla^2 T)_{even} =& \mc{S}\paren{\ms{g}; (\ul{\ms{h}}^{2})^2}  +\mc{S}\paren{\ms{g}; (\ul{\ms{h}}^0)^2} \\
        &+\rho^{-1}\mc{S}\paren{\ms{g}; \ul{\ms{h}}^0, \ms{f}^0}+ \rho^{-2}\mc{S}\paren{\ms{g}; (\ms{f}^0)^2} + \rho^{-1}\mc{S}\paren{\ms{g}; \ul{\ms{h}}^{2}, \ms{f}^1} + \rho^{-2}\mc{S}\paren{\ms{g};(\ms{f}^1)^2}\\
        &+\rho^{-1}\mc{S}\paren{\mc{R}_{T, \ms{h}^2}, \ms{f}^1} + \mc{S}\paren{\ms{g}; \ms{D}\ul{\ms{h}}^0, \ms{f}^1} + \rho^{-1}\mc{S}\paren{\mc{R}_{T, \ms{h}^0}, \ms{f}^0} + \mc{S}\paren{\ms{g}; \ms{D}\ul{\ms{h}}^{2}, \ms{f-}^0}
    \end{aligned}\\
    &\begin{aligned}\label{nabla2To}
        (\nabla^2T)_{odd}=& \mc{S}\paren{\ms{g}; \ul{\ms{h}}^0, \ul{\ms{h}}^{2}}+ \rho^{-1}\mc{S}\paren{\ms{g}; \ul{\ms{h}}^0, \ms{f}^1} \\
        & + \rho^{-1}\mc{S}\paren{\ms{g}; \ul{\ms{h}}^{2}, \ms{f}^0} +\rho^{-2}\mc{S}\paren{\ms{g}; \ms{f}^1, \ms{f}^0} + \rho^{-1}\mc{S}\paren{\mc{R}_{T, \ms{h}^2}, \ms{f}^0} + \rho^{-1}\mc{S}\paren{\mc{R}_{T, \ms{h}^0}, \ms{f}^1}\\
        &+\mc{S}\paren{\ms{D}\ul{\ms{h}}^0, \ms{f}^0} + \mc{S}\paren{\ms{g}; \ms{D}\ul{\ms{h}}^{2}, \ms{f}^1}\text{.}
    \end{aligned}
\end{align}

The following terms from  Proposition \ref{wave_weyl}, which do not admit any derivatives, can be easily be computed: 
\begin{align}
    &\begin{aligned}\label{wwe}
        (W\cdot W)_{even} &\sim \rho^4(W_{even}W_{even} + W_{odd}W_{odd}) \\
        &=\mc{S}\paren{\ms{g}; (\ms{w}^2)^2}+ \mc{S}\paren{\ms{g}; (\ms{w}^1)^2} + \mc{S}\paren{\ms{g}; (\ms{w}^0)^2}+\mc{S}\paren{\ms{g}; \ms{w}^0, \ms{w}^2}\text{,}
    \end{aligned}\\
    &\begin{aligned}\label{wwo}
        (W\cdot W)_{odd} &\sim \rho^4(W_{even}W_{odd})\\
        &=  \mc{S}\paren{\ms{g}; \ms{w}^2, \ms{w}^1}+  \mc{S}\paren{\ms{g}; \ms{w}^1, \ms{w}^0}\text{,}
    \end{aligned}\\
    &\begin{aligned}\label{wte}
        (W\cdot T)_{even} \sim& \rho^2\paren{W_{even}T_{even} + W_{odd}T_{odd}}\\
        \sim& \rho^4\paren{W_{even}F_{even}F_{even}+ W_{even}F_{odd}F_{odd} + W_{odd}F_{even}F_{odd}}\\
        =&\mc{S}\paren{\ms{g}; \ms{w}^0, (\ms{f}^1)^2}+\mc{S}\paren{\ms{g}; \ms{w}^2, (\ms{f}^1)^2} + \mc{S}\paren{\ms{g}; \ms{w}^0, (\ms{f}^0)^2} + \mc{S}\paren{\ms{g}; \ms{w}^2, (\ms{f}^0)^2}\\
        &+\mc{S}\paren{\ms{g}; \ms{w}^1, \ms{f}^1, \ms{f}^0}\text{,}
    \end{aligned}\\
    &\begin{aligned}\label{wto}
        (W\cdot T)_{odd} &\sim \rho^2(W_{even}T_{odd} + W_{odd}T_{even})\\
        &\sim \rho^4(W_{even}F_{even}F_{odd} + W_{odd}F_{even}F_{even}+W_{odd}F_{odd}F_{odd})\\
        &=\mc{S}\paren{\ms{g}; \ms{w}^0, \ms{f}^1, \ms{f}^0}+\mc{S}\paren{\ms{g}; \ms{w}^2, \ms{f}^1, \ms{f}^0} + \mc{S}\paren{\ms{g}; \ms{w}^1, (\ms{f}^1)^2} + \mc{S}\paren{\ms{g}; \ms{w}^1, (\ms{f}^0)^2}
    \end{aligned}\\
    &\begin{aligned}\label{gte}
        (g\cdot T)_{even}& \sim \rho^{-2}T_{even}\\
        &\sim F_{odd}F_{odd} + F_{even}F_{even}\\
        &=\rho^{-2}\mc{S}\paren{\ms{g}; (\ms{f}^0)^2} + \rho^{-2}\mc{S}\paren{\ms{g}; (\ms{f}^1)^2}\text{,}
    \end{aligned}\\
    &\begin{aligned}\label{gto}
        (g\cdot T)_{odd}&\sim \rho^{-2}T_{odd}\\
        &\sim F_{even}F_{odd} \\
        &= \rho^{-2}\paren{\ms{g}; \ms{f}^1, \ms{f}^0}\text{,}
    \end{aligned}\\
    &\begin{aligned}\label{tte}
        ( T \cdot T)_{even} &\sim T_{even}T_{even}+T_{odd}T_{odd}\\
        &\sim \rho^4(F_{even}F_{even}+F_{odd}F_{odd})(F_{even}F_{even}+F_{odd}F_{odd})\\
        &=\mc{S}\paren{\ms{g}; (\ms{f}^0)^4} + \mc{S}\paren{\ms{g}; (\ms{f}^1)^4} + \mc{S}\paren{\ms{g}; (\ms{f}^0)^2, (\ms{f}^1)^2}\text{,}
    \end{aligned}\\
    &\begin{aligned}\label{tto}
        (T \cdot T)_{odd} &\sim T_{even}T_{odd}\\
        &\sim \rho^4(F_{even}F_{even}F_{even}F_{odd}+F_{odd}F_{odd}F_{odd}F_{even})\\
        &=\mc{S}\paren{\ms{g}; (\ms{f}^0)^3, \ms{f}^1} + \mc{S}\paren{\ms{g}; (\ms{f}^1)^3, \ms{f}^0}\text{.}
    \end{aligned}
\end{align}
Finally, one obtains $\ms{Q}^2$ and $\ms{Q}^1$ by summing the even and odd contributions, respectively. More precisely:
\begin{itemize}
    \item $\ms{Q}^0$, $\ms{Q}^2$ is obtained by summing \eqref{he}, \eqref{nabla2Te}, \eqref{wwe}, \eqref{wte}, \eqref{gte}, \eqref{tte}. 
    \item $\ms{Q}^1$ is obtained by summing \eqref{ho}, \eqref{nabla2To},  \eqref{wwo}, \eqref{wto}, \eqref{gto},\eqref{tto}. 
\end{itemize}
The contributions from $\ms{Q}^0$ are indeed the same as $\ms{Q}^2$ since only the parity of the number of vertical components mattered.

Finally, comparing \eqref{remainder_wave_w_2} with $-\ms{E}^0+\ms{Q}^0$ and $- \ms{E}^2 + \ms{Q}^2$ and \eqref{remainder_wave_w_2} with $- \ms{E}^1 + \ms{Q}^1$, one has\footnote{Once again, $-\ms{E}^0+\ms{Q}^0$ and $- \ms{E}^2 + \ms{Q}^2$ are vertical tensors of different ranks. They will however yield the same remainder terms, this the reason why we used this abuse of notation.}: 
\begin{equation*}
    \mc{R}_{W, \ms{w}^2} = -\ms{E}^0+\ms{Q}^0\text{,} \qquad \mc{R}_{W, \ms{w}^2} = - \ms{E}^2 + \ms{Q}^2\text{,}\qquad \mc{R}_{W, \ms{w}^1} = - \ms{E}^1 + \ms{Q}^1\text{,}
\end{equation*}
giving indeed the vertical wave equations \eqref{wave_w}. 

\subsection{Proof of Proposition \ref{prop_asymptotics_fields}}\label{app:prop_asymptotics_fields}

The proof follows from Theorem \ref{theorem_main_fg} as well as Corollaries  \ref{prop_expansion_vertical_fields} and \ref{weyl_expansion}. The asymptotics for $\ms{f}^0$ and $\ms{f}^1$ are consequences of \eqref{thm_bounds_f0_f1}, since:
        \begin{align*}
            &\abs{\ms{f}^0}_{M, \varphi} \lesssim \rho\cdot \norm{\rho^{-1}\ms{f}^0}_{M, \varphi}\lesssim \rho\text{,}\\
            &\abs{\ms{f}^1}_{M, \varphi}\lesssim \rho \cdot \norm{\rho^{-1}\ms{f}^1}_{M, \varphi}\lesssim \rho\text{.}
        \end{align*}
        The only non-trivial asymptotics are given by $\Lie_\rho \ms{w}^i$ and $\ul{\ms{h}}^0, \ul{\ms{h}}^{2}$. We will briefly justify them here. 

        The asymptotics for the vertical decomposition of $H$ follows from Definition \ref{def_h_bar}:
        \begin{gather*}
            \ms{D}\ms{f}^1 = \mc{O}_{M-1}(\rho)\text{,}\qquad
            \ms{D}\ms{f}^0 =\mc{O}_{M-1}(\rho)\text{.}
        \end{gather*}

        For $\Lie_\rho\ms{w}^i$, one can simply use the transport equations from Proposition \ref{prop_transport_w}, as well as \eqref{vertical_derivative}: 
        \begin{align*}
            &\Lie_\rho \ms{w}^0_{abcd} = 2\ms{D}_{[a}\ms{w}^1_{b]cd} - \paren{\ms{g}\star \rho^{-1}\ms{w}^2}_{abcd}+\rho^{-1}(\mc{R}_{T, \ms{w}^2})_{abcd} \\
            &\Lie_\rho \ms{w}^1_{abc} = - (\ms{D}\cdot \ms{w}^0)_{abc} + (n-2)\rho^{-1}\ms{w}^1_{abc} + \rho^{-1}(\mc{R}_{T, \ms{w}^1})_{abc}\text{,}\\
            &\Lie_\rho \ms{w}^2_{ab} = -\ms{g}^{cd}\ms{D}_c \ms{w}^1_{adb} + (n-2)\rho^{-1}\ms{w}^2_{ab}+\rho^{-1}(\mc{R}_{T, \ms{w}^2})_{ab} \text{.}
         \end{align*}
         Now, observe that the right-hand sides satisfy the following asymptotics: 
         \begin{gather*}
             \begin{aligned}
                 &\ms{D}\ms{w}^1 = \mc{O}_{M-3}(1)\text{,}\qquad &&\ms{g}\star \rho^{-1}\ms{w}^2 = \mc{O}_{M-3}(1)\text{,}\qquad &&&\rho^{-1}\mc{R}_{T, \ms{w}^2} = \mc{O}_{M-2}(\rho)\text{,}\\
                 &\ms{D}\ms{w}^0 = \mc{O}_{M-3}(1)\text{,}\qquad &&\rho^{-1}\ms{w}^1 = \mc{O}_{M-3}(1)\text{,}\qquad &&&\rho^{-1}\mc{R}_{T, \ms{w}^1} = \mc{O}_{M-2}(\rho)\text{,}\\
                 &\ms{D}\ms{w}^1 = \mc{O}_{M-3}(1)\text{,}\qquad &&\rho^{-1}\ms{w}^2 = \mc{O}_{M-3}(1)\text{.}
             \end{aligned}
         \end{gather*}
         This implies $\Lie_\rho \ms{w}^i = \mc{O}_{M-3}(1)$, for $i=0,1,2$. In order to obtain the correct asymptotics for $\Lie_\rho\ms{w}^\star$, one can use Definition \ref{def_w_star}:
         \begin{equation*}
             \Lie_\rho \ms{w}^\star = \Lie_\rho\ms{w}^0 + \frac{1}{n-2}\ms{g}\star \Lie_\rho \ms{w}^2 + \mc{S}\paren{\ms{g}; \ms{m}, \ms{w}^2} = \mc{O}_{M-3}(1)\text{.}
         \end{equation*}
         Finally, using Proposition \ref{prop_transport_h_bar}, one has: 
         \begin{align*}
             &\Lie_\rho\ul{\ms{h}}^0 = \rho^{-1}\ul{\ms{h}}^0 + \mc{S}\paren{\ms{g}; \ms{D}\ul{\ms{h}}^{2}} + \mc{S}\paren{\ms{g}, \ms{m}, \ul{\ms{h}}^0} + \mc{S}\paren{\ms{g}; \ms{Dm}, \ms{f}^1}\text{,}\\
             &\Lie_\rho \ul{\ms{h}}^2 = (n-2)\rho^{-1}\ul{\ms{h}}^2 + \mc{S}(\ms{g};\ms{D}\ul{\ms{h}}^0) + \mc{S}\paren{\ms{g}; \ms{m}, \ul{\ms{h}}^2} + \mc{S}\paren{\ms{g}; \ms{Dm},\ms{f}^0}\text{,}
         \end{align*}
         with the right-hand sides satisfying the following asymptotics: 
         \begin{align*}
             \begin{aligned}
                 &\rho^{-1}\ul{\ms{h}}^0 = \mc{O}_{M-1}(1)\text{,}\qquad &&\ms{D}\ul{\ms{h}}^0=\mc{O}_{M-2}(\rho)\text{,}\qquad &&&\rho^{-1}\mc{R}_{T, \ms{h}^0} = \mc{O}_{M-2}(\rho)\text{,}\\
                 &\rho^{-1}\ul{\ms{h}}^{2} = \mc{O}_{M-1}(1)\text{,}\qquad &&\ms{D}\ul{\ms{h}}^{2} = \mc{O}_{M-2}(\rho)\text{,}\qquad &&&\rho^{-1}\mc{R}_{T, \ms{h}^2} = \mc{O}_{M-2}(\rho)\text{,}
             \end{aligned}
         \end{align*}
         concluding the proof.

\subsection{Proof of Proposition \ref{prop_transport_diff}}\label{app:prop_transport_diff}

The transport equation for $\delta\ms{g}$ follows from the definition of $\ms{m}$. The second transport equation can be deduced from \eqref{w_2_FG}. Typically, 
    \begin{equation*}
        \Lie_\rho \delta \ms{m} = \rho^{-1}\delta \ms{m} - 2\delta \ms{w}^2 + \delta(\mc{S}\paren{\ms{g}; \ms{m}^2}) + \delta (\mc{S}\paren{\ms{g}; (\ms{f}^0)^2} + \mc{S}\paren{\ms{g}; (\ms{f}^1)^2})\text{.}
    \end{equation*}
    Since, by \eqref{renormalised_def}, $\delta \ms{w}^2 = \Delta \ms{w}^2 + \mc{O}_{M-3}(\rho; \delta\ms{g}) + \mc{O}_{M-3}(\rho; \ms{Q})$ and 
    \begin{align*}
        &\delta\mc{S}\paren{\ms{g}; (\ms{m}^2)} = \mc{O}_{M-2}(\rho^2; \delta \ms{g}) + \mc{O}_{M-2}(\rho; \delta \ms{m})\text{,}\\
        &\delta \mc{S}\paren{\ms{g}; (\ms{f}^0)^2} + \delta \mc{S}\paren{\ms{g}; (\ms{f}^1)^2} = \mc{O}_{M}\paren{\rho; \delta \ms{f}^0} + \mc{O}_{M}(\rho; \delta\ms{f}^1) + \mc{O}_{M}\paren{\rho^2; \delta\ms{g}}\text{.}
    \end{align*}
    From the above one obtains \eqref{transport_dm}. 

    Equation \eqref{Lie_Q} is a direct consequence of \eqref{transport_Q} and Proposition \ref{prop_diff_g_gamma}. 

    The transport equation for $\ms{B}$ can be expressed using Proposition \ref{sec:aads_commutation_Lie_D}: 
    \begin{align*}
        \Lie_\rho \ms{B}_{abc} &= 2\Lie_\rho \ms{D}_{[a}\delta \ms{g}_{b]c} - \Lie_\rho \ms{D}_c \ms{Q}_{ab}\\
        &=2\ms{D}_{[a}\delta\ms{m}_{b]c} - \ms{D}_c \Lie_\rho \ms{Q}_{ab} + \mc{S}\paren{\ms{g}; \ms{Dm}, \delta\ms{g}}_{abc} + \mc{S}\paren{\ms{g}; \ms{Dm}, \ms{Q}}_{abc}\text{.}
    \end{align*}
    One can use \eqref{w_1_FG} to treat the first term on the right-hand side:
    \begin{align*}
        2\ms{D}_{[a}\delta\ms{m}_{b]c} &= \delta(2\ms{D}_{[a}\ms{m}_{b]c}) - 2(\delta\ms{D})_{[a}\check{\ms{m}}_{b]c}\\
        &=2\delta\paren{\ms{w}^1_{cba} + \frac{2}{n-1}\ms{T}_{\rho [b}\ms{g}_{a]c}} - 2(\delta\ms{D})_{[a}\check{\ms{m}}_{b]c}\text{,}
    \end{align*}
    where, using Proposition \ref{prop_diff_g_gamma}:
    \begin{equation*}
        2(\delta\ms{D})_{[a}\check{\ms{m}}_{b]c} = \check{\ms{g}}^{de}\check{\ms{m}}_{d[a}\paren{\ms{D}_{b]}\delta\ms{g}_{ec} + \ms{D}_{|c|}\delta\ms{g}_{b]e} - \ms{D}_{|e|} \delta\ms{g}_{b]c}}\text{.}
    \end{equation*}
    Eventually, this gives, after using \eqref{transport_Q}: 
    \begin{align}
        \Lie_\rho \ms{B}_{abc} =&\, 2\delta\ms{w}^1_{cba} - \check{\ms{g}}^{de}\check{\ms{m}}_{d[a}\paren{\ms{D}_{b]}\delta\ms{g}_{ec} + \ms{D}_{|c|}\delta\ms{g}_{b]e} - \ms{D}_{|e|} \delta\ms{g}_{b]c}} \\
        &\notag+\ms{g}^{de}\ms{D}_{c}\ms{m}_{d[a}\delta \ms{g}_{b]e} + \ms{g}^{de}\ms{m}_{d[a} \ms{D}_{|c|} \delta \ms{g}_{b]e} + \ms{g}^{de}\ms{D}_{c}\ms{m}_{d[a}\ms{Q}_{b]e} + \ms{g}^{de}\ms{m}_{d[a}\ms{D}_{|c|}\ms{Q}_{b]e} \\
        &\notag + \delta \mc{S}\paren{\ms{g}; \ms{f}^0, \ms{f}^1}_{abc} + \mc{S}\paren{\ms{g}; \ms{Dm}, \delta\ms{g}}_{abc} + \mc{S}\paren{\ms{g}; \ms{Dm}, \ms{Q}}_{abc}\\
        =&\notag\,  2\delta \ms{w}^1_{cba} - \paren{\check{\ms{g}}^{de}\check{\ms{m}}_{d[a} - {\ms{g}}^{de}{\ms{m}}_{d[a} }\ms{D}_{|c|}\delta \ms{g}_{b]e} -\check{\ms{g}}^{de}\check{\ms{m}}_{d[a}\paren{\ms{D}_{b]}\delta \ms{g}_{ec}-\ms{D}_{|e|}\delta \ms{g}_{b]c}} \\
        &\notag +\ms{g}^{de} \ms{m}_{d[a}\ms{D}_{|c|}\ms{Q}_{b]e}+\mc{O}_{M-3}\paren{\rho; \delta \ms{g}}_{abc} + \mc{O}_{M-3}(\rho; \ms{Q})_{abc} + \mc{O}_{M}(\rho; \delta\ms{f}^0)_{abc} + \mc{O}_{M}\paren{\rho; \delta \ms{f}^1}_{abc}\text{.}
    \end{align}
    Choosing $\ms{D}\delta\ms{g}$ as coefficients, the second term can be written as: 
    \begin{align*}
    \paren{\check{\ms{g}}^{de}\check{\ms{m}}_{d[a} - {\ms{g}}^{de}{\ms{m}}_{d[a} }\ms{D}_{|c|}\delta \ms{g}_{b]e} = \mc{O}_{M-2}(\rho; \delta\ms{g}) + \mc{O}_{M}(1; \delta\ms{m})\text{,}
    \end{align*}
    where we used Remark \ref{rmk_difference_remainders}.
    Observe also that one can write: 
    \begin{equation*}
        -\check{\ms{g}}^{de}\check{\ms{m}}_{d[a}\paren{\ms{D}_{b]}\delta \ms{g}_{ec}-\ms{D}_{|e|}\delta \ms{g}_{b]c}} =  -{\ms{g}}^{de}{\ms{m}}_{d[a}\paren{\ms{D}_{b]}\delta \ms{g}_{ec}-\ms{D}_{|e|} \delta\ms{g}_{b]c}} + \mc{O}_{M-2}(\rho; \delta \ms{g})_{abc} + \mc{O}_{M-1}(1; \delta \ms{m})_{abc}\text{,}\\
    \end{equation*}
    yielding, using Definition \ref{renormalised_def}: 
    \begin{align*}
        \Lie_\rho \ms{B}_{abc} =&\, 2\Delta \ms{w}^1_{cba} - \ms{g}^{de}\ms{m}_{d[a}\overbrace{\paren{\ms{D}_{b]}\delta \ms{g}_{ec}-\ms{D}_{|e|} \delta\ms{g}_{b]c}-\ms{D}_{|c|}\ms{Q}_{b]e}}}^{\ms{B}_{b]ec}} + \mc{O}_{M-3}\paren{\rho; \delta \ms{g}}_{abc} +\mc{O}_{M-3}(\rho;\ms{Q})_{abc}\\
        &+\mc{O}_{M-1}(1; \delta\ms{m})_{abc}+ \mc{O}_{M}\paren{\rho; \delta\ms{f}^0}_{abc}+ \mc{O}_{M}\paren{\rho; \delta\ms{f}^1}_{abc}\text{.}
    \end{align*}
    The transport equations for the difference of the Maxwell fields follow from \eqref{transport_f0_FG} and \eqref{transport_f1_FG}. Starting with $\delta \ms{f}^0$: 
    \begin{align*}
        \Lie_\rho (\rho^{-(n-2)}\delta\ms{f}^0) =& \, \rho^{-(n-1)}\paren{-\rho\delta(\ms{tr}_{\ms{g}}\ul{\ms{h}}^0) + \rho \delta \mc{S}\paren{\ms{g}; \ms{m}, \ms{f}^0}}\\
        =&\, \mc{O}_{M-1}\paren{\rho^{-(n-3)}; \delta\ms{g}}+ \mc{O}_{M}\paren{\rho^{-(n-2)};\delta\ul{\ms{h}}^0} + \mc{O}_{M} \paren{\rho^{-(n-3)}; \delta\ms{m}}\\
        &+ \mc{O}_{M-2}(\rho^{-(n-3)}; \delta \ms{f}^0)\text{,}
    \end{align*}
    where we use Proposition \ref{prop_diff_g_gamma}. Finally, the transport equation for $\delta\ms{f}^1$: 
    \begin{align*}
        \Lie_\rho(\rho^{-1} \delta \ms{f}^1) &=\rho^{-1}\delta\mc{S}\paren{\ul{\ms{h}}^{2}}+\rho^{-1}\delta\mc{S}\paren{\ms{g}; \ms{m}, \ms{f}^1}\\
        &= \mc{S}\paren{\rho^{-1} \delta \ul{\ms{h}}^{2}} + \mc{O}_{M-1}(1; \delta\ms{g})+\mc{O}_{M}(1;\delta\ms{m})+\mc{O}_{M-2}(1; \delta\ms{f}^1)\text{.}
    \end{align*}
    The transport equations \eqref{transport_Ddg}--\eqref{transport_Ddf1} can simply be obtained from the following: for any vertical tensor $\ms{A}$, one has 
    \begin{equation*}
        \Lie_\rho (\ms{D}\ms{A}) = \mc{O}_{M-3}(\rho; \ms{A}) + \ms{D}\paren{\Lie_\rho \ms{A}}\text{.}
    \end{equation*}
    One also needs to use Definition \ref{def_h_bar}, in order to write, for example: 
    \begin{align*}
        \ms{D}\delta\ms{f}^1 =& \delta\ul{\ms{h}}^0 -(\delta\ms{D})\check{\ms{f}}^1\\
        =&\delta\ul{\ms{h}}^0+ \mc{O}_{M}(\rho; \ms{D}\delta\ms{g})\\
        =&\Delta \ul{\ms{h}}^0 + \mc{O}_{M-1}(\rho; \delta\ms{g}) + \mc{O}_{M}(\rho; \ms{D}\delta\ms{g}) + \mc{O}_{M-1}(\rho; \ms{Q})\text{,}
     \end{align*}
     and similarly for $\ms{f}^0$: 
     \begin{align*}
         \ms{D}\delta\ms{f}^0 &=\delta \ul{\ms{h}}^2 - (\delta\ms{D})\check{\ms{f}}^0\\
         &=\Delta\ul{\ms{h}}^{2} + \mc{O}_{M-1}(\rho; \ms{Q}) + \mc{O}_{M}(\rho; \ms{D}\delta\ms{g}) + \mc{O}_{M-1}(\rho; \delta\ms{g})\text{.}\qedhere
     \end{align*}

\subsection{Proof of Proposition \ref{prop_transport_diff_w_h}}\label{app:prop_transport_diff_w_h}

Notice that the transport equations \eqref{transport_dm_modified}, \eqref{transport_df0_modified} and \eqref{transport_df1_modified} can formally by obtained by replacing $\Delta$ by $\delta$ and removing the $\ms{Q}$ contributions in \eqref{transport_dm}, \eqref{transport_df0} and \eqref{transport_df1}.

    Let us first look at $\delta\ms{w}^2$. Differentiating the transport equation \eqref{transport_w_2} for both $\ms{w}^2$ and $\check{\ms{w}}^2$ gives: 
    \begin{align*}
        \rho\Lie_\rho \delta\ms{w}^2 -(n-2)\delta\ms{w}^2 =& \rho\delta\mc{S}\paren{\ms{g}; \ms{m}, \ms{w}^0} + \rho\delta\mc{S}\paren{\ms{g}; \ms{m}, \ms{w}^2}+\rho\delta\mc{S}\paren{\ms{g}; \ul{\ms{h}}^0, \ms{f}^0} + \delta\mc{S}\paren{\ms{g}; (\ms{f}^1)^2} \\  
        &+ \rho\delta\mc{S}\paren{\ms{g}; \ul{\ms{h}}^{2}, \ms{f}^1}+\delta\mc{S} \paren{\ms{g}; (\ms{f}^0)^2}+ \rho\delta\mc{S}\paren{\ms{g};\ms{m},  (\ms{f}^0)^2}\notag\\
        &  + \rho\delta\mc{S}\paren{\ms{g}; \ms{m}, (\ms{f}^1)^2}\text{.} 
    \end{align*}
    First, observe that the left-hand side can be written as below: 
    \begin{equation*}
         \rho\Lie_\rho \delta\ms{w}^2 -(n-2)\delta\ms{w}^2 = \rho^{n-1}\Lie_\rho \paren{\rho^{2-n}\delta\ms{w}^2}\text{.}
    \end{equation*}
    On the other hand, the right-hand side, containing only lower-order contributions, can be written as: 
    \begin{align*}
        &\mc{O}_{M-2}(\rho;\delta\ms{m}) + \mc{O}_{M-2}(\rho^2; \delta\ms{w}^0) + \mc{O}_{M-2}(\rho^2; \delta\ms{w}^2) + \mc{O}_M(\rho; \ms{D}\delta\ms{w}^1)\\
        & +\mc{O}_{M}(\rho^2; \delta\ul{\ms{h}}^0) + \mc{O}_{M}(\rho^2; \delta\ul{\ms{h}}^{2}) + \mc{O}_{M-1}(\rho; \delta\ms{f}^1) \\
        &+ \mc{O}_{M-1}(\rho; \delta\ms{f}^0) + \mc{O}_{M-4}\paren{\rho^2; \delta\ms{g}}+\mc{O}_{M-3}(\rho^2; \ms{D}\delta\ms{g})\text{,}
    \end{align*}
    where we used the improved decay of $\ms{w}^1$ in Proposition \ref{prop_asymptotics_fields} for the last two terms. This gives the transport equation \eqref{transport_dw2}. The computations for $\ms{w}^0$ and $\ms{w}^1$ are identical.

    The computations for $\ul{\ms{h}}^0$ and  $\ul{\ms{h}}^2$ are also similar. 

\subsection{Proof of Proposition \ref{prop_transport_carleman}}\label{app:prop_transport_carleman}

First of all, note that the following identities hold: 
        \begin{gather*}
            \Lie_\rho f = f\rho^{-1}\text{,}\qquad \Lie_\rho (\omega_\lambda(f)f^p) = -(\lambda f^p +2\kappa -n + 2)\rho^{-1}f^p\omega_\lambda(f)\text{.}
        \end{gather*}
        As a consequence, for any vertical tensor field $\ms{A}$: 
        \begin{align*}
            \Lie_\rho(\omega_\lambda(f)f^p\rho^{s-n}|\ms{A}|_{\mf{h}}^2) =&\, - (2\kappa+\lambda f^p -s+2)\rho^{-1}\omega_\lambda(f)f^p\rho^{s-n}|\ms{A}|_{\mf{h}}^2+\omega_\lambda(f)f^p\rho^{s-n}\Lie_\rho|\ms{A}|_{\mf{h}}^2\text{.}
        \end{align*}
        Using now Young's inequality on $\Lie_\rho |\ms{A}|_h^2= 2\Lie_\rho \ms{A}\cdot_{\mf{h}}\ms{A}$, where $\cdot_{\mf{h}}$ is understood as an $\mf{h}$--contraction: 
        \begin{equation}
            2\Lie_\rho \ms{A}\cdot_{\mf{h}}\ms{A} \leq \frac{\rho}{\lambda f^p}|{\Lie_\rho\ms{A}}|_{\mf{h}}^2 + \lambda f^p \rho^{-1}|\ms{A}|_{\mf{h}}^2\text{,}
        \end{equation}
        one obtains: 
        \begin{align*}
            \Lie_\rho(\omega_\lambda(f)f^p \rho^{s-n}|\ms{A}|_{\mf{h}}^2) + (2\kappa - s +2)\rho^{-1}\omega_\lambda(f)f^p \rho^{s-n}|\ms{A}|_{\mf{h}}^2 \leq \lambda^{-1}\omega_\lambda(f)\rho^{s-n+1} |\Lie_\rho\ms{A}|_{\mf{h}}^2\text{.}
        \end{align*}

        Integrating now over the region $\Omega_\star\cap \lbrace {\rho > \rho_\star}\rbrace$, for some $\rho_\star \ll f_\star$, first on level sets of $\rho$ with respect to $\mf{h}$ and then with respect to $\rho$, one thus gets: 
        \begin{align*}
            \int_{\Omega_\star \cap \{\rho >\rho_\star\}} \omega_\lambda(f)\rho^{s-n+1}|\Lie_\rho\ms{A}|_{\mf{h}}^2 d\mu_{\mf{h}}d\rho \geq &\; \lambda (2\kappa-s+2)\int_{\Omega_\star \cap \{\rho>\rho_\star\}} \omega_\lambda(f)f^p \rho^{s-n-1}|\ms{A}|_{\mf{h}}^2d\mu_{\mf{h}}d\rho \\
            &+\lambda\int_{\Omega_\star \cap \{\rho > \rho_\star\}} \Lie_\rho (\omega_\lambda(f)f^p\rho^{s-n}|\ms{A}|^2_{\mf{h}})d\mu_{\mf{h}}d\rho\text{,}
        \end{align*}
        Now, the last integral can be easily computed to yield: 
        \begin{align*}
            \int_{\Omega_\star \cap \{\rho > \rho_\star\}} \Lie_\rho (\omega_\lambda(f)f^p\rho^{s-n}|\ms{A}|_{\mf{h}})d\mu_{\mf{h}}d\rho
            \geq - \int_{\Omega_\star \cap \lbrace \rho = \rho_\star\rbrace} \omega_\lambda(f)f^p \rho^{s-n}|\ms{A}|^2_{\mf{h}} d\mu_{\mf{h}}\text{,}
        \end{align*}
        where the inequality holds since the boundary term on the level set $\lbrace f =f_\star \rbrace$ has the right sign. One can therefore write: 
        \begin{align*}
            \int_{\Omega_\star \cap \{\rho >\rho_\star\}} &\omega_\lambda(f)\rho^{s-n+1}|\Lie_\rho\ms{A}|_{\mf{h}}^2 d\mu_{\mf{h}}d\rho + \lambda \int_{\Omega_\star \cap \lbrace \rho = \rho_\star\rbrace} \omega_\lambda(f)f^p \rho^{s-n}|\ms{A}|^2_{\mf{h}} d\mu_{\mf{h}}\\
            &\geq (2\kappa - s + 2)\lambda \int_{\Omega_\star \cap \{\rho>\rho_\star\}} \omega_\lambda(f)f^p \rho^{s-n-1}|\ms{A}|_{\mf{h}}^2d\mu_{\mf{h}}d\rho \text{.}
        \end{align*}
        Since $d\mu_{\ms{g}}$ is comparable to $d\mu_{\mf{h}}$ on $\overline{\mi{D}}$, compact, and: 
        \begin{equation*}
            d\mu_{g} = \rho^{-n-1}d\mu_{\ms{g}}d\rho\text{,}
        \end{equation*}
        then, there exist some constants $C, D>0$, depending on $\mf{h}$, $\mi{D}$ and $\ms{g}$, such that: 
        \begin{align*}
            \int_{\Omega_\star\cap\{\rho>\rho_\star\}}\omega_\lambda(f)\rho^{s+2}|\Lie_\rho\ms{A}|_{\mf{h}}^2 d\mu_g + C\lambda &\int_{\Omega_\star\cap\{\rho=\rho_\star\}}\omega_\lambda(f)f^p\rho^{s-n}|\ms{A}|_{\mf{h}}^2d\mu_{\ms{g}}\\
            &\geq D\lambda \int_{\Omega_\star\cap\{\rho>\rho_\star\}}\omega_\lambda(f)f^p \rho^{s}|\ms{A}|_{\mf{h}}^2 d\mu_g\text{.}
        \end{align*}
        Letting $\rho_\star\searrow 0$, one finds the right estimate \eqref{carleman_transport} since: 
        \begin{gather*}
            \omega_\lambda(f)f^p \rho^{s-n}|\ms{A}|_{\mf{h}} = \eta^{2\kappa-(n-2)}\rho^s|\rho^{-\kappa-1}\ms{A}|_{\mf{h}}^2\text{,}
        \end{gather*}
        and, by assumption, $2\kappa-(n-2)\geq 0$. Furthermore, for $\lambda \gg 1$, $e^{-\lambda f^p /p}\leq 1$\text{.}

\end{document}